\newcommand{\Scri}{\mathcal{I}}
\newcommand{\abs}[1]{\left| #1 \right|}
\newtheorem{notation}{Notation}[section]
\newtheorem{theorem}{Theorem}[section]
\newtheorem{corollary}{Corollary}[section]
\newtheorem{proposition}{Proposition}[section]
\newtheorem{lemma}{Lemma}[section]
\newtheorem*{theorem*}{Theorem}
\def\ben{\begin{equation}}
\def\een{\end{equation}}
\def\bea{\begin{eqnarray}}
\def\eea{\end{eqnarray}}
\title{Stability and decay-rates for the five-dimensional Schwarzschild metric
  under biaxial perturbations}
\author{Gustav Holzegel}
\begin{document}
\maketitle
\begin{abstract}
In this paper we prove the non-linear asymptotic stability of the
five-dimensional Schwarzschild metric under biaxial vacuum perturbations. 
This is the statement that the evolution of 
$\left(SU\left(2\right) \times U\left(1\right)\right)$-symmetric vacuum
perturbations of initial data for the five-dimensional 
Schwarzschild metric finally converges in a suitable sense to a member
of the Schwarzschild family. It constitutes the first result proving the
existence of non-stationary vacuum black holes 
arising from asymptotically flat initial data dynamically approaching
a stationary solution. In fact, we show quantitative rates of approach.
The proof relies on vectorfield multiplier estimates, which are used in
conjunction with a bootstrap argument to establish polynomial 
decay rates for the radiation on the perturbed spacetime. Despite 
being applied here in a five-dimensional context, 
the techniques are quite robust and may admit applications to various 
four-dimensional stability problems.
\end{abstract}
\tableofcontents

\section{Introduction}
The existence of black holes features among the most 
fundamental predictions of general relativity. In the 
appropriate mathematical language of the theory, these 
objects correspond to solutions of the Einstein equations
\begin{equation} \label{Einstein}
R_{\mu \nu} - \frac{1}{2} R g_{\mu \nu} = 8 \pi T_{\mu \nu}
\end{equation}
possessing a regular event horizon and a complete 
null-infinity. General relativity admits an 
initial-value formulation suggesting that the 
appropriate setup to study black holes is in evolution 
from initial data. In this context, the main objective 
is to determine whether the maximal development associated 
to given data admits a complete null-infinity and 
a regular event horizon.

Some important special black hole solutions (hence their initial-data) 
are known in closed form. They are static or stationary, with the 
well-known Schwarzschild and Kerr family of solutions 
amongst them, which are believed to play crucial roles as 
``final states'' in gravitational collapse. It is fundamental for our
understanding of the theory to investigate the stability of 
these explicit solutions, that is to say the global 
structure of the evolution arising from initial data 
close (in an appropriate sense) to that 
of the known reference solution. Due to the complexity 
of this non-linear problem, most rigorous studies 
have been focussed on special symmetry classes. 
Specifically, a paramount problem of black 
hole physics, the full non-linear stability of the Kerr-solution, 
remains open to date.

A model in which both the global spacetime 
structure associated to the evolution of general initial data and 
the stability of certain solutions in particular
have been mathematically understood previously is that of 
the self-gravitating scalar field under spherical symmetry. 
The assumption of spherical symmetry casts the Einstein 
equations as a 1+1 dimensional system of PDEs, the
inclusion of a massless scalar field being the simplest 
way to circumvent Birkhoff's theorem.\footnote{Birkhoff's theorem
  implies that spherically symmetric vacuum solutions 
are either Minkowskian or Schwarzschildean.} In the context of this model, 
Christodoulou \cite{Christodoulou} proved that generic 
initial data either disperse, i.e.~asymptote to 
Minkowski space for late times, or collapse to regular 
black holes. His seminal work was extended by 
Dafermos and Rodnianski \cite{DafRod}, who 
proved that the development of initial data collapsing to 
black holes in fact approaches a Schwarzschild-metric on the 
exterior of the black hole at a sufficiently fast polynomial rate. 
These decay rates \cite{DafRod} of the scalar field 
were first suggested on a heuristic level by Price \cite{Price}, and are thought 
to be sharp. It is remarkable that \cite{DafRod} 
is a ``large data'' result. The initial data need not be assumed 
close to Schwarzschildean; all initial data containing a trapped
surface are shown to approach a Schwarzschild metric. 
\subsection{The model}
An alternative model allowing the study of gravitational collapse \emph{in vacuo}
was recently proposed by Bizon et al.~\cite{Bizon}. To understand
their idea we recall that, in view of the four-dimensional Birkhoff's 
theorem, gravitational collapse in vacuo ($T_{\mu \nu}=0$ 
in (\ref{Einstein})) cannot be studied under spherical symmetry. 
In axisymmetry on the other hand, the Einstein equations 
no longer reduce to a system of $1+1$ dimensional PDEs and 
the resulting problem does not seem tractable with current mathematical
techniques. The way out of this dilemma suggested by \cite{Bizon} is 
to study the Einstein vacuum equations under
$SU\left(2\right)$-symmetry in \emph{five} dimensions. This 
is motivated by the following observation: The analogue of spherical 
symmetry in four dimensions, i.e.~an $SO\left(3\right)$ action on 
an orbital two-sphere, is clearly an 
$SO\left(4\right)\cong\left(SU\left(2\right)_L \times
SU\left(2\right)_R\right) \slash \mathbb{Z}^2$ action on a 3-sphere in
five dimensions. However, via the latter
isomorphism there exist subgroups of $SO\left(4\right)$, for instance
$SU\left(2\right)_L$ and $\left(SU\left(2\right)_L \times U\left(1\right)_R \right) \slash \mathbb{Z}^2$ which still act transitively on the 3-sphere.\footnote{The subscripts $L$ and $R$ stand for the left and the 
right action respectively.} 
Consequently, even within the class of the smaller
symmetry-groups (commonly called triaxial- or 
biaxial- Bianchi IX depending on the subgroup to which one restricts) 
the Einstein equations reduce to a 
system of 1+1 dimensional PDEs. Moreover, Birkhoff's 
theorem is evaded by the introduction of one or two 
(in the triaxial case) dynamical degrees of freedom arising 
from the reduced symmetry. 

In the biaxial case this degree of freedom is manifest 
in a certain function $B$, which 
geometrically speaking corresponds to the ``squashing'' of the 
three sphere. $B$ is normalized such that it is zero for the
Schwarzschild-Tangherlini metric. From the point of view of the analysis
it can be understood as the analogue of the massless scalar field in four 
dimensions. The Einstein equations (\ref{Einstein}) 
imply the following non-linear wave equation 
for the squashing field $B$
\begin{equation} \label{nonlinB}
\square_g B = - \frac{4}{3r^2} \left(e^{-8B}-e^{-2B}\right)   \, .
\end{equation}

In \cite{Bizon} the model outlined was investigated numerically,  
suggesting that small initial data will disperse, 
whereas large data will collapse to black holes, approaching 
some Schwarzschild-Tangherlini black hole for large 
times. The mathematical study of the model was initiated shortly 
thereafter by M.~Dafermos in collaboration with the 
present author. In \cite{DafHol}, the following 
statement\footnote{Actually, it follows from a stronger 
statement proven in \cite{DafHol}.} was proven:
\begin{theorem*} 
Consider a triaxial-symmetric initial data set
$\left(\Sigma, g, K\right)$, which is close in an
appropriate norm\footnote{See \cite{DafHol} for the 
precise definition.} to an initial data set 
$\left(\Sigma, g_S, K_S\right)$ evolving to the five
dimensional Schwarzschild-Tangherlini solution of mass $M$. 
Let the squashing fields $B_1$, $B_2$ which are identically 
zero for the five-dimensional Schwarzschild metric, be of compact 
support on the initial hypersurface. Let $\mathcal{Q}$ 
be the Lorentzian quotient of the future Cauchy 
development of the data. Then $\mathcal{Q}$ contains 
a subset with Penrose diagram:
\[
\input{geg.pstex_t}
\]
It particular, the quotient of the maximal development of 
the set $\left(\Sigma, g, K\right)$ admits a complete
null-infinity with final Bondi mass $\mathcal{M}_f$ close 
to $M$, and a regular event horizon $\mathcal{H}^+$ on 
which the Penrose inequality $r^2 \leq 2 \mathcal{M}_f$ 
holds. Here $r$ is the area-radius function. 
\end{theorem*} 
The above theorem can be paraphrased as stating that 
perturbations of Schwarz\-schild-Tangherlini initial 
data again collapse to regular black holes close 
to the original Schwarzschild black hole. This result 
was termed \emph{orbital stability} of the five-dimensional
Schwarzschild metric in \cite{DafHol} 
and generated the first vacuum black hole solutions arising from
asymptotically flat initial data that are not 
stationary.\footnote{Solutions with a 
future complete, but not past complete, $\Scri^+$ 
have been constructed previously by Chru\'sciel \cite{PTC}, by 
solving a certain parabolic problem.}

Crucial for the proof of the above theorem is the 
existence of good monotonicity properties for a 
function $m\left(u,v\right)$, called the Hawking 
mass, defined in (\ref{Hawkmass}). It converges to the ADM mass defined at the 
asymptotically flat end. It is shown to satisfy 
$\partial_u m \leq 0$ and $\partial_v m \geq 0$ 
on the domain of outer communications, leading to 
an a-priori bound for the total mass fluctuation 
on the spacetime in terms of the initial data. 

%
%
%
\subsection{The main theorem}
Orbital stability provides of course certain 
control over the global structure of the solution. 
Nevertheless, it leaves the details of the late-time 
behaviour unclear. In particular, solutions could 
exhibit unexpected features at late times with the 
squashing field $B$ oscillating in some complicated 
manner and the geometry thus never settling down. 
This problem is finally addressed in the 
present paper. By proving appropriate 
decay-rates we will show that the squashing field does decay 
for late times and hence that perturbations converge 
to another member of the Schwarzschild-Tangherlini 
family. 
\subsubsection{The statement}
The main result is
\begin{theorem} \label{asymptoticstab}
Consider a biaxial-symmetric initial data set
$\left(\Sigma, g, K\right)$, which is close in the sense of 
the previous theorem to an initial data set 
$\left(\Sigma, g_S, K_S\right)$ whose maximum development 
is the five dimensional Schwarzschild-Tangherlini solution 
of mass $M$.  Let $\pi : \mathcal{M} \rightarrow \mathcal{Q}$ denote
the projection map of the maximal development of $\left(\Sigma, g,
K\right)$ to the 
two-dimensional Lorentzian quotient space $\mathcal{Q}$ 
and let $\tilde{S}=\pi\left(\Sigma\right)$. 
Fix a curve of constant area radius, $r=r_K$, 
away from the horizon, intersecting $\tilde{S}$ 
at $P$ as depicted below.
\begin{figure}[h!]
\[ 
\input{theorem.pstex_t}
\]
\end{figure}
Assume furthermore that the initial data slice $\tilde{S}$ coincides
for $r \geq r_K$ with an integral curve of the globally defined 
vectorfield $\nabla r$ on $\mathcal{Q}$ and that the 
data is Schwarzschildean outside a compact set, i.e.~that 
the squashing field $B$ is of compact support.

Define regular coordinates $\left(u,v\right)$ on the subset 
$J^+\left(\tilde{S} \cap \{r \geq r_K \}\right) \cap J^-\left(\Scri^+\right)$ 
of the Penrose diagram arising by the previous Theorem
as follows. Let the point $R$, determined by the intersection of the curve $r^2=4m$ with $\tilde{S}$, have coordinates $u=v=\sqrt{M}$. Set $r_{,v}=\frac{1}{2} \left(1-\mu\right)$, with $\mu =
\frac{2m}{r^2}$, along the null-ray $\overline{PQ}$ 
and $r_{,u} = -\frac{1}{2}$ along null-infinity. In
these coordinates $u \rightarrow \infty$ along 
null-infinity as $i^+$ is approached. The horizon 
$\mathcal{H}^+$ is parametrized as $(\infty, v)$. 
Define $t=\frac{v+u}{2}$ and $r^\star=\frac{v-u}{2}$. 

Then there exists a dimensionless constant $\delta >
0$, depending only on the geometry of $\tilde{S}$ such that 
if the field $B$ satisfies
\begin{equation} \label{initassump}
M^{-\frac{3}{4}}\left[r^\frac{3}{2} |B| + r^\frac{5}{2}\Big|\frac{B_{,u}}{r_{,u}}\Big| + r^\frac{5}{2}\Big|\frac{B_{,v}}{r_{,v}}\Big| \right] \leq \delta 
\end{equation}
on $\tilde{S} \cap \{r \geq r_K\}$ and
\begin{eqnarray} \label{EKBintro}
\frac{1}{M} \int_{\tilde{S} \cap \{r \geq r_K\}} \Big[u^2 \left(\partial_u B\right)^2 + v^2
  \left(\partial_v B\right)^2 +\left(u^2+v^2\right)
  \left(-r_{,u}\right) B^2\Big] \frac{1}{\Omega} dvol_3 \leq \delta^2 \, ,
\end{eqnarray}
as well as
\begin{equation} \label{initassump2}
M^{-\frac{3}{4}}\left[r^\frac{3}{2} |B| +
  r^\frac{5}{2}\Big|\frac{B_{,u}}{r_{,u}}\Big| \right] \leq \delta 
\end{equation}
on the ray $v= v\left(P\right) \cap \{r \leq r_K\}$, then 
the squashing function $B$ satisfies
\begin{equation} \label{dechoz}
|B| + \sqrt{M} |B_{,v}| + \sqrt{M} \Big|\frac{B_{,u}}{r_{,u}}\Big| \leq \frac{C\sqrt{M}}{v_+} \textrm{ \ \ \ for $r \leq r_K$ }
\end{equation}
where $v_+ = \max\left(1,v\right)$,
\begin{equation} \label{deccent}
|B| + \sqrt{M} |B_{,v}| + \sqrt{M} \Big|\frac{B_{,u}}{r_{,u}}\Big| \leq \frac{C\sqrt{M} }{t} \textrm{ \ \ \ for $r \geq r_K$ }
\end{equation}
\begin{equation} \label{decr}
|B| \leq \frac{C \ M^\frac{3}{4}}{r^\frac{3}{2}} \textrm{\ \ \ \ for $r \geq r_K$}
\end{equation}
on $\mathcal{D} = J^+\left(\tilde{S} \cap \{r \geq r_K\}\right) \cap
J^-\left(\mathcal{I}^+\right)$ for a dimensionless constant $C$ (which depends on the choice of $r_K$) 
computable from the initial data.
\end{theorem}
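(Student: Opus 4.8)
\emph{Setup and strategy.} The plan is a continuity/bootstrap argument for the rescaled field $\phi := r^{3/2}B$ on $\mathcal{D}$. Inserting $\phi$ into the wave equation $\square_g B = -\frac{4}{3r^2}\left(e^{-8B}-e^{-2B}\right)$ and using the reduction of $\square_g$ to the Lorentzian quotient — where $g_{\mathcal{Q}}=-\Omega^2\,du\,dv$ and the warping factor $r^3$ of the $3$-sphere produces the first-order term $\frac{3}{r}\nabla^a r\,\nabla_a B$, which the rescaling by $r^{3/2}$ is designed to absorb — one finds that $\phi$ satisfies a $1+1$-dimensional equation of the schematic form $\partial_u\partial_v\phi+\mathcal{V}\,\phi=\mathcal{N}$. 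Here the potential $\mathcal{V}=\mathcal{V}(r,m)$ is, by the orbital-stability estimates of \cite{DafHol}, uniformly close to its Schwarzschild--Tangherlini value, strictly positive on the domain of outer communications and $\sim r^{-2}$ near $\Scri^+$ (the linearization $e^{-8B}-e^{-2B}=-6B+O(B^2)$ contributes a \emph{stable} mass term, so $B=0$ is a genuine potential minimum), while the source $\mathcal{N}$ — equal, before rescaling, to $-\frac{4}{3r^2}\left(e^{-8B}-e^{-2B}+6B\right)=O(B^2/r^2)$ — is quadratic and higher in $B$ with favourable $r$-weights. Throughout we use without further comment the a priori control inherited from \cite{DafHol}: $r_{,v}\approx\frac12(1-\mu)$, $r_{,u}\approx-\frac12$, $\Omega$ and $m$ close to their Schwarzschild values, the photon sphere at $r^2\approx 4m$, and the Hawking-mass monotonicity $\partial_u m\le 0\le\partial_v m$ (which is what makes these closeness statements available in the first place). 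One assumes (\ref{dechoz})--(\ref{decr}), together with the corresponding weighted energy fluxes, to hold with a larger constant $C_b$ on an open sub-domain of $\mathcal{D}$, with initial data controlled by (\ref{initassump}) and (\ref{EKBintro}), and aims to recover them with $C$; connectedness of $\mathcal{D}$ then finishes the proof.

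\emph{The energy hierarchy.} The heart of the argument is a sequence of vectorfield-multiplier estimates for $\phi$. (i) The background Killing field $T=\partial_t$, supplemented by a red-shift multiplier localized near $\mathcal{H}^+$ — robust to the fact that the event horizon is only asymptotically Killing as $m$ settles down — yields a non-degenerate energy estimate bounding the flux of $(\partial_u\phi)^2+(\partial_v\phi)^2+\mathcal{V}\phi^2$ through any late slice by its initial value (controlled by (\ref{EKBintro})) plus spacetime error integrals. (ii) A Morawetz multiplier $X=f(r^\star)\partial_{r^\star}+(\text{lower order})$, with $f$ built from the \emph{dynamical} radial function so as to vanish at the perturbed photon sphere $r^2\approx 4m$ and to carry the $r$-weights needed below, gives an integrated local energy decay estimate with the expected degeneration at trapping. (iii) In the region $r\ge R$ for $R$ large, the $r^p$-weighted hierarchy of \cite{DafRod} — multiplying the equation by $r^p\partial_v\phi$ for $p\in(0,2)$, the range forced by the $r^{-2}$ decay of $\mathcal{V}$ — relates the $r^p$-weighted $v$-energy on consecutive slices to a spacetime term with weight $r^{p-1}$. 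Combining (ii) and (iii) and applying the pigeonhole principle forces the non-degenerate energy flux through $\Sigma_\tau$ to decay like $\tau^{-2}$ (with $\tau$ the foliation parameter, $\tau\sim t$ in the bounded-$r$ region), the energy degenerating at the photon sphere decaying faster still.

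\emph{From energy to pointwise decay.} Commuting the equation once more with $T$ (and with the red-shift vectorfield near $\mathcal{H}^+$) yields decay of a higher-order energy at the same rate; a one-dimensional Sobolev estimate along the characteristics — integrating $\partial_v(\phi^2)$, resp. $\partial_u(\phi^2)$, from $\tilde{S}$ or from $\mathcal{H}^+$ — then converts this into $|\phi|+\sqrt{M}\big(|\partial_v\phi|+|\partial_u\phi/r_{,u}|\big)\lesssim C\sqrt{M}\,\tau^{-1}$ in the bounded-$r$ region, which upon undoing $\phi=r^{3/2}B$ is (\ref{deccent}); trading, inside the $r^p$-estimates, one power of $\tau$-decay for a power of $r$ gives in addition $|\phi|\lesssim CM^{3/4}$, i.e. (\ref{decr}). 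Finally the region $r\le r_K$ — which lies inside the photon sphere and hence contains no trapping — is treated as a characteristic problem with data on the ray $v=v(P)$ from (\ref{initassump2}) and on $r=r_K$ from the preceding step: integrating the wave equation against the red-shift field up to and including $\mathcal{H}^+$ propagates the decay inward and yields the $v_+^{-1}$ bound (\ref{dechoz}).

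\emph{The main obstacle.} The crux — where the five-dimensional, non-stationary and nonlinear features all interact — is the trapping step (ii) together with the error control. On the perturbed spacetime the photon sphere is slaved to the slowly varying mass $m$ rather than sitting at a fixed $r$, so the Morawetz current must be constructed from the dynamical metric coefficients while still producing a bulk that is positive-definite away from $r^2\approx 4m$ and that carries the $r$-weights demanded by the $r^p$-hierarchy; simultaneously, every multiplier estimate — including the bulk term generated by the non-Killing deformation of $T$ — produces error integrals $\iint(\text{multiplier of }\phi)\cdot\mathcal{N}$ that must be shown, using the bootstrap bounds, the $O(B^2/r^2)$ structure of $\mathcal{N}$ and the smallness of $\delta$, to be strictly absorbable into the left-hand sides. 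Matching all these weights so that the hierarchy closes with the stated constant, uniformly across the horizon, trapping and null-infinity regimes, is the technical heart of the proof.
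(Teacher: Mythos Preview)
Your outline is a plausible modern template, but it diverges substantially from what the paper actually does, and in one place it leans on machinery that is not available in the cited references.

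\textbf{The central mechanism for decay is different.} The paper does \emph{not} use an $r^p$-weighted hierarchy. Instead, decay away from the horizon is extracted from the conformal Morawetz vectorfield
\[
K=\frac{(u+a)^2}{M}\partial_u+\frac{(v-a)^2}{M}\partial_v,
\]
whose boundary term $E^K_B$ carries $u^2$, $v^2$ weights; boundedness of $E^K_B$ on late slices \emph{is} the decay statement (Propositions~\ref{EKcontrol}, \ref{decfromk}). The spacetime error in the $K$-identity is then controlled dyadically by the bulk of a carefully engineered vectorfield $X=f(r^\star)(\partial_u-\partial_v)$ (Section~\ref{Xsection}), whose positivity hinges on an explicit construction of $f$ and a delicate sign analysis tied to the shifted origin $a$. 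Your step (iii) invokes ``the $r^p$-weighted hierarchy of \cite{DafRod}'', but in this paper \cite{DafRod} is the Price's-law paper, which contains no such hierarchy; the $r^p$ method is a later development and is not used here at all.

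\textbf{No commutation is performed.} You propose commuting with $T$ (and with the redshift field) to upgrade energy decay to pointwise decay. The paper deliberately avoids this: it works entirely at the level of the $1$-jet of $B$, and pointwise bounds on $B_{,v}$ and $B_{,u}/r_{,u}$ are obtained by integrating the first-order transport equations for $\theta=r^{3/2}B_{,v}$ and $\zeta/\nu$ directly (Propositions~\ref{pointthcent}, \ref{Step1}), using the redshift in the form of the integrated equation~(\ref{zetanuueq}). In the present nonlinear setting $\partial_t$ is not Killing, so your commutation would generate deformation-tensor errors coupled to second derivatives of the metric (hence to $\zeta^2/\nu$, $\theta^2/\lambda$, etc.); these are controllable in principle but you have not indicated how, and the paper's route sidesteps the issue entirely.

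\textbf{The bootstrap structure is different and more delicate than you sketch.} A key technical point you do not touch is that the coordinates must be \emph{future-normalised}: the paper defines a one-parameter family of gauges $\mathcal{C}_{\tilde\tau}$ anchored at a late $\nabla r$-slice, proves the bootstrap in each such gauge, and then shows (Section~\ref{stabcorsec}) that the gauges remain uniformly close and converge to the one in the theorem. The bootstrap hypotheses themselves include not just pointwise or flux bounds but the quantitative relation~(\ref{fibor}) between $r$ and $r^\star$ and the location~(\ref{tass}) of the data slice in the moving gauge. Without this device the $K$-identity would pick up uncontrolled contributions from the initial slice; your proposal gives no indication of how the analogous issue is handled in an $r^p$ scheme on a dynamical background.

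In short: your (i)+(ii) are in the same spirit as the paper's $T$, $Y$, $X$ estimates, but your (iii) replaces the paper's $K$-vectorfield argument by a method not present in the cited literature, and your passage to pointwise bounds via commutation is a genuinely different (and here unjustified) route from the paper's direct integration of the characteristic transport equations.
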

We will refer to this result as the \emph{asymptotic stability} 
of the Schwarzschild-Tangherlini solution. {\bf In particular, 
Theorem \ref{asymptoticstab} produces the first dynamical 
vacuum solutions arising from asymptotically flat initial data 
and converging to stationary black holes for late times.} 

\subsubsection{Remarks}
Restricting $\tilde{S}$ to coincide with a $\nabla r$ 
integral curve for $r\geq r_K$ is justified by Cauchy stability 
and the fact that the global properties of the Penrose diagram 
are already known by the orbital stability
result of the previous theorem. It has been assumed to avoid some 
clumsy notation in the proof. 

Cauchy stability also justifies 
stating the smallness assumptions  (\ref{initassump}), (\ref{EKBintro}) 
and (\ref{initassump2}) on the slice 
\begin{equation} \label{sigrK}
\tilde{S}_{r_K} = \left(\tilde{S} \cap \{r \geq r_K\}\right) \cup \Big( \{ v =
v\left(P\right)\} \cap \{r \leq r_K \} \Big) \, .
\end{equation} 
instead of $\tilde{S}$.\footnote{The smallness assumption
(\ref{initassump2}) easily translates into 
an appropriate smallness assumption on $\tilde{S}$, depending on the
geometry of $\tilde{S}$ for $r \leq r_K$, after extending 
the coordinate system to all of $J^+\left(\tilde{S}\right) 
\cap J^-\left(\Scri^+\right)$.} 
The advantage of doing it this way 
is that (\ref{initassump}) and (\ref{initassump2}) 
do not depend on the choice of double-null coordinates on the 
Penrose diagram.\footnote{This will become useful later 
because the bootstrap argument applied in the proof requires 
the definition of different coordinate systems.} 
Assumption (\ref{EKBintro}) on the
contrary depends on the choice of coordinates. However, since both the
$u$ and the $v$ coordinate are easily shown to be 
finite in the region where $B$ is supported 
on $\tilde{S} \cap \{r \geq r_K\}$ in the given coordinate
system, assumption (\ref{EKBintro}) is automatically satisfied 
if we choose the $\delta$ in (\ref{initassump}) small enough since $B$
and is assumed to be of \emph{compact} support initially. Hence it
could be dropped by making $\delta$ even smaller. We have nevertheless 
included (\ref{EKBintro}) for conceptual reasons which 
will become apparent later in the proof.\footnote{The quantity
(\ref{EKBintro}) is related to a boundary term in 
the vectorfield multiplier estimate associated with the 
vectorfield $K$.} The condition (\ref{EKBintro}) would also be required if 
one eventually drops the assumption of compact support, for 
then (\ref{EKBintro}) imposes conditions on the decay of the 
fields near infinity.

Factors of $\sqrt{M}$ have been inserted 
in all formulae to make constants dimensionless. 
\subsection{Summary of the proof}
Before we embark upon an outline and a discussion of the 
proof, it is perhaps illuminating to compare and contrast 
the situation with the proof of  Price's law \cite{DafRod} for a self-gravitating spherically symmetric
scalar field in $3+1$ dimensions. It turns 
out that the techniques developed in the latter paper to derive decay rates
do not generalize to the system 
under consideration.
The underlying reason can be traced back to two 
crucial estimates applied in \cite{DafRod}. The first of these, 
which allows one to extract decay directly 
from the horizon, relies heavily on the homogeneity of the 
non-linear wave equation satisfied by 
the field $\phi$ in the scalar field model. 
The second estimate is made possible by the existence of an 
almost Riemann invariant, a quantity admitting better 
decay properties than the scalar field $\phi$ itself, which can be 
exploited to derive uniform decay of the energy in the area radius $r$. 
This decay played an important role in conjunction with the pigeonhole 
principle completing the argument in \cite{DafRod}.

In the five-dimensional case there is no almost Riemann 
invariant and hence no apparent analogue to obtain decay in $r$
for the energy in the asymptotic region. Moreover, the wave
equation (\ref{nonlinB}) satisfied by the dynamical field 
$B$ has an inhomogeneous part, which in particular appears
in the redshift estimate. These obstacles necessitate 
a very different approach to proving decay. The path we choose 
here is based on exploiting energy currents arising from 
vectorfield multipliers. This method was already central 
in the proof of the non-linear stability of Minkowski 
space \cite{ChristKlei} and has recently been applied at the linear level
in the black hole context for the first time \cite{DafRod2}. In the
latter paper, decay rates for a scalar field satisfying the homogeneous linear
wave equation on a four-dimensional Schwarzschild spacetime are
proven.\footnote{Clearly, this is the associated linear 
problem to the model of the self-gravitating scalar-field. Most
notably, it can be treated without any symmetry 
assumptions on the scalar field, cf.~\cite{DafRod2}.} 
Key to establishing decay, at least away 
from the horizon, is the application of a so-called Morawetz 
vectorfield. A careful analysis reveals that the 
decay-rates can be generalized to the linear problem 
associated with the non-linear problem studied here, namely 
the analysis of the linearized version of the wave 
equation (\ref{nonlinB}) on a fixed Schwarzschild-Tangherlini 
background. What is more, the method of compatible currents being 
very geometric and robust in nature in fact carries over to the non-linear 
problem suggesting that the decay rates 
(\ref{dechoz}), (\ref{deccent}), (\ref{decr}) 
may be established for the non-linear problem as well. 
However, in contrast to the linear case several non-linear error-terms 
now enter the various estimates, which cannot be controlled a-priori. 
This requires the introduction of a bootstrap argument 
to be applied in conjunction with the estimates obtained from 
the method of compatible currents.

It is noteworthy that the paper provides the first application 
of compatible currents techniques in a (non-linear) black 
hole context. The argument presented here
is generally more robust than that of \cite{DafRod} but is of course restricted 
to small data. More precisely, the method presented is expected to be 
appropriate to eventually address non-linear problems without 
symmetry, most famously the non-linear stability of the
Kerr-solution. In particular, since the technique is not 
bound to any dimension one should be able to reprove a 
version of ``Price's law'' \cite{DafRod} for small initial data 
along the lines of the present paper. 

%
%
\subsubsection{Compatible currents} \label{vfm}
The basic idea behind the exploitation of energy currents 
based on vectorfield multipliers is quite simple. 
We construct a Lagrangian whose field equation 
generates the non-linear wave equation (\ref{nonlinB}) 
satisfied by the squashing field $B$. The canonical 
energy momentum tensor $T_{\mu \nu}$ can be contracted 
with a vectorfield $V^\mu$ to produce
a one-form $P_\nu=T_{\mu \nu}V^\mu$. Finally, Stokes' theorem 
 relates the spacetime (or ``bulk'') integral 
of the divergence $\nabla^\nu P_\nu$ over a certain 
region to integrals along its boundary. This leads to the identity
\begin{equation} \label{bvfi}
\int_{\partial \mathcal{D}} P^\mu n_\mu = \int_{\mathcal{D}} \nabla_\mu P^\mu =
\int_{\mathcal{D}} \left[T_{\mu \nu} \pi^{\mu \nu} + V^\mu \nabla^\nu
  T_{\mu \nu} \right] \, .
\end{equation}
where $\pi^{\mu \nu} = \frac{1}{2}\left(\nabla^\mu V^\nu 
+ \nabla^\nu V^\mu\right)$ is the
deformation tensor of the vectorfield $V$.
One possible application of (\ref{bvfi}) is
to estimate a future boundary integral from the past 
boundary and the spacetime-term. On the other hand, for
some vector fields we will estimate a bulk-term from the 
boundary terms. The power of the method arises from an 
interplay between the identities associated with different 
vectorfields adapted to the geometry of particular regions. 
It is crucial that due to the Lagrangian structure both the 
boundary and the bulk term of (\ref{bvfi}) only depend on the 
$1$-jet of $B$. Suitably applied, the method ultimately produces weighted 
$L^2$-bounds on the fields from which pointwise bounds on the fields follow 
in the standard manner.
\subsubsection{The bootstrap} \label{btai} 
Before any bootstrap assumptions can be specified, 
coordinates have to be defined on the Penrose diagram. This 
turns out to be a rather subtle issue, 
intimately related to the bootstrap argument
itself. The crucial observation is that the 
coordinates have to be normalized \emph{to the future} of 
the bootstrap region, in order to capture the decay 
for late times in the estimates.\footnote{This is reminiscient of
  the situation in Christodoulou-Klainerman's proof of the stability of Minkowski space 
  \cite{ChristKlei}.} This is realized as 
follows. Consider the integral curves of the vectorfield 
$\nabla r$, foliating the black hole exterior.\footnote{Note that 
for convenience, we have assumed in Theorem \ref{asymptoticstab} 
that the initial data are also defined on such a curve, at least 
up to its intersection with a curve $r=r_K$.}
Each of these curves also 
intersects the curve of fixed area radius  $r=2\sqrt{M_f}$ 
(with $M_f$ being the final Bondi mass the latter is 
comfortably away from the horizon). Hence we can 
associate a \emph{geometric time} to 
any $\nabla r$ integral curve by using the affine 
parameter along the curve $r=2\sqrt{M_f}$.
\begin{figure}[h!]
\[
\input{introcor2.pstex_t}
\]
\caption{The choice of coordinates.}
\end{figure}
Now for each such ``time" $\tilde{\tau}$ on the curve, we 
construct a coordinate system $\mathcal{C}_{\tilde{\tau}}$ (depending
on $\tilde{\tau}$!) on the black hole exterior by the 
following procedure. We find the point $A$ on the $\nabla r$ curve associated to $\tilde{\tau}$, where $r^2=4m$. The $r=const$ curve through $A$ will intersect the data at some point $D$. The affine length from $A$ to $D$ along that curve defines the \emph{coordinate time} at $A$.\footnote{See definition (\ref{timedef}). We add a factor of $\sqrt{M}$ in order to avoid dividing by zero when we state decay in $t$.} The actual coordinate system $(u,v)$ is finally defined by imposing that
$t=\frac{u+v}{2}=T$ holds on the integral curve of the vectorfield $\nabla r$ 
starting at the point $A$, at least up to the point $B$ 
where the integral curve intersects a certain constant $r_K$-curve, fixed once 
and for all, which is chosen to lie close to the horizon. Moreover we 
set $r^\star=\frac{v-u}{2}=0$ at the point $A$ and 
$r_{,v} = \frac{1}{2}\left(1-\mu\right)$ on $\overline{BE}$. 
There is some choice to complete the coordinate 
system by specifying $r_{,u}$ on $\overline{BC}$. For most practical
calculations we will use Eddington-Finkelstein type
coordinates, setting $\nu = -\frac{1}{2}\left(1-\mu\right)$ 
on $\overline{BC}$. In any case, the bounds 
proven will be manifestly independent 
on the choice of $u$ coordinate on $\overline{BC}$. 

An important issue immediately arising from the way we define
the coordinates is that the notion of a constant $t$ slice 
differs in the different coordinate systems depending on the choice 
of $\tilde{\tau}$. (Of course the analogous statement holds for the notion of 
timelike surfaces of constant $r^\star$.) Nevertheless, 
we will show that the coordinate 
systems remain uniformly close to each other in a suitable sense, in particular that 
the $t_{\tilde{\tau}}$ coordinate of the initial data slice between
$r_K$ and the support radius is always 
close to $\sqrt{M}$, however large we choose $\tilde{\tau}$. A
detailed analysis is given in section \ref{stabcorsec}.

Every $\tilde{\tau}$ defines a $T$, which in turn defines 
a region $\mathcal{A}\left(T\right)$ depicted in Figure \ref{bootsreg}. It is 
the region, enclosed by the $t=T$-curve up to some point $B^\prime$ with 
coordinates $(T,r^\star_K)$, the null-line $v=T+r^\star_{K}$ linking $B^\prime$ 
with the horizon, a horizon piece, the null-line $v=2\sqrt{M}+r^\star_K$,
the $t=2\sqrt{M}$ piece and the $u=u_0$ null-line on which the field $B$ is 
identically zero by the assumption of compact
support\footnote{Note that this null-line has a geometric significance
  by the assumption of compact support. The exact value of
  $u_0$ will depend on the coordinate system chosen.}. Here $r^\star_K
= \sup_{t < T}
  r^\star\left(t,r_K\right)$. Another curve, $r^\star=r^\star_{cl}$,
  located to the right of $r^\star=r^\star_K$ will also be introduced
  and fixed. We now choose a small 
constant $c$ and define the bootstrap region to be the region 
associated to the largest time $\tilde{\tau}_B$, such that for any
$\sqrt{M} \leq \tilde{\tau} \leq \tilde{\tau}_B$  the 
following ``statement $\mathcal{P}$'' holds in the associated 
region $\mathcal{A}\left(T\left(\tilde{\tau}\right)\right)$ 
in the coordinate system $C_{\tilde{\tau}}$:
\begin{enumerate}
\item In the subregion $\{r^\star \geq r^\star_K \} \cap \mathcal{A}\left(T\right) $, the area radius satisfies 
\begin{eqnarray} \label{efrel}
\Big| r^\star - \left[ r\left(t,r^\star\right) + \sqrt{\frac{M_A}{2}} \left(\log
\left(\frac{r\left(t,r^\star\right)-\sqrt{2M_A}}{r\left(t,r^\star\right)+\sqrt{2M_A}}\right)
+ p \right) \right] \Big| < c \sqrt{M}
\end{eqnarray}
with 
\begin{equation} 
p = - 2\sqrt{2} - \log \frac{2-\sqrt{2}}{2+\sqrt{2}}
\end{equation}
and $M_A$ defined to be the Hawking mass at
the point $\left(T,r^\star=0\right)$.\footnote{The reader should note
  that in Schwarzschild with $M=M_A$ the left hand side of
  (\ref{efrel}) is identically zero. The coordinate $r^\star$ is then
  the so called Regge-Wheeler tortoise coordinate.} \label{boot1}
\item We have\footnote{This assumption states in particular that the
  initial data slice is both near and to the past of the
  bootstrap region. It ensures that the bootstrap region does not move
  away from the data.}
\begin{equation} 
\frac{1}{2} \sqrt{M} < \sup_{\tilde{S} \cap \{r^\star \geq r^\star_K\} \cap
  \{ u \geq u_0\} } t < \frac{3}{2}\sqrt{M} \, .
\end{equation} 
\label{boot1b}
\item the weighted energy $E^K_B$ defined in (\ref{ekbfir}) 
satisfies $E^K_B\left(\tilde{T}\right) < c M $ on all 
arcs $\{ t = \tilde{T} < T \}
  \cap \{r^\star \geq r^\star_K\} \subset \mathcal{A}\left(T\right)$. \label{boot2} 
\item the energy-flux satisfies $m\left(u_{hoz},v_2\right) 
- m\left(u_{hoz},v_1\right) < \frac{c M^2}{\left(v_{1+}\right)^2}$ 
for any $v_1 \leq v_2$ along the part of the horizon located in 
$\mathcal{A}\left(T\right)$, where $v_{i+} = \max\left(1,v_i\right)$. \label{boot3} 
\item \begin{equation} 
m\left(u_{r^\star_{cl}},v\right) - m\left(u_{hoz},v\right) < \frac{c M^2}{v_+^2}
\end{equation}
holds in $\mathcal{A}\left(T\right)$. Here $v_+ =\max(1,v)$. \label{boot4} 
\item the integral bound 
\begin{equation} \label{intebound}
\tilde{F}^Y_B = \int r^3 \frac{\left(B_{,u}\right)^2}{\Omega^2} du < \frac{C_L M^2}{v_+^2} \textrm{\ \ \ for \ \ \ $C_L = \sup_{r^\star \geq r^{\star}_{cl}} \frac{1}{1-\mu}$}
\end{equation}
holds along lines of constant $v$ in the region 
$\{r^\star \leq r^\star_{cl} \} \cap \{ u \leq T-r^\star\left(T,r_K\right) \} \cap \mathcal{A}\left(T\right)$, 
corresponding to a decay of energy as measured by local observers near the
horizon.\footnote{That is to say the quantity $\tilde{F}^Y_B$ measures exactly the energy which is not seen by the Hawking energy at the horizon.} \label{boot5} 
\end{enumerate}
We define the set
\begin{equation} 
A = \Big\{ \tilde{\tau} \in \left[\sqrt{M}, \infty\right) 
\  \ \Big| \ \  \mathcal{P}_{T\left(\hat{\tau}\right)} 
\textrm{\ \ \ holds in
  $\mathcal{A}\left(T\left(\hat{\tau}\right)\right)$ 
\textrm{for all $\hat{\tau} \leq \tilde{\tau}$} \Big\} }
\subset \left[\sqrt{M}, \infty\right)  \ \ \ \ \ \,
\end{equation}
which will be shown to be open, closed and non-empty. This 
implies that the statement $\mathcal{P}$ holds on the 
entirety of the black hole exterior. 
The decay rates of Theorem \ref{asymptoticstab} 
follow immediately after proving that the coordinate 
systems used in the bootstrap converge to one which is 
close to the one asserted by Theorem \ref{asymptoticstab}.

The openness of the set $A$ follows from a 
straightforward continuity argument. The difficult 
part in closing the bootstrap therefore is to ``improve''
the statement $\mathcal{P}$ on the 
closure of the set $\mathcal{A}\left(T\right)$.
\subsubsection{Closing the bootstrap} \label{btri}
The third bootstrap assumption is shown to 
imply $\frac{1}{\left(t_i\right)^2}$ decay
of the energy-flux on the arcs 
$\{ t = t_i \} \cap \{ r^\star \geq r^\star_K \} \cap \{ u \geq
\frac{1}{11}t_{i} \}$, from which pointwise bounds 
on the field $B$ and its $v$-derivative are obtained. 
Additionally, strong decay of $B$ in the area radius $r$  
can be extracted from the boundedness of $E_B^K$. 
The assumptions also provide sufficient control over 
the coordinate functions at late times. In particular one 
determines the relation between the 
area radius $r\left(u,v\right)$ and the coordinate 
$r^\star=\frac{v-u}{2}$, at least in the region where 
$r^\star \geq r^\star_K$. For late times this relation converges  
to the well-known formula expressing the area radius 
$r$ in terms of the tortoise coordinate $r^\star$ of 
the five-dimensional Schwarzschild metric 
as captured by bootstrap assumption \ref{boot1}. It follows 
in particular that the value of $r$ does not change much 
(the corrections are shown to be of order $\frac{1}{t}$) along a
$r^\star=const$-curve in the region $r^\star \geq r^\star_K$, 
allowing us to go back and forth between the 
two in the course of the paper. Moreover, bootstrap 
assumption \ref{boot1} is improved.

Various constant $r$- and constant $r^\star$-curves in the region $r \geq r_K$ will 
play a crucial role, since certain integrands arising from the 
method of compatible currents admit good signs in appropriate
regions.\footnote{By bootstrap assumption \ref{boot1} constant $r$ and constant $r^\star$-curves are close to one another in that region.}  
The $r^\star=r^\star_{cl}$-curve, occurring in the 
bootstrap for instance (along which $r \approx r_{cl}$ by the previous 
remarks), is determined by various requirements defined later 
but is in any case located to the right of the aforementioned $r=r_K$. The latter 
curve on the other hand, can and will be chosen close to the horizon 
providing a source of smallness in the bootstrap argument.
A second source of smallness arises from Cauchy stability:
After picking some $r_K$ we can choose a very 
late time $t_0$ up to which the fields are still small 
and after which terms like $\frac{C\left(r_K\right)}{t}$, with $C\left(r_K\right)$ a constant depending on the choice of $r_K$, are small.
\\

We now turn to various energy currents arising from vectorfield 
multipliers and describe how the bootstrap is closed. The remarkable 
properties admitted by the Hawking mass for the system 
under consideration manifest themselves in 
the identity (\ref{bvfi}) for the vectorfield
\begin{equation}
T = \frac{4r_{,v}}{\Omega^2} \partial_u -  \frac{4r_{,u}}{\Omega^2} \partial_v  \, .
\end{equation}
The spacetime-term associated to the $T$-energy identity vanishes and one obtains a
relation between boundary-terms, which are precisely 
the associated energy fluxes. The monotonicity 
of the Hawking mass equips all boundary terms with 
signs when applied in the region\footnote{From the vectorfield point of view this follows from the fact that $T$ is timelike, that the normal to the region is non-spacelike and the positivity properties of $T_{\mu \nu}$. Cf. (\ref{bvfi}).}
(cf.~figure \ref{regionD})
\begin{eqnarray} \label{regDintro}
{}^{u_H}\mathcal{D}^{{r^\star_{cl}},u_J}_{[t_1,t_2]} := &&\Big(\{t_1
\leq t \leq t_2 \} \cap \{{r^\star} \geq r^\star_{cl} \} \cap \{ u \geq
u_J \}\Big) \nonumber \\ &\cup& \Big\{ \{t_1+r^\star_{cl} \leq v \leq t_2 + r^\star_{cl} \} \cap \{r^\star \leq r^\star_{cl} \} \cap \{ u \leq u_H \} \Big\} \, .
\end{eqnarray}
Such regions arise from a dyadic decomposition of the 
bootstrap region between $t_0$ and $T$ with $t_{i+1}=1.1t_i$
playing a crucial role later in the argument.

It can be shown that
the boundary-terms associated to the vectorfield
\begin{equation}
X = f\left(r^\star\right) \left( \partial_u - \partial_v \right) \, ,
\end{equation}
for some 
carefully chosen bounded function $f$,
are controlled by the energy-flux (i.e.~the $T$ boundary-terms) 
and the integral bound (\ref{intebound}) when applied in 
the region (\ref{regDintro}). The function $f$ is in turn chosen 
such that the spacetime-term of $X$ admits a positive sign. In
conjunction with the bootstrap assumptions this results in 
a $\frac{1}{\left(t_i\right)^2}$-decay bound for a 
positive spacetime integral in the dyadic region 
${}^{t_{i+1}-r^\star_{cl}}\mathcal{D}^{{r^\star_{cl}},\frac{1}{10}t_i}_{[t_i,t_{i+1}]}$,
which will prove useful in controlling the spacetime integrals of
other vectorfields. 

Close to the horizon, in a
characteristic rectangle $\left[u_1=t_1-r^\star_{cl}, u_2=u_{hoz}\right] \times
\left[v_1=t_1+r^\star_{cl}, v_2=t_2+r^\star_{cl}\right]$ associated to
the dyadic region 
${}^{u_{hoz}}\mathcal{D}^{r^\star_{cl}, u_J}_{\left[t_1,t_2\right]}$, 
we will apply the vectorfield 
\begin{equation}
Y = \frac{\alpha\left(r^\star\right)}{\Omega^2} \partial_u +
\beta\left(r^\star\right) \partial_v
\end{equation} 
for appropriately chosen functions $\alpha$ and $\beta$ (cf.~the bold
rectangle in Figure \ref{clobo}). The strategy is to control the 
future-null boundary integrals from the past boundary- 
and the associated spacetime term.\footnote{Physically, the
  boundary terms of the $Y$ vectorfield correspond to the energy 
flux as measured by a local observer near the horizon.}
The integrand of the latter contains a part admitting a good sign, 
which can be used in combination with the spacetime 
term of $X$ to control the remaining spacetime term of $Y$. 
Moreover, one ingoing boundary-term being located completely 
in the region $r^\star \geq r^\star_{cl}$, is always controlled 
by the energy flux and hence decays like $\frac{1}{t^2}$. Applying the 
identity in the characteristic rectangle with the bottom being  
$v_0=t_0+r^\star_{cl}$, where 
an appropriate smallness assumption holds by Cauchy stability, 
and the top being $v=\tilde{v}$ for any 
$v_0 \leq \tilde{v} \leq T + r^\star_K$  immediately 
yields uniform boundedness for both the boundary terms 
and the good spacetime term of $Y$. 
The argument can be improved by a pigeonhole 
principle applied in every characteristic rectangle. Namely, 
one extracts from the good spacetime term of $Y$ a ``good $F_Y$-slice", 
i.e.~a slice on which the local $Y$-energy density 
decays like $\frac{1}{v_i}$ times the good spacetime term 
plus a contribution from the energy in the 
region $r^\star \geq r^\star_{cl}$. This is depicted as  
the dotted line in Figure \ref{clobo} below.\footnote{Alternatively one can
  extract a ``good $F_T$-slice'' on which the $T$-energy flux is
  improved. This will come in handy later.} Applying the 
vectorfield identity for $Y$ again in a region 
with the good slice as its past-boundary, one exports the 
$\frac{1}{v_i}$-decay to all dyadic rectangles. Iterating the
procedure one obtains $\frac{C}{\left(v_i\right)^2}$
decay for all boundary-terms and the good spacetime term of $Y$.
The decay of the $Y$ boundary terms leads to the pointwise bound
$|r^\frac{3}{2} \frac{B_{,u}}{r_{,u}}| \leq \frac{C}{v}$ in the region 
$r^\star \leq r^\star_{cl}$, which can be exported to 
the region $r^\star_{cl} \leq r^\star \leq \frac{9}{10}t$ 
using the energy estimate and the decay in the central region. 

With the pointwise bound on $r^\frac{3}{2} \frac{B_{,u}}{r_{,u}}$ 
at our disposal, we can finally make use of the Morawetz vectorfield  
\begin{equation}
K = \frac{\left(u+a\right)^2}{M} \partial_u + \frac{\left(v-a\right)^2}{M} \partial_v \, 
\end{equation}
for a constant $a$.\footnote{This suitably chosen constant 
defines the origin of the vectorfield.} As mentioned previously, 
its application is necessitated by the lack of an almost 
Riemann invariant and it proves 
crucial in the derivation of decay rates away from the 
horizon. The vectorfield identity for the region 
${}^{u_H=\tilde{T}-r^\star_{K}}\mathcal{D}^{{r^{\star}_K},u_0}_{[t_0,\tilde{T}]}$
associated to any $\tilde{T} \leq T$ and some 
large $t_0$ relates a future 
boundary term to a past boundary term, a horizon-term 
and their associated spacetime term.

The boundary terms on the $\tilde{T}$-arc contain 
``good''-terms which are precisely the strongly weighted 
energies $E^K_B\left(\tilde{T}\right)$ of the 
second bootstrap assumption and error-terms. The 
vectorfield identity is now 
exploited so as to estimate this ``good" term on the 
future arc in terms of \emph{all} other terms entering the
 identity. These latter quantities are in turn shown 
to be small or of good sign, which will finally 
improve assumption \ref{boot2}. 
To derive the smallness for the various terms, 
it will be necessary to subdivide the domain of 
integration and to apply different estimates in 
each region, carefully taking the geometry of the black 
hole into account.\footnote{It is here where the pointwise bound on
$r^\frac{3}{2} \frac{B_{,u}}{r_{,u}}$ established earlier enters.}
It should be emphasized that these estimates belong to 
the most subtle ones in the paper. They make
crucial use of the monotonicity manifest in the Raychaudhuri
equations (\ref{eom1}) and (\ref{eom2}), and exploit an exponential 
decay associated with the redshift very close to the horizon 
by introducing an intermediate region between $r^\star=r^\star_K$ 
and the horizon. 

For the boundary terms, there 
are two sources from which the smallness is finally 
obtained: One is the choice of the curve $r=r_K$, which 
can be chosen very close to the horizon. 
The other stems from the choice of a late time 
$t_0$ up to which the initial data has only changed 
by an amount as small as we may wish by Cauchy stability and after 
which the good decay estimates, i.e.~the 
weight of $\frac{1}{t_0}$ carries over.

To establish smallness for the spacetime term appearing 
in the $K$-vector-identity, on the other hand, a further 
argument is needed. This term consists of a ``main''-term, 
which is the one that appears
in the linear case, and error-terms. The error-terms 
can be dealt with very analogously to the treatment of the 
error-boundary terms. The main term is shown to 
admit a good sign for $r^\star \leq r^\star_{cl}$ and 
for some $r^\star \geq R^\star$ for some $R^\star$. The remaining piece 
in the central region is divided into dyadic 
regions, $t_{j+1}=1.1t_j$. Each $K$-integral of 
such a dyadic region can be controlled by $t_{j+1}$ 
times the spacetime integral of the vectorfield $X$
in that region. Since the $X$-bulk term decays 
like $\frac{1}{\left(t_{j+1}\right)^2}$ as outlined above, 
summing up the dyadic regions yields smallness for 
the main $K$-spacetime-term (arising from the large time $t_0$, 
where we start the dyadic decomposition). This improves 
bootstrap assumption \ref{boot2}.

With the third bootstrap assumption being improved 
on all arcs $\tilde{T} \leq T$ it follows that the 
decay of the energy has been improved on all 
arcs.\footnote{This is a consequence of the previously 
mentioned fact that the expression for $E^K_B$ contains strong 
weights from which the decay can be extracted.} 
As a corollary, the same 
decay is obtained through any achronal 
hypersurface lying completely in the region 
$r^\star_K \leq r^\star \leq \frac{9}{10}t$.  
\begin{figure}[h!] 
\[
\input{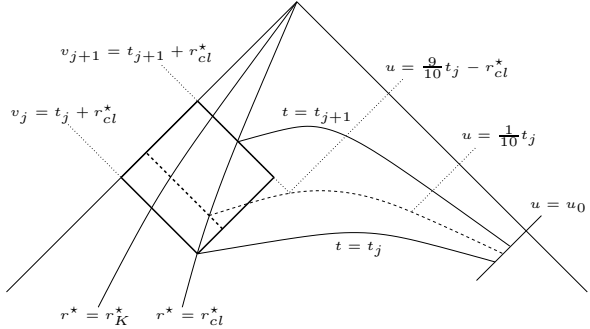}
\]
\caption{Closing the bootstrap.} \label{clobo}
\end{figure}
In the final step we find in each dyadic rectangle a 
``good $F_T$-slice'' on which the energy flux is improved 
to $\frac{\epsilon}{\left(v_i\right)^2}$, very analogous to 
finding a ``good $F_Y$-slice'' as described above. Combining it with 
the improved decay on the associated arc (cf.~the dotted slice 
in Figure \ref{clobo}), the domain 
of dependence property improves the bootstrap 
assumptions \ref{boot3} and \ref{boot4}. Additionally,
we can finally find a good $F_Y$-slice in each 
characteristic rectangle (improving 
assumption \ref{boot5} on that slice), which in conjunction 
with the energy decay now being improved to $\frac{\epsilon}{v^2}$ 
everywhere in $r^\star \leq \frac{9}{10}t$, can  
be exported to all $v$-slices. Hence assumption \ref{boot5} is also
retrieved with a better constant. This completes the proof 
that the set $A$ is indeed closed and the main theorem 
follows in view of the previous remarks.

It should be noted that the decay rate that can 
be extracted in this argument is limited by the weights appearing in 
the $K$ vectorfield, i.e.~by the decay in the 
central region.\footnote{Clearly, better decay in the
  central region could immediately be exported to the horizon by a
  reiteration of the pigeonhole principle in conjunction with the
  vectorfield $Y$.} In particular, we cannot 
derive the stronger decay $\frac{1}{v^{3-\epsilon}}$ 
near the horizon obtained in \cite{DafRod} for the massless
scalar field. It is an interesting question whether other methods 
can improve the decay rates proven in this paper.

\subsection{Outline of the paper}
We start by introducing the biaxial Bianchi IX 
model and some notation  (section \ref{bibibasic})
before defining the aforementioned future-normalized
coordinate system $C_{\tilde{\tau}}$ in section \ref{Coordinates}. 
Various a-priori bounds, which can be obtained 
without invoking the main bootstrap argument and turn out to be helpful 
at many stages of the paper are derived in section \ref{Basic}. An 
important point to keep in mind, however, is that the decay of the 
energy in the area radius \emph{cannot} be obtained by 
these methods due to the lack of an almost Riemann
invariant for the model under consideration. 
The method of compatible currents is explained in more detail 
in section \ref{compatiblecur}, 
where moreover the relevant identities associated with 
the regions considered later are derived. 
In particular, the Hawking mass is recovered as a potential of 
a certain vectorfield-current (section \ref{HawkT}). After 
defining the bootstrap 
assumptions (section \ref{bootstrap}), various bounds for the fields 
are derived from them and the stability of the 
coordinate systems $\mathcal{C}_{\tilde{\tau}}$ defined in section
\ref{Coordinates} is established (section \ref{analboot}).  The 
identities associated to 
the vectorfields $Y$ and $X$ are analyzed in sections \ref{Ysection} 
and \ref{Xsection}. Here a somewhat lengthy argument is 
pursued to construct the 
function $f$ implicit in the vectorfield $X$, which finally 
ensures that its spacetime term admits a positive sign. 
Section \ref{XconY} reveals how to control the weighted energies 
produced by $Y$ near the horizon with the help of the 
vectorfield $X$. The relevant version of the pigeonhole 
principle is also explained at this stage. Finally, in section \ref{vecKsec}
the Morawetz vectorfield $K$ is introduced and the 
necessary estimates to control the various error-integrals, 
as outlined in the introduction, are performed. 
Everything is put together in section \ref{closboot}, where 
the bootstrap is closed. The paper finishes with some final 
remarks and open questions.
\section{Biaxial Bianchi IX} \label{bibibasic}
The class of biaxial Bianchi IX metrics was introduced in
\cite{Bizon}. We recall that these spacetimes are topologically 
$\mathcal{M} = \mathcal{Q} \times SU(2)$, where $\mathcal{Q}$ 
is a two-dimensional manifold and that global 
coordinates $(u,v)$ can be found on $\mathcal{Q}$ expressing 
the metric of $\mathcal{M}$ in the form
\begin{equation} \label{metansatz}
g = -\Omega^2\left(u,v\right) du dv + \frac{1}{4} r^2\left(u,v\right) \left(e^{2B\left(u,v\right)} \left(\sigma_1^2 +
  \sigma_2^2 \right) + e^{-4B\left(u,v\right)} \sigma_3^2 \right)
\end{equation}
where $B$ and $r$ are functions $\mathcal{Q} \rightarrow \mathbb{R}$
and the $\sigma_i$ form a basis of left invariant one-forms on
$SU(2)$. Note that if $B=0$, the symmetry is enhanced to
$\left(SU\left(2\right)_L \times SU\left(2\right)_R\right) \slash \mathbb{Z}^2 =
SO\left(4\right)$ and the metric reduces to 
the five dimensional Schwarzschild-Tangherlini metric in view of 
the higher dimensional
version of a well-known theorem due to Birkhoff.\footnote{Note also 
the relation
  between the familiar round metric ($d\omega_{\mathbb{S}^3}^2$) and 
the bi-invariant metric on $S^3$, $d\omega_{\mathbb{S}^3}^2 =
\frac{1}{4} \left(\sigma_1^2+ \sigma_2^2+\sigma_3^2 \right)$.} In this
sense, $B$ is the dynamical degree of freedom ruling the model. 
See \cite{DafHol} for a more detailed discussion.

The vacuum Einstein equations for the above model reduce to a system of
$1+1$ dimensional PDEs on the quotient manifold $\mathcal{Q}$:
\begin{equation} \label{eom1}
\partial_u \left( \Omega^{-2} \partial_u r \right) =
-\frac{2r}{\Omega^2}\left((B_{,u})^2\right),
\end{equation}
\begin{equation} \label{eom2}
\partial_v \left( \Omega^{-2} \partial_v r \right) = 
-\frac{2r}{\Omega^2}\left((B_{,v})^2\right),
\end{equation}
\begin{equation} \label{revol}
r_{,uv} = -\frac{1}{3} \frac{\Omega^2 \rho}{r} - \frac{2
  r_{,u}r_{,v}}{r} = -\frac{\Omega^2}{r^3}m -
  \frac{1}{3}\frac{\Omega^2}{r} \left(\rho-\frac{3}{2}\right),
\end{equation}
\begin{equation} \label{omegaevol}
\partial_u \partial_v \log \Omega = \frac{\Omega^2 \rho}{2r^2} + 
\frac{3}{r^2} r_{,u} r_{,v} - 3 \left(B_{,v}\right)\left(B_{,u}\right)
 = \frac{3 \Omega^2}{2r^4} m +
\frac{\Omega^2}{2r^2}\left(\rho-\frac{3}{2}\right) - 3 \frac{\theta
  \zeta}{r^3} \, ,
\end{equation}
\begin{equation} \label{Bevol}
B_{,uv} = -\frac{3}{2}\frac{r_{,u}}{r} B_{,v}
-\frac{3}{2}\frac{r_{,v}}{r} B_{,u}+ \frac{\Omega^2}{3r^2} \left(e^{-8B}-e^{-2B}\right) \, .
\end{equation}
Here we have defined the quantity\footnote{The quantity $\rho$ is related to the scalar curvature of the group orbit by $R=\frac{4}{r^2}\rho$.}
\begin{equation} \label{scalcurv}
\rho = 2e^{-2B} -\frac{1}{2}e^{-8B} \leq \frac{3}{2}  \, ,
\end{equation}
with the inequality following from elementary calculus. Equality holds
if and only if $B=0$. Note that the non-linear wave equation 
(\ref{Bevol}) can be written as (\ref{nonlinB}) with $\square$ 
being the d'Alembertian of the metric (\ref{metansatz}).

A remarkable feature of the above system is the
existence of a function $m\left(u,v\right)$ called the \emph{Hawking
mass} and defined by
\begin{equation} \label{Hawkmass}
m = \frac{r^2}{2} \left(1+\frac{4r_{,u}r_{,v}}{\Omega^2} \right)
\, .
\end{equation}
Since the inequalities $r_{,u} < 0$ and $r_{,v} \geq 0$ were 
shown \cite{DafHol} to hold everywhere on the black hole exterior\footnote{They hold on the initial data for small
  perturbations of Schwarzschild-Tangherlini and are seen to
  be preserved by equations (\ref{eom1}) and (\ref{eom2}).}, 
the Hawking mass has the following
monotonicity properties there:
\begin{equation} \label{dum}
\partial_u m = -4r^3 \frac{\lambda}{\Omega^2} \left(B_{,u}\right)^2
+ r \nu \left(1-\frac{2}{3}\rho \right) \leq 0 \, ,
\end{equation}
\begin{equation} \label{vum}
\partial_v m = -4r^3 \frac{\nu}{\Omega^2} \left(B_{,v}\right)^2
+ r \lambda \left(1-\frac{2}{3}\rho\right) \geq 0 \, .
\end{equation}
This allows the derivation of energy estimates for the field $B$,
which plays an important role at all stages of the present paper. 
The existence of these estimates was already an essential ingredient 
in the proof of the orbital stability \cite{DafHol}.

We conclude this section recalling
some notation introduced in \cite{DafHol}. We set
\begin{equation}
\lambda = r_{,v} \textrm{ \ \ \ \ \ } \nu = r_{,u} \textrm{ \ \ \ \ \
} \zeta= r^\frac{3}{2} B_{,u} \textrm{ \ \ \ \ \ } \theta =
r^\frac{3}{2} B_{,v} \, 
\end{equation}
and introduce the quantities
\begin{equation} \label{kapgamdef}
\kappa = \frac{\lambda}{1-\mu} = \frac{\Omega^2}{-4\nu} \textrm{ \ \ \
  \ and \ \ \ \ } \gamma = \frac{-\nu}{1-\mu} = \frac{\Omega^2}{4\lambda}
\end{equation}
satisfying 
\begin{equation} \label{kapevol}
\kappa_{,u} = \kappa \left(\frac{2}{r^2} \frac{\zeta^2}{\nu} \right)
\, , 
\end{equation}
\begin{equation} \label{gammaevol}
\gamma_{,v} = \gamma \left(\frac{2}{r^2}
\frac{\theta^2}{\lambda}\right) \, ,
\end{equation}
as well as the auxiliary quantities 
\begin{equation} \label{deltaB}
\varphi_1(B) = \left(1-\frac{2}{3}\rho\right) -8B^2 \textrm{ \ \ \ \ \
  and \ \ \ \ \ } \varphi_2(B) =
  \frac{4B}{3}\left(e^{-8B}-e^{-2B}\right)+8B^2 \, ,
\end{equation}
both of order $B^3$. The volume element associated to
(\ref{metansatz}) is 
\begin{equation}
dVol = \sqrt{g} du dv dw = r^3 \frac{\Omega^2}{2} du dv
dA_{\mathcal{S}^3} = r^3 \Omega^2 dt dr^\star dA_{\mathcal{S}^3}
\end{equation}
where in the standard Euler-coordinates on $SU(2)$ (cf. \cite{DafHol})  
\begin{equation}
dA_{\mathcal{S}^3} = \frac{1}{8} \sin \theta d{\bf \omega} =
\frac{1}{8} \sin \theta d\theta d\phi d\psi  \textrm{ \ \ \ and hence
  \ \ \ } \int dA_{\mathbb{S}^3} = 2\pi^2\, .
\end{equation}
The monotonicity of the Hawking mass justifies the definitions
\begin{equation}
m_{min}, m_{max} \textrm{ \ \ \ for the minimal and maximal Hawking mass.}
\end{equation}
Furthermore, the quantity $M_f$ will denote the final Bondi mass 
and $M$ the mass of the perturbed Schwarzschild solution. 
The mass $M$ determines the scaling of the problem and 
I have normalized all quantities appropriately using factors of $M$. 
In particular, ``smallness'' always refers to dimensionless quantities. 

We write $C\left(\epsilon\right)$ for a constant
satisfying $\lim_{\epsilon \rightarrow 0} C \left(\epsilon\right)
= 0$. The notation $a \sim b$ is used if there exist 
uniform constants $c_1,c_2$ with $c_1 \leq \frac{a}{b} \leq c_2$.  Finally, we define
\begin{equation}
v_+ = \max\left(1,v\right)   \textrm{ \ \ \ and \ \ \ } v_{i+} = \max\left( 1,v_i \right) \, .
\end{equation}

\section{Choice of coordinates}  \label{Coordinates}
As mentioned in the introduction, the choice of coordinates is already
a rather delicate issue for the problem under consideration. 
Although the final result does not depend on the choice of 
coordinates, the bootstrap-techniques applied in the proof 
require the coordinates to be normalized to the future of the 
bootstrap region introduced in section \ref{bootstrap}. 
If on the contrary one normalized the coordinates on the initial data, 
one would not be able to obtain the improved decay of the 
fields at late times from the estimates, roughly speaking 
because contributions from the initial data, which have not 
yet decayed, enter the estimates. This necessitates, after 
a purely geometric definition of ``time'' for $\nabla r$ integral curves 
on the black hole exterior, the introduction 
of a different coordinate system 
$\mathcal{C}_{\tilde{\tau}}=(u_{\tilde{\tau}},v_{\tilde{\tau}})$ defined  with 
respect to every such ``time'' $\tilde{\tau}$. All such coordinate systems $\mathcal{C}_{\tilde{\tau}}$ are 
defined on the set
\begin{equation} \label{calD}
\mathcal{D} = J^-\left(\Scri^+\right) \cap J^+\left(\tilde{S}_{r_K}\right)
\end{equation}
of the black hole exterior. In section 9 we 
shall exploit the bootstrap assumptions to establish that 
-- in a certain region -- these coordinate systems are uniformly 
close to each other in a suitable sense. It should be observed that 
the coordinate systems $\mathcal{C}_{\tilde{\tau}}$ are 
different from the coordinate system asserted in 
Theorem \ref{asymptoticstab}. In the last section of the paper 
we will show that the coordinate system $\mathcal{C}_{\tilde{\tau}}$ 
for $\tilde{\tau} \rightarrow \infty$ is close to the one 
asserted by Theorem \ref{asymptoticstab}.

We begin by considering the family of $\nabla r$ integral 
curves starting out from some $r=r_K$-curve which 
is chosen close to the horizon\footnote{The choice will 
provide a source of smallness later in the bootstrap argument.} 
such that still $1-\mu \geq \tilde{c}>0$ holds for a 
small $\tilde{c}$, and ending at spacelike infinity $i^0$. These curves
foliate $\mathcal{D} \cap \{r \geq r_K\}$. Moreover, every curve 
admits a unique point where $r = 2\sqrt{m}$. 
We pick any such curve and label the corresponding point by $A$. 
Denote the mass at $A$ by $m_A$ and consider the 
curve $r^2 = 4m_A$ going 
through $A$ and intersecting the initial data at some point $D$. 
Let $\tau_{AD}$ be the affine length of the constant $r$ curve (with 
tangent vector normalized to one) connecting $A$ and $D$. Finally, define 
\begin{equation} \label{timedef}
T = \sqrt{M} + \frac{\tau_{AD}}{\sqrt{1-\frac{2m_A}{r^2}}} = \sqrt{M} + \sqrt{2} \ \tau_{AD}
\end{equation}
to be the time associated to the $\nabla r$ curve under consideration. In this way 
we can assign a notion of time to any $\nabla r$ integral curve. Considering next 
the curve $r^2=4M_f$ with affine parameter $\tilde{\tau}$ starting 
from the initial data, we obtain a map
\begin{equation} \label{thetamap}
\vartheta : \left[0,\infty\right) \ni \tilde{\tau} \mapsto T \in \left[\sqrt{M},\infty\right) \,
\end{equation}
which is defined by taking $\tilde{\tau}$ to the time associated to 
the $\nabla r$ integral curve which intersects the curve $r^2 = 4M_f$ 
at $\tilde{\tau}$. The map $\vartheta$ is easily seen to be continuous and surjective. 

For every $\tilde{\tau}$ a coordinate system $(u,v)$ is defined as follows. 
Let $A$ have coordinates $(u,v)=(T=\vartheta\left(\tilde{\tau}\right),T=\vartheta\left(\tilde{\tau}\right))$. Set $\kappa=\gamma=\frac{1}{2}$ along 
the $\nabla r$ integral curve up to $r=r_K$. Since
\begin{equation}
\nabla r \left(u+v\right) = \left(\nabla r\right)^u + \left(\nabla
r\right)^v = \frac{2}{\Omega^2} \left(-\nu - \lambda\right) =
\frac{2\left(1-\mu\right)}{\Omega^2} \left(\gamma - \kappa\right) \, ,
\end{equation}
we have that $t=\frac{u+v}{2}$ (thus defining $t$) is indeed 
equal to the constant $T$ on the $\nabla r$ integral curve through $A$. 
Moreover $r^\star = \frac{v-u}{2}$ is equal to $0$ at $A$.
\begin{figure}[h!]
\[
\input{coors2b.pstex_t}
\]
\caption{The choice of coordinates.} \label{codiag}
\end{figure}
Let the $\nabla r$ curve defining the coordinate system 
intersect $r=r_K$ at $B$. We erect the constant $u=u_B$-ray 
to the past of $B$ and set $\kappa=\frac{1}{2}$ there. 
The coordinate system is completed by specifying 
the $u$-coordinate on $\overline{BC}$. 
We set $\nu = - \frac{1}{2}\left(1-\mu\right)$ on
 $\overline{BC}$. This might send the horizon to $u=\infty$, namely if 
$1-\mu=0$ at $C$.\footnote{Of course, one 
does not expect this to be the case generically.} We will see that 
in these coordinates $v \rightarrow \infty$ at $\Scri^+$. The
 coordinates thus defined will be refered to as Eddington Finkelstein
coordinates. We also allow ourselves to move freely between the 
coordinates $(u,v)$ and $\left(t=\frac{v+u}{2},
r^\star=\frac{v-u}{2}\right)$. 

Clearly if $\tilde{\tau}=0$, then the associated integral curve 
coincides for $r\geq r_K$ with the curve on which the initial data is 
defined, and $t=\sqrt{M}$ defines the initial-data slice in $r \geq
r_K$. Note that in any coordinate system associated to some
$\tilde{\tau}>0$, a slice on which 
$t=constant$ does in general \emph{not} agree with 
a $\nabla r$-slice. However, 
once we have introduced the bootstrap assumptions,
we will be able to show that the two slices mentioned 
remain uniformly close to each other in $r^\star \geq r^\star_K$ for any 
$\tilde{\tau} > 0$. This argument is postponed to 
section \ref{StabCord}. Here we only introduce

\begin{notation} \label{tnotation}
Let $t^{\tilde{\tau}_A}_{\tilde{\tau}_B}$ denote the $t$-coordinate, measured
in the coordinate system defined by $\tilde{\tau}_A$, of 
the point defined by the intersection of the $\nabla r$ 
integral curve determined by $\tilde{\tau}_B$ and the curve 
$r^2=4m\left(\vartheta\left(\tilde{\tau}_A\right),r^\star=0\right)$.
\end{notation}

We conclude with a remark on the differentiability of the coordinate systems. 
Due to the ``cusp'' at the point $B$ the coordinate system is only $C^1$: 
The quantities $\kappa$ and $\gamma$ (and by definition (\ref{kapgamdef}) the first derivatives of the 
area radius function $r\left(u,v\right)$) are clearly continuous. The second 
derivatives $r_{,vv}$ and $r_{,uu}$ however, are discontinuous at the point $B$. 
This could be avoided by applying an appropriate smooth 
interpolating function in a small neighborhood around 
the point $B$. However, we will see later that the 
bootstrap involves only first derivatives (and hence 
continuous quantities) and that the regularity suffices 
to close the bootstrap.
\section{Basic estimates} \label{Basic}
In this section we are going to show that given an appropriate
smallness assumption on the field $B$, 
namely (\ref{initassump}) and (\ref{initassump2}), 
the field and its derivatives remain small on the entire $\mathcal{D}$. 
Since this ``first round'' is independent of the 
main bootstrap argument, it provides a good way-in to familiarize 
oneself with the basic estimates applied in different regions of 
the black hole exterior. The bounds in this section will be proven
in the coordinate system $\mathcal{C}_{\tilde{\tau}}$ associated to
\emph{any} $\tilde{\tau} \geq 0$.\footnote{Note that if
  $\tilde{\tau}=0$ the coordinates are normalized on initial data.} In this
context it is crucial that the smallness assumptions
(\ref{initassump}) and (\ref{initassump2}) are manifestly independent
of the coordinate choice. From \cite{DafHol} we recall that
\begin{equation} \label{inisma}
1 - \frac{m_{min}}{m_{max}} < \epsilon\left(\delta\right) \textrm{ \ \ \ \ with $\lim_{\delta \rightarrow 0} \epsilon\left(\delta\right) = 0 $}
\end{equation}
can be chosen arbitrarily small by an appropriate assumption 
on the initial data. We will abbreviate $\epsilon\left(\delta\right)$ 
by $\epsilon$ in the following. In view of the 
monotonicity-properties of the Hawking mass ((\ref{dum}) and (\ref{vum})) the 
mass difference between any two points cannot exceed
$m_{max} \cdot \epsilon\left(\delta\right)$. We note
\begin{lemma}
If (\ref{inisma}) holds, then on the horizon we have
\begin{equation} \label{gohoz}
0 \leq 1-\mu \leq \frac{2m_{max}}{r^2} \epsilon
\end{equation}
\end{lemma}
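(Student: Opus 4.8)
The plan is to work directly from the definition $\mu = \frac{2m}{r^2}$ and the geometric characterization of the horizon. On the event horizon $\mathcal{H}^+$, the outgoing null derivative of the area radius degenerates: $\lambda = r_{,v} = 0$ there (this is the standard fact that the horizon is the boundary of the trapped region, and $r_{,v}\geq 0$ holds on the exterior with equality exactly on $\mathcal{H}^+$). Plugging $\lambda = 0$ into the definition of the Hawking mass (\ref{Hawkmass}), $m = \frac{r^2}{2}\left(1 + \frac{4 r_{,u} r_{,v}}{\Omega^2}\right) = \frac{r^2}{2}\left(1 + \frac{4\nu\lambda}{\Omega^2}\right)$, gives $m = \frac{r^2}{2}$ on the horizon, i.e. $1-\mu = 1 - \frac{2m}{r^2} = 0$ there. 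This already yields the left inequality (in fact equality) and shows the right inequality is a statement about how far $1-\mu$ can be from zero once one moves slightly, but strictly speaking \emph{on} the horizon it is trivially $0 \leq 0$. So the content must be read as: $1-\mu$ is non-negative and small near/on the horizon, controlled by the mass defect $\epsilon$.

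First I would make precise where the smallness enters. Since $r_{,u} < 0$ and $r_{,v}\geq 0$ on the exterior, we have $1-\mu = -\frac{4\nu\lambda}{\Omega^2} \geq 0$ everywhere, giving the left inequality unconditionally. For the right inequality, I would use that $1-\mu = \frac{\lambda}{\kappa}$ with $\kappa = \frac{\lambda}{1-\mu} = \frac{\Omega^2}{-4\nu} > 0$, together with the mass monotonicity. The key observation is that on the horizon $m = \frac{r^2}{2}$, so $r^2_{\mathcal{H}} = 2 m_{\mathcal{H}}$ where $m_{\mathcal{H}}$ is the Hawking mass evaluated on $\mathcal{H}^+$; since $m$ is monotone along the horizon and bounded between $m_{min}$ and $m_{max}$, and since at any point just off the horizon $1-\mu = \frac{2}{r^2}\left(\frac{r^2}{2} - m\right)$, one estimates $\frac{r^2}{2} - m \leq m_{max} - m_{min} \leq m_{max}\,\epsilon$ using (\ref{inisma}) and the monotonicity properties (\ref{dum}), (\ref{vum}). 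This is precisely the statement that the total mass defect on the spacetime is at most $m_{max}\epsilon$, applied to compare the value $\frac{r^2}{2}$ (the ``horizon value'' of $m$ along the relevant null ray) with the actual $m$. Dividing by $\frac{r^2}{2}$ yields $1-\mu \leq \frac{2 m_{max}}{r^2}\epsilon$.

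The one point requiring care — and what I would flag as the main obstacle — is matching the area radius $r$ appearing in the final bound to the correct reference value: one must argue that along an ingoing null ray hitting the horizon, $r$ does not increase, so that the $r$ in $\frac{2m_{max}}{r^2}$ is genuinely the horizon area radius (or bounded by it), and that the quantity $\frac{r^2}{2} - m$ is indeed controlled by the difference $m_{\mathcal{H}} - m$ along that ray, which is in turn $\leq m_{max} - m_{min}$. This is where the Raychaudhuri-type monotonicity (\ref{eom1})–(\ref{eom2}) and the sign conditions $\nu < 0$, $\lambda \geq 0$ are essential. Everything else is algebraic manipulation of the definitions of $\mu$, $m$, $\kappa$, $\gamma$.
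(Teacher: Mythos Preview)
Your proposal contains a genuine error: the claim that $\lambda = r_{,v} = 0$ on the event horizon $\mathcal{H}^+$ is false in general. You are confusing the event horizon with the apparent horizon. For a dynamical black hole the apparent horizon (where $\lambda = 0$) lies strictly inside the event horizon; on $\mathcal{H}^+$ one only knows $\lambda \geq 0$, and generically $\lambda > 0$ there. Consequently $1-\mu$ is \emph{not} identically zero on $\mathcal{H}^+$, and the lemma is a genuinely nontrivial statement about the horizon itself, not merely a statement ``near'' it. Your attempted justification of $\frac{r^2}{2} - m \leq m_{max} - m_{min}$ then collapses, because it relied on identifying $\frac{r^2}{2}$ with the Hawking mass at some horizon point.

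The missing ingredient, which the paper invokes from \cite{DafHol}, is the Penrose inequality on the horizon: $r^2 \leq 2M_f$, where $M_f$ is the final Bondi mass. With this in hand the upper bound is immediate: on $\mathcal{H}^+$,
\[
1-\mu = \frac{r^2 - 2m}{r^2} \leq \frac{2M_f - 2m}{r^2} \leq \frac{2(m_{max} - m_{min})}{r^2} < \frac{2m_{max}}{r^2}\,\epsilon,
\]
using $M_f \leq m_{max}$ and $m \geq m_{min}$. The lower bound $1-\mu \geq 0$ you correctly derive from $\nu < 0$, $\lambda \geq 0$ and the definition of $m$. So the structure of your argument is right once you replace the false input ``$\lambda = 0$ on $\mathcal{H}^+$'' by the Penrose inequality.
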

\begin{proof}
From \cite{DafHol} we have both $1-\mu \geq 0$ on the black hole exterior, as well as the Penrose inequality $1-\frac{2M_f}{r^2} \leq 0$ holding on the horizon with $M_f$ the final Bondi mass. Combining this with (\ref{inisma}) immediately yields (\ref{gohoz}).
\end{proof}
\begin{corollary} \label{hozfluct}
The area radius $r$ satisfies
\begin{equation}
|r_+-r_-| \leq \frac{4m_{max}}{\sqrt{m_{min}}} \epsilon 
\textrm{\ \ \ on $\mathcal{H}^+$ } 
\end{equation}
with $r_\pm$ being the maximal (minimal) value of $r$ on the horizon.
\end{corollary}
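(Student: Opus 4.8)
The plan is to integrate the redshift inequality from the previous lemma along the horizon. On $\mathcal{H}^+$ the area radius is non-increasing towards the future, since $\nu = r_{,u} < 0$ on the black hole exterior. Hence $r_+$ is attained at the point where the horizon meets the slice $\tilde{S}_{r_K}$ and $r_-$ is the infimal (limiting) value as $i^+$ is approached. First I would parametrize the horizon by $v$ and write
\[
r_+ - r_- = -\int_{\mathcal{H}^+} \lambda \, dv = -\int_{\mathcal{H}^+} r_{,v} \, dv \geq 0 \, ,
\]
so that $|r_+ - r_-| = r_+ - r_-$; the task reduces to bounding this integral from above.

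Next I would use the definition of the Hawking mass (\ref{Hawkmass}), which on the horizon can be rearranged (using $\lambda = r_{,v}$ and $\kappa = \frac{\Omega^2}{-4\nu}$ from (\ref{kapgamdef})) to express $\lambda$ in terms of $1-\mu$ and $\kappa$: indeed $1-\mu = 1 + \frac{4 r_{,u} r_{,v}}{\Omega^2} = 1 - \frac{\lambda}{\kappa}$, so $\lambda = \kappa(1-\mu)$ (equivalently $\lambda = \frac{\Omega^2}{-4\nu}(1-\mu)$). Along $\overline{BC}$ and the horizon the coordinate normalization sets $\nu = -\frac{1}{2}(1-\mu)$, which gives $\kappa = \frac{\Omega^2}{2(1-\mu)}$; but more robustly one expects $\kappa \leq 1$ on the horizon in this gauge (or, since the bound should be gauge-independent, one changes to the $v$-parametrization in which $\lambda \, dv$ is manifestly the geometric area increment). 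The upshot is an estimate of the form $-\int \lambda\, dv \leq \int (1-\mu)\, (\text{bounded}) \, dv$; combining this with the pointwise bound (\ref{gohoz}), $0 \leq 1-\mu \leq \frac{2 m_{max}}{r^2}\epsilon$, and using that $r \geq \sqrt{m_{min}}$ on the horizon together with the finite $v$-extent (controlled again by $\epsilon$-smallness and the monotonicity of $m$), one collects the constant $\frac{4 m_{max}}{\sqrt{m_{min}}}$.

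A cleaner route, which I would prefer, bypasses the integral entirely: use the monotonicity of the Hawking mass. On the horizon, from (\ref{vum}), $\partial_v m = -4 r^3 \frac{\nu}{\Omega^2}(B_{,v})^2 + r\lambda(1 - \frac{2}{3}\rho) \geq 0$, and (\ref{Hawkmass}) gives $m = \frac{r^2}{2}(1-\mu) + \frac{\lambda \nu \cdot 2 r^2}{\Omega^2}\cdot(\ldots)$—more directly $m = \frac{r^2}{2}\mu$, so $r^2 = \frac{2m}{\mu}$. On the horizon $1-\mu$ is small by (\ref{gohoz}), so $\mu$ is close to $1$ and $r^2$ is close to $2m$; since $m$ varies by at most $m_{max}\epsilon$ along the whole exterior (hence along $\mathcal{H}^+$) and $\mu$ varies by at most $\frac{2m_{max}}{r^2}\epsilon$, one gets that $r^2$ varies by $O(m_{max}\epsilon)$, whence $|r_+ - r_-| = |r_+^2 - r_-^2|/(r_+ + r_-) \leq O(m_{max}\epsilon)/(2\sqrt{m_{min}})$, yielding the claimed constant $\frac{4 m_{max}}{\sqrt{m_{min}}}$ after tracking the numerical factors. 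The main obstacle is purely bookkeeping: making sure the "$r^2$ close to $2m$" comparison is quantitative and that the factor of $4$ (rather than $2$ or $8$) comes out, i.e.\ carefully combining the bound $|\Delta m| \leq m_{max}\epsilon$ with $|\Delta \mu| \leq \frac{2m_{max}}{r^2}\epsilon$ and with the lower bound $r \geq \sqrt{m_{min}}$ valid on the horizon. There is no genuine analytic difficulty beyond the two monotonicity facts and the lemma already established.
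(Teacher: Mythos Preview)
Your second (``cleaner'') approach is correct and is precisely what the paper intends: the corollary is stated without proof because it follows by the purely algebraic manipulation you describe. From $0 \leq 1-\mu$ one gets $r^2 \geq 2m \geq 2m_{min}$ on the horizon, and from $1-\mu \leq \frac{2m_{max}}{r^2}\epsilon$ one gets $r^2 \leq 2m + 2m_{max}\epsilon \leq 2m_{max}(1+\epsilon)$. Together with $m_{max}-m_{min} \leq m_{max}\epsilon$ this gives $r_+^2 - r_-^2 \leq 4m_{max}\epsilon$, and dividing by $r_++r_- \geq 2\sqrt{2m_{min}}$ gives a bound even sharper than the one stated; the constant $4$ in the paper is simply a convenient crude choice.

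Your first approach, however, contains errors worth flagging. The horizon $\mathcal{H}^+$ is a curve of constant $u$ parametrized by $v$, so the relevant derivative of $r$ along it is $\lambda = r_{,v}$, not $\nu = r_{,u}$. Since $\lambda \geq 0$ on the exterior, $r$ is \emph{non-decreasing} towards the future along $\mathcal{H}^+$; you have the direction of monotonicity backwards, and correspondingly the sign in $r_+-r_- = \int_{\mathcal{H}^+}\lambda\,dv$ should be $+$, not $-$. More importantly, the horizon has \emph{infinite} $v$-extent, so the pointwise bound $\lambda = \kappa(1-\mu) \leq \frac{2m_{max}}{r^2}\kappa\,\epsilon$ alone cannot bound $\int_{\mathcal{H}^+} \lambda\,dv$: one would need $\int_{\mathcal{H}^+}\kappa\,dv < \infty$, which fails since $\kappa$ is bounded below away from zero. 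Your remark about ``finite $v$-extent controlled by $\epsilon$-smallness'' is therefore incorrect, and the integral route does not close without essentially invoking the algebraic argument anyway.
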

\begin{corollary} \label{rcurvcol}
For any given $\eta > 0$ we can choose the $\delta$ of the initial 
data so small that for some $r=r_K$ curve located 
completely in $\mathcal{D}$ the estimate
\begin{equation}
r_K-r \leq \eta
\end{equation}
holds in $r \leq r_K$.
\end{corollary}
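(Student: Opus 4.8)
The plan is to deduce the estimate from the near-constancy of the area radius on $\mathcal{H}^+$ (Corollary \ref{hozfluct}), using the monotonicity $\nu = r_{,u} < 0$ to propagate the horizon lower bound for $r$ into the region $\{r \leq r_K\}$. First I would establish a uniform lower bound for $r$ on the whole black hole exterior. Write $r_- := \inf_{\mathcal{H}^+ \cap \mathcal{D}} r$ and $r_+ := \sup_{\mathcal{H}^+ \cap \mathcal{D}} r$. For any $q \in \mathcal{D}$, follow the ingoing null ray $\{v = v(q)\}$ to the future: since $r_{,u} < 0$ on the black hole exterior (a property preserved by (\ref{eom1}) and established in \cite{DafHol}), $r$ is strictly decreasing along this ray, which terminates either on $\mathcal{H}^+$ or at $i^+$, and in either case the limiting value of $r$ is $\geq r_-$; hence $r(q) \geq r_-$. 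Thus $r \geq r_-$ throughout $\mathcal{D}$, in particular on $\{r \leq r_K\} \cap \mathcal{D}$.

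Next I would pin down the choice of $r_K$. By Corollary \ref{hozfluct} together with (\ref{inisma}) one has $r_+ - r_- \leq \frac{4 m_{max}}{\sqrt{m_{min}}}\,\epsilon(\delta) =: \omega(\delta)$, and $\omega(\delta) \to 0$ as $\delta \to 0$ because $m_{max}$ and $m_{min}$ are comparable for small perturbations. Given $\eta > 0$, fix $\delta$ small enough that $\omega(\delta) < \eta$ (and small enough that the further smallness hypotheses used in the paper hold). Then the interval $(r_+,\, r_- + \eta]$ is non-empty, and I take $r_K$ to be any number in it. Because $r_K > r_+ = \sup_{\mathcal{H}^+ \cap \mathcal{D}} r$, the level set $\{r = r_K\}$ is disjoint from $\mathcal{H}^+$, so it is a curve lying completely in $\mathcal{D}$ (running from its intersection with $\tilde{S}$ out to $i^+$); for $\delta$ small one moreover gets $r_K^2 > 2 m_{max}$, hence $1 - \mu = 1 - 2m/r^2 \geq \tilde{c} > 0$ on $\{r \geq r_K\}$, which is the property used when the curve $r = r_K$ appears elsewhere in the paper.

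The corollary then follows immediately: in $\{r \leq r_K\} \cap \mathcal{D}$ we have $r_- \leq r \leq r_K$, so $r_K - r \leq r_K - r_- \leq \eta$ by the choice of $r_K$. The only substantive ingredients are the monotonicity $r_{,u} < 0$ and the mass-fluctuation bound (\ref{inisma}), both imported from \cite{DafHol}; the rest is bookkeeping. The one point that needs a little care --- and the reason the statement requires $\delta$ small rather than holding for a fixed $r_K$ --- is that $r_K$ must be chosen simultaneously strictly above $r_+$, so that $\{r = r_K\}$ avoids the horizon, and within $\eta$ of $r_-$; this is possible precisely because, for small data, the horizon is nearly round, i.e.\ $r_+ - r_-$ is small.
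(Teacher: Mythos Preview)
Your argument is correct and is exactly the natural proof the paper has in mind: it states the corollary immediately after Corollary~\ref{hozfluct} without a separate proof, treating it as a direct consequence of the smallness of $r_+-r_-$ together with the monotonicity $r_{,u}<0$, which is precisely what you unpack. One small remark: the aside that the ingoing null ray might terminate at $i^+$ is unnecessary (constant-$v$ rays in the exterior approach $\mathcal{H}^+$), but this does not affect the bound $r\geq r_-$.
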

For the estimates in this section only we will explicitly couple the location of 
the $r=r_K$ curve to the smallness of the initial data. 
In particular we define the curve $r_K$ by 
\begin{equation} \label{etadef}
1-\frac{2m_{max}}{\left(r_{K}\right)^2} = \epsilon^\frac{1}{3}
\end{equation}
It follows easily that the maximum $r$ difference in the region $r \leq r_K$ satisfies
\begin{equation} \label{Rdiff}
\Delta r \leq r_K - r_{-} \leq 3\frac{m_{max}}{2\sqrt{2m_{min}}} \epsilon^\frac{1}{3} \, .
\end{equation}
%
%
\begin{proposition} \label{simpsmall}
In any coordinate system $\mathcal{C}_{\tilde{\tau}}$ and with the assumptions 
of Theorem \ref{asymptoticstab} on the initial data we have
\begin{equation} \label{wantb}
 r |B| + \sqrt{r} |\theta| + M^{\frac{1}{4}}\Big|\frac{\zeta}{\nu}\Big| \leq \sqrt{M} \cdot C\left(\delta\right) 
\end{equation}
everywhere in $\mathcal{D}$. Moreover the
coordinate function $\kappa$ satisfies
\begin{equation}
\Big|\kappa - \frac{1}{2}\Big| \leq C\left(\delta\right)
\end{equation} 
everywhere in $\mathcal{D}$. Here the constant 
$C\left(\delta\right)$ can be made arbitrarily small 
by choosing the $\delta$ of the initial data sufficiently small.
\end{proposition}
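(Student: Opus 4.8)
\section*{Proof proposal}

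The plan is to run a continuity argument that propagates the pointwise smallness of $B$, of $\theta=r^{3/2}B_{,v}$ and of $\zeta/\nu = r^{3/2}B_{,u}/\nu$ from the slice $\tilde{S}_{r_K}$ across all of $\mathcal{D}$, using the Hawking--mass monotonicity, the evolution equations (\ref{eom1})--(\ref{Bevol}), and -- near $\mathcal{H}^+$ -- the redshift. The normalisation of $B_{,u}$ by $\nu$ is forced by the degeneration of $1-\mu$ at the horizon. One fixes a bootstrap bound, say $r|B| + \sqrt r|\theta| + M^{1/4}|\zeta/\nu| \le \sqrt M\,\delta^{1/2}$ on a sub-domain of $\mathcal{D}$, and improves it to $\le \sqrt M\,C(\delta)$ with $C(\delta)\to 0$; openness is a soft continuity statement and all the work is in the improvement, which is carried out separately in $\{r\ge r_K\}$ and $\{r\le r_K\}$. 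Throughout, the choice (\ref{etadef}) coupling $r_K$ to $\delta$ is used, so that $1-\mu\ge\epsilon^{1/3}$ in the outer region while the shell $\{r\le r_K\}$ has $r$-width $\lesssim\epsilon^{1/3}$ by (\ref{Rdiff}).

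I would begin by extracting energy bounds from the Hawking mass. Since by (\ref{inisma}) the total mass fluctuation on $\mathcal{D}$ is at most $m_{max}\,\epsilon(\delta)$, integrating (\ref{dum}) along ingoing and (\ref{vum}) along outgoing null rays -- all four terms having a definite sign, the zeroth-order pieces being $\sim r|\nu|B^2$ and $\sim r\lambda B^2$ -- gives weighted $L^2$ flux bounds of size $\sqrt{M\epsilon}$ on $B_{,u}$, $B_{,v}$ and on $B$. In $\{r\ge r_K\}$ these feed into (\ref{kapevol})--(\ref{gammaevol}) to keep $\kappa,\gamma$ within $C(\epsilon)$ of $\tfrac12$; and a Cauchy--Schwarz estimate of $\int B_{,u}\,du$ along ingoing rays, using the $r^{-3}$ weight coming from the energy density together with the compact support of $B$ (so that rays which reach large $r$ start where $B\equiv 0$), yields the pointwise bound $r|B|\le\sqrt M\,C(\delta)$. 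Rewriting (\ref{Bevol}) with the substitutions $\theta=r^{3/2}B_{,v}$, $\zeta=r^{3/2}B_{,u}$ produces the transport equations $\partial_u\theta = -\tfrac32\tfrac{\lambda}{r}\zeta + \tfrac{\Omega^2}{3\sqrt r}(e^{-8B}-e^{-2B})$ and $\partial_v\zeta = -\tfrac32\tfrac{\nu}{r}\theta + \tfrac{\Omega^2}{3\sqrt r}(e^{-8B}-e^{-2B})$; integrating this coupled pair along characteristics -- with the already-controlled $B$ supplying the forcing, Gr\"onwall absorbing the coupling, and $r$-weighted norms used in the asymptotic region to keep the coefficient integrals finite -- closes the bounds for $\theta$ and $\zeta/\nu$ in $\{r\ge r_K\}$.

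Near the horizon the crux is the degeneration $1-\mu\to 0$, and it is handled through a sign hidden in the Raychaudhuri system. Differentiating $\zeta/\nu$ and using (\ref{revol}) for $\nu_{,v}$ gives
\[
\partial_v\!\Big(\frac{\zeta}{\nu}\Big) \;=\; -\,4\kappa\Big[\frac{m}{r^3}+\frac{\rho-\tfrac32}{3r}\Big]\frac{\zeta}{\nu}\;-\;\frac{3\theta}{2r}\;-\;\frac{4\kappa}{3\sqrt r}\big(e^{-8B}-e^{-2B}\big),
\]
and since $m>0$, $\kappa>0$, and $1-\mu$ is bounded away from $1$ while $B$ is small in $\{r\le r_K\}$, the coefficient of $\zeta/\nu$ is strictly negative there: this is the redshift, a damping of size $c\sim\kappa\,m/r^3$ that survives up to $\mathcal{H}^+$. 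Integrating in $v$ from the ray $v=v(P)$, where $|\zeta/\nu|$ is small by (\ref{initassump2}), dissipativity bounds $|\zeta/\nu|$ on any later $v$-slice by this data plus $c^{-1}\sup(|\theta|/r+\kappa|B|/\sqrt r)$, both small. The remaining $\theta$ and $B$ in $\{r\le r_K\}$ are obtained by integrating $\partial_u\theta$ and $B_{,u}=\zeta/r^{3/2}$ inward from $r=r_K$ (already controlled by the previous step), each such integral gaining the small factor $r_K-r_-\lesssim\epsilon^{1/3}$ after changing variables to $r$; likewise $\kappa$ is kept near $\tfrac12$ via (\ref{kapevol}), whose right-hand side integrates over the thin shell to $\lesssim\epsilon^{1/3}\sup(\zeta/\nu)^2$, and $\gamma$ similarly via (\ref{gammaevol}) from the coordinate normalisation. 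Because $\kappa$-control here feeds off the bound on $\zeta/\nu$ and conversely, these estimates cannot be decoupled a priori -- hence the continuity argument -- but the improvement closes since $\theta$, $B$ re-enter $\zeta/\nu$, and $\zeta/\nu$ re-enters $\kappa$, only through the small factor $\epsilon^{1/3}$ together with small data.

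The main obstacle I expect is the near-horizon analysis: pinning down that the redshift coefficient above has the right sign throughout $\{r\le r_K\}$ (not merely asymptotically on $\mathcal{H}^+$), controlling $\kappa$ and $\gamma$ despite $1-\mu\to 0$, and verifying that the various $r$-integrals over the shell $[r_-,r_K]$ really contribute the advertised power of $\epsilon$ -- all while keeping the argument independent of the possibly infinite $u$-extent of the horizon in these coordinates. A secondary difficulty is the asymptotic region $r\to\infty$, where bare sup-norms do not close the transport estimates for $\theta$ and $\zeta$ and one must instead track $r$-weighted quantities matched to the expected $r^{-3/2}$-type behaviour of $B$ and the boundedness of $\zeta$.
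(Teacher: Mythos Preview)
Your proposal is correct and follows the paper's strategy closely: Hawking--mass energy fluxes, the redshift transport equation for $\zeta/\nu$ near the horizon, the thin shell $r_K-r_-\lesssim\epsilon^{1/3}$, and a bootstrap to close. The one tactical difference worth noting concerns the outer region $\{r\ge r_K\}$: rather than treating the $(\theta,\zeta)$ transport system as coupled and invoking Gr\"onwall with $r$-weights, the paper decouples the estimates by using Cauchy--Schwarz directly against the small Hawking--mass fluxes. For instance, in the $\partial_u\theta$ equation the cross-term is handled as
\[
\int \tfrac{\lambda|\zeta|}{r}\,du \;\le\; \Big(\int \tfrac{\zeta^2(1-\mu)}{-\nu}\,du\Big)^{1/2}\Big(\int \tfrac{\kappa\lambda(-\nu)}{r^2}\,du\Big)^{1/2}\;\lesssim\;\sqrt{\epsilon\,m_{max}}\cdot r^{-1/2},
\]
and similarly for the $B$-forcing via $\int r(-\nu)(1-\tfrac23\rho)\,du\le \Delta m$. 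This gives smallness and the correct $r$-decay of $\sqrt r\,\theta$ in one stroke, directly from $\epsilon$ rather than from the bootstrap hypothesis, and so sidesteps the very difficulty you flag in your final paragraph. The paper then sequences the proofs --- $B$ first, then a two-pass redshift (first bounding, then after $B$ is small, making $\zeta/\nu$ small) in $\{r\le r_K\}$, then $\kappa$, then $\theta$ everywhere, and only at the end $\zeta/\nu$ in $\{r\ge r_K\}$ by integrating \eqref{zetanuueq} outward from $r_K$ --- so that at each stage the source terms are already controlled and no coupled Gr\"onwall is needed.
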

Proposition \ref{simpsmall} will immediately follow from Propositions
\ref{Bprop}-\ref{zetnuprop} (plus their associated Corollaries) 
proven in the remainder of the section, each of them establishing 
the bounds in different regions of the black hole exterior. 
Note that the radial decay of $B$ promised by Proposition 
\ref{simpsmall} is weaker than that of Theorem \ref{asymptoticstab}.
%
%
%
%
\begin{proposition} \label{Bprop}
In the region $\mathcal{D}$ we have
\begin{equation} \label{Bbound}
|B \left(u,v\right)| \leq C_1\left(\epsilon\left(\delta\right),\delta\right) \,.
\end{equation}
where $C_1\left(\epsilon\left(\delta\right), \delta \right)$ can 
be made arbitrarily small by choosing the $\delta$ in
(\ref{initassump}) 
small enough.
\end{proposition}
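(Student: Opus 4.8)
\emph{Strategy.} The idea is to integrate the wave equation (\ref{Bevol}) away from the initial slice $\tilde{S}_{r_K}$, using as essential input only the a~priori energy estimates provided by the monotonicity (\ref{dum})--(\ref{vum}) of the Hawking mass together with the mass-fluctuation bound (\ref{inisma}). Integrating (\ref{dum}) along a line of constant $v$, and using $\lambda\Omega^{-2} = (4\gamma)^{-1}$ and the favourable sign $r\nu(1-\tfrac23\rho)\le 0$, yields $\int r^3(B_{,u})^2\gamma^{-1}\,du \le m_{max}-m_{min} \le m_{max}\epsilon$ on any such segment; symmetrically, (\ref{vum}) gives $\int r^3(B_{,v})^2\kappa^{-1}\,dv \le m_{max}\epsilon$ along lines of constant $u$. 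Because the inhomogeneity of (\ref{Bevol}) satisfies $|e^{-8B}-e^{-2B}|\le C|B|$ and $0\le 1-\tfrac23\rho\le CB^2$ whenever $|B|$ is bounded, the whole argument runs as a continuity argument: letting $\mathcal{D}_1\subseteq\mathcal{D}$ be the connected component containing $\tilde{S}_{r_K}$ of the set $\{|B|<1\}$, one proves $|B|\le C_1(\epsilon,\delta)$ on $\mathcal{D}_1$ with $C_1\to 0$ as $\delta\to 0$, which for small $\delta$ is $<1$, so $\mathcal{D}_1$ has no boundary in $\mathcal{D}$ and equals $\mathcal{D}$. The section's geometric facts are used freely: by (\ref{etadef}) and $m\le m_{max}$ one has $1-\mu\ge\epsilon^{1/3}$ on $\{r\ge r_K\}$, and $r_K-r_-\le C\epsilon^{1/3}$ by (\ref{Rdiff}).

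\emph{The region $\{r\ge r_K\}$.} Fix $(u,v)$ with $r\ge r_K$ and follow the past null ray $\{v=\mathrm{const}\}$ back to its intersection with the $\nabla r$-part of $\tilde{S}_{r_K}$; along it $r$ is non-decreasing, hence the ray stays in $\{r\ge r_K\}$. By the fundamental theorem of calculus and Cauchy--Schwarz,
\[
|B(u,v)| \;\le\; |B|_{\tilde{S}} \;+\; \Big(\int r^3(B_{,u})^2\gamma^{-1}\,du\Big)^{1/2}\Big(\int \tfrac{\gamma}{r^3}\,du\Big)^{1/2}.
\]
Substituting $du=dr/\nu$ turns the weight integral into $\int\tfrac{dr}{r^3(1-\mu)} \le \epsilon^{-1/3}\int_{r_K}^{\infty}r^{-3}\,dr = (2\,\epsilon^{1/3}r_K^2)^{-1}$, so the flux term is $\le (m_{max}\epsilon^{2/3}/2r_K^2)^{1/2}$. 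Combined with $|B|_{\tilde{S}}\le \delta M^{3/4}r_K^{-3/2}$ from (\ref{initassump}), this gives $|B|\le C_1(\epsilon,\delta)$ on $\{r\ge r_K\}$, as small as desired.

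\emph{The region $\{r\le r_K\}$.} Here the naive weight $\int\tfrac{dr}{r^3(1-\mu)}$ diverges as $\mathcal{H}^+$ is approached, so a redshift argument replaces Cauchy--Schwarz. One first bounds $\zeta/\nu = r^{3/2}B_{,u}/r_{,u}$ pointwise: from (\ref{Bevol}) and (\ref{revol}) this quantity obeys a transport equation in $v$ of the form $\partial_v(\zeta/\nu) = -c\,(\zeta/\nu) + S$ with $c = 4\kappa\big(\tfrac{m}{r^3}+\tfrac{\rho-3/2}{3r}\big)$ and $S = -\tfrac{3\theta}{2r} - \tfrac{4\kappa}{3\sqrt r}(e^{-8B}-e^{-2B})$; once $r_K$ is pinned close enough to $r_-=\sqrt{2M_f}$, the term $m/r^3$ is bounded below on $\{r\le r_K\}$ and the $B$-corrections are small, so $c\ge c_0>0$ there. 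Integrating with the integrating factor $e^{\int c}$ from the ingoing initial ray $\{v=v(P)\}\cap\{r\le r_K\}$, where $|\zeta/\nu|\le\delta M^{1/4}$ by (\ref{initassump2}), the exponential damping converts the pointwise size of the source $S$ --- controlled by the continuity bound on $|B|$ and by a pointwise bound on $\theta$ obtained in turn from the transport equation $\partial_u\theta = -\tfrac32\tfrac{\lambda}{r}\zeta + \tfrac{\Omega^2}{3\sqrt r}(e^{-8B}-e^{-2B})$ together with the smallness of $\lambda$ and $\Omega^2$ near $\mathcal{H}^+$ --- into $|\zeta/\nu|\le C(\epsilon,\delta)$ throughout $\{r\le r_K\}$. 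Given $q$ with $r(q)\le r_K$, following $\{v=\mathrm{const}\}$ back to $\{r=r_K\}$ and integrating gives $|B(q)| \le |B|_{r=r_K} + \int|\nu|\,r^{-3/2}\,|\zeta/\nu|\,du \le |B|_{r=r_K} + r_-^{-3/2}\big(\sup|\zeta/\nu|\big)(r_K-r_-)$, and the factor $r_K-r_-\le C\epsilon^{1/3}$ supplies the remaining smallness.

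\emph{Main obstacle.} The delicate part is entirely the near-horizon region: isolating the good redshift sign in the equation for $r^{3/2}B_{,u}/r_{,u}$, verifying that its damping coefficient is uniformly positive on $\{r\le r_K\}$ once $r_K$ is close to $\mathcal{H}^+$, and closing the pointwise bounds on $|B|$, on $\theta$, and on $\zeta/\nu$ in that region simultaneously within the continuity argument. It is precisely the inhomogeneity of (\ref{Bevol}) and the degeneration $1-\mu\to 0$ at $\mathcal{H}^+$ that prevent the soft Cauchy--Schwarz estimate used for $r\ge r_K$ from working near the horizon. All remaining ingredients --- Cauchy--Schwarz, the Gronwall/integrating-factor steps, and the final continuity argument --- are routine.
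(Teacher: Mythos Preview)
Your strategy matches the paper's: Cauchy--Schwarz plus the mass-fluctuation energy bound in $\{r\ge r_K\}$, and a redshift estimate for $\zeta/\nu$ followed by integration across the thin strip $\{r\le r_K\}$, with smallness coming from $r_K-r_-\le C\epsilon^{1/3}$. The exterior argument is fine and essentially identical to the paper's.

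There is, however, a genuine gap in your near-horizon step. You propose to control the source $S$ in the $\partial_v(\zeta/\nu)$ equation via a \emph{pointwise} bound on $\theta$, obtained from the $\partial_u\theta$ equation ``together with the smallness of $\lambda$ and $\Omega^2$ near $\mathcal{H}^+$''. But $\lambda=\kappa(1-\mu)$ and $\Omega^2=-4\nu\kappa$: their smallness in $\{r\le r_K\}$ requires $\kappa$ bounded, and at this point in the argument $\kappa$ is not yet under control---the bound $|\kappa-\tfrac12|\le C(\epsilon,\delta)$ is proved \emph{after} Proposition~\ref{Bprop}, and its proof in the region $\{r\le r_K\}$ uses precisely the pointwise control on $\zeta/\nu$ that you are trying to establish. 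So as written the argument is circular.

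The paper sidesteps this by never asking for a pointwise $\theta$-bound inside the redshift estimate. Instead, in the integrated form of (\ref{zetanuueq}) it applies Cauchy--Schwarz to the $\theta/r$ source,
\[
\int_{v_i}^{v} e^{-\int_{\bar v}^{v}\frac{4\kappa m}{r^3}}\,\frac{|\theta|}{r}\,d\bar v
\;\le\;\Big(\int_{v_i}^{v} e^{-\int_{\bar v}^{v}\frac{4\kappa m}{r^3}}\,\kappa\,d\bar v\Big)^{1/2}
\Big(\int_{v_i}^{v}\frac{\theta^2}{\kappa r^2}\,d\bar v\Big)^{1/2}.
\]
The second factor is $\le r_-^{-1}(m_{max}\epsilon)^{1/2}$ by the energy estimate; the first is bounded \emph{independently of the size of $\kappa$} via the substitution $w=\int_{\bar v}^{v}\tfrac{4\kappa m}{r^3}$, which turns it into $\int e^{-w}\tfrac{r^3}{4m}\,dw\le \tfrac{r_K^3}{4m_{min}}$. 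The same substitution absorbs the factor of $\kappa$ in the $B$-source term. Thus one obtains $|\zeta/\nu|\le \tilde C M^{1/4}$ merely bounded (not small) without ever touching $\kappa$, $\lambda$, or a pointwise $\theta$; the smallness of $|B|$ in $\{r\le r_K\}$ then comes entirely from the factor $r_K-r_-$, exactly as you say at the end. Replace your pointwise-$\theta$ step by this Cauchy--Schwarz/substitution step and your argument closes.
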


\begin{proof}
We integrate from the initial data to any point in the region $r \geq
r_K$ and estimate as follows
\begin{eqnarray} \label{eqB2}
| B\left(u,v\right) | &\leq&  \delta \left(\frac{\sqrt{M}}{r}\right)^\frac{3}{2}
\left(u_{data},v\right) 
+ \int_{u_{data}}^{u}
  \frac{\zeta}{r^\frac{3}{2}} du \nonumber \\ &\leq&  \delta \left(\frac{\sqrt{M}}{r}\right)^\frac{3}{2}
   + \sqrt{\int_{u_{data}}^{u}
  \frac{\zeta^2 \left(1-\mu \right)}{-\nu} du} \sqrt{\int_{u_{data}}^{u}
  \frac{-\nu}{\left(1-\mu\right)r^3} du} \nonumber \\ 
&\leq&  \delta \left(\frac{\sqrt{M}}{r}\right)^\frac{3}{2}  + \frac{1}{\sqrt{2}} 
\sqrt{m_{max} - m_{min}} \sup_{r \geq r_K}
  \left(\frac{1}{\sqrt{1-\mu}}\right) \frac{1}{r} \nonumber \\ &\leq&
  \delta \left(\frac{\sqrt{M}}{r}\right)^\frac{3}{2} +
  \epsilon^\frac{1}{3}\sqrt{\frac{m_{max}}{r^2}} \, ,
\end{eqnarray}
which proves (\ref{Bbound}) in that region. \\
Next we turn to the region $\mathcal{D} \cap \{ r \leq r_K \}$.
We choose a constant $C > 2M^\frac{3}{4} {r_-^{-\frac{3}{2}}}\delta$ such that $|B| \leq C$ still implies that
\begin{equation}
\frac{3}{2} - \rho = \frac{3}{2} -
\left(2e^{-2B}-\frac{1}{2}e^{-8B}\right) \leq \frac{3}{2} \frac{m}{r^2}
\end{equation}
in $\mathcal{D} \cap \{ r \leq r_K \} $. For $r_K$ sufficiently close
to the horizon it is easily seen that $C=\frac{1}{10}$ is good
enough. Define the region
\begin{equation}
\mathcal{R} = \Bigg\{\left(u,v\right) \in \mathcal{D} \cap \{ r \leq r_K
\}  \ \ : \ \ |B\left(\bar{u},\bar{v}\right) | < C \textrm{\ \ for all \ }
  \left(\bar{u},\bar{v}\right) \in J^{-}\left(u,v\right)\Bigg\}
\end{equation}
which is clearly open and non-empty. We are going to apply a
bootstrap argument: Pick a point in the closure of
$\mathcal{R}$, where $B \leq C$ by continuity. We are
going to improve this bound by showing that in the causal past of that point
$B$ is in fact smaller than $\frac{C}{2}$. By continuity it follows 
that the set $\mathcal{R}$ is also closed, hence must constitute 
the entirety of 
$\mathcal{D} \cap \{ r \leq r_K \}$. The argument proceeds in 
two steps. First we make use of the redshift
estimate integrating the equation
\begin{equation} \label{zetanuueq}
\left(\frac{\zeta}{\nu}\right)_{,v} = -\frac{3}{2} \frac{\theta}{r} -
\frac{4}{3} \frac{\kappa}{\sqrt{r}} \left(e^{-8B}-e^{-2B}\right) -
\frac{\zeta}{\nu} \left(\frac{4\kappa}{r^3} m +
\frac{4\kappa}{3r}\left(\rho-\frac{3}{2}\right) \right)
\end{equation}
from the initial data yielding
\begin{eqnarray} \label{rseq}
\frac{\zeta}{\nu} \left(u,v\right) &=& \frac{\zeta}{\nu}\left(u,v_i\right) e^{-\int_{v_i}^v \left[\frac{4\kappa}{r^3} m +
\frac{4\kappa}{3r}\left(\rho-\frac{3}{2}\right) \right]
  \left(u,\bar{v}\right) d\bar{v}}  \\ &+& \int_{v_i}^v e^{-\int_{\bar{v}}^v \left[\frac{4\kappa}{r^3} m +
\frac{4\kappa}{3r}\left(\rho-\frac{3}{2}\right) \right]
  \left(u,\hat{v}\right) d\hat{v}} \left[-\frac{3}{2} \frac{\theta}{r} -
\frac{4}{3} \frac{\kappa}{\sqrt{r}} \left(e^{-8B}-e^{-2B}\right)
\right]\left(u,\bar{v}\right) d\bar{v} \nonumber
\end{eqnarray}
and hence
\begin{eqnarray} \label{redshift}
\Big| \frac{\zeta}{\nu} \left(u,v\right)\Big| &\leq&  \delta \frac{\left(\sqrt{M}\right)^\frac{3}{2}}{r_-} + \frac{3}{2} \sqrt{\int_{v_i}^v e^{-\int_{\bar{v}}^v
    \left[\frac{4\kappa}{r^3} m\right] \left(u,\hat{v}\right)
    d\hat{v}} \kappa\left(u,\bar{v}\right) d\bar{v}}\sqrt{\int_{v_i}^v
  \frac{\theta^2}{\kappa r^2} \left(u,\bar{v}\right)d\bar{v}}
\nonumber \\
&\phantom{XX}+& \frac{4}{3} \sup_{r \leq r_K} \left[|e^{-8B}-e^{-2B}|
  \frac{1}{\sqrt{r}}
\right] \int_{v_i}^v e^{-\int_{\bar{v}}^v
    \left[\frac{2\kappa}{r^3} m\right] \left(u,\hat{v}\right)
    d\hat{v}} \kappa \left(u,\bar{v}\right) d\bar{v} \nonumber \\
&\leq& \delta \frac{\left(\sqrt{M}\right)^\frac{3}{2}}{r_-} + 
\frac{3}{2} \sqrt{\epsilon} \sqrt{\frac{m_{max}
    \left(r_K\right)^3}{4m_{min} r_{-}^2}} + \frac{2}{3}
    \frac{\left(r_K\right)^3}{\sqrt{r_-} m_{min}}
     \left(e^{8C}+e^{2C}\right) \equiv \tilde{C} M^\frac{1}{4}
     \, . \nonumber \\
\end{eqnarray}
It follows that $|\frac{\zeta}{\nu}|$ is bounded (but note that the
last term might not be small) in that region. In the second 
step we integrate from the $r=r_K$
curve, on which $B$ is small by (\ref{eqB2}), or the initial data 
to obtain
\begin{equation}
B\left(u,v\right) - B\left(u_R,v\right) = \int_{u_{r_K}}^u
\frac{\zeta}{r^\frac{3}{2}} \left(\bar{u},v\right) d\bar{u} 
\end{equation}
and use the previous bound (\ref{eqB2})
\begin{eqnarray}
|B\left(u,v\right)| &\leq& \delta\frac{\sqrt{M}^\frac{3}{2}}{r^\frac{3}{2}} +
  \epsilon^\frac{1}{3} \sqrt{\frac{m_{max}}{r^2}} + \tilde{C} M^\frac{1}{4}  
  \int_{u_{r_K}}^u \frac{-\nu}{r^\frac{3}{2}} \nonumber \\ &\leq& 
 \delta\frac{\sqrt{M}^\frac{3}{2}}{r^\frac{3}{2}} +\epsilon^\frac{1}{3}
  \sqrt{\frac{m_{max}}{r^2}} +
  \frac{\tilde{C}}{2} \frac{M^\frac{1}{4}}{\left(r_-\right)^\frac{5}{2}}
  \left(\left(r_K\right)^2 - \left(r_-\right)^2 \right) \, . \nonumber 
\end{eqnarray}
Now because the $r$ difference is given by (\ref{Rdiff}) 
in the region under consideration, we have indeed shown that 
$B$ is smaller than $\frac{C}{2}$ in $\mathcal{R}$ for an 
appropriate choice of $\epsilon$. By continuity the 
set $\mathcal{R}$ is also closed. Hence 
$\mathcal{R} = \mathcal{D} \cap \{r \leq r_K \}$.
\end{proof}

\begin{corollary} \label{Bcor}
In $r \geq r_K$ we have that
\begin{equation}
|B \left(u,v\right)| \leq \sqrt{M} \frac{C_2\left(\epsilon,\delta\right)}{r} \,.
\end{equation}
\end{corollary}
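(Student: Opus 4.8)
The plan is to reread the bound on $B$ that was already established in the region $\{r \geq r_K\}$ during the proof of Proposition \ref{Bprop} and re-express it in the form asserted here. Recall that the chain of inequalities (\ref{eqB2}) produced, for any point with $r \geq r_K$,
\[
|B(u,v)| \leq \delta\left(\frac{\sqrt{M}}{r}\right)^{\frac{3}{2}} + \epsilon^{\frac{1}{3}}\sqrt{\frac{m_{max}}{r^2}} \, ,
\]
the first term coming from the initial-data contribution via assumption (\ref{initassump}) and the second from the Cauchy--Schwarz estimate of $\int \zeta/r^{\frac{3}{2}}\,du$ using the monotonicity of the Hawking mass together with $1-\frac{m_{min}}{m_{max}}<\epsilon$. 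So the content of the corollary is merely that each of these two terms is bounded by a small multiple of $\sqrt{M}/r$ throughout $r \geq r_K$.

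For the first term I would write $\delta\left(\frac{\sqrt{M}}{r}\right)^{\frac{3}{2}} = \delta\,\frac{\sqrt{M}}{r}\left(\frac{\sqrt{M}}{r}\right)^{\frac{1}{2}}$ and invoke the defining relation (\ref{etadef}) of the curve $r=r_K$, namely $1-\frac{2m_{max}}{(r_K)^2}=\epsilon^{\frac{1}{3}}$, which forces $r_K \geq \sqrt{2m_{max}}$; combined with the fact that $m_{max} \sim M$ (both being close to the ADM mass of the reference solution in view of (\ref{inisma}) and the closeness of the data), this gives $\left(\frac{\sqrt{M}}{r}\right)^{\frac{1}{2}} \leq \left(\frac{\sqrt{M}}{r_K}\right)^{\frac{1}{2}} \leq C$ on all of $r \geq r_K$, so the first term is at most $C\delta\,\frac{\sqrt{M}}{r}$. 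For the second term, $\epsilon^{\frac{1}{3}}\sqrt{m_{max}/r^2} = \epsilon^{\frac{1}{3}}\frac{\sqrt{m_{max}}}{r} \leq C\epsilon^{\frac{1}{3}}\frac{\sqrt{M}}{r}$ by the same comparison $m_{max}\sim M$.

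Adding the two contributions yields $|B(u,v)| \leq \bigl(C\delta + C\epsilon^{\frac{1}{3}}\bigr)\frac{\sqrt{M}}{r}$ on $r \geq r_K$, so one takes $C_2(\epsilon,\delta) = C\bigl(\delta + \epsilon^{\frac{1}{3}}\bigr)$; since $\epsilon=\epsilon(\delta)\to 0$ as $\delta \to 0$ by (\ref{inisma}), this constant can be made as small as desired by shrinking $\delta$. There is essentially no obstacle: the corollary is a cosmetic repackaging of estimate (\ref{eqB2}), and the only point that needs a line of justification is the comparability of $r_K$, $m_{max}$ and $M$, which follows from (\ref{etadef}), (\ref{inisma}) and the assumption that the data is close to that of Schwarzschild--Tangherlini of mass $M$.
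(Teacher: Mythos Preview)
Your proof is correct and takes essentially the same approach as the paper: both simply invoke the estimate (\ref{eqB2}) established in the proof of Proposition~\ref{Bprop}. The paper's proof is the single sentence ``This is the statement of (\ref{eqB2})'', whereas you spell out the trivial algebraic step of absorbing the extra factor $(\sqrt{M}/r)^{1/2}$ using $r \geq r_K \sim \sqrt{M}$.
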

\begin{proof}
This is the statement of (\ref{eqB2}). 
\end{proof}
It is instructive to compare the $\frac{1}{r}$-decay of 
Corollary \ref{Bcor} with the analogous estimate derived 
for the massless scalar field in four dimensions \cite{DafRod}. 
In the latter case, one obtained by the above method 
$\frac{1}{\sqrt{r}}$-decay. There existed an 
almost Riemann invariant, i.e.~a 
certain combination of the field and its derivatives, however, 
admitting better decay properties than the 
field or its derivatives alone. Via this quantity, it was 
possible to improve the decay in $r$ of the field
itself to $\frac{1}{r}$, which was it turn sufficient to 
extract energy decay in $r$. In five dimensions there 
is no almost Riemann invariant and energy decay in $r$ 
will only be obtained from the application of the 
Morawetz vectorfield $K$ in the context of the 
bootstrap argument pursued later.
\begin{corollary} \label{zetnucor}
In the region $\mathcal{R} = \mathcal{D} \cap \{
r \leq r_K \}$ we have
\begin{equation} \label{zenusmall}
\Big| \frac{\zeta}{\nu} \Big| \leq  M^\frac{1}{4} C_3\left(\epsilon,\delta \right)   \, .
\end{equation}
\end{corollary}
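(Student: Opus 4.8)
The plan is to revisit the redshift estimate (\ref{redshift}) established in the proof of Proposition \ref{Bprop}. There, (\ref{redshift}) was obtained under the bootstrap hypothesis $|B| \leq C = \tfrac{1}{10}$, which entered only through the crude estimate $|e^{-8B}-e^{-2B}| \leq e^{8C}+e^{2C}$ for the nonlinear source term; this is precisely why the last summand in (\ref{redshift}) came out merely bounded rather than small. Now that Proposition \ref{Bprop} is available, I would rerun exactly the same computation, but with the genuinely small bound $|B| \leq C_1(\epsilon,\delta)$ in place of $|B| \leq \tfrac{1}{10}$.

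Concretely, one integrates the transport equation (\ref{zetanuueq}) in $v$ from the initial data (or from the ingoing boundary $v=v_i$ of the region, where $|\zeta/\nu|$ is controlled via (\ref{initassump2})) to obtain the representation formula (\ref{rseq}). The structural input is the redshift sign: since $m > 0$, $\kappa > 0$ and -- $|B|$ being small -- one has $\tfrac{3}{2}-\rho \leq \tfrac{3}{2}\tfrac{m}{r^2}$ exactly as in the proof of Proposition \ref{Bprop}, the damping coefficient obeys $\tfrac{4\kappa}{r^3}m + \tfrac{4\kappa}{3r}\left(\rho-\tfrac{3}{2}\right) \geq \tfrac{2\kappa m}{r^3} > 0$, so the exponential factors in (\ref{rseq}) are genuine decay factors. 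Three contributions remain. The initial-data term is $\leq \delta\, C(r_K)\, M^{1/4}$, small by assumption. The $\tfrac{\theta}{r}$ term is handled by Cauchy--Schwarz together with the energy bound $\int \tfrac{\theta^2}{\kappa}\,d\bar v \leq m_{max}-m_{min} \leq m_{max}\,\epsilon$ coming from the monotonicity (\ref{vum}), yielding $\leq \sqrt{\epsilon}\, C(r_K)\, M^{1/4}$. The nonlinear term, previously the obstruction, is now $\leq \tfrac{4}{3}\sup_{r\leq r_K}\tfrac{|e^{-8B}-e^{-2B}|}{\sqrt{r}}\int e^{-\int \frac{2\kappa m}{r^3}}\kappa\,d\bar v \leq C_1(\epsilon,\delta)\, C(r_K)\, M^{1/4}$, using $e^{-8B}-e^{-2B} = O(|B|)$. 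In each term the $r_K$-dependent prefactor stays bounded because in this section $r_K$ is coupled to $\epsilon$ through (\ref{etadef}), so $r_K$ and the horizon value $r_-$ remain comparable to $\sqrt{2m_{max}}$ and cannot degenerate. Summing the three pieces gives $|\zeta/\nu| \leq M^{1/4} C_3(\epsilon,\delta)$ with $C_3(\epsilon,\delta) \to 0$ as $\delta \to 0$, which is the claim.

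The only content beyond a literal rerun of (\ref{rseq})--(\ref{redshift}) is the observation that the smallness of the nonlinear source cannot be read off inside the bootstrap internal to the proof of Proposition \ref{Bprop} -- there it is only known to be bounded -- but becomes available a posteriori once the bound $|B| \leq C_1(\epsilon,\delta)$ has been established; this is where I expect the (very mild) difficulty to lie. The redshift structure and the boundedness of the $r_K$-dependent constants do the rest.
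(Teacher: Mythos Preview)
Your proof is correct and follows essentially the same approach as the paper: revisit the redshift estimate (\ref{redshift}) and replace the crude bound $|e^{-8B}-e^{-2B}|\leq e^{8C}+e^{2C}$ by the genuinely small bound coming from Proposition \ref{Bprop}, so that all three contributions in (\ref{redshift}) carry a smallness factor. The paper's own proof says exactly this in one sentence; you have simply spelled out the details.
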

\begin{proof}
This follows from revisiting the red-shift estimate (\ref{redshift})
above, this time improving the estimate for the 
$\left(e^{-8B}-e^{-2B}\right)$- term by Proposition \ref{Bprop}, 
which implies that $|e^{-8B}-e^{-2B}|$ is $\epsilon$-small. In this way
we obtain a smallness factor for all the terms involved in (\ref{redshift}).
\end{proof}

\begin{proposition}
In $\mathcal{D}$ we have
\begin{equation} \label{kapclo1}
\Big|\kappa - \frac{1}{2}\Big| \leq  C_4\left(\epsilon,\delta\right) \, .
\end{equation}
\end{proposition}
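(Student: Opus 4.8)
\emph{Approach.} The inequality (\ref{kapclo1}) measures how far the coordinate function $\kappa$ can drift from its normalized value $\tfrac12$. The plan is to exploit the transport equation (\ref{kapevol}), namely $\kappa_{,u}=\kappa\left(\tfrac{2}{r^2}\tfrac{\zeta^2}{\nu}\right)$, which since $\nu=r_{,u}<0$ shows that $\kappa$ is non-increasing along ingoing null rays, and upon integration gives
\[
\kappa(u,v)=\kappa(u_*,v)\,\exp\!\left(-\int_{u_*}^{u}\frac{2}{r^2}\,\frac{\zeta^2}{-\nu}\,d\bar u\right).
\]
Hence it suffices to start the integration on a curve carrying the normalization $\kappa=\tfrac12$ — the $\nabla r$ integral curve of the coordinate construction (on which $\kappa=\gamma=\tfrac12$ for $r\geq r_K$), the ray $\{u=u_B\}$ (on which $\kappa=\tfrac12$), or, since $B$ has compact support, the Schwarzschild region near $i^0$ in which $\zeta\equiv 0$ and therefore $\kappa\equiv\tfrac12$ — and to show that the non-negative integral $\int \tfrac{2}{r^2}\tfrac{\zeta^2}{-\nu}\,d\bar u$ along the relevant segment of the ingoing ray is $C(\epsilon,\delta)$-small; then $\tfrac12 e^{-C(\epsilon,\delta)}\leq\kappa\leq\tfrac12$ along that ray and (\ref{kapclo1}) follows. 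As in Proposition~\ref{Bprop} the argument is run separately in $\{r\geq r_K\}$ and in $\{r\leq r_K\}$, because the weight $1-\mu$ entering the estimate degenerates at the horizon.

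\emph{The region $r\geq r_K$.} Here the integrand is controlled by the Hawking mass flux. Writing $\tfrac{\zeta^2}{-\nu}=\tfrac{1}{1-\mu}\cdot\tfrac{\zeta^2(1-\mu)}{-\nu}$ and recalling from (\ref{dum}) that, after using $\gamma=\tfrac{-\nu}{1-\mu}$,
\[
\partial_u m = -\frac{\zeta^2(1-\mu)}{-\nu} + r\nu\Big(1-\tfrac23\rho\Big)\le -\frac{\zeta^2(1-\mu)}{-\nu}\le 0,
\]
the second term being non-positive by (\ref{scalcurv}), one obtains along any ingoing ray
\[
\int \frac{\zeta^2(1-\mu)}{-\nu}\,d\bar u \;\le\; m_{max}-m_{min}\;\le\;\epsilon\, m_{max}
\]
by (\ref{inisma}). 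Since $r\geq r_K$ gives $1-\mu\geq\tilde c>0$ — in fact $1-\mu\geq\epsilon^{1/3}$ for the curve fixed by (\ref{etadef}) — and $\tfrac{1}{r^2}\leq\tfrac{1}{r_K^2}$, the exponent is bounded by $\tfrac{2\epsilon\, m_{max}}{r_K^2\,\tilde c}$, which is $C(\epsilon)$-small. Integrating from the normalization curve (and noting that $r$ is monotone in $u$ along an ingoing ray, so the whole segment of integration remains in $\{r\geq r_K\}$) proves (\ref{kapclo1}) there, and in particular $\big|\kappa-\tfrac12\big|\leq C(\epsilon,\delta)$ on the curve $r=r_K$ itself.

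\emph{The region $r\leq r_K$.} Now $1-\mu$ is no longer bounded below, so the mass flux alone is insufficient; instead one feeds in the pointwise bound of Corollary~\ref{zetnucor}, $|\zeta/\nu|\leq M^{1/4}C_3(\epsilon,\delta)$, which is itself a redshift estimate. Using the identity $\tfrac{\zeta^2}{-\nu}=(-\nu)\big(\tfrac{\zeta}{\nu}\big)^2$, equation (\ref{kapevol}) gives
\[
\Big|\frac{\kappa_{,u}}{\kappa}\Big| = \frac{2}{r^2}(-\nu)\Big(\frac{\zeta}{\nu}\Big)^2 \le 2\sqrt M\, C_3^2\;\frac{-\nu}{r^2},
\]
so that, integrating in $u$ and using $\partial_u(1/r)=-\nu/r^2$,
\[
\int\Big|\frac{\kappa_{,u}}{\kappa}\Big|\,d\bar u \;\le\; 2\sqrt M\, C_3^2\Big(\frac1{r_-}-\frac1{r_K}\Big)\;\le\; C(\epsilon,\delta),
\]
since $r_-\sim\sqrt M$ on the horizon. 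Combining this with the bound on $\kappa$ along $\{r=r_K\}$ obtained in the previous step (integrating an ingoing ray from $\{r=r_K\}$, or from $\{u=u_B\}$, into $\{r\leq r_K\}$) yields $\big|\kappa-\tfrac12\big|\leq C_4(\epsilon,\delta)$ on all of $\mathcal D\cap\{r\leq r_K\}$, which completes the proof.

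\emph{Main obstacle.} The genuine difficulty is the degeneration $1-\mu\to 0$ at the horizon: the clean mass–flux bound breaks down there and must be replaced by the redshift-type estimate underlying Corollary~\ref{zetnucor}, which is precisely why this Proposition is placed after Proposition~\ref{Bprop} and its corollaries. A secondary, more tedious, point — common to all of Section~\ref{Basic} — is the bookkeeping in the $\tilde\tau$-dependent coordinate systems $\mathcal C_{\tilde\tau}$: for every point of $\mathcal D$ one must verify that the ingoing ray through it reaches a curve on which $\kappa=\tfrac12$ (resp.\ on which $\kappa$ has already been shown close to $\tfrac12$) before leaving the region $\{r\geq r_K\}$ or $\{r\leq r_K\}$ in which the corresponding flux bound is valid, splitting the ray at its crossing of $\{r=r_K\}$ when necessary, and using that $B$ compactly supported forces $\kappa\equiv\tfrac12$ in the Schwarzschild region near $i^0$.
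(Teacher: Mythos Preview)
Your approach is correct and essentially the same as the paper's: integrate the transport equation (\ref{kapevol}) from the normalization set, using the mass-flux bound $\int\frac{\zeta^2(1-\mu)}{-\nu}\,du\leq m_{max}-m_{min}$ in $\{r\geq r_K\}$ and the pointwise bound of Corollary~\ref{zetnucor} in $\{r\leq r_K\}$.

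There is one ordering slip worth flagging. You assert that the $\{r\geq r_K\}$ argument alone yields the bound on the whole curve $r=r_K$, and then feed this into the $\{r\leq r_K\}$ step. That is not quite right: for a point on $r=r_K$ with $v<v_B$ (i.e.\ to the past of $B$), the constant-$v$ ray does not meet the normalization curve $\{t=T\}\cap\{r\geq r_K\}$ (which lives at $v\geq v_B$) while staying in $\{r\geq r_K\}$. The paper therefore interposes an extra step: from the ray $\{u=u_B\}$ (where $\kappa=\tfrac12$) integrate \emph{backward} in $u$ through a short segment in $\{r\leq r_K\}$, using the $\zeta/\nu$ bound, to reach $r=r_K$; only then can one cover the rest of $\{r\geq r_K\}$ in the past of $t=T$ by the mass-flux argument starting from $r=r_K$. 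You have all the ingredients and you explicitly acknowledge the bookkeeping issue in your last paragraph, so this is a presentation wrinkle rather than a genuine gap. (Your invocation of ``the Schwarzschild region near $i^0$'' as an independent starting curve is also slightly off: $\kappa\equiv\tfrac12$ there is a consequence of the $\{t=T\}$ normalization plus $\zeta\equiv0$, not a separate piece of data, and in any case it does not help for $v<v_B$.)
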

\begin{proof}
Integrating (\ref{kapevol}) from the $t=T \cap \{ r \geq r_K \}$ 
surface to any point in the region $ \{ r \geq r_K \}$ yields
\begin{equation}
\kappa\left(u,v\right) = \kappa\left(u_T,v\right)\exp
\left(\int_{u_T}^u \frac{2}{r^2} \frac{\left(1-\mu\right) \zeta^2}
{\left(\nu\right)\left(1-\mu\right)} \right) \left(\bar{u},v\right) d\bar{u} \,.
\end{equation}
If the point under consideration lies to the future of the $t=T$
 hypersurface ($u \geq u_T$), the upper bound
$\kappa \leq \frac{1}{2}$ follows from monotonicity, 
whereas the lower bound is obtained via
\begin{eqnarray} \label{helbk}
|\kappa\left(u,v\right)| &\geq& \frac{1}{2}\exp\left(\sup_{r \geq r_K} \left(\frac{2}{r^2
 \left(1-\mu\right)}\right) \int_{u_T}^u
 \frac{\zeta^2\left(1-\mu\right)}{\nu}
 \left(\bar{u},v\right)d\bar{u}\right) \nonumber \\
&\geq& \frac{1}{2} \exp\left(\frac{-4m_{max} \epsilon^\frac{2}{3}}{r_K^2}\right) \, .
\end{eqnarray}
On the other hand, integrating (\ref{kapevol}) from 
the null line $u=T-r^\star\left(T,r_K\right)$ downwards the lower 
bound follows from monotonicity and the upper one 
by using (\ref{zenusmall}) 
\begin{equation} \label{heubk}
|\kappa\left(u,v\right)| \leq \frac{1}{2}
  \exp\left(\left(C_3\left(\epsilon,\delta\right)\right)^2\right) \exp
 \left(-\frac{2\sqrt{M}}{r\left(u,v\right)}+\frac{2\sqrt{M}}{r_K}\right) \leq \frac{1}{2} + \tilde{c}_4\left(\epsilon,\delta\right) \, .
\end{equation}
Since the $r$-difference in the region $r \leq r_K$ is $\epsilon^\frac{1}{3}$-small 
by (\ref{Rdiff}), we obtain the desired
upper bound for $\kappa$ in particular on all of $r=r_K$. 

Now any point located in the past of the $t=T$ hypersurface 
and satisfying $r \geq r_K$ can be reached by integrating 
(\ref{kapevol}) from either $t=T$ or from $r=r_K$ where 
the upper bound (\ref{heubk}) has already been established. 
The lower bound for $\kappa$ at such a point follows from
monotonicity, the upper one from
\begin{eqnarray}
|\kappa\left(u,v\right)| &\leq& \left(\frac{1}{2} + \tilde{c}_4\left(\epsilon,\delta\right)\right) \exp\left(\sup_{r \geq r_K} \left(\frac{2}{r^2
 \left(1-\mu\right)}\right) \int_{u_T}^u
 \frac{\zeta^2\left(1-\mu\right)}{-\nu}
 \left(\bar{u},v\right)d\bar{u}\right) \nonumber \\
&\leq& \left(\frac{1}{2} + \tilde{c}_4\left(\epsilon,\delta\right)\right)
 \exp\left(\frac{4m_{max}\epsilon^\frac{2}{3}}{r_K^2}\right) \, .
\end{eqnarray}
To extend the estimates to the entire region $r \leq r_K$ we integrate
(\ref{kapevol}) from the
$r=r_K$ curve (on which the lower bound (\ref{heubk}) and the upper bound (\ref{helbk}) has been established) to the horizon. Again the upper bound follows from monotonicity and for the lower one we write
 \begin{equation}
\kappa\left(u,v\right) = \kappa\left(u_0,v\right)\exp
\left(\int_{u_0}^u \frac{2\nu}{r^2} \frac{\zeta^2}
{\left(\nu\right)^2} \right) \left(\bar{u},v\right) d\bar{u} 
\end{equation}
and estimate, using (\ref{zenusmall})
\begin{equation}
|\kappa\left(u,v\right)| \geq \frac{1}{2}
 \exp\left(\frac{-4m_{max}\epsilon^\frac{2}{3}}{\left(r_K\right)^2}\right) \ \exp\left(\left(C_3\left(\epsilon,\delta\right)\right)^2\right) \exp
 \left(-\frac{2\sqrt{M}}{r\left(u,v\right)}+\frac{2\sqrt{M}}{r_K}\right) \, .
\end{equation}
Taking again (\ref{Rdiff}) into account, we obtain the
lower bound for $\kappa$ also in that region.
\end{proof}
With the bound on $\kappa$ established we also have good control over the 
quantity $\lambda=\kappa\left(1-\mu\right)$. In particular $\lambda <
1$ everywhere and $\lambda$ becomes very small (perhaps zero) at the
horizon. In particular, it follows that 
\begin{equation}
|\theta| < \frac{|\theta|}{\kappa\left(1-\mu\right)} = \frac{|\theta|}{\lambda}
\end{equation}
holds everywhere on $\tilde{S} \cap \{r\geq r_K\}$ and hence the
$\frac{\theta}{\lambda}$-part of the smallness condition
(\ref{initassump}) implies smallness for $\theta$ as well. With this
in mind we can prove
\begin{proposition}
In $\mathcal{D}$ we have
\begin{equation}
|\theta| \leq C_5\left(\epsilon,\delta\right) \sqrt{\frac{M}{r}}
\end{equation}
\end{proposition}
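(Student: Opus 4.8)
The plan is to turn the wave equation (\ref{Bevol}) into a transport equation for $\theta$ along the $u$-direction; the weight $r^{3/2}$ in $\theta = r^{\frac32}B_{,v}$ is designed precisely so that the dangerous term drops out. Using (\ref{Bevol}) and $\partial_u r = \nu$ one computes
\begin{equation} \label{thetaueq}
\theta_{,u} = \tfrac{3}{2} r^{\frac12}\nu B_{,v} + r^{\frac32} B_{,uv} = -\frac{3}{2}\frac{\lambda}{r}\zeta + \frac{\Omega^2}{3\sqrt{r}}\left(e^{-8B}-e^{-2B}\right)\, ,
\end{equation}
since the contribution $-\frac{3}{2}\nu r^{\frac12}B_{,v}$ coming from the first term on the right of (\ref{Bevol}) cancels the term $\frac{3}{2}r^{\frac12}\nu B_{,v}$ produced by differentiating $r^{3/2}$. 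Crucially, (\ref{thetaueq}) carries no linear $\theta$-term, so --- unlike in the proof of Proposition \ref{Bprop} --- no red-shift argument is needed, and it suffices to integrate (\ref{thetaueq}) along lines of constant $v$ from a region where $\theta$ is already controlled.

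In the region $r \geq r_K$ we integrate (\ref{thetaueq}) along constant $v$ from $\tilde{S}$, where (\ref{initassump}) together with the normalisation $\kappa = \gamma = \frac12$ gives $|\theta| \lesssim M^{3/4}\delta/r$. For the $\zeta$-term I apply Cauchy--Schwarz,
\begin{equation}
\int \Big|\frac{\lambda}{r}\zeta\Big|\, d\bar{u} \leq \left(\int \frac{(1-\mu)\zeta^2}{-\nu}\, d\bar{u}\right)^{\frac12}\left(\int \frac{\lambda\kappa\,(-\nu)}{r^2}\, d\bar{u}\right)^{\frac12}\, ,
\end{equation}
using $\frac{\lambda^2}{1-\mu} = \lambda\kappa$. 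The first factor is $\leq \sqrt{m_{max}-m_{min}} \leq \sqrt{\epsilon\, m_{max}}$ by the monotonicity identity (\ref{dum}), since $4r^3\frac{\lambda}{\Omega^2}(B_{,u})^2 = \frac{(1-\mu)\zeta^2}{-\nu}$ and $r\nu(1-\frac23\rho) \leq 0$; the second is $\leq \frac{1}{\sqrt{2r}}$ after using $\kappa \leq \frac12$, $\lambda < 1$ and $(-\nu)\,d\bar{u} = -dr$. For the inhomogeneity one uses $\Omega^2 = -4\nu\kappa \leq -2\nu$, the elementary bound $|e^{-8B}-e^{-2B}| \lesssim |B|$, and the decay $|B| \leq \sqrt{M}\,C_2(\epsilon,\delta)/r$ of Corollary \ref{Bcor}, whence $\int \frac{\Omega^2}{\sqrt{r}}|e^{-8B}-e^{-2B}|\, d\bar{u} \lesssim \sqrt{M}\,C_2\int \frac{-dr}{r^{3/2}} \lesssim \frac{\sqrt{M}}{\sqrt{r}}C_2$. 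Collecting terms gives $|\theta| \leq C_5(\epsilon,\delta)\sqrt{M/r}$ in $r \geq r_K$.

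In the region $r \leq r_K$ we integrate (\ref{thetaueq}) along constant $v$ from the curve $r = r_K$, where the bound just proved applies, towards the horizon. Here, instead of the mass flux, we use the pointwise estimates already in hand: $|\zeta/\nu| \leq M^{1/4}C_3(\epsilon,\delta)$ from Corollary \ref{zetnucor}, $|\kappa-\frac12|$ small, $\lambda < 1$, $|B| \leq C_1(\epsilon,\delta)$ from Proposition \ref{Bprop}, and, crucially, that the oscillation of $r$ in $r \leq r_K$ is $\epsilon^{1/3}$-small by (\ref{Rdiff}). Writing $\zeta = \frac{\zeta}{\nu}\nu$ and $(-\nu)\,d\bar{u} = -dr$, one finds $\int \frac{\lambda}{r}|\zeta|\,d\bar{u} \lesssim M^{1/4}C_3\log(r_K/r)$ and $\int \frac{\Omega^2}{\sqrt{r}}|B|\,d\bar{u} \lesssim C_1(\sqrt{r_K}-\sqrt{r})$, both controlled by this small $r$-interval; combined with $\sqrt{M/r} \gtrsim M^{1/4}$ for $r \leq r_K$ this yields the claim there, and the two regions together give Proposition.

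The main obstacle is the region $r \geq r_K$: $\zeta$ itself is not pointwise small there (only the combination $\zeta/\nu$ is, and only for $r \leq r_K$), so the pointwise bound on $\theta$ must be drawn out of the quadratic energy flux $\int \frac{(1-\mu)\zeta^2}{-\nu}\,d\bar{u} \leq m_{max}-m_{min}$, and the delicate point is to split the Cauchy--Schwarz so that the companion integral contributes exactly the weight $r^{-1/2}$ --- no worse --- matching the asserted decay. The degeneration of $\lambda$ and $\Omega^2$ near the horizon is harmless precisely because, in the absence of a damping term in (\ref{thetaueq}), it is compensated by the $\epsilon^{1/3}$-smallness of the $r$-interval in $r \leq r_K$.
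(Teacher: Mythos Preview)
Your argument is correct. The route differs from the paper's in two respects. First, the paper integrates (\ref{thetaueq}) in a single pass from $\tilde{S}\cap\{r\geq r_K\}$ to \emph{any} point of $\mathcal{D}$, without splitting at $r_K$. Second, and more substantively, the paper handles the inhomogeneity $\frac{\Omega^2}{3\sqrt r}(e^{-8B}-e^{-2B})$ not by the pointwise decay of $B$ but by a second Cauchy--Schwarz together with the elementary inequality $(e^{-8B}-e^{-2B})^2 \leq K\bigl(1-\tfrac23\rho\bigr)$; this feeds the inhomogeneous term straight back into the Hawking-mass flux, so that \emph{both} source terms are controlled by $m_{max}-m_{min}$ and the companion integral $\int(-\nu)r^{-2}\,d\bar u\leq r^{-1}$. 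No prior pointwise information on $B$ or $\zeta/\nu$ is invoked, and the region $r\leq r_K$ requires no separate treatment. Your approach instead recycles Corollary~\ref{Bcor} for the inhomogeneity in $r\geq r_K$, and Corollary~\ref{zetnucor} together with the $\epsilon^{1/3}$-small $r$-oscillation for both terms in $r\leq r_K$. Both arguments are valid; the paper's is more uniform (one tool throughout) and logically more economical, while yours shows that once the earlier pointwise bounds are in hand they can replace the energy identity on the potential term. A minor remark: the normalisation $\kappa=\gamma=\tfrac12$ holds only on the $t=T$ slice, not on $\tilde{S}$ in a general $\mathcal{C}_{\tilde\tau}$; what you actually need (and what the paper uses) is just $\lambda<1$, which follows from $|\kappa-\tfrac12|$ small.
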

\begin{proof}
We rewrite equation (\ref{Bevol}) as
\begin{equation} \label{dut}
\partial_u \theta = -\frac{3}{2} \frac{\lambda \zeta}{r} +
\frac{\Omega^2}{3\sqrt{r}} \left(e^{-8B}-e^{-2B}\right)
\end{equation}
and integrate it from $\tilde{S} \cap \{r \geq r_K\}$ to any point in $\mathcal{D}$. 
We note that for $|B|$ small we can find a constant $K$ such that
\begin{equation} \label{Khelp}
\left(e^{-8B} - e^{-2B}\right)^2 \leq K \left(1 - \frac{2}{3}\rho\right)
\end{equation}
holds. This constant approaches $\frac{9}{2}$ as $|B|$ goes to zero. We 
then estimate
\begin{eqnarray}
|\theta\left(u,v\right) | \leq 
M^\frac{3}{4} \frac{\delta}{r}\left(u_{data},v\right) 
+ \frac{3}{2}
 \sqrt{\int_{u_{data}}^{u} \frac{\zeta^2 \left(1-\mu\right)}{-\nu}
  d\bar{u}}\sqrt{\int_{u_{data}}^{u}\frac{\kappa
 \lambda \left(-\nu\right)}{r^2} \left(\bar{u},v\right) d\bar{u}}
 \nonumber \\ 
+ \sup\left(\frac{4}{3}\kappa\right) \sqrt{K} \sqrt{\int_{u_{data}}^{u} r \left(-\nu\right)\left(1-\frac{2}{3}\rho\right) d\bar{u}}\sqrt{\int_{u_{data}}^{u} \frac{-\nu}{r^2}\left(\bar{u},v\right)d\bar{u}}
\nonumber \\
\leq M^\frac{3}{4}\frac{\delta}{r} + \sqrt{\epsilon}
 \frac{\sqrt{m_{max}}}{\sqrt{r}} + 8\sqrt{\epsilon}
 \frac{\sqrt{M}}{\sqrt{r}} \, .
\end{eqnarray}
\end{proof}
Finally, we extend the bound on $\frac{\zeta}{\nu}$ to the region $r
\geq r_K$.
\begin{proposition} \label{zetnuprop} 
We have
\begin{equation} \label{zetnuevery}
\Big| \frac{\zeta}{\nu} \Big| \leq C_6\left(\epsilon,\delta\right)
\end{equation}
in all of $\mathcal{D}$.
\end{proposition}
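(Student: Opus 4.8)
We already have the bound $|\zeta/\nu| \le M^{1/4} C_3(\epsilon,\delta)$ on $\mathcal{R} = \mathcal{D} \cap \{r \le r_K\}$ from Corollary \ref{zetnucor}, so the only task is to extend it to the region $\{r \ge r_K\}$. As in the proof of Proposition \ref{Bprop}, the plan is to integrate the redshift transport equation (\ref{zetanuueq}) for $\zeta/\nu$, this time in $v$ along rays of constant $u$. The past endpoint of such a ray inside $\mathcal{D}$ lies either on the $\nabla r$-curve $\tilde{S} \cap \{r \ge r_K\}$ --- where $|\zeta/\nu| = r^{3/2}|B_{,u}/r_{,u}| \le \delta M^{3/4}/r$ directly from the initial assumption (\ref{initassump}) --- or, for rays closer to the horizon, on the timelike curve $r = r_K$, where $|\zeta/\nu|$ is bounded by Corollary \ref{zetnucor}. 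In either case the data at the past endpoint is bounded, and in fact small.

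The first point to settle is that the integrating factor in the Duhamel formula for (\ref{zetanuueq}) works in our favour. The coefficient multiplying $-\zeta/\nu$ can be written as $\frac{4\kappa}{3r^3}\big[\,3m - r^2(\frac{3}{2} - \rho)\,\big]$; since $\frac{3}{2} - \rho = 12 B^2 + O(B^3)$ by (\ref{scalcurv}) and (\ref{deltaB}), and since $|B| \le C_2(\epsilon,\delta)\sqrt{M}/r$ in $\{r \ge r_K\}$ by Corollary \ref{Bcor}, one has $r^2(\frac{3}{2} - \rho) \le C\,C_2^2\,M < m_{min} \le m$ once $\delta$ is chosen small. Hence this coefficient is non-negative throughout $\{r \ge r_K\}$, every exponential factor $\exp(-\int(\cdots))$ in the integrated form of (\ref{zetanuueq}) is at most $1$, and we obtain
\[
\Big|\frac{\zeta}{\nu}(u,v)\Big| \;\le\; \Big|\frac{\zeta}{\nu}\Big|_{\mathrm{past}} \;+\; \frac{3}{2}\int \frac{|\theta|}{r}\,d\bar v \;+\; \frac{4}{3}\int \frac{\kappa}{\sqrt{r}}\,\big|e^{-8B}-e^{-2B}\big|\,d\bar v .
\]

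It then remains to see that the two integrals are small, and for this the plan is to pair them, via Cauchy--Schwarz, against the positive terms in the Hawking mass monotonicity. From (\ref{vum}), $\partial_v m = \frac{1-\mu}{\lambda}\theta^2 + r\lambda\big(1-\frac{2}{3}\rho\big) \ge 0$, so along a constant-$u$ ray both $\int\frac{1-\mu}{\lambda}\theta^2\,d\bar v$ and $\int r\lambda\big(1-\frac{2}{3}\rho\big)\,d\bar v$ are at most $m_{max} - m_{min} \le m_{max}\epsilon$. This bounds $\int\frac{|\theta|}{r}\,d\bar v$ by $(m_{max}\epsilon)^{1/2}\big(\int\frac{\kappa}{r^2}\,d\bar v\big)^{1/2}$, while --- and this is the one slightly delicate point --- rather than estimating $|e^{-8B}-e^{-2B}|$ pointwise by a multiple of $|B|$, one should use (\ref{Khelp}) to replace $(e^{-8B}-e^{-2B})^2$ by $K\big(1-\frac{2}{3}\rho\big)$, which gives $\int\frac{\kappa}{\sqrt{r}}|e^{-8B}-e^{-2B}|\,d\bar v \le (m_{max}\epsilon)^{1/2}\big(K\int\frac{\kappa}{r^2(1-\mu)}\,d\bar v\big)^{1/2}$. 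Since $1-\mu \ge \epsilon^{1/3}$ in $\{r \ge r_K\}$ by the choice (\ref{etadef}), the substitution $d\bar v = dr/\lambda$ turns the remaining integrals into $\int\frac{dr}{r^2(1-\mu)} \le \epsilon^{-1/3}r_K^{-1}$ and $\int\frac{dr}{r^2(1-\mu)^2} \le \epsilon^{-2/3}r_K^{-1}$; using $r_K \sim \sqrt{m_{max}}$, both integral contributions are thus $\le M^{1/4}C(\epsilon,\delta)$ with $C(\epsilon,\delta) \to 0$, and combined with the bound on the data at the past endpoint this yields (\ref{zetnuevery}).

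The only real obstacle is the bookkeeping in this last step: the crude estimate $|e^{-8B}-e^{-2B}| \le C|B|$ would leave an uncompensated factor $\epsilon^{-1/3}$ (from the $1/(1-\mu)$ weight near $r_K$), whereas pairing the term against the ``good'' piece $r\lambda(1-\frac{2}{3}\rho)$ of $\partial_v m$ costs only $\epsilon^{-2/3}$ against the gain $\epsilon$ from the total mass fluctuation, and is therefore net small. That observation, together with verifying the positivity of the transport coefficient in the far region --- which is where the radial decay of $B$ from Corollary \ref{Bcor} is indispensable --- is the content of the argument; everything else is the same type of integration used in Proposition \ref{Bprop} and in the estimate for $\theta$.
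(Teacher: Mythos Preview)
Your proof is correct and follows essentially the same route as the paper: integrate the transport equation (\ref{zetanuueq}) in $v$ from the $r=r_K$ curve (or the data), use the positivity of the coefficient $\frac{4\kappa}{3r^3}[3m-r^2(\frac32-\rho)]$ to discard the exponential, and then Cauchy--Schwarz the two source terms against the pieces $\int\frac{\theta^2}{\kappa}\,d\bar v$ and $\int r\lambda(1-\frac23\rho)\,d\bar v$ of the mass flux via (\ref{Khelp}). Your treatment of the past endpoint (allowing it to lie on $\tilde S\cap\{r\ge r_K\}$ as well as on $r=r_K$) is in fact slightly more careful than the paper's phrasing, and your explicit accounting of the $\epsilon^{-1/3}$ versus $\epsilon^{-2/3}$ losses against the $\epsilon$ gain makes the smallness transparent.
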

\begin{proof}
Integrate equation (\ref{zetanuueq}) from the $r=r_K$-curve, where
$|\frac{\zeta}{\nu}| \leq C_3 \left(\epsilon,\delta\right)$ by
Corollary \ref{zetnucor} out to infinity. 
Note that due to the estimate proven for the field $B$ in Corollary
\ref{Bcor} we may achieve (choosing $\delta$ small enough) that
\begin{equation}
\frac{3}{2}-\rho \leq \frac{3m}{r^2} 
\end{equation}
holds in the region $r \geq r_K$. Using again (\ref{Khelp}) we can follow 
the string of estimates
\begin{eqnarray}
\Big| \frac{\zeta}{\nu} \left(u,v\right) \Big| &\leq& \Big|
\frac{\zeta}{\nu} \left(u,v_{r_K}\right) \Big| + \frac{3}{2}
\sqrt{\int_{v_{r_K}}^v \frac{\theta^2}{\kappa} \left(u,\bar{v}\right)
  d\bar{v}} \sqrt{\int_{v_{r_K}}^v \frac{\lambda}{\left(1-\mu\right)r^2}
  \left(u,\bar{v}\right)  d\bar{v}} \nonumber \\
&\phantom{X}& +\frac{4}{3} \sqrt{\int_{v_{r_K}}^v
    \left(e^{-8B}-e^{-2B}\right)^2 r \left(u,\bar{v}\right) d\bar{v}}
\sqrt{\int_{v_{r_K}}^v \frac{\kappa^2}{r^2} \left(u,\bar{v}\right) d\bar{v} }
\nonumber \\
&\leq& \Big|
\frac{\zeta}{\nu} \left(u,v_{r_K}\right) \Big| +
\frac{3}{2}\sqrt{\epsilon} \frac{2}{\epsilon^\frac{1}{6}} \frac{\sqrt{m_{max}}}{\sqrt{r_K}}
 \nonumber \\ 
&+& \frac{4 \sqrt{K}}{3\sqrt{2}} \sup_{r \geq r_K} \left(
\frac{1}{\sqrt{\lambda}}\right) \sqrt{\int_{v_{r_K}}^v
    \left(1-\frac{2}{3}\rho \right) r \lambda  d\bar{v}}\sqrt{\int_{v_{r_K}}^v
  \frac{\lambda}{\left(1-\mu\right) r^2} d\bar{v}}  \nonumber \\
&\leq& \Big|
\frac{\zeta}{\nu} \left(u,v_{r_K}\right) \Big| +
\frac{3\sqrt{m_{max}}}{\sqrt{r_K}}\epsilon^\frac{1}{3} 
+ \frac{4 \sqrt{K}}{3\sqrt{2}} \sup_{r \geq r_K} \left(
\frac{1}{\sqrt{\kappa}\left(1-\mu\right)}\right) \frac{\sqrt{m_{max}}}{\sqrt{r_K}}
\sqrt{\epsilon}
\nonumber \\
&\leq& M^\frac{1}{4} C_3 \left(\epsilon,\delta\right) +
\frac{3\sqrt{m_{max}}}{\sqrt{r_K}}\epsilon^\frac{1}{3} 
+ 2\frac{4 \sqrt{K}}{3\sqrt{2}} \frac{2\sqrt{M}}{\sqrt{r_K}} \epsilon^\frac{1}{6}
\end{eqnarray}
to conclude the result.
\end{proof}
So far we have shown that $r B$, $\frac{\zeta}{\nu}$, and $\sqrt{r} \theta$ are
small and that $\kappa$ is everywhere close to $\frac{1}{2}$ for the
perturbed spacetime. Estimates for some higher 
derivative quantities will be required later. However, since 
all bounds can be considerably improved once the bootstrap 
assumptions have been introduced, we postpone the derivation of 
further pointwise estimates to section \ref{pointwiseE}. Here we only note
\begin{proposition} \label{omvclo}
On $\mathcal{D}$ we have, independent of the coordinate system $\mathcal{C}_{\tilde{\tau}}$, the bound
\begin{equation} \label{quibv}
\Big| \frac{\Omega_{,v}}{\Omega} - \frac{m}{r^3} \Big| \leq \frac{1}{\sqrt{M}} C_7\left(\epsilon\right) 
\end{equation}
\end{proposition}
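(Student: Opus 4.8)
The plan is to integrate the $\log\Omega$ evolution equation (\ref{omegaevol}) along the ingoing null rays $v=\mathrm{const}$, starting from a reference curve on which $\kappa=\frac12$, and then to observe that the ``main term'' of the resulting $\bar u$--integral telescopes against $\frac{m}{r^3}$, leaving only nonlinear error--integrals controlled by Proposition \ref{simpsmall} and by the monotonicity of the Hawking mass. The starting observation is that along any curve on which $\kappa=\frac{\Omega^2}{-4\nu}=\frac12$ one has $\Omega^2=-2\nu$, hence $2\frac{\Omega_{,v}}{\Omega}=\frac{\nu_{,v}}{\nu}=\frac{r_{,uv}}{\nu}$, so that by (\ref{revol}) together with $\frac{\Omega^2}{\nu}=-4\kappa$,
\[
\frac{\Omega_{,v}}{\Omega}=\frac{m}{r^3}+\frac{1}{3r}\Big(\rho-\tfrac32\Big).
\]
Since $\rho-\frac32=2e^{-2B}-\frac12 e^{-8B}-\frac32$ vanishes to second order at $B=0$, $|\rho-\frac32|\lesssim B^2$, and combined with Proposition \ref{simpsmall} (respectively (\ref{initassump}) on the data) and the fact that $r\gtrsim\sqrt M$ everywhere on $\mathcal D$, this already gives the bound $\big|\frac{\Omega_{,v}}{\Omega}-\frac{m}{r^3}\big|\le\frac1{\sqrt M}C(\epsilon,\delta)$ on the curves along which $\mathcal C_{\tilde\tau}$ is normalised --- on $\Gamma_{\tilde\tau}$ (where $\kappa=\gamma=\frac12$), on the ray $u=u_B$ (where $\kappa=\frac12$), and, when $\tilde\tau=0$, on the initial data itself.

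Next, for $(u,v)\in\mathcal D$ let $(u_{\mathrm{ref}},v)$ be the point where the ray $v=\mathrm{const}$ through $(u,v)$ meets such a reference curve, and integrate (\ref{omegaevol}) in $\bar u$ from $u_{\mathrm{ref}}$ to $u$. Writing $\Omega^2=-4\nu\kappa$ and splitting $\kappa=\frac12+(\kappa-\frac12)$ in the first term of (\ref{omegaevol}), and using the exact identity $-\frac{3m\nu}{r^4}=\partial_{\bar u}\!\big(\frac{m}{r^3}\big)-\frac{m_{,u}}{r^3}$, the main term $\int\frac{3\Omega^2 m}{2r^4}\,d\bar u=\int\frac{-6\nu\kappa m}{r^4}\,d\bar u$ becomes $\big[\frac{m}{r^3}\big]_{u_{\mathrm{ref}}}^{u}$ plus the error--integrals $-\int\frac{m_{,u}}{r^3}\,d\bar u$ and $\int\frac{-6\nu(\kappa-\frac12)m}{r^4}\,d\bar u$. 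Combining the boundary term $\big[\frac{m}{r^3}\big]$ with the reference value $\frac{\Omega_{,v}}{\Omega}(u_{\mathrm{ref}},v)=\frac{m}{r^3}(u_{\mathrm{ref}},v)+O(B^2/r)$ produces exactly $\frac{m}{r^3}(u,v)$ up to errors, and the four remaining error--integrals are the two just isolated, the term $\int\frac{\Omega^2}{2r^2}(\rho-\frac32)\,d\bar u$, and $-3\int\frac{\theta\zeta}{r^3}\,d\bar u=-3\int B_{,u}B_{,v}\,d\bar u$.

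These are all disposed of by one mechanism: the elementary identity $\int_{u_{\mathrm{ref}}}^{u}\frac{-\nu}{r^{k}}\,d\bar u=\frac{1}{k-1}\big[r^{-(k-1)}\big]_{u_{\mathrm{ref}}}^{u}\le\frac{1}{k-1}\,r(u,v)^{-(k-1)}$ (valid since $\nu<0$), together with $r\gtrsim\sqrt M$ on $\mathcal D$ and Proposition \ref{simpsmall}. Concretely: the $m_{,u}$--integral is $\le r(u,v)^{-3}\int(-\partial_u m)\,d\bar u\le r(u,v)^{-3}(m_{max}-m_{min})\lesssim\frac{\epsilon}{\sqrt M}$ by (\ref{dum}) and (\ref{inisma}); the $(\kappa-\frac12)$--integral is $\lesssim C(\delta)\,m_{max}\int\frac{-\nu}{r^4}\,d\bar u\lesssim\frac{C(\delta)}{\sqrt M}$; the $(\rho-\frac32)$--integral is $\lesssim\int\frac{-\nu\,|B|^2}{r^2}\,d\bar u\lesssim M\,C(\delta)^2\int\frac{-\nu}{r^4}\,d\bar u\lesssim\frac{C(\delta)^2}{\sqrt M}$ using $r|B|\lesssim\sqrt M\,C(\delta)$; and $3\int|B_{,u}B_{,v}|\,d\bar u=3\int\frac{|\zeta||\theta|}{r^3}\,d\bar u\lesssim M^{3/4}C(\delta)^2\int\frac{-\nu}{r^{7/2}}\,d\bar u\lesssim\frac{C(\delta)^2}{\sqrt M}$ using $|\zeta/\nu|\lesssim M^{1/4}C(\delta)$ and $|\theta|\lesssim C(\delta)\sqrt{M/r}$. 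The same integration, run over a longer $\bar u$--interval that crosses $r=r_K$, covers the region $r\le r_K$ as well, where still $r\ge r_-\gtrsim\sqrt M$. Summing, $\big|\frac{\Omega_{,v}}{\Omega}-\frac{m}{r^3}\big|\le\frac1{\sqrt M}C_7(\epsilon)$.

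The genuinely delicate point --- and the one I expect to be the main obstacle --- is the first step: exhibiting, uniformly across the coordinate systems $\mathcal C_{\tilde\tau}$ of section \ref{Coordinates}, a family of curves on which $\kappa\equiv\frac12$ that is met by every ray $v=\mathrm{const}$ intersecting $\mathcal D$. This needs a careful tracking of the construction of $\mathcal C_{\tilde\tau}$ --- the normalisations $\kappa=\gamma=\frac12$ on $\Gamma_{\tilde\tau}$ and $\kappa=\frac12$ on $u=u_B$, together with the fact that $\kappa_{,u}\equiv 0$ wherever $B\equiv 0$, so that $\kappa\equiv\frac12$ on the whole Schwarzschild region adjacent to $\mathcal{I}^+$ into which $\Gamma_{\tilde\tau}$ runs. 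Once this is in place, everything downstream is the routine error--bookkeeping sketched above.
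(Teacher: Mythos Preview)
Your approach is essentially the paper's: establish the bound on the reference set where $\kappa=\tfrac12$ (the $t=T$ slice for $r\ge r_K$ together with the ray $u=u_B$), then integrate (\ref{omegaevol}) in $u$ and control the errors with Proposition~\ref{simpsmall}; your telescoping of the main term via $-\tfrac{3m\nu}{r^4}=\partial_u(m/r^3)-m_{,u}/r^3$ simply makes explicit what the paper leaves implicit. Two small corrections: on the spacelike part of the reference set you only have $\kappa_{,r^\star}=0$, not $\kappa_{,v}=0$, so the starting identity picks up an extra $\frac{\zeta^2}{r^2\nu}$ term (harmless by Proposition~\ref{simpsmall}); and your ``delicate point'' is not an obstacle --- by construction of $\mathcal C_{\tilde\tau}$ the union of $\{t=T,\,r\ge r_K\}$ and $\{u=u_B,\,t\le T\}$ is already met by every ray $v=\mathrm{const}$ in $\mathcal D$, with no need to invoke the Schwarzschild region.
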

\begin{proof}
From the fact that $\kappa=\frac{1}{2}$ on $\{ t=T \} \cap \{r^\star \geq r^\star\left(T,r_K\right)\}$ (hence $\kappa_{,r^\star}$ = 0 there) and on $\{u=T-r^\star\left(T,r_K\right)\} \cap \{ t \leq T\}$ (hence $\kappa_{,v}=0$ on this null-line) the bound (\ref{quibv}) follows on these sets. We can obtain the quantity $\frac{\Omega_{,v}}{\Omega}$ at any point on $\mathcal{D}$ by integrating equation (\ref{omegaevol}) from the aforementioned set to the desired point.  Inserting the estimates of Proposition \ref{simpsmall} gives (\ref{quibv}) everywhere. 
\end{proof}

{\bf Remark: } The quantity $\frac{\Omega_{,v}}{\Omega}$ is
discontinuous at the point $B$ in the coordinate 
system $\mathcal{C}_{\tilde{\tau}}$. This discontinuity 
is propagated along the null-line $v=v\left(B\right)$ when 
integrating the quantity $\partial_u \frac{\Omega_{,v}}{\Omega}$ 
(which is continuous! (cf. \ref{omegaevol})) in $u$ (cf.~also Appendix 
\ref{reggree}).
\\

We conclude the section with a useful bound for 
the quantity $\gamma$ in the region 
$\mathcal{D} \cap \{t \leq T\} \cap \{ r \geq r_K \}$. 
\begin{proposition} \label{gamma}
In $\mathcal{D} \cap \{t \leq T\} \cap \{ r \geq r_K \}$ we have in the
coordinate system $\mathcal{C}_{\tilde{\tau}}$
\begin{equation} \label{gamclo1}
C_8 \left(\epsilon,\delta\right) \leq \gamma - \frac{1}{2} \leq 0
\end{equation}
\end{proposition}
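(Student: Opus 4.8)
The plan is to exploit the monotonicity of $\gamma$ in the $v$-direction, together with the fact that $\gamma \equiv \frac{1}{2}$ on the slice $\{t=T\}$ in $r \geq r_K$ by the construction of the coordinate system $\mathcal{C}_{\tilde{\tau}}$ in section \ref{Coordinates}. First I would record that in $r \geq r_K$ one has $\gamma = \frac{-\nu}{1-\mu} > 0$ (since $\nu < 0$ and $1-\mu \geq \tilde{c}>0$) and $\lambda = \kappa(1-\mu) > 0$, so that (\ref{gammaevol}) gives $\gamma_{,v} = \gamma\left(\frac{2}{r^2}\frac{\theta^2}{\lambda}\right) \geq 0$; thus $\gamma$ is non-decreasing along any constant-$u$ segment contained in $r \geq r_K$. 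Now fix a point $P = (u_P,v_P)$ with $r(P) \geq r_K$ and $t(P) = \frac{u_P+v_P}{2} \leq T$, and consider the segment $\{u=u_P\} \cap \{v_P \leq v \leq 2T - u_P\}$. Since $r_{,v} = \lambda \geq 0$, the area radius is non-decreasing along it, so the whole segment stays in $r \geq r_K$ (and, ending at a point of the fixed $\nabla r$-curve through $A$, stays in $\mathcal{D}$); moreover its upper endpoint $(u_P, 2T-u_P)$ has $t=T$ and $r \geq r(P) \geq r_K$, hence lies on the $\nabla r$-integral curve through $A$ on which $\kappa = \gamma = \frac{1}{2}$ was imposed. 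Integrating $\gamma_{,v} \geq 0$ then yields $\gamma(P) \leq \frac{1}{2}$, the upper bound in (\ref{gamclo1}).

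For the lower bound I would integrate (\ref{gammaevol}) in logarithmic form along the same segment, obtaining
\[
\gamma(P) = \frac{1}{2}\exp\left(-\int_{v_P}^{2T-u_P} \frac{2\theta^2}{r^2\lambda}\,(u_P,\bar{v})\,d\bar{v}\right),
\]
so it suffices to bound the exponent from above. Writing $\lambda = \kappa(1-\mu)$ and using $\frac{\theta^2}{\kappa} = -4r^3\frac{\nu}{\Omega^2}(B_{,v})^2$ (from $\theta = r^{3/2}B_{,v}$ and $\kappa = \frac{\Omega^2}{-4\nu}$), the monotonicity formula (\ref{vum}) together with $\rho \leq \frac{3}{2}$ gives $\frac{\theta^2}{\kappa} \leq \partial_v m$, since the remaining term $r\lambda(1-\frac{2}{3}\rho)$ is non-negative. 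Hence
\[
\int_{v_P}^{2T-u_P} \frac{2\theta^2}{r^2\lambda}\,d\bar{v} = \int_{v_P}^{2T-u_P} \frac{2}{r^2(1-\mu)}\cdot\frac{\theta^2}{\kappa}\,d\bar{v} \leq \frac{2}{\inf_{r\geq r_K} r^2(1-\mu)}\Big(m(u_P,2T-u_P) - m(u_P,v_P)\Big).
\]
Using $r^2(1-\mu) = r^2 - 2m \geq r_K^2 - 2m_{max} = r_K^2\,\epsilon^{1/3} \geq 2m_{min}\epsilon^{1/3}$ from (\ref{etadef}), and $m_{max}-m_{min} \leq \epsilon\, m_{max}$ from (\ref{inisma}), the exponent is at most $\frac{m_{max}}{m_{min}}\epsilon^{2/3}$, which tends to $0$ as $\delta \to 0$. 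Therefore $\gamma(P) \geq \frac{1}{2}e^{-\frac{m_{max}}{m_{min}}\epsilon^{2/3}} \geq \frac{1}{2} - C\epsilon^{2/3}$, and we may take $C_8(\epsilon,\delta) = -C\epsilon^{2/3}$.

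The argument is essentially a one-line monotonicity computation, and I expect no serious obstacle beyond bookkeeping; the only points needing care are geometric. One must check that the constant-$u$ ray from $P$ reaches the slice $\{t=T\}$ without leaving $\{r \geq r_K\}$ — which is exactly where the monotonicity of $r$ in $v$ enters — and one must keep in mind that, by the construction of $\mathcal{C}_{\tilde{\tau}}$, the slice $\{t=T\}\cap\{r\geq r_K\}$ \emph{is} the $\nabla r$-integral curve through $A$ on which $\gamma=\frac{1}{2}$ holds by fiat. The quantitative lower bound then rests on the device, used repeatedly in the paper, of recognising $\frac{\theta^2}{\kappa}$ as a piece of $\partial_v m$ and thereby converting the a-priori smallness of the total mass fluctuation $m_{max}-m_{min}$ into smallness of the relevant integral.
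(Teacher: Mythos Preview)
Your proof is correct and follows essentially the same approach as the paper's: integrate the evolution equation (\ref{gammaevol}) from the slice $\{t=T\}\cap\{r\geq r_K\}$ (where $\gamma=\tfrac{1}{2}$ by construction of $\mathcal{C}_{\tilde{\tau}}$) in the past $v$-direction, use monotonicity for the upper bound, and for the lower bound pull out $\sup\frac{2}{r^2(1-\mu)}$ and control the remaining $\int\frac{\theta^2}{\kappa}\,dv$ by the mass fluctuation via (\ref{vum}). Your write-up is more explicit than the paper's about the geometric point that the constant-$u$ segment stays in $\{r\geq r_K\}$ and about the numerics of the constant, but the argument is the same.
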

\begin{proof}
Integrate (\ref{gammaevol}) from the $t=T$-slice 
in the past-direction. By monotonicity $\gamma \leq
\frac{1}{2}$ is obvious. The other direction is derived from 
\begin{equation}
\gamma \left(u,v\right) = \gamma \left(u, v_T\right) \exp
\left(\int_{v}^{v_T}  -\frac{2}{r^2} \frac{\theta^2}{\lambda}
\left(u,\bar{v}\right) d\bar{v} \right)
\end{equation}
and the estimate
\begin{eqnarray}
\gamma \left(u,v\right) &\geq& \frac{1}{2} \exp\left[-\left(\sup_{\{r \geq r_K \}
  \cap \{t \leq T \}} \frac{2}{r^2
  \left(1-\mu\right)}\right) \int_{v}^{v_T} \frac{\theta^2}{\kappa}
  \left(u,\bar{v}\right) d\bar{v}\right] \nonumber \\
&\geq& \frac{1}{2} \exp
  \left[C_8\left(\epsilon\right)\right] \, ,
\end{eqnarray}
which follows by choosing the mass fluctuation small enough. 
\end{proof}
We close the section by emphasizing once more that 
the bounds proven in this section are independent 
of the particular coordinate system used, i.e.~of how large 
we choose ${\tilde{\tau}}$ (and hence $T$). In this context it is important 
that the smallness assumptions  (\ref{initassump}) and (\ref{initassump2}) are
invariant under a change of coordinates. 
\section{Compatible currents} \label{compatiblecur}
\subsection{The basic identity}
Varying the Lagrangian
\begin{equation} \label{Lagrangian}
\mathcal{L} = \frac{1}{2} g^{\mu \nu} \partial_\mu B
\partial_\nu B + \frac{1}{2r^2} \left(1-\frac{2}{3}\rho \right)
\end{equation}
with respect to $B$ leads to the non-linear wave 
equation (\ref{nonlinB}) satisfied by the field $B$. 
We associate to (\ref{Lagrangian}) the energy momentum tensor 
\begin{equation} \label{enmom}
T_{\mu \nu} = \partial_\mu B \partial_\nu B - \frac{1}{2} g_{\mu \nu}
\left(\partial B \right)^2 - \frac{1}{2r^2} g_{\mu \nu}
\left(1-\frac{2}{3}\rho \right) 
\end{equation}
satisfying the equation
\begin{equation}
\nabla^\mu T_{\mu \nu} = \frac{1}{r^3} \left(1-\frac{2}{3}\rho \right)
\nabla_\nu r \, .
\end{equation}
Given any vectorfield $V$ we can define its deformation tensor
\begin{equation}
\pi^{\mu \nu}_V = \frac{1}{2} \left(\nabla^\mu V^\nu +
\nabla^\nu V^\mu \right) 
\end{equation}
and the vector
\begin{equation} \label{palpha}
P^\alpha = g^{\alpha \beta} T_{\beta \delta} V^\delta \, .
\end{equation}
The method of compatible currents is based on the following
basic identity for an arbitrary vector field $V$:
\begin{equation} \label{intid}
-\nabla_\alpha P^\alpha = - \left(T_{\alpha \beta}
 \pi_V^{\alpha \beta} + \left(\nabla^\beta T_{\alpha \beta}\right) 
 V^\alpha \right) \, .
\end{equation}
\subsection{Useful formulae}
In $\left(u,v\right)$-coordinates the components 
of the energy momentum tensor (\ref{enmom}) read
\begin{eqnarray}
T_{uu} &=& \left(\partial_u B\right)^2 \, , \nonumber \\
T_{vv} &=& \left(\partial_v B\right)^2  \, , \nonumber \\
T_{uv} &=& -\frac{1}{2r^2} g_{uv} \left(1-\frac{2}{3}\rho \right) =
\frac{1}{4r^2} \Omega^2 \left(1-\frac{2}{3}\rho \right)  \, , \nonumber \\
T_{ij} &=& = - \frac{1}{2} g_{ij}
\left(\partial^\alpha B \partial_\alpha B +
\frac{1}{r^2}\left(1-\frac{2}{3}\rho \right) \right)  \, .
\end{eqnarray}
The vectorfields $V$ used in this paper have $u$ and $v$ components 
only and will furthermore depend only on these two variables. 
For such vectorfields we compute the components of their deformation tensor:
\begin{eqnarray}
\pi^{uu} &=& \frac{4}{\Omega^2} \partial_v
\left(\frac{V_v}{\Omega^2}\right) \, ,
\nonumber \\ 
\pi^{vv} &=& \frac{4}{\Omega^2} \partial_u
\left(\frac{V_u}{\Omega^2}\right)  \, ,
\nonumber \\ 
\pi^{uv} &=& \frac{2}{\left(\Omega^2\right)^2} \left(\partial_v V_u +
\partial_u V_v\right)  \, , \nonumber \\ 
g_{ij} \pi^{ij} &=& -\frac{6}{r} \left(\frac{\nu}{\Omega^2} V_v +
\frac{\lambda}{\Omega^2}V_u \right) \, .
\end{eqnarray}
Finally, the following explicit formulae for the contraction
\begin{equation}
T_{\mu \nu} \pi^{\mu \nu} = T_{uu} \pi^{uu} + T_{vv} \pi^{vv} +
2T_{uv} \pi^{uv} + T_{ij} \pi^{ij}
\end{equation}
will be useful:
\begin{eqnarray}
T_{\mu \nu} \pi^{\mu \nu} &=& \frac{4}{\Omega^2} \Bigg(\left(\partial_u
B\right)^2 \partial_v \left(\frac{V_v}{\Omega^2}\right) + \left(\partial_v
B\right)^2 \partial_u \left(\frac{V_u}{\Omega^2}\right)  + \nonumber \\
&& \phantom{XXXXX} \frac{1}{4r^2}
\left(\partial_v V_u + \partial_u V_v \right) 
\left(1-\frac{2}{3}\rho \right)  \Bigg) \nonumber \\
&& +\frac{3}{r} \left(\frac{\nu}{\Omega^2} V_v + \frac{\lambda}{\Omega^2}V_u
\right) \left(\partial^\alpha B \partial_\alpha B +
\frac{1}{r^2}\left(1-\frac{2}{3}\rho \right) \right) 
\end{eqnarray}
and
\begin{eqnarray} \label{basicintegrand}
-T_{\mu \nu} \pi^{\mu \nu} - V^\nu \nabla^\mu T_{\mu \nu} = \nonumber 
\\ -\frac{4}{\Omega^2} \Bigg(\left(\partial_u
B\right)^2 \partial_v \left(\frac{V_v}{\Omega^2}\right) + \left(\partial_v
B\right)^2 \partial_u \left(\frac{V_u}{\Omega^2}\right)  + \frac{1}{4r^2}
\left(\partial_v V_u + \partial_u V_v \right) 
\left(1-\frac{2}{3}\rho \right)  \Bigg) \nonumber \\ 
-\frac{3}{r} \left(\frac{\nu}{\Omega^2} V_v + \frac{\lambda}{\Omega^2}V_u
\right) \left(\partial^\alpha B \partial_\alpha B\right) - \frac{1}{r^3}\left(\frac{\nu}{\Omega^2} V_v + \frac{\lambda}{\Omega^2}V_u
\right) \left(1-\frac{2}{3}\rho \right)  \, .
\end{eqnarray}
%
%
%
%
%
%
%
\subsection{Basic regions}
In the course of the paper we shall apply the 
basic vectorfield identity (\ref{intid}) for
different vector fields in adapted regions of the black hole exterior. 
Here the relevant formulae arising from (\ref{intid}) 
for these regions are derived.\footnote{Since the coordinate system is only
piecewise $C^2$, the justification of these formulae, which are easily
derived formally, requires some care. A detailed discussion can be found
in Appendix \ref{reggree}.}

\subsubsection{Characteristic Rectangles} 
Writing out the identity (\ref{intid}) for a 
null-rectangle $\mathcal{R} = \left[u_1,u_2\right] 
\times \left[v_1,v_2\right]$ yields
\begin{eqnarray}
-\int_{vol} \nabla_\alpha P^\alpha &=& - \int_{\mathbb{S}^3} \int_{v_1}^{v_2}
 \int_{u_1}^{u_2} \frac{1}{\sqrt{g}} \partial_\alpha \left(\sqrt{g}
 P^\alpha \right) \sqrt{g} du dv d{\bf \omega} \nonumber \\ &=&
 - \int_{\mathbb{S}^3}\int_{v_1}^{v_2}
 \int_{u_1}^{u_2} \left[ \partial_u \left(\sqrt{g}
 P^u \right) +  \partial_v \left(\sqrt{g}
 P^v \right)\right] du dv d{\bf \omega} \, .
\end{eqnarray}
Defining the bulk term
\begin{equation}
I^V_B = - \int_{vol\left(\mathcal{R}\right)} \left(T_{\alpha \beta}
 \pi_V^{\alpha \beta} + \left(\nabla^\beta T_{\alpha \beta}\right)
 V^\alpha \right) \frac{\Omega^2}{2}r^3 du dv dA_{\mathbb{S}^3}
\end{equation}
and the boundary terms
\begin{eqnarray}
F^V_B \left(\left[u_1,u_2\right] \times \{v\} \right) &=&
-\int_{\mathbb{S}^3} \int_{u_1}^{u_2} \sqrt{g} P^v
\left(\bar{u},v\right) d\bar{u}d{\bf \omega} \nonumber \\ &=&
2\pi^2 \int_{u_1}^{u_2} 
\left[r^3 \left(\partial_u B\right)^2 V^u + \frac{r \Omega^2}{4} 
\left(1-\frac{2}{3}\rho \right)V^v \right] du \, ,
\end{eqnarray}
\begin{eqnarray}
F^V_B \left( \{u\} \times \left[v_1,v_2\right] \right) &=&
-\int_{\mathbb{S}^3} \int_{v_1}^{v_2} \sqrt{g} P^u
\left(u,\bar{v}\right) d\bar{v} d{\bf \omega}\nonumber \\ &=&
2\pi^2 \int_{v_1}^{v_2} 
\left[r^3 \left(\partial_v B\right)^2 V^v + \frac{r \Omega^2}{4} 
\left(1-\frac{2}{3}\rho \right)V^u \right] dv  \, ,
\end{eqnarray}
we find the identity
\begin{eqnarray}
F^V_B \left( \{u_2\} \times \left[v_1,v_2\right] \right) + F^V_B
\left(\left[u_1,u_2\right] \times \{v_2\} \right) \nonumber \\ 
= I^V_B \left(\mathcal{R}\right) + F^V_B \left( \{u_1\} \times
\left[v_1,v_2\right] \right) + F^V_B \left(\left[u_1,u_2\right] \times \{v_1\} \right)
\end{eqnarray}

\subsubsection{The region
  ${}^{u_H}\mathcal{D}^{{r^\star_g},u_J}_{[t_1,t_2]}$}
Another important region is 
\begin{eqnarray} \label{Dregion}
{}^{u_H}\mathcal{D}^{{r^\star_g},u_J}_{[t_1,t_2]} :=&& \Big(\{t_1
\leq t \leq t_2 \} \cap \{r^\star \geq r^\star_g \} \cap \{ u_J \leq u \leq u_H \}\Big)
\nonumber \\
&\cup& \Big(\{ \left(u,v\right) \in \left[t_1-{r^\star_g},u_H\right]
\times \left[t_1+{r^\star_g},t_2 + {r^\star_g} \right] \}\Big)
\end{eqnarray}
\begin{figure}[h!]
\[
\input{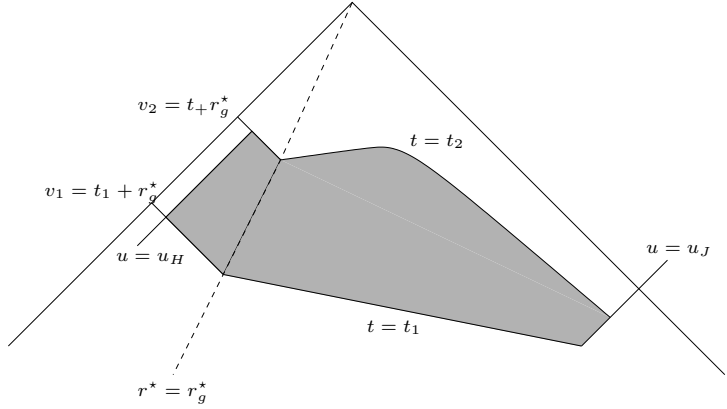}
\]
\caption{The region ${}^{u_H}\mathcal{D}^{{r^\star_g},u_J}_{[t_1,t_2]}$.} \label{regionD}
\end{figure}
for which one finds the basic identity 
\begin{equation}
\hat{I}^V_B\left({}^{u_H}\mathcal{D}^{{r^\star_g},u_J}_{[t_1,t_2]}\right)
= \hat{F}^V_B\left(t_2\right) -
\hat{F}^V_B\left(t_1\right) + \hat{H}_{u_H} - \hat{J}_{u_J} \, ,
\end{equation}
with the bulk term
\begin{equation}
\hat{I}^V_B\left({}^{u_H}\mathcal{D}^{{r^\star_g},u_J}_{[t_1,t_2]}\right) = \int_{{}^{u_H}\mathcal{D}^{{r^\star_g},u_J}_{[t_1,t_2]}} \left(-T_{\mu
  \nu} \pi^{\mu \nu}_V - \left(\nabla^\mu T_{\mu \nu}\right)V^v\right) dVol
\end{equation}
and the boundary terms
\begin{eqnarray}
\frac{1}{2\pi^2} \hat{F}^V \left(t\right) = &&\int_{{r^\star_g}}^{t-u_J} -P^t
\left(t,r^\star\right) \Omega^2 r^3 dr^\star \nonumber \\ &&+
\int_{t-{r^\star_g}}^{u_H} \left[r^3 \left(\partial_u B\right)^2 V^u +
  \frac{r\Omega^2}{4} \left(1-\frac{2}{3}\rho\right) V^v \right] du \, ,
\end{eqnarray}
where
\begin{equation}
-P^t = \frac{V^u}{2\Omega^2}
 \left[2\left(\partial_u B\right)^2 +
 \frac{\Omega^2}{2r^2} \left(1-\frac{2}{3}\rho\right) \right] +  \frac{V^v}{2\Omega^2}
 \left[2\left(\partial_v B\right)^2 +
 \frac{\Omega^2}{2r^2} \left(1-\frac{2}{3}\rho\right)\right]
\end{equation}
and
\begin{equation}
\frac{1}{2\pi^2} \hat{H}^V_{u_H} = \int_{v_1}^{v_2} \left[r^3
  \left(\partial_v B \right)^2 V^v + \frac{r \Omega^2}{4}
  \left(1-\frac{2}{3} \rho\right) V^u \right] \left(u_{hoz},v\right) dv 
\end{equation}
and 
\begin{equation}
\frac{1}{2\pi^2} \hat{J}^V_{u_J} = \int_{2t_1-u_J}^{2t_2-u_J} \left[r^3
  \left(\partial_v B \right)^2 V^v + \frac{r \Omega^2}{4}
  \left(1-\frac{2}{3} \rho\right) V^u \right]\left(u_J,v\right) dv \, .
\end{equation}
For the region under consideration we will also need to apply Green's
identity to a term of the form $D \cdot \Box \left(B^2\right)$ for
some function $D$.\footnote{The formula derived here is a-priori valid only for
  $D \in C^2$. However it also holds for a $D$ admitting less
  regularity, as is shown explicitly in Appendix \ref{reggree}, where
  we demonstrate that for the cases where (\ref{basicgreen}) 
is applied in the paper (equations (\ref{DKreg}) 
and (\ref{cgfd})), $D$ indeed satisfies these requirements.}
\begin{eqnarray} \label{basicgreen}
\hat{I}^V_B \left({}^{u_H}\mathcal{D}^{{r^\star_g},u_J}_{[t_1,t_2]}\right) = ... &+& \int
\left[\left(\Box B^2 \right) D \right]dVol = ... + \int
\left[ B^2 \left( \Box D \right)\right] dVol \nonumber \\ &+& G\left(t_2\right) -
G\left(t_1\right) + N \left(t_2\right) - N \left(t_1\right) +
H^G_{u_H} - J^G_{u_J}
\end{eqnarray}
where
\begin{equation}
\frac{1}{2\pi^2} G \left(t\right) = \int_{{r^\star_g}}^{t-u_0} \left[B^2 \partial_t
  D - D \partial_t B^2 \right] r^3 \left(t,r^\star \right) dr^\star \, ,
\end{equation}
\begin{equation}
\frac{1}{2\pi^2} N \left(t\right) = \int_{t-{r^\star_g}}^{u_H} \left[B^2 \partial_u
  D - D \partial_u B^2 \right] r^3 \left(u, t+{r^\star_g} \right) du \, ,
\end{equation}
\begin{equation}
\frac{1}{2\pi^2} H^G_{u_H} = \int_{t_1+{r^\star_g}}^{t_2+{r^\star_g}} \left[B^2 \partial_v
  D - D \partial_v B^2 \right] r^3 \left(u_{H}, v \right) dv \, ,
\end{equation}
\begin{equation}
\frac{1}{2\pi^2} J^G_{u_J} = \int_{t_1+{r^\star_g}}^{t_2+{r^\star_g}} \left[B^2 \partial_v
  D - D \partial_v B^2 \right] r^3 \left(u_{J}, v \right) dv \, .
\end{equation}
We then define the renormalized bulk term
\begin{equation}
I^V_B \left({}^{u_H}\mathcal{D}^{{r^\star_g},u_J}_{[t_1,t_2]}\right) := ... + \int_{{}^{u_H}\mathcal{D}^{{r^\star_g},u_J}_{[t_1,t_2]}}
\left[ B^2 \left( \Box D \right)\right] dVol \, ,
\end{equation}
for which the identity
\begin{equation} \label{finidg}
I^V_B \left({}^{u_H}\mathcal{D}^{{r^\star_g},u_J}_{[t_1,t_2]}\right) = F^V_B \left(t_2\right) - F^V_B
\left(t_1\right)+ H_{u_H} - J_{u_J}
\end{equation}
with
\begin{equation}
F^V_B \left(t\right) = \hat{F}^V_B \left(t\right) - G\left(t\right) -
N\left(t\right) \, ,
\end{equation}
\begin{equation}
H^V_{u_H} = \hat{H}^V_{u_H} - H^G_{u_H} \, ,
\end{equation}
\begin{equation}
J^V_{u_J} = \hat{J}^V_{u_J} - J^G_{u_J}
\end{equation}
holds. Note that for $u_J=u_0$, the boundary terms $J_{u_J}$ all
vanish, because $B$ does not have any support on $u=u_0$ by the domain
of dependence property.

Finally, for future reference we also define the subregion
\begin{equation} \label{Bsubreg}
\mathcal{B}^{{r^\star_g},R^\star_g}_{[t_1,t_2]} =
	{}^{u_H}\mathcal{D}^{{r^\star_g},u_J}_{[t_1,t_2]} \cap
	\{ r^\star_g \leq r^\star \leq R^\star_g \}
\end{equation}
and the slice
\begin{equation} \label{slicedef}
\Sigma_{\bar{t}} = \Big(\{ t=\bar{t} \} \cap \{ r^\star \geq r^\star_{cl}
\}\Big)  \cup
\Big(\{ v = \bar{t}+ r^\star_{cl} \} \cap \{ r^\star \leq
r^\star_{cl}\}\Big) \, .
\end{equation}
\section{The vectorfield $T$ and the Hawking mass} \label{HawkT}
Recall that the Hawking mass $m$ defined in (\ref{Hawkmass}) 
satisfies (\ref{dum}) and (\ref{vum}). The one-form $dm$ is 
closed and by simple connectedness of the Penrose
diagram, exact. It follows that energy is conserved. This fact can also
be seen from the integral identity (\ref{intid}) applied to the 
the vector-field
\begin{equation}
T = \frac{4\lambda}{\Omega^2} \partial_u -  \frac{4\nu}{\Omega^2}
\partial_v \, .
\end{equation}
If we apply the identity (\ref{intid}) in
the region ${}^{u_H}\mathcal{D}^{{r^\star_g},u_J}_{[t_1,t_2]}$, 
energy conservation translates into the following relation 
between the boundary terms:
\begin{equation}
F^T_B\left(t_2\right) = F^T_B\left(t_1\right) -
H^T_{u_H} + J^T{\left(u_J\right)} \, ,
\end{equation}
where
\begin{eqnarray} \label{enbt}
F^T_B \left(t\right) = \int_{t-r^\star_{cl}}^{u_H} \left[4r^3
  \lambda \frac{\left(B_{,u}\right)^2}{\Omega^2} - r
  \nu \left(1-\frac{2}{3} \rho \right) \right]
  \left(u,t+r^\star_{cl}\right) du \nonumber \\
+\int_{r^\star_{cl}}^{t-u_0} \Bigg(r^3 \frac{\left(B_{,v}\right)^2}{\kappa} + 4r^3 \frac{\lambda}{\Omega^2} \left(B_{,u}\right)^2
+ r \left(\lambda-\nu\right) \left(1-\frac{2}{3}\rho \right)
  \Bigg)\left(t,r^\star \right) dr^\star \, ,
\end{eqnarray}
\begin{eqnarray} \label{enht}
H^T_{u_H} = \int_{v_1}^{v_2} \left[r^3
  \frac{\left(B_{,v}\right)^2}{\kappa} + r \lambda \left(1-\frac{2}{3}
  \rho \right) \right] \left(u_{H},v\right) dv \, ,
\end{eqnarray}
\begin{eqnarray} \label{enjt}
J^T{\left(u_J\right)} = \int_{2t_1-u_J}^{2t_2-u_J} \left[r^3
  \frac{\left(B_{,v}\right)^2}{\kappa} + r \lambda \left(1-\frac{2}{3}
  \rho \right) \right] \left(u_J,v\right) dv \, .
\end{eqnarray}
We will sometimes use the notation $E\left(\Sigma\right)$, for the
energy flux through an achronal slice $\Sigma$.
\section{The bootstrap} \label{bootstrap}
The bootstrap is intimately related to the choice of coordinate systems defined in section \ref{Coordinates}. We will use the notation introduced in that section. 
\subsection{The bootstrap region and the statement $\mathcal{P}$}
Let 
\begin{equation}
a = \sqrt{\frac{M}{2}} \left[-3\sqrt{2} - \log \left(\frac{2-\sqrt{2}}{2+\sqrt{2}}\right)\right]
\end{equation}
and $c$ be some small constant. Define
\begin{equation}
S = t\partial_t + \left(r^\star-a\right) \partial_{r^\star} \textrm{ \ \ \ \ \ \ } 
\underline{S} = t\partial_{r^\star} + \left(r^\star-a\right)
\partial_t
\end{equation}
and the quantity
\begin{eqnarray} \label{ekbfir}
E^K_B \left(t\right) = \frac{2\pi^2}{M} 
 \int_{r^\star_{K}}^{t-u_0} \Bigg[
  \left(1+2\nu\right) \left(\left(SB\right)^2 + \left(\underline{S}B \right)^2 \right)
  \nonumber \\
  + \left(-2\nu \right) \Bigg(\left(SB + \frac{3}{2} \frac{r^\star-a}{r} B
  \right)^2 + \frac{\left(r^\star-a\right)^2}{r^2}B^2
  \Bigg) \nonumber \\
  + \left(-2\nu \right) \Bigg(
 \left(\underline{S}B + \frac{3}{2} \frac{t}{r} B
  \right)^2 +  \frac{t^2}{r^2}B^2 \Bigg) \Bigg] r^3 dr^\star 
\end{eqnarray}
with
\begin{equation}
r^\star_K = \sup_{t < T} r^\star\left(t,r_K\right) \, . \label{rstarK}
\end{equation}
\begin{figure}[h!]
\[
\input{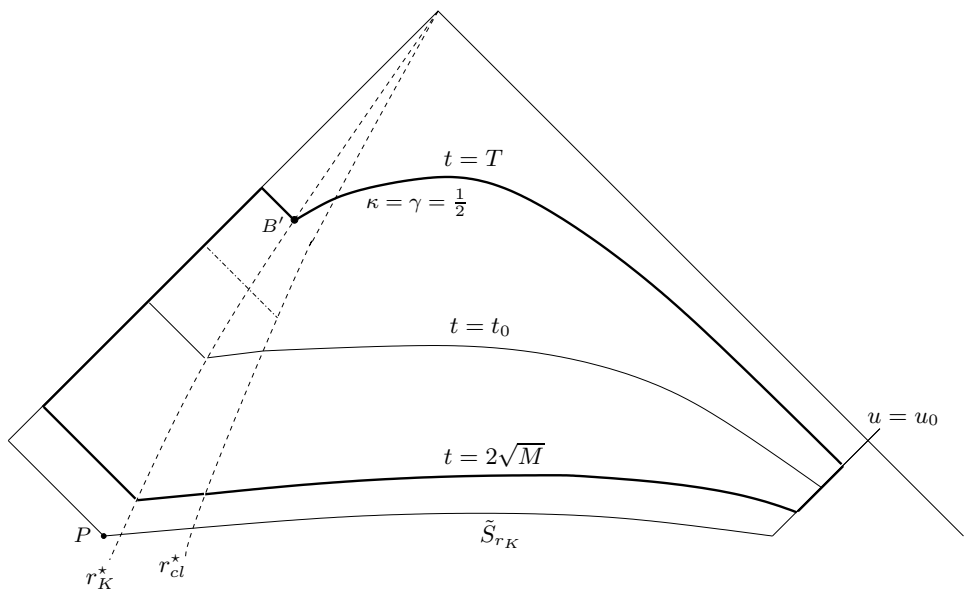}
\]
\caption{The bootstrap region.} \label{bootsreg}
\end{figure}
To each $\tilde{\tau}$ we associate the region 
$\mathcal{A}\left(T\left(\tilde{\tau}\right)\right)={}^{u_{hoz}}\mathcal{D}^{r^\star_{K},u_0}_{[2\sqrt{M},T)}$
  (hence defining the $T$ in (\ref{rstarK})).

We define the statement $\mathcal{P}_{T\left(\tilde{\tau}\right)}$ associated 
to a region $\mathcal{A}\left(T\left(\tilde{\tau}\right)\right)$ to 
be\footnote{We will sometimes abbreviate $T\left(\tilde{\tau}\right)$ by $T$, 
reminding the reader that any $T$ arises from $\tilde{\tau}$ 
as described in section \ref{Coordinates}.}
\begin{enumerate}
\item In the subregion $\{r^\star \geq r^\star_K \} \cap \mathcal{A}\left(T\right) $, the area radius satisfies 
\begin{eqnarray} \label{fibor}
\Big| r^\star - \left[ r\left(t,r^\star\right) + \sqrt{\frac{M_A}{2}} \left(\log
\left(\frac{r\left(t,r^\star\right)-\sqrt{2M_A}}{r\left(t,r^\star\right)+\sqrt{2M_A}}\right)
+ p \right) \right] \Big| < c \sqrt{M}
\end{eqnarray}
with 
\begin{equation} \label{pdef}
p = - 2\sqrt{2} - \log \frac{2-\sqrt{2}}{2+\sqrt{2}}
\end{equation}
and $M_A$ defined to be the Hawking mass at
the point $\left(T,r^\star=0\right)$.
\item We have
\begin{equation} \label{tass}
\frac{1}{2} \sqrt{M} < \sup_{\tilde{S} \cap \{r^\star \geq r^\star_K\} \cap
  \{ u \geq u_0\} } t < \frac{3}{2}\sqrt{M} \, .
\end{equation}
\item The weighted energy-density (\ref{ekbfir}) satisfies
\begin{equation} \label{Kass}
\frac{1}{M} E^K_B\left(\tilde{T}\right) < c \textrm{ \ \ \ \ on all arcs $\{ 2\sqrt{M}
  \leq t = \tilde{T} < T \} \cap \{r^\star \geq r^\star_K\} \cap
  \mathcal{A}\left(T\right)$ } \, .
\end{equation}
\item 
The energy-flux satisfies 
\begin{equation} \label{hozass}
m\left(u_{hoz},v_2\right) - m\left(u_{hoz},v_1\right) < c \ M \frac{M}{\left(v_{1+}\right)^2}
\end{equation}
for any $v_1 \leq v_2$ along the part of the horizon located in
$\mathcal{A}\left(T\right)$ and
\item 
\begin{equation} \label{ceilass}
m\left(u_{r^\star_{cl}},v\right) - m\left(u_{hoz},v\right) < c \ M \frac{M}{v_+^2}
\end{equation}
holds in $\mathcal{A}\left(T\right)$ for an $r^\star_{cl}$ defined in the subsection below.
\item The integral bound 
\begin{equation} \label{intebound2}
\tilde{F}^Y_B = \int r^3 \frac{\left(B_{,u}\right)^2}{-\nu} du < C_L M \frac{M}{v_+^2} \textrm{\ \ \ \ \ for \ \ $C_L = \sup_{r^\star \geq r^\star_{cl}} \frac{1}{1-\mu}$}
\end{equation}
holds along lines of constant $v$ in the region 
$\{r^\star \leq r^\star_{cl} \} \cap \{ u \leq T-r^\star\left(T,r_K\right) \} \subset \mathcal{A}\left(T\right)$, corresponding to a decay of energy for local observers near the horizon. 
\end{enumerate}

Finally, we define the set
\begin{equation} \label{Adef}
A = \Big\{ \tilde{\tau} \in \left[\sqrt{M}, \infty\right) \  \ \Big| \ \  \mathcal{P}_{T\left(\hat{\tau}\right)} \textrm{\ \ \ holds in $\mathcal{A}\left(T\left(\hat{\tau}\right)\right)$ \textrm{for all $\hat{\tau} \leq \tilde{\tau}$} \Big\} }
\subset \left[\sqrt{M}, \infty\right)  \ \ \ \ \ \,
\end{equation}
Note that the lower bound on $\tilde{\tau}$ ensures that $T >
2\sqrt{M}$ (cf. (\ref{timedef})). The following key-Theorem will 
close the bootstrap and is easily seen to imply 
the decay rates of Theorem \ref{asymptoticstab}. It will only be
proven at the end of the paper. 
\begin{theorem} \label{setS}
The set $A$ is non-empty, open and closed.
\end{theorem}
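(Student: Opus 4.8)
The plan is to establish the three properties separately; non-emptiness and openness are soft continuity statements, while closedness carries essentially all the analytic content of the paper.

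For \textbf{non-emptiness}, I would check that $\mathcal{P}_{T(\tilde\tau)}$ holds on $\mathcal{A}(T(\tilde\tau))$ for $\tilde\tau$ in a small right-neighbourhood of $\sqrt{M}$. When $\tilde\tau$ is close to $\sqrt{M}$, the region $\mathcal{A}(T(\tilde\tau))$ is a small neighbourhood of the initial slice $\tilde{S}_{r_K}$, on which the fields and coordinate functions are quantitatively controlled by the initial-data hypotheses (\ref{initassump}), (\ref{EKBintro}), (\ref{initassump2}), the a priori bounds of Section \ref{Basic} (Proposition \ref{simpsmall} and its corollaries), and Cauchy stability. Taking the smallness parameter $\delta$ small compared to $c$, all six clauses of $\mathcal{P}$ then hold with $c$ replaced by, say, $c/10$, so $A$ contains $[\sqrt{M},\sqrt{M}+\eta)$ for some $\eta>0$.

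For \textbf{openness}, suppose $\tilde\tau_0\in A$. The map $\vartheta$ of (\ref{thetamap}) is continuous, and the coordinate system $\mathcal{C}_{\tilde\tau}$, the region $\mathcal{A}(T(\tilde\tau))$, and every quantity entering $\mathcal{P}$ — the fields $B$, $\theta$, $\zeta/\nu$, the functions $r$, $r^\star$, $\kappa$, $\gamma$, $\Omega$, the Hawking mass, and the weighted energy $E^K_B$ of (\ref{ekbfir}) — depend continuously on $\tilde\tau$; since all inequalities in $\mathcal{P}$ are strict they persist for $\tilde\tau$ near $\tilde\tau_0$, and with the inductive definition of $A$ this gives openness. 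This is the ``straightforward continuity argument'' of Section \ref{btai}.

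\textbf{Closedness} is the main obstacle and is the content of Sections \ref{HawkT}--\ref{closboot}. Assuming $\mathcal{P}_{T(\hat\tau)}$ holds for all $\hat\tau<\tilde\tau_0$, the goal is to \emph{improve} each clause of $\mathcal{P}$ on the closure of $\mathcal{A}(T(\tilde\tau_0))$, replacing $c$ by $c/2$; closedness then follows by continuity. The improvement proceeds through a chain of compatible-current estimates used dyadically ($t_{i+1}=1.1\,t_i$): clause \ref{boot3} and the $T$-identity of Section \ref{HawkT} give $1/t_i^2$ energy-flux decay on late arcs, hence pointwise control of $B$ and $B_{,v}$, while clause \ref{boot2} yields decay of $B$ in $r$ and clauses \ref{boot1}, \ref{boot1b} control the coordinate functions and improve clause \ref{boot1}; the $X$-identity in regions of the form (\ref{regDintro}), with $f$ chosen so its bulk term is positive (Section \ref{Xsection}), gives $1/t_i^2$ decay of a positive spacetime integral; the $Y$-identity in near-horizon characteristic rectangles, its bulk term controlled via the $X$-bulk term and the energy flux together with a pigeonhole producing a ``good $F_Y$-slice'' (Section \ref{XconY}), upgrades to $C/v_i^2$ decay of the $Y$-boundary terms, improving clause \ref{boot5} and giving the pointwise bound on $r^{3/2}B_{,u}/r_{,u}$; finally the Morawetz vectorfield $K$ (Section \ref{vecKsec}) is applied in a region ${}^{u_H}\mathcal{D}^{r^\star_K,u_0}_{[t_0,\tilde T]}$ to estimate the good term $E^K_B(\tilde T)$ by the remaining terms, the error boundary terms being made small by taking $r_K$ close to the horizon and $t_0$ late (Cauchy stability, the monotonicity in (\ref{eom1})--(\ref{eom2}), and a redshift-exploiting intermediate region) and the main spacetime term being split dyadically and bounded piecewise by $t_{j+1}$ times the summable $X$-bulk integral, thereby improving clause \ref{boot2}; a ``good $F_T$-slice'' pigeonhole plus the domain-of-dependence property then improves clauses \ref{boot3}, \ref{boot4}, and rerunning the $Y$-pigeonhole retrieves clause \ref{boot5} with a better constant throughout $r^\star\le\tfrac{9}{10}t$. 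The hardest point is the control of the $K$-error terms — the near-horizon estimates via redshift and Raychaudhuri monotonicity, and the dyadic summation of the main $K$-spacetime term against the $X$-bulk term — as anticipated in Section \ref{btri}.
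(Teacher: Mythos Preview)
Your proposal is correct and follows essentially the same architecture as the paper: non-emptiness via Cauchy stability (Corollary~\ref{empty}), openness by continuity of all quantities in $\mathcal{P}$, and closedness by the chain $T\to X\to Y\to$ pointwise $\zeta/\nu\to K\to$ improved $E^K_B\to$ improved flux assumptions, exactly as in Sections~\ref{analboot}--\ref{closboot}. Two small corrections: the $1/t_i^2$ energy-flux decay on late arcs comes primarily from clause~\ref{boot2} (the $E^K_B$ bound, via Proposition~\ref{decfromk}) rather than clause~\ref{boot3}, and the first $Y$-pigeonhole pass does \emph{not} yet improve clause~\ref{boot5} (the constant obtained there is $C(r^\star_{cl},c)$, worse than the assumed $C_L$; see the remark after~(\ref{Yvbound})) --- it only yields the pointwise $\zeta/\nu$ bound needed for $K$, and clause~\ref{boot5} is genuinely improved only at the very end, after $E^K_B$ has been improved.
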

A few remarks are in order. The first two bootstrap assumptions ensure
that the different coordinate systems $\mathcal{C}_{\tilde{\tau}}$ do
not move too far away from one another, at least in the region
$r^\star \geq r^\star_K$. The first controls the deviation of the
relation between the coordinate $r^\star$ and the area radius $r$ from
the familiar relation between the Regge-Wheeler coordinate and the
area radius in the Schwarzschild metric. In particular, for
Schwarzschild the left hand side of (\ref{fibor}) is zero. The second
assumption ensures that the bottom of the bootstrap region (the
$t=2\sqrt{M}$ slice) does not move away too much from the
geometrically defined initial data (and is moreover always located to
the future of the data). In other words, the coordinates of $\tilde{S} \cap \{r^\star \geq r^\star_K\} \cap \{u \geq u_0\}$ are similar in all coordinate systems $\mathcal{C}_{\tilde{\tau}}$. 

The open-part of Theorem \ref{setS} follows from a simple continuity argument:

\begin{proposition}
The set $A$ defined in (\ref{Adef}) is open.
\end{proposition}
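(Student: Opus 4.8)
The plan is a routine ``strict inequalities are open'' continuity argument, made possible by three observations. First, by the orbital stability theorem the global structure of the Penrose diagram --- and hence the fields $B$, $r$, $\Omega$ and their first derivatives --- are already known and smooth on a fixed spacetime neighbourhood of the regions $\mathcal{A}(T(\tilde{\tau}))$. Second, every clause of the statement $\mathcal{P}$, namely (\ref{fibor}), (\ref{tass}), (\ref{Kass}), (\ref{hozass}), (\ref{ceilass}), (\ref{intebound2}), is a \emph{strict} inequality. Third, the set $A$ of (\ref{Adef}) is downward closed: if $\tilde{\tau}\in A$ and $\sqrt{M}\le\tilde{\tau}'\le\tilde{\tau}$ then $\tilde{\tau}'\in A$, since every $\hat{\tau}\le\tilde{\tau}'$ also satisfies $\hat{\tau}\le\tilde{\tau}$. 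The last point means $A$ is an interval with left endpoint $\sqrt{M}$, so to prove openness it suffices to show that every $\tilde{\tau}_0\in A$ admits an $\epsilon>0$ with $\mathcal{P}_{T(\tilde{\tau})}$ valid in $\mathcal{A}(T(\tilde{\tau}))$ for all $\tilde{\tau}\in[\tilde{\tau}_0,\tilde{\tau}_0+\epsilon]$; combined with downward closedness this gives a relative neighbourhood of $\tilde{\tau}_0$ inside $A$.

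First I would record the continuous dependence on $\tilde{\tau}$ of the whole construction of Section~\ref{Coordinates}. The map $\vartheta:\tilde{\tau}\mapsto T$ is continuous; the $\nabla r$ integral curve carrying $\tilde{\tau}$, the point $A$ on it where $r^2=4m$, the constant-$r$ curve through $A$ meeting the data at $D$, the affine length $\tau_{AD}$ in (\ref{timedef}), the mass $m_A$, the cusp point $B$ on $r=r_K$, and hence the coordinate system $\mathcal{C}_{\tilde{\tau}}$ obtained by transporting $\kappa=\gamma=\tfrac12$ from $A$ to $B$, then $\kappa=\tfrac12$ along the ingoing ray through $B$, then $\nu=-\tfrac12(1-\mu)$ on $\overline{BC}$ --- all of these depend continuously, indeed smoothly, on $\tilde{\tau}$, because each is produced by flowing along a vector field or solving a transport equation whose base point moves continuously with $\tilde{\tau}$ (here one must argue intrinsically on $\mathcal{Q}$, since the near-horizon characteristic rectangle of $\mathcal{A}(T)$ may sit at $u=\infty$ in the Eddington--Finkelstein chart). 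Consequently $r^\star_K$, $r^\star_{cl}$, $M_A$, $m_{\min}$, $m_{\max}$, the region $\mathcal{A}(T(\tilde{\tau}))$ itself, and every integrand entering (\ref{fibor})--(\ref{intebound2}) and (\ref{ekbfir}) --- which involve only the continuous quantities $B$, its first derivatives, $r$, $\nu$, $\Omega$, and the Hawking mass --- depend continuously on $\tilde{\tau}$ and on the spacetime point, uniformly over a fixed compact subset $\mathcal{K}\subset\mathcal{Q}$ disjoint from $i^+$, $i^0$ and $\Scri^+$ that contains $\mathcal{A}(T(\tilde{\tau}))$ for all $\tilde{\tau}$ in a neighbourhood of $\tilde{\tau}_0$. (For clause (\ref{intebound2}) the sub-region $\{r^\star\le r^\star_{cl}\}\cap\{u\le T-r^\star(T,r_K)\}$ is itself precompact, and for (\ref{hozass})--(\ref{ceilass}) the Hawking mass extends continuously to the horizon, so no unboundedness issue arises.)

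Next comes the extraction of a margin. Since $\tilde{\tau}_0\in A$, the statement $\mathcal{P}_{T(\hat{\tau})}$ holds for every $\hat{\tau}\le\tilde{\tau}_0$; letting $\hat{\tau}\uparrow\tilde{\tau}_0$ and using the continuity above together with compactness of the relevant regions (so that suprema and integrals are attained and vary continuously), there is an $\eta>0$ such that all clauses of $\mathcal{P}_{T(\tilde{\tau}_0)}$ hold with an $\eta$-improvement: (\ref{fibor}) with $c\sqrt{M}$ replaced by $(c-\eta)\sqrt{M}$, (\ref{tass}) sharpened to $(\tfrac12+\eta)\sqrt{M}<\sup t<(\tfrac32-\eta)\sqrt{M}$, (\ref{Kass}) with $c$ replaced by $c-\eta$, and the left-hand sides of (\ref{hozass}), (\ref{ceilass}), (\ref{intebound2}) bounded by $(1-\eta)$ times their respective right-hand sides. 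Finally I would pick $\epsilon$ small: for $|\tilde{\tau}-\tilde{\tau}_0|<\epsilon$ the region $\mathcal{A}(T(\tilde{\tau}))$ lies in an arbitrarily small $\mathcal{Q}$-neighbourhood of $\mathcal{A}(T(\tilde{\tau}_0))$, the coordinate system $\mathcal{C}_{\tilde{\tau}}$ is arbitrarily $C^1$-close to $\mathcal{C}_{\tilde{\tau}_0}$ on the fields, and therefore every quantity on either side of the inequalities of $\mathcal{P}_{T(\tilde{\tau})}$ differs by less than $\eta/2$ from its counterpart for $\tilde{\tau}_0$; the $\eta$-margin then forces all clauses of $\mathcal{P}_{T(\tilde{\tau})}$ to hold in $\mathcal{A}(T(\tilde{\tau}))$. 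Combining this with the hypothesis that $\mathcal{P}_{T(\hat{\tau})}$ holds for all $\hat{\tau}\le\tilde{\tau}_0$ gives $[\sqrt{M},\tilde{\tau}_0+\epsilon)\subset A$; as $\tilde{\tau}_0\in A$ was arbitrary, $A$ is open.

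The main obstacle is the second paragraph: carefully justifying that the future-normalised coordinate systems $\mathcal{C}_{\tilde{\tau}}$, built by the several-step geometric recipe of Section~\ref{Coordinates} (locate $r^2=4m$ on a $\nabla r$ curve, transport $\kappa,\gamma$, switch the normalisation at the cusp $B$, pass to Eddington--Finkelstein on $\overline{BC}$), vary continuously with $\tilde{\tau}$ in a topology strong enough to control the fields and their first derivatives, working intrinsically on $\mathcal{Q}$ because parts of $\mathcal{A}(T)$ may be pushed to infinite coordinate distance. Once that continuity is in hand, the rest is the standard openness argument, with compactness supplying the uniform margin.
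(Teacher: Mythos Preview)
Your proposal is correct and takes essentially the same approach as the paper: the paper's own proof is the single-sentence observation that $E^K_B(t)$ and all other quantities entering $\mathcal{P}$ depend continuously on the choice of $\tilde{\tau}$, together with the remark that the bootstrap assumptions involve only first derivatives and hence continuous quantities. You have carefully fleshed out what the paper leaves implicit---the downward-closedness of $A$, the need to argue intrinsically on $\mathcal{Q}$ because $u_{hoz}$ may be infinite in the Eddington--Finkelstein chart, and the step-by-step continuous dependence of the construction of $\mathcal{C}_{\tilde{\tau}}$ on $\tilde{\tau}$---and correctly identified the last of these as the only nontrivial point.
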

\begin{proof}
We observe that the integral $E^K_B\left(t\right)$ 
and in fact all the quantities appearing in statement $\mathcal{P}$ 
of the bootstrap assumptions depend continuously on the 
choice of $\tilde{\tau}$. 
\end{proof}
One should note in this context that all bootstrap assumptions involve only first derivatives of the fields and the area radius, and hence only continuous quantities (cf.~the remarks on the differentiability of the coordinate systems at the end of section \ref{Coordinates}).

The hard part of Theorem \ref{setS} consists in showing that $A$ is closed. This will be accomplished by improving the constants appearing in the inequalities of the bootstrap assumptions. 
\subsection{The choice of $r^\star_{cl}$} \label{rstarcldef}
In this subsection we define the quantity $r^\star_{cl}$ with respect to 
the coordinate system associated to the bootstrap region. Clearly, 
the location of $r^\star_{cl}$ will change between different coordinate
systems when the bootstrap region is altered. However, by bootstrap 
assumption \ref{boot1}, it will always stay close to a geometrically 
defined curve of constant $r$, which is determined below.

By Propositions \ref{omvclo} and \ref{simpsmall} we know 
that on $\mathcal{D}$ the bound
\begin{equation}
\sqrt{M} \Big|\frac{\Omega_{,v}}{\Omega} - \frac{m}{r^3}\Big| + \Big|\kappa-\frac{1}{2}\Big| \leq C\left(\epsilon\right)
\end{equation}
holds in any coordinate system $\mathcal{C}_{\tilde{\tau}}$. For any 
small number $\psi > 0$ we can hence choose the initial data small 
enough such that there exists an $r_Y < \frac{3}{2} \sqrt{M}$ satisfying
\begin{equation} \label{psicond1}
\max_{r \leq r_Y} \Big[\log \frac{r_Y}{r_-},
  r\frac{\Omega_{,v}}{\Omega} - \frac{1}{2}, 1-\mu\Big] < \psi \, .
\end{equation} 
Here Corollary \ref{rcurvcol} has been used for the bound on the first factor.
By bootstrap assumption \ref{boot1} the curve $r^\star_Y := \inf_{t
  < T} r^\star\left(t,r_Y\right)$ is always close to the geometrically
  defined curve $r_Y$. Hence we can additionally impose that
\begin{equation} \label{psicond2}
\frac{\sqrt{M}}{-r^\star_Y} < \psi
\end{equation}
holds. Next we are going to determine how small $\psi$ has to be. We 
define two functions $\alpha\left(r^\star\right)$ and 
$\beta\left(r^\star\right)$ in the coordinate system associated with
the bootstrap region as follows. 
\[
\input{alphabet2.pstex_t}
\]
The function $\alpha$ which is supported only for $r^\star \leq
-\frac{1}{2}\sqrt{M}$ is everywhere non-negative and defined by setting 
$\alpha\left(r^\star_C=\frac{T+r^\star_{K}-u_{hoz}}{2}\right)=1$ and
\begin{equation} \label{alphacond}
\alpha^\prime \left(r^\star\right) = \left\{ \begin{array}{lll}
0 & \textrm{ for $r^\star \leq r^\star_C$} \\
\frac{1}{\sqrt{M}} \tilde{\chi}\left(r^\star\right) & \textrm{ in $\left[r^\star_C,r^\star_K\right]$} \\
\frac{M^\frac{1}{4}}{\left(\sqrt{M}+|r^\star|\right)^{\frac{3}{2}}} & \textrm{in $[r^\star_K,
    r^\star_Y]$} \, .
\end{array} \right.
\end{equation}
with $M=m\left(T,r^\star=0\right)$ and $\tilde{\chi}$ a smooth positive interpolating function. In particular $\alpha=1$ on $\overline{DC}$.

The non-negative function $\beta$, again with support 
only for $r^\star \leq -\frac{1}{2}\sqrt{M}$, is defined by setting  $\beta\left(r^\star_D=\frac{2\sqrt{M}+r^\star_{K}-u_{hoz}}{2}\right) = 0$ and imposing that
\begin{equation} \label{bpcon}
\frac{24}{r\left(t,r^\star\right)} \Omega^2\left(t,r^\star\right)
\geq \beta^\prime \geq \frac{18}{r\left(t,r^\star\right)} \Omega^2\left(t,r^\star\right) 
\end{equation}
in all of $r^\star \leq r^\star_Y$. We can estimate the value of
$\beta$ on $r^\star_Y$ by
\begin{equation}
\beta\left(r^\star_Y\right) = 0 + \int_{D}^{r^\star_Y}
\beta_{,r^\star} dr^\star \leq  \int_{D}^{r^\star_Y}
24 \frac{\Omega^2}{r} dr^\star \leq  \int_{D}^{r^\star_Y}
24 \frac{\gamma \kappa}{\gamma +\kappa} \partial_{r^\star} \log r dr^\star
\leq 12 \log \frac{r_Y}{r_-} \nonumber \, .
\end{equation}
Hence $\beta$ remains controlled by the $r$-fluctuation 
in $r^\star \leq r^\star_Y$ and hence small by 
choosing $\psi$ above suitably small. Note that $\alpha$ and $\beta$ are 
in particular supported away from the curve $r^\star=0$. 

We finally choose the $\psi$ of (\ref{psicond1}), (\ref{psicond2}) 
so small that the inequalities
\begin{equation} \label{condY1}
\left(4\alpha \frac{\Omega_{,v}}{\Omega}r - \alpha^\prime r \right) > \max\Big[2
\left(\frac{1}{4\kappa} \alpha-\beta \lambda \right)^2, \frac{r}{4\sqrt{M}} \alpha, \frac{\kappa\left(1-\mu\right) r}{\sqrt{M}} \Big] \, ,
\end{equation}
\begin{equation} \label{condY2}
\alpha \geq \kappa \left(4\beta \lambda + 2r \beta^\prime + 8r \beta \frac{\Omega_{,v}}{\Omega}\right) + \max\Big[\frac{\kappa r \beta}{2\sqrt{M}}, \frac{r}{2\sqrt{M}}\Big] \, ,
\end{equation}
\begin{equation} \label{condY3}
\left(- \frac{r^\star -a}{r} \right) \Big[24\mu r \frac{\Omega_{,v}}{\Omega} + \left(1-\mu\right)\left(-70\kappa - 36\kappa \mu \right)\Big]  > 45 \, .
\end{equation}
hold in the region $r^\star \leq r^\star_Y$ 
and set $r^\star_{cl} = r^\star_Y - 2\sqrt{M}$.
\\ \\
{\bf Remark: } The constant $\psi$ and the corresponding $r_Y$ (and
the upper bound on initial data) can easily be computed explicitly
and is fixed once and for all. \emph{In particular it does not depend on the
size of the bootstrap region and the coordinate system that comes
along with it.} The curve $r^\star=r^\star_Y$ and 
hence $r^\star=r^\star_{cl}$ is then also fixed and always 
close to $r_Y$ by bootstrap assumption \ref{boot1} and the 
fact that $r_K$ is chosen much closer to the horizon than $r_Y$. 
Smallness for the bootstrap on the other hand, will be 
exploited via the $r^\star_K$-curve and by choosing the 
initial data even smaller to ``beat the constants'' which 
are introduced by the choice of $r^\star_{cl}$. 
\subsection{Cauchy stability}
For the closed-part we will have to improve the constant $c$ in 
the statement $\mathcal{P}$ (i.e.~the
bounds (\ref{Kass}-\ref{intebound2})) in the region 
$\mathcal{A}\left(T\right)$. The argument constitutes the body of the
paper. In this context we note that within the process of improving the
bootstrap assumptions there will be two sources of smallness. The 
first arises from the fact that $r=r_K$ can be chosen very close 
to the horizon. The second is obtained by selecting a $\nabla r$-slice 
belonging to some large $\tilde{\tau}_0$ (and hence large associated time $t_0$) up to which Cauchy stability holds by a suitable smallness assumption on the data. This is expressed precisely by the following
\begin{proposition} \label{Cauchystab}
For any small $\eta > 0$, $\tilde{\delta} > 0$, and 
any large $\tilde{\tau}_0$ (hence large 
associated time $T_0=\vartheta\left(\tilde{\tau}_0\right)$, 
with $\vartheta$ defined in (\ref{thetamap})) we can find 
an $r_K$ and a $\delta > 0$ such that the following statement 
is true: If the smallness assumptions 
(\ref{initassump}) and (\ref{initassump2}) of Theorem \ref{asymptoticstab} hold for $\delta$, then
\begin{enumerate}
\item The curve $r=r_K$ away from the horizon satisfies $ r_K^2 - r_-^2 < \eta$
\item in the coordinate system defined by
$\tilde{\tau}_\bullet \in [0,\tilde{\tau}_0]$ the $t$-coordinate 
of the subset $\tilde{S} \cap \{r \geq r_K\} \cap \{ u \geq u_0 \}$ of the initial data satisfies
\begin{equation} \label{tinb}
|t-\sqrt{M}| < \tilde{\delta} \sqrt{M}.
\end{equation}
\item In the coordinate system defined by
  $\mathcal{C}_{\tilde{\tau}_0}$, the statement $\mathcal{P}$ holds
  with constant $\tilde{\delta}$ (instead of $c$) in
  the region ${}^{u_{hoz}}\mathcal{D}^{r^\star_K, u_0}_{[2\sqrt{M},
  T_0]}$ and moreover, the pointwise bound
\begin{equation}
|B| + M^{-\frac{1}{4}} \Big|\frac{\zeta}{\nu}\Big| +
 M^{-\frac{1}{4}}|\theta| \leq \sqrt{M} \frac{\tilde{\delta}}{v_+}
\end{equation}
holds on any slice $\Sigma_{t}$ (cf.~(\ref{slicedef})) 
for $2\sqrt{M} \leq t \leq T_0$.
\end{enumerate}
\end{proposition}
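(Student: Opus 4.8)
The plan is to exploit the fact that all three assertions are local in the sense that they concern a \emph{fixed} finite region of the Penrose diagram determined by the (fixed) geometric time $\tilde\tau_0$, together with the $r=r_K$ curve, before any bootstrap is invoked. The strategy is therefore: (i) first fix the curve $r=r_K$ using the \emph{a-priori} estimates of Section \ref{Basic}, then (ii) invoke classical Cauchy stability for the characteristic (or Cauchy) problem of the quasilinear system (\ref{eom1})--(\ref{Bevol}) to conclude that the entire solution on the compact region $J^+(\tilde S_{r_K}) \cap J^-(\text{slice at } \tilde\tau_0)$ is $C^1$-close (in fact close in whatever higher norm one needs for the bootstrap quantities, which involve only first derivatives) to the Schwarzschild-Tangherlini solution of mass $M$, uniformly, provided $\delta$ is taken small enough depending on $\eta$, $\tilde\delta$, $\tilde\tau_0$. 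Item 1 is essentially immediate: by Corollary \ref{hozfluct} and the mass-fluctuation bound (\ref{inisma}), the total variation of $r$ on the horizon is $O(\epsilon(\delta))$, and choosing the curve $r=r_K$ via the geometric condition that $1-\mu$ equal a suitable small constant on it (as in the definition surrounding (\ref{etadef})) forces $r_K^2 - r_-^2 = O(\epsilon(\delta)^{1/3})$, hence $<\eta$ for $\delta$ small.

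For item 2, I would argue as follows. In the coordinate system $\mathcal{C}_{\tilde\tau_\bullet}$ the $t$-coordinate is normalized so that $t=T(\tilde\tau_\bullet)$ on the $\nabla r$-curve carrying the label $\tilde\tau_\bullet$, with $r^\star=\frac{v-u}{2}=0$ at the point $A$ where $r^2=4m$ on that curve, and $\kappa=\gamma=\frac12$ along it up to $r_K$. The point is that (\ref{timedef}) defines $T$ geometrically via the affine length $\tau_{AD}$ of a constant-$r$ arc connecting $A$ to the data $D$; as $\delta\to 0$ the spacetime converges to Schwarzschild-Tangherlini of mass $M$, in which this affine length is $O(\sqrt M)$ times a bounded function of $\tilde\tau_\bullet/\sqrt M$ — but since $\tilde\tau_\bullet \le \tilde\tau_0$ is a \emph{fixed} bounded range, $\tau_{AD}$ stays comparable to $\sqrt M$ and the $t$-coordinate of the portion of $\tilde S$ in $\{r\ge r_K\}\cap\{u\ge u_0\}$ is forced within $\tilde\delta\sqrt M$ of $\sqrt M$ by integrating $\nabla r(u+v) = \frac{2(1-\mu)}{\Omega^2}(\gamma-\kappa)$ (which is $O(\delta)$ once Proposition \ref{simpsmall} gives $|\kappa-\frac12|\le C(\delta)$ and one controls $\gamma$ by Proposition \ref{gamma}) along the data and comparing with the exactly-Schwarzschild picture where the data slice is genuinely $t=\sqrt M$. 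One has to be a little careful that the data slice is assumed to coincide with a $\nabla r$-curve for $r\ge r_K$; this is exactly the hypothesis of Theorem \ref{asymptoticstab}, which makes the comparison clean.

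For item 3 I would simply check the six clauses of statement $\mathcal{P}$ one at a time in the region ${}^{u_{hoz}}\mathcal{D}^{r^\star_K,u_0}_{[2\sqrt M, T_0]}$, using that every quantity occurring in them — the combination in (\ref{fibor}) between $r^\star$ and $r$, the $t$-coordinate bound (\ref{tass}), the weighted energy $E^K_B$, the Hawking-mass differences along the horizon and along $r^\star=r^\star_{cl}$, and $\tilde F^Y_B$ — is a continuous functional of the $1$-jet of $(r,B,\Omega)$ on this fixed compact region, and every one of them vanishes identically for the exact Schwarzschild-Tangherlini solution of mass $M$ (for (\ref{fibor}) this is the remark that $r^\star$ is then the Regge–Wheeler tortoise coordinate; for the $m$-differences and $E^K_B$ and $\tilde F^Y_B$ this is because $B\equiv 0$; for (\ref{tass}) one uses item 2 with $\tilde\tau_0$ the relevant label). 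Hence Cauchy stability in $C^1$ makes each functional smaller than $\tilde\delta$ (resp.\ $\tilde\delta\sqrt M$, $\tilde\delta M^2/v_+^2$, etc.) once $\delta$ is small. The pointwise bound on $|B| + M^{-1/4}|\zeta/\nu| + M^{-1/4}|\theta|$ on the slices $\Sigma_t$ for $2\sqrt M\le t\le T_0$ is likewise a $C^1$-smallness statement, but now with the explicit weight $\tilde\delta\sqrt M/v_+$; since $v$ is bounded above on the whole compact region by a constant depending only on $\tilde\tau_0$ and the geometry, $1/v_+$ is comparable to $1$ there, so the weighted bound follows from the unweighted $C^1$-smallness (combined with the estimates of Section \ref{Basic}, which already give $|B|\le C(\delta)\sqrt M/r$ and $|\zeta/\nu|,|\theta|\le C(\delta)$ everywhere).

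The main obstacle is making the Cauchy-stability input rigorous and uniform: one must verify that the characteristic initial-value problem for the system on the finite region is well-posed with continuous dependence in a topology strong enough to control first derivatives of $r$, $B$ and $\Omega$ (including the mildly singular behaviour near the horizon, where $1-\mu$ can degenerate), \emph{and} that the coordinate systems $\mathcal{C}_{\tilde\tau_\bullet}$ themselves — which are defined by geometric normalizations referencing the future curve $r^2=4m$ — depend continuously on the data, so that ``close in $\mathcal{C}_{\tilde\tau_\bullet}$'' is a meaningful notion. The cleanest route is to set up the comparison in a fixed auxiliary coordinate system (e.g.\ one adapted to $\tilde S$ and the outgoing null direction), prove $C^1$ convergence there by a standard energy/Gronwall argument for the quasilinear system, and then transfer to $\mathcal{C}_{\tilde\tau_\bullet}$ by noting that the coordinate transformation is built from $\kappa,\gamma$ and the affine-length functionals, all of which converge by the same argument. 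The remaining clauses are then routine continuity checks.
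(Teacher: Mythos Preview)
Your overall strategy---use the a-priori estimates of Section~\ref{Basic} together with compactness of the region up to geometric time $\tilde\tau_0$ and continuity/Cauchy stability---is exactly the paper's approach, and your treatment of items~1 and~3 matches what the paper does (the paper too simply observes that each clause of $\mathcal P$ is a continuous functional of the $1$-jet, vanishing on Schwarzschild, and that $v$ is uniformly bounded on the region so the $v_+$-weights are harmless).

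There is, however, a real imprecision in your item~2. You write that ``$\tau_{AD}$ stays comparable to $\sqrt M$'', but this is false: $\tilde\tau_0$ is allowed to be arbitrarily large, and $\tau_{AD}$ grows linearly with $\tilde\tau_\bullet\le\tilde\tau_0$. The actual mechanism, which the paper makes explicit, is a \emph{cancellation}: by definition $T_\bullet=\sqrt M+\sqrt 2\,\tau_{AD}$, and integrating $\frac{dt}{d\tau}=\frac{\kappa+\gamma}{4\kappa\gamma}\frac{1}{\sqrt{1-\mu}}$ along the constant-$r$ curve from $A$ to $D$ gives $t(A)-t(D)=\sqrt 2\,\tau_{AD}+O\!\bigl(C(\epsilon)\,\tau_{AD}\bigr)$ using $|\kappa+\gamma-1|,\,|(1-\mu)^{-1/2}-\sqrt 2|\le C(\epsilon)$ from Propositions~\ref{simpsmall} and~\ref{gamma}. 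Hence $t(D)=\sqrt M+O\!\bigl(C(\epsilon)\,\tilde\tau_0\bigr)$, and one then chooses $\delta$ small enough that $C(\epsilon(\delta))\cdot\tilde\tau_0<\tilde\delta$. The step you describe (integrating $\nabla r(u+v)$ along the data slice, which is a $\nabla r$-curve) only controls the \emph{variation} of $t$ along $\tilde S$; it does not by itself anchor $t(D)$ near $\sqrt M$. Without the cancellation above---and in particular without noting that the smallness of $\epsilon$ must beat the \emph{large} constant $\tilde\tau_0$---your argument for item~2 is incomplete.
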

\begin{proof} The first assertion is the statement of 
Corollary \ref{rcurvcol}. For the second statement 
consider the coordinate system 
$\mathcal{C}_{T=\vartheta\left(\tilde{\tau}_\bullet\right)}$ 
for a given $\tilde{\tau}_\bullet \in [0,\tilde{\tau}_0]$. The 
vectorfield $\nabla r$ introduced in section
 \ref{Coordinates} can be expressed in the 
associated $(t,r^\star)$ coordinates
\begin{equation}
\nabla r = \frac{1}{4\kappa \gamma} \left[\left(\gamma-\kappa\right)
  \partial_t + \left(\gamma+\kappa\right)
  \partial_{r^\star} \right] \,
\end{equation}
as can the vectorfield $\nabla_{\perp} r$ which is defined to be
orthogonal to $\nabla r$ and whose integral curves are 
the curves of constant area radius $r$:
\begin{equation}
\nabla_{\perp} r = \frac{1}{4\kappa \gamma} \left[\left(\gamma+\kappa\right)
  \partial_t + \left(\gamma-\kappa\right)
  \partial_{r^\star} \right] \, .
\end{equation}
The rescaled vectorfields
\begin{equation} \label{RGvector}
R = \frac{1}{\sqrt{1-\mu}} \nabla r \textrm{ \ \ \ and \ \ \ }  G = \frac{1}{\sqrt{1-\mu}} \nabla_\perp r
\end{equation} \
satisfy the orthonormality relations
\begin{equation}
g\left(R,R\right) = 1 \textrm{ \ \ \ and \ \ \ } g\left(G,G\right) =
-1 \textrm{ \ \ \ and \ \ \ } g\left(R,G\right)=0 \, .
\end{equation}
Let ${\varrho}$ be the affine parameter along $R$ and $\tau$ 
the affine parameter along $G$. In the following, 
we frequently refer to figure \ref{codiag} 
of section \ref{Coordinates}. At the point $A$ we 
have $t=T_\bullet=\sqrt{M}+\sqrt{2}\tau_{AD}$ by 
definition. We would like to estimate the value of $t$ 
at the point $D$ and compare it to $1$, which is the value 
of $t$ if $\tilde{\tau}_\bullet=0$, $T_\bullet=\sqrt{M}$ and the coordinates are
defined on initial data. The rate at which $t$ changes in affine
parameter along the integral curve of $\nabla_\perp r$ going through $A$ 
is given by
\begin{equation} \label{upsa}
\frac{dt}{d\tau} = \frac{1}{4\kappa \gamma} \left(\kappa + \gamma
\right) \frac{1}{\sqrt{1-\mu}} \, .
\end{equation}
We will integrate (\ref{upsa}) from $\tau=0$ to 
$\tau=\tau_{AD}$ with initial 
condition $T_0 = \sqrt{M} +\frac{\tau_{AD}}{\sqrt{1-\mu_A}}$ 
at $A$. By Proposition \ref{simpsmall} and \ref{gamma} the estimates 
\begin{equation}
|\kappa + \gamma - 1 | \leq C\left(\epsilon\right) \textrm{ \ \ \ and \ \ \ }
 \Big| \frac{1}{\sqrt{1-\mu}} -\sqrt{2} \Big| \leq C\left(\epsilon\right)
\end{equation}
hold along the curve. Given the fixed $\tilde{\tau}_0$ we choose the initial data so small that $\tilde{\tau}_0 \cdot C\left(\epsilon\right)$ is as small as we may wish. Hence $\tau_{AD}\cdot C\left(\epsilon\right)$ is small for any $\tau_{AD}\left(\tilde{\tau}_\bullet\right)$ with $\tilde{\tau}_\bullet \leq \tilde{\tau}_0$.\footnote{Note that $\tilde{\tau}_\bullet$ is close to $\tau_{AD}$, since the curves $r^2=4M_F$ and $r^2=4m_A$ converge to one another for the initial data going to zero} With these choices the estimate
\begin{equation}
| t \left(D\right) - \sqrt{M} | \leq C\left(\epsilon\right)
\end{equation}
simply follows from integrating (\ref{upsa}). 

In a completely analogous fashion, by considering $\frac{dr^\star}{d\tau}$ and using that $|\gamma - \kappa| \leq C\left(\epsilon\right)$ we can show that the point $r^\star=0$ on the initial data is close to $r=2\sqrt{M}$: $| r\left(t_{data},0\right)- 2\sqrt{M} | \leq C\left(\epsilon\right) \sqrt{M}$. 

Before we finally estimate how $t$ changes along the integral curve of 
$R$ through $D$ (i.e.~the location of the initial data), we 
derive a rough estimate for the relation of $r$ and $r^\star$. 
Consider the vectorfield
\begin{equation}
L = \frac{1}{\Omega} \partial_{r^\star} \, , 
\end{equation}
whose integral curves are the curves of constant $t$. The coordinate
$r^\star$ changes along such a curve (affine parameter $l$) by
\begin{equation}
\frac{dr^\star}{dl} = \frac{1}{\Omega} = \frac{1}{\sqrt{4\kappa \gamma \left(1-\mu\right)}} \, .
\end{equation}
Integrating from $r^\star=0$, where $r \approx 2\sqrt{M}$ outwards to infinity noting 
that $1-\mu > \frac{4}{9}$ and that both $\kappa$ and $\gamma$ 
are close to $\frac{1}{2}$ in the region under consideration, we obtain
\begin{equation} \label{rstargrowth}
\frac{9}{10}l \leq r^\star \leq \frac{9}{4\sqrt{2}} l \, .
\end{equation}
On the other hand, the area radius changes according to
\begin{equation}
\frac{dr}{dl} = \frac{1}{\Omega}\left(\lambda - \nu\right) = \frac{1}{2}\frac{\sqrt{1-\mu}}{\sqrt{\gamma \kappa}} \left(\kappa+\gamma\right)
\end{equation}
leading to the estimate
\begin{equation} \label{rgrowth}
2\sqrt{M} - C\left(\epsilon\right) \sqrt{M} + \frac{2}{5} l \leq r \leq 2\sqrt{M} + C\left(\epsilon\right) \sqrt{M} +
\frac{11}{10}l \, .
\end{equation}
Combining (\ref{rstargrowth}) and (\ref{rgrowth}) yields the relation
\begin{equation} \label{rstarrb}
\frac{9}{11} r - c_1 \leq r^\star \leq \frac{45}{8\sqrt{2}} r  \textrm{ \ \ \ \ with $c_1 = \frac{9}{11}\left(2\sqrt{M}-C\left(\epsilon\right) \sqrt{M}\right)$} 
\end{equation}
along any curve of constant time in the region $r \geq 2\sqrt{M}$.  In particular, if a quantity decays
in $r$ in the asymptotic region, it decays in $r^\star$ as well.

Finally, we can consider the integral curve of $R$ through $D$ on which
 the initial data is defined. We want to prove that 
the value of $t$ does not change much along that curve (at least 
up to the area radius $\tilde{R}$ where the support ends).
First we show that the horizon is a finite length of affine
parameter along $\nabla r$ away from $D$. 
Namely, since the $r$-component of the 
vectorfield $R$ is given by $R^r = \sqrt{1-\mu}$, we have the equation
\begin{equation}
\frac{dr}{d\varrho} = \sqrt{1-\mu} \, .
\end{equation} 
Starting at $r\left(0\right)=2\sqrt{m_A}$ and integrating inwards to the
point where the curve intersects the horizon we find
\begin{eqnarray}
0 \leq -\varrho_{hoz} &=& \int_{r_{hoz}}^{2\sqrt{m_A}} dr \frac{1}{
  \sqrt{1-\mu}} =  \int_{r_{hoz}}^{2\sqrt{m_A}} dr 2 \cdot
\partial_r \left(\sqrt{1-\mu}\right) \frac{r^3}{4m-2r m_{,r}}
dr \nonumber \\ &\leq& \left(4\sqrt{m_A} + C\left(\delta\right)\right)
  \left(\frac{1}{\sqrt{2}} - C\left(\delta\right)\right) 
\end{eqnarray}
for some small $\delta$. On the other hand, we can integrate 
outwards from $D$ along $\nabla r$ to a point where $r=\tilde{R}$. 
From (\ref{rgrowth}) we know that the affine parameter is 
controlled by the $r$ value along the curve, hence for large $\tilde{R}$
\begin{equation}
\varrho \leq \frac{6}{5}\tilde{R} \, .
\end{equation}
Finally, $t$ changes along the curve according to 
\begin{equation} \label{talongnabr}
\frac{dt}{d\varrho} = \frac{1}{4\gamma \kappa} \left(\gamma -
\kappa\right) \frac{1}{\sqrt{1-\mu}} \leq 0 \, .
\end{equation}
Within $[r_K, 2\sqrt{m_A}]$ and $[2\sqrt{m_A}, \tilde{R}]$ 
we can use the pointwise bound
\begin{equation}
\left(\gamma - \kappa\right) \frac{1}{\sqrt{1-\mu}}  \leq
C\left(\epsilon\right) 
\end{equation}
following from the results of section \ref{Basic}
and choose $\epsilon$ (hence the initial data) so small that 
$C\left(\epsilon\right)$ exceeds the support radius $\tilde{R}$:
\begin{equation}
|t_D - t| \leq C\left(\epsilon\right) \frac{6}{5} \tilde{R} 
\leq C\left(\epsilon\right) \, .
\end{equation}
In this way we can make the difference in $t$ small in the region between 
$r=r_K$ and $r=\tilde{R}$ on the $\nabla r$ integral curve.  \\

The pointwise bound of statement $3$ follows directly
from Proposition \ref{simpsmall} together with the 
fact that the quantity $v$ is finite
in the region under consideration.

For (\ref{fibor}) of statement $\mathcal{P}$ we observe 
that on $\{t=T_0\} \cap \{r^\star \geq
r^\star_K\}$ we have $\partial_t r = 0$ by definition. From 
$\partial_{r^\star} r = \left(\kappa + \gamma\right)
\left(1-\mu\right)$ we derive, using Propositions \ref{simpsmall} 
and \ref{gamma}, the estimate $1$ of statement $\mathcal{P}$ on
$t=T_0$ for an arbitrary good constant by a suitable smallness 
assumption on the data. However, along a curve of constant 
$r^\star \geq r^\star_K$, the value of $r$ changes only by an amount 
which can be made small by suitable choice of initial data, as is 
seen from the estimate 
\begin{equation}
|r\left(t_b,r^\star\right) - r\left(t_a,r^\star\right)| = \Big|\int_{t_a}^{t_b}
 \left(\lambda+\nu\right) dt \Big| \leq \int_{2\sqrt{M}}^{T_0}
 \left(1-\mu\right) \left(\kappa - \gamma\right) dt \leq
 C\left(\epsilon\right) \cdot T_0
\end{equation}
is small for any $t_a, t_b \in \left[2\sqrt{M},T_0\right]$ if the data 
is small enough.

The second bootstrap assumption has been dealt with in 
statement $2$ of Proposition \ref{Cauchystab} already.

The third bootstrap assumption involves integrals over compact
intervals with the integrand containing $B$ and its derivatives. The 
integral is small on $t=2\sqrt{M}$ by assumption (\ref{initassump})
and Cauchy stability. Again from Cauchy stability it 
follows that $E^K_B$ will stay as small as we may wish up 
to the chosen $T=\vartheta\left(\tilde{\tau}_0\right)$ slice if 
we only chose the data small enough. This is perhaps most easily
  seen directly from the fact that $u$ and $v$ are always finite 
in the region of integration, and taking into account the 
pointwise bounds on $B$, $\frac{\zeta}{\nu}, \theta$ established in Proposition
  \ref{simpsmall}. Put together it follows that the quantity
$E^K_B$ can be made smaller than $\tilde{\delta}$ for a finite 
$t$ slice by an appropriate assumption on the data. 

The bootstrap assumptions
involving the energy can be satisfied by choosing the data
sufficiently small (recall the a-priori bound on the mass
fluctuation (\ref{inisma})). Finally, assumption (\ref{intebound2}) 
follows from the pointwise bound on $\frac{\zeta}{\nu}$ (cf. Proposition
\ref{simpsmall}) and realizing that integrating the 
quantity $\nu$ in $u$ yields a finite result.  
Hence, in the coordinate system defined by $\tilde{\tau}_0$, all 
inequalities in the statement $\mathcal{P}$ can be 
brought to hold with constant $\tilde{\delta}$ in 
the region ${}^{u_{hoz}}\mathcal{D}^{r^\star_K, u_0}_{[2\sqrt{M},
  T_0]}$.
\end{proof}
\begin{corollary} \label{empty}
The set $A$ defined in (\ref{Adef}) is non-empty.
\end{corollary}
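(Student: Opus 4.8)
The plan is to read the statement off Proposition \ref{Cauchystab}: all the analytic content sits there, and the corollary is only bookkeeping at the left endpoint of the parameter range. First I would note that $A$ is a down-set in $\left[\sqrt{M},\infty\right)$: if $\tilde{\tau}\in A$ and $\sqrt{M}\leq\hat{\tau}\leq\tilde{\tau}$ then $\hat{\tau}\in A$, directly from the defining condition (\ref{Adef}). Hence it suffices to place the minimal value $\tilde{\tau}=\sqrt{M}$ in $A$, and for this value the quantifier ``for all $\hat{\tau}\leq\tilde{\tau}$'' in (\ref{Adef}) reduces to $\hat{\tau}=\sqrt{M}$ alone. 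So the task collapses to showing that $\mathcal{P}_{T_0}$ holds in $\mathcal{A}\left(T_0\right)$, where $T_0=\vartheta\left(\sqrt{M}\right)$; this region is non-degenerate since $T_0>2\sqrt{M}$ by the remark following (\ref{Adef}) (cf.\ (\ref{timedef})).

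Next I would fix $\tilde{\delta}$ with $0<\tilde{\delta}<\min\left(c,\frac{1}{2}\right)$ and apply Proposition \ref{Cauchystab} with this $\tilde{\delta}$, an arbitrary small $\eta$, and $\tilde{\tau}_0=\sqrt{M}$. This yields a curve $r=r_K$ and a $\delta>0$ such that, whenever the initial data satisfy (\ref{initassump}) and (\ref{initassump2}) with this $\delta$, statement $3$ of the proposition holds; in particular, in the coordinate system $\mathcal{C}_{\tilde{\tau}_0}$ the statement $\mathcal{P}_{T_0}$ is valid in $\mathcal{A}\left(T_0\right)$ with the constant $\tilde{\delta}$ in place of $c$. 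It then remains to verify that ``$\mathcal{P}$ with constant $\tilde{\delta}$'' implies ``$\mathcal{P}$'' as stated. The bounds (\ref{fibor}), (\ref{Kass}), (\ref{hozass}) and (\ref{ceilass}) are linear in the bootstrap constant, so passing from $\tilde{\delta}$ to the larger $c$ only weakens them; the bound (\ref{tass}) contains no $c$ and follows from statement $2$ of Proposition \ref{Cauchystab} applied with $\tilde{\tau}_\bullet=\tilde{\tau}_0$, which gives $\left|t-\sqrt{M}\right|<\tilde{\delta}\sqrt{M}<\frac{1}{2}\sqrt{M}$ on $\tilde{S}\cap\{r^\star\geq r^\star_K\}\cap\{u\geq u_0\}$; and the bound (\ref{intebound2}) is formulated with the fixed constant $C_L=\sup_{r^\star\geq r^\star_{cl}}\left(1-\mu\right)^{-1}$ and is itself part of statement $3$. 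Hence $\mathcal{P}_{T_0}$ holds in $\mathcal{A}\left(T_0\right)$, so $\sqrt{M}\in A$ and $A\neq\emptyset$.

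I do not expect any genuine difficulty here, since Proposition \ref{Cauchystab} has already carried out the substantive Cauchy-stability argument. The only mild points of care are: (i) keeping track of which inequalities in $\mathcal{P}$ scale linearly with the bootstrap constant $c$, so that the choice $\tilde{\delta}<c$ suffices for those, while (\ref{tass}) and (\ref{intebound2}) are handled separately --- the former via statement $2$ together with $\tilde{\delta}<\frac{1}{2}$, the latter via the fact that $C_L$ is fixed and independent of the perturbation; and (ii) confirming that $\tilde{\tau}_0=\sqrt{M}$ is an admissible parameter, i.e.\ that it produces a non-degenerate bootstrap region, which follows from $T_0>2\sqrt{M}$.
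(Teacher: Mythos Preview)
Your argument is correct. Both your proof and the paper's rest entirely on Proposition~\ref{Cauchystab}; the difference is that you specialize to $\tilde{\tau}_0=\sqrt{M}$ and read off $\sqrt{M}\in A$ directly from statement~3 (in the single coordinate system $\mathcal{C}_{\sqrt{M}}$), whereas the paper keeps $\tilde{\tau}_0$ general and shows the stronger inclusion $[\sqrt{M},\tilde{\tau}_0]\subset A$ by invoking statement~2 to transfer $\mathcal{P}$ from $\mathcal{C}_{\tilde{\tau}_0}$ to each $\mathcal{C}_{\tilde{\tau}_\bullet}$ with $\tilde{\tau}_\bullet\leq\tilde{\tau}_0$. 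Your route is cleaner for non-emptiness alone, since at the left endpoint no coordinate-system transfer is needed; the paper's version, on the other hand, already sets up the large-$\tilde{\tau}_0$ Cauchy-stability buffer that is used later when closing the bootstrap. One small remark: the down-set observation, while true, is not actually used in your argument---showing $\sqrt{M}\in A$ already finishes, and the ``for all $\hat{\tau}\leq\tilde{\tau}$'' condition at $\tilde{\tau}=\sqrt{M}$ is vacuous apart from $\hat{\tau}=\sqrt{M}$ itself.
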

\begin{proof}
By statement $2$ of Proposition \ref{Cauchystab} for any
$\tilde{\tau}_{\bullet} \leq \tilde{\tau}_0$ the coordinates of a point
in the associated region
$\mathcal{A}\left(\vartheta\left(\tilde{\tau}_\bullet\right)\right)$ will
be close to the coordinates of the same point in the coordinate system
defined by $\tilde{\tau}_0$. Hence the statement $\mathcal{P}$ holds with
constant $\tilde{\delta}$ in
$\mathcal{A}\left(\vartheta\left(\tilde{\tau}_\bullet\right)\right)$ for
all $\tilde{\tau}_\bullet \leq \tilde{\tau}_0$ by choosing $\delta$
small enough. Therefore $[\sqrt{M},\tilde{\tau}_0] \subset A$.
\end{proof}
In order to be useful in conjunction with the bootstrap, statement 
$3$ of Proposition \ref{Cauchystab} has to hold in any coordinate 
system $\mathcal{C}_{\tilde{\tau}}$ associated to a $\tilde{\tau} >
\tilde{\tau}_0$ with $\tilde{\tau} \in A$. The argument is postponed to 
Proposition \ref{Cauchystab2}, after we have derived appropriate decay
bounds from the bootstrap assumptions in the next section.

Proposition \ref{Cauchystab} also provides us with two sources of 
smallness. In particular it justifies the following algebra 
for constants:
\begin{equation}
C\left(r^\star_{cl}\right) \eta = \tilde{\delta}
\end{equation}
\begin{equation}
\frac{C\left(r^\star_K\right)}{t_0} = \tilde{\delta}
\end{equation}
Namely, after we have chosen $\psi>0$ (cf.~(\ref{psicond1}) and
(\ref{psicond2})) to determine $r^\star_{cl}$, we can choose $\eta$ so
small that it ``beats'' any constant depending on $r^\star_{cl}$, and
finally $t_0$ so large that  $\frac{1}{\sqrt{M}}
\frac{C\left(r^\star_K\right)}{t_0}$ is as small as we may wish. (Of
course, the restrictions on the initial data get stronger and stronger 
in this process.) Consequently, everywhere that the formulation ``we choose $t_0$ so large that''  is used in the paper, we always have an application of Proposition \ref{Cauchystab} in mind.
\section{Analyzing the bootstrap assumptions} \label{analboot}
In this section we are going to derive certain decay bounds for the energy, the squashing field and some other quantities. These estimates will be useful for late times, i.e.~they are to be understood in conjunction with Proposition \ref{Cauchystab} where we can choose such a late time. 
The time $t_0$ up to which Cauchy stability holds is chosen in particular so large that for $t \geq t_0$ we have $v \sim t$ in the region $r^\star_{cl} \leq r^\star \leq \frac{9}{10}t$ and that $v \sim t \sim r^\star$ in the region $r^\star \geq \frac{9}{10}t$. Moreover $v_0=t_0+r^\star_{cl} >> \sqrt{M}$. 
All statements about decay in this section are then valid in the subregion $\{t \geq t_0\} \cap \{v \geq v_0\}$ of the bootstrap region.

\subsection{Energy decay}
From assumption (\ref{Kass}), we can directly derive 
$\frac{1}{t^2}$ decay of the energy in certain regions for late times.
\begin{proposition} \label{EKcontrol}
On a hypersurface of constant $t$ we have the bounds
\begin{eqnarray}
\frac{2\pi^2}{M} \int_{r^\star_{K}}^{t-u_0}
\left(-2\nu\right) \frac{\left(r^\star-a\right)^2}{r^2}B^2 r^3 dr^\star
\leq E^K_B \left(t\right) \, ,
\end{eqnarray}
\begin{eqnarray} \label{Bhw}
\frac{2\pi^2}{M} \int_{r^\star_{K}}^{t-u_0} 
\left(-2\nu\right)\frac{t^2}{r^2}B^2 r^3 dr^\star 
\leq E^K_B \left(t\right)
\end{eqnarray}
and 
\begin{eqnarray} \label{derhw}
 \frac{2\pi^2}{M}\int_{r^\star_{K}}^{t-u_0} \Big(\left(u+a\right)^2 \left(\partial_u
 B\right)^2 + \left(v-a\right)^2 \left(\partial_v
 B\right)^2 \Big) r^3 dr^\star \leq 2 E^K_B \left(t\right) \, .
\end{eqnarray}
\end{proposition}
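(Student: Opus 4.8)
The plan is to deduce all three inequalities directly from the definition (\ref{ekbfir}) of $E^K_B(t)$. The point is that, on the region over which the integral in (\ref{ekbfir}) is taken, every summand of the integrand is pointwise non-negative; once this is known, the first bound and (\ref{Bhw}) are obtained simply by discarding all summands but one, while (\ref{derhw}) follows from an algebraic identity relating the vectorfields $S,\underline{S}$ to the weighted null derivatives $(u+a)\partial_u$ and $(v-a)\partial_v$, together with one completion of a square.

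First I would check $1+2\nu\geq 0$ on the domain of integration. That domain lies in $\{r^\star\geq r^\star_K\}$ and, since the arcs in (\ref{Kass}) are of the form $\{t=\tilde{T}\}$ with $\tilde{T}<T$, in $\{t\leq T\}$; because $\partial_{r^\star}r=(\kappa+\gamma)(1-\mu)>0$ and $r^\star_K\geq r^\star(\tilde{T},r_K)$, every such point in fact has $r\geq r_K$. Hence Proposition \ref{gamma} applies and gives $\gamma\leq\tfrac12$; combined with $0\leq 1-\mu\leq 1$ this yields $-\nu=\gamma(1-\mu)\leq\tfrac12$, i.e. $1+2\nu\geq 0$. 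Since also $-2\nu>0$ by $r_{,u}<0$ (\cite{DafHol}), every term in the integrand of $E^K_B(t)$ is non-negative. Retaining only the summand $(-2\nu)\tfrac{(r^\star-a)^2}{r^2}B^2 r^3$ gives the first bound, and retaining only $(-2\nu)\tfrac{t^2}{r^2}B^2 r^3$ gives (\ref{Bhw}).

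For (\ref{derhw}) I would use that, from $S=t\partial_t+(r^\star-a)\partial_{r^\star}$, $\underline{S}=(r^\star-a)\partial_t+t\partial_{r^\star}$ and the operator identities $\partial_t+\partial_{r^\star}=2\partial_v$, $\partial_t-\partial_{r^\star}=2\partial_u$, one has $S+\underline{S}=2(v-a)\partial_v$ and $S-\underline{S}=2(u+a)\partial_u$, whence
\[
(u+a)^2(B_{,u})^2+(v-a)^2(B_{,v})^2=\tfrac14\big[(SB-\underline{S}B)^2+(SB+\underline{S}B)^2\big]=\tfrac12\big[(SB)^2+(\underline{S}B)^2\big].
\]
Writing $\beta:=-2\nu\in[0,1]$, $p:=\tfrac32\tfrac{r^\star-a}{r}B$, $q:=\tfrac32\tfrac{t}{r}B$, so that $\tfrac{(r^\star-a)^2}{r^2}B^2=\tfrac49 p^2$ and $\tfrac{t^2}{r^2}B^2=\tfrac49 q^2$, the integrand of $E^K_B(t)$ divided by $r^3$ equals, after collecting terms via $(1-\beta)+\beta=1$,
\[
(SB)^2+2\beta\,(SB)p+\tfrac{13}{9}\beta p^2+(\underline{S}B)^2+2\beta\,(\underline{S}B)q+\tfrac{13}{9}\beta q^2 .
\]
The quadratic $\tfrac{9}{13}X^2+2\beta Xp+\tfrac{13}{9}\beta p^2$ has discriminant $4\beta p^2(\beta-1)\leq 0$, so $(SB)^2+2\beta(SB)p+\tfrac{13}{9}\beta p^2\geq\tfrac{4}{13}(SB)^2$, and likewise for the $\underline{S}B$ block; hence the integrand of $E^K_B(t)$ is $\geq\tfrac{4}{13}\big[(SB)^2+(\underline{S}B)^2\big]r^3=\tfrac{8}{13}\big[(u+a)^2(B_{,u})^2+(v-a)^2(B_{,v})^2\big]r^3$. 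Integrating over $r^\star$ and using $\tfrac{13}{8}<2$ gives (\ref{derhw}).

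There is no serious obstacle here: the statement is elementary algebra once the sign of $1+2\nu$ is in hand. The only place where care is needed is verifying that Proposition \ref{gamma} is applicable on the entire domain of integration, i.e. that the arcs entering (\ref{Kass}) lie in $\{t\leq T\}\cap\{r\geq r_K\}$ — which is exactly how the bootstrap region $\mathcal{A}(T)$ was set up.
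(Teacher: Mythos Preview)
Your proposal is correct and follows essentially the same approach as the paper: the first two bounds come from discarding non-negative summands in the definition of $E^K_B(t)$, and (\ref{derhw}) from the identity $(SB)^2+(\underline{S}B)^2=2\big[(u+a)^2(\partial_u B)^2+(v-a)^2(\partial_v B)^2\big]$ together with an elementary quadratic-form estimate. Your explicit check that $1+2\nu\geq 0$ via Proposition~\ref{gamma} makes transparent a step the paper leaves implicit, and your discriminant argument in fact gives the slightly sharper constant $\tfrac{13}{8}<2$, whereas the paper uses the cruder bound $X^2\leq 4(X+p)^2+\tfrac{4}{3}p^2$ applied to the $(-2\nu)$-weighted part.
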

\begin{proof}
The first two bounds follow directly from (\ref{ekbfir}). For the last
inequality note that $2\left(\partial_u B\right)^2 \left(u+a\right)^2
+ 2\left(\partial_v B\right)^2 \left(v-a\right)^2 = 
\left(SB\right)^2 + \left(\underline{S}B\right)^2$ and 
\begin{eqnarray}
\left(SB\right)^2 + \left(\underline{S}B\right)^2 &=& \left(1+2\nu\right)
\left(\left(SB\right)^2 + \left(\underline{S}B\right)^2 \right) +
\left(-2\nu\right) \left(\left(SB\right)^2 +
\left(\underline{S}B\right)^2 \right) \nonumber \\
&\leq& \left(1+2\nu\right)
\left(\left(SB\right)^2 + \left(\underline{S}B\right)^2 \right)
\nonumber \\ 
&&+4 \left(-2\nu\right) \Bigg( \left(SB + \frac{3\left(r^\star-a\right)}{2r} B \right)^2 +
\left(\underline{S}B + \frac{3t}{2r} B \right)^2 \Bigg) \nonumber \\ 
&&+ 3\left(-2\nu\right) \left(\frac{\left(r^\star-a\right)^2}{r^2}B^2 + \frac{t^2}{r^2}B^2 \right)
\end{eqnarray}
and that we control all the terms on the right hand side separately by (\ref{ekbfir}).
\end{proof}
The following proposition is an immediate application of the 
latter and allows us to estimate the energy flux through certain slices for
late times.
\begin{proposition} \label{decfromk}
Let $(r_1^\star,t_1)$, $(\tilde{r}_1^\star,t_1)$ be such that
$t_1 - \tilde{r}_1^\star + a \geq \sqrt{M}$ and $t_1 + r_1^\star - a \geq \sqrt{M}$ and let
additionally $r_1^\star \geq r^\star_{K}$. Then we have
\begin{equation}
m \left(\tilde{r}_1^\star,t_1\right) - m
\left(r_1^\star,t_1\right) \leq 
3M \Bigg(\left(t_1-\tilde{r}^\star_1+a \right)^{-2}
E^K_B \left(t_1\right) + \left(t_1+r^\star_1-a \right)^{-2}
E^K_B \left(t_1\right) \Bigg) \, . \nonumber
\end{equation}
\end{proposition}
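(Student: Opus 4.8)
The plan is to write the mass difference along the constant-$t_1$ segment as the integral of a manifestly signed quantity, and then to bound each piece of that integrand against the weighted energy $E^K_B(t_1)$ by means of Proposition~\ref{EKcontrol}.

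First I would dispose of the trivial case: the assertion is vacuous unless $\tilde{r}_1^\star \geq r_1^\star$, since otherwise the left-hand side is nonpositive by the monotonicity (\ref{dum})--(\ref{vum}) while the right-hand side is nonnegative; likewise one may assume $r_1^\star \leq t_1 - u_0$. On $\{t = t_1\}$ one has $u = t_1 - r^\star$ and $v = t_1 + r^\star$, hence $\partial_{r^\star}|_t = \partial_v - \partial_u$ and
\[
m\left(\tilde{r}_1^\star,t_1\right) - m\left(r_1^\star,t_1\right) = \int_{r_1^\star}^{\tilde{r}_1^\star} \left(\partial_v m - \partial_u m\right)\left(t_1,r^\star\right)\, dr^\star \,.
\]
Since $B \equiv 0$ (hence $B_{,u}=B_{,v}=0$ and $\rho=\tfrac32$) on $\{u \leq u_0\}$, the integrand vanishes for $r^\star \geq t_1-u_0$, so I would restrict the integration to $[r_1^\star,\min(\tilde{r}_1^\star,t_1-u_0)] \subset [r^\star_K,t_1-u_0]$, which lies in the domain of $E^K_B(t_1)$. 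Substituting (\ref{dum}) and (\ref{vum}) gives
\[
\partial_v m - \partial_u m = 4 r^3 \frac{-\nu}{\Omega^2}\left(B_{,v}\right)^2 + 4 r^3 \frac{\lambda}{\Omega^2}\left(B_{,u}\right)^2 + r\left(\lambda - \nu\right)\left(1 - \tfrac23\rho\right) \,,
\]
a sum of three nonnegative terms (since $\nu<0$, $\lambda\geq 0$ and $\rho\leq\tfrac32$, cf.~(\ref{scalcurv})).

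Next I would reduce each term to a standard weighted density. Using $\tfrac{-\nu}{\Omega^2}=\tfrac{1}{4\kappa}$, $\tfrac{\lambda}{\Omega^2}=\tfrac{1}{4\gamma}$, the identity $\lambda-\nu=\tfrac{\kappa+\gamma}{\gamma}(-\nu)$, the bounds $\kappa\approx\tfrac12$ (Proposition~\ref{simpsmall}) and $\gamma\approx\tfrac12$ on $\{t\leq T\}\cap\{r\geq r_K\}$ (Proposition~\ref{gamma}), and $1-\tfrac23\rho\leq 9B^2$ for $|B|$ small (because $\varphi_1(B)=(1-\tfrac23\rho)-8B^2$ is $O(B^3)$, cf.~(\ref{deltaB})), I would obtain, for small enough data,
\[
\partial_v m - \partial_u m \leq 3 r^3\left(B_{,v}\right)^2 + 3 r^3\left(B_{,u}\right)^2 + C_0\left(-\nu\right)r\,B^2
\]
with $C_0$ an absolute constant; the crucial feature is that the zeroth-order term retains the factor $-\nu$, which matches the $(-2\nu)$-weighting of the $B^2$-terms in $E^K_B$ and so prevents any constant depending on $r_K$ from entering. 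Then I would insert the weights: on the restricted interval $u+a=t_1-r^\star+a$ and $v-a=t_1+r^\star-a$ are monotone, hence bounded below by $t_1-\tilde{r}_1^\star+a$ and $t_1+r_1^\star-a$ respectively (both $\geq\sqrt{M}>0$ by hypothesis), while $(u+a)^2+(v-a)^2=2\big(t_1^2+(r^\star-a)^2\big)\geq\max\{(t_1-\tilde{r}_1^\star+a)^2,(t_1+r_1^\star-a)^2\}$. Writing $r^3(B_{,v})^2 \leq (t_1+r_1^\star-a)^{-2}(v-a)^2 r^3(B_{,v})^2$, likewise for $(B_{,u})^2$ with $(u+a)^2$, and $(-\nu)r B^2 \leq \tfrac12\big((t_1-\tilde{r}_1^\star+a)^{-2}+(t_1+r_1^\star-a)^{-2}\big)(-\nu)\big((u+a)^2+(v-a)^2\big)r^3B^2/r^2$, then integrating over the segment and invoking Proposition~\ref{EKcontrol} --- (\ref{derhw}) for the two derivative integrals, and the sum of its first displayed bound with (\ref{Bhw}) (via $t_1^2+(r^\star-a)^2=\tfrac12((u+a)^2+(v-a)^2)$) for the $B^2$ integral --- each integral is $\leq\tfrac{M}{\pi^2}E^K_B(t_1)$ times the stated weight. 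Collecting the numerical constants, which for sufficiently small initial data sum to at most $3$, yields the inequality.

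I expect the zeroth-order term $r(\lambda-\nu)(1-\tfrac23\rho)$ to be the main obstacle: controlling it requires simultaneously the pointwise smallness $1-\tfrac23\rho=8B^2+O(B^3)$, the algebraic identity $t_1^2+(r^\star-a)^2=\tfrac12\big((u+a)^2+(v-a)^2\big)$ so that a single $E^K_B$-controlled quantity dominates it, and the factorization $\lambda-\nu=\tfrac{\kappa+\gamma}{\gamma}(-\nu)$ to reproduce the correct $-\nu$ weight; by contrast the two derivative terms are routine consequences of (\ref{derhw}).
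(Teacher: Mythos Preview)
Your argument is correct and follows essentially the same route as the paper: write the mass difference as $\int(\partial_v m-\partial_u m)\,dr^\star$, expand via (\ref{dum})--(\ref{vum}), insert the weights $(u+a)^2$, $(v-a)^2$ using their monotonicity along the segment, and invoke Proposition~\ref{EKcontrol}. The only cosmetic difference is in the handling of the zeroth-order term: the paper splits $\lambda-\nu$ into $\lambda$ and $-\nu$ and weights each half separately, whereas you factor $\lambda-\nu=\tfrac{\kappa+\gamma}{\gamma}(-\nu)$ and weight with the sum $(u+a)^2+(v-a)^2$ via the identity $t_1^2+(r^\star-a)^2=\tfrac12\big((u+a)^2+(v-a)^2\big)$; both lead to the same controlled quantity.
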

\begin{proof}
\begin{eqnarray}
m \left(\tilde{r}_1^\star,t_1\right) - m
\left(r_1^\star,t_1\right) = \int_{r_1^\star}^{\tilde{r}^\star_1}
\partial_{r^\star} m dr^\star = \int_{r_1^\star}^{\tilde{r}^\star_1}
\Big(-\partial_{u} m+ \partial_{v} m \Big) dr^\star
\nonumber \\ 
\leq \int_{r_1^\star}^{\tilde{r}^\star_1} \left(
\frac{\left(B_{,u}\right)^2}{\gamma} r^3 +
\frac{\left(B_{,v}\right)^2}{\kappa} r^3 + B^2 r
\left(\lambda-\nu\right) \left(8 + \frac{\varphi_1\left(B\right)}{B^2}\right)
\right) dr^\star \nonumber \\ \leq
\left(t_1-\tilde{r}^\star_1+a\right)^{-2} 
\int_{r_1^\star}^{\tilde{r}^\star_1} \left(\left(u+a\right)^2
\frac{\left(B_{,u}\right)^2}{\gamma} - \nu \left(u+a\right)^2 \frac{B^2}{r^2} \left(8 + \frac{\varphi_1\left(B\right)}{B^2}\right)\right) 
r^3 dr^\star  \nonumber \\ + \left(t_1+r^\star_1-a\right)^{-2} 
\int_{r_1^\star}^{\tilde{r}^\star_1} \left(\left(v-a\right)^2
 \frac{\left(B_{,v}\right)^2}{\kappa} + \lambda \left(v-a\right)^2 \frac{B^2}{r^2} \left(8 + \frac{\varphi_1\left(B\right)}{B^2}\right) \right) r^3 dr^\star \nonumber \\ \leq 3M \Bigg(\left(t_1-\tilde{r}^\star_1+a\right)^{-2}
E^K_B \left(t_1\right) + \left(t_1+r^\star_1-a \right)^{-2}
E^K_B \left(t_1\right) \Bigg) \, , \nonumber
\end{eqnarray}
where we have used (\ref{dum}), (\ref{vum}) and Proposition
\ref{EKcontrol} as well as the bounds (\ref{kapclo1}) and
(\ref{gamclo1}). 
\end{proof}
The previous proposition can be combined with the bootstrap
assumptions (\ref{ceilass}) and (\ref{hozass}). The fact that
energy is conserved then immediately yields decay 
for any achronal slice in a certain subregion 
of $\mathcal{A}\left(T\right)$ as elaborated in the following
\begin{proposition} \label{decayfromK}
In the bootstrap-region $\mathcal{A}\left(T\right)$ the energy flux 
through any achronal surface 
\begin{equation}
S \subset \mathcal{A}\left(T\right) \cap \{
 r^\star \leq \frac{10}{11}t \} 
\end{equation}
with $v_{-} = \min_{v} S \geq t_0+r^\star_{cl} \geq \sqrt{M}$ and $\min_t S \geq t_0$ 
satisfies
\begin{equation} \label{fluxdec}
E\left(S\right) \leq M^2\frac{C\left(c\right)}{v_-^2} \, .
\end{equation}
\end{proposition}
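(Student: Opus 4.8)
The plan is to use the $T$-energy identity of section~\ref{HawkT} (energy conservation) to reduce the flux through the arbitrary achronal surface $S$ to fluxes through a small number of reference pieces --- a segment of the horizon, a segment of the curve $r^\star = r^\star_{cl}$, and a segment of a constant-$t$ slice lying in $r^\star \geq r^\star_{cl}$ --- each of which is directly controlled by the bootstrap assumptions. Throughout I would work in the subregion $\{t \geq t_0\} \cap \{v \geq v_0\}$, so that the relations $v \sim t \sim r^\star$ recalled at the beginning of this section are available; I would also use that the $T$-flux through an achronal surface is non-negative (because $T$ is future causal where $1-\mu \geq 0$ and $T_{\mu\nu}$ has the relevant positivity) and that, since the Penrose diagram is simply connected and $dm$ is exact, this flux equals, up to a fixed positive constant, the difference of the Hawking mass at the two endpoints of the surface.

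First I would split $S = S_1 \cup S_2$ with $S_1 = S \cap \{r^\star \leq r^\star_{cl}\}$ and $S_2 = S \cap \{r^\star \geq r^\star_{cl}\}$, noting that every point of $S$ has $v \geq v_-$. For $S_2$: it joins $r^\star = r^\star_{cl}$ to an outer endpoint at which the Hawking mass is at most $m_{max}$, so $E(S_2)$ is bounded by a fixed constant times $m_{max} - m(u_{r^\star_{cl}}, v')$, where $v' \geq v_-$ is the inner-endpoint value; summing the horizon increments of (\ref{hozass}) dyadically gives $m_{max} - m(u_{hoz}, v') \leq C(c)\,M^2/(v'_+)^2$, and adding the ceiling estimate (\ref{ceilass}) yields $E(S_2) \leq C(c)\,M^2/v_-^2$. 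Alternatively one may push $S_2$ onto the constant-$t$ slice at $t = \min_t S_2$ and apply Proposition~\ref{decfromk} with $E^K_B < cM$ from (\ref{Kass}): for $t_0$ chosen large the weights $(t-\frac{10}{11}t+a)^{-2}$ and $(t+r^\star_{cl}-a)^{-2}$ appearing there are each at most a fixed constant times $v_-^{-2}$, giving the same bound. In either route one must check that the hypotheses $t-\tilde r^\star+a \geq \sqrt M$, $t+r^\star-a \geq \sqrt M$, $r^\star \geq r^\star_K$ of Proposition~\ref{decfromk} hold uniformly along the segment, which follows from $r^\star_{cl} > r^\star_K$, the constraint $r^\star \leq \frac{10}{11}t$, and $t \geq t_0$ being large.

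For $S_1$, which joins the horizon to $r^\star = r^\star_{cl}$: energy conservation in the region bounded by $S_1$, a horizon segment and an $r^\star = r^\star_{cl}$ segment gives that $E(S_1)$ is bounded by the sum of the mass increment along that horizon segment and $E$ of that $r^\star = r^\star_{cl}$ segment; the first is $< c\,M^2/v_-^2$ by (\ref{hozass}) (all $v$-values involved being $\geq v_-$) and the second is $< C(c)\,M^2/v_-^2$ by the same use of (\ref{ceilass}) and (\ref{hozass}) as above. Adding the bounds for $S_1$ and $S_2$ and collecting the finitely many universal constants into a single $C(c)$ with $C(c) \to 0$ as $c \to 0$ gives (\ref{fluxdec}).

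The step I expect to be the main obstacle is the near-horizon bookkeeping for $S_1$: there the multiplier $T$ degenerates, so the flux cannot be estimated by a naive pointwise energy density and must instead be controlled via the divergence identity of section~\ref{HawkT} (the Hawking mass as a potential) together with its monotonicity (\ref{dum})--(\ref{vum}); one also has to verify that although $S_1$ may meet the horizon at arbitrarily large $v$, every $v$-weight entering the estimate is bounded below by $v_-$, which is precisely why the hypothesis is stated with $v_- = \min_v S$ rather than with $\min_t S$. A secondary, purely technical point is the uniform verification along $S$ of the hypotheses of Proposition~\ref{decfromk}, for which the largeness of $t_0$ and the relations $v \sim t \sim r^\star$ are used.
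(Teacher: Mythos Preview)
Your alternative route for $S_2$ --- pushing onto a constant-$t$ slice and invoking Proposition~\ref{decfromk} with the bootstrap bound (\ref{Kass}) on $E^K_B$ --- is exactly the paper's argument, and your treatment of $S_1$ via (\ref{hozass}) and (\ref{ceilass}) is fine. In fact the paper does not split $S$ at all: it establishes the $1/t^2$ decay of the flux through the reference slices $\Sigma_{t_i}$ (arc from Proposition~\ref{decfromk} applied between $r^\star_K$ and $\frac{10}{11}t_{i+1}$, null part from (\ref{ceilass})) and then uses energy conservation to transfer this to any achronal $S$ in the region.

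Your \emph{first} route for $S_2$, however, is wrong. The bound $E(S_2)\le m_{max}-m(u_{r^\star_{cl}},v')$ is valid, but the subsequent claim that summing (\ref{hozass}) dyadically gives $m_{max}-m(u_{hoz},v')\le C(c)\,M^2/(v'_+)^2$ is false. Assumption (\ref{hozass}) controls only the increment \emph{along the horizon}, so dyadic summation yields
\[
m(u_{hoz},v_{top})-m(u_{hoz},v')\ \le\ C(c)\,\frac{M^2}{(v'_+)^2}\, ,
\]
and $m(u_{hoz},v_{top})$ is not $m_{max}$. Since $\partial_u m\le 0$, the Hawking mass along the horizon is \emph{below} the mass at spatial infinity; as $v'\to\infty$ the quantity $m_{max}-m(u_{hoz},v')$ tends to the total energy radiated to $\Scri^+$, which is of order $\epsilon M$ (cf.~(\ref{inisma})) and certainly does not decay in $v'$. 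The decay of the flux through the outer piece must therefore come from the $K$-energy via Proposition~\ref{decfromk} --- precisely your alternative route --- and not from the horizon/ceiling assumptions alone.
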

\begin{figure}[h!]
\[
\input{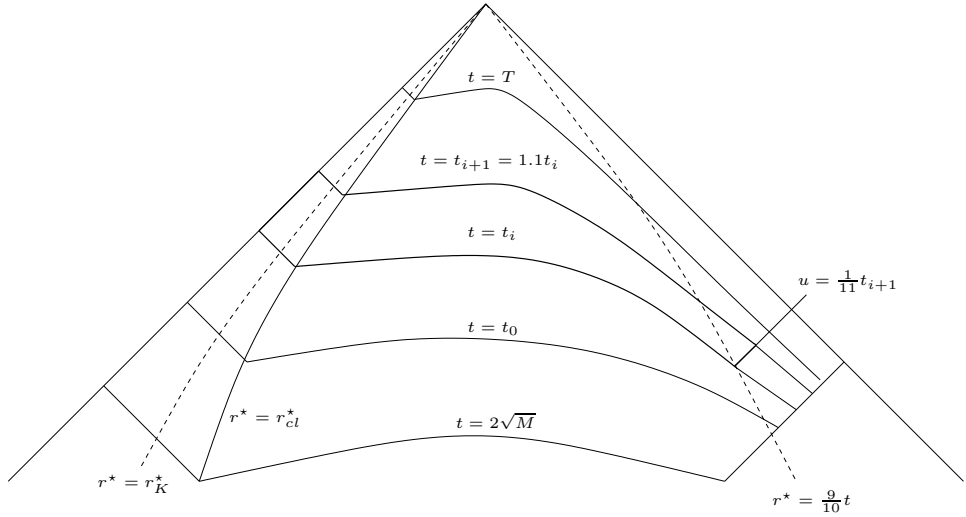}
\]
\caption{Energy decay from $K$.} \label{tlk}
\end{figure}
\begin{proof}
Dyadically decompose the region $\mathcal{A}\left(T\right)$ into
regions 
${}^{u_{hoz}}\mathcal{D}_{[t_i,t_{i+1}]}^{r^\star_{cl},u_0}$ 
with $t_{i+1}=1.1t_i$.\footnote{\label{dyadicexplain} This decomposition implies that the width of each region is of the size of the $t$ coordinate it is at. It should be noted that this decomposition may not fit exactly, i.e.~the last of these dyadic tubes may have a smaller width. To keep the notation reasonably clean this fact is always to be understood implicitly. The results derived for each dyadic region in the paper are of course independent of the fact that the last region may be smaller.} Proposition \ref{decfromk} applied 
to any slice $t_{i+1}$ with
$r_1^\star=r^\star_{K}$ and $\tilde{r}_1^\star=\frac{10}{11}t_{i+1}$ 
yields (for late times, i.e.~when $t-r^\star_{cl}+a \approx t$, which is the case for 
$t \geq t_0$, cf.~Proposition \ref{Cauchystab})
\begin{equation}
m\left(t_{i+1}, r^\star=\frac{10}{11}t_{i+1} \right) - m\left(t_{i+1},
r^\star_{K} \right)  \leq M^2\frac{C\left(c\right)}{t_i^2}
\end{equation} 
Combining this decay in the central region with the energy decay at
the horizon (bootstrap assumptions (\ref{hozass}) and (\ref{ceilass})) 
we find from energy conservation that the energy must decay 
like $\frac{1}{v_-^2}$ through any achronal slice in the 
region where $r^\star \leq \frac{10}{11} t$. This shows
(\ref{fluxdec}), noting that for large times $t \geq t_0$ we have $t \sim v$ in the region $r^\star_{cl} \leq r^\star \leq \frac{9}{10}t$. 
Note in particular that we have this 
decay of energy flux through the regions 
${}^{u_{hoz}}\mathcal{D}_{[t_i,t_{i+1}]}^{r^\star_{cl},u=\frac{1}{11}t_{i+1}}$
for large $t_i$ (cf.~Figure \ref{tlk}, where such a region is depicted). 
\end{proof}
\subsection{Decay estimates for $\kappa$ and $\gamma$}
The following proposition establishes appropriate decay bounds on $\kappa$ and
$\gamma$ sufficient to improve the estimate  (\ref{fibor}) for the relation between 
$r^\star$ and $r$ in the central region in the next section. 
\begin{proposition} \label{kgom}
In the region $\mathcal{A}\left(T\right) \cap \{ r^\star \leq
r^\star_{cl} \} \cap \{v\geq v_0\}$ we have
\begin{equation} \label{kgclose0}
\Big|\kappa-\frac{1}{2}\Big| \leq C_L\frac{M}{v^2} \, .
\end{equation}
In the region $\mathcal{A}\left(T\right) 
\cap \{ r^\star \geq r^\star_K \} \cap \{
r^\star \leq \frac{9}{10}t \}$ we have
\begin{equation} \label{kgclose1}
\frac{1}{2} \leq \kappa \leq \frac{1}{2} + C_L \left(2+c\right) \frac{M}{t^2} 
\end{equation}
\begin{equation} \label{kgclose2}
\frac{1}{2} \geq \gamma  \geq \frac{1}{2} - C_K \ c \frac{M}{t^2}
\end{equation}
with $C_K=\sup_{r^\star \geq r^\star_{K}} \frac{1}{1-\mu}$ and $C_L=\sup_{r^\star \geq r^\star_{cl}} \frac{1}{1-\mu}$.
\end{proposition}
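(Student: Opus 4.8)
The plan is to integrate the evolution equations (\ref{kapevol}) and (\ref{gammaevol}) for $\kappa$ and $\gamma$, which are of the form $\kappa_{,u}=\kappa\cdot\tfrac{2}{r^2}\tfrac{\zeta^2}{\nu}$ and $\gamma_{,v}=\gamma\cdot\tfrac{2}{r^2}\tfrac{\theta^2}{\lambda}$, and to control the exponents that arise by the energy flux bounds furnished by the bootstrap assumptions. The key observation is that $\tfrac{\zeta^2}{-\nu}r^3 \, du$ and $\tfrac{\theta^2}{\lambda}r^3 \, dv$ are precisely (up to the $(1-\mu)$-weight and the $1-\tfrac23\rho$ terms, which are of lower order and controlled by the mass fluctuation $\epsilon$) the energy-flux integrands appearing in $-\partial_u m$ and $\partial_v m$; see (\ref{dum})--(\ref{vum}). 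Hence the logarithm of $\kappa$ (resp.\ $\gamma$) changes by at most a constant times the local mass increment along the relevant null ray, divided by $r^2$ which is bounded below away from the horizon by $r_K^2\sim M$ in the region $r^\star\geq r^\star_K$, and by $r_-^2\sim M$ for $r^\star\leq r^\star_{cl}$ as well.

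First I would prove (\ref{kgclose0}). Fix a point in $\mathcal{A}(T)\cap\{r^\star\leq r^\star_{cl}\}\cap\{v\geq v_0\}$ and integrate (\ref{kapevol}) in $u$ from the curve $r^\star=r^\star_{cl}$, where by the analysis of section \ref{analboot} (in particular the energy decay of Proposition \ref{decayfromK} combined with Proposition \ref{decfromk}) one already has $|\kappa-\tfrac12|\leq C\,M/v^2$ on a good slice, or alternatively from the null line $u=T-r^\star(T,r_K)$ where $\kappa_{,v}=0$. Along a line of constant $v$ one writes
\begin{equation}
\kappa(u,v)=\kappa(u_{cl},v)\exp\Big(\int_{u_{cl}}^{u}\frac{2}{r^2}\frac{\zeta^2}{\nu}\,d\bar u\Big),
\end{equation}
and bounds $\big|\int\tfrac{2}{r^2}\tfrac{\zeta^2}{-\nu}\,d\bar u\big|\leq \tfrac{2}{r_-^2}\cdot\sup\tfrac{1}{1-\mu}\cdot\int\tfrac{\zeta^2(1-\mu)}{-\nu}\,d\bar u$; the last integral is, up to the $(1-\tfrac23\rho)$-correction controlled by $\epsilon$, at most the mass increment across the characteristic rectangle, which by bootstrap assumptions (\ref{hozass}) and (\ref{ceilass}) is $\leq c\,M\cdot M/v_+^2$. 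Together with $r_-^2\sim M$ this gives an exponent $\leq C_L\,M/v^2$, and since $\kappa$ is already $C(\delta)$-close to $\tfrac12$ by Proposition \ref{simpsmall} while $e^x-1\sim x$ for small $x$, one concludes (\ref{kgclose0}) with the monotonicity of $\kappa$ in $u$ (from (\ref{kapevol}), since $\tfrac{\zeta^2}{\nu}\leq 0$) supplying the correct one-sided behaviour when integrating toward the horizon.

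Next, for (\ref{kgclose1})--(\ref{kgclose2}) in $\{r^\star_K\leq r^\star\leq\tfrac{9}{10}t\}$, the upper bound $\kappa\leq\tfrac12$ and the lower bound $\gamma\leq\tfrac12$ follow immediately from monotonicity once one integrates (\ref{kapevol}) from the $t=T$-slice (where $\kappa=\tfrac12$) toward the future in $u$, and (\ref{gammaevol}) from $t=T$ toward the past in $v$; the normalization $\kappa=\gamma=\tfrac12$ on $\{t=T\}\cap\{r^\star\geq r^\star_K\}$ fixed in section \ref{Coordinates} is what makes this work. For the remaining inequalities one again exponentiates: $\kappa(u,v)=\tfrac12\exp(\int_{u_T}^u\tfrac{2}{r^2}\tfrac{\zeta^2}{\nu})$ and estimates the exponent by $\tfrac{2}{r_K^2}\cdot C_L\cdot\big(m(u_{cl},v)-m(u_{hoz},v)+\dots\big)\leq C_L(2+c)M/t^2$ using (\ref{ceilass}) and $r_K^2\sim M$, $t\sim v$ in this range (valid for $t\geq t_0$ by Proposition \ref{Cauchystab}); similarly $\gamma(u,v)=\tfrac12\exp(-\int_v^{v_T}\tfrac{2}{r^2}\tfrac{\theta^2}{\lambda})$ with the exponent controlled by the $v$-energy flux through a $t$-arc, which by Proposition \ref{decayfromK} (energy decay for achronal slices in $r^\star\leq\tfrac{10}{11}t$) is $\leq C_K\,c\,M/t^2$. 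Throughout, the lower-order terms $(1-\tfrac23\rho)$ and the $(1-\mu)$-weights are absorbed using Proposition \ref{simpsmall}, (\ref{kapclo1}), and the a-priori mass-fluctuation bound (\ref{inisma}).

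The main obstacle I anticipate is not any single estimate but the bookkeeping needed to make sure the ``starting slice'' for each integration genuinely carries the claimed $M/v^2$ (or $M/t^2$) smallness: the normalization curves $\{t=T\}$ and $\{u=T-r^\star(T,r_K)\}$ give $\kappa$ or $\gamma$ exactly $\tfrac12$ only on themselves, so to reach a general point in the region one must compose an integration from such a curve with the energy-flux control in a suitably chosen dyadic rectangle, and verify that $t\sim v\sim r^\star$ uniformly there — which is exactly what Proposition \ref{Cauchystab} and bootstrap assumption \ref{boot1} (via (\ref{fibor})) are designed to guarantee, but only for $t\geq t_0$, $v\geq v_0$. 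One must also be slightly careful near $r^\star=r^\star_{cl}$, where the two regimes overlap and the constants $C_L$ versus $C_K$ differ; taking $C_L$ as the supremum over the larger region $r^\star\geq r^\star_{cl}$ and $C_K$ over $r^\star\geq r^\star_K$ makes the two estimates consistent on the overlap.
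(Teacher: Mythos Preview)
Your strategy—integrate (\ref{kapevol}) and (\ref{gammaevol}) and control the exponents by energy-flux decay—is exactly what the paper does for (\ref{kgclose1}) and (\ref{kgclose2}), and your remarks about monotonicity and the normalization on $\{t=T\}$ are on target (modulo swapping the words ``upper'' and ``lower'' for $\kappa$ and $\gamma$). For (\ref{kgclose2}) the paper adds one step you gloss over: a constant-$u$ ray from a point with small $u$ back to $\{t=T\}$ may exit the region $r^\star\le\tfrac{10}{11}t$ where Proposition~\ref{decayfromK} applies, so the paper first propagates $\gamma$ through the asymptotic region $r^\star\ge\tfrac{9}{10}t$ using the pointwise bound $|\theta|\le C(\epsilon)\sqrt{M}/\sqrt{r}$ of Proposition~\ref{simpsmall}, obtaining (\ref{immb}) on the curve $r^\star=\tfrac{9}{10}t$, and only then integrates inward with the energy flux. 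This is easy to insert.

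There is, however, a genuine gap in your argument for (\ref{kgclose0}). You bound
\[
\int \frac{2}{r^2}\frac{\zeta^2}{-\nu}\,d\bar u\ \le\ \frac{2}{r_-^2}\cdot\sup\frac{1}{1-\mu}\cdot\int\frac{\zeta^2(1-\mu)}{-\nu}\,d\bar u
\]
and then invoke (\ref{hozass})--(\ref{ceilass}) for the last integral. But the integration runs through $r^\star\le r^\star_{cl}$ all the way toward the horizon, where $1-\mu$ degenerates (cf.~(\ref{gohoz})); the supremum you pull out is \emph{not} $C_L=\sup_{r^\star\ge r^\star_{cl}}\tfrac{1}{1-\mu}$ but rather $\sup_{r^\star\le r^\star_{cl}}\tfrac{1}{1-\mu}$, which is uncontrolled (at best $O(1/\epsilon)$). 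The Hawking-mass flux carries precisely the degenerate weight $(1-\mu)$ and therefore cannot by itself control $\int\zeta^2/(-\nu)\,du$ near the horizon. The paper resolves this by invoking bootstrap assumption~\ref{boot5}, i.e.\ (\ref{intebound2}): the quantity $\tilde F^Y_B=\int r^3(B_{,u})^2/(-\nu)\,du=\int\zeta^2/(-\nu)\,du$ is the \emph{non-degenerate} local-observer energy, and its assumed $C_LM^2/v_+^2$ decay feeds directly into the exponent to give $|\kappa-\tfrac12|\le 2C_LM/v^2$. This bound on $r^\star=r^\star_{cl}$ is then the boundary datum used (together with $\{t=T\}$) to prove (\ref{kgclose1}) in $r^\star\ge r^\star_{cl}$ via Proposition~\ref{decayfromK}. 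So (\ref{intebound2}), which you do not use, is the essential ingredient here.
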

\begin{proof}
Integrating equation (\ref{kapevol})
from the set $\{ u=T-r^\star\left(T,r_K\right)\} \cup \Big(\{t=T\}
\cap \{r^\star\left(T,r_K\right) \leq r^\star \leq r^\star_{cl}
\}\Big)$, where $\kappa=\frac{1}{2}$ by definition, to any point in
the region $r^\star \leq r^\star_{cl}$ 
yields after inserting bootstrap assumption (\ref{intebound2})
\begin{equation} \label{kbrst}
\Big|\kappa \left(t,r^\star_{cl}\right)-\frac{1}{2}\Big| \leq 2C_L \frac{M}{v^2}
\end{equation}
in that region establishing (\ref{kgclose0}). 
We can obtain $\kappa$ at any point in the 
remaining region $\mathcal{A}\left(T\right) \cap \{ r^\star \geq r^\star_{cl} \} \cap \{r^\star \leq \frac{9}{10}t \}$ by integrating from the set
$L=\Big\{\{t=T\} \cap \{r^\star \geq r^\star_{cl}\}\Big\} \cup \{
r^\star=r^\star_{cl}\}$  on which either $\kappa$ is equal to
$\frac{1}{2}$ or satisfies the estimate (\ref{kbrst}), to the desired
point. An application of Proposition \ref{decayfromK} then yields
(\ref{kgclose1}) in the region $\mathcal{A}\left(T\right) \cap \{
r^\star_K \leq r^\star \leq \frac{9}{10}t \}$ as follows:
\begin{eqnarray}
\kappa\left(t,r^\star\right) &=& \kappa\left(u_L,v\right) \exp
\left(-\int^{u_L}_{t-r^\star} \frac{2}{r^2} \frac{\zeta^2}{\nu}
\left(\bar{u},v\right)d\bar{u} \right) \nonumber \\
&\leq& \kappa\left(u_L,v\right) \exp
\left(\sup \left[\frac{2}{r^2\left(1-\mu\right)}\right]\int^{u_L}_{t-r^\star} \frac{4\lambda}{\Omega^2} \zeta^2
\left(\bar{u},v\right)d\bar{u}\right)
\nonumber \\
&\leq& \left[\frac{1}{2}+2C_L\frac{M}{\left(t+r^\star\right)^2} \right]
\left(1+c C_L \frac{M}{t^2}\right) \, .
\end{eqnarray}
%
%
%
%
%
%

For the estimate (\ref{kgclose2}), we first note that $\gamma= \frac{1}{2}$
on $\{t=T\} \cap \{r \geq r_K\}$. On the
$r^\star=\frac{9}{10}t$ curve we can obtain (\ref{kgclose2})
by integrating (\ref{gammaevol})
from $\{t=T\} \cap \{\frac{9}{10}T \leq r^\star \leq
T-u_0\}$ downwards to any point in the region
$\mathcal{A}\left(T\right) \cap \{ r^\star \geq \frac{9}{10}t \}$.  
We use that $\lambda$ is bounded below and $|\theta|\leq
\frac{C\left(\epsilon\right)\sqrt{M}}{\sqrt{r}}$ in 
the integration region, both following from
Proposition \ref{simpsmall}, to obtain
\begin{equation} \label{immb}
\gamma \geq  \frac{1}{2} - \frac{C\left(\epsilon\right)M}{r^2}
\end{equation}
there. Since $r$ is controlled by $r^\star$ 
(cf.~equation (\ref{rstarrb})) and $r^\star \geq \frac{9}{10}t$ 
in the region under consideration, we find 
the bound (\ref{kgclose2}) in the region
$\mathcal{A}\left(T\right) \cap \{ r^\star \geq \frac{9}{10}t \}$, in
particular on the $r^\star = \frac{9}{10}t$-curve.
Finally, the value of $\gamma$ at any point in the remaining region 
$\mathcal{A}\left(T\right) \cap \{ r^\star_K \leq r^\star \leq
\frac{9}{10}t \}$ can be obtained by integrating
(\ref{gammaevol}) in $v$ from some point of the 
set $L^\prime=\Big\{\{t=T\} \cap \{r^\star \leq
\frac{9}{10}t\}\Big\} \cup \{r^\star=\frac{9}{10}t\}$ 
(on which $\gamma$ already satisfies (\ref{kgclose2})). 
Using the decay of the energy flux we arrive at 
(\ref{kgclose2}) in the remaining region:
\begin{eqnarray}
\gamma\left(t,r^\star\right) &=& \gamma\left(u,v_{L^\prime}\right) \exp
\left(-\int^{v_{L^\prime}}_{t+r^\star} \frac{2}{r^2} \frac{\theta^2}{\lambda}
\left(u,\bar{v}\right)d\bar{v} \right) \nonumber \\
&\geq& \gamma\left(u,v_{L^\prime}\right) \exp
\left(-\sup \left[\frac{2}{r^2\left(1-\mu\right)}\right]
\int^{v_{L^\prime}}_{t+r^\star} \frac{\theta^2}{\kappa}
\left(u,\bar{v}\right)d\bar{v} \right)
\nonumber \\
&\geq&  \frac{1}{2} \left(1-2C_K \ c \frac{M}{t^2}\right) \, .
\end{eqnarray}
\end{proof}
From the proof of the $\gamma$-estimate we deduce:
\begin{corollary} \label{betterc}
In the region $r^\star \geq r^\star_{cl}$, 
the estimate (\ref{kgclose2}) holds with the constant $C_K$ 
replaced by $C_L$. 
\end{corollary}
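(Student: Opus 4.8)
The plan is to revisit the derivation of the lower bound $(\ref{kgclose2})$ in Proposition \ref{kgom} and to observe that the constant $C_K$ enters at exactly one place, which for target points lying in the smaller region $\{r^\star \geq r^\star_{cl}\}$ can be replaced by $C_L$. Recall first that $(\ref{kgclose2})$ was established in two pieces. In the region $r^\star \geq \frac{9}{10} t$ it followed from $(\ref{immb})$, namely $\gamma \geq \frac{1}{2} - C(\epsilon)\, M/r^2$, together with $r \sim r^\star \geq \frac{9}{10}t$; the constant there is $C(\epsilon)$, which tends to zero as the data tend to zero and hence does not involve $C_K$ at all — it is in particular $\leq C_L\, c$ for data small enough. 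So this piece requires no modification.

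It remains to treat the piece $r^\star_{cl} \leq r^\star \leq \frac{9}{10} t$. Here one integrates $(\ref{gammaevol})$ in $v$ along the ray $u = t - r^\star = \textrm{const}$, from the set on which $\gamma = \frac{1}{2}$ (or on which $\gamma$ satisfies $(\ref{kbrst})$) down to the target point, obtaining
\[
\gamma\left(t,r^\star\right) \geq \gamma\left(u,v_{L^\prime}\right)\,
\exp\left(-\sup\left[\frac{2}{r^2\left(1-\mu\right)}\right]
\int^{v_{L^\prime}}_{t+r^\star} \frac{\theta^2}{\kappa}\left(u,\bar v\right)\, d\bar v\right),
\]
and the supremum in the exponent is precisely what produced the factor $C_K = \sup_{r^\star \geq r^\star_K} (1-\mu)^{-1}$ in the original argument.

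The key point is that if the target point satisfies $r^\star \geq r^\star_{cl}$, then the entire integration ray stays in $\{r^\star \geq r^\star_{cl}\}$: along the ray $u$ is fixed, so $\bar r^\star = \frac{\bar v - u}{2} \geq \frac{(t+r^\star) - u}{2} = r^\star \geq r^\star_{cl}$ for every $\bar v \in [t+r^\star, v_{L^\prime}]$. Hence the supremum of $(1-\mu)^{-1}$ along the ray is bounded by $C_L$ rather than $C_K$, and combining this with the control of $\int \frac{\theta^2}{\kappa}\, d\bar v$ by the energy decay of Proposition \ref{decayfromK} (equivalently, bootstrap assumptions $(\ref{ceilass})$ and $(\ref{hozass})$), the same computation now yields $\gamma \geq \frac{1}{2}\big(1 - 2 C_L\, c\, M/t^2\big)$, i.e. $(\ref{kgclose2})$ with $C_K$ replaced by $C_L$. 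There is no real obstacle here beyond this bookkeeping observation: none of the characteristic rays used in the argument leave $\{r^\star \geq r^\star_{cl}\}$, so every supremum of $(1-\mu)^{-1}$ that was previously bounded by $C_K$ is in fact bounded by the smaller constant $C_L \leq C_K$ (the inequality being trivial since $r^\star_{cl} > r^\star_K$).
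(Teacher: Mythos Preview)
Your argument is correct and is exactly the expansion of what the paper means by ``from the proof of the $\gamma$-estimate we deduce'': the only appearance of $C_K$ is through the supremum of $(1-\mu)^{-1}$ along the $u=\textrm{const}$ ray, and since $\bar r^\star$ is nondecreasing along that ray the supremum can be taken over $\{r^\star \geq r^\star_{cl}\}$ when the target point lies there. One minor slip: the parenthetical reference to $(\ref{kbrst})$ (a bound for $\kappa$) should be to the bound on $\gamma$ on the $r^\star=\tfrac{9}{10}t$ curve coming from $(\ref{immb})$; this does not affect the argument.
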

In the asymptotic region $t$ is like $r$ and the bounds extend:
\begin{corollary} \label{kapgamdecr}
In $\mathcal{A}_T \cap \{ r^\star \geq \frac{9}{10}t \} \cap \{v\geq v_0\}$ we have
\begin{equation}
\frac{1}{2} \leq \kappa \leq \frac{1}{2} + 2 C_L \left(2+c\right)
\frac{M}{r^2} \textrm{ \ \ \ \ \ and \ \ \ \ \ }
\frac{1}{2} \geq \gamma \geq \frac{1}{2} + \frac{C\left(\epsilon\right)}{r^2}
\end{equation}
\end{corollary}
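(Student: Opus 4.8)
The plan is to transport the bounds established in Proposition~\ref{kgom}, together with its intermediate estimate (\ref{immb}), from the boundary curve $r^\star = \tfrac{9}{10}t$ into the far region $\{r^\star \geq \tfrac{9}{10}t\}$, exploiting that there the three length scales $r$, $r^\star$ and $t$ are mutually comparable. Concretely, by (\ref{rstarrb}) together with the choice of $t_0$ recorded at the start of Section~\ref{analboot}, in $\{r^\star \geq \tfrac{9}{10}t\} \cap \{v \geq v_0\}$ one has $v \sim t \sim r^\star$ and $r \sim r^\star$; hence any bound of the form $C\tfrac{M}{t^2}$ valid on an arc is, after adjusting $C$ by a universal factor, a bound of the form $C\tfrac{M}{r^2}$.

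For $\kappa$, the lower bound $\kappa \geq \tfrac{1}{2}$ is immediate from the monotonicity $\kappa_{,u} \leq 0$ (equation (\ref{kapevol}), since $\nu < 0$): integrating backwards in $u$ along a curve of constant $v$ from the set $L = \bigl(\{t=T\}\cup\{r^\star = \tfrac{9}{10}t\}\bigr)$, on which $\kappa \geq \tfrac{1}{2}$ by definition respectively by Proposition~\ref{kgom}, gives $\kappa \geq \tfrac{1}{2}$ at the point under consideration. For the upper bound, integrating (\ref{kapevol}) along the same constant-$v$ line yields $\kappa(t,r^\star) = \kappa(u_L,v)\exp\bigl(\int_{u}^{u_L}\tfrac{2}{r^2}\tfrac{\zeta^2}{-\nu}(\bar u,v)\,d\bar u\bigr)$. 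On $L$, Proposition~\ref{kgom} gives $\kappa(u_L,v) \leq \tfrac{1}{2} + C_L(2+c)\tfrac{M}{t_L^2}$ with $t_L \sim r$ by the comparability above, while the exponent is controlled exactly as in the proof of Proposition~\ref{kgom}, namely by $\sup_{r^\star \geq r^\star_{cl}}\bigl[\tfrac{2}{r^2(1-\mu)}\bigr]\int_u^{u_L}\tfrac{\zeta^2(1-\mu)}{-\nu}\,d\bar u \leq \tfrac{2C_L}{r^2}(m_{max}-m_{min}) \leq \tfrac{2C_L\,\epsilon\,m_{max}}{r^2}$, using (\ref{dum}) and $1-\mu < 1$. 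Choosing $t_0$ large, hence $\epsilon$ (the mass fluctuation) small, makes the exponential factor at most $1 + o(1)$, so the product is bounded by $\tfrac{1}{2} + 2C_L(2+c)\tfrac{M}{r^2}$.

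For $\gamma$, the upper bound $\gamma \leq \tfrac{1}{2}$ follows symmetrically from $\gamma_{,v} \leq 0$ (equation (\ref{gammaevol}), since $\lambda \geq 0$), integrating backwards in $v$ from $L$, where $\gamma \leq \tfrac{1}{2}$. The lower bound is nothing but the estimate (\ref{immb}) already proven in the course of Proposition~\ref{kgom}, namely $\gamma \geq \tfrac{1}{2} - \tfrac{C(\epsilon)M}{r^2}$ throughout $\mathcal{A}(T) \cap \{r^\star \geq \tfrac{9}{10}t\}$; this is precisely the asserted bound after absorbing the factor of $M$ into the constant.

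The only point requiring care --- and hence the ``main obstacle'' in this otherwise routine argument --- is the bookkeeping of constants in the $\kappa$ upper bound: one must verify that the combined effect of replacing the $\tfrac{M}{t^2}$-weight on the curve $r^\star = \tfrac{9}{10}t$ by an $\tfrac{M}{r^2}$-weight (via the explicit slopes in (\ref{rstarrb})) and of the exponential amplification along the constant-$v$ integration stays within the advertised constant $2C_L(2+c)$. This is secured by the freedom, via Proposition~\ref{Cauchystab} and the discussion at the start of Section~\ref{analboot}, to choose $t_0$ as large (hence the mass fluctuation, and $\epsilon$, as small) as needed; no ingredient beyond those already appearing in the proof of Proposition~\ref{kgom} enters.
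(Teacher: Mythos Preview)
Your proof is correct and follows essentially the same route as the paper's: integrate (\ref{kapevol}) in $u$ from the boundary set $L=\{t=T\}\cup\{r^\star=\tfrac{9}{10}t\}$ using the energy (mass-fluctuation) estimate, invoke (\ref{immb}) directly for the $\gamma$ lower bound, and use $t\sim r^\star\sim r$ to convert $t^{-2}$ decay into $r^{-2}$ decay. One slip to correct: equation (\ref{gammaevol}) gives $\gamma_{,v}=\gamma\,\tfrac{2}{r^2}\tfrac{\theta^2}{\lambda}\geq 0$, not $\leq 0$; it is this sign (non-decreasing in $v$) that yields $\gamma\leq\tfrac{1}{2}$ when you integrate \emph{backwards} in $v$ from $L$, so your conclusion is right but the stated monotonicity is reversed.
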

\begin{proof}
The bound for $\gamma$ is the statement of (\ref{immb}). To obtain the
bound for $\kappa$ integrate (\ref{kapevol}) from $r^\star =
\frac{9}{10}$ to the asymptotic region of $\mathcal{A}\left(T\right)$ 
in $u$ using that $t \sim r^\star \sim r$ in the region $r^\star \geq
\frac{9}{10}t$ (cf. again (\ref{rstarrb})) and that the energy 
estimate holds in the region under consideration. Note again that $r$ could be replaced by $t$ in that region.
\end{proof}
\subsection{Stability of the coordinate systems} \label{stabcorsec}
\subsubsection{The relation between $r^\star$ and $r$.}
We are now in a position to derive an estimate for the relation 
between the coordinate $r^\star=\frac{v-u}{2}$ and the 
function $r\left(u,v\right)$. This estimate in conjunction 
with Proposition \ref{Cauchystab} will automatically 
improve bootstrap assumption \ref{boot1}, 
which -- modulo the error-term -- expresses precisely the relation of 
the tortoise coordinate $r^\star$ to the area radius in 
the five-dimensional Schwarzschild metric. For this section we will use $C(\left(r_K,c\right)$ to denote a constant which depends on the weight of $\frac{1}{1-\mu}$ on $r=r_K$ and on the parameter $c$ in the bootstrap assumptions.
\begin{proposition} \label{rstarrdec}
The estimate
\begin{eqnarray} \label{rstarr}
\Big| r^\star - \left[ r\left(t,r^\star\right) + \sqrt{\frac{M_A}{2}} \left(\log
\left(\frac{r\left(t,r^\star\right)-\sqrt{2M_A}}{r\left(t,r^\star\right)+\sqrt{2M_A}}\right)
+ p \right) \right] \Big| \leq C\left(r_K,c\right)\frac{M}{t}
\end{eqnarray}
with $p$ defined in (\ref{pdef}) and $M_A$ the Hawking mass at
the point $\left(T,r^\star=0\right)$, holds in the 
region $\mathcal{A}\left(T\right) \cap \{ r^\star \geq r^\star_K \}$.
\end{proposition}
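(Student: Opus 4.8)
The plan is to compare the coordinate $r^\star=\frac{v-u}{2}$ directly with the explicit five-dimensional Schwarzschild tortoise profile
\[
g(r) := r + \sqrt{\tfrac{M_A}{2}}\left(\log\frac{r-\sqrt{2M_A}}{r+\sqrt{2M_A}}+p\right),
\]
and to bound $h(t,r^\star) := r^\star - g\bigl(r(t,r^\star)\bigr)$. Two elementary observations organise the argument. First, $g'(r)=\bigl(1-\tfrac{2M_A}{r^2}\bigr)^{-1}$, so $g$ is exactly the Regge--Wheeler relation for Schwarzschild-Tangherlini of mass $M_A$; moreover $r(A)=2\sqrt{M_A}$ at the point $A=(T,r^\star=0)$, and the choice of $p$ in (\ref{pdef}) is precisely what makes $h(A)=0$. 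Second, from $\partial_{r^\star}r=\lambda-\nu=(\kappa+\gamma)(1-\mu)$ and $\partial_t r=\lambda+\nu=(\kappa-\gamma)(1-\mu)$ we get the clean identities
\[
\partial_{r^\star}h = 1-\frac{(\kappa+\gamma)\bigl(1-\tfrac{2m}{r^2}\bigr)}{1-\tfrac{2M_A}{r^2}},\qquad
\partial_t h = -\,\frac{(\kappa-\gamma)\bigl(1-\tfrac{2m}{r^2}\bigr)}{1-\tfrac{2M_A}{r^2}}.
\]
Throughout, the region lies in $r\geq r_K$, so $1-\tfrac{2M_A}{r^2}\geq 1-\tfrac{2m_{max}}{r_K^2}>0$ is bounded below; every error estimate carries the reciprocal of this quantity, which is the source of the constant $C(r_K,c)$.

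First I would establish (\ref{rstarr}) on the arc $t=T$. There $\kappa=\gamma=\tfrac12$ by the coordinate normalisation of section \ref{Coordinates}, so $\partial_{r^\star}h=\tfrac{2(m-M_A)/r^2}{1-2M_A/r^2}$ and the problem reduces to controlling the mass fluctuation $m-M_A$ on that arc. For $r^\star_K\leq r^\star\leq\tfrac{9}{10}T$ this is Proposition \ref{decfromk}: applied with $r_1^\star=r^\star_K$, $\tilde r_1^\star=r^\star$, and using bootstrap assumption (\ref{Kass}) ($E^K_B<cM$), it gives $|m(T,r^\star)-M_A|\leq C(r_K,c)\,M^2/T^2$; integrating $\partial_{r^\star}h$ out of $A$ and using $\int\tfrac{dr^\star}{r^2}\leq C(r_K)$ (change of variables to $r$) yields $|h(T,r^\star)|\leq C(r_K,c)\,M/T$ on this range. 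On $\tfrac{9}{10}T\leq r^\star\leq T-u_0$ one has $r\sim r^\star\sim T$, $m$ close to $m_{max}$ so $|m-M_A|\leq m_{max}-m_{min}\leq\epsilon M$ by (\ref{inisma}), and $\int_{9T/10}^{\infty}\tfrac{dr^\star}{r^2}\lesssim 1/T$, so the extra contribution is $\lesssim\epsilon M/T$. Hence $|h|\leq C(r_K,c)\,M/T$ on all of $\{t=T\}\cap\{r^\star\geq r^\star_K\}$.

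To propagate to $t<T$, split into $r^\star_K\leq r^\star\leq\tfrac{9}{10}t$ and $r^\star\geq\tfrac{9}{10}t$. In the first, integrate $\partial_t h$ along a constant-$r^\star$ line from $(t,r^\star)$ up to $(T,r^\star)$, a segment which stays in $\mathcal{A}(T)\cap\{r^\star\leq\tfrac{9}{10}s\}$; Proposition \ref{kgom} (with $v\sim t$ for $t\geq t_0$) gives $|\kappa-\tfrac12|+|\gamma-\tfrac12|\leq C(r_K,c)\,M/s^2$ there, so $\int_t^T|\partial_t h|\,ds\leq C(r_K,c)\,M/t$, which combined with the $t=T$ bound gives $|h(t,r^\star)|\leq C(r_K,c)\,M/t$. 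In the second, integrate $\partial_{r^\star}h$ along constant $t$ starting from $(t,\tfrac{9}{10}t)$, using the asymptotic bounds $|\kappa-\tfrac12|+|\gamma-\tfrac12|\lesssim M/r^2$ of Corollary \ref{kapgamdecr} together with $|m-M_A|\leq\epsilon M$ and $r\sim r^\star\sim t$ to conclude $|\partial_{r^\star}h|\lesssim M/r^2$, so the extra term is $\int_{9t/10}^{\infty}M\,r^{-2}\,dr^\star\lesssim M/t$. Finally, for $t\leq t_0$ the estimate is immediate from Cauchy stability (Proposition \ref{Cauchystab}, statement $3$), choosing $\tilde\delta$ small relative to $\sqrt M/t_0$ in the usual ``algebra for constants''.

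The main obstacle is the $t=T$ step, i.e.\ extracting genuine $1/T$ decay --- not merely $\epsilon$-smallness --- for $m-M_A$; this is exactly where the strong weights built into $E^K_B$ and exploited via Proposition \ref{decfromk} are indispensable, and it is also what forces the split at $r^\star=\tfrac{9}{10}T$, since near the outgoing end of the $t=T$ arc the weight $(t-\tilde r^\star+a)^{-2}$ in Proposition \ref{decfromk} degenerates and one must fall back on the a-priori mass bound (\ref{inisma}) together with the rapid convergence of $\int r^{-2}\,dr^\star$. Everything else is bookkeeping with the two identities above and the already-established decay of $\kappa$ and $\gamma$.
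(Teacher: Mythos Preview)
Your proof is correct and follows essentially the same route as the paper's: both integrate the discrepancy between $r^\star$ and the Schwarzschild tortoise profile out of the anchor point $A=(T,0)$ using the $\kappa,\gamma$ decay of Proposition~\ref{kgom} in $r^\star_K\le r^\star\le\tfrac{9}{10}t$, and then extend via Corollary~\ref{kapgamdecr} to the asymptotic region.

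The only differences are in the level of detail. The paper compresses your steps~1--2 into the single line \eqref{lam-nu}, $\lambda-\nu\le(1-\mu)+C\frac{M}{t^2}\le\bigl(1-\tfrac{2M_A}{r^2}\bigr)+C\frac{M}{t^2}$, and then says ``since $r(T,0)=2\sqrt{M_A}$, the relation follows''. Your explicit computation of $\partial_{r^\star}h$ and $\partial_t h$, the separate treatment of the $t=T$ arc (where $\kappa=\gamma=\tfrac12$ so only the mass fluctuation matters, handled via Proposition~\ref{decfromk}), and the subsequent propagation in $t$ (where only $\kappa-\gamma$ matters) are a cleaner unpacking of exactly this integration. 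In particular, you are right that the second inequality in \eqref{lam-nu} is not literally ``a direct consequence of Proposition~\ref{kgom}'' but requires the mass bound $|m-M_A|\lesssim M^2/t^2$ in $r^\star\le\tfrac{9}{10}t$, which you supply; the paper is relying implicitly on Proposition~\ref{decayfromK} for this. Your explicit invocation of Cauchy stability for $t\le t_0$ is also something the paper handles only by the blanket remark at the opening of Section~\ref{analboot}. So: same approach, your version is more carefully written.
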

\begin{proof}
The estimates 
\begin{equation} \label{lam+nu}
|\partial_{t} r| = |\lambda + \nu |\leq C\left(r_K,c\right)\frac{M}{t^2}
 \, ,
\end{equation}
\begin{equation} \label{lam-nu}
\partial_{r^\star} r  = \lambda - \nu \leq \left(1-\mu\right) + C\left(r_K,c\right)\frac{M}{t^2}
\leq \left(1-\frac{2M_A}{r^2}\right) + C\left(r_K,c\right) \frac{M}{t^2}
\end{equation}
in the region $\mathcal{A}\left(T\right) \cap \{ r^\star \geq r^\star_K \} \cap \{
r^\star \leq \frac{9}{10}t \}$ are a direct consequence of 
Proposition \ref{kgom}. Since also $r\left(T,r^\star=0\right) = 2\sqrt{M_A}$, 
the relation (\ref{rstarr}) follows in the region
$\mathcal{A}\left(T\right) \cap \{r^\star \geq r^\star_K \} \cap
\{r^\star \leq \frac{9}{10}t \}$. An application of Corollary
\ref{kapgamdecr} finally extends the bound to the remaining region,
$r^\star \geq \frac{9}{10}t$. 
\end{proof}
This means that in the region 
$\mathcal{A}\left(T\right) \cap \{ r^\star \geq r^\star_K \}$ we can go
back and forth from $r$ to $r^\star$ with an error-term of
$\frac{1}{t}$, which is small at late times. In analogy with 
Corollary \ref{betterc} we also have
\begin{corollary}
In the region $\mathcal{A}\left(T\right) \cap \{ r^\star \geq
r^\star_{cl} \}$ the estimate (\ref{rstarr}) holds with 
constant $C\left(r_{cl},c\right)$ replacing $C\left(r_K,c\right)$.
\end{corollary}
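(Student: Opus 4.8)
The plan is to rerun the argument of Proposition \ref{rstarrdec} verbatim, but with every integration confined to the region $\mathcal{A}\left(T\right) \cap \{r^\star \geq r^\star_{cl}\}$ and with the $\gamma$-estimate of Proposition \ref{kgom} replaced by its sharpened version, Corollary \ref{betterc}. Concretely, recall $\lambda + \nu = \left(\kappa - \gamma\right)\left(1-\mu\right)$ and $\lambda - \nu = \left(\kappa + \gamma\right)\left(1-\mu\right)$. In the range $r^\star_{cl} \leq r^\star \leq \frac{9}{10}t$ the bound (\ref{kgclose1}) already supplies $0 \leq \kappa - \frac{1}{2} \leq C_L\left(2+c\right)\frac{M}{t^2}$, while Corollary \ref{betterc} upgrades (\ref{kgclose2}) to $0 \leq \frac{1}{2} - \gamma \leq C_L c \frac{M}{t^2}$ there; this is the only point at which the constant $C_K$ appearing in the original proof is downgraded to $C_L$. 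Since $r_{cl} > r_K$ — guaranteed by the construction of $r^\star_{cl}$ in Section \ref{rstarcldef}, so that $C_L = \sup_{r^\star \geq r^\star_{cl}}\frac{1}{1-\mu} \leq C_K$ — feeding these bounds into $\lambda + \nu$ and $\lambda - \nu$ reproduces (\ref{lam+nu}) and (\ref{lam-nu}) with $C\left(r_K,c\right)$ replaced by a constant $C\left(r_{cl},c\right)$ depending only on the weight of $\frac{1}{1-\mu}$ on $r = r_{cl}$ and on $c$.

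First I would work on the slice $t = T$, where $r\left(T,r^\star=0\right) = 2\sqrt{M_A}$ exactly. Integrating $\partial_{r^\star} r = \lambda - \nu$ outward from $r^\star = 0$ and comparing with the five-dimensional Schwarzschild ODE $\partial_{r^\star} r = 1 - \frac{2M_A}{r^2}$ — whose solution normalized at $r = 2\sqrt{M_A}$ is precisely the bracketed expression in (\ref{rstarr}) — the error accumulated is bounded by $\int_0^{r^\star}\big[\left(1-\mu\right) - \left(1-\frac{2M_A}{r^2}\right)\big]\,dr^\star + C\left(r_{cl},c\right)\frac{M}{t^2}\,r^\star$; the first integrand is controlled by the $\epsilon$-smallness of the mass fluctuation exactly as in Proposition \ref{rstarrdec}, and for $r^\star \leq \frac{9}{10}t$ the total is $\leq C\left(r_{cl},c\right)\frac{M}{t}$. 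Moving off the slice $t = T$ then costs a further $C\left(r_{cl},c\right)\frac{M}{t}$ via $|\partial_t r| = |\lambda + \nu| \leq C\left(r_{cl},c\right)\frac{M}{t^2}$ integrated over a $t$-interval of length comparable to $t$. This establishes (\ref{rstarr}) with constant $C\left(r_{cl},c\right)$ throughout $\mathcal{A}\left(T\right) \cap \{r^\star_{cl} \leq r^\star \leq \frac{9}{10}t\}$.

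Finally, to cover the asymptotic part $r^\star \geq \frac{9}{10}t$, I would invoke Corollary \ref{kapgamdecr} in place of Proposition \ref{kgom}: there $r \sim r^\star \sim t$ by (\ref{rstarrb}), so the $\frac{M}{r^2}$-type deviations of $\kappa$ and $\gamma$ from $\frac{1}{2}$ translate into $\frac{M}{t^2}$-errors, and the same integration argument — now along constant-$r^\star$ (or constant-$v$) rays anchored on the $r^\star = \frac{9}{10}t$ curve, on which (\ref{rstarr}) has just been proven — propagates the bound without changing the constant.

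There is essentially no genuine obstacle: the content is entirely a bookkeeping of constants. The one point requiring care — and the reason this is stated as a separate corollary rather than folded into Proposition \ref{rstarrdec} — is that the sharp $\gamma$-bound with the smaller constant $C_L$ is available only in $\{r^\star \geq r^\star_{cl}\}$ (through Corollary \ref{betterc}), not throughout $\{r^\star \geq r^\star_K\}$, so one must keep every integration path inside $\{r^\star \geq r^\star_{cl}\}$. Because the $r^\star_{cl}$-curve lies uniformly to the right of the $r^\star_K$-curve this is automatic, but it is exactly what forces the region restriction in the statement.
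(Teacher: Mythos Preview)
Your argument is correct and is precisely the route the paper intends: the Corollary is stated without proof, preceded only by the phrase ``In analogy with Corollary~\ref{betterc} we also have,'' and the analogy is exactly the substitution of Corollary~\ref{betterc} for the $\gamma$-estimate (\ref{kgclose2}) inside the proof of Proposition~\ref{rstarrdec}, which is what you carry out. Your write-up is in fact more detailed than the paper's, which leaves the integration implicit; one small remark is that the integration from $r^\star = 0$ must go in both directions (outward and inward toward $r^\star_{cl}$), which you do not say explicitly but which is harmless since $0 > r^\star_{cl}$ and the same bound applies.
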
 
\subsubsection{Stability of constant $t$ slices}
\label{StabCord}
In this section we are going to study the relation of the different
coordinate systems $\mathcal{C}_{\tilde{\tau}}=\left(u_{\tilde{\tau}},v_{\tilde{\tau}}\right)$ associated with 
different $\tilde{\tau} \in A$ (cf.~Section \ref{Coordinates}). 
Instead of the smallness estimates entering the proof 
of Proposition \ref{Cauchystab}, we will now exploit the decay 
estimates for the quantities $\kappa$ and $\gamma$
derived in Proposition \ref{kgom}. Recall Notation \ref{tnotation}. 
\begin{proposition} \label{stabt}
Let $\tilde{\tau}_A \in A$. In view of 
Corollary \ref{empty} assume $\tilde{\tau}_A \geq
\tilde{\tau_0}$. Then in the coordinate 
system  $\mathcal{C}_{\tilde{\tau}_A}=\left(u_{\tilde{\tau}_A},v_{\tilde{\tau}_A}\right)$ associated to $\tilde{\tau}_A$ the $t$-coordinate of 
the initial data slice satisfies the bound 
\begin{equation}
\sup_{\tilde{S} \cap \{r^\star \geq r^\star_K\} \cap \{u \leq u_0\}} |t - \sqrt{M}| \leq C\left(\epsilon\right) \, . \label{sc1}
\end{equation}
\end{proposition}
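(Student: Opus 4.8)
The plan is to track how the $t$-coordinate changes as one moves along the initial data slice $\tilde{S}$ in the region $r^\star \geq r^\star_K$, using the fact that the coordinate system $\mathcal{C}_{\tilde{\tau}_A}$ is normalized so that $t = T_A = \vartheta(\tilde{\tau}_A)$ holds on the $\nabla r$ integral curve through the point $A$ (where $r^2 = 4m_A$). The key observation is that $\tilde{S} \cap \{r \geq r_K\}$ is \emph{itself} an integral curve of $\nabla r$ (by the hypothesis of Theorem \ref{asymptoticstab}), so the comparison between the data slice and the reference $\nabla r$-curve through $A$ reduces to estimating the rate of change of $t$ as one moves transversally between two nearby $\nabla r$ integral curves, i.e.\ along an integral curve of $\nabla_\perp r$ — a curve of constant area radius $r$.

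First I would set up, exactly as in the proof of Proposition \ref{Cauchystab}, the orthonormal frame $(R,G)$ from (\ref{RGvector}), with $\varrho$ the affine parameter along $R = \frac{1}{\sqrt{1-\mu}}\nabla r$ and $\tau$ the affine parameter along $G = \frac{1}{\sqrt{1-\mu}}\nabla_\perp r$. Along $G$ one has the formula
\begin{equation}
\frac{dt}{d\tau} = \frac{1}{4\kappa\gamma}(\kappa+\gamma)\frac{1}{\sqrt{1-\mu}}\, ,
\end{equation}
and the integral of this from $\tau = 0$ (at $A$) to $\tau = \tau_{AD}$ (at the point $D$ where the curve $r^2 = 4m_A$ meets the data) gives the $t$-value of $D$. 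By definition (\ref{timedef}), $T_A = \sqrt{M} + \sqrt{2}\,\tau_{AD}$, which precisely cancels the leading-order contribution $\sqrt{2}\,\tau_{AD}$ one would get from (\ref{upsa}) if $\kappa = \gamma = \tfrac12$ and $1-\mu = \tfrac12$ exactly. The point is therefore to control the \emph{deviation} of the integrand from its Schwarzschild value. Here, instead of the crude $C(\epsilon)$ bounds used in Proposition \ref{Cauchystab}, I would insert the decay estimates of Proposition \ref{kgom} — namely $|\kappa - \tfrac12| \leq C_L M/t^2$, $|\gamma - \tfrac12| \leq C_K c\, M/t^2$ in $r^\star_K \leq r^\star \leq \tfrac{9}{10}t$ and Corollary \ref{kapgamdecr} further out — so that the error in $dt/d\tau$ is of order $M/t^2$. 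Integrating $M/t^2$ against $d\tau$ over a $\tau$-range of length $\sim\tau_{AD}$ (which is controlled, via (\ref{rgrowth})–(\ref{rstarrb}), by the support radius $\tilde{R}$ and hence is bounded independently of $\tilde{\tau}_A$) gives a contribution that is $C(\epsilon)$-small. The same argument controls the change of $t$ along the integral curve of $R$ through $D$ (this is the data curve $\tilde{S}$ itself, for $r \geq r_K$), using (\ref{talongnabr}): $\frac{dt}{d\varrho} = \frac{1}{4\gamma\kappa}(\gamma-\kappa)\frac{1}{\sqrt{1-\mu}}$, whose integrand is $O(M/t^2)$ by the improved bound $|\gamma - \kappa| \leq C M/t^2$ now available, while the $\varrho$-range from $D$ out to $r = \tilde{R}$ is again bounded by (\ref{rgrowth}). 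Assembling these two estimates yields $|t - \sqrt{M}| \leq C(\epsilon)$ uniformly on $\tilde{S} \cap \{r^\star \geq r^\star_K\} \cap \{u \leq u_0\}$, which is (\ref{sc1}).

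The main obstacle — and the reason this proposition comes \emph{after} Section \ref{analboot} rather than being folded into Proposition \ref{Cauchystab} — is circularity: to invoke Proposition \ref{kgom} one needs the bootstrap statement $\mathcal{P}$ to hold in $\mathcal{A}(T(\tilde{\tau}_A))$, in particular bootstrap assumption \ref{boot1} relating $r^\star$ to $r$, which is what pins down the geometry of the region in the coordinate system $\mathcal{C}_{\tilde{\tau}_A}$; this is legitimate precisely because $\tilde{\tau}_A \in A$. A secondary subtlety is ensuring that the $\tau$-length $\tau_{AD}$ and the $\varrho$-length from $D$ to $r = \tilde{R}$ are bounded \emph{independently of how large $\tilde{\tau}_A$ is}; this follows because these lengths are governed by the area-radius span of the support region (a fixed quantity, via (\ref{rgrowth})) together with the distance from $r = r_K$ to $r = 2\sqrt{m_A}$, which by the a priori mass-fluctuation bound (\ref{inisma}) and Corollary \ref{rcurvcol} is also uniformly controlled. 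One must also be slightly careful near $r^\star = \frac{9}{10}t$, where the two regimes of Proposition \ref{kgom} meet, but since the decay rate $M/t^2$ (equivalently $M/r^2$ in the asymptotic region, by Corollary \ref{kapgamdecr} and (\ref{rstarrb})) is uniform across the matching, the integrals splice without loss.
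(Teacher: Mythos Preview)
There is a genuine gap. You claim the $\tau$-range $\tau_{AD}$ along the constant-$r$ curve from $A$ to $D$ is ``controlled, via (\ref{rgrowth})--(\ref{rstarrb}), by the support radius $\tilde{R}$ and hence is bounded independently of $\tilde{\tau}_A$.'' This is false: by (\ref{timedef}) one has $\tau_{AD} = (T_A - \sqrt{M})/\sqrt{2}$, which grows without bound as $\tilde{\tau}_A \to \infty$. The estimates (\ref{rgrowth})--(\ref{rstarrb}) bound the affine parameter $\varrho$ along $\nabla r$ (the radial direction) in terms of $r$; they say nothing about the affine parameter $\tau$ along the timelike $\nabla_\perp r$ direction, which is what you need here --- you have confused the two. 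Moreover, the decay bounds of Proposition \ref{kgom} are explicitly only valid for $t \geq t_0$ (see the preamble to Section \ref{analboot}), so they cannot be applied along the portion of the path near the data where $t \approx \sqrt{M}$; and even if they could, integrating $M/t^{2}\,d\tau \sim M/t^{2}\,dt$ down to $t \approx \sqrt{M}$ would yield an error of order $\sqrt{M}$, i.e.\ $O(1)$ in dimensionless terms, not $C(\epsilon)$. The same objection recurs in your second step along the data curve: there $t\approx\sqrt{M}$, so Proposition \ref{kgom} gives nothing useful.

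The paper resolves this by splitting the integration from $A$ to $D$ at an intermediate point $A'$, the intersection of the $r^2 = 4m_A$ curve with the $\nabla r$ slice associated to the \emph{fixed} time $\tilde{\tau}_0$. From $A$ to $A'$ one is in the late-time region and the decay estimates apply, giving an integrated error $\lesssim C(c,r^\star_{cl})\,M/t^{\tilde{\tau}_A}_{\tilde{\tau}_0}$, small by the choice of large $t_0$. From $A'$ to $D$ the $\tau$-range is $\tau_{A'}\sim\tilde{\tau}_0$, a fixed finite length, and one uses only the Cauchy-stability smallness $|\kappa+\gamma-1|\leq C(\epsilon)$ from Propositions \ref{simpsmall} and \ref{gamma}; the product $C(\epsilon)\cdot\tilde{\tau}_0$ is then small by choice of data (Proposition \ref{Cauchystab}). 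The final step along the data curve through $D$ is likewise handled by Cauchy stability exactly as in Proposition \ref{Cauchystab}, not by decay.
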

\begin{proof}
By Proposition \ref{Cauchystab} the statement (\ref{sc1}) 
already holds up to $\tilde{\tau}_0$ by 
a suitable smallness assumption on the initial data in all coordinate 
systems $\mathcal{C}_{\tilde{\tau}}$ with $\tilde{\tau} \leq \tilde{\tau}_0$. 
Consider now a coordinate system $\mathcal{C}_{\tilde{\tau}_A}$ 
for a $\tilde{\tau}_A \geq \tilde{\tau}_0$. Recall the 
vectorfields $G$ and $R$ defined in (\ref{RGvector}).
In the following, we again frequently refer to 
figure \ref{codiag} of section \ref{Coordinates}. 
At the point $A$ we have $t^{\tilde{\tau}_A}_{\tilde{\tau}_A}=T$ 
by definition. We would like to estimate the value $t^{\tilde{\tau}_A}_{0}$ 
at the point $D$ and compare it to $\sqrt{M}$, which is the value 
of $t$ if $\tilde{\tau}=0$ and the coordinates are defined on the 
initial data. The rate at which $t$ changes in affine
parameter $\tau$ along the integral curve of $\nabla_\perp r$ is given
by (\ref{upsa}). We first integrate (\ref{upsa}) along the curve $r^2=4m_A$, 
from $A$ to the point $A^\prime$, which is defined to be 
on the $\nabla r$ slice associated with $\tilde{\tau}_0$. 
Using the decay estimates
\begin{equation}
|\kappa + \gamma - 1 | \leq C_L\left(2+c\right) \frac{M}{t^2} \textrm{ \ \ \ and \ \ \ }
 \Big| \frac{1}{\sqrt{1-\mu}} -\frac{1}{\sqrt{1-\mu_A}} \Big| \leq C\left(c\right) 
\frac{M}{t^2}
\end{equation}
which hold along the curve by Proposition \ref{kgom} (and its Corollaries), we 
obtain an estimate
\begin{equation}
T-t^{\tilde{\tau}_A}_{\tilde{\tau}_0} \leq \frac{1}{\sqrt{1-\frac{2m_A}{r^2}}}\left(\tau_{A} - \tau_{A^\prime} \right) + C\left(c,r^\star_{cl}\right)\frac{M}{t^{\tilde{\tau}_A}_{\tilde{\tau}_0}}
\end{equation}
where the last term is small and the constant $C\left(c, r^\star_{cl}\right)$ depends on the weight of $\frac{1}{1-\mu}$ on $r^\star_{cl}$. Using the definition of 
$T=\vartheta\left(\tilde{\tau}_{A}\right) = \sqrt{M} +\frac{\tau_{A}}{\sqrt{1-\mu_A}}$ 
we derive
\begin{equation}
\Big| t^{\tilde{\tau}_A}_{\tilde{\tau}_0} - \frac{\tau_{A^\prime}}{\sqrt{1-\frac{2m_{A}}{r^2}}} -\sqrt{M} \Big| \leq C\left(c,r^\star_{cl}\right)\frac{M}{t^{\tilde{\tau}_A}_{\tilde{\tau}_0}}
\end{equation}
and with the bootstrap assumption on the energy
\begin{equation} \label{dect}
\Big| t^{\tilde{\tau}_A}_{\tilde{\tau}_0} - \frac{\tau_{A^\prime}}{\sqrt{1-\frac{2m_{A^\prime}}{r^2}}} -\sqrt{M}\Big| \leq C\left(c,r^\star_{cl}\right)\frac{M}{t^{\tilde{\tau}_A}_{\tilde{\tau}_0}}
\end{equation}
In the second step we integrate (\ref{upsa}) from $A^\prime$ 
to $D$. In this region we can use the smallness estimates 
for $\kappa$ and $\gamma$ as in the proof of 
Proposition \ref{Cauchystab} obtaining 
\begin{equation} \label{smat}
\Big| t^{\tilde{\tau}_A}_{\tilde{\tau}_0} - t^{\tilde{\tau}_A}_{0} \Big| \leq \frac{1}{\sqrt{1-\frac{2m_{A^\prime}}{r^2}}}\left(\tau_{A^\prime}  \right) + \sqrt{M} C\left(\epsilon\right) \, ,
\end{equation}
where we used the fact $\frac{1}{\sqrt{M}} C\left(\epsilon\right) \cdot \tau_{A^\prime}$ is small by a suitable choice of the initial data, which in turn follows from the smallness of $\frac{1}{\sqrt{M}}C\left(\epsilon\right) \tilde{\tau}_0$ by Proposition \ref{Cauchystab} and the estimate $|\tau_{A^\prime} - \tilde{\tau}_0| \leq \sqrt{M} C\left(\epsilon\right)$.
Putting together the 
estimates (\ref{dect}) and (\ref{smat}) we obtain
\begin{equation}
\Big| \sqrt{M} + \frac{\tau_{A^\prime}}{\sqrt{1-\frac{2m_{A^\prime}}{r^2}}} - t^{\tilde{\tau}_A}_{0}\Big| \leq \frac{\tau_{A^\prime}}{\sqrt{1-\frac{2m_{A^\prime}}{r^2}}} + C\left(c,r^\star_{cl}\right)\frac{M}{t^{\tilde{\tau}_A}_{\tilde{\tau}_0}} + \sqrt{M} C\left(\epsilon\right)
\end{equation}
from which it follows (choosing $\tilde{\tau}_0$ large enough and the
initial data suitably small) that the $t$-coordinate at $D$ is 
close to $\sqrt{M}$. In the second step, which is 
identical to the one in Proposition \ref{Cauchystab}, one 
finally shows that $t$ only changes by $C\left(\epsilon\right)$ along 
the $\nabla r$-curve through $D$ on which the initial data is defined. 
\end{proof}

\begin{corollary} \label{boot2impr}
Bootstrap assumption $2$ is improved.
\end{corollary}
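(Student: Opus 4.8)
The corollary to be established is that bootstrap assumption~2, namely the inequality
\[
\frac{1}{2}\sqrt{M} < \sup_{\tilde S \cap \{r^\star \geq r^\star_K\} \cap \{u \geq u_0\}} t < \frac{3}{2}\sqrt{M},
\]
is improved. The plan is to read this off directly from Proposition~\ref{stabt}. That proposition yields
\[
\sup_{\tilde{S} \cap \{r^\star \geq r^\star_K\} \cap \{u \leq u_0\}} |t - \sqrt{M}| \leq C\left(\epsilon\right),
\]
where $C(\epsilon)$ can be made as small as we like by shrinking the initial data. Since the quantity $\sup t$ appearing in bootstrap assumption~2 is bounded above by $\sqrt{M} + C(\epsilon)$ and below by $\sqrt{M} - C(\epsilon)$, choosing $\delta$ in the hypotheses of Theorem~\ref{asymptoticstab} small enough that $C(\epsilon) < \frac{1}{4}\sqrt{M}$ (say) forces
\[
\frac{3}{4}\sqrt{M} < \sup t < \frac{5}{4}\sqrt{M},
\]
which is strictly interior to the interval $\left(\frac{1}{2}\sqrt{M}, \frac{3}{2}\sqrt{M}\right)$ demanded by the bootstrap assumption. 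Hence the constant has been strictly improved.

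The one point requiring a word of care is that Proposition~\ref{stabt} is stated for the region $\{u \leq u_0\}$ whereas bootstrap assumption~2 involves $\{u \geq u_0\}$; but on $\tilde{S} \cap \{r^\star \geq r^\star_K\}$ the field $B$ is of compact support and $u=u_0$ was chosen so that $B$ vanishes there and for $u \geq u_0$ the slice $\tilde S$ agrees with a $\nabla r$ integral curve along which $t$ changes by at most $C(\epsilon)$ (the second step of the proof of Proposition~\ref{stabt}, identical to that of Proposition~\ref{Cauchystab}). Thus the supremum over the relevant portion of $\tilde S$ is controlled by the same estimate, and the bound $|t-\sqrt{M}| \leq C(\epsilon)$ holds on the whole of $\tilde{S} \cap \{r^\star \geq r^\star_K\}$, in particular on the piece with $u \geq u_0$. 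There is no genuine obstacle here: the corollary is a bookkeeping consequence of the already-proven stability of the constant-$t$ slices, and the only thing to verify is that the smallness of $C(\epsilon)$ is compatible with (indeed much stronger than) the gap $\frac12\sqrt M$ between the asserted and required bounds, which it manifestly is once the initial data are taken small enough.
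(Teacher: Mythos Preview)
Your argument is correct and matches the paper's approach: the corollary is stated without proof immediately after Proposition~\ref{stabt}, as an immediate consequence of the bound $|t-\sqrt{M}|\leq C(\epsilon)$ on the initial slice.

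One remark on your second paragraph: the discrepancy you spotted between $\{u\leq u_0\}$ in Proposition~\ref{stabt} and $\{u\geq u_0\}$ in bootstrap assumption~2 is almost certainly a typo in the statement of the proposition rather than a genuine issue to be argued around. The proof of Proposition~\ref{stabt} explicitly says its second step ``is identical to the one in Proposition~\ref{Cauchystab}'', and that step controls $t$ along the $\nabla r$ integral curve between $r=r_K$ and the support radius $r=\tilde R$, which is precisely the region $\{u\geq u_0\}$ where $B$ is supported. So no separate bridging argument is needed; your attempt to supply one is harmless but slightly muddled (you write that for $u\geq u_0$ the slice ``agrees with a $\nabla r$ integral curve'', but in fact $\tilde S$ is assumed to coincide with such a curve everywhere in $r\geq r_K$ by the hypotheses of Theorem~\ref{asymptoticstab}). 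Simply citing Proposition~\ref{stabt} and noting that $C(\epsilon)$ can be made smaller than $\frac{1}{2}\sqrt{M}$ is all that is required.
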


One easily generalizes the previous proposition 
to the statement that $t$ does not change 
much along a $\nabla r$ integral curve located in the 
bootstrap region:
\begin{proposition}
With the assumptions of Proposition \ref{stabt}, the $t$-coordinate along the
$\nabla r$ integral curve associated to 
$\sqrt{M} \leq \tilde{\tau}_i \leq \tilde{\tau}_A$ satisfies
\begin{equation}
\sup_{\left(\nabla r\right)_{\tilde{\tau}_i} \cap \{r^\star \geq r^\star_K\}}
|t-\vartheta\left(\tilde{\tau}_i\right)| \leq C\left(\epsilon\right)
\end{equation}
in the coordinate 
system  $\mathcal{C}_{\tilde{\tau}_A}=\left(u_{\tilde{\tau}_A},v_{\tilde{\tau}_A}\right)$.
\end{proposition}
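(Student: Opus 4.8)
The plan is to adapt the proof of Proposition \ref{stabt}, which is precisely the case $\tilde\tau_i=0$ (where $\vartheta(0)=\sqrt{M}$), now anchoring the computation at the $\nabla r$ integral curve $\left(\nabla r\right)_{\tilde\tau_i}$ rather than at the data slice $\tilde{S}$. Write $T_i=\vartheta(\tilde\tau_i)$ and let $A_i$ be the point where $\left(\nabla r\right)_{\tilde\tau_i}$ meets the curve $r^2=4m$; by the definition (\ref{timedef}) one has $T_i=\sqrt{M}+\tau_{A_iD_i}/\sqrt{1-2m_{A_i}/r^2}$, with $D_i$ the intersection of the constant-$r$ curve through $A_i$ with $\tilde{S}$ and $\tau_{A_iD_i}$ its affine length. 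The whole argument takes place in $\mathcal{C}_{\tilde\tau_A}$ and uses only the two evolution equations (\ref{upsa}) (change of $t$ along a curve of constant $r$) and (\ref{talongnabr}) (change of $t$ along a $\nabla r$ curve), together with the fact that $1-\mu\geq\tilde{c}>0$ on $\{r^\star\geq r^\star_K\}$.

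The first step is to estimate the $t$-coordinate of $A_i$ in $\mathcal{C}_{\tilde\tau_A}$. As in Proposition \ref{stabt} one connects $A_i$ to the point $A$ — where $t=\vartheta(\tilde\tau_A)$ and $r^\star=0$ in $\mathcal{C}_{\tilde\tau_A}$ — by a path built from a segment of a constant-$r$ curve and a segment of a $\nabla r$ curve, and integrates (\ref{upsa}), (\ref{talongnabr}) along it. On the portions lying in the late-time region $\{r^\star\geq r^\star_K\}\cap\{t\geq t_0\}$ one inserts the decay bounds $|\kappa+\gamma-1|\leq CM/t^2$ and $|\gamma-\kappa|\leq CM/t^2$ furnished by Proposition \ref{kgom} and its corollaries, replacing $1/\sqrt{1-\mu}$ by $1/\sqrt{1-2m_A/r^2}$ up to an error $\leq CM/t^2$; on the remaining portion, which lies near $\tilde{S}$, one uses instead the Cauchy-stability smallness $|\kappa+\gamma-1|\leq C(\epsilon)$ and $(\gamma-\kappa)/\sqrt{1-\mu}\leq C(\epsilon)$ from the proof of Proposition \ref{Cauchystab}, together with the fact that $C(\epsilon)$ times any affine length accumulated below $t_0$ is $C(\epsilon)$-small by a suitable choice of data. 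Reconciling the various Hawking masses $m_A,m_{A_i},\dots$ appearing in the $(1-\mu)^{-1/2}$ weights by means of the a-priori bound (\ref{inisma}) and the bootstrap decay of the energy flux, one obtains $|t(A_i)-T_i|\leq C(\epsilon)$.

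The second step is to propagate this bound along the rest of $\left(\nabla r\right)_{\tilde\tau_i}\cap\{r^\star\geq r^\star_K\}$. Combining (\ref{talongnabr}) with $dr/d\varrho=\sqrt{1-\mu}$ gives $dt/dr=(\gamma-\kappa)/(4\gamma\kappa(1-\mu))$ along this curve, which by Proposition \ref{kgom} and its corollaries is $\leq CM/t^2$ (indeed $\leq CM/r^2$ in the asymptotic region, by Corollary \ref{kapgamdecr}) for $t\geq t_0$, and $\leq C(\epsilon)$ for $t\leq t_0$; since $1-\mu$ is bounded below on $\{r^\star\geq r^\star_K\}$, integrating in $r$ and again splitting at $t_0$ shows that $t$ varies by at most $C(\epsilon)$ along the curve, which combined with the first step yields the asserted bound. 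The step requiring the most care is, exactly as in Proposition \ref{stabt}, the bookkeeping of the several Hawking masses entering the $(1-\mu)^{-1/2}$ factors on the different constant-$r$ curves, and the splitting of the (a priori long) affine-parameter integrations at the time $t_0$, so that the errors collected near the data are genuinely $C(\epsilon)$ rather than $C(\epsilon)$ times a large length; both are handled by (\ref{inisma}) and the bootstrap energy decay precisely as in the proof already given.
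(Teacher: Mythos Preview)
Your overall strategy is the same as the paper's: repeat the proof of Proposition~\ref{stabt}, first pinning down $t$ at the anchor point $A_i$ by integrating (\ref{upsa}) along a constant-$r$ curve from $A$, then propagating along $\left(\nabla r\right)_{\tilde\tau_i}$ via (\ref{talongnabr}). Step~1 of your write-up matches the paper essentially verbatim.

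The one substantive difference is in how you organise Step~2. The paper splits the $r$-integration along the curve at the \emph{fixed support radius} $\tilde R$: on $[r^\star_K,\tilde R^\star]$ it uses the a-priori bound $|\gamma-\kappa|\leq C(\epsilon)$ (finite length, so $C(\epsilon)\cdot\tilde R$ is small), while on $[\tilde R,\infty)$ it invokes Corollary~\ref{kapgamdecr} to get the \emph{integrable} decay $(\gamma-\kappa)/\sqrt{1-\mu}\leq \tilde\epsilon\, M^{3/4}r^{-3/2}$, hence $|dt/d\varrho|\leq \hat\epsilon\, M^{3/4}\varrho^{-3/2}$ and a finite tail contribution. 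This is the genuinely new ingredient compared with Proposition~\ref{stabt} (where the data curve only needed to be followed to $\tilde R$), and it treats all $\tilde\tau_i$ uniformly. Your split ``at $t_0$'' is less natural here because $t$ is essentially constant along a single $\nabla r$ curve: for curves with $T_i\leq t_0$ you only claim $|dt/dr|\leq C(\epsilon)$, and integrating that over the unbounded range $[r^\star_K,\infty)$ does not close --- the appeal to ``precisely as in the proof already given'' does not cover this, since in Proposition~\ref{stabt} the curve was only followed up to $\tilde R$. You do cite Corollary~\ref{kapgamdecr} for the asymptotic region, but only attached to the clause ``for $t\geq t_0$''. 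If you reorganise Step~2 as a split in $r$ at $\tilde R$ and use Corollary~\ref{kapgamdecr} in the tail uniformly in $\tilde\tau_i$, your argument becomes exactly the paper's.
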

\begin{proof}
Repeat the proof of the previous Proposition, now integrating only up to
the $\nabla r$ integral curve associated with $\tilde{\tau}_i$. In the 
second step, when integrating equation (\ref{talongnabr}) along
the $\nabla r$ integral curve, one again uses the smallness estimate for
$\kappa - \gamma$ in $[r^\star_K,\tilde{R}^\star]$. However 
the decay estimate 
\begin{equation}
\left(\gamma - \kappa\right) \frac{1}{\sqrt{1-\mu}}  \leq 
\tilde{\epsilon}\frac{M^\frac{3}{4}}{r^\frac{3}{2}}
\end{equation}
following from Proposition (\ref{kapgamdecr}) can now be used 
in the region $[\tilde{R}, \infty)$. The $\tilde{\epsilon}$ 
arises because $\frac{M^\frac{1}{4}}{\sqrt{r}}$ is small in
$\left[\tilde{R},\infty\right)$. Inserting that the affine parameter $\varrho$ 
is proportional to $r$ (cf. (\ref{rgrowth})), one concludes that
\begin{equation}
0 \leq -\frac{dt}{d\varrho} \leq \hat{\epsilon} \frac{M^\frac{3}{4}}{\varrho^\frac{3}{2}}
\end{equation}
and hence the change in $t$ along any $\nabla r$ integral 
curve is also small within the region $[\tilde{R}, \infty)$. 
\end{proof}
\begin{proposition} \label{Cauchystab2}
Statement $3$ of Proposition \ref{Cauchystab} holds in any coordinate system
$\mathcal{C}_{\tilde{\tau}}$ for $\tilde{\tau} \geq \tilde{\tau}_0$
and $\tilde{\tau} \in A$.  
\end{proposition}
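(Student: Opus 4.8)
The plan is to reduce the assertion, in the coordinate system $\mathcal{C}_{\tilde{\tau}}$, to exactly the Cauchy-stability input that underlies Proposition \ref{Cauchystab}, the only genuinely new ingredient being the stability of the coordinate systems established earlier in this section. For $\tilde{\tau}=\tilde{\tau}_0$ the statement is Proposition \ref{Cauchystab} itself (and $[\sqrt{M},\tilde{\tau}_0]\subset A$ by Corollary \ref{empty}), so assume $\tilde{\tau}>\tilde{\tau}_0$. Since $\vartheta$ is easily seen to be monotone, $T:=\vartheta(\tilde{\tau})\geq\vartheta(\tilde{\tau}_0)=T_0$, whence ${}^{u_{hoz}}\mathcal{D}^{r^\star_K,u_0}_{[2\sqrt{M},T_0]}\subset\mathcal{A}(T)$, and since $\tilde{\tau}\in A$ the statement $\mathcal{P}_T$ already holds there with the (large) constant $c$. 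What must be shown is that in this subregion every inequality of $\mathcal{P}$ holds with the much smaller constant $\tilde{\delta}$, and that the pointwise bound holds on the slices $\Sigma_t$, $2\sqrt{M}\leq t\leq T_0$.

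The key step is to place the region ${}^{u_{hoz}}\mathcal{D}^{r^\star_K,u_0}_{[2\sqrt{M},T_0]}$, as read off in $\mathcal{C}_{\tilde{\tau}}$, inside a fixed bounded subset $\mathcal{R}_0$ of the Penrose diagram, independent of $\tilde{\tau}\geq\tilde{\tau}_0$ with $\tilde{\tau}\in A$, and such that $\mathcal{R}_0$ stays away from $\mathcal{I}^+$ and meets $\mathcal{H}^+$ only in a bounded arc. The curve $r=r_K$ is geometrically fixed; the ray $u=u_0$ is the outgoing null ray through the future endpoint of $\mathrm{supp}(B)\cap\tilde{S}$, which sits at the fixed support radius $\tilde{R}$, so by Proposition \ref{rstarrdec} (relating $r^\star$ to $r$) together with the bound $t\approx\sqrt{M}$ on the data from Proposition \ref{stabt}, its location is uniformly controlled across the family $\mathcal{C}_{\tilde{\tau}}$. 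The slices $t=2\sqrt{M}$ and $t=T_0$ of $\mathcal{C}_{\tilde{\tau}}$ are, in $\{r^\star\geq r^\star_K\}$, within $C(\epsilon)$ of the data curve and of the $\nabla r$-curve associated with $\tilde{\tau}_0$ respectively, by Proposition \ref{stabt} and the proposition immediately following it, while the relation between $r$ and $r^\star$ along them is pinned to the Schwarzschild one up to $C(r_K,c)M/t$ by Proposition \ref{rstarrdec}; and on the horizon piece $v\leq T_0+r^\star_K$ is bounded. This produces the desired $\mathcal{R}_0$.

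It remains to note that the quantities entering $\mathcal{P}$ are, up to bounded factors, coordinate independent and localised so as to be directly controlled by Cauchy stability on $\mathcal{R}_0$: $|B|$ is a scalar; $|\zeta/\nu|=r^{3/2}|B_{,u}/r_{,u}|$ and $|\theta/\lambda|=r^{3/2}|B_{,v}/r_{,v}|$ are $r^{3/2}$ times the derivative of $B$ along an unparametrised null direction, and by Proposition \ref{simpsmall} one has $\lambda\sim 1$ on $\mathcal{R}_0\cap\{r^\star\geq r^\star_K\}$, hence $|\theta|\sim|\theta/\lambda|$ there, while on $\mathcal{R}_0\cap\{r^\star\leq r^\star_{cl}\}$ all of $|B|$, $|\theta|$ and $|\zeta/\nu|$ remain bounded via the red-shift estimate (\ref{redshift}). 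The weighted energy $E^K_B$ is an integral over $\{r^\star\geq r^\star_K\}$, where the weights $(u+a)^2$, $(v-a)^2$, $t^2$ and $r$ are all bounded on $\mathcal{R}_0$; the Hawking-mass differences in (\ref{hozass}) and (\ref{ceilass}) are bounded by the mass fluctuation $\epsilon\,m_{max}$ of (\ref{inisma}); and $\tilde{F}^Y_B=\int(-\nu)(\zeta/\nu)^2\,du$ is bounded by $(\sup|\zeta/\nu|)^2$ times the $r$-fluctuation. Thus all of $\mathcal{P}$, together with the pointwise bound on $\Sigma_t$ for $2\sqrt{M}\leq t\leq T_0$, reduces to smallness of $|B|$ and its first derivatives on the fixed bounded set $\mathcal{R}_0$ --- which is precisely the Cauchy stability (continuous dependence of the system (\ref{eom1})--(\ref{Bevol}) on its data, the domain-of-dependence property, and the red-shift estimate near the horizon) already used in the proof of Proposition \ref{Cauchystab}, now run on $\mathcal{R}_0$ rather than on the region there. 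As $\tilde{\tau}_0$, $\psi$ and $r_K$ are fixed in advance, a further shrinking of $\delta$ is permissible and delivers the constant $\tilde{\delta}$. The one non-routine point is the construction of $\mathcal{R}_0$, i.e.~the uniform control of the $t=T_0$ slice and of $u_0$ over the family $\mathcal{C}_{\tilde{\tau}}$; this is exactly what the coordinate-stability estimates of this section supply, with the cut $u\geq u_0$ keeping $\mathcal{R}_0$ off $\mathcal{I}^+$ and the bound $v\leq T_0+r^\star_K$ keeping it off all but a bounded arc of $\mathcal{H}^+$.
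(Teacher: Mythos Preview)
Your proof is correct and follows essentially the same approach as the paper's. Both argue that the coordinate-stability results of this section (bootstrap assumption~\ref{boot1} / Proposition~\ref{rstarrdec} for the $r$--$r^\star$ relation, and Proposition~\ref{stabt} together with the subsequent proposition for the $t$-coordinate of $\nabla r$ curves) force the region ${}^{u_{hoz}}\mathcal{D}^{r^\star_K,u_0}_{[2\sqrt{M},T_0]}$ to sit in a uniformly bounded portion of the Penrose diagram across all $\mathcal{C}_{\tilde{\tau}}$ with $\tilde{\tau}\in A$, so that the Cauchy-stability input already established for $\mathcal{C}_{\tilde{\tau}_0}$ transfers; the paper's proof is just terser, and your explicit construction of the fixed set $\mathcal{R}_0$ and itemisation of why each quantity in $\mathcal{P}$ is controlled there spell out what the paper leaves implicit (including the remark, matching the paper's footnote, that a further uniform shrinking of $\delta$ may be needed).
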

\begin{proof}
Bootstrap assumption $1$ and the previous Proposition implies that the
location of the region
${}^{u_{hoz}}\mathcal{D}^{r^\star_K, u_0}_{[2\sqrt{M},
  T_0]}$ only changes slightly between the different coordinate
systems. In particular, the $v$ coordinate of the region ${}^{u_{hoz}}\mathcal{D}^{r^\star_K, u_0}_{[2\sqrt{M},
  T_0]}$ is uniformly bounded in the different coordinate systems, as
is the $t$ coordinate for $r^\star \geq r^\star_K$. Hence 
if statement $3$ of Proposition \ref{Cauchystab} holds in the
coordinate system $\mathcal{C}_{\tilde{\tau}_0}$ it also holds in the
coordinate system $\mathcal{C}_{\tilde{\tau}}$ for $\tilde{\tau} \geq
\tilde{\tau}_0$ and $\tilde{\tau} \in A$.\footnote{The
  $\delta$ of Proposition \ref{Cauchystab} may have to be chosen
  slightly smaller but the change is uniform in $\tilde{\tau}_A$ and 
hence the size of the bootstrap region!}
\end{proof}
Finally, we conclude from Proposition \ref{rstarrdec} 
\begin{corollary} \label{boot1impr}
Bootstrap assumption $1$ is improved.
\end{corollary}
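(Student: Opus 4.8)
The plan is to combine the quantitative estimate of Proposition \ref{rstarrdec} with the Cauchy-stability statements (Propositions \ref{Cauchystab} and \ref{Cauchystab2}), splitting the region $\mathcal{A}\left(T\right) \cap \{r^\star \geq r^\star_K\}$ at the fixed late time $t_0$. Write $\Phi\left(t,r^\star\right)$ for the left-hand side of (\ref{fibor}) (equivalently of (\ref{rstarr})). Bootstrap assumption \ref{boot1} asserts $\Phi < c\sqrt{M}$, and Proposition \ref{rstarrdec} already gives the non-strict bound $\Phi \leq C\left(r_K,c\right)\frac{M}{t}$; the task is to upgrade this to the strict inequality with a strictly smaller constant on all of $\mathcal{A}\left(T\right) \cap \{r^\star \geq r^\star_K\}$.

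First I would treat the late-time part $\{t \geq t_0\} \cap \{r^\star \geq r^\star_K\} \cap \mathcal{A}\left(T\right)$. There Proposition \ref{rstarrdec} gives $\Phi\left(t,r^\star\right) \leq C\left(r_K,c\right)\frac{M}{t} \leq C\left(r_K,c\right)\frac{M}{t_0}$, and it is essential that $C\left(r_K,c\right)$ depends only on $r_K$ and $c$ and \emph{not} on the size $T$ of the bootstrap region. By the algebra of constants justified at the end of Section \ref{bootstrap} via Proposition \ref{Cauchystab} — once $r_K$ (and $r^\star_{cl}$) have been fixed one may choose the late time $t_0$ as large as one wishes — I would pick $t_0$ so large that $C\left(r_K,c\right)\frac{\sqrt{M}}{t_0} < \frac{c}{2}$, so that $\Phi < \frac{c}{2}\sqrt{M} < c\sqrt{M}$ in this part.

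Next I would treat the remaining part $\{2\sqrt{M} \leq t \leq t_0\} \cap \{r^\star \geq r^\star_K\} \cap \mathcal{A}\left(T\right)$, which is contained in ${}^{u_{hoz}}\mathcal{D}^{r^\star_K,u_0}_{[2\sqrt{M},t_0]}$. Here I invoke statement $3$ of Proposition \ref{Cauchystab}, extended by Proposition \ref{Cauchystab2} to the coordinate system $\mathcal{C}_{\tilde{\tau}}$ for every $\tilde{\tau} \in A$ (the case $\tilde{\tau} \leq \tilde{\tau}_0$ being Proposition \ref{Cauchystab} itself and the case $\tilde{\tau} \geq \tilde{\tau}_0$ being Proposition \ref{Cauchystab2}): the full statement $\mathcal{P}$, and in particular the inequality (\ref{fibor}), already holds in this region with the small constant $\tilde{\delta}$ in place of $c$. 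Choosing $\tilde{\delta} < c$ from the outset gives $\Phi < \tilde{\delta}\sqrt{M} < c\sqrt{M}$ here as well. Combining the two parts yields $\Phi < c\sqrt{M}$ throughout $\mathcal{A}\left(T\right) \cap \{r^\star \geq r^\star_K\}$, which is the improved bootstrap assumption \ref{boot1}.

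The step I expect to require the most care is the uniformity of the threshold $t_0$ (equivalently $\tilde{\tau}_0$) across the whole family of coordinate systems $\mathcal{C}_{\tilde{\tau}}$, $\tilde{\tau} \in A$: one must know that the time after which the $\frac{1}{t}$-decay of Proposition \ref{rstarrdec} beats $c\sqrt{M}$, and the size of the region in which early-time Cauchy stability is used, can be fixed once and for all, independently of how large the bootstrap region is. This is exactly what Proposition \ref{Cauchystab2} supplies, together with the observation recorded there that the underlying smallness parameter $\delta$ may be shrunk by an amount uniform in $\tilde{\tau}_A$; with this in hand the splitting argument closes without further work.
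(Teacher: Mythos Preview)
Your proposal is correct and follows essentially the same approach as the paper: split at the late time $t_0$, use Proposition \ref{rstarrdec} for $t \geq t_0$ after choosing $t_0$ large enough that $C\left(r_K,c\right)\frac{M}{t_0}$ beats the bootstrap constant, and use Cauchy stability (Propositions \ref{Cauchystab} and \ref{Cauchystab2}) for the early part. Your explicit discussion of the uniformity in $\tilde{\tau}$ via Proposition \ref{Cauchystab2} is more detailed than the paper's terse proof, but the argument is the same.
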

\begin{proof}
We apply Proposition \ref{Cauchystab}, i.e.~we choose $t_0$ large 
such that $C\left(r_K,c\right)\frac{M}{t_0}$ is very small 
(in particular smaller than $\frac{c}{4}$) 
and the initial data so small that the bootstrap assumptions hold 
with constant $\frac{c}{2}$ at $t=t_0$. Then for $t > t_0$ the estimate of 
Proposition \ref{rstarrdec} takes over and improves the 
constant $c$ in (\ref{fibor}).
\end{proof}
\subsection{Pointwise bounds} \label{pointwiseE}
In this subsection we derive pointwise decay bounds on the 
squashing field $B$ and its derivatives, as well as on some higher
order quantities. The key idea is that these bounds hold up to some
large time $t_0$ by Cauchy stability (cf.~Propositions
\ref{Cauchystab} and \ref{Cauchystab2}).\footnote{The location of $t=t_0$ might change slightly from coordinate system to coordinate system but the change is uniformly controlled by $C\left(\epsilon\right)$ as has just been established in section \ref{StabCord}.} After that time the energy decay derived from the bootstrap assumptions in Proposition \ref{decayfromK} ensures appropriate decay estimates for the fields. 
\subsubsection{The squashing field and its derivatives}
\begin{proposition} \label{pointBcent}
The pointwise bound
\begin{equation} \label{pointwiseB}
|B\left(t,r^\star \right)| \leq
 \sqrt{C_L} \ C\left(c\right) \frac{\sqrt{M}}{t}
\end{equation}
holds everywhere in $\mathcal{A}\left(T\right) 
\cap \{r^\star_{cl} \leq r^\star \leq \frac{9}{10}t \}$. 
\end{proposition}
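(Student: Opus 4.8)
The plan is to obtain the pointwise bound (\ref{pointwiseB}) by integrating $B$ along a null curve from the curve $r^\star=r^\star_{cl}$ (or from the initial data / the curve $r^\star=r^\star_K$ where appropriate) and controlling the resulting integral via Cauchy–Schwarz using the energy decay already proven in Proposition \ref{decayfromK}, together with the decay estimates for $\kappa$ and $\gamma$ from Proposition \ref{kgom}. More precisely, I would first deal with times up to $t_0$: there the pointwise bound on $B$ follows directly from statement $3$ of Proposition \ref{Cauchystab} (and Proposition \ref{Cauchystab2}), since in the region under consideration $v\sim t$ and $v$ is bounded below, so $|B|\leq \sqrt{M}\,\tilde{\delta}/v_+ \leq \sqrt{M}\,C(c)/t$. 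Thus it suffices to establish (\ref{pointwiseB}) for $t\geq t_0$ in $\mathcal{A}(T)\cap\{r^\star_{cl}\leq r^\star\leq \tfrac{9}{10}t\}$.

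For the late-time region, fix a point $(t,r^\star)$ with $r^\star_{cl}\leq r^\star\leq \tfrac{9}{10}t$ and integrate the identity $B_{,v}=\theta\, r^{-3/2}$ along the incoming null ray $u=\mathrm{const}$ from the curve $r^\star=r^\star_{cl}$ to the given point (recall $B_{,u}=\zeta\, r^{-3/2}$, with $\zeta=r^{3/2}B_{,u}$, $\theta=r^{3/2}B_{,v}$). This gives
\begin{equation}
|B(t,r^\star)| \leq |B(u, v_{cl})| + \int_{v_{cl}}^{v} \frac{|\theta|}{r^{3/2}}\, d\bar v
\leq |B(u,v_{cl})| + \sqrt{\int_{v_{cl}}^{v} \frac{\theta^2}{\kappa\, r^2}\, d\bar v}\;\sqrt{\int_{v_{cl}}^{v} \frac{\kappa}{r}\, d\bar v}\, .
\end{equation}
The second factor is bounded using $\kappa\leq \tfrac{1}{2}+C_L(2+c)M/t^2$ (Proposition \ref{kgom}) and the fact that $\partial_v r=\lambda = \kappa(1-\mu)$, so $\int \kappa/r\, d\bar v = \int \frac{\lambda}{(1-\mu)r}d\bar v \leq C_L \log(r/r_{cl})$, which is bounded in the region $r\leq$ (comparable to) $\tfrac{9}{10}t$-worth of $r$; one uses here the relation (\ref{rstarrb}) / Proposition \ref{rstarrdec} to control $r$ in terms of $r^\star$ and hence $t$. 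The first factor is controlled by the energy flux: $\int \frac{\theta^2}{\kappa\, r^2} d\bar v = \int \frac{r^3 (B_{,v})^2}{\kappa}\cdot\frac{1}{r^{5}}\cdot r^2\, d\bar v \lesssim \frac{1}{r_{cl}^2}\int \frac{r^3(B_{,v})^2}{\kappa}\, d\bar v$, and the latter integral is exactly (a piece of) the energy flux $E(S)$ through the relevant incoming null segment, which by Proposition \ref{decayfromK} decays like $M^2 C(c)/v_-^2 \sim M^2 C(c)/t^2$ for $t\geq t_0$ (here $v_- \sim t$ since $r^\star\geq r^\star_{cl}$ and $t\geq t_0$). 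The boundary term $|B(u,v_{cl})|$ on the curve $r^\star=r^\star_{cl}$ is handled separately: either it lies in the region $r^\star\leq r^\star_{cl}$ where the $Y$-estimates / bootstrap assumption (\ref{intebound2}) give decay, or — more simply at the level of this proposition — it can be obtained by the same integration argument run from the initial data slice (using Cauchy stability up to $t_0$ and then the energy decay), or it is supplied by the analogous pointwise bound near the horizon established elsewhere; in any case it contributes at most $\sqrt{C_L}\,C(c)\sqrt{M}/t$. Combining the two factors yields $|B(t,r^\star)| \leq \sqrt{C_L}\,C(c)\,\sqrt{M}/t$, which is (\ref{pointwiseB}).

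The main obstacle I anticipate is the bookkeeping of the boundary term $B(u,v_{cl})$ on the curve $r^\star=r^\star_{cl}$, since this curve borders the region $r^\star\leq r^\star_{cl}$ which is controlled by a different mechanism (the $Y$-vectorfield and assumption (\ref{intebound2})) rather than by the $K$-energy directly; one must make sure that the decay rate $1/t$ propagates consistently across $r^\star=r^\star_{cl}$ rather than only $1/\sqrt{t}$. A secondary technical point is ensuring that $r\sim r^\star\sim t$ uniformly in the range $r^\star_{cl}\leq r^\star\leq \tfrac{9}{10}t$ so that the logarithmic factor $\log(r/r_{cl})$ and the weight $1/r_{cl}^2$ really only cost a constant $C_L$ — this is exactly what Propositions \ref{kgom} and \ref{rstarrdec} and the relation (\ref{rstarrb}) are for, and with $t\geq t_0$ chosen large (via Proposition \ref{Cauchystab}) these are all available. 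Everything else is a routine Cauchy–Schwarz estimate of the type already carried out repeatedly in Section \ref{Basic}.
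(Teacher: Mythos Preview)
The overall scheme—Cauchy--Schwarz plus the energy decay of Proposition~\ref{decayfromK}—is correct, but the direction in which you integrate creates the very problem you flag as ``the main obstacle,'' and none of your three proposed fixes for the boundary term $B(u,v_{cl})$ at $r^\star=r^\star_{cl}$ actually closes. Option~3 is circular: Proposition~\ref{pointBthhoz} \emph{uses} Proposition~\ref{pointBcent} to obtain $B$ on $r^\star=r^\star_{cl}$. Option~1 does not help because assumption~(\ref{intebound2}) controls $\int (B_{,u})^2/(-\nu)\,du$ only in $r^\star\le r^\star_{cl}$, and turning that into a pointwise bound on $B$ requires a starting value somewhere in that region, which you do not have at this stage. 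Option~2 (``run the same argument from the data'') forces you to integrate in $u$ from $u_0$ along constant $v$ anyway, at which point you are effectively doing the paper's argument rather than yours.

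The paper avoids the boundary issue entirely by reversing the direction: it starts from the set $L=\{u=u_0\}\cup\bigl(\{t=t_0\}\cap\{r^\star\ge \tfrac{9}{10}t_0\}\bigr)$, where $B$ is either identically zero (compact support) or small by Cauchy stability, integrates $B_{,u}$ along constant $v$ to the curve $r^\star=\tfrac{9}{10}t$ (there $r\sim t$, so the $r$-weight in the second Cauchy--Schwarz factor already delivers the $1/t$), and then integrates $\partial_{r^\star}B$ along the constant-$t$ slice from $\tfrac{9}{10}t$ inward to $r^\star_{cl}$, using the energy on that slice from Proposition~\ref{decayfromK}. In other words: go from \emph{outside in}, not from $r^\star_{cl}$ out.

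A second, smaller issue: your Cauchy--Schwarz split produces $\sqrt{\int \kappa/r\,d\bar v}\sim\sqrt{\log(r/r_{cl})}$, which is $\sim\sqrt{\log t}$ in the range $r^\star\le\tfrac{9}{10}t$ and hence not uniformly bounded, contrary to your claim. Putting all of the $r$-weight into the second factor, i.e.\ writing
\[
\int\frac{|\theta|}{r^{3/2}}\,d\bar v\le\sqrt{\int\frac{\theta^2}{\kappa}\,d\bar v}\;\sqrt{\int\frac{\kappa}{r^3}\,d\bar v},
\]
makes the second factor $\le\sqrt{C_L}/(\sqrt{2}\,r_{cl})$ uniformly and removes the logarithm.
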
 
\begin{proof}
The estimate (\ref{pointwiseB}) holds for $t \in [2\sqrt{M},t_0]$ 
(for some large but finite $t_0$) 
by Proposition \ref{Cauchystab2} with an appropriate choice of 
the initial data. For $[t_0,T]$ we integrate
out in the $u$-direction from the set 
$L=\{u=u_0\} \cup \left(\{t=t_0\} \cap \{r^\star \geq
\frac{9}{10}t_0\right) \}$, where either 
$B \equiv 0$ by the assumption of compact support 
or the bound (\ref{pointwiseB}) holds by Cauchy 
stability, to the $r^\star=\frac{9}{10}t$ curve: 
\begin{eqnarray}
B\left(t, r^\star=\frac{9}{10}t \right) &=& B\left(u_L,v=\frac{19}{10}t\right) + \int_{u_L}^{\frac{1}{10}t}
B_{,u} \left(u,v=\frac{19}{10}t\right) du \\ &\leq& \tilde{\delta} \frac{\sqrt{M}}{v} +
\sqrt{\int_{u_L}^{\frac{1}{10}t}\frac{4\lambda}{\Omega^2} \zeta^2 du
}\sqrt{\int_{u_L}^{\frac{1}{10}t} \frac{-4\kappa \nu}{4r^3 \lambda}
  du} \leq C\left(\epsilon\right) \frac{\sqrt{M}}{t} \nonumber 
\end{eqnarray}
since $r \sim r^\star \sim t$ on the curve. Note also that 
along a line of constant $v$ in the region 
$r^\star \geq \frac{9}{10}t$ we have $v \sim t$.
Integrating out further from a point
$\left(t,\frac{9}{10}t\right)$ on 
the $r^\star=\frac{9}{10}t$-curve along the slice $t=const$ we obtain
\begin{eqnarray}
|B| &\leq& C\left(\epsilon\right)\frac{\sqrt{M}}{t} + \int^{\frac{9}{10}t}_{r^\star_{cl}}
\left|\partial_{r^\star} B\right| dr^\star \nonumber \\ &\leq& C\left(\epsilon\right)\frac{\sqrt{M}}{t} + \sqrt{
  \int^{\frac{9}{10}t}_{r^\star_{cl}} \left(\partial_{r^\star}
  B\right)^2 r^3 dr^\star}\sqrt{
  \int^{\frac{9}{10}t}_{r^\star_{cl}} \left[-\frac{\partial}{\partial
    r^\star} \frac{1}{r^2} \right]\left(
  \frac{1}{2\frac{\partial r}{\partial r^\star}}\right)dr^\star}
\nonumber \\ &\leq& C\left(\epsilon\right)\frac{\sqrt{M}}{t} + C\left(c\right)\frac{\sqrt{M}}{t} \sup
\sqrt{\frac{1}{2\frac{\partial r}{\partial r^\star}}} 
\end{eqnarray} 
yields (\ref{pointwiseB})
in the whole region $r^\star_{cl} \leq r^\star \leq \frac{9}{10}t$ since $\sup \sqrt{\frac{1}{2\frac{\partial r}{\partial r^\star}}} \leq \frac{4}{5}C_L$ in the region $r^\star_{cl}$.
\end{proof}

Recall that in the region $r^\star \geq \frac{9}{10}t$
we were able to derive $\frac{1}{r}$-decay 
of the field $B$ without involving the bootstrap assumptions 
(cf. Corollary \ref{Bcor}). The next Proposition shows that the 
boundedness of the quantity $E^K_B$ improves this decay considerably:
\begin{proposition} \label{decayinr}
In the region $\mathcal{X} = \mathcal{A}\left(T\right) \cap \{r^\star \geq 3\sqrt{M} \} \cap \{u \geq \sqrt{M}
\}$ we have
\begin{equation}
|B| \leq C\left(c\right)\frac{M}{r^\frac{3}{2}u^\frac{1}{2}} \, .
\end{equation}
\begin{proof}
Choose a point $(t,r^\star)$ in the region $\mathcal{X}$ and a point 
$(t,\tilde{r}^\star)$ in the central region ($r^\star_{cl} \leq
\tilde{r}^\star \leq 3\sqrt{M} $). We have
\begin{eqnarray}
r^3 B^2 \left(t,r^\star\right) = r^3B^2 \left(t,\tilde{r}^\star\right)
+ \int_{\tilde{r}^\star}^{r^\star} \partial_{r^\star} \left(r^3
B^2\right) dr^\star \, .
\end{eqnarray}
By Proposition \ref{pointBcent},  $|B\left(t,\tilde{r}^\star\right)|
\leq \frac{C}{t}$. Moreover, by Cauchy-Schwarz
\begin{eqnarray}
\int_{\tilde{r}^\star}^{r^\star} \partial_{r^\star} \left(r^3
B^2\right) dr^\star \leq 2 \sqrt{\int_{\tilde{r}^\star}^{r^\star}
  \left(\partial_{r^\star} B\right)^2 r^3 u^2 dr^\star}\frac{1}{t}
\sqrt{\int_{\tilde{r}^\star}^{r^\star} t^2 B^2 r^3 \frac{1}{u^2}
  dr^\star} \nonumber \\ + \frac{1}{t^2}
\int_{\tilde{r}^\star}^{r^\star} t^2 B^2 r^2 dr^\star \, .
\end{eqnarray}
We can finally insert the inequalities (\ref{derhw}) and (\ref{Bhw}) to find
\begin{eqnarray}
B^2 \left(t,r^\star\right) \leq C_L \ C\left(c\right) \frac{M \left(3\sqrt{M}\right)^3}{r^3 t^2} + 2C\left(c\right) \frac{M^2}{tr^2 u} + C\left(c\right) \frac{M^2}{r^2t^2} \, .
\end{eqnarray}
Noting that in the region $u \geq 1$ we have for large times $t \geq
\frac{r}{2}$ yields the desired result.
\end{proof}
\end{proposition}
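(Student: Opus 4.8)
The plan is to upgrade the $r^{-1}$ decay of Corollary \ref{Bcor} to the advertised $r^{-3/2}u^{-1/2}$ decay by transporting the pointwise bound of Proposition \ref{pointBcent} outward along a slice of constant $t$ and controlling the increment by the weighted energy $E^K_B$, which is bounded by bootstrap assumption (\ref{Kass}). Fix a point $\left(t,r^\star\right)\in\mathcal{X}$; since $r^\star\geq 3\sqrt{M}$ and $u\geq\sqrt{M}$ one has $t=r^\star+u\geq 4\sqrt{M}$, so $3\sqrt{M}\leq\frac{9}{10}t$ and Proposition \ref{pointBcent} applies at any $\tilde{r}^\star\in\left[r^\star_{cl},3\sqrt{M}\right]$. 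Pick such a $\tilde{r}^\star$ (the case $r^\star=3\sqrt{M}$ being covered directly by Proposition \ref{pointBcent}), and write
\begin{equation}
r^3 B^2\left(t,r^\star\right) = r^3 B^2\left(t,\tilde{r}^\star\right) + \int_{\tilde{r}^\star}^{r^\star} \Big[3 r^2 \left(\lambda-\nu\right) B^2 + 2 r^3 B\,\partial_{r^\star}B\Big]\, d\bar{r}^\star \, .
\end{equation}
Note that along this slice $u=t-\bar{r}^\star$ decreases and $r$ increases in $\bar{r}^\star$, so both are extremized at the outer endpoint; in particular $u\geq\sqrt{M}$ throughout $\left[\tilde{r}^\star,r^\star\right]$.

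The boundary term is controlled by Proposition \ref{pointBcent}: since $r\sim\sqrt{M}$ at $\tilde{r}^\star$ one gets $r^3 B^2\left(t,\tilde{r}^\star\right)\leq C_L\, C\left(c\right) M^{\frac{5}{2}}/t^2$. For the cross term apply Cauchy--Schwarz with the weights $u^2$ and $t^2/u^2$,
\begin{equation}
2\int_{\tilde{r}^\star}^{r^\star} r^3 B\,\partial_{r^\star}B\, d\bar{r}^\star \leq \frac{2}{t}\left(\int_{\tilde{r}^\star}^{r^\star} r^3 u^2 \left(\partial_{r^\star}B\right)^2 d\bar{r}^\star\right)^{\frac{1}{2}}\left(\int_{\tilde{r}^\star}^{r^\star} r^3 \frac{t^2}{u^2} B^2\, d\bar{r}^\star\right)^{\frac{1}{2}} \, .
\end{equation}
For the first factor, write $\partial_{r^\star}B=\partial_v B-\partial_u B$ and use $u^2\leq 2\left(u+a\right)^2+2a^2$ together with $u\leq v\leq v-a$, so that (\ref{derhw}) and (\ref{Kass}) control all but the residual $a^2\int r^3\left(\partial_u B\right)^2$, which is bounded by the conserved energy; the first factor is thereby bounded by a constant times $M$. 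In the second factor, pull the monotone weights $r^2$ and $u^{-2}$ out at the outer endpoint and apply (\ref{Bhw}) to the remaining $\int t^2 B^2 r\, d\bar{r}^\star$, obtaining a bound of a constant times $\left(r/u\right)M$. The term $3\int r^2\left(\lambda-\nu\right)B^2\,d\bar{r}^\star$ is estimated in the same way using $\lambda-\nu\leq 1$ and (\ref{Bhw}). Dividing the resulting inequality by $r^3$ yields
\begin{equation}
B^2\left(t,r^\star\right) \leq C_L\, C\left(c\right)\frac{M^{\frac{5}{2}}}{r^3 t^2} + C\left(c\right)\frac{M^2}{t\, r^2\, u} + C\left(c\right)\frac{M^2}{r^2 t^2} \, .
\end{equation}

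Finally, observe that $u\geq\sqrt{M}$ gives $t=r^\star+u>r^\star$, while (\ref{rstarrb}) gives $r^\star\geq\frac{9}{11}r-c_1$ with $c_1\sim\sqrt{M}$; hence $t\geq\frac{1}{2}r$ on all of $\mathcal{X}$. Using $1/t\leq 2/r$ to convert one power of $1/t$ into $1/r$ in each of the three terms collapses the bound to $C\left(c\right)M^2/\left(r^3 u\right)$, which on taking square roots is the claimed estimate. I expect the only genuine subtlety to be the weight bookkeeping in the Cauchy--Schwarz step: recognizing that $r$ and $1/u$ are monotone along the constant-$t$ slice so that their endpoint values factor out before (\ref{Bhw}) is invoked, and handling the fact that $u^2$ is not pointwise comparable to $\left(u+a\right)^2$ near $u=-a$. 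This endpoint factorization is precisely the mechanism forcing the loss to $r^{-3/2}u^{-1/2}$ rather than something stronger; everything else reduces to Propositions \ref{EKcontrol}, \ref{pointBcent} and the $r$--$r^\star$ relation (\ref{rstarrb}).
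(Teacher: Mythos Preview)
Your proof is correct and follows essentially the same route as the paper: write $r^3 B^2$ at the outer point as its value at a central point plus the integral of $\partial_{r^\star}(r^3B^2)$ along the constant-$t$ slice, estimate the cross term by Cauchy--Schwarz with weights $u^2$ and $t^2/u^2$, and invoke (\ref{derhw}) and (\ref{Bhw}) together with $t\gtrsim r$. You make explicit several points the paper glosses over---the factor $\lambda-\nu$ in $\partial_{r^\star}r$, the monotonicity of $r$ and $u$ along the slice allowing endpoint extraction, and the split $u^2\leq 2(u+a)^2+2a^2$ needed to match the weights in (\ref{derhw})---but the underlying argument is identical.
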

For the region region $u_0 \leq
u \leq \sqrt{M}$ we can follow the same proof replacing $u^2$ by $u^2 + M$
(to avoid dividing by zero) to obtain
\begin{proposition} \label{decayinr2}
In the region $\widetilde{\mathcal{X}} =  \mathcal{A}\left(T\right) \cap \{r^\star \geq 3\sqrt{M}\}$ we have
\begin{equation}
|B| \leq C\left(c\right)\frac{M^\frac{3}{4}}{r^\frac{3}{2}} \, .
\end{equation}
\end{proposition}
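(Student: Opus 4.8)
The plan is to mirror the proof of Proposition \ref{decayinr} almost verbatim, the only change being that the weight $u^2$ appearing in the Cauchy--Schwarz step is replaced everywhere by $u^2+M$, which is exactly what removes the division by zero as $u\to 0$ (near the null line $u=u_0$, where in any case $B\equiv 0$ by compact support). First observe that $\widetilde{\mathcal{X}}\cap\{u\geq\sqrt{M}\}=\mathcal{X}$ is already covered by Proposition \ref{decayinr}, since there $|B|\leq C(c)\,M\,r^{-3/2}u^{-1/2}\leq C(c)\,M^{3/4}r^{-3/2}$. So it suffices to treat $\widetilde{\mathcal{X}}\cap\{u_0\leq u\leq\sqrt{M}\}$; on this subregion $u$ is bounded in units of $\sqrt{M}$ (uniformly in the coordinate system $\mathcal{C}_{\tilde{\tau}}$), $v=t+r^\star\geq 5\sqrt{M}$, and by (\ref{rstarrb}) and (\ref{rstarr}) one has $r\sim r^\star\sim t$. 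For $t\in[2\sqrt{M},t_0]$ the claimed bound holds by Cauchy stability (Proposition \ref{Cauchystab2}), which is how the pointwise bounds of Proposition \ref{pointBcent} were seeded in the first place, so we only need $t\geq t_0$.

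For $t\geq t_0$ and a point $(t,r^\star)$ in this subregion, choose an auxiliary point $(t,\tilde{r}^\star)$ with $\tilde{r}^\star$ close to $3\sqrt{M}$, which for $t\geq t_0$ lies in the domain $r^\star_{cl}\leq\tilde{r}^\star\leq\tfrac{9}{10}t$ of Proposition \ref{pointBcent} and away from the horizon (so $-2\nu\sim 1$ on $[\tilde{r}^\star,r^\star]$), and start from
\begin{equation}
r^3B^2(t,r^\star)=r(t,\tilde{r}^\star)^3B^2(t,\tilde{r}^\star)+\int_{\tilde{r}^\star}^{r^\star}\partial_{r^\star}\big(r^3B^2\big)\,dr^\star .
\end{equation}
The first term is controlled by $|B(t,\tilde{r}^\star)|\leq\sqrt{C_L}\,C(c)\,\sqrt{M}/t$ from Proposition \ref{pointBcent} together with $r(t,\tilde{r}^\star)\lesssim\sqrt{M}$ and $t\geq 2\sqrt{M}$, giving $r(t,\tilde{r}^\star)^3B^2(t,\tilde{r}^\star)\lesssim C(c)\,M^{3/2}$. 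For the integral, split $\partial_{r^\star}(r^3B^2)=3r^2(\partial_{r^\star}r)B^2+2r^3B\,\partial_{r^\star}B$; the first summand is absorbed exactly as in Proposition \ref{decayinr} using $0\leq\partial_{r^\star}r\leq 1$ and inequality (\ref{Bhw}), while for the second one applies Cauchy--Schwarz in the form
\begin{equation}
\Big|\int_{\tilde{r}^\star}^{r^\star}2r^3B\,\partial_{r^\star}B\,dr^\star\Big|\leq 2\Big(\int(\partial_{r^\star}B)^2r^3(u^2+M)\,dr^\star\Big)^{1/2}\Big(\int\frac{B^2r^3}{u^2+M}\,dr^\star\Big)^{1/2}.
\end{equation}
Since $u$ is bounded on this subregion, $u^2+M$ is comparable to $M$ and to $(u+a)^2$ and is dominated by $(v-a)^2$, so $(\partial_{r^\star}B)^2(u^2+M)\lesssim(u+a)^2B_{,u}^2+(v-a)^2B_{,v}^2$ and the first factor is $\lesssim M\,E^K_B(t)\lesssim c\,M^2$ by (\ref{derhw}) and the second bootstrap assumption (\ref{Kass}). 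In the second factor $u^2+M\geq M$, so $\int B^2r^3/(u^2+M)\,dr^\star\leq M^{-1}r(t,r^\star)^2\int B^2r\,dr^\star\lesssim M\,r^2\,t^{-2}E^K_B(t)\lesssim c\,M^2\,r^2\,t^{-2}$ by (\ref{Bhw}) and $-2\nu\sim 1$. Dividing through by $r^3$, using $t\sim r$ and $r^\star\geq 3\sqrt{M}$, each of the three contributions is $\lesssim C(c)\,M^{3/2}r^{-3}$; taking square roots yields $|B(t,r^\star)|\leq C(c)\,M^{3/4}r^{-3/2}$.

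The only genuine point of care is the weight modification $u^2\mapsto u^2+M$ and the verification that it does not spoil the control of the resulting integrals by the energy $E^K_B$ of the bootstrap: this works precisely because on $\{u_0\leq u\leq\sqrt{M}\}$, where $u^{-2}$ would otherwise be singular, $u^2+M$ is comparable to the weights $(u+a)^2$ and $(v-a)^2$ entering the $S$- and $\underline{S}$-parts of (\ref{ekbfir}). Outside that subregion one simply falls back on Proposition \ref{decayinr}, whose bound $M\,r^{-3/2}u^{-1/2}$ is in fact stronger than what is asserted here. A minor bookkeeping point is that the time decay $1/t$ produced by Proposition \ref{pointBcent} and $E^K_B$ must be traded for the $r$-decay claimed, which is immediate from $r\sim r^\star\sim t$ on $\{u_0\leq u\leq\sqrt{M}\}$ (with constants uniform in the coordinate system, thanks to the coordinate stability of Section \ref{stabcorsec}).
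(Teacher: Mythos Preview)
Your approach is essentially the same as the paper's: the paper's entire proof is the one-line remark ``follow the same proof [as Proposition \ref{decayinr}] replacing $u^2$ by $u^2+M$ (to avoid dividing by zero),'' and you have simply written out the details of that replacement. The separate treatment of $t\in[2\sqrt{M},t_0]$ via Cauchy stability is unnecessary---the argument via (\ref{Bhw}), (\ref{derhw}) and Proposition \ref{pointBcent} already works for all $t$ in the bootstrap region---but otherwise your expansion of the argument is correct.
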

Having established better decay of $B$ from the bootstrap 
assumptions we can also derive better decay of $\theta$ 
via an auxiliary quantity $\Theta$, which is the analogue of the 
almost Riemann invariant in four dimensions. Note however that 
we cannot use $\Theta$ to improve the decay in 
$B$ itself (as in the four-dimensional case) but only 
in its derivatives, once better decay 
in $B$ has already been established.

\begin{lemma} \label{auxdecay}
On $\{r^\star=\frac{9}{10}t\}$, the quantity $\Theta = \theta + \frac{3}{2}\sqrt{r} \lambda B$ satisfies
\begin{equation}
|\Theta\left(u,v\right)| \leq C\left(c\right)\frac{M^\frac{3}{4}}{v_+} \label{Thetabound}
\end{equation}
\end{lemma}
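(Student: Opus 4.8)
The plan is to exploit the evolution equation for the auxiliary quantity $\Theta = \theta + \frac{3}{2}\sqrt{r}\lambda B$ together with the already-established decay of $B$ in the region $r^\star \geq \frac{9}{10}t$ coming from Propositions \ref{decayinr} and \ref{decayinr2}, the decay of $\kappa,\gamma$ (Proposition \ref{kgom} and Corollary \ref{kapgamdecr}), and the relation $r\sim r^\star\sim t$ valid there. First I would compute $\partial_u \Theta$ using the wave equation (\ref{Bevol}) in the form (\ref{dut}) together with the Raychaudhuri equation (\ref{eom1}) for $\partial_u(\lambda/\Omega^2)$ rewritten via $\lambda_{,u} = -\frac{2r\nu}{\Omega^2}(B_{,u})^2\cdot\Omega^2/(\ldots)$; more precisely use $\lambda_{,u}$ from (\ref{revol}) and $\nu_{,v}=r_{,uv}$. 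The key algebraic point is that the ``bad'' term $-\frac{3}{2}\frac{\lambda\zeta}{r}$ in $\partial_u\theta$ is designed to cancel against the term produced by differentiating $\frac{3}{2}\sqrt{r}\lambda B$ in $u$ (via $B_{,u} = \zeta/r^{3/2}$), leaving only ``good'' terms: an inhomogeneous $\Omega^2$-weighted term of order $B$ and $B^2$, plus terms proportional to $\lambda B$ times curvature-type quantities. Schematically one expects $\partial_u\Theta = \frac{3}{2}\sqrt{r}B\,\nu_{,v}\cdot(\text{bounded}) + \frac{\Omega^2}{3\sqrt{r}}(e^{-8B}-e^{-2B}) + (\text{terms} \sim \lambda B/r)$, each of which carries a factor that is integrable in $u$ against $(-\nu)\,du$ once the decay of $B$ is inserted.

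Second, I would integrate $\partial_u\Theta$ along a line of constant $v$ from the ``data side'' — either the null line $u=u_0$ where $B\equiv 0$ (so $\Theta=0$) by compact support, or the $t=t_0$ surface intersected with $\{r^\star\geq \frac{9}{10}t_0\}$ where the bound (\ref{Thetabound}) holds by Cauchy stability (Propositions \ref{Cauchystab}, \ref{Cauchystab2}) — up to the point where the constant-$v$ line meets $\{r^\star=\frac{9}{10}t\}$. Along such a line $v\sim t\sim r$ everywhere in the region traversed (as noted repeatedly in section \ref{analboot}), so all the weights can be converted to powers of $v$. Using $\int \frac{\Omega^2}{\sqrt{r}}|e^{-8B}-e^{-2B}|\,du \lesssim \int \frac{(-\nu)}{\sqrt{r}}|B|\,du$ together with the sharp radial decay $|B|\leq C M^{3/4}/r^{3/2}$ of Proposition \ref{decayinr2}, and using $\int(-\nu)r^{-5/2}\,du \sim r^{-3/2}$, one obtains a contribution $\lesssim M^{3/4}/v_+$. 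The terms $\sim \lambda B/r$ are handled the same way, with $\lambda<1$ and the $\nu_{,v}$-type terms bounded using $|r_{,uv}|\lesssim \Omega^2 m/r^3$ from (\ref{revol}) and the $\kappa$-bounds, again yielding an integrable-in-$u$ quantity with total size $\lesssim M^{3/4}/v_+$. Adding the boundary contribution (either $0$ or $\lesssim M^{3/4}/v_+$ from Cauchy stability, since $v$ is comparable on the $t=t_0$ piece) gives (\ref{Thetabound}).

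The main obstacle I anticipate is \emph{bookkeeping the precise cancellation} in $\partial_u\Theta$ and verifying that every surviving term genuinely carries enough $r$-decay (via the already-proven $|B|\lesssim M^{3/4}r^{-3/2}$) to be integrable in $u$ against $(-\nu)\,du$ with a net bound of order $v_+^{-1}$ rather than merely $v_+^{-1/2}$ or bounded. In particular one must be careful that differentiating $\sqrt{r}$ and $\lambda$ in $u$ does not reintroduce a term of the same bad type as $-\frac{3}{2}\lambda\zeta/r$ with the wrong coefficient; this is exactly why the constant $\frac{3}{2}$ (matching the $\frac{3}{2}$ in the $r_{,u}/r$ coefficient of (\ref{Bevol})) is chosen in the definition of $\Theta$. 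A secondary technical point is the mild irregularity of the coordinate system at the cusp $B$, but since the argument uses only first derivatives of $r$ and of $B$ (all continuous, as stressed at the end of section \ref{Coordinates}), integrating $\partial_u\Theta$ in $u$ across $v=v(B)$ is legitimate, paralleling the remark after Proposition \ref{omvclo}.
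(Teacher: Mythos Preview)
Your proposal is correct and follows essentially the same route as the paper: compute $\partial_u\Theta$, observe that the coefficient $\tfrac{3}{2}$ in the definition of $\Theta$ is chosen precisely so that the dangerous $-\tfrac{3}{2}\lambda\zeta/r$ term from (\ref{dut}) cancels against $\tfrac{3}{2}\sqrt{r}\lambda B_{,u}$, then integrate the surviving terms (all of size $\lesssim |B|/\sqrt{r}$ times an $\Omega^2$- or $\lambda\nu$-weight) in $u$ from the set $\{u=u_0\}\cup\big(\{t=t_0\}\cap\{r^\star\geq\tfrac{9}{10}t_0\}\big)$ using the radial decay of $B$ and $r\sim v$ in that region. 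The paper records the explicit evolution equation as (\ref{Thetaevol}) and cites Proposition~\ref{decayinr} for the $B$-decay, whereas you invoke Proposition~\ref{decayinr2}; either suffices here (and yours avoids the restriction $u\geq\sqrt{M}$), so this is a cosmetic difference only. Your arithmetic in the integrability check is slightly off (the dominant integrand is $\sim(-\nu)r^{-2}$ rather than $(-\nu)r^{-5/2}$), but the conclusion $\lesssim M^{3/4}/v_+$ is unaffected, and the regularity concern at the cusp is a non-issue since the constant-$v$ integration paths do not cross $u=u(B)$ in this region.
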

\begin{proof}
Integrate the equation (recall definition (\ref{deltaB}))
\begin{equation} \label{Thetaevol}
\partial_u \Theta = B \left[\frac{35\lambda \nu}{4\sqrt{r}} - \frac{11\Omega^2}{2r^{\frac{5}{2}}}m\right] + \frac{1}{4} \frac{\Omega^2}{\sqrt{r}} \frac{\varphi_2\left(B\right)}{B} - \frac{\Omega^2}{2\sqrt{r}} B \left(\rho-\frac{3}{2}\right)
\end{equation}
from the set $L=\{u=u_0\} \cup \left(\{t=t_0\} \cap \{r^\star \geq \frac{9}{10}t_0\right) \}$, where either $\Theta \equiv 0$ by the assumption of compact support or the bound (\ref{Thetabound}) 
holds by Proposition \ref{Cauchystab} with constant $\tilde{\delta}$, to the $r^\star=\frac{9}{10}t$ curve. Since the right hand side of equation (\ref{Thetaevol}) satisfies
\begin{equation}
\Bigg|B \left[\frac{35\lambda \nu}{4\sqrt{r}} - \frac{11\Omega^2}{2r^{\frac{5}{2}}}m\right] + \frac{1}{4} \frac{\Omega^2}{\sqrt{r}} \frac{\varphi_2\left(B\right)}{B} - \frac{\Omega^2}{2\sqrt{r}} B \left(\rho-\frac{3}{2}\right)\Bigg| \leq C\left(c\right)\frac{M}{r^2}
\end{equation} 
in the region $r^\star \geq \frac{9}{10}t$ following in turn from the decay of $B$ derived in Proposition \ref{decayinr}, we obtain the estimate
\begin{equation}
|\Theta \left(u,v\right)| \leq \tilde{\delta} \frac{M^\frac{3}{4}}{v_+} + C\left(c\right)\frac{M^\frac{3}{4}}{r} \leq C\left(c\right)\frac{M^\frac{3}{4}}{v_+} \, .
\end{equation}
\end{proof}

\begin{corollary} \label{thetardec}
\begin{equation}
|\theta\left(u,v\right)| \leq C\left(c\right)\frac{M^\frac{3}{4}}{r}
\end{equation} 
holds in $r^\star \geq \frac{9}{10}t$.
\end{corollary}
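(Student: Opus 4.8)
The plan is to deduce the estimate for $\theta$ from the auxiliary quantity $\Theta = \theta + \frac{3}{2}\sqrt{r}\,\lambda B$ of Lemma \ref{auxdecay}, by writing $\theta = \Theta - \frac{3}{2}\sqrt{r}\,\lambda B$ and controlling the two terms separately throughout $\mathcal{A}\left(T\right) \cap \{r^\star \geq \frac{9}{10}t\}$.

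For the first term I would observe that the bound $|\Theta| \leq C\left(c\right) M^{\frac{3}{4}}/v_+$ of Lemma \ref{auxdecay}, although stated on the curve $r^\star = \frac{9}{10}t$, is in fact established by that lemma's proof on the whole region $r^\star \geq \frac{9}{10}t$: the proof integrates $\partial_u \Theta$ along lines of constant $v$ from the set $L = \{u=u_0\} \cup \left(\{t=t_0\} \cap \{r^\star \geq \frac{9}{10}t_0\}\right)$, where $\Theta$ either vanishes by compact support or is $\tilde{\delta}$-small by Proposition \ref{Cauchystab}, and on such a line every point with $r^\star \geq \frac{9}{10}t$ lies before the curve $r^\star = \frac{9}{10}t$, with the integrand decaying along the whole segment by Proposition \ref{decayinr}. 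I would then convert the weight: in $\mathcal{A}\left(T\right) \cap \{r^\star \geq \frac{9}{10}t\}$ one has $t \geq 2\sqrt{M}$ and $v = t + r^\star \geq r^\star$, so (\ref{rstarrb}) gives $r \leq C\left(v + \sqrt{M}\right) \leq C v$, whence $1/v_+ \leq C/r$ and $|\Theta| \leq C\left(c\right) M^{\frac{3}{4}}/r$.

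For the second term, $\lambda = \kappa\left(1-\mu\right) \leq \kappa \leq \frac{1}{2} + C\left(\delta\right)$ is bounded by Proposition \ref{simpsmall}. Where $r^\star \geq 3\sqrt{M}$, Proposition \ref{decayinr2} supplies $|B| \leq C\left(c\right) M^{\frac{3}{4}}/r^{\frac{3}{2}}$, so that $\frac{3}{2}\sqrt{r}\,\lambda |B| \leq C\left(c\right) M^{\frac{3}{4}}/r$, which together with the $\Theta$-estimate proves the claim there. On the remaining part $\frac{9}{10}t \leq r^\star < 3\sqrt{M}$ the area radius satisfies $r \leq C\sqrt{M}$ (since $r \sim r^\star$ by (\ref{rstarrb})), and there I would instead bound $\theta$ directly through Proposition \ref{simpsmall}, which gives $\sqrt{r}\,|\theta| \leq C\sqrt{M}$, hence $|\theta| \leq C\sqrt{M}/\sqrt{r} \leq C M^{\frac{3}{4}}/r$ precisely because $r \leq C\sqrt{M}$. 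Combining the two subregions yields $|\theta| \leq C\left(c\right) M^{\frac{3}{4}}/r$ everywhere in $r^\star \geq \frac{9}{10}t$.

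I expect the main obstacle to be the transition region where $\frac{9}{10}t \leq r^\star$ while $r^\star$ stays of order $\sqrt{M}$: there $\theta$ does not genuinely decay but is merely bounded, and one must exploit $r \leq C\sqrt{M}$ to rewrite the resulting $\sqrt{M}/\sqrt{r}$-type estimate in the dimensionally correct form $M^{\frac{3}{4}}/r$ demanded by the statement; the rest is routine bookkeeping with the comparison $1/v_+ \leq C/r$ and the previously established decay of $\Theta$ and $B$.
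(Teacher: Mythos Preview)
Your argument is correct and matches the paper's approach (which reads simply ``Use Lemma \ref{auxdecay} and take into account Proposition \ref{decayinr}''): write $\theta = \Theta - \frac{3}{2}\sqrt{r}\,\lambda B$, extend the $\Theta$-bound from the curve to the whole region $r^\star \geq \frac{9}{10}t$ as you observe, and control the $B$-piece via the $r^{-3/2}$-decay. Note only that the transition region $\frac{9}{10}t \leq r^\star < 3\sqrt{M}$ you treat separately is in fact empty, since by the standing assumption $t \geq t_0$ at the start of section~\ref{analboot} one has $r^\star \geq \frac{9}{10}t_0 \gg \sqrt{M}$; your handling of it is harmless but unnecessary.
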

\begin{proof}
Use Lemma \ref{auxdecay} and take into account Proposition \ref{decayinr}.
\end{proof}

\begin{proposition} \label{pointthcent}
The pointwise bound
\begin{equation} \label{pointwiseth}
|\theta\left(t,r^\star \right)| \leq C\left(c\right) \frac{M^\frac{3}{4}}{t} 
\end{equation}
holds everywhere in $\mathcal{A}\left(T\right) \cap \{r^\star_{cl} \leq r^\star \leq
\frac{9}{10}t \}$. 
\end{proposition}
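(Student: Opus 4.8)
The plan is to mirror the proof of Proposition \ref{pointBcent}, splitting the $t$-range into $[2\sqrt{M},t_0]$ and $[t_0,T]$. On the first interval (\ref{pointwiseth}) is immediate from Cauchy stability: statement $3$ of Proposition \ref{Cauchystab} (together with Proposition \ref{Cauchystab2}) gives $|\theta|\leq\tilde{\delta}M^{3/4}/v_+$ on every slice $\Sigma_t$, and $v_+\sim t$ in the central region. For $t\geq t_0$ I would \emph{not} integrate $\theta$ all the way from $\{u=u_0\}$, since the region $r^\star\geq\frac{9}{10}t$ is not covered by the energy decay of Proposition \ref{decayfromK} and the a priori mass-fluctuation bound alone is too weak to beat the weights. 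Instead, note that on $\{r^\star=\frac{9}{10}t\}$ the bound is already available from Corollary \ref{thetardec} (there $|\theta|\leq C(c)M^{3/4}/r$ and $r\sim r^\star\sim t$), and integrate the evolution equation (\ref{dut}) inward.

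Concretely: fix $(t,r^\star)$ with $r^\star_{cl}\leq r^\star\leq\frac{9}{10}t$, $t\geq t_0$, set $v=t+r^\star$, and integrate (\ref{dut}) in $u$ along the null segment $\{v=\mathrm{const}\}$ from its intersection with $\{r^\star=\frac{9}{10}t\}$ (at time $t'=\frac{10}{19}v$) to $(t,r^\star)$:
\begin{equation*}
\theta(t,r^\star)=\theta\big(t',\tfrac{9}{10}t'\big)+\int\Big[-\tfrac{3}{2}\tfrac{\lambda\zeta}{r}+\tfrac{\Omega^2}{3\sqrt{r}}\big(e^{-8B}-e^{-2B}\big)\Big]\,d\bar{u}\,.
\end{equation*}
The boundary term is $\leq C(c)M^{3/4}/t'\sim C(c)M^{3/4}/t$ by Corollary \ref{thetardec} and $v\sim t$. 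The first error integral I would estimate by Cauchy--Schwarz as $\sqrt{\int\frac{4r^3\lambda}{\Omega^2}(B_{,u})^2\,d\bar u}\cdot\sqrt{\int\frac{\lambda\Omega^2}{4r^2}\,d\bar u}$: the first factor is bounded by the $T$-energy flux through this achronal segment, hence by $\sqrt{C(c)}\,M/v$ via Proposition \ref{decayfromK}, while $\frac{\lambda\Omega^2}{4r^2}=\frac{(-\nu)\kappa^2(1-\mu)}{r^2}$ integrates to at most $C\big[\,1/r(t,r^\star)-1/r(t',\tfrac{9}{10}t')\,\big]\leq C/\sqrt{M}$ because $r\gtrsim\sqrt{M}$ in the central region; the product is $\lesssim M^{3/4}/t$. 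The second error integral is treated the same way: by (\ref{Khelp}), $(e^{-8B}-e^{-2B})^2\leq K(1-\frac{2}{3}\rho)$ (valid since $|B|$ is small by Proposition \ref{simpsmall}), and using $\Omega^2=-4\nu\kappa$ one writes the integrand as $\big[\sqrt{r(-\nu)}|e^{-8B}-e^{-2B}|\big]\cdot\big[\Omega^2 r^{-1}(-\nu)^{-1/2}\big]$; Cauchy--Schwarz then gives $\sqrt{K\int r(-\nu)(1-\frac{2}{3}\rho)\,d\bar u}\cdot\sqrt{\int 16\kappa^2\tfrac{-\nu}{r^2}\,d\bar u}\lesssim (M/v)\cdot M^{-1/4}\sim M^{3/4}/t$, the first factor again controlled by the energy flux since $\int r(-\nu)(1-\frac{2}{3}\rho)$ is one of the (positive) terms in the $T$-flux $F^T_B$ of (\ref{enbt}). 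Summing the three contributions yields (\ref{pointwiseth}).

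The main obstacle is the interplay between the interval $[t_0,T]$ on which the decay estimates hold and the geometry of the constant-$v$ segments: the endpoint of such a segment on $\{r^\star=\frac{9}{10}t\}$ sits at time $t'\approx\frac{10}{19}v$, which for $t$ only slightly above $t_0$ falls below $t_0$, so Proposition \ref{decayfromK} does not literally apply to the whole segment. As elsewhere in the paper this is absorbed by enlarging the reference time (invoking Cauchy stability up to $\approx 2t_0$ rather than $t_0$), so that the transitional band $t_0\leq t\lesssim 2t_0$ is handled by the Cauchy-stable part at the price of a further smallness assumption on the data. A secondary, purely technical point is to resist integrating along constant-$t$ slices as was done for $B$: since $\partial_{r^\star}\theta$ involves $B_{,vv}$, one keeps everything at the level of the $1$-jet of $B$ only by integrating along constant $v$ via (\ref{dut}). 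The rest is routine bookkeeping of the powers of $M$ in the weighted Cauchy--Schwarz estimates, the point being that each weight factor integrates to $O(M^{-1/2})$ because $\int(-\nu)r^{-2}\,d\bar u\sim r^{-1}\lesssim M^{-1/2}$ in the central region.
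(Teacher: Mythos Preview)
Your proposal is correct and follows essentially the same route as the paper: obtain the bound on $\{r^\star=\tfrac{9}{10}t\}$ from Corollary \ref{thetardec}, then integrate (\ref{dut}) inward along constant $v$ and control the two error integrals by the same Cauchy--Schwarz splittings against the $T$-energy flux, invoking Proposition \ref{decayfromK}. The only cosmetic difference is in how the transitional band near $t_0$ is handled: rather than enlarging the Cauchy-stable interval, the paper simply integrates from the set $\{r^\star=\tfrac{9}{10}t\}\cup\{t=t_0\}$ (whichever the constant-$v$ ray meets first), on which the required bound on $\theta$ is available either by Corollary \ref{thetardec} or by Proposition \ref{Cauchystab}.
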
 
\begin{proof}
By the previous corollary
\begin{equation}
|\theta\left(u,v\right)| \leq C\left(c\right)\frac{M^\frac{3}{4}}{t} \label{thetabound}
\end{equation} 
holds on $r^\star = \frac{9}{10}t$.
We can integrate equation (\ref{dut}) from that curve 
or $t=t_0$, where the bound $|\theta| \leq \tilde{\delta} \frac{M^\frac{3}{4}}{v}$ 
holds by Proposition \ref{Cauchystab}, to any point in the region 
$\mathcal{A}\left(T\right) \cap \{r^\star \leq \frac{9}{10}t \}$:
\begin{equation}
\theta \left(u,v\right) = \theta\left(u_i,v\right) - \frac{3}{2} \int_{u_i}^u
\frac{\zeta}{r} \lambda d\bar{u} + \int_{u_i}^u \frac{\Omega^2}{3\sqrt{r}}
\left(e^{-8B}-e^{-2B}\right) d\bar{u}
\end{equation}
and hence
\begin{eqnarray}
|\theta \left(u,v\right)| &\leq& \tilde{\delta} \frac{M^\frac{3}{4}}{v} + \frac{3}{2} \sqrt{\int_{u_i}^u
\frac{4\zeta^2\lambda}{\Omega^2}d\bar{u}} \sqrt{\int_{u_i}^u
\frac{-4\kappa \nu \lambda}{r^2}d\bar{u}}  \\ 
&+& \frac{1}{3} \sqrt{\int_{u_i}^u
-4\kappa \nu  \left(e^{-8B}-e^{-2B}\right)^2 r d\bar{u}} \sqrt{\int_{u_i}^u
\frac{-4\kappa \nu}{r^2}d\bar{u}} \leq C\left(c\right)\frac{M^\frac{3}{4}}{v} \nonumber \, .
\end{eqnarray}
The energy estimate, Proposition \ref{decayfromK} and the fact 
that $v \sim t$ in the region $r^\star_{cl} \leq r^\star \leq
\frac{9}{10}t$ yields the desired result.
\end{proof}
With the previous Proposition and Proposition \ref{simpsmall}, Corollary \ref{thetardec} is easily 
extended to the entire bootstrap region:
\begin{corollary} \label{thetardece}
\begin{equation}
|\theta\left(u,v\right)| \leq C\left(c\right)\frac{M^\frac{3}{4}}{r}
\end{equation} 
holds in all of $\mathcal{A}\left(T\right)$.
\end{corollary}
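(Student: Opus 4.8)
The plan is to cover $\mathcal{A}\left(T\right)$ by three regions, on each of which the asserted pointwise bound either has already been proved or follows at once from a bound already established, and then to note that the three regions exhaust $\mathcal{A}\left(T\right)$.

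On $\left\{r^\star \geq \frac{9}{10}t\right\}$ the estimate $|\theta| \leq C\left(c\right)M^\frac{3}{4}/r$ is precisely Corollary \ref{thetardec}, so there is nothing to do. On the central region $\left\{r^\star_{cl} \leq r^\star \leq \frac{9}{10}t\right\}$, Proposition \ref{pointthcent} gives $|\theta| \leq C\left(c\right)M^\frac{3}{4}/t$, so it suffices to show that $r \leq C\,t$ there for a constant $C$ depending only on the (fixed) curve $r^\star_{cl}$. This follows from bootstrap assumption \ref{boot1}: since $r\left(t,\cdot\right)$ is non-decreasing in $r^\star$ (as $\partial_{r^\star} r = \lambda - \nu \geq 0$), the area radius stays bounded away from $\sqrt{2M_A}$ on the set $\left\{r^\star \geq r^\star_{cl}\right\}$ by a fixed amount of order $\sqrt{M}$, so the logarithmic term $\sqrt{\frac{M_A}{2}}\left(\log\left(\frac{r-\sqrt{2M_A}}{r+\sqrt{2M_A}}\right) + p\right)$ appearing in (\ref{fibor}) has absolute value at most $C\sqrt{M}$ there; hence (\ref{fibor}) yields $r \leq r^\star + C\sqrt{M} \leq \frac{9}{10}t + C\sqrt{M} \leq C\,t$, using $t \geq 2\sqrt{M}$. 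Consequently $\frac{1}{t} \leq \frac{C}{r}$, and $|\theta| \leq C\left(c\right)M^\frac{3}{4}/r$ on the central region.

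Finally, on $\left\{r^\star \leq r^\star_{cl}\right\}$ — and, more generally, on any part of $\mathcal{A}\left(T\right)$ where the area radius remains bounded (such as the portion with $t$ of order $\sqrt{M}$) — the area radius is comparable to $\sqrt{M}$: one has $r \leq r_{cl} < r_Y < \frac{3}{2}\sqrt{M}$ by the construction of $r^\star_{cl}$ in Section \ref{rstarcldef}, while $r$ is bounded below by $r_-$ on the horizon, with $r_-^2$ close to $2M$ by the Penrose inequality together with (\ref{gohoz}) and (\ref{inisma}). Proposition \ref{simpsmall} gives $\sqrt{r}\,|\theta| \leq \sqrt{M}\,C\left(\delta\right)$, so $|\theta| \leq C\left(\delta\right)M^\frac{1}{4} \leq C\left(\delta\right)\frac{M^\frac{3}{4}}{r}$, again of the required form.

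Since the three regions exhaust $\mathcal{A}\left(T\right)$, the corollary follows once the constants $C\left(\delta\right)$ and $C\left(c\right)$ are absorbed into a single $C\left(c\right)$. I expect no genuine obstacle here; the only mild bookkeeping is the elementary comparison $r_{cl} \sim r_- \sim \sqrt{M}$ near the horizon and the bound $r \leq C\,t$ in the central region, both of which are direct consequences of bootstrap assumption \ref{boot1} and the construction of $r^\star_{cl}$.
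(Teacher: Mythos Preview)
Your proof is correct and follows essentially the same approach as the paper's, which simply states that the result follows from Proposition \ref{pointthcent}, Proposition \ref{simpsmall}, and Corollary \ref{thetardec}. You have just filled in the details of the region-by-region argument that the paper leaves implicit, including the elementary comparison $r \leq C t$ on the central strip via bootstrap assumption \ref{boot1} and the observation that $r \sim \sqrt{M}$ near the horizon.
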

Close to the horizon we have
\begin{proposition} \label{pointBthhoz}
The pointwise bounds
\begin{equation} \label{pointwiseBh} 
|B\left(t,r^\star_{cl} \right) | + |\theta \left(t,r^\star_{cl} \right)| \leq
2 \sqrt{C_L} \ C\left(c\right) \frac{\sqrt{M}}{v_+} \, .
\end{equation}
hold everywhere in $\mathcal{A}\left(T\right) \cap \{r^\star \leq
r^\star_{cl} \} \cap \{ u \leq T-r^\star\left(T,r_K\right) \}$.
\end{proposition}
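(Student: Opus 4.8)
The strategy is entirely parallel to that of Proposition \ref{pointBcent} and Proposition \ref{pointthcent}, but now carried out in the near-horizon region $r^\star \leq r^\star_{cl}$ rather than the central region, and using the integral bound \eqref{intebound2} of bootstrap assumption \ref{boot5} as the source of decay instead of the energy bound of Proposition \ref{decayfromK}. First I would observe that by Proposition \ref{pointBcent} and Proposition \ref{pointthcent} the bounds $|B| + |\theta| \leq 2\sqrt{C_L}\,C(c)\,\tfrac{\sqrt M}{v_+}$ already hold on the curve $r^\star = r^\star_{cl}$ itself (using $v \sim t$ there), so it suffices to propagate these bounds inward toward the horizon along lines of constant $v$ in the region $\{ r^\star \leq r^\star_{cl}\} \cap \{ u \leq T - r^\star(T,r_K)\}$. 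For $t \in [2\sqrt M, t_0]$ the estimate holds by Cauchy stability (Propositions \ref{Cauchystab} and \ref{Cauchystab2}), so only the range $t \geq t_0$ with $v \geq v_0$ needs the bootstrap.

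For $B$, I would integrate $B_{,u} = \zeta/r^{3/2}$ in $u$ from the point on $r^\star = r^\star_{cl}$ (where the bound is known) inward, and estimate by Cauchy--Schwarz,
\[
\Bigl| B(u,v) - B(u_{cl},v) \Bigr| \leq \sqrt{\int_{u}^{u_{cl}} r^3 \frac{(B_{,u})^2}{-\nu}\, d\bar u}\ \sqrt{\int_u^{u_{cl}} \frac{-\nu}{r^6} \, d\bar u}\,,
\]
where the first factor is bounded by $\sqrt{C_L}\,\tfrac{\sqrt M}{v_+}$ directly from \eqref{intebound2}, and the second factor is finite and bounded by a constant times $r^{-5/2}$ by integrating $-\nu$ in $u$ (note $r$ stays away from zero near the horizon, and $r_{cl} - r$ is controlled). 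Since $r$ is comparable to a fixed constant multiple of $\sqrt M$ in this region, the product gives the claimed $\tfrac{\sqrt M}{v_+}$ decay for $|B|$. For $\theta$, I would likewise integrate equation \eqref{dut}, $\partial_u \theta = -\tfrac32 \tfrac{\lambda\zeta}{r} + \tfrac{\Omega^2}{3\sqrt r}(e^{-8B}-e^{-2B})$, from $r^\star = r^\star_{cl}$ inward; splitting into the two terms and applying Cauchy--Schwarz with the inequality \eqref{Khelp} for the second term — exactly as in the proof of Proposition \ref{pointthcent} — reduces the problem to the $\tfrac{1}{v^2}$-decaying energy flux (bootstrap assumptions \eqref{hozass}, \eqref{ceilass}, which control $\int \tfrac{4\zeta^2\lambda}{\Omega^2}\,du$ and the $(1-\tfrac23\rho)$-integral near the horizon) together with the already-established pointwise decay of $B$, yielding the $\tfrac{\sqrt M}{v_+}$ bound for $|\theta|$.

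The main obstacle is bookkeeping with the weights near the horizon rather than any genuine difficulty: one must check that all the $u$-integrals of $-\nu$ and of $\kappa(-\nu)$ over the relevant characteristic segment are uniformly finite despite the fact that the horizon may sit at $u = \infty$ in these Eddington--Finkelstein coordinates. This is handled exactly as in section \ref{Basic} (cf.\ the computation in the suppressed ``Coordinates regular at $\mathcal{H}^+$'' paragraph and the redshift estimate \eqref{redshift}): since $1-\mu$ is bounded and $\int (-\nu)\, du$ equals an $r$-difference which is small in $\{r \leq r_K\}$ by \eqref{Rdiff} (and correspondingly controlled in $\{r \leq r_{cl}\}$), these integrals are manifestly finite. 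The restriction $u \leq T - r^\star(T,r_K)$ ensures we stay inside $\mathcal{A}(T)$ where the bootstrap assumptions apply, and the constant $2\sqrt{C_L}\,C(c)$ absorbs the transition from $r^\star_{cl}$ and the factor from the $u$-integration of $-\nu$. Collecting the two bounds gives \eqref{pointwiseBh}.
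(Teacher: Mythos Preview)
Your proposal is correct and follows essentially the same approach as the paper: for $B$ the paper performs exactly your Cauchy--Schwarz integration from $r^\star_{cl}$ inward using bootstrap assumption \eqref{intebound2} (with the harmless correction that the second factor should be $\sqrt{\int(-\nu)/r^3\,du}$ rather than $\sqrt{\int(-\nu)/r^6\,du}$, irrelevant since $r\sim\sqrt{M}$ here), and for $\theta$ the paper is even terser, simply observing that the proof of Proposition~\ref{pointthcent} already integrated \eqref{dut} in $u$ to any point in $\{r^\star\leq\tfrac{9}{10}t\}$, so the bound there covers the present region without restarting from $r^\star_{cl}$.
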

\begin{proof}
The decay for $\theta$ was already obtained in the proof of
Proposition \ref{pointthcent}. From Proposition \ref{pointBcent}, 
we know that on $r^\star=r^\star_{cl}$ we have the decay
\begin{equation} \label{pbdh}
|B| \leq \sqrt{C_L} \ C\left(c\right) \frac{\sqrt{M}}{v_+}
\end{equation}
Consequently,
\begin{eqnarray}
B\left(u,v\right) &=& B\left(u_{r^\star_{cl}},v\right) 
+ \int^{u}_{u_{r^\star_{cl}}} \frac{\zeta}{r^\frac{3}{2}}
\left(\bar{u},v\right) d\bar{u} \nonumber \\ &\leq& \sqrt{C_L} \ C\left(c\right) \frac{\sqrt{M}}{v_+}
+ \sqrt{ \int^{u}_{u_{r^\star_{cl}}} \frac{\zeta^2}{-\nu}d\bar{u}}  \sqrt{ \int^{u}_{u_{r^\star_{cl}}} \frac{-\nu}{r^3} d\bar{u}}
\end{eqnarray}
and upon inserting bootstrap assumption (\ref{intebound2}) we obtain the result.
\end{proof}
\subsubsection{Higher order quantities}
In this subsection various bounds on the derivatives of 
the quantity $\Omega^2$ are proven. Since we have not yet 
established a pointwise bound on $\frac{\zeta}{\nu}$, the estimates 
will turn out to be suboptimal.\footnote{Such a pointwise bound could in principle be established via a bootstrap argument in the style of Proposition \ref{Bprop}, with the pointwise decay bound on $B$ (\ref{pbdh}) now entering the estimates.} However, they suffice to estimate 
certain error-terms in the $X$-vectorfield identity. An interplay 
between the $X$ and the $Y$-vectorfield will finally 
generate a pointwise bound on $\frac{\zeta}{\nu}$, which allows one
to optimize the estimates (cf.~Proposition \ref{omuomvcent}). 
In particular, the new decay will then suffice to control 
the error-terms occurring in the identity (\ref{bvfi}) for the vectorfield $K$.

The first step is to improve Proposition \ref{omvclo} to a 
decay bound. In the following $C\left(r^\star_{cl},c\right)$ 
denotes a constant whose weight is determined by $C_L$ and 
which also depends on the $c$ in the bootstrap assumptions. 

It should be emphasized again that the 
quantity $\frac{\Omega_{,v}}{\Omega}$ is only 
piecewise continuous, with a discontinuity spreading 
along the null line $v=T+r^\star\left(T,r_K\right)$. 
The estimates below are valid because the quantity 
$\partial_u \frac{\Omega_{,v}}{\Omega}$, which is 
integrated along null-lines, is continuous. The same 
considerations are valid for the quantity 
$\frac{\Omega_{,u}}{\Omega}$ whose discontinuity 
is along the null line $u=T-r^\star\left(T,r_K\right)$.
\begin{proposition} \label{omvcent}
In the region $\mathcal{A}\left(T\right) \cap \{ r^\star \geq r^\star_K \} \cap \{
r^\star \leq \frac{9}{10}t \}$ we have
the one-sided bound
\begin{equation} \label{omv2}
\frac{\Omega_{,v}}{\Omega} - \frac{m}{r^3} \leq
C\left(r^\star_{cl},c\right)\frac{\sqrt{M}}{t^2} \, .
\end{equation}
In $\mathcal{A}\left(T\right) \cap \{ r^\star \leq r^\star_{cl} \}$ we
have
\begin{equation} \label{omv3}
\frac{\Omega_{,v}}{\Omega} - \frac{m}{r^3} \leq
C\left(r^\star_{cl},c\right)\frac{\sqrt{M}}{v_+^2}
\end{equation}
\end{proposition}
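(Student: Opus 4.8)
The plan is to derive a transport equation in $u$ for the quantity $W:=\frac{\Omega_{,v}}{\Omega}-\frac{m}{r^3}$, to observe that $W\leq 0$ on the ``future'' set $\mathcal{L}$ on which $\kappa\equiv\frac12$ by the construction of section \ref{Coordinates} (this is why only a one-sided bound needs to be propagated), and then to integrate this equation backwards along lines of constant $v$, estimating each error term with one of the decay inputs already available from the bootstrap analysis.

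First, combining (\ref{omegaevol}) for $\partial_u\frac{\Omega_{,v}}{\Omega}$ with $\partial_u(m/r^3)=m_{,u}/r^3-3m\nu/r^4$, where $m_{,u}$ is given by (\ref{dum}), and using $\Omega^2=-4\kappa\nu$ and $\lambda=\kappa(1-\mu)$, one obtains
\[
\partial_u W=\frac{3m\nu}{r^4}\left(1-2\kappa\right)+\frac{\Omega^2}{2r^2}\left(\rho-\frac32\right)-\frac{3\theta\zeta}{r^3}+\frac{4\lambda(B_{,u})^2}{\Omega^2}-\frac{\nu}{r^2}\left(1-\frac23\rho\right).
\]
The algebra is arranged so that the potentially large $m/r^3$-terms cancel up to the factor $1-2\kappa$, which is small by Proposition \ref{kgom}; moreover $\rho-\frac32$ and $1-\frac23\rho$ are $O(B^2)$, while $\frac{4\lambda(B_{,u})^2}{\Omega^2}=\frac{(B_{,u})^2}{\gamma}$ and $\frac{\theta\zeta}{r^3}=B_{,u}B_{,v}$. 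Next, differentiating $\kappa=\Omega^2/(-4\nu)$ and using (\ref{revol}) gives $\partial_v\log\kappa=2\frac{\Omega_{,v}}{\Omega}-\frac{4\kappa m}{r^3}-\frac{4\kappa}{3r}(\rho-\frac32)$, hence on any set where $\kappa\equiv\frac12$ one has $W=\kappa_{,v}+\frac1{3r}(\rho-\frac32)$. On the set $\mathcal{L}=\big(\{t=T\}\cap\{r^\star\geq r^\star(T,r_K)\}\big)\cup\big(\{u=T-r^\star(T,r_K)\}\cap\{t\leq T\}\big)$ one has $\kappa_{,v}=0$ on the null part and $\kappa_{,v}=\kappa_{,u}=\frac{\zeta^2}{r^2\nu}\leq0$ on the $t=T$ part by (\ref{kapevol}) and $\nu<0$; together with $\rho\leq\frac32$ this shows $W\leq0$ on all of $\mathcal{L}$, which is exactly the source of the one-sidedness.

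For a point $P\in\mathcal{A}(T)\cap\{r^\star\geq r^\star_K\}$ with $v\geq v_0$ I integrate the transport equation along the constant-$v$ ray through $P$ up to its intersection with $\mathcal{L}$ (which lies to the future, $u$ increasing); since the boundary value is $\leq 0$ this yields $W(P)\leq\int_{\mathrm{ray}}|\partial_uW|\,du$, and it remains to bound the right-hand side by $C(r^\star_{cl},c)\frac{\sqrt M}{t^2}$. Changing variables by $du=dr/\nu$, the first, second and last terms are handled using $|1-2\kappa|\leq C_L\frac{M}{t^2}$ (Proposition \ref{kgom}) with $\int\frac{m}{r^4}\,dr\lesssim Mr_K^{-3}$, and $|\rho-\frac32|,\,|1-\frac23\rho|\leq CB^2$ with the pointwise decay $|B|\leq C\frac{\sqrt M}{t}$ (Proposition \ref{pointBcent}, and Proposition \ref{pointBthhoz} on the part of the ray in $r^\star\leq r^\star_{cl}$) and $\int\frac{B^2}{r^2}\,dr\lesssim \frac{M}{t^2}r_K^{-1}$. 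The term $\frac{(B_{,u})^2}{\gamma}$ is, up to the factor $r^3$ and $2\pi^2$, the $B_{,u}$-part of the $T$-energy flux through the ray, so Proposition \ref{decayfromK} controls it by $\frac{M^2}{v^2}r_K^{-3}$ (the ray being an achronal slice in $\{r^\star\leq\frac{10}{11}t\}$ with lower $v$-endpoint $\geq v_0$). For the cross term $B_{,u}B_{,v}$ a symmetric Cauchy--Schwarz gives only $1/t$; instead I split asymmetrically, $\int|B_{,u}B_{,v}|\,du\leq\big(\int r^3\frac{(B_{,u})^2}{\gamma}\,du\big)^{1/2}\big(\int\frac{\gamma(B_{,v})^2}{r^3}\,du\big)^{1/2}$, bounding the first factor by the $T$-flux ($\lesssim M/v$) and the second by the pointwise decay $|\theta|\leq C\frac{M^{3/4}}{t}$ (Proposition \ref{pointthcent}), so that this term is also $\lesssim\frac{\sqrt M}{t^2}$. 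Using $v\sim t$ in this region for $t\geq t_0$ (section \ref{analboot}) collects all contributions into (\ref{omv2}); for $t<t_0$ one falls back on the a-priori bound of Proposition \ref{omvclo}.

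For (\ref{omv3}) I run the same argument, integrating along constant-$v$ rays from the curve $r^\star=r^\star_{cl}$ — on which (\ref{omv2}) has just been proved and $t\sim v$, so the starting value is $\leq C\frac{\sqrt M}{v_+^2}$ (for $v\leq v_0$ I instead use Proposition \ref{omvclo}) — into $\{r^\star\leq r^\star_{cl}\}$. The only changes are that the $T$-flux is replaced by the bootstrap quantity $\tilde F^Y_B=\int r^3\frac{(B_{,u})^2}{-\nu}\,du\leq C_L\frac{M^2}{v_+^2}$ of (\ref{intebound2}), the decay of $\kappa$ is taken from (\ref{kgclose0}), and the pointwise bounds on $B$ and $\theta$ near the horizon come from Proposition \ref{pointBthhoz}; one uses here that $\gamma$ stays comparable to $\frac12$ up to the horizon (Proposition \ref{gamma} together with monotonicity) and that $-\nu\leq1$, so the degenerate factors point the favourable way when converting $\tilde F^Y_B$-weighted integrals into the weights appearing in $\partial_uW$. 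I expect the main obstacle to be precisely this cross term $\theta\zeta/r^3$: obtaining $\sqrt M/t^2$ (resp. $\sqrt M/v_+^2$) rather than the $1/t$ (resp. $1/v$) produced by the symmetric estimate forces the asymmetric split above and a careful accounting of the powers of $r$, $\nu$, $\gamma$ and $\Omega^2$, which is delicate near the horizon where several of these factors degenerate and must be checked to combine into the stated decay.
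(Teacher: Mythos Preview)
Your overall strategy is correct and coincides with the paper's: derive a transport equation for $W=\frac{\Omega_{,v}}{\Omega}-\frac{m}{r^3}$, observe that $W\leq 0$ on the set $\mathcal{L}$ where $\kappa\equiv\frac12$ (your argument via $\kappa_{,v}\leq 0$ and $\rho\leq\frac32$ is exactly right), and integrate back along constant-$v$ rays. Your handling of the $\frac{3m\nu}{r^4}(1-2\kappa)$, $\rho-\frac32$, $1-\frac23\rho$, and $\frac{4\lambda(B_{,u})^2}{\Omega^2}$ terms is also fine.

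The gap is in the cross term $\frac{3\theta\zeta}{r^3}=3B_{,u}B_{,v}$, precisely where you flag the difficulty. Your Cauchy--Schwarz with weight $r^3/\gamma$ produces a second factor $\big(\int\frac{\gamma(B_{,v})^2}{r^3}\,du\big)^{1/2}$. Even after inserting $|\theta|\lesssim M^{3/4}/t$, the residual integral is $\int\frac{\gamma}{r^6}\,du=\int\frac{dr}{(1-\mu)r^6}$, and the $\frac{1}{1-\mu}$ weight diverges as the ray approaches the horizon (for (\ref{omv3})); for (\ref{omv2}) the same issue appears on the portion of your ray that enters $r^\star<r^\star_{cl}$ before meeting $\mathcal{L}$. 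Tracking the powers, your split yields at best $M^{1/4}/t^{3/2}$, not $\sqrt{M}/t^2$. The point is that $\gamma\,du=\frac{dr}{1-\mu}$ does \emph{not} convert to a bounded $dr$-integral near the horizon, whereas $-\nu\,du=dr$ does.

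The paper's remedy is to pull out $\sup|\theta|\lesssim M^{3/4}/v_+$ first and then Cauchy--Schwarz with weight $\Omega^2=-4\kappa\nu$ on the remaining $\int|\zeta|/r^3\,du$; the second factor becomes $\big(\int\frac{-4\kappa\nu}{r^6}\,du\big)^{1/2}=\big(4\kappa\int\frac{dr}{r^6}\big)^{1/2}$, which is uniformly bounded with no $1/(1-\mu)$. This matches the $\tilde F^Y_B$ weighting by $-\nu$ rather than the $T$-flux weighting by $\gamma$, a distinction you note but do not carry into the cross-term estimate. Relatedly, the paper proves (\ref{omv3}) \emph{first} and then uses the $r^\star=r^\star_{cl}$ curve (together with $\{t=T\}$) as the launching set for (\ref{omv2}); this keeps the (\ref{omv2}) integration entirely in $r^\star\geq r^\star_{cl}$, where either weight works with a $C(r^\star_{cl})$ constant. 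Your reversed order forces the (\ref{omv2}) ray through the near-horizon region and thereby inherits the bad weight. Either fix---swap $\gamma$ for $-\nu$ (equivalently $\Omega^2$) in the Cauchy--Schwarz, or adopt the paper's order---closes the gap.
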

\begin{proof}
Since $\kappa=\frac{1}{2}$ on the $u=T-r^\star\left(T,r_K\right)$ ray we have
\begin{equation}
\kappa_{,v}=0= 2\kappa \frac{\Omega_{,v}}{\Omega} - \kappa
  \left(\frac{4\kappa}{r^3} m + \frac{4}{3} \frac{\kappa}{r}
  \left(\rho-\frac{3}{2}\right) \right) 
\end{equation}
and, in view of Proposition \ref{pointBthhoz}, the estimate
\begin{equation}
\Big|\frac{\Omega_{,v}}{\Omega}\left(u=T-r^\star\left(T,r_K\right),v
\right)\Big| \leq
\frac{m}{r^3}\left(u=T-r^\star\left(T,r_K\right),v \right) +
C\left(r^\star_{cl},c\right)\frac{\sqrt{M}}{v_+^2} \, .
\end{equation} 
Moreover, on 
$\{t=T \} \cap \{r^\star\left(T,r_K\right) \leq r^\star \leq \frac{9}{10}T \}$ 
we have by the constancy of $\kappa$
\begin{equation}
\kappa_{,r^\star} = 0 = 2\kappa \frac{\Omega_{,v}}{\Omega} - \kappa
  \left(\frac{4\kappa}{r^3} m + \frac{4}{3} \frac{\kappa}{r}
  \left(\rho-\frac{3}{2}\right) \right) - \kappa \frac{2}{r^2}
  \frac{\zeta^2}{\nu} 
\end{equation}
and hence
\begin{equation} \label{gstv1}
\frac{\Omega_{,v}}{\Omega} - \frac{m}{r^3} \leq C\left(r^\star_{cl},c\right)\frac{\sqrt{M}}{v_+^2}
\end{equation}
following from the fact that $|B| \leq \frac{C}{v_+}$ in that region by Proposition \ref{pointBthhoz}. Note 
that the inequality (\ref{gstv1}) would also be two-sided 
if we had the analogous pointwise bound on $\frac{\zeta}{\nu}$.
Integrating (\ref{omegaevol}) downwards from the set 
$L = \{ u=T-r^\star\left(T,r_K\right) \} \cup \left(\{t=T \} 
\cap \{r^\star\left(T,r_K\right) \leq r^\star \leq r^\star_{cl}
\}\right)$ to the $r^\star=r^\star_{cl}$ curve yields
\begin{equation}
\frac{\Omega_{,v}}{\Omega} \left(u,v\right)= \frac{\Omega_{,v}}{\Omega}\left(u_L,v\right) + \int^u_{u_L} \left[-6\frac{\kappa m \nu}{r^4}-2\kappa \frac{\nu}{r^2} \left(\rho-\frac{3}{2}\right) - 3 \frac{\zeta \theta}{r^3} \right] \left(\bar{u},v\right)d\bar{u}
\end{equation}
and upon inserting the pointwise estimates on $B$ and $\theta$
(Proposition \ref{pointBthhoz}), bootstrap assumption \ref{boot4} 
and the estimate
\begin{equation}
\Big| \int^{u_L}_{u} \left[- 3 \frac{\zeta \theta}{r^3} \right] \left(\bar{u},v\right)d\bar{u} \Big|\leq 3 C\left(c\right)\frac{M^\frac{3}{4}}{v_+} \sqrt{ \int^u_{u_L}\frac{\zeta^2}{\Omega^2}d\bar{u} }\sqrt{ \int^u_{u_L} \frac{-4\kappa \nu}{r^6} d\bar{u}} \leq C\left(c\right) \sqrt{C_L} \frac{M^\frac{7}{4}}{r_-^\frac{5}{2}} \frac{1}{v_+^2}  \nonumber
 \end{equation}
for which (\ref{intebound2}) has been used, we finally find that
\begin{equation} 
\frac{\Omega_{,v}}{\Omega} - \frac{m}{r^3} \leq
C\left(r^\star_{cl},c\right)\frac{\sqrt{M}}{v_+^2}
\end{equation}
holds everywhere in $r^\star \leq r^\star_{cl}$ establishing
(\ref{omv3}). Starting from this curve or from the curve 
$\{t=T \} \cap \{r^\star_{cl} \leq r^\star \leq \frac{9}{10}t \}$
we can integrate (\ref{omegaevol}) further to any
point in the region $\mathcal{A}\left(T\right) \cap \{ r > r_K \} \cap \{
r^\star \leq \frac{9}{10}t \}$, this time using the energy 
estimate instead of (\ref{intebound2})
to obtain (\ref{omv2}). \\
\end{proof}

\begin{proposition} \label{omucent}
In the region $\mathcal{A}\left(T\right) \cap \{ r^\star \geq r^\star_K \} \cap \{
r^\star \leq \frac{9}{10}t \}$ 
\begin{equation} \label{omudec}
\abs{\frac{\Omega_{,u}}{\Omega} + \frac{m}{r^3}} \leq
\frac{C\left(\epsilon\right)}{t} \, ,
\end{equation}
in the region $\mathcal{A}\left(T\right) \cap \{ r^\star \leq r^\star_K \} \cap \{
u \leq T-r^\star_K \}$ 
\begin{equation} \label{omudec2}
0 > \frac{\Omega_{,u}}{\Omega} \geq - \frac{m}{r^3}
-\frac{C\left(\epsilon\right)}{v_+} \, .
\end{equation}
\end{proposition}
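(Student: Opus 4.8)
The proof mirrors that of Proposition \ref{omvcent} under the exchange $u\leftrightarrow v$, $\kappa\leftrightarrow\gamma$, $\nu\leftrightarrow\lambda$, $\zeta\leftrightarrow\theta$. The starting point is the algebraic identity obtained from $\gamma=\frac{\Omega^2}{4\lambda}$ and the evolution equation (\ref{revol}) for $r_{,uv}$: writing $\log\gamma=2\log\Omega-\log(4r_{,v})$ and differentiating in $u$ gives
\begin{equation}
\frac{\gamma_{,u}}{\gamma}=2\frac{\Omega_{,u}}{\Omega}+\frac{4\gamma}{r^3}m+\frac{4}{3}\frac{\gamma}{r}\left(\rho-\frac32\right),
\end{equation}
so that, whenever $\gamma$ is close to $\frac12$ and $\gamma_{,u}$ is controlled, $\frac{\Omega_{,u}}{\Omega}+\frac{m}{r^3}$ is expressed through $\gamma_{,u}/(2\gamma)$, the term $\frac{m}{r^3}(1-2\gamma)$, and the manifestly non-negative $O(B^2)$ quantity $\frac{2\gamma}{3r}(\frac32-\rho)$. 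By the coordinate normalization of section \ref{Coordinates}, $\gamma=\frac12$ holds \emph{exactly} on the $\nabla r$ integral curve through $A$ (which coincides with $\{t=T\}\cap\{r^\star\geq r^\star_K\}$) and on the ingoing ray $\overline{BC}$ from $B$ to the horizon (where $\nu=-\frac12(1-\mu)$), while Proposition \ref{kgom} and Corollary \ref{kapgamdecr} control its deviation from $\frac12$ elsewhere. The plan is to evaluate $\frac{\Omega_{,u}}{\Omega}+\frac{m}{r^3}$ on these curves and then propagate it by integrating the wave equation (\ref{omegaevol}) for $\log\Omega$ in the $v$-direction, just as Proposition \ref{omvcent} integrates in $u$.

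For the region $\{r^\star\geq r^\star_K\}$: on $\{t=T\}\cap\{r^\star\geq r^\star_K\}$ one has $\gamma=\frac12$, hence $\gamma_{,r^\star}=0$ and $\gamma_{,u}=\gamma_{,v}=\gamma\frac{2}{r^2}\frac{\theta^2}{\lambda}$ by (\ref{gammaevol}), which gives
\begin{equation}
\frac{\Omega_{,u}}{\Omega}+\frac{m}{r^3}=\frac{\theta^2}{r^2\lambda}+\frac{1}{3r}\left(\frac32-\rho\right)\geq 0,
\end{equation}
bounded by $C(\epsilon)/t$ using the pointwise decay of $\theta$ (Propositions \ref{pointthcent} and \ref{pointBthhoz}) and of $B$ (Propositions \ref{pointBcent} and \ref{pointBthhoz}), and that $\lambda$ is bounded below on $r\geq r_K$. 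For an arbitrary point of $\mathcal{A}(T)\cap\{r^\star_K\leq r^\star\leq\frac{9}{10}t\}$ one integrates (\ref{omegaevol}) in $v$ along the constant-$u$ ray joining it to $\{t=T\}$ (which meets $\{t=T\}$ at $r^\star=T-u\geq r^\star_K$). The crucial point, as in Proposition \ref{omvcent}, is that the large term $\frac{3\Omega^2 m}{2r^4}$ in (\ref{omegaevol}) cancels the leading part of $\partial_v(m/r^3)$: using (\ref{vum}) and $\Omega^2=4\lambda\gamma$,
\begin{equation}
\partial_v\left(\frac{\Omega_{,u}}{\Omega}+\frac{m}{r^3}\right)=\frac{3m\lambda(2\gamma-1)}{r^4}+\frac{\theta^2}{\kappa r^3}+\frac{\lambda(1-\frac23\rho)(1-3\gamma)}{r^2}-\frac{3\theta\zeta}{r^3},
\end{equation}
and each term on the right has a small $v$-integral: $|2\gamma-1|=O(M/t^2)$ by Proposition \ref{kgom}, $\int\frac{\theta^2}{\kappa}\,dv$ is bounded by the Hawking mass flux, which decays by Proposition \ref{decayfromK}, the $(1-\frac23\rho)$-term is $O(B^2)$ and decays by Proposition \ref{pointBcent}, and the cross term is handled by Cauchy--Schwarz together with the mass fluxes. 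This yields (\ref{omudec}) (the extension to $r^\star\geq\frac{9}{10}t$ via Corollary \ref{kapgamdecr} is not needed here).

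For the region $\{r^\star\leq r^\star_K\}\cap\{u\leq T-r^\star_K\}$: on $\overline{BC}$ one has $\gamma=\frac12$ and $\gamma_{,u}=0$, so directly
\begin{equation}
\frac{\Omega_{,u}}{\Omega}=-\frac{m}{r^3}+\frac{1}{3r}\left(\frac32-\rho\right),
\end{equation}
which is strictly negative, since $m/r^3$ is bounded below while $\frac32-\rho=O(B^2)$ by Proposition \ref{pointBthhoz}, and $\geq-\frac{m}{r^3}-C(\epsilon)/v_+$. One then propagates this into the region by integrating (\ref{omegaevol}) in $v$ along constant-$u$ rays: for $u\leq T-r^\star_K$ such a ray first crosses $r^\star=r^\star_K$ and then reaches $\{t=T\}\cap\{r^\star\geq r^\star_K\}$, so the integrand above is controlled by the previous step in $\{r^\star\geq r^\star_K\}$ and, in the near-horizon portion $\{r^\star\leq r^\star_K\}$, by Proposition \ref{pointBthhoz} for $B$ and $\theta$, by bootstrap assumption (\ref{intebound2}) for the integral of $\zeta^2/(-\nu)$ entering the cross term, and by assumptions (\ref{hozass})--(\ref{ceilass}) for the Hawking mass flux. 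Collecting the terms and using $v_+\sim v$ gives (\ref{omudec2}).

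The main obstacle is precisely the treatment of the cross term $-3\theta\zeta/r^3$ and of the Hawking-mass-flux contributions in the near-horizon region $r^\star\leq r^\star_K$: no pointwise bound on $\zeta/\nu$ is available at this stage --- this is exactly why Proposition \ref{omucent} asserts only the crude one-sided bounds --- so one must pair the $u$-integral of $\zeta^2/(-\nu)$ supplied by (\ref{intebound2}) against the decay of the $v$-flux of the Hawking mass, working in the characteristic rectangle rather than along a single ray, and keep this bookkeeping consistent with the family of coordinate systems $\mathcal{C}_{\tilde\tau}$ and with the fact that the exact normalization $\gamma=\frac12$ holds only on $\overline{BC}$ and on the $\nabla r$ curve. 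Everything else is a routine repetition of the energy- and pointwise-decay inputs already assembled in sections \ref{Basic} and \ref{analboot}.
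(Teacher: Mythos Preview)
Your overall strategy---use $\gamma=\tfrac12$ on the normalization set to evaluate $\tfrac{\Omega_{,u}}{\Omega}+\tfrac{m}{r^3}$, then propagate by integrating (\ref{omegaevol}) in $v$---is exactly the paper's. But there is a genuine gap in how you handle the cross term $-3\theta\zeta/r^3$, and it stems from a false premise in your final paragraph: you assert that ``no pointwise bound on $\zeta/\nu$ is available at this stage,'' and therefore reach for Cauchy--Schwarz paired with the $u$-integral bound (\ref{intebound2}). That pairing does not close: you are integrating in $v$ along constant $u$, and there is no natural $v$-flux controlling $\int\zeta^2\,dv$; the bootstrap quantity (\ref{intebound2}) is a $u$-integral and cannot be invoked here without an additional argument you do not supply. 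In fact a pointwise bound \emph{is} available: Proposition~\ref{simpsmall} (proved in section~\ref{Basic} before any bootstrap) gives $|\zeta/\nu|\le C(\epsilon)M^{1/4}$ everywhere on $\mathcal D$. The paper rewrites the cross term as $3\gamma\tfrac{\theta}{\kappa}\tfrac{\zeta}{\nu}\tfrac{\lambda}{r^3}$, pulls out $|\theta|\le C(c)M^{3/4}/v_+$ and $\sup|\zeta/\nu|\le C(\epsilon)M^{1/4}$, and integrates $\int\lambda/r^3\,d\bar v=\int r^{-3}\,dr$, which is bounded; this yields exactly the $C(\epsilon)/v_+$ in (\ref{omudec}). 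The presence of $C(\epsilon)$ rather than a decaying factor is precisely because the $\zeta/\nu$ bound is small but not decaying; the upgrade to $C/t^2$ (Proposition~\ref{omuomvcent}) only comes after the $Y$-vectorfield argument delivers the decay of $\zeta/\nu$.

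Two smaller points. First, your discussion of $\overline{BC}$ for the near-horizon estimate (\ref{omudec2}) is misplaced: the region in question has $u\le T-r^\star_K$, while $\overline{BC}$ lies at $u\ge u_B\approx T-r^\star_K$ and is therefore not on the constant-$u$ rays you integrate along. The paper instead starts from the curve $r^\star=r^\star_K$ (where (\ref{omudec}) already holds and $t\sim v$), observes that the bracket in (\ref{omegaevol2}) is positive for small data to get $\Omega_{,u}/\Omega<0$, and uses $\gamma\le\tfrac12$ together with the same pointwise $\zeta/\nu$ bound for the lower estimate. Second, the paper splits the region $\{r^\star_K\le r^\star\le\tfrac{9}{10}t\}$ according to whether $u\ge\tfrac{1}{10}T$ or not: only in the first case does the constant-$u$ ray to $\{t=T\}$ remain inside $\{r^\star\le\tfrac{9}{10}t\}$, where the $\gamma$- and $\theta$-decay estimates apply; for $u<\tfrac{1}{10}T$ one must first establish the $r$-decay bound (\ref{omrb}) on $\{t=T\}\cap\{r^\star\ge\tfrac{9}{10}T\}$, push it down to the curve $r^\star=\tfrac{9}{10}t$, and integrate from there. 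Your sketch omits this case distinction.
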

\begin{proof}
$\gamma=\frac{1}{2}$ on the set $L=\{t=T\} \cap \{
r^\star\left(T,r_K\right) \leq \frac{9}{10}T \}$. From
\begin{equation}
-\gamma_{,r^\star} = 0 = 2\gamma \frac{\Omega_{,u}}{\Omega} - \gamma
  \left(-\frac{4\gamma}{r^3} m - \frac{4}{3} \frac{\gamma}{r}
  \left(\rho-\frac{3}{2}\right) \right) - \gamma \frac{2}{r^2}
  \frac{\theta^2}{\lambda} 
\end{equation}
we derive using the decay estimates (\ref{pointwiseB}), (\ref{pointwiseBh}), (\ref{pointwiseth}) the bound
\begin{equation}
\Big|\frac{\Omega_{,u}}{\Omega} + \frac{m}{r^3} \Big| \leq
C\left(r_K,c\right)\frac{M}{r T^2} \, 
\end{equation}
on $L$. We write the evolution equation (\ref{omegaevol}) as
\begin{equation} \label{omegaevol2}
\partial_v \left(\frac{\Omega_{,u}}{\Omega}\right) = \gamma \left(6
m \frac{\lambda}{r^4} + \frac{2\lambda}{r^2}
\left(\rho-\frac{3}{2}\right) + 3 \frac{\theta}{\kappa}
\frac{\zeta}{\nu} \frac{\lambda}{r^3} \right)
\end{equation}
and integrate downwards in $v$. Using the estimates (\ref{kgclose1}),
(\ref{kgclose2}) and again the decay estimates for $B$ and $\theta$, the
error-terms are estimated
\begin{equation}
\Big| \int^{v_L}_{v} \left[3 \gamma \frac{\theta}{\kappa}
\frac{\zeta}{\nu} \frac{\lambda}{r^3} \right] \Big| \left(\bar{u},v\right)d\bar{v} \leq C\left(c\right)\frac{M^\frac{3}{4}}{v_+} \sup \Big| \frac{\zeta}{\nu} \Big|\int_v^{v_L} \frac{ \lambda}{r^3} d\bar{v} \leq \frac{C\left(\epsilon\right)}{v_+}
 \end{equation}
and
\begin{equation}
\Big| \int^{v_L}_{v} \gamma \frac{2\lambda}{r^2}
\left(\rho-\frac{3}{2}\right) \Big| \leq
\frac{C\left(\epsilon\right)}{v_+^2} \, .
\end{equation}
This establishes the estimate (\ref{omudec}) in a 
subregion ($u \geq \frac{1}{10}T$) of the region asserted in the
Proposition. For the remaining part, we derive the estimate
\begin{equation} \label{omrb}
\Big|\frac{\Omega_{,u}}{\Omega} + \frac{m}{r^3} \Big| \leq
C\left(\epsilon\right)\frac{M}{r^3}
\end{equation}
valid on $\{t=T\} \cap \{ r^\star \geq \frac{9}{10}T \}$ using the
decay of $B$, $\theta$ (Corollary \ref{thetardece} and Proposition
\ref{decayinr2}) in $r$. Integrating (\ref{omegaevol2})
downwards to any point in the region $\{r^\star \geq
\frac{9}{10}t\}$ using again the estimates 
for $B$ and $\theta$ one obtains (\ref{omrb}) in the 
entire region $\{r^\star \geq
\frac{9}{10}t\}$. Since $t \sim r^\star$ on 
the curve $r^\star = \frac{9}{10}t$ we obtain
\begin{equation}
\Big|\frac{\Omega_{,u}}{\Omega} + \frac{m}{r^3} \Big| \leq
C\left(\epsilon\right) \frac{M}{r t^2}
\end{equation}
on that curve. Finally, integrating (\ref{omegaevol2})
 from the $r^\star = \frac{9}{10}t$ curve downwards up 
to any point in the region $r^\star
\geq r^\star_K$ yields (\ref{omudec})
for the entire region asserted in the Proposition.

For the estimate (\ref{omudec2}) we integrate (\ref{omegaevol2}) 
from $r^\star=r^\star_K$ where $t \sim v$ downwards. Clearly the round
bracket on the right hand side of (\ref{omegaevol2}) is always 
positive and hence the upper bound of (\ref{omudec2}) follows
immediately. For the lower bound we use the 
estimate $\gamma \leq \frac{1}{2}$ available in the region under
consideration to estimate:
\begin{eqnarray}
\frac{\Omega_{,u}}{\Omega}\left(u,v\right) \geq
\frac{\Omega_{,u}}{\Omega}\left(u,v=u+2r^\star_K\right) 
- \int_{v}^{u+2r^\star_K} \frac{1}{2} \left(6
m \frac{\lambda}{r^4} + \frac{2\lambda}{r^2}
\left(\rho-\frac{3}{2}\right) + 3 \frac{\theta}{\kappa}
\frac{\zeta}{\nu} \frac{\lambda}{r^3}\right)
\nonumber \\
\geq -\frac{m}{r^3} \left(u,u+2r^\star_K\right) -
m\left(u,u+2r^\star_K\right)
\left(\frac{1}{r^3}\left(u,v\right)-\frac{1}{r^3}\left(u,u+2r^\star_K\right)\right)
- \frac{C\left(\epsilon\right)}{v_+}\nonumber \\
\geq -\frac{m}{r^3} \left(u,v\right) 
- \frac{C\left(\epsilon\right)}{v_+} \nonumber \, .
\end{eqnarray}
\end{proof}
We easily extend the bounds to the asymptotic region:
\begin{corollary} \label{omuomvdecr}
In the region $\mathcal{A}\left(T\right) 
\cap \{ r^{\star} \geq \frac{9}{10}t \} $ we have
\begin{equation}
\frac{\Omega_{,v}}{\Omega} - \frac{m}{r^3} \leq C\left(r^\star_{cl},c\right)
\frac{\sqrt{M}}{r^2} \textrm{ \ \ \ \ and \ \ \ \ }
\Big| \frac{\Omega_{,u}}{\Omega} + \frac{m}{r^3} \Big| \leq 
C\left(\epsilon\right)\frac{M}{r^3} \, .
\end{equation}
\end{corollary}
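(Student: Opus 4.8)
The plan is to propagate the estimates of Propositions \ref{omvcent} and \ref{omucent} outward across the curve $r^\star=\frac{9}{10}t$ into the asymptotic region $\{r^\star\geq\frac{9}{10}t\}$. Recall that by the choice of $t_0$ in section \ref{analboot} one has $v\sim t\sim r^\star$ throughout this region, while $r\sim r^\star$ by (\ref{rstarrb}); hence the bounds (\ref{omv2}) and (\ref{omudec}), which hold on $\{r^\star=\frac{9}{10}t\}$, become there exactly $\frac{\Omega_{,v}}{\Omega}-\frac{m}{r^3}\leq C\left(r^\star_{cl},c\right)\frac{\sqrt{M}}{r^2}$ and $\Big|\frac{\Omega_{,u}}{\Omega}+\frac{m}{r^3}\Big|\leq C\left(\epsilon\right)\frac{M}{r^3}$. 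For the $u$-derivative this is in fact all that is needed: the proof of Proposition \ref{omucent} already derived the estimate (\ref{omrb}) and showed it to hold in the entire region $\{r^\star\geq\frac{9}{10}t\}$, which is precisely the second inequality of the corollary.

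For the $v$-derivative bound I would integrate the evolution equation (\ref{omegaevol}) in the $u$-direction at fixed $v$, starting from the point on $r^\star=\frac{9}{10}t$ (namely $u=v/19$) and moving into $u\leq v/19$. Writing (\ref{omegaevol}) as $\partial_u\frac{\Omega_{,v}}{\Omega}=-6\kappa\nu\frac{m}{r^4}-2\kappa\nu\frac{1}{r^2}\left(\rho-\frac{3}{2}\right)-3\frac{\theta\zeta}{r^3}$ (using $\Omega^2=-4\kappa\nu$), the first term, with $\kappa$ replaced by $\frac{1}{2}$ and integrated against $\partial_u(r^{-3})=-3\nu r^{-4}$, reproduces precisely the change of $\frac{m}{r^3}$ along the ray. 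Thus $\frac{\Omega_{,v}}{\Omega}-\frac{m}{r^3}$ at the endpoint equals its value on $\{r^\star=\frac{9}{10}t\}$ plus error integrals arising from $|\kappa-\frac{1}{2}|\leq C_L\frac{M}{r^2}$ and $|\gamma-\frac{1}{2}|\leq\frac{C\left(\epsilon\right)}{r^2}$ (Corollary \ref{kapgamdecr}), from the mass monotonicity (\ref{dum})--(\ref{vum}), from the $\left(\rho-\frac{3}{2}\right)$-term controlled by $|B|\leq C\left(c\right)\frac{M^{3/4}}{r^{3/2}}$ (Proposition \ref{decayinr2}), and from the $\frac{\theta\zeta}{r^3}$-term handled by Cauchy--Schwarz using $|\theta|\leq C\left(c\right)\frac{M^{3/4}}{r}$ (Corollary \ref{thetardece}), the bound $\Big|\frac{\zeta}{\nu}\Big|\leq C\left(\epsilon\right)M^{1/4}$ (Proposition \ref{simpsmall}), and the energy decay $E(S)\leq C\frac{M^2}{v^2}$ (Proposition \ref{decayfromK}). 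Each of these is $\leq C\left(r^\star_{cl},c\right)\frac{\sqrt{M}}{r^2}$, exactly as in the proof of Proposition \ref{omvcent}, giving the first inequality. The argument for $\frac{\Omega_{,u}}{\Omega}$ (should one wish to rederive it rather than quote (\ref{omrb})) is identical, integrating (\ref{omegaevol2}) in $v$.

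The only genuinely delicate point, and hence the thing to keep an eye on, is that the $v$-derivative statement is one-sided: since no small pointwise bound on $\frac{\zeta}{\nu}$ is yet available, the term proportional to $\frac{\zeta^2}{\nu}$ (which enters, with a definite negative sign, whenever one tries to control $\kappa_{,r^\star}$ or $\kappa_{,v}$ directly) can only be discarded in the favorable direction, so every error without a definite sign must be estimated \emph{from above} by $\frac{\sqrt{M}}{r^2}$ rather than equated. This is exactly the reason the proofs of Propositions \ref{omvcent} and \ref{omucent} are organized around one-sided integrations off constant-$\kappa$ (resp.\ constant-$\gamma$) slices, and the present extension simply inherits that structure, bootstrapping off $\{r^\star=\frac{9}{10}t\}$ instead of a slice $\{t=T\}$. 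Beyond this bookkeeping the corollary is routine given the pointwise decay in $r$ already established for $B$, $\theta$, $\kappa$ and $\gamma$.
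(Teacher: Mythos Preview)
Your proposal is correct and follows essentially the same route as the paper: quote (\ref{omrb}) for the $\frac{\Omega_{,u}}{\Omega}$ bound, and for the $\frac{\Omega_{,v}}{\Omega}$ bound integrate (\ref{omegaevol}) outward in $u$ from the curve $r^\star=\frac{9}{10}t$, using the $r$-decay of $B$, $\theta$, $\kappa$, $\gamma$ in the asymptotic region. The paper's proof is a two-line sketch saying exactly this; your write-up simply supplies the error bookkeeping it omits, and your remark on the one-sidedness of the $v$-estimate is precisely the reason the statement is one-sided.
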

\begin{proof}
The first bound follows from integrating equation (\ref{omegaevol}) 
outwards from $r^\star = \frac{9}{10}t$ using that $r \sim t$ in that
region and the decay of the fields in $r$. The second bound was 
obtained in (\ref{omrb}).
\end{proof}
As seen in the proof of Propositions \ref{omvcent} 
and \ref{omucent} a pointwise \emph{decay} bound on
the quantity $\frac{\zeta}{\nu}$ would considerably improve the
estimates on the higher order quantities. We summarize this as
\begin{proposition} \label{omuomvcent}
Assume that
\begin{equation}
\Big|\frac{\zeta}{\nu}\Big| \leq C\frac{M^\frac{3}{4}}{v_+} \textrm{ \ \ \ holds in $r^\star
  \leq r^{\star}_{cl}$ \ \ \ \ \ \ and \ \ \ \ \ \ }
  \Big|\frac{\zeta}{\nu}\Big| \leq C\frac{M^\frac{3}{4}}{t} \textrm{ \
  \ \ in $\frac{9}{10}t \geq r^\star
  \geq r^{\star}_{cl}$ } \, .
\end{equation}
holds for a constant $C$ depending only on $r^\star_{cl}$. 
Then in the region $\{r^\star \geq r^\star_g\}$ for any 
$r^\star_{cl} \geq r^\star_g \geq r^\star_K$ 
we have the bounds
\begin{equation} \label{hepp1}
\Big|\frac{\Omega_{,v}}{\Omega} - \frac{m}{r^3}\Big| \leq
C\left(r^\star_{cl},c\right)\frac{\sqrt{M}}{t^2} \textrm{ \ \ \ \ and \ \ \ \ }
\Big| \frac{\Omega_{,u}}{\Omega} + \frac{m}{r^3} \Big| \leq 
C\left(r^\star_{g},c\right)\frac{\sqrt{M}}{t^2} \, .
\end{equation}
Moreover, the one-sided bound (\ref{omv3}) in the region 
$\mathcal{A}\left(T\right) \cap \{ r^\star \leq r^\star_{cl} \}$ 
is extended to
\begin{equation} 
\Big|\frac{\Omega_{,v}}{\Omega} - \frac{m}{r^3} \Big| \leq
C\left(r^\star_{cl},c\right)\frac{\sqrt{M}}{v_+^2} \, ,
\end{equation}  
and the bound (\ref{omudec2}) is refined to
\begin{equation}
0 > \frac{\Omega_{,u}}{\Omega} \left(u,v\right) \geq
-\frac{m}{r^3} \left(u,v\right) - C\left(r^\star_{cl},c\right)\frac{\sqrt{M}}{v_+^2} \, .
\end{equation}
in the region $\mathcal{A}\left(T\right) \cap
\{ r^\star \leq r^\star_{cl} \} \cap \{u \leq T-r^\star_K\} $.
\end{proposition}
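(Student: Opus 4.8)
The plan is to re-run the arguments that proved Propositions \ref{omvcent} and \ref{omucent}, with the hypothesized pointwise decay of $\frac{\zeta}{\nu}$ as the only new ingredient. Recall that in those two proofs the sole obstruction both to upgrading the one-sided bounds to two-sided ones and to replacing $v_+$ by $v_+^2$ near the horizon was precisely the absence of a pointwise \emph{decay} estimate for $\frac{\zeta}{\nu}$: the term $\frac{1}{r^2}\frac{\zeta^2}{\nu}=\frac{\nu}{r^2}\left(\frac{\zeta}{\nu}\right)^2$ enters the algebraic relations on the ``seed'' slices where $\kappa$ or $\gamma$ is constant, and the cross terms $\frac{\zeta\theta}{r^3}$ and $\frac{\theta}{\kappa}\frac{\zeta}{\nu}\frac{\lambda}{r^3}$ enter the transport equations (\ref{omegaevol}) and (\ref{omegaevol2}). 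Once the hypothesis is supplied, both are controlled at the required rate and the rest of the argument is unchanged.

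First, the seed sets. On the ray $u=T-r^\star\left(T,r_K\right)$ one has $\kappa_{,v}=0$, so $\frac{\Omega_{,v}}{\Omega}=\frac{2\kappa^2}{r^3}m+\frac{2\kappa^2}{3r}\left(\rho-\frac{3}{2}\right)$, and the hypothesis is not even needed: the $B^2$-order term $\rho-\frac{3}{2}$ is $O\!\left(\frac{\sqrt M}{v_+^2}\right)$ by Proposition \ref{pointBthhoz}. On $\{t=T\}\cap\{r^\star\left(T,r_K\right)\le r^\star\le\frac{9}{10}T\}$ one has $\kappa_{,r^\star}=0$, which adds the extra term $\frac{1}{r^2}\frac{\zeta^2}{\nu}=\frac{\nu}{r^2}\left(\frac{\zeta}{\nu}\right)^2$; by the hypothesis and the boundedness of $-\nu$ this is now $O\!\left(\frac{\sqrt M}{t^2}\right)$ with a constant governed by $C_L$, so on this seed set one obtains the \emph{two-sided} bound $\left|\frac{\Omega_{,v}}{\Omega}-\frac{m}{r^3}\right|\le C\!\left(r^\star_{cl},c\right)\frac{\sqrt M}{t^2}$, resp. $\le C\!\left(r^\star_{cl},c\right)\frac{\sqrt M}{v_+^2}$ on the ray and, propagating inward with (\ref{omegaevol}) using (\ref{intebound2}) exactly as in Proposition \ref{omvcent}, on all of $r^\star\le r^\star_{cl}$. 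The corresponding statement for $\frac{\Omega_{,u}}{\Omega}$ on $\{t=T\}$ follows identically from $\gamma_{,r^\star}=0$ and the now-controlled extra term $\frac{\theta^2}{r^2\lambda}$.

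Next, propagation. I would integrate $\partial_u\frac{\Omega_{,v}}{\Omega}=-6\frac{\kappa m\nu}{r^4}-2\kappa\frac{\nu}{r^2}\left(\rho-\frac{3}{2}\right)-3\frac{\zeta\theta}{r^3}$ along constant-$v$ rays from the seed set, as in Proposition \ref{omvcent}. The principal term reproduces $\frac{m}{r^3}$ up to the mass increment, controlled by bootstrap assumption \ref{boot4} near the horizon and by the energy decay of Proposition \ref{decayfromK} away from it, both $O\!\left(\frac{M^2}{v_+^2}\right)$, plus the harmless variation of $\frac{1}{r^3}$; the $\rho-\frac{3}{2}$ integral is bounded by $\left(\sup|B|^2\right)\int\frac{-\nu}{r^2}\,du$, hence $O\!\left(\frac{M}{v_+^2}\right)$; and the genuinely new term $3\frac{\zeta\theta}{r^3}=3\frac{\nu}{r^3}\cdot\frac{\zeta}{\nu}\cdot\theta$ is bounded in absolute value, using the hypothesis on $\frac{\zeta}{\nu}$ together with the pointwise bounds on $\theta$ (cf.\ Propositions \ref{pointBthhoz}, \ref{pointthcent} and Corollary \ref{thetardece}) and $\int\frac{-\nu}{r^3}\,du<\infty$, by $C\!\left(r^\star_{cl},c\right)\frac{\sqrt M}{v_+^2}$; wherever a sharper $\zeta$-integral is wanted one Cauchy--Schwarzes against (\ref{intebound2}) instead. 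This delivers the two-sided bounds in $r^\star\le r^\star_{cl}$ and in $r^\star_g\le r^\star\le\frac{9}{10}t$. For $\frac{\Omega_{,u}}{\Omega}$ one runs the same scheme with (\ref{omegaevol2}): near the horizon the upper bound $\frac{\Omega_{,u}}{\Omega}<0$ is immediate from positivity of the integrand integrated downward in $v$, and the refined lower bound $\ge-\frac{m}{r^3}-C\!\left(r^\star_{cl},c\right)\frac{\sqrt M}{v_+^2}$ follows by repeating the estimate that closes the proof of Proposition \ref{omucent}, the cross term $3\gamma\frac{\theta}{\kappa}\frac{\zeta}{\nu}\frac{\lambda}{r^3}$ now being bounded by the product of the pointwise bounds times $\int\frac{\lambda}{r^3}<\infty$. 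As in Proposition \ref{omvclo}, the discontinuities of $\frac{\Omega_{,v}}{\Omega}$ along $v=T+r^\star\left(T,r_K\right)$ and of $\frac{\Omega_{,u}}{\Omega}$ along $u=T-r^\star_K$ cause no trouble, since one only ever integrates the continuous quantities $\partial_u\frac{\Omega_{,v}}{\Omega}$ and $\partial_v\frac{\Omega_{,u}}{\Omega}$ along null lines. Finally, the asymptotic region $r^\star\ge\frac{9}{10}t$ is reached by integrating (\ref{omegaevol}) and (\ref{omegaevol2}) outward from $r^\star=\frac{9}{10}t$, where $t\sim r\sim r^\star$, using the $r$-decay of $B$ (Proposition \ref{decayinr2}) and $\theta$ (Corollary \ref{thetardece}) and the hypothesis on $\frac{\zeta}{\nu}$, with $t$ freely replaced by $r$ as in Corollary \ref{omuomvdecr}. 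The one genuinely delicate point is bookkeeping: one must ensure that in each subregion --- $r^\star\le r^\star_{cl}$; $r^\star_g\le r^\star\le\frac{9}{10}t$; $r^\star\ge\frac{9}{10}t$ --- the correct source of smallness is used, namely (\ref{intebound2}) versus the energy decay of Proposition \ref{decayfromK} for the $\zeta$-integrals and bootstrap assumption \ref{boot4} for the mass increment, so that the two new error terms $\frac{\zeta^2}{r^2\nu}$ and $\frac{\zeta\theta}{r^3}$, which are the whole reason the hypothesis is imposed, land at the claimed order $\frac{\sqrt M}{v_+^2}$, resp. $\frac{\sqrt M}{t^2}$, rather than merely $\frac{\sqrt M}{v_+}$. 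That is the main --- though essentially routine --- obstacle.
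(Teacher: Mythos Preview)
Your proposal is correct and matches the paper's own proof, which consists of the single instruction to ``revisit the proof of Propositions \ref{omvcent} and \ref{omucent}'' together with the remark that the constant in the $u$-estimate depends on the weight of $\frac{1}{1-\mu}$ at $r^\star_g$. You have carried out precisely this revisitation in detail, correctly identifying that the new hypothesis on $\frac{\zeta}{\nu}$ upgrades the one-sided seed estimate (\ref{gstv1}) to a two-sided one and improves the cross-term $\frac{\theta}{\kappa}\frac{\zeta}{\nu}\frac{\lambda}{r^3}$ in (\ref{omegaevol2}) from $O(v_+^{-1})$ to $O(v_+^{-2})$.
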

\begin{proof}
Revisit the proof of Propositions \ref{omvcent} 
and \ref{omucent}. Note that the constant in the $u$-estimate 
of (\ref{hepp1}) improves by moving away from the horizon since it
depends on the weight of $\frac{1}{1-\mu}$ on $r^\star_g$.
\end{proof}
\section{The vectorfield $Y$} \label{Ysection}
Recall the functions $\alpha$ and $\beta$ defined in section \ref{bootstrap}.
Close to the horizon we are going to apply the vector field
\begin{equation}
Y = \frac{2\alpha\left(r^\star\right)}{\Omega^2} \partial_u + 2 \beta\left(r^\star\right) \partial_v
\end{equation}
for which
\begin{equation}
Y^u = \frac{2\alpha}{\Omega^2} \textrm { \ \ \ \ \ } Y^v = 2\beta \textrm
{ \ \ \ \ \ } Y_u = -\beta \Omega^2 \textrm { \ \ \ \ \ }
Y_v = -\alpha \, .
\end{equation}
The calculations will be carried out in the 
Eddington Finkelstein coordinates defined 
in section \ref{Coordinates}. From (\ref{basicintegrand}) we 
derive the identity
%
%
\begin{eqnarray}
-T_{\mu \nu} \pi^{\mu \nu} 
-\left(\nabla^\beta T_{\beta \delta} \right)Y^\delta =
-\frac{2\left(\partial_u B\right)^2}{\Omega^4}
 \left(4 \alpha \frac{\Omega_{,v}}{\Omega} 
- \alpha^\prime \right) - 2\beta^\prime
 \frac{\left(\partial_v B\right)^2}{\Omega^2} \nonumber \\ 
 + \frac{1}{\Omega^2 r^2} \left(1-\frac{2}{3}\rho\right)
 \left(-\frac{1}{2} \alpha^\prime + \frac{\alpha \nu}{r} +
\frac{\beta \lambda \Omega^2}{r} + \frac{1}{2}
\beta^\prime \Omega^2 + 2 \beta
\Omega^2 \frac{\Omega_{,v}}{\Omega}\right) \nonumber \\ +  \frac{12}{\Omega^2 r}
 \left(\frac{1}{4\kappa} \alpha - \lambda \beta \right) \partial_u B \partial_v B
\, .
\end{eqnarray}
In a characteristic rectangle $\mathcal{R} =
\left[u_1,u_2\right] \times \left[v_1,v_2\right]$ the identity
\begin{eqnarray} \label{Yid}
F^Y_B \left(\{u_2\} \times \left[v_1, v_2\right] \right) + F^Y_B
\left(\left[u_1, u_2\right] \times \{v_2\} \right) \\ = I^Y_B
\left(\mathcal{R}\right) + F^Y_B \left(\{u_1\} \times \left[v_1,
  v_2\right] \right)+ F^Y_B \left(\left[u_1, u_2\right] \times \{v_1\} \right)
\end{eqnarray}
follows, with the boundary-terms given by
\begin{equation} \label{bF1}
\frac{1}{2\pi^2} F^Y_B \left(\{u\} \times \left[v_1, v_2\right]\right) =
2 \int_{v_1}^{v_2}  \left(\beta \left(\partial_v
B\right)^2  + \frac{1}{4r^2}  \left(1-\frac{2}{3}\rho\right) \alpha
 \right) r^3 dv  \, ,
\end{equation}
\begin{equation} \label{bF2}
\frac{1}{2\pi^2} F^Y_B \left(\left[u_1, u_2\right] \times \{ v \} \right) =
2 \int_{u_1}^{u_2} \left(\frac{\alpha}{\Omega^2}
\left(\partial_u B\right)^2 + \frac{\beta \Omega^2}{4r^2}
\left(1-\frac{2}{3}\rho\right)  \right) r^3 du
\end{equation}
and the spacetime-term
\begin{eqnarray}
\frac{1}{2\pi^2} I^Y_B \left(\mathcal{R}\right) = \int_{v_1}^{v_2}\int_{u_1}^{u_2}
 \Bigg(-T_{\mu \nu} \pi^{\mu \nu} -
\left(\nabla^\beta T_{\beta \delta} \right) Y^\delta\Bigg) \frac{1}{2} \Omega^2 r^3 du dv
\nonumber \\ = \int_{v_1}^{v_2}\int_{u_1}^{u_2}
 \Bigg(-\Bigg[\frac{\left(\partial_u B\right)^2}{\Omega^2}
 \left(4 \alpha \frac{\Omega_{,v}}{\Omega} 
- \alpha^\prime \right) + \beta^\prime
 \left(\partial_v B\right)^2 \nonumber \\  +\frac{1}{2 r^2} \left(1-\frac{2}{3}\rho\right)
 \left(\frac{1}{2} \alpha^\prime - \frac{\alpha \nu}{r} -
\frac{\beta \lambda \Omega^2}{r} - \frac{1}{2}
\beta^\prime \Omega^2 - 2 \beta
\Omega^2 \frac{\Omega_{,v}}{\Omega}\right) \Bigg] + \nonumber \\
\frac{6}{r} \left(\frac{1}{4\kappa} \alpha - \lambda \beta \right)
\partial_u B \partial_v B \Bigg) r^3 du dv \, .
\end{eqnarray}
It will be useful to split the term into 
\begin{equation}
I^Y_B \left(\mathcal{R}\right) = -\tilde{I}^Y_B
\left(\mathcal{R}\right) + \widehat{I}^Y_B
\left(\mathcal{R}\right)
\end{equation}
where 
\begin{eqnarray} \label{posY}
\frac{1}{2\pi^2} \tilde{I}^Y_B \left(\mathcal{R}\right) = \int_{v_1}^{v_2}\int_{u_1}^{u_2} \Bigg[\frac{\left(\partial_u B\right)^2}{\Omega^2}
 \left(4 \alpha \frac{\Omega_{,v}}{\Omega} 
- \alpha^\prime \right) + \beta^\prime
 \left(\partial_v B\right)^2 \nonumber \\  +\frac{1}{2 r^2} \left(1-\frac{2}{3}\rho\right)
 \left(\frac{1}{2} \alpha^\prime - \frac{\alpha \nu}{r} -
\frac{\beta \lambda \Omega^2}{r} - \frac{1}{2}
\beta^\prime \Omega^2 - 2 \beta
\Omega^2 \frac{\Omega_{,v}}{\Omega}\right) \Bigg]r^3 du dv 
\end{eqnarray}
and
\begin{equation}
\frac{1}{2\pi^2}\widehat{I}^Y_B
\left(\mathcal{R}\right) = \int_{v_1}^{v_2}\int_{u_1}^{u_2}
 \Big(\frac{6}{r} \left(\frac{1}{4\kappa} \alpha - \lambda \beta \right)
\partial_u B \partial_v B \Big) r^3 du dv  \, .
\end{equation}
With the choices of the functions $\alpha$ and $\beta$ made in section \ref{bootstrap}, the integral
$\tilde{I}^Y_B\left(\mathcal{R}\right)$ is non-negative for $r^\star \leq
r^\star_Y$ and moreover, using (\ref{bpcon}),
\begin{eqnarray} \label{tildconhat}
&&\hat{I}^Y_B
\left(r^\star \leq r^\star_Y\right) = 2\pi^2 \int_{v_1}^{v_2}\int_{u_1}^{u_2}
\int_{\mathbb{S}^3} \Big(\frac{6}{r} \left(\frac{1}{4\kappa} \alpha - \lambda \beta \right)
\partial_u B \partial_v B \Big) r^3 du dv \nonumber \\ 
&\leq& 2\pi^2 \int_{v_1}^{v_2}\int_{u_1}^{u_2}
 \frac{1}{2} \left(\frac{2\left(\partial_u
   B\right)^2}{\Omega^2}\frac{\left(\frac{1}{4\kappa} \alpha-\lambda \beta\right)^2
   }{r}+ \frac{18}{r} \Omega^2
 \left(\partial_v B\right)^2\right) r^3 du dv 
 \nonumber \\ 
&\leq& 2\pi^2 \int_{v_1}^{v_2}\int_{u_1}^{u_2}
 \frac{1}{2}
 \left(\frac{\left(\partial_u
   B\right)^2}{\Omega^2}\left(4\alpha \frac{\Omega_{,v}}{\Omega} - \alpha^\prime \right) + \beta^\prime
 \left(\partial_v B\right)^2\right) r^3 du dv 
 \nonumber \\ &\leq&\frac{1}{2} \tilde{I}^Y_B\left(r^\star \leq r^\star_Y\right)
\end{eqnarray}
holds in $r^\star \leq r^\star_Y$. We conclude by rewriting 
identity (\ref{Yid}) for a characteristic
rectangle with one boundary being the horizon:
\begin{eqnarray} \label{hozid}
F^Y_B \left(\{u_{hoz}\} \times \left[v_1, v_2\right] \right) + F^Y_B
\left(\left[u_1, u_{hoz} \right] \times \{v_2\} \right) + \tilde{I}^Y_B
\left(\mathcal{R}\right) \nonumber \\ = \hat{I}^Y_B
\left(\mathcal{R}\right) + F^Y_B \left(\{u_1\} \times \left[v_1,
  v_2\right] \right)+ F^Y_B \left(\left[u_1, u_{hoz} \right] \times
\{v_1\} \right) \, .
\end{eqnarray}

\section{The vectorfield X} \label{Xsection}
All calculations in this section are performed in the Eddington Finkelstein coordinate system defined in section \ref{Coordinates}.
\subsection{The basic identity}
The vector field $X$ is defined as
\begin{equation}
X = 2f\left(r^\star \right) \partial_u - 2f\left(r^\star \right) \partial_v
\end{equation}
for some function $f$ chosen below and with $u_J$ satisfying $u_J \geq t_1-r^\star_{cl}$. It will be applied in the region 
\begin{equation} \label{basicXreg}
\mathcal{D}^{r^\star_{cl},u_J}_{[t_1,t_2]} :=
	{}^{u_H=t_2-r^\star_{cl}}\mathcal{D}^{r^\star_{cl}, u_J}_{[t_1,t_2]}
\end{equation}
for some $r^\star_{cl}$ also chosen below. We note
\begin{equation}
X^u = 2f \textrm{ \ \ \ \ \ } X^v = -2f \textrm{ \ \ \ \ \ } X_u =
f \Omega^2 \textrm{ \ \ \ \ \ } X_v = -f \Omega^2 \, .
\end{equation}
From now on primes will denote a derivative with respect to
$r^\star$, hence $\partial_v f \left(r^\star\right) = \frac{1}{2}
f^\prime$ and $\partial_u f \left(r^\star\right) = -\frac{1}{2}
f^\prime$. From (\ref{basicintegrand}) using 
\begin{equation}
-\left(\partial_{r^\star} B\right)^2 = 2 \partial_u B \partial_v B -
\left(\partial_u B\right)^2 - \left(\partial_v B\right)^2 
\end{equation}
we compute
\begin{eqnarray} \label{idXint}
-T_{\mu \nu} \pi^{\mu \nu} - \left(\nabla^\beta
 T_{\beta \delta} \right) X^\delta = \frac{2}{\Omega^2} f^\prime
 \left(\partial_{r^\star} B\right)^2 + \nabla^\alpha B \nabla_\alpha B
 \left(-f^\prime - \frac{3}{r} \left(\lambda-\nu\right) f \right) \nonumber
 \\ +  \frac{1}{r^2}\left(1-\frac{2}{3}\rho \right) \left(-f^\prime
 - \frac{\lambda-\nu}{r}f + \frac{\left(\Omega^2\right)_{,u} -
 \left(\Omega^2\right)_{,v}}{\Omega^2}f \right) \, .
\end{eqnarray}
With the boundary terms
\begin{eqnarray}
\frac{1}{2\pi^2} \widehat{F}_B^X \left(t_i\right) = -2\int_{r^\star_{cl}}^{t-u_J}
 f \partial_t B \partial_{r^\star} B
\left(t_i,r^\star\right) r^3 dr^\star \nonumber \\ 
+\int_{t-r^\star_{cl}}^{t_2-r^\star_{cl}} \left[r^3 \left(\partial_u B\right)^2 
\left(2f\right) + \frac{r\Omega^2}{4} 
\left(1-\frac{2}{3}\rho\right)\left(-2f\right) \right] du
\end{eqnarray}
and
\begin{equation}
\frac{1}{2\pi^2} \hat{H}^X_{u_H} = \int_{v_1}^{v_2} \left[r^3
  \left(\partial_v B \right)^2 \left(-2f\right) + \frac{r \Omega^2}{4}
  \left(1-\frac{2}{3} \rho\right) \left(2f\right) \right] dv \, ,
\end{equation}
\begin{equation}
\frac{1}{2\pi^2} \hat{J}^X_{u_J} = \int_{2t_1-u_J}^{2t_2-u_J} \left[r^3
  \left(\partial_v B \right)^2 \left(-2f\right) + \frac{r \Omega^2}{4}
  \left(1-\frac{2}{3} \rho\right) \left(2f\right) \right] dv \, ,
\end{equation}
one can state the identity
\begin{equation} \label{idst}
\int_{\mathcal{D}_{[t_1,t_2]}^{r^\star_{cl},u_J}} \left[-T_{\mu \nu} \pi^{\mu \nu} - \left(\nabla^\beta
 T_{\beta \delta} \right) X^\delta\right] dVol = \widehat{F}^X_B
 \left(t_1\right) - \widehat{F}^X_B
 \left(t_0\right) + \hat{H}^X_{u_H} - \hat{J}^X_{u_J} \, .
\end{equation}
Let us turn to the spacetime integral on the left of (\ref{idst}) with
the integrand being given by (\ref{idXint}).
In view of definition (\ref{deltaB}) we can write 
\begin{equation} \label{sigmaB}
\nabla^\alpha B \nabla_\alpha B = \frac{1}{2} \Box B^2 +
\frac{4B}{3r^2}\left(e^{-8B}-e^{-2B} \right) = \frac{1}{2} \Box B^2
-8\frac{B^2}{r^2} + \frac{1}{r^2}\varphi_2\left(B\right)
\end{equation}
and the integrand (\ref{idXint}) becomes
\begin{eqnarray} \label{cgfd}
-T_{\mu \nu} \pi^{\mu \nu} - \left(\nabla^\beta
 T_{\beta \delta} \right) X^\delta = \frac{2}{\Omega^2} f^\prime
 \left(\partial_{r^\star} B\right)^2 + \frac{1}{2} \Box B^2 
 \left(-f^\prime - \frac{3}{r} \left(\lambda-\nu\right) f \right) \nonumber
 \\ +  \frac{1}{r^2}\left(8B^2 \right) \left(
 2 \frac{\lambda-\nu}{r}f + \frac{\left(\Omega^2\right)_{,u} -
 \left(\Omega^2\right)_{,v}}{\Omega^2}f \right) \nonumber \\
+  \frac{1}{r^2}\left(\varphi_1\left(B\right) \right) \left(-f^\prime
 - \frac{\lambda-\nu}{r}f + \frac{\left(\Omega^2\right)_{,u} -
 \left(\Omega^2\right)_{,v}}{\Omega^2}f \right) \nonumber \\
\frac{\varphi_2\left(B\right)}{r^2} \left(-f^\prime - \frac{3}{r}
 \left(\lambda-\nu\right) f \right) \, .
\end{eqnarray}
Finally, we apply Green's theorem to the $\Box
B^2$-term.\footnote{cf.~the remarks in appendix \ref{reggree}}
Collecting the $B^2$-terms of (\ref{cgfd}) after the integration 
by parts we find
\begin{equation} \label{b2col}
-\frac{1}{2}B^2 \left[-32f \frac{\lambda-\nu}{r^3} +16\frac{f}{r^2}
 \frac{\left(\Omega\right)^2_{,v}-\left(\Omega\right)^2_{,u}}{\Omega^2} +
 \Box \left(f^\prime + \frac{3}{r} \left(\lambda-\nu\right) f \right)
 \right] \, .
\end{equation}
Since
\begin{equation}
\Box w(u,v) = \left(-\frac{4}{\Omega^2} \partial_u \partial_v -
\frac{6}{r} \frac{\nu}{\Omega^2} \partial_v - \frac{6}{r}
\frac{\lambda}{\Omega^2} \partial_u \right) w\left(u,v\right)
\end{equation}
and moreover $f$ depends only on $r^\star$ we arrive at
\begin{eqnarray}
\Box \left(f^\prime + \frac{3}{r} \left(\lambda-\nu\right)f \right) =
\frac{f^{\prime \prime \prime}}{\Omega^2} + \frac{6}{r}
\frac{\lambda-\nu}{\Omega^2} f^{\prime \prime} + f^\prime
\left[\frac{6}{\Omega^2} \left(\partial_{r^\star} \frac{\lambda-\nu}{r}\right) +
  \frac{9\left(\lambda-\nu\right)^2}{r^2\Omega^2} \right] \nonumber \\
f \left[-\frac{12}{\Omega^2} \partial_u \partial_v
  \left(\frac{\lambda-\nu}{r}\right) - \frac{18}{r}
  \frac{\nu}{\Omega^2}\partial_v
  \left(\frac{\lambda-\nu}{r}\right) - \frac{18}{r}
  \frac{\lambda}{\Omega^2}\partial_u
  \left(\frac{\lambda-\nu}{r}\right)  \right] \,. \nonumber
\end{eqnarray}
Computing the derivatives explicitly for the expression in the
square brackets of (\ref{b2col}) yields the identity
\begin{eqnarray}  \label{Cseven}
-32\frac{\lambda-\nu}{r^3} + \frac{16}{r^2}
\frac{\left(\Omega\right)^2_{,v}-\left(\Omega\right)^2_{,u}}{\Omega^2}
\nonumber \\ -\frac{12}{\Omega^2} \partial_u \partial_v
  \left(\frac{\lambda-\nu}{r}\right) - \frac{18}{r}
  \frac{\nu}{\Omega^2}\partial_v
  \left(\frac{\lambda-\nu}{r}\right) - \frac{18}{r}
  \frac{\lambda}{\Omega^2}\partial_u
  \left(\frac{\lambda-\nu}{r}\right)
\nonumber \\ =\left(\lambda-\nu\right) \left(-\frac{35}{r^3} - \frac{18\mu}{r^3}
\right) + \frac{1}{r^2}
\left(\frac{\Omega_{,v}}{\Omega}-\frac{\Omega_{u}}{\Omega}\right)\left(35+9\mu\right)+ \mathcal{I}_7 \left(B,\theta, \zeta \right)
\end{eqnarray}
with
\begin{eqnarray}  
{\cal{I}}_7 \left(B\right) &=& \frac{9}{r^4}\frac{\theta^2}{\kappa} +
\frac{36}{r^4} \frac{\lambda \zeta^2}{\Omega^2} +
\left(\rho-\frac{3}{2}\right)\frac{1}{r^3}
\left[-14\left(\lambda-\nu\right) + 8r\left(\frac{\Omega_{,v}}{\Omega}
  - \frac{\Omega_{,u}}{\Omega} \right)\right] \nonumber \\
&-& \frac{16}{r^\frac{7}{2}} 
\left(e^{-2B}-e^{-8B}\right) \left(\theta-\zeta\right) \, .
\end{eqnarray} 
We summarize the remaining error-terms as
\begin{equation}
{\cal{I}}_8 \left(B\right)= f\Bigg(
-\frac{\varphi_1\left(B\right)}{B^2r^2}\frac{\left(\Omega\right)^2_{,v}-\left(\Omega\right)^2_{,u}}{\Omega^2}  - 
\frac{\lambda-\nu}{r^3} \left(\frac{\varphi_1\left(B\right) +
  3\varphi_2\left(B\right)}{B^2}\right) \Bigg) -f^\prime
\frac{\varphi_1\left(B\right)+\varphi_2\left(B\right)}{B^2 r^2}  \nonumber
\end{equation}
and read off the pointwise estimate (cf.~Corollary \ref{Bcor})
\begin{equation} \label{error78}
\Big|{\cal{I}}_7 \left(B\right)\Big| + \Big|{\cal{I}}_8 \left(B\right)\Big| \leq C\left(\epsilon\right)\frac{\sqrt{M}}{r^4} \, .
\end{equation}
Taking care of the boundary terms arising from the application of
Green's identity we can finally state the identity (cf.~equation (\ref{finidg}))
\begin{equation} \label{Xbasicidentity}
I^X_B \left(\mathcal{\mathcal{D}}_{[t_1,t_2]}^{r^\star_{cl},u_J}\right) = F^X_B \left(t_2\right) - F^X_B
\left(t_1\right) + H^X_{u_H} - J^X_{u_J}
\end{equation}
with the renormalized bulk-term 
\begin{eqnarray} \label{IXB}
I_{B}^X \left(\mathcal{\mathcal{D}}_{[t_1,t_2]}^{r^\star_{cl},u_J}\right) = \int_{\mathcal{D}_{[t_1,t_2]}^{r^\star_{cl},u_J}} \Bigg\{\frac{2}{\Omega^2}
f^\prime \left(\partial_{r^\star} B\right)^2 + \nonumber \\
- \frac{B^2}{2} \Bigg[ \frac{f^{\prime \prime \prime}}{\Omega^2} +
  \frac{6}{r}\left(\frac{\lambda-\nu}{\Omega^2}\right) f^{\prime
    \prime}
+ f^\prime \left(\frac{6}{\Omega^2} \left(\partial_{r^\star} \frac{\lambda-\nu}{r} \right) +
  \frac{9}{r^2}\frac{\left(\lambda-\nu\right)^2}{\Omega^2} \right) \nonumber \\
+ f \Bigg(
\left(\lambda-\nu\right) \left(-\frac{35}{r^3} - \frac{18\mu}{r^3}
\right) + \frac{1}{r^2}
\left(\frac{\Omega_{,v}}{\Omega}-\frac{\Omega_{,u}}{\Omega}\right)\left(35+9\mu\right) \Bigg)\Bigg] \Bigg\} dVol \nonumber \\
+ \int_{\mathcal{D}_{[t_1,t_2]}^{r^\star_{cl},u_J}} B^2 \left[-\frac{1}{2}{\cal{I}}_7 \left(B\right) + {\cal{I}}_8 \left(B\right)\right] dVol \, ,
\end{eqnarray}
the new boundary terms
\begin{eqnarray} \label{XFterms}
F^X_B \left(t\right) &=& \hat{F}^X_B \left(t\right) - \int_{r^\star_{cl}}^{t-u_J} \int_{\mathbb{S}^3} 
\left(f^\prime + \frac{3}{r} \left(\lambda-\nu\right)f \right)
\left(\partial_t B \right) B \left(t,r^\star\right) r^3  dr^\star
dA_{\mathcal{S}^3} \nonumber \\
&+& \int_{r^\star_{cl}}^{t-u_J} \int_{\mathbb{S}^3} 
\frac{1}{2}\left(\partial_t \left(\frac{3}{r} \left(\lambda-\nu\right)\right)f
  \right)B^2 \left(t_1,r^\star\right) r^3 dr^\star dA_{\mathcal{S}^3}
  \nonumber \\
 &-& \int_{t-r^\star_{cl}}^\infty \int_{\mathbb{S}^3} 
\left(f^\prime + \frac{3}{r} \left(\lambda-\nu\right)f \right)
\left(\partial_u B \right) B \left(u,t+r^\star\right) r^3  du
dA_{\mathcal{S}^3} \nonumber \\
&+& \int_{t-r^\star_{cl}}^\infty \int_{\mathbb{S}^3} 
\frac{1}{2}\left(\partial_u \left(f^\prime + \frac{3}{r} \left(\lambda-\nu\right)\right)f
  \right)B^2 \left(u,t+r^\star_{cl}\right) r^3 du dA_{\mathcal{S}^3}
  \, ,  \nonumber \\
\end{eqnarray}
the horizon terms
\begin{eqnarray} \label{XHterms}
H^X_{u_H} = \hat{H}^X_{u_H}  &-&
\int_{t_1+r^\star_{cl}}^{t_2+r^\star_{cl}} \left[B\partial_vB \left(f^\prime
  + \frac{3}{r} \left(\lambda-\nu\right)f\right) \right]r^3
\left(u_{hoz},v\right) dv \nonumber \\
  &+& \int_{t_1+r^\star_{cl}}^{t_2+r^\star_{cl}} \left[\frac{B^2}{2} \partial_v \left(f^\prime + \frac{3}{r} \left(\lambda-\nu\right)f\right) \right]r^3
\left(u_{hoz},v\right) dv 
\end{eqnarray}
and the $J$-terms
\begin{eqnarray} \label{XJterms}
J^X_{u_J} = \hat{J}^X_{u_J}  &-&
\int_{2t_1-u_J}^{2t_2-u_J} \left[B\partial_vB \left(f^\prime
  + \frac{3}{r} \left(\lambda-\nu\right)f\right) \right]r^3
\left(u_{J},v\right) dv \nonumber \\
  &+& \int_{2t_1-u_J}^{2t_2-u_J} \left[\frac{B^2}{2} \partial_v
  \left(f^\prime + \frac{3}{r} \left(\lambda-\nu\right)f\right)
  \right]r^3 \left(u_{J},v\right) dv \, .
\end{eqnarray}
\subsection{Analysing the X-bulk-term}
\subsubsection{Borrowing from the derivative-term}
We would like the spacetime term (\ref{IXB}) to have a sign. 
To achieve this we borrow from the term containing a
derivative. Define
\begin{equation}
t^\prime = \frac{t_1+t_2}{2} \textrm{ \ \ \ \ and \ \ \ \ }
r^\star_{x} = r^\star_{cl} + \frac{t_1-t_2}{2} \, ,
\end{equation}
\begin{equation}
r^\star_x \left(t\right) = \left\{ \begin{array}{ll}
r^\star_{cl} + t_1 - t & \textrm{ for $t_1 \leq t \leq t^\prime$} \\
r^\star_{cl} + t - t_2 & \textrm{ for $t^\prime \leq t \leq t_2$} \, .
\end{array} \right.
\end{equation}
\[
\input{xborrow.pstex_t}
\]
and compute\footnote{Again care is necessary in the integration by
  parts because of the differentiability of the coordinate
  system. In any case it is sufficient to note that the integrands of
  the boundary terms are continuous and that the integrand of 
the bulk term is piecewise continuous (all terms except 
the $\xi^{\prime}$-term are in fact continuous everywhere).}
\begin{eqnarray}
\int_{t_1}^{t_2} dt \int_{r^\star_{x}\left(t\right)}^{t-u_J} \frac{f^\prime}{\Omega^2} \left(\partial_{r^\star}
B \right)^2 r^3 \Omega^2 dr^\star = \int_{t_1}^{t_2} dt  \int_{r^\star_{x}\left(t\right)}^{t-u_J} \frac{f^\prime}{\Omega^2} \left(\partial_{r^\star}
B + \xi B \right)^2 r^3 \Omega^2 dr^\star 
\nonumber \\
+\int_{t_1}^{t_2} dt \int_{r^\star_{x}\left(t\right)}^{t-u_J} B^2 \left(\frac{f^{\prime \prime}
  \xi}{\Omega^2} + \frac{\xi^\prime f^\prime}{\Omega^2} +
\frac{3}{r} \frac{\lambda-\nu}{\Omega^2}
f^\prime \xi \right) r^3 \Omega^2 dr^\star \nonumber
\\ 
-\int_{t_1}^{t_2} dt \int_{r^\star_{x}\left(t\right)}^{t-u_J} B^2\left(\frac{f^\prime}{\Omega^2}
\xi^2 \right) r^3 \Omega^2 dr^\star + J^{X,u_J}_{error}  +  H^{X,v_1=t_1+r^\star_{cl}}_{error}  + H^{X,u_2=t_2-r^\star_{cl}}_{error} \nonumber
\end{eqnarray}
for some function $\xi$ chosen in (\ref{xidef}). The
boundary-terms are
\begin{equation} \label{JXerr}
J^{X,u_J}_{error} = -\int_{t_1}^{t_2} f^\prime \xi B^2 r^3
\left(t,t-u_J\right) dt = -\int_{2t_1-u_J}^{2t_2-u_J} f^\prime \xi
B^2 r^3 \left(u_J,v\right) dv \, ,
\end{equation}
\begin{equation} \label{HXerr}
H^{X,v_1=t_1+r^\star_{cl}}_{error} =  \int_{t_1}^{t^\prime} dt
f^\prime \xi B^2 r^3 \left(t,v_1-t\right) =
\int_{t_1-r^\star_{cl}}^{t_2-r^\star_{cl}} f^\prime \xi B^2 r^3
\left(u,v_1\right) du 
\end{equation}
and 
\begin{equation} \label{HXerr2}
H^{X,u_2=t_2-r^\star_{cl}}_{error}  =  \int_{t^\prime}^{t_2} dt
f^\prime \xi B^2 r^3 \left(t,t-u_2\right) =
\int_{t_1+r^\star_{cl}}^{t_2+r^\star_{cl}}  f^\prime \xi B^2 r^3
\left(u_2,v\right)dv \, .
\end{equation}
To keep the notation clean we write $M=m(T,r^\star=0)$ in 
this section. For a sufficiently large constant $\sigma$ we define 
the shifted coordinate $x$
\begin{equation} \label{xshift}
x = r^\star - \sigma - \sqrt{M} \,.
\end{equation}
We choose
\begin{equation} \label{xidef}
\xi = \frac{3}{2} \frac{\lambda-\nu}{r} - \frac{nx}{x^2 + \sigma^2}
\end{equation}
for some $n \in \left(\frac{1}{2},\infty\right)$
%
%
%
%
%
%
from which
\begin{equation}
-\xi^\prime + \xi^2 - \frac{3}{r}\left(\lambda - \nu\right) \xi = 
-\frac{9}{4} \frac{\left(\lambda - \nu \right)^2}{r^2} +
\frac{x^2 \left(n^2-n\right) + n\sigma^2}{\left(\sigma^2+x^2\right)^2} 
- \frac{3}{2} \left(\partial_{r^\star}
\left(\frac{\lambda-\nu}{r}\right)\right) \nonumber
\end{equation}
follows. Hence the integral (\ref{IXB}) can be expressed as
\begin{eqnarray} \label{IXBst}
I^X_B \left(\mathcal{\mathcal{D}}_{[t_1,t_2]}^{r^\star_{cl},u_J}\right) = \int_{\mathcal{\mathcal{D}}_{[t_1,t_2]}^{r^\star_{cl},u_J}} \frac{2f^\prime}{\Omega^2} 
\left(\partial_{r^\star} B + \xi B \right)^2 dVol \nonumber \\ 
- \frac{1}{2} \int_{\mathcal{D}_1} \Bigg[\frac{f^{\prime
      \prime \prime}}{\Omega^2} + f^{\prime \prime}
  \left(\frac{4nx}{\Omega^2\left(\sigma^2+x^2\right)}\right)+
  f^\prime \left(\frac{4x^2 \left(n^2-n\right) + 4n \sigma^2}{\left(\sigma^2+x^2\right)^2\Omega^2}\right) \nonumber \\
+f 
  \left[\left(\lambda-\nu\right) \left(-\frac{35}{r^3} - \frac{18\mu}{r^3}
\right) + \frac{1}{r^2}
\left(\frac{\Omega_{,v}}{\Omega}-\frac{\Omega_{,u}}{\Omega}\right)\left(35+9\mu\right) \right]  \Bigg] B^2 \ dVol \nonumber \\ + \int_{\mathcal{D}_1}
\left(-\frac{1}{2} {\cal{I}}_7 \left(B\right) + {\cal{I}}_8 \left(B\right) \right) B^2
dVol 
\nonumber \\
+ J^{X,u_J}_{error} + H^{X,v=t_1+r^\star_{cl}}_{error} + H^{X,u=t_2-r^\star_{cl}}_{error} 
\end{eqnarray}
which we will write shorthand as
\begin{equation}
I^X_B \left(\mathcal{\mathcal{D}}_{[t_1,t_2]}^{r^\star_{cl},u_J}\right) = \bar{I}^X_B
\left(\mathcal{\mathcal{D}}_{[t_1,t_2]}^{r^\star_{cl},u_J}\right) + J^{X,u_J}_{error} + H^{X,v=t_1+r^\star_{cl}}_{error} + H^{X,u=t_2-r^\star_{cl}}_{error} 
\end{equation}
where
\begin{eqnarray} \label{IXBstsh}
\bar{I}^X_B \left(\mathcal{\mathcal{D}}_{[t_1,t_2]}^{r^\star_{cl},u_J}\right) = \int \Big\{ \frac{2f^\prime}{\Omega^2} 
\left(\partial_{r^\star} B + \xi B \right)^2 + B^2 \left[F + f \cdot
  g -\frac{1}{2} {\cal{I}}_7 \left(B\right)
  + {\cal{I}}_8 \left(B\right) \right] \Big\} dVol 
\end{eqnarray}
with the identifications
\begin{equation}
F = -\frac{1}{2\Omega^2} \left(f^{\prime \prime \prime} +
\frac{4nxf^{\prime \prime}}{\sigma^2+x^2} + f^\prime \left(\frac{4x^2
  \left(n^2-n\right) + 4n
  \sigma^2}{\left(\sigma^2+x^2\right)^2}\right)\right) \, ,
\end{equation}
\begin{equation} \label{gterm}
g = -\frac{1}{2} \left[\left(\lambda-\nu\right)\left(-\frac{35}{r^3} - \frac{18\mu}{r^3}
\right) + \frac{1}{r^2}
\left(\frac{\Omega_{,v}}{\Omega}-\frac{\Omega_{u}}{\Omega}\right)\left(35+9\mu\right)\right] \, .
\end{equation}

\subsubsection{The choice of $f$} \label{rstarzero}
By Propositions \ref{kgom}, \ref{omvcent}  and \ref{omucent} we
have the bounds
\begin{equation}
\lambda - \nu = \left(1-\mu\right) + C\left(r^\star_{cl},c\right)
\frac{M}{t^2} \, ,
\end{equation}
\begin{equation}
\frac{\Omega_{,v}}{\Omega}-\frac{\Omega_{,u}}{\Omega} \leq \frac{\mu}{r} +
\frac{C\left(\epsilon\right)}{t} \, ,
\end{equation}
\begin{equation}
\Big|\frac{\Omega_{,v}}{\Omega}-\frac{\Omega_{,u}}{\Omega}\Big| \leq \frac{\mu}{r} + \frac{1}{\sqrt{M}}
C\left(\epsilon\right) 
\end{equation}
in the region $r^\star_{cl} \leq r^\star \leq \frac{9}{10}t$. This implies that 
\begin{equation} \label{goneside}
g \geq \frac{1}{2r^7} \left(35r^4-104M r^2 -108 M^2 \right) + \frac{C\left(\epsilon\right)}{t r^2}
\end{equation}
and that
\begin{equation}
\Big| g - \frac{1}{2r^7} \left(35r^4-104M r^2 -108 M^2 \right) \Big|
\leq \frac{C\left(\epsilon\right)}{\sqrt{M} r^2} \, ,
\end{equation}
where $M=m\left(T,r^\star=0\right)$.
It is apparent (note Proposition \ref{omvclo} in particular) 
that the expression $g$ is negative close to the
horizon, positive far away from it with a single zero at
some $r=r_{zero}\left(t\right)$ for which the estimate
\begin{equation}
\Big|r_{zero} - \sqrt{2M} \sqrt{\frac{1}{35} \left( 26+ \sqrt{1621} 
  \right)}\Big| \leq C\left(\epsilon\right) \sqrt{M}
\end{equation}
can be derived. This in turn implies an estimate (with error of order
$\frac{1}{t}$) for the $r^\star$ value along that curve 
via identification (\ref{rstarr}). It follows that 
$g$ changes sign in a small interval around $r^\star_{zero}$. 
For future calculation the estimate 
\begin{equation}
-\frac{1}{6}\sqrt{M} < r^\star_{zero} <  -\frac{1}{10}\sqrt{M}
\end{equation}
for $r^\star_{zero}$ will be sufficient.

We finally construct the function $f=f\left(x\right) =
f\left(r^\star-\sigma - \sqrt{M}\right)$ by setting
\begin{equation}
f\left(x_{zero}\right)=f\left(-\sigma - \sqrt{M} - r^\star_{zero}\right)=0
\end{equation}
and
\begin{equation}
f^\prime = \frac{M^{n-\frac{1}{2}}}{\left(x^2 + \sigma^2\right)^n} \, .
\end{equation}
Note that $f$ will be bounded for $n \in \left(\frac{1}{2},\infty\right)$. 
Later we will set $n=\frac{3}{2}$. We compute
\begin{equation}
f^{\prime \prime} = M^{n-\frac{1}{2}}\frac{-2xn}{\left(x^2+\sigma^2\right)^{n+1}} \, ,
\end{equation}
\begin{equation}
f^{\prime \prime \prime} = M^{n-\frac{1}{2}}
\frac{2n\left(x^2+2nx^2-\sigma^2\right)}{\left(x^2+\sigma^2\right)^{n+2}}
\end{equation}
to find
\begin{eqnarray}
F=M^{n-\frac{1}{2}}\frac{n}{\Omega^2}
\frac{x^2-\sigma^2}{\left(\sigma^2+x^2\right)^{n+2}}  \, .
\end{eqnarray}

We will now show that there exists a positive
constant $c\left(\sigma\right) > 0$ such that in (\ref{IXBstsh})
\begin{equation} \label{posFfg}
F + f \cdot g -\frac{1}{2}{\cal{I}}_7 \left(B\right) + {\cal{I}}_8 \left(B\right) \geq \frac{1}{r^3}c\left(\sigma\right)
\end{equation}
holds in the region of integration. Note that the above choice 
of $f$ ensures that $f \cdot g$ is positive everywhere,
except for an $\epsilon$-small correction-term. In particular, in
$\left[r^\star_{zero}-\frac{1}{10}\sqrt{M},
  r^\star_{zero}+\frac{1}{10}\sqrt{M}\right]$ we have
\begin{equation} \label{aroundzer}
f g \geq - \frac{1}{M^\frac{3}{2}} C\left(\epsilon\right)
\end{equation}
and outside of this set, using (\ref{error78}) and (\ref{goneside})
\begin{equation} \label{awayzer}
\frac{1}{8}f g -\frac{1}{2} {\cal{I}}_7 \left(B\right) + {\cal{I}}_8 \left(B\right) 
\geq \frac{c_2 \left(\sigma\right)}{r^3} \, .
\end{equation}
\subsubsection{Estimating $\bar{I}^X_B\left(\mathcal{D}_{[t_1,t_2]}^{r^\star_{cl},u_J}\right)$}
The aim is to establish (\ref{posFfg}). 
We will do the computations in the shifted $x$-coordinate (\ref{xshift}).
\[
\input{regXpos.pstex_t}
\]
\paragraph{The region $x \leq -\sigma$ and $x \geq \sigma$}
Note that  $F$ is already non-negative for  $x \leq -\sigma$ and $x \geq \sigma$.
 In the subinterval $\left[x_{zero}-\frac{1}{10}\sqrt{M},
    x_{zero}+\frac{1}{10}\sqrt{M}\right]$, the only subset where $f \cdot
 g$ might cause problems, we have the stronger bound
\begin{equation}
F \geq \frac{1}{M^\frac{3}{2}} c_1\left(\sigma\right) > 0 \, ,
\end{equation}
which upon combination with (\ref{aroundzer}) yields
\begin{equation}
F + f g -\frac{1}{2} {\cal{I}}_7 \left(B\right) + {\cal{I}}_8 \left(B\right) \geq
\frac{1}{r^3} c\left(\sigma\right)
\end{equation}
for the regions under consideration.
\paragraph{The region $[-\sigma,\sigma]$}
We shall show that we can dominate the term $F$ in the region
  $[-\sigma,\sigma]$ by the term $\frac{7}{8} f \cdot g$ in
  (\ref{IXBstsh}).\footnote{Note that we save $\frac{1}{8}fg$ to get
  the positivity of (\ref{awayzer}).} In conjunction with
  (\ref{awayzer}) this will yield the desired result (\ref{posFfg}). 
By Proposition \ref{kgom}
\begin{equation}
|\Omega^2-\left(1-\mu\right)| = |\left(4\gamma \kappa-1\right)
 \left(1-\mu\right)| \leq C\left(r^\star_{cl},c\right)\frac{M}{t^2}
\end{equation}
holds in the region $-\sigma \leq x \leq \sigma$ and also $f$ is
positive there.\footnote{Note that we can use the aforementioned 
Proposition because the region under consideration 
lies in $r^\star \geq r^\star_{cl}$.} 
In view of (\ref{goneside}) it suffices to show
\begin{equation} \label{inequality}
M^{n-\frac{1}{2}} \frac{n}{\left(1-\mu\right)}
\frac{\sigma^2-x^2}{\left(\sigma^2+x^2\right)^{2+n}} \leq \frac{1}{2r^3}
\left(35 - 104\frac{M}{r^2} - 108\frac{M^2}{r^4}\right) f \frac{7}{8}
\end{equation}
in the region $-\sigma \leq x \leq \sigma$. There we can estimate
\begin{equation}
f\left(x\right) = \int_{-\sigma-\sqrt{M}+r^\star_{zero}}^x f^\prime >
\int_{-\sigma-\sqrt{M}}^x f^\prime \geq M^{n-\frac{1}{2}}\frac{x+\sigma +
  \sqrt{M}}{\left(2\sigma^2 + 2 \sigma \sqrt{M} + M\right)^n}
\end{equation}
such that it suffices to establish
\begin{equation} \label{Xineq}
\frac{n}{\left(1-\mu\right)}
\frac{\sigma^2-x^2}{\left(\sigma^2+x^2\right)^{2+n}} < \frac{1}{2r^3}
\left(35 - \frac{104M}{r^2} - 108\frac{M^2}{r^4}\right) \frac{x+\sigma +
  \sqrt{M}}{\left(2\sigma^2 + 2 \sigma \sqrt{M} + M\right)^n}
\frac{7}{8} \, .
\end{equation}
\emph{Part I.} Consider first the region $-\sigma < x \leq -\frac{2}{3}
\sigma$ translating to $\sqrt{M} < r^\star \leq \frac{1}{3} \sigma
+ \sqrt{M}$. The lower bound on $r^\star$ implies a lower bound on
$r$ by identification (\ref{rstarr}). In particular, 
$r > \sqrt{M} \left(2+\frac{3}{5}\right)$ in that region and hence
\begin{equation}
1-\mu > \frac{7}{10} \textrm{ \ \ \ \ and \ \ \ \ } \frac{1}{1-\mu} < \frac{10}{7}
\end{equation}
as well as
\begin{equation}
35 - \frac{104M}{r^2} - 108\frac{M^2}{r^4} > 17 
\end{equation}
hold in the region under consideration. Consequently, 
for $\sigma$ sufficiently large we have
\begin{eqnarray} \label{lhs}
\frac{n}{\left(1-\mu\right)}
\frac{\sigma^2-x^2}{\left(\sigma^2+x^2\right)^{2+n}} &<& n
 \frac{10}{7} \frac{\sigma - x}{\sqrt{\sigma^2 + x^2}} \frac{\sigma +
  x}{\left(\frac{13}{9} \sigma^2\right)^{3/2+n}} \nonumber \\
&\leq& n \sqrt{2}
 \frac{10}{7} \left(\frac{9}{13}\right)^{\frac{3}{2}+n}
 \frac{x+\sigma}{\sigma^{3+2n}} 
\end{eqnarray}
The upper bound on $r^\star$ can be exploited for large $\sigma$ to give
\begin{equation}
r \leq \frac{5}{12}\sigma \, .
\end{equation} 
Again choosing $\sigma$ sufficiently large, this gives rise to the estimate
\begin{eqnarray} \label{rhs}
\frac{1}{2r^3}
\left(35 - \frac{104M}{r^2} - 108\frac{M^2}{r^4}\right) \frac{x+\sigma +
  \sqrt{M}}{\left(2\sigma^2 + 2 \sigma \sqrt{M} + M\right)^n}
\frac{7}{8} \nonumber \\ \geq \frac{17}{2}
\left(\frac{12}{5\sigma}\right)^3
\frac{x+\sigma}{\left(3\sigma^2\right)^{n}} \frac{7}{8} 
\geq \frac{116}{3^n}  \frac{x+\sigma}{\sigma^{2n+3}} \frac{7}{8} \, .
\end{eqnarray}
Comparing (\ref{rhs}) and (\ref{lhs}) for $n=\frac{3}{2}$ we have
shown the desired inequality (\ref{Xineq}) in the region 
under consideration. 
\\ \\
\emph{Part II.}
 Consider now the region $-\frac{2}{3}\sigma \leq x \leq \sigma$,
which translates to $\frac{1}{3}\sigma + \sqrt{M} \leq r^\star \leq 2\sigma
+ \sqrt{M}$. We can choose $\sigma$ so large that 
\begin{equation}
1-\mu \geq \frac{6}{7} 
\end{equation}
and 
\begin{equation}
\frac{r}{x+\sigma} \leq \left(\frac{7}{6}\right)^\frac{1}{3} 
\end{equation}
hold in the region under consideration.
We deduce that for large $\sigma$ 
\begin{equation}
\frac{7}{8}\frac{1}{2r^3} \left(35 - \frac{104m}{r^2} - 108\frac{m^2}{r^4}\right) 
\frac{x+\sigma + \sqrt{M}}{\left(2\sigma^2 + 2 \sigma \sqrt{M} +
  M\right)^n} \geq \frac{7}{8} 35\frac{2}{5}
\frac{x+\sigma}{\left(2 \sigma^2\right)^n \left(x+\sigma\right)^3}
\frac{6}{7} \nonumber
\end{equation}
and
\begin{equation}
\frac{n}{\left(1-\mu\right)}
\frac{\sigma^2-x^2}{\left(\sigma^2+x^2\right)^{2+n}} \leq \frac{7}{6}n
\frac{\left(\sigma-x\right)
\left(\sigma+x\right)}{\left(\sigma^2+x^2\right)^{2+n}} \, .  
\end{equation}
Hence it suffices to show that, for large $\sigma$, we have the 
inequality 
\begin{equation}
\frac{1}{9} n \left(2\sigma^2\right)^{n} \frac{\left(\sigma-x\right)
  \left(x+\sigma\right)^3}{\left(\sigma^2+x^2\right)^{2+n}} \leq 1 \, .
\end{equation}
For $n=\frac{3}{2}$ we obtain
\begin{equation}
\frac{1}{3}\sqrt{2} \frac{\left(\sigma-x\right)\left(x+\sigma\right)^3
  \sigma^3}{\left(\sigma^2+x^2\right)^\frac{7}{2}} < 1 \, ,
\end{equation}
which is shown to be true by elementary arguments. Namely, for $x<0$
we have
\begin{equation}
\frac{1}{3}\sqrt{2} \frac{\left(\sigma-x\right)\left(x+\sigma\right)^3
  \sigma^3}{\left(\sigma^2+x^2\right)^\frac{7}{2}} \leq
  \frac{1}{3}\sqrt{2}
  \frac{\left(\sigma-x\right)\sigma^6}{\left(\sigma^2+x^2\right)^\frac{7}{2}}
  \leq \frac{\sqrt{2}}{3}
  \frac{\sqrt{2\left(\sigma^2+x^2\right)}}{\left(\sigma^2+x^2\right)^\frac{7}{2}}
  \sigma^6 \leq \frac{2}{3} < 1
\end{equation}
and for $x \geq 0$ we have
\begin{eqnarray}
\frac{\sqrt{2}}{3} \frac{\left(\sigma-x\right)\left(x+\sigma\right)^3
  \sigma^3}{\left(\sigma^2+x^2\right)^\frac{7}{2}} \leq
  \frac{\sqrt{2}}{3} \frac{\sigma^4 \left(x+\sigma\right)^3
  }{\left(\sigma^2+x^2\right)^\frac{7}{2}} \leq \frac{\sqrt{2}}{3}
  \left(\frac{\sigma x +\sigma^2}{x^2+\sigma^2}\right)^3
  \frac{\sigma}{\sqrt{x^2 + \sigma^2}} \nonumber \\
\leq \frac{\sqrt{2}}{3}
  \left(\frac{\sigma x +\sigma^2}{x^2+\sigma^2}\right)^3 
\leq \frac{\sqrt{2}}{3} \left(\frac{1}{2}
  \left(1+\sqrt{2}\right)\right)^3 < \frac{2\sqrt{2}}{3} < 1 \, .
 \end{eqnarray}
This finally establishes that the integrand of $\bar{I}^X_B$ is
non-negative and vanishes if and only if $B=0$ everywhere.
\\ \\
{\bf Remark:} The minimum size of the constant $\sigma$ required for
the estimates above to work can be determined
explicitly. Choosing $\sigma = 40 \sqrt{M}$ for instance is large enough. 

\subsubsection{Summary}
We can write
\begin{equation} \label{summary}
\bar{I}^X_B \left(\mathcal{D}_{[t_1,t_2]}^{r^\star_{cl},u_J}\right) = I^X_B \left(\mathcal{D}_{[t_1,t_2]}^{r^\star_{cl},u_J}\right) -  J^{X,u_J}_{error} - H^{X,v=t_1+r^\star_{cl}}_{error} - H^{X,u=t_2-r^\star_{cl}}_{error} 
\end{equation}
and we have shown 
\begin{proposition} \label{B2bound}
The estimate
\begin{equation}
\int_{\mathcal{D}_{[t_1,t_2]}^{r^\star_{cl},u_J}} \frac{B^2}{r^3} dVol 
\leq C\left(\sigma\right) \bar{I}^X_B
\left(\mathcal{D}_{[t_1,t_2]}^{r^\star_{cl},u_J}\right) 
\end{equation}
holds.
\end{proposition}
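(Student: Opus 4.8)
The key structural fact established in the preceding subsections is the rewriting of the renormalized bulk term $I^X_B$ as in (\ref{IXBst})--(\ref{IXBstsh}): after borrowing from the derivative term via the function $\xi$ of (\ref{xidef}), we have
\[
\bar{I}^X_B\left(\mathcal{D}_{[t_1,t_2]}^{r^\star_{cl},u_J}\right) = \int \Big\{ \frac{2f^\prime}{\Omega^2}\left(\partial_{r^\star}B + \xi B\right)^2 + B^2\left[F + f\cdot g - \frac{1}{2}{\cal{I}}_7\left(B\right) + {\cal{I}}_8\left(B\right)\right]\Big\}\, dVol \, ,
\]
and the pointwise lower bound (\ref{posFfg}), namely $F + f\cdot g - \frac{1}{2}{\cal{I}}_7\left(B\right) + {\cal{I}}_8\left(B\right) \geq \frac{1}{r^3}c\left(\sigma\right)$, was just proven to hold throughout the region of integration (splitting into the regimes $x\leq -\sigma$, $x\geq \sigma$ and $-\sigma\leq x\leq \sigma$, the latter via Parts I and II). The plan is simply to drop the manifestly non-negative first term $\frac{2f^\prime}{\Omega^2}(\partial_{r^\star}B + \xi B)^2$ (recall $f^\prime = M^{n-1/2}(x^2+\sigma^2)^{-n} > 0$, and $\Omega^2 > 0$) and retain only the second term, giving
\[
\bar{I}^X_B\left(\mathcal{D}_{[t_1,t_2]}^{r^\star_{cl},u_J}\right) \geq \int B^2\left[F + f\cdot g - \frac{1}{2}{\cal{I}}_7\left(B\right) + {\cal{I}}_8\left(B\right)\right] dVol \geq c\left(\sigma\right)\int_{\mathcal{D}_{[t_1,t_2]}^{r^\star_{cl},u_J}} \frac{B^2}{r^3}\, dVol \, .
\]
Setting $C\left(\sigma\right) = 1/c\left(\sigma\right)$ then yields the claimed estimate.

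First I would record that all the ingredients are already in place: the identity (\ref{IXBstsh}) defining $\bar{I}^X_B$, the positivity of $f^\prime$ from its explicit formula in Section \ref{rstarzero}, and the pointwise bound (\ref{posFfg}) established across all three regions in the preceding subsubsection. Then the argument is a single chain of inequalities: non-negativity of the $(\partial_{r^\star}B+\xi B)^2$-contribution, followed by insertion of (\ref{posFfg}) into the $B^2$-coefficient. One minor point worth a sentence is that the measure $dVol = r^3\Omega^2\, dt\, dr^\star\, dA_{\mathbb{S}^3}$ is positive and the region $\mathcal{D}_{[t_1,t_2]}^{r^\star_{cl},u_J}$ is contained in $r^\star \geq r^\star_{cl}$, so the Propositions (\ref{kgom}, \ref{omvcent}, \ref{omucent}) invoked to obtain (\ref{goneside}) and hence (\ref{posFfg}) are indeed applicable throughout.

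I do not expect any real obstacle here — the Proposition is essentially a bookkeeping corollary that packages the hard analytic work of the ``choice of $f$'' and ``estimating $\bar{I}^X_B$'' subsubsections into a clean statement for later use. If anything, the one place requiring a word of care is making sure the constant $c\left(\sigma\right)$ in (\ref{posFfg}) is genuinely uniform over the region (independent of $t_1, t_2, u_J$ and of the coordinate system $\mathcal{C}_{\tilde{\tau}}$), which follows because $\sigma$ is fixed once and for all (the remark at the end of Section \ref{rstarzero} gives $\sigma = 40\sqrt{M}$), the error terms ${\cal{I}}_7, {\cal{I}}_8$ are controlled by $C(\epsilon) M^{1/2}/r^4$ uniformly by (\ref{error78}), and the decay estimates used are all stated uniformly in the bootstrap region. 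Thus the proof is: quote (\ref{IXBstsh}), discard the square term, apply (\ref{posFfg}), and absorb $c\left(\sigma\right)^{-1}$ into $C\left(\sigma\right)$.
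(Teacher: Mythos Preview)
Your proposal is correct and matches the paper's approach exactly: the proposition is stated without a separate proof in the paper, being an immediate consequence of the identity (\ref{IXBstsh}) together with the pointwise lower bound (\ref{posFfg}) and the non-negativity of $f^\prime$. One small correction: the region $\mathcal{D}_{[t_1,t_2]}^{r^\star_{cl},u_J}$ is \emph{not} entirely contained in $\{r^\star \geq r^\star_{cl}\}$ --- by (\ref{Dregion}) with $u_H = t_2 - r^\star_{cl}$, it includes a triangular piece reaching down to $r^\star = r^\star_{cl} - (t_2-t_1)/2$ --- but the paper asserts (\ref{posFfg}) ``in the region of integration'' and the supporting bounds (e.g.\ (\ref{omv3}), (\ref{omudec2})) extend to $r^\star \leq r^\star_{cl}$, so this does not affect the argument.
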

In the following subsection we are going to estimate the 
error boundary terms on the right hand side of (\ref{summary}). 
\subsection{Controlling the error-boundary-terms}
\begin{lemma} \label{erixc}
The error-terms (\ref{JXerr})-(\ref{HXerr2}) satisfy
\begin{equation} \label{borrower}
|J^{X,u_J}_{error} | \leq C \left(\sigma\right) \left(m
 \left(u_J,2t_2-u_J\right) - m \left(u_J,2t_1-u_J\right) \right) 
\end{equation} \, ,
\begin{equation}
H^{X,v_1=t_1+r^\star_{cl}}_{error}\ \geq  0 
\end{equation}
and
\begin{equation} \label{err3}
H^{X,u_2=t_2-r^\star_{cl}}_{error} \geq 0 \, .
\end{equation}
\end{lemma}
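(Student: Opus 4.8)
The proof is essentially a matter of reading off signs from the explicit choices of $f$, $f^\prime$ and $\xi$ made in Section \ref{rstarzero}, together with the a priori information already at our disposal. Recall that $f^\prime = M^{n-\frac12}/(x^2+\sigma^2)^n > 0$ everywhere (for $n \in (\frac12,\infty)$, in particular for $n=\frac32$), and that $\lambda - \nu = \partial_{r^\star} r = r_{,v}-r_{,u} > 0$ on the whole black hole exterior since $r_{,v}\ge 0$ and $r_{,u}<0$. Hence $\frac32\frac{\lambda-\nu}{r}>0$, and the sign of $\xi = \frac32\frac{\lambda-\nu}{r} - \frac{nx}{x^2+\sigma^2}$ is governed by the second term, i.e.\ by $-\mathrm{sgn}(x)$ with $x = r^\star-\sigma-\sqrt M$. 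As in the rest of this section, the integrands of all three boundary terms are continuous, so the integrations by parts underlying them are unproblematic despite the coordinate system being only piecewise $C^2$.

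For the two $H$-terms I would first locate the null segments on which they are taken. The term $H^{X,v_1=t_1+r^\star_{cl}}_{error}$ is integrated on $\{v=v_1\}$ with $u\in[t_1-r^\star_{cl},t_2-r^\star_{cl}]$, where $r^\star = \frac12(v_1-u)\le r^\star_{cl}$, and $H^{X,u_2=t_2-r^\star_{cl}}_{error}$ on $\{u=u_2\}$ with $v\in[t_1+r^\star_{cl},t_2+r^\star_{cl}]$, where $r^\star = \frac12(v-u_2)\le r^\star_{cl}$. Since $r^\star_{cl}=r^\star_Y-2\sqrt M$ with $r^\star_Y<0$ (e.g.\ by $(\ref{psicond2})$), one has $r^\star \le r^\star_{cl} < 0 < \sigma + \sqrt M$ on both segments, so $x<0$ there, hence $-\frac{nx}{x^2+\sigma^2}>0$ and therefore $\xi>0$. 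Combined with $f^\prime>0$, $B^2\ge 0$ and $r^3>0$, this makes the integrands of both $H$-terms pointwise non-negative, which gives $H^{X,v_1=t_1+r^\star_{cl}}_{error}\ge 0$ and $(\ref{err3})$.

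For $(\ref{borrower})$, the integral $J^{X,u_J}_{error}$ runs over the part of $\{u=u_J\}$ contained in $\mathcal{D}^{r^\star_{cl},u_J}_{[t_1,t_2]}$, which lies in $\{r^\star\ge r^\star_{cl}\}$. The plan has two ingredients. First, the purely algebraic bound $|f^\prime \xi|\, r^2 \le C(\sigma)$ on this segment: estimating $|\xi|\le \frac32\frac{\lambda-\nu}{r}+\frac{n|x|}{x^2+\sigma^2}\le \frac{C}{r}+\frac{C}{\sigma}$ (using $\lambda-\nu=(\kappa+\gamma)(1-\mu)\le 2$ and $x^2+\sigma^2\ge 2|x|\sigma$) and, from the relation between $r$ and $r^\star$ of bootstrap assumption \ref{boot1} (improved in Proposition \ref{rstarrdec}, cf.\ also $(\ref{rstarrb})$), $r\le C(r^\star+\sqrt M)=C(x+\sigma+\sqrt M)\le 2(|x|+\sigma)$, one finds with $n=\frac32$ and $x^2+\sigma^2\gtrsim(|x|+\sigma)^2$ that $f^\prime|\xi|r^2 \le C M\,\frac{r+\sigma^{-1}r^2}{(x^2+\sigma^2)^{3/2}}\le C M\sigma^{-2} = C(\sigma)$. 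Second, along $\{u=u_J\}\cap\{r^\star\ge r^\star_{cl}\}$ Proposition \ref{kgom} and the definition of $C_L$ give $\lambda=\kappa(1-\mu)\ge \frac{1}{2C_L}=:c_0>0$, while the a priori smallness of $B$ (Proposition \ref{simpsmall}) gives $1-\frac23\rho=8B^2+\varphi_1(B)\ge 4B^2$; hence, discarding the non-negative $(B_{,v})^2$-term in $(\ref{vum})$,
\[
\partial_v m = -4r^3\tfrac{\nu}{\Omega^2}(B_{,v})^2 + r\lambda\Big(1-\tfrac23\rho\Big) \ \ge\ 4c_0\, r B^2 \ \ge\ 0 ,
\]
so that $\int r B^2\,dv \le \frac{1}{4c_0}\big(m(u_J,2t_2-u_J)-m(u_J,2t_1-u_J)\big)$ over the relevant $v$-range. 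Putting the two together,
\[
|J^{X,u_J}_{error}| \le \int |f^\prime \xi|\, B^2 r^3\, dv \le C(\sigma)\int r B^2\, dv \le C(\sigma)\big(m(u_J,2t_2-u_J)-m(u_J,2t_1-u_J)\big),
\]
which is $(\ref{borrower})$, the constant depending in addition on the fixed quantity $r^\star_{cl}$ through $C_L$.

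The only genuinely computational step is the algebraic estimate $|f^\prime\xi|\,r^2\le C(\sigma)$, where the precise exponent $n=\frac32$ and the crude control of $r$ by $r^\star$ near the outgoing boundary are used; everything else is inspection of signs in the chosen $f$ and $\xi$. There is no delicate cancellation: $f^\prime$ always has a good sign, $\xi$ has a good sign on the near-horizon part of $\partial\mathcal{D}^{r^\star_{cl},u_J}_{[t_1,t_2]}$ (the two $H$-terms), and on the outgoing null boundary $\{u=u_J\}$ one simply dominates the $B^2$-density by the monotone Hawking-mass flux.
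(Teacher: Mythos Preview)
Your proof is correct and follows the same approach as the paper's, just made explicit: for the two $H$-terms you verify that the null segments lie in $r^\star\le r^\star_{cl}$ so that $x<0$ and hence $\xi>0$, exactly as the paper's ``$x\le -\sqrt{M}$ in the region of integration''; for the $J$-term you spell out the decay estimate $|f'\xi|\,r^2\le C(\sigma)$ and the energy control $\int rB^2\,dv\lesssim \Delta m$ that the paper compresses into ``$\xi$ and $f'$ decay sufficiently fast at infinity to retrieve the correct powers of $r$ appearing in the energy''.
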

\begin{proof}
The statement (\ref{borrower}) is immediate since we are away from the
horizon and both $\xi$ and $f^\prime$ decay sufficiently fast at infinity
to retrieve the correct powers of $r$ appearing in the energy. For the other 
two inequalities recall that $f^\prime \geq 0$ always and 
that $x \leq -\sqrt{M}$ in the region of integration and hence 
$\xi \geq 0$. 
\end{proof}
\subsection{Controlling the boundary-terms of
  $I^X_B\left(\mathcal{D}_{[t_1,t_2]}^{r^\star_{cl},u_J}\right)$}
The following Lemmata show that the boundary terms of the
vectorfield $X$ appearing in the vector field 
identity (\ref{Xbasicidentity}) are controlled by the
energy plus a contribution from the term $\tilde{F}^Y_B$ appearing as 
bootstrap assumption \ref{boot5}. Together with the results of 
Lemma \ref{erixc}, identity (\ref{summary}) ultimately yields 
an estimate for the positive spacetime integral $\bar{I}^X_B$, manifest 
in Proposition \ref{Xenergycor}.  
\begin{lemma} \label{Lembount}
We have, for any $q \in \mathbb{R}^+$,
\begin{eqnarray} \label{control}
|F_B^X \left(t\right)| &\leq& \left[C\left(r^\star_{cl}\right)+C_f q\right] E_F\left(t\right) + C_f r^2_{cl} B^2
\left(t,r^\star_{cl}\right) + \frac{2}{q} C_f \tilde{F}^Y_B \left(v=t+r^\star_{cl}\right) \nonumber \\
\end{eqnarray}
with
\begin{equation}
E_F\left(t\right) = m\left(t,t-u_J\right) - m\left(u_{H},
  t+r^\star_{cl} \right)  
\end{equation}
and $\tilde{F}^Y_B \left(v_1\right)$ being the quantity appearing as
 bootstrap assumption \ref{boot5}. Moreover 
$C_f=sup_{r \leq r^\star_{cl}} \left[f^\prime r\right] \ll 1$ 
is a small constant and $C\left(r^\star_{cl}\right)$ just depends on $r^\star_{cl}$.
\end{lemma}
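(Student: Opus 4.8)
The functional $F^X_B(t)$ of (\ref{XFterms}) is an integral over the broken slice through the corner $(t,r^\star_{cl})$: over its spacelike part $\{t'=t\}\cap\{r^\star_{cl}\le r^\star\le t-u_J\}$ and over its null part $\{v=t+r^\star_{cl}\}\cap\{t-r^\star_{cl}\le u\le u_H\}$, the latter lying entirely in $\{r^\star\le r^\star_{cl}\}$. The plan is to bound the contribution of the spacelike part by $C(r^\star_{cl})E_F(t)$ and the contribution of the null part by $C_f q\,E_F(t)+\frac2q C_f\tilde F^Y_B(t+r^\star_{cl})+C_f r_{cl}^2 B^2(t,r^\star_{cl})$, and to add the two. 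Throughout I would use that, by the construction of $f$ in Section~\ref{rstarzero} (with $\sigma$ of size $40\sqrt M$), $f$ is bounded everywhere with $f',\ \frac3r(\lambda-\nu)f$ decaying like $Mr^{-3}$ and $r^{-1}$ respectively, while on $\{r^\star\le r^\star_{cl}\}$ the quantities $|f|$, $f'r$, $|\xi|$ are all of size $C_f$ (up to fixed absolute constants) and hence $\ll1$, since $f$ vanishes at $r^\star_{zero}\in(-\tfrac16\sqrt M,-\tfrac1{10}\sqrt M)$ and $f'=M^{n-1/2}(x^2+\sigma^2)^{-n}$, $n=\tfrac32$, decays fast well to the left of $r^\star_{zero}$.

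On the spacelike part the integrand is $-2f\,r^3\partial_tB\,\partial_{r^\star}B$ together with the Green-renormalisation terms $-(f'+\frac3r(\lambda-\nu)f)(\partial_tB)B\,r^3$ and $\frac12\partial_t(\frac3r(\lambda-\nu))f\,B^2r^3$ (the $(1-\tfrac23\rho)$-flux cancelling identically for $X$). A single Cauchy--Schwarz, using $\kappa,\gamma\sim\tfrac12$ on this slice (Proposition~\ref{kgom}), the boundedness of $\partial_t(\frac3r(\lambda-\nu))$ obtained from the Raychaudhuri equations (\ref{eom1})--(\ref{eom2}), and $1-\tfrac23\rho=8B^2+\varphi_1(B)$, bounds all three terms by the $T$-energy density $r^3\frac{(B_{,v})^2}{\kappa}+4r^3\frac{\lambda}{\Omega^2}(B_{,u})^2+r(\lambda-\nu)(1-\tfrac23\rho)$ times a factor depending on $\|f\|_\infty$ and $\sup_{r^\star\ge r^\star_{cl}}\frac1{1-\mu}=C_L$, hence by $C(r^\star_{cl})$ times the $T$-flux $m(t,t-u_J)-m(t,r^\star_{cl})$. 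Since $(t,r^\star_{cl})$ has coordinates $(t-r^\star_{cl},t+r^\star_{cl})$ while $(u_H,t+r^\star_{cl})$ has the same $v$ and larger $u$, the monotonicity $\partial_um\le0$ gives $m(t,r^\star_{cl})\ge m(u_H,t+r^\star_{cl})$, so this $T$-flux is $\le E_F(t)$.

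On the null part the integrand is $2f\,r^3(B_{,u})^2$, $-2f\,\frac{r\Omega^2}{4}(1-\tfrac23\rho)$, $-(f'+\frac3r(\lambda-\nu)f)(B_{,u})B\,r^3$, and $B^2$-terms. Since the Hawking-mass flux through a $v$-constant line degenerates at the horizon (the $(B_{,u})^2$-weight there is $\frac{\lambda}{\Omega^2}=\frac1{4\gamma}$, not $\frac1{-\nu}$), the first term must be controlled by $\tilde F^Y_B(t+r^\star_{cl})=\int r^3\frac{(B_{,u})^2}{-\nu}\,du$ rather than by $E_F$: $2\int f\,r^3(B_{,u})^2\,du\le 2\sup|f(-\nu)|\,\tilde F^Y_B\lesssim C_f\tilde F^Y_B$. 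The $(1-\tfrac23\rho)$-term is even better: writing $\frac{r\Omega^2}{4}(1-\tfrac23\rho)=\kappa(-r\nu)(1-\tfrac23\rho)$ and using $-r\nu(1-\tfrac23\rho)=-\partial_um-\frac{r^3(B_{,u})^2}{\gamma}\le-\partial_um$ from (\ref{dum}), it integrates to $\lesssim C_f\,(m(u_{\mathrm{bot}},v)-m(u_H,v))\le C_f E_F(t)$. For the cross term I would use Young's inequality in the form $|(B_{,u})B|\le\frac q2\frac{(B_{,u})^2}{-\nu}+\frac1{2q}(-\nu)B^2$: the first summand contributes $\frac q2 C_f\tilde F^Y_B$ (picking up $q$), the second $\frac1{2q}C_f$ times an $r^3(-\nu)B^2$-integral which, after integrating $2B\,B_{,u}=\partial_u(B^2)$ by parts in $u$, combines with the $B^2$-terms already present in (\ref{XFterms}) into the boundary term $C_f r_{cl}^2 B^2(t,r^\star_{cl})$ (the $u=u_0$ contribution vanishing by compact support, the $u_H$ one absorbed into $-\partial_um$) plus a residual $B^2$-bulk again dominated by $E_F(t)$; as in Lemma~\ref{erixc}, the signs of $f$ and $\xi$ on $\{r^\star\le r^\star_{cl}\}$ make this boundary contribution of the stated sign.

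The main obstacle is precisely the null part near the horizon: one must resist estimating $\int r^3(B_{,u})^2$ by Hawking-mass differences (which degenerate as $1-\mu\to0$) and instead route it through the weighted quantity $\tilde F^Y_B$ of bootstrap assumption~\ref{boot5}, while simultaneously checking, from the explicit form of $f$ and the size of $\sigma$ in Section~\ref{rstarzero}, that every coefficient produced on $\{r^\star\le r^\star_{cl}\}$ is genuinely $\ll1$ so that the small constant advertised in (\ref{control}) really is small. The remaining difficulty is bookkeeping: matching the corner identifications between the spacelike and null parts of the slice and the Green-renormalised boundary terms so that nothing is double-counted.
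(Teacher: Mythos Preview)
Your central claim---that the Hawking-mass flux through a constant-$v$ line degenerates for $(B_{,u})^2$ near the horizon---is wrong, and this is the main gap. You correctly compute the weight as $\frac{4\lambda}{\Omega^2}=\frac{1}{\gamma}$, but then conclude it degenerates; in fact $\gamma\le\tfrac12$ in the region (by the coordinate normalisation on the $t=T$ slice and on $\overline{BC}$, together with the monotonicity $\gamma_{,v}\ge0$), so $\frac{1}{\gamma}\ge2$. Hence the first null term $\int 2f\,r^3(B_{,u})^2\,du$ \emph{is} controlled by $E_F(t)$: $2|f|r^3(B_{,u})^2=2|f|\gamma\cdot\frac{r^3(B_{,u})^2}{\gamma}$ with $2|f|\gamma$ bounded. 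This is exactly what the paper does. Your routing through $\tilde F^Y_B$ produces a $q$-independent $C_f\tilde F^Y_B$ term that does not appear in the lemma's bound, and even after relabelling $q\mapsto1/q$ in your cross-term Young split you cannot get rid of it; the lemma as stated has \emph{only} the $\frac{2}{q}C_f\tilde F^Y_B$ contribution.

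Two further points. First, your Young inequality on the cross term is oriented the wrong way: the paper splits $f'|B_{,u}||B|r^3\le C_f\bigl[q\,rB^2(-\nu)+\frac{1}{q}\frac{(B_{,u})^2}{-\nu}r^3\bigr]$, so that the $(-\nu)B^2$ piece (controlled by $E_F$ via $-r\nu(1-\tfrac23\rho)\gtrsim 8rB^2(-\nu)$) carries $q$ and the $\tilde F^Y_B$ piece carries $1/q$. Second, your treatment of the $B^2$-terms on the null part is muddled: the integration is from $u=t-r^\star_{cl}$ to $u=u_H=t_2-r^\star_{cl}$, so there is no $u_0$-endpoint to invoke compact support for. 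What the paper does with the $(-\partial_u f')B^2r^3$ term is a straightforward integration by parts in $u$: the boundary term at $u=u_H$ has a good sign (since $f'>0$), the boundary term at $u=t-r^\star_{cl}$ gives $C_f r_{cl}^2B^2(t,r^\star_{cl})$, and the bulk $\int f'\,2BB_{,u}r^3\,du\le C_f\int 2|B||B_{,u}|r^2\,du$ is Young'd again with parameter $q$, contributing the second $\frac{1}{q}C_f\tilde F^Y_B$.
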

\begin{proof} 
The $F$-boundary-terms arising from energy (\ref{enbt}) can be estimated
\begin{eqnarray} 
F^T_B \left(t\right) = \int_{t-r^\star_{cl}}^{t_2-r^\star_{cl}} \left[4r^3
  \lambda \frac{\left(B_{,u}\right)^2}{\Omega^2} - r
  \nu \left(1-\frac{2}{3} \rho \right) \right]
  \left(u,t+r^\star_{cl}\right) du \nonumber \\
+\int_{r^\star_{cl}}^{t-u_J} \Bigg(-4r^3 \frac{\nu}{\Omega^2}
\left(B_{,v}\right)^2 + 4r^3 \frac{\lambda}{\Omega^2} \left(B_{,u}\right)^2
+ r \left(\lambda-\nu\right) \left(1-\frac{2}{3}\rho \right)
  \Bigg)\left(t,r^\star \right) dr^\star  \nonumber \\
\geq \int_{t-r^\star_{cl}}^{t_2-r^\star_{cl}} \left[4r^3
  \lambda \frac{\left(B_{,u}\right)^2}{\Omega^2} - r
  \nu \left(1-\frac{2}{3} \rho \right) \right]
  \left(u,t+r^\star_{cl}\right) du \nonumber \\
+\frac{1}{C_L} \int_{r^\star_{cl}}^{t-u_J} \Bigg(\left(\partial_t B\right)^2 + \left(\partial_{r^\star} B\right)^2
+ \frac{B^2}{r^2}  \Bigg)r^3 \left(t,r^\star \right) dr^\star
\end{eqnarray}
The $F$-boundary terms arising from the basic identity for the vector
field $X$ (\ref{XFterms}) are
\begin{eqnarray} \label{bxterms}
\frac{1}{2\pi^2} F^X_B \left(t\right) = &-&2\int_{r^\star_{cl}}^{t-u_J}
 f \partial_t B \partial_{r^\star} B
\left(t_i,r^\star\right) r^3 dr^\star  \\ 
&-& \int_{r^\star_{cl}}^{t-u_J}
\left(f^\prime + \frac{3}{r} \left(\lambda-\nu\right)f \right)
\left(\partial_t B \right) B \left(t,r^\star\right) r^3  dr^\star
 \nonumber \\
&+& \int_{r^\star_{cl}}^{t-u_J}
\frac{1}{2}\left(\partial_t \left(\frac{3}{r} \left(\lambda-\nu\right)\right)f
  \right)B^2 \left(t_1,r^\star\right) r^3 dr^\star 
  \nonumber \\
&+&\int_{t-r^\star_{cl}}^{t_2-r^\star_{cl}} \left[r^3 \left(\partial_u B\right)^2 
\left(2f\right) + \frac{r\Omega^2}{4} 
\left(1-\frac{2}{3}\rho\right)\left(-2f\right) \right] du \nonumber \\
 &-& \int_{t-r^\star_{cl}}^{t_2-r^\star_{cl}}  
\left(f^\prime + \frac{3}{r} \left(\lambda-\nu\right)f \right)
\left(\partial_u B \right) B \left(u,t+r^\star\right) r^3  du
 \nonumber \\
&+& \int_{t-r^\star_{cl}}^{t_2-r^\star_{cl}}  
\frac{1}{2}\left(\partial_u \left(f^\prime + \frac{3}{r} \left(\lambda-\nu\right)f\right)
  \right)B^2 \left(u,t+r^\star_{cl}\right) r^3 du \, .
  \nonumber 
\end{eqnarray}

\paragraph{Spacelike $F^X$-terms}
We estimate the first three terms of (\ref{bxterms}) (with the index $sl$ denoting the restriction to the spacelike terms)
\begin{eqnarray} \label{ckkb}
\frac{|F_B^X \left(t\right)|}{2\pi^2}\Big|_{sl} \leq \int_{r^\star_{cl}}^{t-u_J} |f|
\left( \left(\partial_t B\right)^2 + \left(\partial_{r^\star}
B\right)^2 \right) r^3 dr^\star \nonumber \\ 
+ \int_{r^\star_{cl}}^{t-u_J}
\frac{f^\prime}{2} \left(\left(\partial_t B\right)^2 \sqrt{M} + \frac{B^2}{\sqrt{M}} \right) r^3 dr^\star
+\int_{r^\star_{cl}}^{t-u_J} |f| \frac{3}{2r} \left(\lambda-\nu\right)
\left(r \left(\partial_t B\right)^2 + \frac{B^2}{r}\right) r^3
dr^\star
\nonumber \\
+\int_{r^\star_{cl}}^{t-u_J} |f| \left[\frac{3}{2r^2}
  \left(\lambda-\nu\right) \left(\lambda+\nu\right) + \frac{3}{r}
  \left(r_{,vv}-r_{,uu}\right) \right] B^2 r^3 dr^\star \, .
\end{eqnarray}
Recall that $f^\prime$ is positive and that we have $|f| \leq C\left(\sigma\right)$ and
$f^\prime \leq C\left(\sigma\right) \frac{M}{r^3}$. Moreover
$\left(\lambda+\nu\right)$ is clearly bounded everywhere, as is
$r^2\frac{\Omega_{,v}}{\Omega}$ and $r^2\frac{\Omega_{,u}}{\Omega}$
(cf.~Corollary \ref{omuomvdecr}). 
Using the equations
\begin{equation} \label{rvvruu}
r_{,vv} = 2 \frac{\Omega_{,v}}{\Omega}\lambda - \frac{2}{r^2}\theta^2
\textrm{\ \ \ \ and \ \ \ \ } r_{,uu} = 2 \frac{\Omega_{,u}}{\Omega}\nu - \frac{2}{r^2}\zeta^2
\end{equation}
it becomes apparent that the terms in (\ref{ckkb}) 
are controlled by energy.
\paragraph{Null $F^X$-terms}
For the last three terms of (\ref{bxterms}), which only arise for the
$F^X\left(t_1\right)$-term by the geometry of the region, 
we observe that the first is manifestly controlled 
by the energy because $f$ is bounded along the rays 
of integration. For the second term one
estimates for any $q \in \mathbb{R}^+$
\begin{eqnarray}
\int_{t_1-r^\star_{cl}}^{t_2-r^\star_{cl}} f^\prime |\partial_u
B| |B| r^3 du &\leq& C_f \int_{t_1-r^\star_{cl}}^{t_2-r^\star_{cl}}\left[
 q r B^2 \left(-\nu\right)
+ \frac{1}{q} \frac{\left(B_{,u}\right)^2}{-\nu}r^3\right] du \nonumber \\ &\leq& C_f \left(q E_F +
\frac{1}{q} \tilde{F}^Y_B \left(v_1\right) \right) 
\end{eqnarray}
where bootstrap assumption (\ref{intebound2}) has been used, and
\begin{eqnarray}
\int_{t_1-r^\star_{cl}}^{t_2-r^\star_{cl}} 
\left(\frac{3}{r} \left(\lambda-\nu\right)|f| \right)
|\partial_u B| |B| \left(u,t+r^\star\right) r^3  du 
\nonumber \\ \leq
\int_{t_1-r^\star_{cl}}^{t_2-r^\star_{cl}} \left(\frac{3}{r} \left(\lambda-\nu\right)|f| \right)\frac{1}{2}\left(\frac{\left(\partial_u B \right)^2}{-\nu}\sqrt{M} +
\frac{1}{\sqrt{M}} B^2\left(-\nu\right) \right) r^3 \left(u,v_1\right) du \leq C 
\left(\sigma\right) E_F \, . \nonumber
\end{eqnarray}
Finally, for the third term we note that $x \leq -\sqrt{M}$ and hence $f^{\prime
  \prime} \geq 0$ in the integration region to estimate
\begin{eqnarray}
\int_{t_1-r^\star_{cl}}^{t_2-r^\star_{cl}}  
\frac{1}{2}\left[-\partial_u f^\prime -\left(\partial_u f\right) \frac{3}{r}
\left(\lambda-\nu\right) + |f| \Big|\partial_u\left(\frac{3}{r} \left(\lambda-\nu\right)\right)\Big|
  \right] B^2 \left(u,t+r^\star_{cl}\right) r^3 du \nonumber \\ 
\leq C_f r_{cl}^2 \ B^2 \left(t_1,r^\star_{cl}\right) +
C_f \int_{t_1-r^\star_{cl}}^{t_2-r^\star_{cl}} 2|B| |B_{,u}| r^2 du + C \left(m\left(u_1,v_1\right) - m\left(u_2,v_1\right)\right) \nonumber \\ 
\leq C_f r_{cl}^2 \ B^2 \left(t_1,r^\star_{cl}\right) + C_f q \cdot E_F
+ C_f \frac{1}{q} \tilde{F}^Y_B \left(v_1\right) + C E_F \, .
  \nonumber 
\end{eqnarray}
for any $q \in \mathbb{R}^+$.
\end{proof}
%
%
%
%
%
%
\begin{lemma} 
We have
\begin{equation} 
|J^X_{u_J}| \leq C E_J\left(u_J\right)
\end{equation}
for some constant $C$ and
\begin{equation}
E_J\left(u_J\right)= m \left(u_J,2t_2-u_J\right) - m
\left(u_J,2t_1-u_J\right) \, .
\end{equation}
\end{lemma}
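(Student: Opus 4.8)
The plan is to estimate the three contributions to $J^X_{u_J}$ in (\ref{XJterms}) directly along the null ray $\{u=u_J\}$, without any further integration by parts. By construction of the region $\mathcal{D}^{r^\star_{cl},u_J}_{[t_1,t_2]}$ this ray lies in $\{r^\star \ge r^\star_{cl}\}$, i.e.\ outside the curve $r=r_K$, so that $1-\mu$ is bounded below there by a fixed positive constant; combined with Propositions \ref{simpsmall} and \ref{kgom} this gives that $\kappa$ and $\gamma$ are close to $\tfrac12$, that $\lambda=\kappa(1-\mu)$ and $-\nu=\gamma(1-\mu)$ are bounded above and below, and that $\Omega^2=4\kappa\gamma(1-\mu)$ is comparable to $1-\mu$. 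Moreover $|f|\le C(\sigma)$, while $f^\prime,f^{\prime\prime}$ are bounded and decay in $r$, and $r_{,uv}$ together with the non-field part $2\tfrac{\Omega_{,v}}{\Omega}\lambda$ of $r_{,vv}$ (cf.\ (\ref{revol}) and (\ref{rvvruu})) are bounded and decaying, using Corollary \ref{omuomvdecr}. Hence on $\{u=u_J\}$ the coefficient of $r^3(\partial_vB)^2$ occurring in $\hat J^X_{u_J}$ is $O(1)$, and the coefficients of $r^3B\,\partial_vB$ and $r^3 B^2$ produced by the Green's-identity correction terms in (\ref{XJterms}) are bounded by $C(r^\star_{cl})\,r$, up to a genuinely field-quadratic remainder of the form $-6fB^2(B_{,v})^2$ coming from the $-\tfrac{2}{r^2}\theta^2$ part of $r_{,vv}$.

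The second ingredient is the monotonicity of the Hawking mass. Along $\{u=u_J\}$ the variable $v$ ranges over $[2t_1-u_J,2t_2-u_J]$, so
\begin{equation}
E_J(u_J)=\int_{2t_1-u_J}^{2t_2-u_J}\partial_v m\,(u_J,v)\,dv ,
\end{equation}
and by (\ref{vum}) both summands of $\partial_v m=r^3\tfrac{(B_{,v})^2}{\kappa}+r\lambda\big(1-\tfrac23\rho\big)$ are non-negative (using $\rho\le\tfrac32$ from (\ref{scalcurv})), whence
\begin{equation}
\int r^3\frac{(B_{,v})^2}{\kappa}\,dv\le E_J(u_J),\qquad \int r\lambda\Big(1-\tfrac23\rho\Big)\,dv\le E_J(u_J).
\end{equation}
Since $\kappa\lesssim 1$ and $\lambda\gtrsim 1$ on this ray, the first inequality gives $\int r^3(B_{,v})^2\,dv\le C\,E_J(u_J)$, and since $1-\tfrac23\rho=8B^2+\varphi_1(B)$ with $\varphi_1(B)=O(B^3)$ and $|B|$ small (Corollary \ref{Bcor}), the second gives $\int rB^2\,dv\le C\,E_J(u_J)$.

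Putting these together finishes the argument. The $r^3(\partial_vB)^2$ and $\tfrac{r\Omega^2}{4}(1-\tfrac23\rho)$ terms of $\hat J^X_{u_J}$ are $\le C(\sigma)E_J(u_J)$ by the two displays above (for the second, writing $\tfrac{r\Omega^2}{4}(1-\tfrac23\rho)=r\lambda\gamma(1-\tfrac23\rho)$ and using $\gamma\lesssim1$); for the cross term one splits $|B\,\partial_v B|\,r^2\le\tfrac12\big(rB^2+r^3(\partial_vB)^2\big)$, and the $B^2$ term has coefficient $\le C(r^\star_{cl})\,r$, both again controlled by $E_J(u_J)$; finally the remainder $\int 6|f|B^2 r^3(B_{,v})^2\,dv$ is bounded, using $B^2\le C M r^{-2}$ and $r\ge r_{cl}$, by $CM\,r_{cl}^{-2}\int r^3(B_{,v})^2\,dv\le C(r^\star_{cl})E_J(u_J)$. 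Summing the three pieces yields $|J^X_{u_J}|\le C\,E_J(u_J)$. The only real subtlety — and the reason the bound here is cleaner than the one for $F^X_B$ in Lemma \ref{Lembount}, which needed the auxiliary quantity $\tilde F^Y_B$ — is the verification that $\{u=u_J\}$ stays away from the horizon, so that $-\nu$ and $\lambda$ are bounded below and no degenerate weights appear; this is precisely what the hypothesis on $u_J$ in the definition of $\mathcal{D}^{r^\star_{cl},u_J}_{[t_1,t_2]}$ ensures.
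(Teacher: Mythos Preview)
Your argument is correct and follows essentially the same route as the paper's (very brief) proof: control the $\hat J^X_{u_J}$ terms directly by the two summands of $\partial_v m$, split the cross term $B\,\partial_v B$ via Young's inequality (the paper writes this as $2BB_{,v}\le \tfrac{B^2}{r}+r(B_{,v})^2$, which is the same splitting you use), and use the decay of $f'$, $f''$ and of $r_{,vv}$, $r_{,uv}$ to get the right $r$-weight on the $B^2$ term. Your version is more detailed---in particular you make explicit that the ray $\{u=u_J\}$ lies in $r^\star\ge r^\star_{cl}$ (so $\lambda,-\nu$ are bounded below and no degenerate weights appear), and you track the extra field-quadratic piece $-\tfrac{2}{r^2}\theta^2$ in $r_{,vv}$ separately---but the underlying mechanism is the same.
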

\begin{proof}
For the terms in the first line 
of (\ref{XJterms}) apply the inequality $2 B B_{,v} \leq \frac{B^2}{r} 
+ r\left(B_{,v}\right)^2$ to retrieve the correct 
powers of $r$ in the energy. For the second line observe 
that $f^\prime$, $r_{,vv}$ and $r_{,uu}$ decay sufficiently fast in $r$.
\end{proof}
%
%
%
%
\begin{lemma} \label{Lembounh}
We have
\begin{equation} 
H^X_{u_H} \leq C \ E_H\left(v_1,v_2\right) + C_f r_{cl}^2 B^2
  \left(t_2,r^\star_{cl}\right)
\end{equation}
for some constant $C$ and
\begin{equation}
E_H\left(v_1,v_2\right)= m\left(u_{hoz},v_2\right) - m\left(u_{hoz},
  v_1 \right) \, .
\end{equation}
\end{lemma}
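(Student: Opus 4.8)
The statement to prove is Lemma~\ref{Lembounh}, which bounds the $H^X_{u_H}$ horizon boundary term associated with the vectorfield $X$ by the energy flux through the horizon portion $\left[v_1,v_2\right]$ plus a controlled $B^2$ contribution at the point $\left(t_2,r^\star_{cl}\right)$.

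The plan is to proceed exactly as in the proof of Lemma~\ref{Lembount}, reading off the explicit form of $H^X_{u_H}$ from (\ref{XHterms}). Recall that
\[
\frac{1}{2\pi^2}\hat{H}^X_{u_H} = \int_{v_1}^{v_2} \left[ r^3 \left(\partial_v B\right)^2 \left(-2f\right) + \frac{r\Omega^2}{4}\left(1-\frac{2}{3}\rho\right)\left(2f\right) \right]\left(u_{hoz},v\right) dv
\]
and that $H^X_{u_H}$ is this plus the two Green's-identity correction lines of (\ref{XHterms}). First I would observe that on the horizon $x = r^\star - \sigma - \sqrt{M} \leq -\sqrt{M}$, so $f \geq 0$ there (by the choice of $f$ in section~\ref{rstarzero}, which vanishes at $x_{zero}$ with $r^\star_{zero}$ negative and $f'\geq 0$), and moreover $f$, $f'$, $\frac{3}{r}\left(\lambda-\nu\right)$, $r^2\frac{\Omega_{,v}}{\Omega}$ and $\left(\lambda+\nu\right)$ are all bounded on the horizon portion under consideration (using Proposition~\ref{simpsmall}, Corollary~\ref{omuomvdecr} and the identities (\ref{rvvruu})). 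The $\hat{H}^X_{u_H}$ term is then controlled: the $(-2f)\left(\partial_v B\right)^2 r^3$ piece has the wrong sign individually but is bounded in absolute value by $C \int_{v_1}^{v_2} r^3\left(\partial_v B\right)^2 dv$, which via $\frac{\left(B_{,v}\right)^2}{\kappa}$ and the bound $\kappa \leq \tfrac12 + \ldots$ is dominated by the $v$-energy flux appearing in $H^T_{u_H}$ (equation (\ref{enht})), hence by $\partial_v m$ integrated, i.e.\ by $m\left(u_{hoz},v_2\right) - m\left(u_{hoz},v_1\right) = E_H\left(v_1,v_2\right)$; the $(2f)\frac{r\Omega^2}{4}\left(1-\frac{2}{3}\rho\right)$ piece is handled by writing $1 - \tfrac23\rho = 8B^2 + \varphi_1(B)$ and using that $B^2 r$ is likewise controlled by $\partial_v m$ on the horizon (note $\lambda$ is small near the horizon, so here one uses the $r\lambda\left(1-\frac23\rho\right)$ structure in (\ref{enht}) together with $\gamma$ bounded below or, more simply, bootstrap assumption~\ref{boot3}).

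Next I would treat the two correction lines of (\ref{XHterms}). The term $\int_{v_1}^{v_2}\left[B\,\partial_v B\left(f' + \frac{3}{r}\left(\lambda-\nu\right)f\right)\right]r^3\left(u_{hoz},v\right)dv$ is estimated by the Cauchy--Schwarz splitting $2|B||B_{,v}| \leq r\left(\partial_v B\right)^2 + \frac{B^2}{r}$ (with appropriate factors of $\sqrt{M}$ to fix dimensions, as in Lemma~\ref{Lembount}), giving again terms controlled by $E_H$. The final term $\int_{v_1}^{v_2}\left[\frac{B^2}{2}\partial_v\left(f' + \frac{3}{r}\left(\lambda-\nu\right)f\right)\right]r^3\left(u_{hoz},v\right)dv$ is the one producing the genuine boundary contribution: integrate by parts in $v$ along the horizon. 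The total-derivative piece yields $\frac12 B^2 r^3 \left(f' + \frac{3}{r}\left(\lambda-\nu\right)f\right)$ evaluated at $v_2$ and $v_1$; at $v_1 = t_1 + r^\star_{cl}$ this is exactly on the $r^\star = r^\star_{cl}$ curve only if $u_{hoz}$ equals $t_1 - r^\star_{cl}$, which it is not, so instead one notes that $r^3 f'$ is bounded by $C_f r^2$ with $C_f = \sup_{r\leq r^\star_{cl}}\left[f' r\right] \ll 1$ and that the $v_1$-endpoint term combines with what is already absorbed — the cleanest route, matching the statement, is to keep only the $v_2$-endpoint term $C_f r_{cl}^2 B^2\left(t_2,r^\star_{cl}\right)$ explicitly (this is the corner of the characteristic rectangle that lies on $r^\star = r^\star_{cl}$) and to bound the $v_1$-endpoint term and the remaining $\partial_v$-derivative-of-coefficients terms by $E_H$, using that $\partial_v\left(\frac{\lambda-\nu}{r}\right)$ and $\partial_v f' = \frac12 f''$ decay in $r$ and that $B^2 r$ on the horizon is $\leq C\,\partial_v m$. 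Collecting everything gives $H^X_{u_H} \leq C\,E_H\left(v_1,v_2\right) + C_f r_{cl}^2 B^2\left(t_2,r^\star_{cl}\right)$, as claimed.

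The main obstacle is the bookkeeping of which endpoint term survives and of the sign of $f$ and $f''$ on the horizon: one must check that $x \leq -\sqrt{M}$ throughout $\left[v_1,v_2\right]$ on $u = u_{hoz}$ (so that $f \geq 0$ and $f'' \geq 0$, which is needed for the integration-by-parts terms to have the right sign rather than merely being bounded), and that the degeneration of $\lambda = \kappa\left(1-\mu\right)$ at the horizon is compatible with extracting $B^2 r$-control from $\partial_v m$ — here one leans on bootstrap assumption~\ref{boot3}, i.e.\ (\ref{hozass}), which directly gives the requisite decay of $m\left(u_{hoz},v_2\right) - m\left(u_{hoz},v_1\right)$ and hence closes the estimate without needing a lower bound on $\lambda$. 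The rest is routine, parallel to Lemma~\ref{Lembount}.
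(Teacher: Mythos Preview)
Your overall strategy mirrors the paper's, but there is a genuine gap in how you handle the $f'$-part of the cross term $\int B\,\partial_v B\,f'\,r^3\,dv$. The splitting $2|B||B_{,v}|\le r(\partial_v B)^2 + \tfrac{B^2}{r}$ produces a contribution $C_f\int r\,B^2\,dv$ with \emph{no} factor of $\lambda$ (or $\Omega^2$). Your claim that ``$B^2 r$ on the horizon is $\le C\,\partial_v m$'' is false: $\partial_v m = \tfrac{\theta^2}{\kappa} + r\lambda\bigl(1-\tfrac{2}{3}\rho\bigr)$ only controls $r\lambda B^2$, and along $u=u_H=t_2-r^\star_{cl}$ the coordinate $r^\star$ reaches values as negative as $r^\star_{cl}-\tfrac{t_2-t_1}{2}$, where $\lambda=\kappa(1-\mu)$ degenerates. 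Neither bootstrap assumption (\ref{hozass}) nor the pointwise bound $|B|\le C/v$ rescues this: the former merely bounds $E_H$ itself, and the latter would yield $C_f\int r B^2\,dv = O(\sqrt{M}/v_1)$, not $O(M^2/v_1^2)$, which is too weak for the dyadic summation in~(\ref{tIXB}). Your integration-by-parts on the third line does not help either: it just returns (a doubled copy of) the same problematic $\int B\,\partial_v B\,f'\,r^3\,dv$.

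The paper closes this gap by a Hardy-type inequality that exploits the explicit form of $f$: since $x\le -\sqrt{M}$ on the whole integration line one has $f''>0$, and one splits
\[
2|B||\partial_v B|\,f'\ \le\ \tfrac{1}{4}(f')_{,v}\,B^2\ +\ C\,\frac{(f')^2}{f''}\,(\partial_v B)^2.
\]
The coefficient $\tfrac{(f')^2}{f''}=\tfrac{M}{-3x\sqrt{x^2+\sigma^2}}$ is uniformly small, so the $(\partial_v B)^2$-term is bounded by $E_H$ via the $\theta^2/\kappa$ contribution in $\partial_v m$. The $\tfrac{1}{4}(f')_{,v}B^2$-term is then absorbed back into the original $\int\tfrac{B^2}{2}(f')_{,v}r^3\,dv$ (whose integration by parts produced the boundary contribution $C_f r_{cl}^2 B^2(t_2,r^\star_{cl})$). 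This is the missing idea: the weight in the Cauchy--Schwarz must be $(f')_{,v}$, not $1/r$, precisely so that no bare $\int rB^2\,dv$ appears.
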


\begin{proof}
The term $\hat{H}^X_{u_H}$ is manifestly controlled by the energy by
the fact that $f$ is bounded. For the second and third term in 
(\ref{XHterms}), we observe that the terms which are multiplied by a
$\lambda$- or a $\nu$-factor (or derivatives thereof) are controlled by the
energy. For the remaining terms we note that $x \leq -\sqrt{M}$ and
hence $f^{\prime \prime} > 0$ in this region and estimate (using a Hardy inequality)
\begin{eqnarray} \label{boho}
\int_{v_1}^{v_2} \left[|B| |\partial_v B| f^\prime +
  \frac{B^2}{2} \left(f^{\prime}\right)_{,v} \right]r^3 dv \nonumber
  \\ \leq C_f r_{cl}^2 B^2
  \left(t_2,r^\star_{cl}\right) + \frac{3}{2} \int_{v_1}^{v_2} B^2
  f^\prime r^2 \lambda dv + \int_{v_1}^{v_2} 2|B| |\partial_v
  B| f^\prime dv \nonumber \\ \leq C_f r_{cl}^2 B^2
  \left(t_2,r^\star_{cl}\right) + C_f E_H +  16\int_{v_1}^{v_2} \left(\partial_v B\right)^2
  \frac{\left(f^\prime\right)^2}{f^{\prime \prime}} r^3 dv +
  \int_{v_1}^{v_2} \left[ \frac{B^2}{4} \left(f^{\prime}\right)_{,v}
  \right]r^3 dv
\end{eqnarray}
from which we obtain
\begin{equation}
\int_{v_1}^{v_2} \left[|B| |\partial_v B| f^\prime +
  \frac{B^2}{4} \left(f^{\prime}\right)_{,v} \right]r^3 dv \leq  C_f r_{cl}^2 B^2
  \left(t_2,r^\star_{cl}\right) + E_H
\end{equation}
because
\begin{equation}
\frac{\left(f^\prime\right)^2}{f^{\prime \prime}} =
\frac{1}{-3x}\frac{M}{\sqrt{x^2+\sigma^2}}
\end{equation}
is small in the region under consideration
allowing us to estimate the derivative term in the last 
line of (\ref{boho}) by the energy.
\end{proof}

From the previous Lemmata and the identity (\ref{summary}) 
we conclude
\begin{proposition} \label{Xenergycor}
In the region $\mathcal{D}^{r^\star_{cl},u_J}_{[t_1,t_2]}$ we have for any $q \in \mathbb{R}^+$ 
\begin{eqnarray} \label{eqr2}
\bar{I}^X_B \left(\mathcal{D}^{r^\star_{cl},u_J}_{[t_1,t_2]}\right) &\leq&
\left[C\left(r^\star_{cl}\right) +C_f q\right] \Big[E_F\left(t_1\right) +
  E_F\left(t_2\right)\Big] + C\left(\epsilon\right) \frac{M^2}{t_1^2}\nonumber \\ &+& C E_H \left(v_1,v_2\right) + C E_J\left(u_J\right)   + 2C_f \frac{1}{q} \left(\tilde{F}^Y_B\left(v_1\right)\right) \, .
\end{eqnarray}
\end{proposition}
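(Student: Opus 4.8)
The proof of Proposition \ref{Xenergycor} is essentially a matter of bookkeeping: combining the identity (\ref{summary}) with the three Lemmata that have just been established. The plan is as follows.

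Start from (\ref{summary}), which reads
$\bar{I}^X_B = I^X_B - J^{X,u_J}_{error} - H^{X,v=t_1+r^\star_{cl}}_{error} - H^{X,u=t_2-r^\star_{cl}}_{error}$.
By Lemma \ref{erixc} the two $H$-error-terms are non-negative, so they only help: dropping $-H^{X,v_1}_{error}-H^{X,u_2}_{error} \leq 0$ gives
$\bar{I}^X_B \leq I^X_B - J^{X,u_J}_{error}$.
For the $J$-error-term, (\ref{borrower}) of Lemma \ref{erixc} bounds $|J^{X,u_J}_{error}|$ by $C(\sigma)\big(m(u_J,2t_2-u_J)-m(u_J,2t_1-u_J)\big) = C(\sigma) E_J(u_J)$, which is one of the terms appearing on the right-hand side of (\ref{eqr2}).

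Next, use the basic vectorfield identity (\ref{Xbasicidentity}),
$I^X_B = F^X_B(t_2) - F^X_B(t_1) + H^X_{u_H} - J^X_{u_J}$,
and estimate each of the four terms. The boundary terms $F^X_B(t_1)$ and $F^X_B(t_2)$ are controlled by Lemma \ref{Lembount}: for any $q \in \mathbb{R}^+$,
$|F^X_B(t_i)| \leq [C(r^\star_{cl})+C_f q] E_F(t_i) + C_f r^2_{cl} B^2(t_i,r^\star_{cl}) + \tfrac{2}{q} C_f \tilde{F}^Y_B(v=t_i+r^\star_{cl})$.
Here the $\tilde{F}^Y_B$-contribution from $t_1$ gives exactly the $2C_f\tfrac1q\tilde F^Y_B(v_1)$ term in (\ref{eqr2}); the $\tilde{F}^Y_B(v_2)$-contribution coming from the $t_2$-boundary must be absorbed — one either notes that on the $t_2$-arc the relevant $v$-line lies in $r^\star\geq r^\star_{cl}$ so that bootstrap assumption (\ref{intebound2})/(\ref{ceilass}) converts it into a piece of $E_F(t_2)$ or of the horizon/flux terms, or one simply observes that for the $t_2$-slice the ingoing null piece is the one appearing in $E_F(t_2)$ and the Hardy-type manipulation in Lemma \ref{Lembount} can be run so that only an energy contribution survives. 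The $B^2(t_i,r^\star_{cl})$ boundary values are controlled by the pointwise bound of Proposition \ref{pointBthhoz} (or \ref{pointBcent}), hence are $\leq C(c) M^2/v_+^2 \leq C(c)M^2/t_1^2$, which is absorbed into the $C(\epsilon)M^2/t_1^2$ term (for $\delta$, equivalently $\epsilon$ and $c$, small). The term $H^X_{u_H}$ is controlled by Lemma \ref{Lembounh}: $H^X_{u_H} \leq C\,E_H(v_1,v_2) + C_f r_{cl}^2 B^2(t_2,r^\star_{cl})$, again with the last term absorbed as above. Finally $J^X_{u_J}$ is controlled by the intermediate Lemma, $|J^X_{u_J}| \leq C\,E_J(u_J)$. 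Collecting all contributions gives exactly (\ref{eqr2}).

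The only genuinely delicate point — the one I would flag as the main obstacle — is the correct treatment of the $t_2$-boundary term, i.e. making sure that when one applies Lemma \ref{Lembount} at $t=t_2$ one does not introduce an uncontrolled $\tfrac1q\tilde F^Y_B(v_2)$ that cannot be absorbed into the right-hand side of (\ref{eqr2}). This is resolved by observing that the null part of $F^X_B(t_2)$ runs along $v = t_2 + r^\star_{cl}$, which is the outgoing top boundary of the region, and the future-null boundary terms in Lemma \ref{Lembount} were only generated for the $F^X(t_1)$-term "by the geometry of the region" (the $t_2$-arc has no such null piece in $\mathcal{D}^{r^\star_{cl},u_J}_{[t_1,t_2]}$ because the region's future boundary at $r^\star \leq r^\star_{cl}$ is the horizon, handled by $H^X_{u_H}$). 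Thus at $t_2$ only the spacelike $F^X$-terms occur, which are controlled purely by energy $E_F(t_2)$ and the boundary value $B^2(t_2,r^\star_{cl})$, with no $\tilde F^Y_B$-term. Everything else is a routine repackaging: rename $E_F(t_1)+E_F(t_2)$ with coefficient $[C(r^\star_{cl})+C_f q]$, collect the $B^2$-boundary-values plus the $I^X_B$-error integrals $\int B^2[-\tfrac12\mathcal{I}_7+\mathcal{I}_8]dVol$ (estimated via (\ref{error78}) and Proposition \ref{B2bound} together with the pointwise decay of $B$) into $C(\epsilon)M^2/t_1^2$, keep $C E_H(v_1,v_2)$, $C E_J(u_J)$ and $2C_f\tfrac1q\tilde F^Y_B(v_1)$, which is precisely the claimed inequality (\ref{eqr2}). $\hfill\square$
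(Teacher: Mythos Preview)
Your argument is correct and follows the paper's approach: combine (\ref{summary}) with Lemma \ref{erixc}, then (\ref{Xbasicidentity}) with the three boundary Lemmata, absorbing the point values $B^2(t_i,r^\star_{cl})$ via Proposition \ref{pointBcent}; your observation that the null piece of $F^X_B(t_2)$ is empty because $u_H=t_2-r^\star_{cl}$ is exactly right and is what explains why only $\tilde F^Y_B(v_1)$ survives in (\ref{eqr2}).

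One small confusion to clean up: the integral $\int B^2\bigl[-\tfrac12\mathcal{I}_7+\mathcal{I}_8\bigr]\,dVol$ is already part of the positive integrand of $\bar I^X_B$ by (\ref{IXBstsh}) and (\ref{posFfg}); it sits on the \emph{left} of the inequality you are proving, not as a separate error on the right, so there is nothing to absorb. Also, the specific constant $C(\epsilon)$ (rather than $C(c)$) in front of $M^2/t_1^2$ comes from revisiting the \emph{proof} of Proposition \ref{pointBcent}: the pointwise bound there decomposes as a piece bounded by $\sqrt{C(r^\star_{cl})\,E_F(t)}$ plus a $C(\epsilon)\sqrt{M}/t$ boundary value on $r^\star=\tfrac{9}{10}t$. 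Squaring, the energy piece is absorbed into the $[C(r^\star_{cl})+C_f q]E_F(t_i)$ coefficients, leaving only the $C(\epsilon)M^2/t_1^2$ remainder; simply quoting the final $C(c)\sqrt{M}/t$ bound would give a $C(c)$ constant here instead, which is harmless downstream but not quite what is stated.
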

\begin{proof}
Add up the estimates of the previous Lemmata. Note that pointwise bounds for $B^2$ on the $r^\star=r^\star_{cl}$-curve  were obtained from the energy $E_f\left(t\right)$ in Proposition \ref{pointBcent}, including a small term $C\left(\epsilon\right) \frac{M^2}{t_1^2}$. Hence we can replace the terms appearing in Lemmata \ref{Lembount} and \ref{Lembounh} by the energy, such that (\ref{eqr2}) is finally obtained.
\end{proof}
\subsection{Controlling the time derivative from $I^X_B\left(\mathcal{B}\right)$}
For the statements of the next two propositions let $R^\star =
 -\frac{1}{3}\sqrt{M} < r^\star_{zero}$ and define the regions
\begin{equation}
\mathcal{B} = \{ t_0 \leq t \leq t_1 \} \cap \{ r^\star_{cl} \leq r^\star \leq R^\star \}
\end{equation}
and the slightly smaller region
\begin{equation}
\mathcal{B}_\varsigma = \{ t_0 \leq t \leq t_1 \} \cap \{ r^\star_{cl} +
  \varsigma \leq r^\star \leq R^\star - \varsigma \} \, .
\end{equation}
where $\varsigma \leq \frac{1}{6} \sqrt{M}$ is some positive number (in units of $\sqrt{M}$), say $\varsigma = \frac{1}{10} \sqrt{M}$. Define also a smooth cut off function $\chi$ supported in $\left[r^\star_{cl}, R^\star \right]$ and equal to $1$ for $\{ r^\star_{cl} + \varsigma \leq r^\star \leq R^\star - \varsigma \}$. Note that 
with the definition of $r^\star_{cl}$ (cf.~section \ref{rstarcldef}) we have $r^\star_{cl} + \varsigma <
r_Y^\star$ and also $R^\star - \varsigma \geq -\frac{1}{2}\sqrt{M}$.
\begin{proposition} \label{B2cor}
We have, for large $t_1$,
\begin{eqnarray} \label{ineq1}
\frac{1}{M^\frac{3}{2}} \int_{\mathcal{B}} B^2 dVol \leq
C\left(\sigma\right) \bar{I}^X_B
\left(\mathcal{B}^{r^\star_{cl},R^\star}_{[t_1,t_2]}\right)
\end{eqnarray}
and
\begin{eqnarray} \label{ineq2}
\frac{1}{\sqrt{M}} \int_{\mathcal{B}} \left(\partial_{r^\star} B\right)^2 dVol \leq
C\left(r^\star_{cl},\sigma\right) \bar{I}^X_B \left(
\mathcal{B}^{r^\star_{cl},R^\star}_{[t_1,t_2]}\right) \, .
\end{eqnarray}
Furthermore, we can control the time derivative 
\begin{eqnarray} \label{ineq3}
\frac{1}{\sqrt{M}}\int_{\mathcal{B}_\varsigma} \left(\partial_t B\right)^2 dVol &\leq&
C\left(r^\star_{cl},\sigma,\chi\right) \Big[ \bar{I}^X_B\left(\mathcal{B}^{r^\star_{cl},R^\star}_{[t_1,t_2]}\right) 
\nonumber \\ &+&  m \left(t_2,R^\star\right) - m
  \left(t_2,r^\star_{cl}\right) +  m \left(t_1,R^\star\right) - m
  \left(t_1,r^\star_{cl}\right) \Big]  \, .
\end{eqnarray}
\end{proposition}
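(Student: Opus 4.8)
\textbf{Proof proposal for Proposition \ref{B2cor}.}

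The plan is to exploit the positivity of the integrand of $\bar{I}^X_B$ established in the previous subsection, together with the choice of the function $f$ and the shift parameter $\sigma$ fixed above. First I would prove the two pointwise-type bounds \eqref{ineq1} and \eqref{ineq2}. Recall from \eqref{IXBstsh} and \eqref{posFfg} that the integrand of $\bar{I}^X_B$ over the relevant region is bounded below by $\frac{2f^\prime}{\Omega^2}\left(\partial_{r^\star}B+\xi B\right)^2 + \frac{c(\sigma)}{r^3}B^2$, which is non-negative. On the region $\mathcal{B} = \{ t_0 \leq t \leq t_1 \} \cap \{ r^\star_{cl} \leq r^\star \leq R^\star \}$ we have $x \leq -\sqrt{M}$, so $f^\prime = M^{n-\frac{1}{2}}\left(x^2+\sigma^2\right)^{-n} \geq c(\sigma) M^{-\frac{1}{2}}$ is bounded below by a positive multiple of $M^{-1/2}$ with $n=\frac{3}{2}$; hence $\frac{B^2}{M^{3/2}} \leq C(\sigma) \frac{c(\sigma)}{r^3} B^2$ after absorbing the (uniformly bounded, uniformly positive) powers of $r$ and $\Omega^2$ valid in this compact-in-$r^\star$ strip. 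Integrating this over $\mathcal{B} \subset \mathcal{B}^{r^\star_{cl},R^\star}_{[t_1,t_2]}$ gives \eqref{ineq1}. For \eqref{ineq2} I would use the elementary inequality $\left(\partial_{r^\star}B\right)^2 \leq 2\left(\partial_{r^\star}B+\xi B\right)^2 + 2\xi^2 B^2$; on $\mathcal{B}$ the quantity $\xi$ defined in \eqref{xidef} is pointwise bounded (since $\lambda-\nu$ is bounded, $r \geq r_-$ is bounded below, and $\frac{nx}{x^2+\sigma^2}$ is bounded), so $\xi^2 B^2 \leq C(r^\star_{cl},\sigma) B^2$, which is controlled by \eqref{ineq1}; and $\frac{2f^\prime}{\Omega^2}\left(\partial_{r^\star}B+\xi B\right)^2$ is precisely (a piece of) the integrand of $\bar{I}^X_B$, with $f^\prime$ again bounded below by $c(\sigma)M^{-1/2}$ on $\mathcal{B}$, so $\frac{1}{\sqrt{M}}\left(\partial_{r^\star}B+\xi B\right)^2 \leq C(\sigma)\frac{2f^\prime}{\Omega^2}\left(\partial_{r^\star}B+\xi B\right)^2$ (absorbing the bounded $\Omega^2$ factor). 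Integrating yields \eqref{ineq2}.

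The genuinely new step, and the one I expect to be the main obstacle, is \eqref{ineq3}: controlling the \emph{time} derivative $\left(\partial_t B\right)^2$, which does not appear in $\bar{I}^X_B$ at all and must be recovered indirectly. The standard device here is a commuted energy / local Green's-identity argument. I would introduce the cut-off function $\chi$ supported in $[r^\star_{cl},R^\star]$ and equal to $1$ on $[r^\star_{cl}+\varsigma, R^\star-\varsigma]$, and consider the multiplier $\chi^2 \partial_t B$ applied to the wave equation \eqref{nonlinB} (equivalently \eqref{Bevol}) in the region $\mathcal{B}$. Writing $\Box B = -\frac{4}{3r^2}\left(e^{-8B}-e^{-2B}\right)$ and integrating $\chi^2 (\partial_t B)(\Box B)\, dVol$ by parts over $\{t_0 \leq t \leq t_1\}\cap\{r^\star_{cl}\leq r^\star\leq R^\star\}$, the principal term produces $\int \chi^2\left[\left(\partial_t B\right)^2 + \left(\partial_{r^\star}B\right)^2\right] r^3\,\Omega^2\,dt\,dr^\star$ up to: (i) boundary terms on the two null segments $r^\star=r^\star_{cl}$, $r^\star=R^\star$ and the two constant-$t$ arcs $t=t_1$, $t=t_2$ — these are the terms $m(t_i,R^\star)-m(t_i,r^\star_{cl})$ (energy through the constant-$t$ arcs, which is manifestly of the stated form) and fluxes through the $r^\star=r^\star_{cl}$, $r^\star=R^\star$ null/timelike boundaries, which are again energy fluxes controlled by the same differences by monotonicity of $m$ (cf. \eqref{dum}, \eqref{vum}) and by the bounds $r^\star_{cl}+\varsigma < r^\star_Y$, $R^\star-\varsigma\geq-\frac{1}{2}\sqrt M$; (ii) a commutator term $\int \left(\partial_{r^\star}\chi^2\right)\left(\partial_t B\right)\left(\partial_{r^\star}B\right) r^3 \Omega^2$ supported in $[r^\star_{cl},r^\star_{cl}+\varsigma]\cup[R^\star-\varsigma,R^\star]$, which by Cauchy--Schwarz is bounded by $C(\chi)\int_{\mathcal{B}}\left[\left(\partial_t B\right)^2 + \left(\partial_{r^\star}B\right)^2\right]dVol$ — but the $\left(\partial_t B\right)^2$ half of this must be reabsorbed, and this is where one has to be careful: one only gets $\int_{\mathcal{B}_\varsigma}\left(\partial_t B\right)^2$ on the left (via $\chi=1$ there) while the commutator sees the full $\mathcal{B}$, so the reabsorption works only after noting that on $[r^\star_{cl},r^\star_{cl}+\varsigma]\cup[R^\star-\varsigma,R^\star]$ the $\left(\partial_t B\right)^2$ contribution is itself bounded by the energy flux through a nearby $r^\star=$const curve, i.e. by the $m$-differences, rather than by the interior quantity; (iii) the inhomogeneous term $\int \chi^2 \left(\partial_t B\right)\cdot\frac{4}{3r^2}\left(e^{-8B}-e^{-2B}\right) dVol$, which by \eqref{Khelp}, Cauchy--Schwarz, and $\left(e^{-8B}-e^{-2B}\right)^2 \leq K\left(1-\frac{2}{3}\rho\right)$ is bounded by $\epsilon\int_{\mathcal{B}}\left(\partial_t B\right)^2 dVol$ (hence reabsorbable for small data) plus $\int_{\mathcal{B}} r\left(1-\frac{2}{3}\rho\right)dVol$, which is again controlled by the mass fluctuation / the $m$-differences. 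Finally the remaining interior term $\int \chi^2\left(\partial_{r^\star}B\right)^2 dVol$ is controlled by \eqref{ineq2}, i.e. by $\bar{I}^X_B\left(\mathcal{B}^{r^\star_{cl},R^\star}_{[t_1,t_2]}\right)$.

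Assembling the above, after choosing the data small enough to reabsorb the $\epsilon$- and commutator-$\left(\partial_t B\right)^2$-terms into the left-hand side (which requires first noting $\mathcal{B}_\varsigma \subset \mathcal{B}$ and that on the $\varsigma$-collars the time-derivative energy is already of the asserted form), I obtain
\[
\frac{1}{\sqrt{M}}\int_{\mathcal{B}_\varsigma}\left(\partial_t B\right)^2 dVol \leq C\left(r^\star_{cl},\sigma,\chi\right)\Big[\bar{I}^X_B\left(\mathcal{B}^{r^\star_{cl},R^\star}_{[t_1,t_2]}\right) + m\left(t_2,R^\star\right)-m\left(t_2,r^\star_{cl}\right)+m\left(t_1,R^\star\right)-m\left(t_1,r^\star_{cl}\right)\Big],
\]
which is exactly \eqref{ineq3}. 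The weight $\frac{1}{\sqrt M}$ on the left is consistent with the weight $\frac{2f^\prime}{\Omega^2}\sim M^{-1/2}$ carried by the derivative term inside $\bar{I}^X_B$ on $\mathcal{B}$, so no dimensional mismatch arises. The only subtlety worth flagging in the writeup is the piecewise-$C^1$ regularity of the coordinate system at the cusp point $B$ — but $\mathcal{B}$ lies entirely in $r^\star \geq r^\star_{cl}$, away from the null line $v = v(B)$ on which $\frac{\Omega_{,v}}{\Omega}$ jumps, so all integrations by parts above involve only the continuous quantities $\kappa,\gamma,\partial_t B,\partial_{r^\star}B$, and the formal manipulations are justified exactly as in Appendix \ref{reggree}.
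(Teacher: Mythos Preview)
Your arguments for \eqref{ineq1} and \eqref{ineq2} are correct and coincide with the paper's: \eqref{ineq1} is exactly Proposition~\ref{B2bound} restricted to the compact strip $\mathcal{B}$ (where $r^3\sim M^{3/2}$), and \eqref{ineq2} follows by the triangle inequality from the $\frac{2f'}{\Omega^2}(\partial_{r^\star}B+\xi B)^2$ piece of $\bar I^X_B$ together with \eqref{ineq1}, using that $f'/\Omega^2$ is bounded above and below on $\mathcal{B}$ by constants depending on $r^\star_{cl}$ and $\sigma$.

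Your argument for \eqref{ineq3}, however, has a genuine gap. The multiplier $\chi^2\,\partial_t B$ applied to $\Box B$ is the \emph{energy} multiplier: integrating $\chi^2(\partial_t B)(\Box B)\,dVol$ by parts does \emph{not} produce a spacetime (bulk) integral of $(\partial_t B)^2+(\partial_{r^\star}B)^2$. What it produces is the energy identity: the quantity $\int\chi^2\big[(\partial_t B)^2+(\partial_{r^\star}B)^2\big]r^3\,dr^\star$ appears only as a \emph{boundary} term on the constant-$t$ slices $t=t_1$ and $t=t_2$, not as a bulk integral over $\mathcal{B}$. The only genuine bulk contributions from the principal part carry the small factor $\partial_t(r^3)=3r^2(\lambda+\nu)$ or come from the commutator $(\partial_{r^\star}\chi^2)(\partial_t B)(\partial_{r^\star}B)$, neither of which is coercive in $(\partial_t B)^2$. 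So there is nothing on the left-hand side to absorb into, and the scheme collapses.

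The paper recovers the bulk $(\partial_t B)^2$ integral by a different (and standard) device: Green's identity applied to $\chi$ and $B^2$. The point is that
\[
-\tfrac{1}{2}\,\Box B^2 \;=\; -\nabla^\alpha B\,\nabla_\alpha B \;-\; B\,\Box B
\;=\; \frac{1}{\Omega^2}\Big[(\partial_t B)^2-(\partial_{r^\star}B)^2\Big]
\;+\;\frac{4B}{3r^2}\big(e^{-8B}-e^{-2B}\big),
\]
so that $\int_{\mathcal{B}}\chi\,\frac{1}{\Omega^2}(\partial_t B)^2\,dVol$ is expressed as
$\int_{\mathcal{B}}\chi\,\frac{1}{\Omega^2}(\partial_{r^\star}B)^2\,dVol$
plus $\int_{\mathcal{B}}B^2\big[-\tfrac12\Box\chi+\chi\,r^{-2}(8-\varphi_2(B)/B^2)\big]\,dVol$
plus the Green boundary term $\int\chi\,B\,\partial_t B\,r^3\,dr^\star\big|_{t_1}^{t_2}$. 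The first two bulk pieces are controlled by \eqref{ineq2} and \eqref{ineq1} respectively (the $\Box\chi$ term costs only a $C(\chi)$), and the boundary term is handled by $|\chi B\,\partial_t B|\le \tfrac{1}{\sqrt M}B^2+\sqrt M(\partial_t B)^2$, each summand being bounded by $C(r^\star_{cl})\big(m(t,R^\star)-m(t,r^\star_{cl})\big)$. This is where the mass differences in \eqref{ineq3} come from. No reabsorption of a bulk $(\partial_t B)^2$ is needed, and no appeal to fluxes through the timelike boundaries $r^\star=r^\star_{cl},\,R^\star$ is required (those boundaries do not appear because $\chi$ vanishes there).
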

\begin{proof}
Inequality (\ref{ineq1}) is the statement of Proposition \ref{B2bound}. 
Equation (\ref{ineq2}) follows from an application of the 
triangle inequality to the first term in
(\ref{IXBstsh}) and the previous bound on the $B^2$-integral, noting that $\frac{f^\prime}{\Omega^2}$ is bounded above and below in the region under consideration. Finally,
(\ref{ineq3}) is obtained via Green's identity\footnote{There are no
  problems with the differentiability here.}
\begin{eqnarray}
\int_\mathcal{B} \Box \chi \left(-\frac{1}{2} B^2\right) = 
\int_\mathcal{B} \chi \left(-\frac{1}{2}\Box B^2\right) 
- \int \chi B \partial_t B r^3
dr^\star dA_{\mathbb{S}^3} \Bigg|^{t=t_2}_{t=t_1} \, 
\end{eqnarray}
for the $\chi$ defined above, which can be written
\begin{eqnarray}
\int_\mathcal{B} \chi \frac{1}{\Omega^2} \left(\partial_t B\right)^2 dVol &=& \int_\mathcal{B} B^2 \left[-\frac{1}{2}\Box \chi + \frac{\chi}{r^2}
  \left(8 - \frac{\varphi_2\left(B\right)}{B^2} \right) \right] dVol\nonumber \\ &+&
  \int_\mathcal{B} \chi \left(\partial_{r^\star} B\right)^2 \frac{1}{\Omega^2} dVol + \int \chi B \partial_t B r^3 dr^\star
  dA_{\mathbb{S}^3} \Bigg|^{t=t_2}_{t=t_1} \nonumber \, .
\end{eqnarray}
The spacetime integrals on the right hand side are controlled by
 (\ref{ineq1}) and (\ref{ineq2}). For the boundary term in the 
second line we estimate
\begin{eqnarray}
\int \chi B \partial_t B \left(t, \bar{r}^\star \right) \bar{r}^3 d\bar{r}^\star
  dA_{\mathbb{S}^3} &\leq& \int_{r^\star_{cl}}^{R^\star}  \left(\frac{B^2}{\sqrt{M}} + \sqrt{M} \left(\partial_t
  B\right)^2 \right) \left(t, \bar{r}^\star \right) \bar{r}^3 d\bar{r}^\star
  dA_{\mathbb{S}^3} \nonumber \\ &\leq& \sqrt{M} C\left(r^\star_{cl}\right) \left(m \left(t,R^\star\right) - m
  \left(t,r^\star_{cl}\right) \right) \, .
\end{eqnarray}
Putting all this together we obtain 
\begin{eqnarray}
\frac{1}{\sqrt{M}}\int_\mathcal{B_\varsigma} \left(\partial_t
B\right)^2 dVol \leq  \frac{1}{\sqrt{M}}\int_\mathcal{B} \chi \frac{2}{\Omega^2} \left(\partial_t
B\right)^2 dVol
\nonumber \\ 
\leq C\left(r^\star_{cl},\sigma,\chi\right) \Big[ \bar{I}^X_B\left(\mathcal{B}^{r^\star_{cl},R^\star}_{[t_1,t_2]}\right) 
+  m \left(t_2,R^\star\right) - m
  \left(t_2,r^\star_{cl}\right) +  m \left(t_1,R^\star\right) - m
  \left(t_1,r^\star_{cl}\right) \Big]  \nonumber \, .
\end{eqnarray}
\end{proof}
We can summarize this as
\begin{proposition} \label{timecontrol}
The quantity
\begin{equation}
I_B \left(\mathcal{W}\right) = \int_{W} \left[\frac{1}{\sqrt{M}} \left( \partial_t B
  \right)^2 + \frac{1}{\sqrt{M}} \left( \partial_{r^\star} B \right)^2 + \frac{1}{M^\frac{3}{2}}B^2  \right] dVol
\end{equation}
satisfies
\begin{eqnarray}
I_B \left(\mathcal{B}_\varsigma\right) &\leq&
C\left(r^\star_{cl},\sigma,\chi\right) \Big[ \bar{I}^X_B\left(\mathcal{B}^{r^\star_{cl},R^\star}_{[t_1,t_2]}\right) 
\nonumber \\ &+&  m \left(t_2,R^\star\right) - m
  \left(t_2,r^\star_{cl}\right) +  m \left(t_1,R^\star\right) - m
  \left(t_1,r^\star_{cl}\right) \Big]  \, .
\end{eqnarray}
\end{proposition}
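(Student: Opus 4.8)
The plan is to combine the three estimates of Proposition \ref{B2cor} into a single statement, which is essentially bookkeeping. Recall the definition of $I_B(\mathcal{W})$ as the sum of the three weighted spacetime integrals of $(\partial_t B)^2$, $(\partial_{r^\star} B)^2$ and $B^2$ over a region $\mathcal{W}$. The only subtlety is that the three bounds in Proposition \ref{B2cor} hold over slightly different regions: inequalities (\ref{ineq1}) and (\ref{ineq2}) are stated over the full $\mathcal{B}$, whereas the time-derivative bound (\ref{ineq3}) holds only over the slightly smaller region $\mathcal{B}_\varsigma$. Since $\mathcal{B}_\varsigma \subset \mathcal{B}$, the integrals of $(\partial_{r^\star}B)^2$ and $B^2$ over $\mathcal{B}_\varsigma$ are bounded by the same quantities over $\mathcal{B}$, so restricting all three estimates to $\mathcal{B}_\varsigma$ loses nothing.

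First I would restrict (\ref{ineq1}) and (\ref{ineq2}) from $\mathcal{B}$ to $\mathcal{B}_\varsigma$, using positivity of the integrands:
\begin{equation*}
\frac{1}{M^\frac{3}{2}} \int_{\mathcal{B}_\varsigma} B^2 \, dVol \leq \frac{1}{M^\frac{3}{2}} \int_{\mathcal{B}} B^2 \, dVol \leq C\left(\sigma\right) \bar{I}^X_B\left(\mathcal{B}^{r^\star_{cl},R^\star}_{[t_1,t_2]}\right)
\end{equation*}
and similarly for the $(\partial_{r^\star}B)^2$-integral. Then I would add these two bounds to (\ref{ineq3}). Since each of the three right-hand sides is a constant (depending on $r^\star_{cl}$, $\sigma$, and $\chi$) times either $\bar{I}^X_B\left(\mathcal{B}^{r^\star_{cl},R^\star}_{[t_1,t_2]}\right)$ alone or that quantity plus the mass-difference terms $m(t_2,R^\star) - m(t_2,r^\star_{cl}) + m(t_1,R^\star) - m(t_1,r^\star_{cl})$, taking the maximum of the three constants and renaming it $C\left(r^\star_{cl},\sigma,\chi\right)$ yields exactly the claimed inequality. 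The mass-difference terms appear only from (\ref{ineq3}), and absorbing the other two right-hand sides into the same bracket is harmless since they are dominated by $\bar{I}^X_B\left(\mathcal{B}^{r^\star_{cl},R^\star}_{[t_1,t_2]}\right)$, which is already present.

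There is essentially no obstacle here: the Proposition is a cosmetic repackaging of Proposition \ref{B2cor}, introduced so that later sections can refer to the combined quantity $I_B(\mathcal{W})$ in one shot rather than invoking three separate inequalities. The only point requiring a word of care is the remark ``for large $t_1$'' in Proposition \ref{B2cor}, which I would carry over verbatim; it is needed there to ensure the central-region coordinate estimates (in particular $v \sim t$ and the relations from Proposition \ref{kgom}) are in force, and it imposes no additional condition beyond what is already assumed. I expect the entire proof to be two or three lines: restrict to $\mathcal{B}_\varsigma$, add, and absorb constants.
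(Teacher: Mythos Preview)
Your proposal is correct and matches the paper's approach exactly: the paper's proof is the single sentence ``This is a direct consequence of Proposition \ref{B2cor},'' and you have simply spelled out the obvious bookkeeping behind that sentence (restrict to $\mathcal{B}_\varsigma$, add the three inequalities, absorb constants).
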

\begin{proof}
This is a direct consequence of Proposition \ref{B2cor}.
\end{proof}
\section{Combining $X$ and $Y$: Horizon Estimates} \label{XconY}
For the next two propositions recall the choice of
$r^\star_Y$ made in section \ref{Ysection}, which
implied in particular that $Y$ is supported only 
in $r^\star \leq -\frac{1}{2} \sqrt{M}$.

\subsection{Controlling $I^Y_B$ from $I^X_B$ and energy}
\begin{proposition} \label{YBcontrol}
Consider the characteristic rectangle $\mathcal{R} = \left[u_1,{u_{hoz}}\right]
  \times [v_1,v_2]$ together with the $r^\star=r^\star_{cl}$
  curve intersecting $\left(u_1,v_1\right)$. Define $u\left(v_2\right)$ by
  $r\left(u(v_2),v_2\right) = r^\star_{cl}$ and $r\left(u,v(u)\right) =
  r^\star_{cl}$. We have the estimates
\begin{eqnarray}
F^Y_B \left(\{ u_1 \} \times \left[v_1,v_2\right] \right) &\leq& C\left(r^\star_{cl}\right)
\left(m \left(u_1,v_2\right) - m \left(u_1,v_1\right) \right)
  \\
\int_{u(v_2)}^{u_{hoz}} \frac{\left(\partial_u B
  \right)^2}{\Omega^2}  du &\leq& C F_B^Y \left(\Big[u_1,
  u_{hoz}\Big] \times \{ v_2 \} \right)  \\
\int_{v_1}^{v(u)}  r B^2  dv &\leq&
 C F_B^Y \left(\{ u \} \times\Big[v_1,
 v(u)\Big] \right) \textrm{\ \ \ for all $ u \geq u_1$ \ \ \ \ } 
\end{eqnarray}
\end{proposition}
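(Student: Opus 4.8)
The three estimates are all instances of the same principle: the boundary terms $F^Y_B$ of the vectorfield $Y$ control, up to uniform constants, the geometric energy flux (and hence differences of the Hawking mass) and certain weighted $L^2$-quantities, \emph{provided} one stays in a region where $\alpha$, $\beta$ are non-negative and the coordinate quantities $\Omega^2$, $\lambda$, $-\nu$, $\kappa$ are under control. The plan is to unpack the formulae (\ref{bF1}) and (\ref{bF2}) for $F^Y_B$ term by term and compare with the energy flux expressions (\ref{enht}), (\ref{enjt}) and (\ref{enbt}), using throughout that $Y$ is supported only in $r^\star \leq -\tfrac{1}{2}\sqrt{M}$, that $\alpha \geq 0$ and $\beta \geq 0$ there by the construction in section \ref{rstarcldef}, and that in this near-horizon region $r$ is uniformly comparable to $r_-$ (by Corollary \ref{hozfluct} and Proposition \ref{rstarrdec}), while $\kappa$ is bounded above and below (Propositions \ref{simpsmall}, \ref{kgom}).

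First, for the ingoing boundary term $F^Y_B(\{u_1\}\times[v_1,v_2])$: this ray lies on $r^\star = r^\star_{cl}$, which sits in the region $r^\star \geq r^\star_{cl}$ where $1-\mu$ is bounded below by a fixed constant (recall $r^\star_{cl} = r^\star_Y - 2\sqrt{M}$ and (\ref{psicond1})). From (\ref{bF1}), $\tfrac{1}{2\pi^2}F^Y_B = 2\int_{v_1}^{v_2}\big(\beta (\partial_v B)^2 + \tfrac{1}{4r^2}(1-\tfrac{2}{3}\rho)\alpha\big)r^3\,dv$, with $\alpha,\beta$ bounded. Since $\beta \leq C$ and $\tfrac{\lambda}{1-\mu}=\kappa$ is bounded, $\beta(\partial_v B)^2 r^3 \leq C\,\tfrac{\lambda}{1-\mu}\cdot\tfrac{(\partial_v B)^2}{\kappa}r^3 \le C\,\tfrac{(\partial_v B)^2}{\kappa}r^3 \cdot\tfrac{1}{1-\mu}$, wait — more simply, $\beta(\partial_v B)^2 r^3 \le C_L\,\lambda\,\tfrac{(\partial_v B)^2}{\kappa\lambda}r^3 \le C(r^\star_{cl})\tfrac{(\partial_v B)^2}{\kappa}r^3$ since $\lambda\ge c>0$ there; and $\tfrac{\alpha}{4r^2}(1-\tfrac23\rho)r^3 \le C\,\tfrac{\lambda}{1-\mu}\,r\,(1-\tfrac23\rho)\cdot\tfrac{1}{\lambda}\cdot(1-\mu) \le C(r^\star_{cl})\,r\lambda(1-\tfrac23\rho)$. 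Both integrands thus match exactly the integrand of $\partial_v m$ in (\ref{vum}) up to the factor $8 + \varphi_1(B)/B^2$ which is bounded (cf. the manipulation in Proposition \ref{decfromk}), so the whole thing is bounded by $C(r^\star_{cl})\big(m(u_1,v_2)-m(u_1,v_1)\big)$.

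Second, for the outgoing boundary term $F^Y_B([u_1,u_{hoz}]\times\{v_2\})$ restricted to $u \ge u(v_2)$, i.e. to $r \le r^\star_{cl}$: from (\ref{bF2}), $\tfrac{1}{2\pi^2}F^Y_B = 2\int\big(\tfrac{\alpha}{\Omega^2}(\partial_u B)^2 + \tfrac{\beta\Omega^2}{4r^2}(1-\tfrac23\rho)\big)r^3\,du$. Both summands are non-negative (since $\alpha,\beta\ge 0$), so to bound $\int_{u(v_2)}^{u_{hoz}}\tfrac{(\partial_u B)^2}{\Omega^2}\,du$ it suffices to bound it by a \emph{constant} times the first summand of $F^Y_B$. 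Here we need $\alpha r^3$ to be bounded below by a positive constant on the segment $[u(v_2),u_{hoz}]$; this holds because $\alpha$ is increasing from $r^\star_C$ outward with $\alpha(r^\star_C)=1$, so on $\overline{DC}$ (which is where this segment lives, once $u_1\le u(v_2)$) $\alpha \equiv 1$ by construction (section \ref{rstarcldef}), and $r\ge r_- > 0$. Thus $\tfrac{(\partial_u B)^2}{\Omega^2} \le \tfrac{1}{r_-^3}\cdot\alpha\tfrac{(\partial_u B)^2}{\Omega^2}r^3$, giving the claim with $C = r_-^{-3}$ (up to the $2\pi^2$ normalization).

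Third, for $\int_{v_1}^{v(u)} r B^2\,dv \le C\,F^Y_B(\{u\}\times[v_1,v(u)])$ for $u\ge u_1$: again by (\ref{bF1}), $\tfrac{1}{2\pi^2}F^Y_B(\{u\}\times[v_1,v(u)]) = 2\int_{v_1}^{v(u)}\big(\beta(\partial_v B)^2 + \tfrac{\alpha}{4r^2}(1-\tfrac23\rho)\big)r^3\,dv$; we drop the first (non-negative) term and keep $2\int \tfrac{\alpha}{4r^2}(1-\tfrac23\rho)r^3\,dv = \tfrac12\int \alpha\,r\,(1-\tfrac23\rho)\,dv$. Since $1-\tfrac23\rho = 8B^2 + \varphi_1(B)$ with $\varphi_1 = O(B^3)$, for small $B$ we have $1-\tfrac23\rho \ge 4B^2$, and $\alpha \ge c(\sigma) > 0$ on the relevant $v$-range (the part of the ray $\{u\}$ inside $r^\star \le r^\star_{cl}$, hence inside the support of $\alpha$ and bounded away from where it vanishes, because $r^\star_{cl}$ is well to the left of $r^\star_C$'s complement — more precisely $\alpha=1$ throughout $\overline{DC}\supset$ the relevant segment). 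Hence $r B^2 \le C(\sigma)\,\alpha\,r\,(1-\tfrac23\rho) \cdot \tfrac{1}{4}$, wait: $\alpha r(1-\tfrac23\rho)\ge c(\sigma)\cdot 4 r B^2$, so $rB^2 \le \tfrac{1}{4c(\sigma)}\alpha r(1-\tfrac23\rho)$, integrating gives $\int rB^2\,dv \le C\,F^Y_B$.

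\textbf{Main obstacle.} The genuinely delicate point, as in all three estimates, is not the algebra but the claim that $\alpha$ (resp. the combination $\alpha r^3$, resp. the relevant positivity) is bounded \emph{below} by a strictly positive constant on exactly the relevant boundary segments — this requires knowing, via bootstrap assumption \ref{boot1} and Proposition \ref{rstarrdec}, that the geometric curves $r^\star = r^\star_{cl}$, $r^\star = r^\star_C$, the horizon, and the characteristic rectangle $\mathcal R$ are located where one expects them relative to the support of $\alpha$ (which is defined in the \emph{bootstrap} coordinate system, whereas $\mathcal R$ may be described in another $\mathcal C_{\tilde\tau}$), together with the uniform control $1-\mu \ge \tilde c > 0$ for $r^\star \ge r^\star_{cl}$ and the two-sided bound on $\kappa$ and $\Omega^2$ near the horizon. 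Once those geometric facts are invoked, each of the three inequalities reduces to dropping a non-negative term and a one-line comparison of integrands with $\partial_v m$; I would present the argument in that order.
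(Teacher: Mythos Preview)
Your proposal is correct and follows exactly the paper's approach: unpacking the definitions (\ref{bF1}) and (\ref{bF2}) of $F^Y_B$ and comparing term-by-term with $\partial_v m$ and with the obvious $L^2$-quantities, using that $\alpha,\beta\ge 0$ are supported in $r^\star\le -\tfrac12\sqrt{M}$ and that $\kappa$, $\lambda$, $1-\mu$, $r$ are under uniform control there. The paper treats this as a one-line consequence of those definitions; your only minor slip is that on the segment $[u(v_2),u_{hoz}]$ one has $\alpha\ge 1$ rather than $\alpha\equiv 1$ (since $\alpha'\ge 0$ and $\alpha(r^\star_C)=1$), but this is exactly what you need for the lower bound on $\alpha r^3$, and your ``main obstacle'' about coordinate-system location of curves is not really an issue here, as $\alpha$ and the rectangle $\mathcal{R}$ are defined in the \emph{same} bootstrap coordinate system $\mathcal{C}_{\tilde\tau}$.
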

\begin{proof}
This follows from the definitions (\ref{bF1}) and (\ref{bF2})
\end{proof}
\begin{proposition} \label{hozest}
Recall the basic dyadic regions $\mathcal{D}^{r^\star_{cl},u_J}_{[t_1,t_2]}$
for the vectorfield $X$ (\ref{basicXreg}) and erect the 
characteristic rectangle 
\begin{equation}
\mathcal{R} = \left[u_1,{u_{hoz}}\right] \times \left[v_1,v_2\right]
\end{equation}
associated with such a region as depicted in the figure.
More precisely, let $u_1=t_1-r^\star_{cl}$, 
$v_1=t_1+r^\star_{cl}$, $v_2=t_2+r^\star_{cl}$.
\begin{figure}[h!]
\[
\input{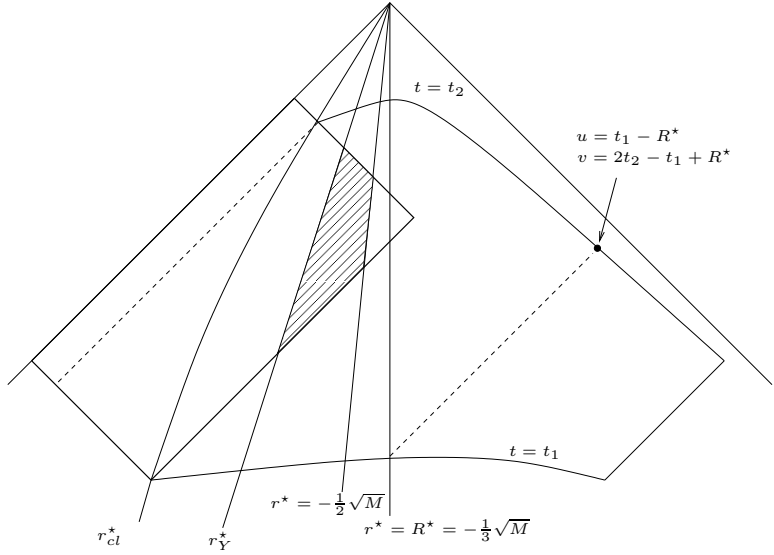} 
\]
\caption{The horizon estimate.} \label{hozestfig}
\end{figure}
\newline 
Let also
\begin{equation}
\mathcal{T} = \{ r^\star \geq r^\star_Y \} \cap \{ v \leq v_2 \} \cap \{ u \geq
u_1 \} \, ,
\end{equation}
and recall that $R^\star = -\frac{1}{3}\sqrt{M}$. We have the inequality
\begin{eqnarray} \label{hozineq}
F^Y_B \left(\{ {u_{hoz}} \} \times \left[v_1,
  v_2\right] \right) + F^Y_B \left(\left[u_1, {u_{hoz}}
  \right] \times \{ v_2 \} \right) + \frac{1}{2}
\tilde{I}^Y_B \left(\mathcal{R} \setminus \mathcal{T}\right) \nonumber
\\ 
\leq C\left(r^\star_{cl},\sigma \right) \Big[ \bar{I}^X_B\left(\mathcal{B}^{r^\star_{cl},R^\star}_{[t_1,t_2]}\right) 
+  m \left(t_2,R^\star\right) - m
  \left(t_2,r^\star_{cl}\right) +  m \left(t_1,R^\star\right) - m
  \left(t_1,r^\star_{cl}\right) \Big]  
\nonumber \\ 
+ C\left(r^\star_{cl}\right)
\Big[m \left(u_1,v_2\right) - m \left(u_1,v_1\right)
  \Big] 
+ F^Y_B\left(\left[u_1, {u_{hoz}} \right] \times \{ v_1 \} \right) \, .
\end{eqnarray}
\end{proposition}
\begin{proof}
Recall the identity (\ref{hozid}):
\begin{eqnarray} 
F^Y_B \left(\{{u_{hoz}}\} \times \left[v_1, v_2\right] \right) + F^Y_B
\left(\left[u_1, {u_{hoz}} \right] \times \{v_2\} \right) + \tilde{I}^Y_B\left(\mathcal{R}\right)
\nonumber \\ = \hat{I}^Y_B
\left(\mathcal{R}\right) + F^Y_B \left(\{u_1\} \times \left[v_1,
  v_2\right] \right)+ F^Y_B \left(\left[u_1, {u_{hoz}} \right] \times
\{v_1\} \right) \, .
\end{eqnarray}
By Proposition \ref{YBcontrol} we control
\begin{equation}
F^Y_B \left(\{ u_1 \} \times \left[v_1,v_2\right] \right) \leq
C\left(r^\star_{cl}\right) \left(m \left(u_1,v_2\right) - m
\left(u_1,v_1\right) \right) \, .
\end{equation}
To establish (\ref{hozineq}) we will show
\begin{eqnarray} \label{fi}
\hat{I}^Y_B \left(\mathcal{R}\right) \leq \frac{1}{2} 
\tilde{I}^Y_B \left(\mathcal{R} \setminus \mathcal{T} \right) \nonumber \\
+ C\left(r^\star_{cl},\sigma \right) \Big[\bar{I}^X_B\left(\mathcal{B}^{r^\star_{cl},R^\star}_{[t_1,t_2]}\right) 
+  m \left(t_2,R^\star\right) - m
  \left(t_2,r^\star_{cl}\right) +  m \left(t_1,R^\star\right) - m
  \left(t_1,r^\star_{cl}\right) \Big] 
\end{eqnarray}
\begin{eqnarray} \label{secondT}
\tilde{I}^Y_B \left(\mathcal{T} \right) \leq C\left(r^\star_{cl},\sigma \right) \Big[\bar{I}^X_B\left(\mathcal{B}^{r^\star_{cl},R^\star}_{[t_1,t_2]}\right) 
 \nonumber \\  +  m \left(t_2,R^\star\right) - m
  \left(t_2,r^\star_{cl}\right) +  m \left(t_1,R^\star\right) - m
  \left(t_1,r^\star_{cl}\right) \Big]
\end{eqnarray}
To see this decompose 
\begin{equation}
\hat{I}^Y_B \left(\mathcal{R}\right) = \hat{I}^Y_B\left(\mathcal{R}
\setminus \mathcal{T} \right) + \hat{I}^Y_B\left(\mathcal{T} \right)
\end{equation}
Since in $\mathcal{R} \setminus \mathcal{T}$ we have by definition 
$r^\star < r^\star_Y$ one can apply (\ref{tildconhat}) to obtain
\begin{equation} 
\hat{I}^Y_B\left(\mathcal{R}
\setminus \mathcal{T} \right) \leq \frac{1}{2} \tilde{I}^Y_B\left(\mathcal{R}
\setminus \mathcal{T} \right) \, .
\end{equation}
On the other hand, in the region $\mathcal{T}$ we have 
\begin{eqnarray} \label{Test}
\hat{I}^Y_B \left(\mathcal{T}\right) &\leq& C \left(r^\star_Y\right) I_B
\left(\mathcal{T} \cap 
\{ r^\star \leq R^\star = -\frac{1}{2}\sqrt{M} \}\right) \, ,
\end{eqnarray}
which follows from the fact that 
$Y$ is only supported for $r^\star \leq -\frac{1}{2}\sqrt{M}$. An
application of Proposition \ref{timecontrol} to the term on the right
hand side of (\ref{Test}) will produce the required second 
term on the right hand side 
of (\ref{fi}). The estimate (\ref{secondT}) is obtained 
completely analogous to (\ref{Test}).
\end{proof}
\begin{proposition} \label{hozestc}
With assumptions and geometry as in Proposition \ref{hozest}
we also have
\begin{eqnarray} 
F^Y_B \left(\{ {u_{hoz}} \} \times \left[v_1,
  v_2\right] \right) + F^Y_B \left(\left[u_1, {u_{hoz}}
  \right] \times \{ v_2 \} \right) + \frac{1}{2}
\tilde{I}^Y_B \left(\mathcal{R} \setminus \mathcal{T}\right) \nonumber
\\ 
\leq C\left(r^\star_{cl},\sigma \right) \Big[
  m\left(u=t_1-R^\star,v=\frac{12}{11}t_2+R^\star\right) - m
  \left(u=t_2-r^\star_{cl},v=t_1+r^\star_{cl}\right) \Big] \nonumber \\
+ \frac{11}{10} F^Y_B\left(\left[u_1,
  {u_{hoz}} \right] \times \{ v_1 \} \right) \nonumber \, .
\end{eqnarray}
\end{proposition}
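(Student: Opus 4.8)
The plan is to deduce Proposition \ref{hozestc} from Proposition \ref{hozest} by controlling the three ``error'' quantities appearing on the right-hand side of (\ref{hozineq}) --- namely $\bar{I}^X_B(\mathcal{B}^{r^\star_{cl},R^\star}_{[t_1,t_2]})$, the mass differences of the form $m(t_i,R^\star)-m(t_i,r^\star_{cl})$, and $m(u_1,v_2)-m(u_1,v_1)$ --- all by a single mass difference $m(u=t_1-R^\star,v=\tfrac{12}{11}t_2+R^\star) - m(u=t_2-r^\star_{cl},v=t_1+r^\star_{cl})$, at the cost of replacing the coefficient $1$ of the past-null flux $F^Y_B([u_1,u_{hoz}]\times\{v_1\})$ by $\tfrac{11}{10}$. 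The geometry is crucial here: the region $\mathcal{B}^{r^\star_{cl},R^\star}_{[t_1,t_2]}$ lies entirely in $r^\star_{cl}\le r^\star \le R^\star = -\tfrac13\sqrt M$, a compact strip away from infinity and away from $r^\star=0$, so all the $r$-weights occurring in $\bar{I}^X_B$ and in the $m(t_i,R^\star)-m(t_i,r^\star_{cl})$ terms are comparable to $\sqrt M$ up to fixed constants.

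First I would recall from Proposition \ref{B2bound} (together with the pointwise positivity of the integrand of $\bar{I}^X_B$ established in section \ref{rstarzero}) that, modulo the error-boundary terms, $\bar{I}^X_B(\mathcal{D}^{r^\star_{cl},u_J}_{[t_1,t_2]})$ and hence $\bar{I}^X_B(\mathcal{B}^{r^\star_{cl},R^\star}_{[t_1,t_2]})$ is controlled via Proposition \ref{Xenergycor} by $E_F(t_1)+E_F(t_2)$, $E_H(v_1,v_2)$, $E_J(u_J)$, a term $C(\epsilon)\frac{M^2}{t_1^2}$, and $\tilde{F}^Y_B(v_1)$; the $\tilde{F}^Y_B(v_1)$ term is itself bounded by Proposition \ref{YBcontrol} by a mass difference across the incoming ray $v=v_1$. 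The point is to choose $u_J$ appropriately --- concretely $u_J = t_1 - R^\star$, which is to the future of $t_1-r^\star_{cl}$ since $R^\star > r^\star_{cl}$ --- so that the region $\mathcal{D}^{r^\star_{cl},u_J}_{[t_1,t_2]}$ relevant to $\bar{I}^X_B$ fits inside the causal diamond $J^+(\{u=t_2-r^\star_{cl},v=t_1+r^\star_{cl}\}) \cap J^-(\{u=t_1-R^\star,v=\tfrac{12}{11}t_2+R^\star\})$. Then every energy flux $E_F, E_H, E_J$ appearing above, being a flux through an achronal piece of the boundary of this diamond, is --- by the monotonicity of the Hawking mass (\ref{dum})--(\ref{vum}) and the $T$-vectorfield energy identity of section \ref{HawkT} --- dominated by the total mass difference between the future tip and the past tip of the diamond, i.e. by $m(t_1-R^\star,\tfrac{12}{11}t_2+R^\star) - m(t_2-r^\star_{cl},t_1+r^\star_{cl})$. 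The various mass differences $m(t_i,R^\star)-m(t_i,r^\star_{cl})$ and $m(u_1,v_2)-m(u_1,v_1)$ are handled the same way, since each is a mass increment along an achronal segment inside that same diamond.

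The remaining subtlety is the bookkeeping of the coefficient in front of $F^Y_B([u_1,u_{hoz}]\times\{v_1\})$. In Proposition \ref{hozest} this flux appears with coefficient $1$, but when I re-express $\tilde{F}^Y_B(v_1) = \int r^3 \frac{(B_{,u})^2}{-\nu}\,du$ along $v=v_1$ via Proposition \ref{YBcontrol} and Proposition \ref{Xenergycor} (which carries the free parameter $q \in \mathbb{R}^+$), a small constant multiple of the past-null $Y$-flux reappears on the right-hand side; absorbing $C_f\cdot\frac{1}{q}\tilde F^Y_B(v_1)$ --- with $C_f \ll 1$ and $q$ chosen large --- back into $\tfrac{11}{10}F^Y_B([u_1,u_{hoz}]\times\{v_1\})$ costs only the extra $\tfrac{1}{10}$, since $F^Y_B([u_1,u_{hoz}]\times\{v_1\}) \ge \text{const}\cdot \tilde F^Y_B(v_1)$ up to the $r$-weight comparison valid in the compact strip near the horizon (the $\alpha$-part of the $Y$-flux (\ref{bF2}) dominates $r^3(\partial_u B)^2/\Omega^2$, and $\Omega^2 \sim -\nu$ there by (\ref{kapgamdef}) and boundedness of $\kappa$). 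I expect the main obstacle to be exactly this last absorption step: one must verify that the constants $C_f$, $C(r^\star_{cl})$ arising in Lemma \ref{Lembount} and Proposition \ref{Xenergycor} are genuinely small --- which they are, because $C_f = \sup_{r\le r^\star_{cl}}[f^\prime r] \ll 1$ by the construction of $f$ in section \ref{rstarzero} and because the $r=r_K$ (hence $r^\star_{cl}$) curve can be pushed close to the horizon --- and, simultaneously, that $\tfrac{12}{11}t_2 + R^\star$ is indeed the correct $v$-coordinate of the future tip, i.e. that the outgoing ray $u=t_1-R^\star$ reaches $v=\tfrac{12}{11}t_2+R^\star$ when it exits the strip $r^\star \le R^\star$; this follows from $\tfrac{12}{11}t_2 + R^\star = 2t_2 - (t_2 - \tfrac{1}{11}t_2) + R^\star$ together with the dyadic relation $t_2 = 1.1\,t_1$ used throughout, so that $2t_2 - u_J = 2t_2 - t_1 + R^\star$ lies below $\tfrac{12}{11}t_2 + R^\star$ and the diamond indeed closes as claimed. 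Once these checks are in place, adding the estimates and rearranging yields the stated inequality.
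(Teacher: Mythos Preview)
Your approach is essentially the paper's: enlarge $\mathcal{B}^{r^\star_{cl},R^\star}_{[t_1,t_2]}$ by positivity to ${}^{t_2-r^\star_{cl}}\mathcal{D}^{r^\star_{cl},\,t_1-R^\star}_{[t_1,t_2]}$, apply Proposition~\ref{Xenergycor} with $u_J=t_1-R^\star$, observe that every energy flux $E_F,E_H,E_J$ is dominated by the single Hawking-mass difference between the two causal tips, and absorb the $\tfrac{2C_f}{q}\tilde F^Y_B(v_1)$ contribution into $\tfrac{1}{10}F^Y_B([u_1,u_{hoz}]\times\{v_1\})$ by the second estimate of Proposition~\ref{YBcontrol}. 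Two small corrections: first, Proposition~\ref{YBcontrol} does \emph{not} bound $\tilde F^Y_B$ by a mass difference --- it bounds it by $F^Y_B$, which is exactly what you need for the absorption (you say this correctly later, so this is just a slip in your opening paragraph); second, the smallness of $C_f=\sup_{r^\star\le r^\star_{cl}}[f' r]$ comes from the construction of $f$ with large shift $\sigma$ (so $f'\sim M/(x^2+\sigma^2)^{3/2}$ is tiny for $x\le -\sigma-\sqrt{M}$), not from pushing $r^\star_{cl}$ toward the horizon --- recall $r^\star_{cl}$ is fixed once and for all in section~\ref{rstarcldef}. Finally, $2t_2-u_J=\tfrac{12}{11}t_2+R^\star$ exactly (not merely ``lies below''), which is why that value appears as the $v$-coordinate of the future tip.
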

\begin{proof}
Use the inequality
\begin{equation}
\bar{I}^X_B\left(\mathcal{B}^{r^\star_{cl},R^\star}_{[t_1,t_2]}\right)
\leq \bar{I}^X_B\left({}^{u=t_2-r^\star_{cl}}\mathcal{D}^{r^\star_{cl},t_1-R^\star}_{[t_1,t_2]}\right)
\end{equation}
which is obvious from the positivity of the integrand (compare the 
dashed lines in the previous figure for the regions). Inserting 
the estimate of Proposition \ref{Xenergycor} 
into the inequality (\ref{hozineq}) we obtain the result by 
an appropriate choice of $q$. 
\end{proof}
It is of crucial importance that the constant $C\left(r^\star_{cl}\right)$ 
just depends on the choice of $r^\star_{cl}$ and not on $r^\star_K$.
\subsection{Controlling $F^Y_B$ from $\tilde{I}^Y_B$ and energy, on a
  good slice}
Finally, we are going to control the boundary terms $F^Y$ by $\tilde{I}^Y$ and
the energy on a ``good''  null-slice.
\begin{proposition} \label{goodslice}
With $\mathcal{R}$ and $\mathcal{T}$ as before pick a $\hat{v} \in
[v_1,v_2]$ that satisfies
\begin{equation}
F^Y_B \left(\left[u_1,{u_{hoz}}\right] \times \{ \hat{v} \} \right) =
\inf_{v_1 \leq v \leq v_2} F^Y_B
\left(\left[u_1,{u_{hoz}}\right] \times \{ v \} \right) \, .
\end{equation}
Then
\begin{equation} \label{contest}
F^Y_B \left(\left[u_1,{u_{hoz}}\right] \times \{ \hat{v} \} \right) \leq
C \left(v_2 - v_1 \right)^{-1} \tilde{I}^Y_B \left(\mathcal{R} \setminus
\mathcal{T}\right) + C\left(r^\star_{cl}\right) \left( m \left(u_1, v_2\right) - 
m \left(u_1,v_1 \right) \right) \, .
\end{equation}
\end{proposition}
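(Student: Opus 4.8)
\textbf{Proof plan for Proposition \ref{goodslice}.}

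The plan is to exploit an averaging (pigeonhole) argument on the constant-$v$ slices of the characteristic rectangle $\mathcal{R}$, combined with the $Y$-identity (\ref{hozid}) to bound the average of $F^Y_B$ over $v \in [v_1,v_2]$. First I would integrate the boundary flux $F^Y_B\left(\left[u_1,u_{hoz}\right]\times\{v\}\right)$ over $v\in[v_1,v_2]$. By definition (\ref{bF2}), this double integral is (up to the constant $2\pi^2$ and factors of $2$)
\begin{equation}
\int_{v_1}^{v_2}\!\!\int_{u_1}^{u_{hoz}} \left(\frac{\alpha}{\Omega^2}\left(\partial_u B\right)^2 + \frac{\beta\Omega^2}{4r^2}\left(1-\tfrac{2}{3}\rho\right)\right) r^3\, du\, dv \, ,
\end{equation}
which is essentially a spacetime integral over $\mathcal{R}$ of a non-negative integrand built from the same ingredients that appear in $\tilde{I}^Y_B$ (cf.~(\ref{posY})). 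The point is that on $\mathcal{R}\setminus\mathcal{T}$, where $r^\star\leq r^\star_Y$, the weight $\left(4\alpha\frac{\Omega_{,v}}{\Omega}-\alpha^\prime\right)$ controlling the $\left(\partial_u B\right)^2$-term in $\tilde{I}^Y_B$ is comparable (up to the fixed constants fixed in section \ref{rstarcldef}, in particular via (\ref{condY1})) to $\alpha$ itself, so that the $\left(\partial_u B\right)^2$-part of the averaged flux is bounded by $C\,\tilde{I}^Y_B\left(\mathcal{R}\setminus\mathcal{T}\right)$ after throwing away the contribution of $\mathcal{T}$.

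The second key step is to handle the $\left(1-\tfrac{2}{3}\rho\right)\sim B^2$ contribution to the averaged flux, which is \emph{not} directly dominated by $\tilde{I}^Y_B$; here I would instead use the last inequality in Proposition \ref{YBcontrol}, namely $\int_{v_1}^{v(u)} r B^2\, dv \leq C\, F^Y_B\left(\{u\}\times[v_1,v(u)]\right)$, to trade $B^2$-flux on constant-$v$ slices for $B^2$-flux on a constant-$u$ slice near $r^\star=r^\star_{cl}$, whose value is then controlled by $m\left(u_1,v_2\right)-m\left(u_1,v_1\right)$ via the first inequality of Proposition \ref{YBcontrol} and the fact that $r^\star=r^\star_{cl}$ lies in the region $r^\star\geq r^\star_{cl}$ where the Hawking-mass flux controls all fluxes. (One should be careful that $\alpha,\beta$ are supported only in $r^\star\leq -\tfrac12\sqrt{M}$, so the integrand actually vanishes outside a compact $r^\star$-range and the interchange of orders of integration is harmless.) Putting these two pieces together gives
\begin{equation}
\int_{v_1}^{v_2} F^Y_B\left(\left[u_1,u_{hoz}\right]\times\{v\}\right) dv \leq C\,\tilde{I}^Y_B\left(\mathcal{R}\setminus\mathcal{T}\right) + C\left(r^\star_{cl}\right)\left(m\left(u_1,v_2\right)-m\left(u_1,v_1\right)\right) \, .
\end{equation}

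Finally, since $\hat{v}$ is chosen to realize the infimum of $v\mapsto F^Y_B\left(\left[u_1,u_{hoz}\right]\times\{v\}\right)$ over $[v_1,v_2]$, the mean-value inequality gives
\begin{equation}
F^Y_B\left(\left[u_1,u_{hoz}\right]\times\{\hat{v}\}\right) \leq \frac{1}{v_2-v_1}\int_{v_1}^{v_2} F^Y_B\left(\left[u_1,u_{hoz}\right]\times\{v\}\right) dv \, ,
\end{equation}
and combining with the previous display yields (\ref{contest}). The main obstacle I anticipate is the second step: making precise the claim that the $B^2$/$\left(1-\tfrac23\rho\right)$-part of the averaged $\alpha$-weighted $u$-flux is genuinely controlled by the Hawking-mass difference along $r^\star=r^\star_{cl}$ and by $\tilde{I}^Y_B$, rather than requiring a bound one does not yet have; this needs the monotonicity of $m$ in the region $r^\star\geq r^\star_{cl}$ together with the precise support properties of $\alpha$ and $\beta$ and the comparison inequalities (\ref{condY1})--(\ref{condY2}) fixed once and for all in section \ref{rstarcldef}. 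The $\left(\partial_u B\right)^2$-part, by contrast, is routine once one discards $\mathcal{T}$ and invokes the definition of $\tilde{I}^Y_B$.
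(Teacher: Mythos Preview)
Your high-level strategy (average over $v$ and apply the mean-value inequality) is exactly the paper's, but your decomposition of the averaged flux is the wrong one, and your second step does not work as written.

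The paper does \emph{not} split the averaged flux by field component ($(\partial_u B)^2$ versus $B^2$); it splits each constant-$v$ slice \emph{by region}, setting $u(v)=v-2r^\star_{cl}$ and writing
\[
F^Y_B\!\left([u_1,u_{hoz}]\times\{\hat v\}\right)\ \le\ \inf_{v}F^Y_B\!\left([u(v),u_{hoz}]\times\{v\}\right)\;+\;F^Y_B\!\left([u_1,u(\tilde v)]\times\{\tilde v\}\right),
\]
where $\tilde v$ realizes the infimum of the deep part. On the deep piece $[u(v),u_{hoz}]$ one has $r^\star\le r^\star_{cl}$, and there \emph{both} the $(\partial_u B)^2$-term and the $\beta\Omega^2 B^2$-term of (\ref{bF2}) are pointwise dominated by $4\sqrt{M}$ times the integrand of $\tilde I^Y_B$; this is precisely the content of (\ref{hep}) and uses all of (\ref{condY1})--(\ref{condY3}) (in particular (\ref{condY2}) gives $-\alpha\nu/r\gtrsim \beta\Omega^2/\sqrt{M}$, which handles the $B^2$-weight). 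Averaging the deep part in $v$ then produces the $\tilde I^Y_B(\mathcal R\setminus\mathcal T)$ term with the $(v_2-v_1)^{-1}$ factor. The shallow piece $[u_1,u(\tilde v)]$ lies in $r^\star\ge r^\star_{cl}$, where $\Omega^{-2}$ is bounded by $C(r^\star_{cl})$, so both terms of (\ref{bF2}) are controlled by the Hawking-mass flux, yielding the $m(u_1,v_2)-m(u_1,v_1)$ contribution without any averaging.

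Your second step, by contrast, invokes the bound $\int_{v_1}^{v(u)} rB^2\,dv\le C\,F^Y_B(\{u\}\times[v_1,v(u)])$ from Proposition~\ref{YBcontrol} to handle the $B^2$-contribution; but that inequality controls a $v$-integral of $B^2$ on a constant-$u$ ray, whereas you need to control a $u$-integral of $\beta\Omega^2 B^2$ on constant-$v$ slices. These are not interchangeable, and the mechanism you describe does not close. Likewise, you cannot ``throw away'' the $\mathcal T$-contribution to the averaged $(\partial_u B)^2$-flux, since it is nonnegative; it must be bounded (and is, by the mass term, because $\mathcal T\subset\{r^\star\ge r^\star_Y\}$ where $\Omega^{-2}$ is bounded). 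If you redo the argument with the regional split $r^\star\lessgtr r^\star_{cl}$ instead of the component split, your averaging scheme becomes essentially identical to the paper's.
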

\begin{proof}
Recall that the expression (\ref{bF2}) is manifestly positive. Set
$u\left(v\right) = v-2r^\star_{cl}$ and estimate 
\begin{eqnarray} 
F^Y_B \left(\left[u_1,u_{hoz}\right] \times \{ \hat{v}
\} \right) \leq  \inf_{v_1 \leq v \leq v_2} F^Y_B
\left(\left[u\left(v\right),u_{hoz}\right] \times \{ v \} \right) +
F^Y_B \left(\left[u_1,u\left(\tilde{v}\right)\right] \times \{ \tilde{v} \}
\right) 
\nonumber \\ \leq \frac{1}{v_2-v_1} \int_{v_1}^{v_2} F^Y_B
\left(\left[u\left(v\right),u_{hoz}\right] \times \{ v \} \right) dv +
C\left(r^\star_{cl}\right) \left[m \left(u_1, v_2 \right) - 
m \left(u_1, v_1 \right)\right] \nonumber \, \, .
\end{eqnarray}
where $\tilde{v}$ is the $v$-slice determined by taking the infimum of
$F^Y_B$ in the region $\left[u\left(v\right),u_{hoz}\right]$. 
For the integrand of the first term in the last line we have
\begin{eqnarray} \label{hep}
F^Y_B
\left(\left[v-2r^\star_{cl},u_{hoz}\right] \times \{ v \} \right) \leq 
2\pi^2 \int_{v-2r^\star_{cl}}^{u_{hoz}}du \ \ r^3  4\sqrt{M} 
\Bigg[\frac{\left(\partial_u B\right)^2}{\Omega^2}
 \left(4 \alpha \frac{\Omega_{,v}}{\Omega} 
- \alpha^\prime \right) \nonumber \\ + \beta^\prime
 \left(\partial_v B\right)^2   +\frac{1}{2 r^2} \left(1-\frac{2}{3}\rho\right)
 \left(\frac{\alpha^\prime}{2}  - \frac{\alpha \nu}{r} -
\frac{\beta \lambda \Omega^2}{r} - \frac{1}{2}
\beta^\prime \Omega^2 - 2 \beta
\Omega^2 \frac{\Omega_{,v}}{\Omega}\right) \Bigg] \, ,
\end{eqnarray}
following from the fact that the inequalities (\ref{condY1}), (\ref{condY2}), (\ref{condY3}) hold in $r^\star \leq r^\star_{cl}$. Comparing (\ref{hep}) with (\ref{posY}) 
produces the first term in (\ref{contest}). 
\end{proof}
We will also need a related version of the previous Proposition, which
provides one with a good energy-slice instead of a good $F^Y$-slice:
\begin{proposition}  \label{goodslice2}
With $\mathcal{R}$ and $\mathcal{T}$ as before pick a $\hat{v} \in
[v_1,v_2]$ that satisfies
\begin{equation}
E \left(\left[u_1,{u_{hoz}}\right] \times \{ \hat{v} \} \right) =
\inf_{v_1 \leq v \leq v_2} E
\left(\left[u_1,{u_{hoz}}\right] \times \{ v \} \right) \, .
\end{equation}
Then
\begin{equation} \label{contest2}
E \left(\left[u_1,{u_{hoz}}\right] \times \{ \hat{v} \} \right) \leq
C \left(v_2 - v_1 \right)^{-1} \tilde{I}^Y_B \left(\mathcal{R} \setminus
\mathcal{T}\right) + \left( m \left(u_1, v_2\right) - 
m \left(u_1, v_1\right) \right)
\end{equation} \, .
\end{proposition}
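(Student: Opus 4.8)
\textbf{Proof proposal for Proposition \ref{goodslice2}.}

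The plan is to mimic the proof of Proposition \ref{goodslice}, replacing the $Y$-boundary term $F^Y_B$ by the energy flux $E$ at every stage. First I would observe that the energy flux through a null segment $\left[u_a,u_b\right] \times \{v\}$ is (up to the universal factor $2\pi^2$) given by the manifestly non-negative integrand $r^3 \big(\left(\partial_u B\right)^2 V^u + \frac{r\Omega^2}{4}\left(1-\frac{2}{3}\rho\right)V^v\big)$ with $V = T$, i.e.~by $r^3\big(4\lambda \frac{\left(B_{,u}\right)^2}{\Omega^2} + r\left(-\nu\right)\left(1-\frac{2}{3}\rho\right)\big)$ along constant $v$ (cf.~(\ref{enht})), which is positive since $\lambda \geq 0$, $-\nu > 0$ and $1-\frac{2}{3}\rho > 0$. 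Hence, exactly as in the previous proof, the infimum of $E$ over the slab $[v_1,v_2]$ splits: setting $u(v) = v - 2r^\star_{cl}$ (so that $u(v)$ is the point on that $v$-slice lying on the curve $r^\star = r^\star_{cl}$), monotonicity and positivity give
\begin{equation}
E\left(\left[u_1,u_{hoz}\right]\times\{\hat v\}\right) \leq \inf_{v_1 \leq v \leq v_2} E\left(\left[u(v),u_{hoz}\right]\times\{v\}\right) + E\left(\left[u_1,u(\tilde v)\right]\times\{\tilde v\}\right)
\end{equation}
for an appropriate intermediate slice $\tilde v$, and then the second term is bounded by $m\left(u_1,v_2\right) - m\left(u_1,v_1\right)$ because that null segment lies entirely in $r^\star \geq r^\star_{cl}$, where $r$ is comparable to a constant and the flux integrand is $\sim \partial_v m$; this is precisely the step used to produce the second term of (\ref{contest}).

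Next I would bound the averaged term by the good $Y$-spacetime integral. Replacing the infimum over $[v_1,v_2]$ by the average,
\begin{equation}
\inf_{v_1 \leq v \leq v_2} E\left(\left[u(v),u_{hoz}\right]\times\{v\}\right) \leq \frac{1}{v_2-v_1}\int_{v_1}^{v_2} E\left(\left[u(v),u_{hoz}\right]\times\{v\}\right) dv \, .
\end{equation}
The right-hand side is a spacetime integral over $\mathcal{R}\setminus\mathcal{T}$, with integrand $2\pi^2 r^3\big(4\lambda\frac{\left(B_{,u}\right)^2}{\Omega^2} + r\left(-\nu\right)\left(1-\frac{2}{3}\rho\right)\big)$. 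The key point is that in the region $r^\star \leq r^\star_{cl}$ (hence in particular $r^\star \leq r^\star_Y$) the functions $\alpha,\beta$ are chosen so that the non-negative integrand $\tilde I^Y_B$ of (\ref{posY}) dominates this: the $\left(B_{,u}\right)^2$-coefficient of $\tilde I^Y_B$ is $\left(4\alpha\frac{\Omega_{,v}}{\Omega}r - \alpha' r\right)$, which is bounded below by a positive multiple of $\alpha$ and, crucially, of $\frac{r}{\sqrt M}\cdot\kappa(1-\mu)$ by inequality (\ref{condY1}) (third branch of the max), so it controls $4\lambda\frac{r}{\Omega^2}\left(B_{,u}\right)^2$ up to a constant; likewise (\ref{condY2}) and (\ref{condY3}) are exactly the inequalities guaranteeing that the $\left(B_{,v}\right)^2$-term and the $\left(1-\frac{2}{3}\rho\right)$-term of the energy flux are dominated by the corresponding terms of $\tilde I^Y_B$ — this is the identical mechanism invoked in passing from (\ref{hep}) to (\ref{posY}). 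Therefore
\begin{equation}
\int_{v_1}^{v_2} E\left(\left[u(v),u_{hoz}\right]\times\{v\}\right)dv \leq C\, \tilde I^Y_B\left(\mathcal{R}\setminus\mathcal{T}\right) \, ,
\end{equation}
with $C$ absorbing the factor $\sqrt M$ picked up because $r^3$ appears without extra weights while $\tilde I^Y_B$ carries the $du\,dv$ volume; note no $C\left(r^\star_{cl}\right)$ is needed on this term since the comparison happens strictly inside $r^\star \leq r^\star_{cl}$ where $\alpha,\beta$ live. Combining the two displays yields (\ref{contest2}).

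The main obstacle — really the only non-bookkeeping point — is making sure the comparison of the energy-flux integrand with $\tilde I^Y_B$ genuinely closes with the prescribed constants: one must check that the energy flux through $\left[u(v),u_{hoz}\right]\times\{v\}$, rather than through $\left[u_1,u_{hoz}\right]\times\{v\}$, is what appears, so that the domain of integration is contained in $\{r^\star \leq r^\star_{cl}\}$ where inequalities (\ref{condY1})--(\ref{condY3}) are valid; and one must verify that the leftover segment $\left[u_1,u(v)\right]\times\{v\}$ (lying in $r^\star \geq r^\star_{cl}$) is handled cleanly by Raychaudhuri monotonicity, exactly as the analogous boundary piece in Proposition \ref{goodslice}. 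Since the bootstrap assumption (\ref{intebound2}) is not even needed here — unlike in some of the $X$-estimates — and the only ingredients are the sign of $\tilde I^Y_B$, the defining inequalities for $\alpha,\beta$, and $\partial_u m \leq 0 \leq \partial_v m$, the argument is a routine adaptation and I would expect it to occupy only a few lines once Proposition \ref{goodslice} is in place.
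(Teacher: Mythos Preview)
Your proposal is correct and follows essentially the same route as the paper: positivity of the flux, the split at $u(v)=v-2r^\star_{cl}$, bounding the outer piece by the mass difference on $\{u_1\}\times[v_1,v_2]$, replacing the infimum by the average, and then comparing the energy--flux integrand pointwise with the integrand of $\tilde I^Y_B$ via the inequalities (\ref{condY1})--(\ref{condY3}) in $r^\star\leq r^\star_{cl}$, exactly as in (\ref{hep2}). The only slip is that on the constant-$v$ segment $[u_1,u(\tilde v)]\times\{\tilde v\}$ the flux integrand is $-\partial_u m$, not $\partial_v m$; the bound by $m(u_1,v_2)-m(u_1,v_1)$ is nonetheless the same step the paper invokes.
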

\begin{proof}
Recall that 
\begin{equation} 
E \left(\left[u_1,u_{hoz}\right] \times \{ v \} \right) =
\int_{u_1}^{u_{hoz}} \int_{\mathbb{S}^3} \left(\frac{4\lambda}{\Omega^2}
\left(\partial_u B\right)^2 + \frac{1}{r^2}
\left(1-\frac{2}{3}\rho\right) 
\left(-\nu\right) \right) r^3 du dA_{\mathbb{S}^3} \nonumber
\end{equation}
is manifestly positive. With $u\left(v\right) = v - 2r^\star_{cl}$ estimate
\begin{eqnarray} 
E \left(\left[u_1,u_{hoz}\right] \times \{ \hat{v}
\} \right) \leq  \inf_{v_1 \leq v \leq v_2} E
\left(\left[u\left(v\right),u_{hoz}\right] \times \{ v \} \right) +
E \left(\left[u_1,u\left(\tilde{v}\right)\right] \times \{ \tilde{v} \}
\right) 
\nonumber \\ \leq \frac{1}{v_2-v_1} \int_{v_1}^{v_2} E
\left(\left[u\left(v\right),u_{hoz}\right] \times \{ v \} \right) dv +
 \left[m \left(u_1, v_2\right) - 
m \left(u_1, v_1 \right)\right] \nonumber \, \, .
\end{eqnarray}
The integrand of the first term in the last line can be controlled by
\begin{eqnarray} \label{hep2}
E\left(\left[v-2r^\star_{cl},u_{hoz}\right] \times \{ v \} \right) \leq 
2\pi^2 \int_{v-2r^\star_{cl}}^{u_{hoz}}du \ \ r^3  4\sqrt{M} 
\Bigg[\frac{\left(\partial_u B\right)^2}{\Omega^2}
 \left(4 \alpha \frac{\Omega_{,v}}{\Omega} 
- \alpha^\prime \right) \nonumber \\ + \beta^\prime
 \left(\partial_v B\right)^2   +\frac{1}{2 r^2} \left(1-\frac{2}{3}\rho\right)
 \left(\frac{\alpha^\prime}{2}  - \frac{\alpha \nu}{r} -
\frac{\beta \lambda \Omega^2}{r} - \frac{1}{2}
\beta^\prime \Omega^2 - 2 \beta
\Omega^2 \frac{\Omega_{,v}}{\Omega}\right) \Bigg]  \, ,
\end{eqnarray}
following from the fact that the inequalities 
(\ref{condY1}), (\ref{condY2}), (\ref{condY3}) hold in $r^\star_{cl}$.
%
%
%
%
Comparing (\ref{hep2}) with (\ref{posY}) produces 
the first term in (\ref{contest2}).
\end{proof}
The results of this section are already sufficient to obtain a pointwise 
decay bound for the quantity $\frac{\zeta}{\nu}$. For reasons of presentation 
this is postponed to section \ref{pwEY} but the reader impatient to see the 
argument can turn to the latter section at this point.
\section{The vectorfield K} \label{vecKsec}
\subsection{The basic identity}
The vectorfield $K$ is defined as
\begin{equation}
K = \frac{2}{M} \left(u+a\right)^2 \partial_u + \frac{2}{M} \left(v-a\right)^2 \partial_v \, .
\end{equation}
It is the analogue of the Morawetz vector field in 
four dimensions. In particular, it is 
conformally Killing in five dimensional Minkowski space.\footnote{The
  vectorfield field has been shifted by $a$ for reasons which will
  become apparent later.} We note
\begin{equation}
K^u = \frac{2}{M}\left(u+a\right)^2 \textrm{ \ \ \ } K^v = \frac{2}{M}\left(v-a\right)^2 \textrm{ \ \ \ \ \ } K_u =
-\frac{\Omega^2}{M} \left(v-a\right)^2 \textrm{ \ \ \ } K_v =
-\frac{\Omega^2}{M} \left(u+a\right)^2  
\end{equation}
and
\begin{equation}
u = t - r^\star \textrm{ \ \ \ \ \ } v = t + r^\star \textrm{ \ \ \ \
  \ } \left(v-a\right)^2-\left(u+a\right)^2 = 4t\left(r^\star-a\right)
  \, .
\end{equation}
From (\ref{basicintegrand}) we compute the identity
\begin{eqnarray} \label{sptK}
M\left(-T_{\mu \nu} \pi^{\mu \nu} - \nabla^\beta T_{\beta \delta} K^\delta \right)=  \frac{3}{2r} \left(\nu \left(u+a\right)^2 + \lambda \left(v-a\right)^2\right) \Box B^2 \nonumber \\
+ 32\frac{B^2}{r^2} \left(t-\frac{1}{2r} \left(\nu \left(u+a\right)^2 + \lambda \left(v-a\right)^2
\right) + \frac{1}{4\Omega^2} \left(\left(\Omega^2\right)_{,u} \left(u+a\right)^2 +
\left(\Omega^2\right)_{,v} \left(v-a\right)^2\right)\right) \nonumber \\ 
+ \frac{\varphi_1\left(B\right)}{\Omega^2 r^2} \left(\left(\Omega^2\right)_{,v}
\left(v-a\right)^2 + \left(\Omega^2\right)_{,u} \left(u+a\right)^2
\right) + \frac{4t}{r^2} \varphi_1\left(B\right) \nonumber \\
+ \frac{3}{r^3} \left(\nu \left(u+a\right)^2 +
\lambda \left(v-a\right)^2 \right) \left(\varphi_1\left(B\right) + \varphi_2\left(B\right)\right) - \frac{2}{r^3} \varphi_1\left(B\right) \left(\nu
\left(u+a\right)^2 + \lambda \left(v-a\right)^2 \right) \nonumber \\
\end{eqnarray}
with $\varphi_1$ and $\varphi_2$ defined in (\ref{deltaB}). 
We shall apply the basic vectorfield identity 
in the region (cf.~Figure \ref{figkerror}) 
\begin{equation}
\mathcal{D}^K_{[t_0,\tilde{T}]} = {}^{\tilde{T}-r^\star_K}\mathcal{D}^{r^\star_K,u_0}_{[t_0,\tilde{T}]}
\end{equation}
for any $\tilde{T} < T$ producing the identity
\begin{equation}
\widehat{I}^K_B \left(\mathcal{D}^K_{[t_0,\tilde{T}]}\right)=
\widehat{F}^K_B\left(\tilde{T}\right) - \widehat{F}^K_B\left(t_0\right) +
\hat{H}^{K}_{u_H=\tilde{T}-r_K^\star} + 0
\end{equation}
where
\begin{eqnarray} \label{IXKB}
\widehat{I}^K_B\left(\mathcal{D}^K_{[t_0,\tilde{T}]}\right) = \int_{\mathcal{D}^K_{[t_0,\tilde{T}]}} \Big(-T_{\mu \nu} \pi^{\mu
  \nu}-\nabla^\beta T_{\beta \delta} K^\delta \Big) dVol \, ,
\end{eqnarray}
\begin{eqnarray}
\frac{\widehat{F}^K_B\left(t\right)}{2\pi^2} &=& \frac{1}{M} \int_{r^\star_K}^{t-u_0}
 \Bigg(\left(\partial_u B\right)^2 2\left(u+a\right)^2 + \left(\partial_v B\right)^2
 2\left(v-a\right)^2  \\ 
&& \phantom{XXXX} +\left(\left(u+a\right)^2+\left(v-a\right)^2\right) \frac{\Omega^2}{2r^2}
 \left(1-\frac{2}{3}\rho\right) \Bigg) r^3 dr^\star \nonumber \\
 &+& \frac{1}{M}\int_{t-r_{K}^\star}^{\tilde{T}-r^\star_K}\left[2\left(u+a\right)^2
 r^3 \left(\partial_{u}B\right)^2 + \frac{r \Omega^2}{2}
 \left(v-a\right)^2 \left(1-\frac{2}{3}\rho \right)
 \right]\left(u,t+r^{\star}_{K}\right)du \nonumber
\end{eqnarray}
and
\begin{eqnarray}
\frac{\hat{H}^{K}_{\tilde{T}-r_K^\star}}{2 \pi^2} =\frac{1}{M}
\int_{t_0+r_{K}^{\star}}^{\tilde{T}+r_{K}^{\star}}\left[2\left(v-a\right)^2
  r^3 \left(\partial_{v}B\right)^2 + \frac{r \Omega^2}{2}
  \left(u+a\right)^2 \left(1-\frac{2}{3}\rho \right)
  \right]\left(\tilde{T}-r^\star_K,v \right)dv \nonumber \, .
\end{eqnarray}
Note that the $J$-term vanishes in view of the assumption of compact
support. We are now going to define the renormalized quantities $I^K_B$ and
$E^K_B$ that arise from an application of Green's theorem to the $\Box
B^2$ term in the spacetime integral (\ref{IXKB}). 
The $D$ in the basic identity (\ref{basicgreen}) is here given
by (cf.~appendix \ref{reggree})
\begin{equation} \label{DKreg}
D = \frac{3}{2} \left(\frac{\nu \left(u+a\right)^2 + \lambda 
  \left(v-a\right)^2}{r}\right) \, .
\end{equation}
We compute
\begin{eqnarray}
\frac{3}{2} r^2 \Box \left(\frac{\nu \left(u+a\right)^2 + \lambda
  \left(v-a\right)^2}{r}\right) = t \left(-24
  r \frac{r_{,uv}}{\Omega^2} - 12 \frac{\lambda
  \nu}{\Omega^2} \right) \nonumber \\
+ t\left(\frac{r^\star-a}{r}\right) \Bigg(12r \frac{\lambda}{\Omega^2} r_{,uu}
+12 r r_{,uv} \frac{\lambda}{\Omega^2} 
-24\frac{r^2}{\Omega^2}\left(r_{,uv}\right)_{,v} 
+ 24\frac{\nu \lambda^2}{\Omega^2} \Bigg)
\nonumber  \\
+\left(u+a\right)^2 \Bigg(\left[\lambda+\nu\right]
  \Big(\frac{3r_{,uv}}{\Omega^2} + \frac{6\nu \lambda}{\Omega^2
  r}\Big) -\frac{6r}{\Omega^2} 
\left(\left(r_{,uv}\right)_{,v} + \left(r_{,uv}\right)_{,u}
  \right)\Bigg)
\nonumber  \\
+\left(v-a\right)^2 \Bigg(-\frac{3\lambda}{\Omega^2} r_{,uu} -
  \frac{3\nu}{\Omega^2} r_{,vv} \Bigg)
\end{eqnarray}
and define the bulk term
\begin{eqnarray} \label{Kbulk}
I^K_B \left(\mathcal{D}^K_{[t_0,\tilde{T}]}\right) = \frac{1}{M} \int \int \frac{1}{2} r^3 \Omega^2 du
dv \frac{B^2}{r^2} \Bigg\{ t \left[32 - 24 r \frac{r_{,uv}}{\Omega^2} - 12\frac{\lambda \nu}{\Omega^2} +4\frac{\varphi_1\left(B\right)}{B^2}\right]
\nonumber \\ 
+ t\left(\frac{r^{\star}-a}{r}\right) \Bigg[-64\lambda + 24\frac{\lambda^2
    \nu}{\Omega^2} + 12 \frac{r r_{,uv} \lambda}{\Omega^2} -
  24\frac{r^2}{\Omega^2}\left(r_{,uv}\right)_{,v} +12 r
  \frac{\lambda}{\Omega^2} r_{,uu} 
   \nonumber \\ 
 - 64 \frac{\Omega_{,u}}{\Omega}r +\frac{\varphi_1\left(B\right)}{B^2}
  \left(-8r \frac{\Omega_{,u}}{\Omega} + 4\lambda\right) +12 \lambda \frac{\varphi_2\left(B\right)}{B^2} \Bigg] \nonumber \\
+ \left(u+a\right)^2 \Bigg[\left(\lambda+\nu\right)
\Big(3\frac{r_{,uv}}{\Omega^2} +
\frac{6\nu \lambda}{\Omega^2 r} -\frac{16}{r} +\frac{1}{r}
\frac{\varphi_1\left(B\right)+3\varphi_2\left(B\right)}{B^2}\Big)
\phantom{XXXXXX} \nonumber \\ 
 -6\frac{r}{\Omega^2}
  \left(\left(r_{,uv}\right)_{,v} + \left(r_{,uv}\right)_{,u} \right)
   \Bigg] \nonumber \\
 + \left(v-a\right)^2 \left[\left(16+2\frac{\varphi_1\left(B\right)}{B^2}\right) \left(\frac{\Omega_{,v}}{\Omega}
  +\frac{\Omega_{,u}}{\Omega}\right) -
  3\frac{\lambda}{\Omega^2}r_{,uu} -
  3\frac{\nu}{\Omega^2}r_{,vv}  \right]\Bigg\} \, . \phantom{XXX} 
\end{eqnarray}
In order for the identity (cf.~equation (\ref{finidg}))
\begin{equation} \label{Kid}
I^K_B \left(\mathcal{D}^K_{[t_0,\tilde{T}]}\right) = F^K_B \left(\tilde{T}\right) -  F^K_B \left(t_0\right) + H^K_{\tilde{T}-r_K^\star}
\end{equation}
to hold, the boundary terms have to be
\begin{eqnarray} \label{Kbing}
\frac{F^K_B \left(t\right)}{2\pi^2} &=& \frac{\hat{F}^K_B \left(t\right)}{2\pi^2} + \frac{1}{M} \int_{r^\star_K}^{t-u_0} 2B\left(\partial_t B\right) \left(\frac{3}{2r} \left(\nu \left(u+a\right)^2 + \lambda \left(v-a\right)^2 \right)\right)r^3 \left(t,r^\star\right)dr^\star \nonumber \\
&&\phantom{X} -  \frac{1}{M} \int_{r^\star_K}^{t-u_0} B^2 \partial_t \left(\frac{3}{2r} \left(\nu \left(u+a\right)^2 + \lambda \left(v-a\right)^2 \right)\right)r^3 \left(t,r^\star\right)dr^\star \nonumber \\
&&\phantom{X}+ \frac{1}{M}\int_{t-r^\star_K}^{\tilde{T}-r^\star_K} 2B\left(\partial_u
B\right) \left(\frac{3}{2r} \left(\nu \left(u+a\right)^2 + \lambda
\left(v-a\right)^2 \right)\right) r^3 \left(u,t+r^\star_K\right) du
\nonumber \\
&& - \frac{1}{M}\int_{t-r^\star_K}^{\tilde{T}-r^\star_K} B^2 \partial_u \left(\frac{3}{2r} \left(\nu \left(u+a\right)^2 + \lambda \left(v-a\right)^2 \right)\right)r^3 \left(u,t+r^\star_K\right)du
\end{eqnarray}
and
\begin{eqnarray}
\frac{H^K_{\tilde{T}-r^\star_K}}{2\pi^2} &=&
\frac{\hat{H}^K_{\tilde{T}-r^\star_K}}{2\pi^2} -
\frac{1}{M}\int_{t_0+r^\star_K}^{\tilde{T}+r^\star_K} B^2 \partial_v \left(\frac{3}{2r}
\left(\nu \left(u+a\right)^2 + \lambda \left(v-a\right)^2
\right)\right)r^3 \left(\tilde{T}-r^\star_K,v\right) dv \nonumber \\
&+& \frac{1}{M}\int_{t_0+r^\star_K}^{\tilde{T}+r^\star_K} 2B\left(\partial_v
B\right) \left(\frac{3}{2r} \left(\nu \left(u+a\right)^2 + \lambda
\left(v-a\right)^2 \right)\right) r^3
\left(\tilde{T}-r^\star_K,v\right) dv \, .
\end{eqnarray}

\subsection{The Spacetime integral}
Let us turn to an analysis of the integral (\ref{Kbulk}). Besides 
the formulae (\ref{rvvruu}), (\ref{omegaevol}) and (\ref{revol}) the 
following identities will be useful
\begin{eqnarray}
\frac{\left(r_{,uv}\right)_{,v}}{\Omega^2} = &&- \frac{
  \Omega_{,v}}{\Omega} \frac{\mu}{r} + \frac{3 \lambda \mu}{2r^2}
  - \frac{1}{r^3} \left(\frac{\theta^2}{\kappa} + r \lambda
  \left(1-\frac{2}{3}\rho \right) \right) \nonumber \\ &&
- \frac{\Omega_{,v}}{\Omega}
  \frac{2}{3r} \left(\rho-\frac{3}{2}\right) + \frac{\lambda}{3r^2}\left(\rho-\frac{3}{2}\right) + \frac{4}{3 r^{\frac{5}{2}}} \theta
  \left(e^{-2B} - e^{-8B}\right) \, , \nonumber
\end{eqnarray}
\begin{eqnarray}
\frac{\left(r_{,uv}\right)_{,u}}{\Omega^2} = &&- \frac{
  \Omega_{,u}}{\Omega} \frac{\mu}{r} + \frac{3 \nu \mu}{2r^2}
  - \frac{1}{r^3} \left(-4\frac{\lambda}{\Omega^2}\zeta^2 + r \nu
  \left(1-\frac{2}{3}\rho\right) \right) \nonumber \\ &&
- \frac{\Omega_{,u}}{\Omega}
  \frac{2}{3r} \left(\rho-\frac{3}{2}\right) + \frac{\nu}{3r^2}
  \left(\rho-\frac{3}{2}\right) + \frac{4}{3 r^{\frac{5}{2}}} \zeta
  \left(e^{-2B} - e^{-8B}\right)  \, . \nonumber
\end{eqnarray}
The bulk integral can be written
\begin{equation} \label{finform}
I^K_B\left(\mathcal{D}^K_{[t_0,\tilde{T}]}\right) = I^K_{B,main}\left(\mathcal{D}^K_{[t_0,\tilde{T}]}\right)  + I^K_{B,error}\left(\mathcal{D}^K_{[t_0,\tilde{T}]}\right)
\end{equation}
with
\begin{eqnarray} \label{IKBmain}
I^K_{B,main} \left(\mathcal{D}^K_{[t_0,\tilde{T}]}\right)= \frac{1}{M} \int \int \frac{1}{2} r^3 \Omega^2 du dv \frac{B^2}{r^2} \Bigg\{ t
\left[35 + 9\mu + 4\frac{\varphi_1\left(B\right)}{B^2} + 8 \left(\rho-\frac{3}{2}\right) \right] 
\nonumber \\ 
+ t\left(\frac{r^{\star}-a}{r}\right) \Big[24\mu r
  \frac{\Omega_{,v}}{\Omega} - 64r  \frac{\Omega_{,u}}{\Omega} +
  \left(1-\mu\right) \left[-70\kappa - 36\kappa \mu -6r
    \frac{\Omega_{,u}}{\Omega} \right] + P\left(B\right)\Big]\Bigg\}
\end{eqnarray}
\begin{eqnarray} \label{IKBerror}
I^K_{B,error} \left(\mathcal{D}^K_{[t_0,\tilde{T}]}\right)=\frac{1}{M} \int \int \frac{1}{2} r^3
  \Omega^2 du dv \frac{B^2}{r^2}\Bigg\{ \frac{\left(u+a\right)^2}{2}
  \Bigg( Q\left(B\right) \nonumber \\
  + \left(\frac{\Omega_{,v}}{\Omega} +
  \frac{\Omega_{,u}}{\Omega}\right)
  \left[12\mu+ 8 \left(\rho-\frac{3}{2}\right) \right]\nonumber \\ 
 + \frac{\left(\lambda+\nu\right)}{r}
 \left[-35-18\mu-14\left(\rho-\frac{3}{2}\right)+2\frac{\varphi_1\left(B\right) + 3 \varphi_2 \left(B\right)}{B^2}\right] \Bigg)
\nonumber \\
+\left(v-a\right)^2 \Bigg(  \left[\frac{\Omega_{,v}}{\Omega} +
  \frac{\Omega_{,u}}{\Omega}\right] \left(16 +
  \frac{3}{2}\left(1-\mu\right) + 2\frac{\varphi_1\left(B\right)}{B^2}
  \right) + 6 \frac{\lambda \zeta^2}{r^2 \Omega^2}  + 6 \frac{\nu \theta^2}{r^2 \Omega^2} \Bigg) \Bigg\}
\end{eqnarray}
where
\begin{eqnarray}
P\left(B\right) = -8r
  \frac{\Omega_{,u}}{\Omega} \frac{\varphi_1\left(B\right)}{B^2}+
  4\kappa\left(1-\mu\right)
  \frac{\varphi_1\left(B\right)+3\varphi_2\left(B\right)}{B^2} +\frac{24\theta^2}{\kappa r} -24 \frac{\lambda \zeta^2}{\Omega^2 r} \nonumber \\
 - \frac{32}{\sqrt{r}}
\theta \left(e^{-2B}-e^{-8B}\right) + \left(\rho-\frac{3}{2}\right)
\left[-28 \kappa \left(1-\mu\right) + 16\frac{\Omega_{,v}}{\Omega} r 
  \right]
\end{eqnarray}
and
\begin{eqnarray}
Q\left(B\right) = \frac{12\theta^2}{r^2 \kappa} - \frac{48\lambda}{r^2
  \Omega^2} \zeta^2 - 16\left(e^{-2B}-e^{-8B}\right) \frac{\theta+\zeta}{r^\frac{3}{2}} \, .
\end{eqnarray}
Note that
\begin{equation}
|P \left(B\right) | \leq C\left(\epsilon\right)\frac{\sqrt{M}}{r}
\end{equation}
by the pointwise bounds of section \ref{pointwiseE}.

\subsubsection{Estimating $I^K_{B,main}$}
We start with the observation that $I^K_{B,main}$ has a good sign near
the horizon and near infinity:
\begin{lemma} \label{goodsignK}
One can find $\hat{R}^\star$ such that the integrand 
of $I^K_{B,main}$ is negative for $r^\star \geq \hat{R}^\star$. It is also negative for $r^\star \leq r^\star_{cl}$.
\end{lemma}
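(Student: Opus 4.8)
\textbf{Plan of proof for Lemma \ref{goodsignK}.}

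The strategy is to examine the two contributions to the integrand of $I^K_{B,main}$ in (\ref{IKBmain}) separately --- the coefficient of $t$ and the coefficient of $t\left(\frac{r^\star-a}{r}\right)$ --- and to show that in each of the two asymptotic regimes one of these dominates with a definite sign. First I would treat the region $r^\star \geq \hat{R}^\star$ far from the horizon. There the factor $\frac{r^\star-a}{r}$ is large and positive (by Proposition \ref{rstarrdec}, $r^\star \sim r$ and $a$ is a fixed multiple of $\sqrt{M}$, so for $r^\star$ large enough $r^\star - a > 0$ and in fact $\frac{r^\star-a}{r} \to 1$), whereas the bracket multiplying $t$ alone, namely $35 + 9\mu + 4\frac{\varphi_1(B)}{B^2} + 8(\rho - \frac{3}{2})$, is uniformly bounded: $\mu \to 0$, $\rho \to \frac{3}{2}$, and $\varphi_1(B)/B^2 = O(B)$ is small by Corollary \ref{Bcor}. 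So it suffices to show the $t\left(\frac{r^\star-a}{r}\right)$-bracket is negative for large $r^\star$. In that bracket I would insert the asymptotic estimates: $\mu \to 0$, $r\frac{\Omega_{,v}}{\Omega} \to \frac{m}{r^2} \to 0$ and $r\frac{\Omega_{,u}}{\Omega} \to -\frac{m}{r^2} \to 0$ by Corollary \ref{omuomvdecr}, $\kappa \to \frac{1}{2}$ by Proposition \ref{kgom}, $1-\mu \to 1$, and $P(B) = O(\sqrt{M}/r) \to 0$. The leading behaviour of the bracket is then $24 \cdot 0 - 64 \cdot 0 + 1 \cdot (-70 \cdot \frac{1}{2} - 0 - 0) = -35 < 0$, so for $\hat{R}^\star$ chosen large enough (using the decay rates above to control all error terms below, say, $1$) the bracket is bounded above by $-30$, hence the whole integrand is negative there.

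Next I would treat the region $r^\star \leq r^\star_{cl}$ close to the horizon. Here $\frac{r^\star-a}{r}$ is again bounded away from zero but now it is negative (since $r^\star - a < 0$ and $r$ is bounded below by $r_-$), so the $t\left(\frac{r^\star-a}{r}\right)$-term flips sign relative to the sign of its bracket. The key input is that near the horizon $\frac{\Omega_{,u}}{\Omega} < 0$ (Proposition \ref{omucent}, estimate (\ref{omudec2})), and $1-\mu$ is small there but still non-negative, $\mu$ close to $1$, $\kappa$ close to $\frac{1}{2}$ (Proposition \ref{kgom} / the basic estimates), and $\frac{\Omega_{,v}}{\Omega}$ close to $\frac{m}{r^3} > 0$ (Proposition \ref{omvcent}). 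The dominant terms in the bracket are then $24\mu r\frac{\Omega_{,v}}{\Omega} \approx 24\frac{m}{r^2} > 0$ and $-64 r\frac{\Omega_{,u}}{\Omega} > 0$, with the $(1-\mu)$-multiplied terms small; so the whole bracket is positive, and multiplied by the negative $\frac{r^\star-a}{r}$ gives a negative contribution. One then has to check that this negative contribution beats the $t$-coefficient bracket $35 + 9\mu + \dots$, which near the horizon is roughly $35 + 9 + \text{small} \approx 44$ and positive. This is precisely what condition (\ref{condY3}) of section \ref{rstarcldef} was arranged to guarantee: it says exactly that $\left(-\frac{r^\star-a}{r}\right)\left[24\mu r\frac{\Omega_{,v}}{\Omega} + (1-\mu)(-70\kappa - 36\kappa\mu)\right] > 45$, which dominates the bounded positive $t$-coefficient. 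The remaining terms $-64r\frac{\Omega_{,u}}{\Omega}$ and $-6r(1-\mu)\frac{\Omega_{,u}}{\Omega}$ in the full bracket only help (both are non-negative since $\frac{\Omega_{,u}}{\Omega} \leq 0$), and $|P(B)| \leq C(\epsilon)\sqrt{M}/r$ together with the other error terms is absorbed by shrinking $\epsilon$.

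The main obstacle I anticipate is bookkeeping rather than conceptual: carefully verifying that all the error contributions $P(B)$, the $\varphi_1(B)/B^2$ and $\varphi_2(B)/B^2$ pieces, and the $O(M/t^2)$ and $O(M/r^2)$ corrections to $\kappa,\gamma,\mu,\frac{\Omega_{,u/v}}{\Omega}$ coming from Propositions \ref{kgom}, \ref{omvcent}, \ref{omucent} and their corollaries are genuinely subdominant --- in the horizon region one must be slightly delicate because $1-\mu$ itself is a small quantity and one should not divide by it, but (\ref{condY3}) was stated precisely so that only quantities that are controlled (not $\frac{1}{1-\mu}$) appear. Once the asymptotic inputs are substituted and (\ref{condY3}) invoked near the horizon, the sign claims in both regions follow, and one fixes $\hat{R}^\star$ (and, if needed, shrinks the initial data) at the end so that all error bounds hold simultaneously with room to spare.
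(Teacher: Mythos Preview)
Your argument for the region $r^\star \leq r^\star_{cl}$ is correct and essentially matches the paper's proof: invoke (\ref{condY3}) together with $\frac{\Omega_{,u}}{\Omega}<0$ from Proposition \ref{omucent}, and observe that the remaining terms $-64r\frac{\Omega_{,u}}{\Omega}$ and $-6r(1-\mu)\frac{\Omega_{,u}}{\Omega}$ only help.

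However, your argument for large $r^\star$ has a genuine gap. You correctly compute that the first bracket tends to $35$ and the second bracket tends to $-35$, and you note that $\frac{r^\star-a}{r}\to 1$. But then the sum of the two contributions is $35 + 1\cdot(-35) = 0$ at leading order: the $O(1)$ terms cancel \emph{exactly}, and your claim that ``it suffices to show the second bracket is negative'' is false --- that alone does not give a sign. One must expand to the next order in $1/r$. The paper does precisely this: using the tortoise-coordinate relation (\ref{rstarr}) one has $\frac{r^\star}{r} = 1 - \frac{\tilde p}{r} + O(1/r^2)$ with $\tilde p = \sqrt{M_A/2}\,p$, and after expanding the brackets in powers of $1/r$ (using $\mu = O(1/r^2)$, $r\Omega_{,u}/\Omega = O(1/r^2)$, etc.) the integrand becomes $t\bigl(35(a+\tilde p)/r + O(1/r^2)\bigr)$. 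The sign is negative \emph{only} because of the specific choice of the centre $a$ of the Morawetz vectorfield, which was arranged so that $a+\tilde p \approx -\sqrt{M}<0$. Without invoking this cancellation and the role of $a$, your argument does not close.
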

\begin{proof}
The second statement is a consequence of (\ref{condY3}) and the inequality 
$\frac{\Omega_{,u}}{\Omega} < 0$ which follows from Proposition \ref{omucent}. 
For large $r^\star$ on the other hand, we can expand the 
integrand of (\ref{IKBmain}) in powers of $\frac{1}{r}$ using the
results of section \ref{pointwiseE}:
\begin{equation}
|B| \leq C\left(\epsilon\right) \frac{\sqrt{M}}{r} \textrm {\ \ \ and \ \ \ } \kappa =
 \frac{1}{2} + \mathcal{O}\left(\frac{1}{r^2}\right) \textrm{\ \ \ and
 \ \ \ } 
\frac{r\Omega_{,u}}{\mu \Omega} \approx  -\frac{1}{2} \textrm {\ \ 
 and \ \ }  \frac{\Omega_{,v}}{\Omega} = \mathcal{O}\left(\frac{1}{r^2}\right) \nonumber
\end{equation}
and (cf.~identification (\ref{rstarr}))
\begin{equation}
\frac{r^\star}{r} \sim 1-\frac{\tilde{p} \pm \epsilon}{r} +
\mathcal{O}\left(\frac{1}{r^2}\right) \textrm { \ \ \ where \ \ \ } \tilde{p} = \sqrt{\frac{M_A}{2}} p 
=  \sqrt{\frac{M_A}{2}} \left[2\sqrt{2} + \log \left(\frac{2-\sqrt{2}}{2+\sqrt{2}}\right)\right] \nonumber
\end{equation}
to find
\begin{equation}
I^K_{B,main} \left(\mathcal{D}^{K}_{[t_0,\tilde{T}]}\right) = \frac{1}{M} \int \int \frac{1}{2} r^3 \Omega^2
du dv \frac{B^2}{r^2} t \Bigg\{ \frac{35\left(a+\tilde{p}\right) +
  \epsilon}{r} + \mathcal{O}\left(\frac{1}{r^2}\right)\Bigg\} \nonumber \, .
\end{equation}
With the chosen centre $a$ of the $K$ vector field ($a=-\tilde{p}-1$ by 
equation (\ref{pdef})), the integrand will be 
negative in $r^\star \geq \hat{R}^\star$ for some suitably chosen
$\hat{R}^\star$.\footnote{Note that the rest terms are all controlled 
by $\frac{C\left(r_{cl},c\right)}{r^2}$.}
\end{proof}

{\bf Remark:} In particular, we will choose $t_0$ so large 
that $\hat{R}^\star \leq \frac{9}{10}t_0$ holds. \\ 

The idea in estimating the spacetime integral 
$I^K_{B,main}\left(\mathcal{D}^K_{[t_0,\tilde{T}]}\right)$ is 
to decompose the region of integration into dyadic pieces (cf.~footnote \ref{dyadicexplain})
\begin{equation}
\mathcal{D}^{K}_{[t_0,\tilde{T}]} = \sum_{j=0}^{N-1}
\hat{\mathcal{D}}^{K}_{[t_j,t_{j+1}]} \textrm{ \ \ \ with $t_N=\tilde{T}$}
\end{equation}
\begin{equation}
\hat{\mathcal{D}}^{K}_{[t_j,t_{j+1}]} =
\mathcal{D}^{K}_{[t_0,\tilde{T}]} \cap \{ t_j \leq t \leq t_{j+1} \}
\, .
\end{equation}
For each piece
$\hat{\mathcal{D}}^{K}_{[t_j,t_{j+1}]}$ we can control the 
bulk term $I^K_{B,main}$ by the bulk term $\bar{I}^X_B$ 
losing a power of $t$:
\begin{proposition} \label{IKBmaincontrol}
In the region $\mathcal{D}^K_{[t_0,\tilde{T}]}$ we have, for each dyadic piece
\begin{eqnarray} \label{Kconxi}
I_{B,main}^K \left(\hat{\mathcal{D}}^{K}_{[t_j,t_{j+1}]} \right) &\leq& \frac{1}{\sqrt{M}} C\left(r^\star_{cl},\hat{R}^\star,\sigma\right) t_{j+1} \bar{I}^X_B
\left(\mathcal{B}_{[t_j,t_{j+1}]}^{[r^\star_{cl},\hat{R}^\star]} \right)
\nonumber \\
&\leq& \frac{1}{\sqrt{M}} C\left(r^\star_{cl},\hat{R}^\star, \sigma\right) t_{j+1} \bar{I}^X_B
\left({}^{u=\tilde{T}-r^\star_{cl}}\mathcal{D}_{[t_j,t_{j+1}]}^{r^\star_{cl},
  u=\frac{1}{11}t_{j+1}]} \right) \, .
\end{eqnarray}
\end{proposition}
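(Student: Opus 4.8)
The plan is to reduce the estimate to Proposition \ref{B2bound} (equivalently the positive spacetime term $\bar I^X_B$), which controls $\int \frac{B^2}{r^3}\, dVol$, by showing that the integrand of $I^K_{B,main}$ restricted to a dyadic slab $\hat{\mathcal D}^K_{[t_j,t_{j+1}]}$ is pointwise bounded by $C(r^\star_{cl},\hat R^\star,\sigma)\, t_{j+1}\, \frac{B^2}{r^3}\cdot \frac{1}{\sqrt M}$ on the subregion $r^\star_{cl}\le r^\star\le \hat R^\star$, while being of good (negative) sign outside that subregion. First I would invoke Lemma \ref{goodsignK}: the integrand of $I^K_{B,main}$ is negative for $r^\star\ge \hat R^\star$ and for $r^\star\le r^\star_{cl}$, so those contributions can simply be dropped from the upper bound, leaving only the ``middle'' piece $\mathcal B^{[r^\star_{cl},\hat R^\star]}_{[t_j,t_{j+1}]}$. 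This is where the geometric separation built into the choice of $r^\star_{cl}$ (section \ref{rstarcldef}) and of $\hat R^\star$ (chosen with $\hat R^\star\le \tfrac{9}{10}t_0$) pays off: on the middle piece all the auxiliary quantities are comparable to their Schwarzschild values with only $O(1/t)$ or $O(\epsilon)$ corrections.

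Next I would bound the bracketed coefficient of $\frac{B^2}{r^2}$ in (\ref{IKBmain}) on this middle region. Inside $r^\star_{cl}\le r^\star\le \hat R^\star$ we have $r\sim \sqrt M$, hence $\frac{(r^\star-a)}{r}$ is bounded, $t\sim t_{j+1}$ throughout the slab (by the dyadic relation $t_{j+1}=1.1t_j$), and by Propositions \ref{kgom}, \ref{omvcent}, \ref{omucent} together with the pointwise bounds of section \ref{pointwiseE} ($|B|\le C\sqrt M/t$, $|\theta|\le C M^{3/4}/t$, $|P(B)|\le C(\epsilon)\sqrt M/r$, etc.) every term in the coefficient is $O(1)$, uniformly in the coordinate system. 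Therefore the whole integrand is bounded by $\le \frac{1}{M}\cdot\frac12 r^3\Omega^2\cdot\frac{B^2}{r^2}\cdot C\, t_{j+1}$, and since $r^3\Omega^2/r^2 = r\Omega^2\sim \sqrt M\cdot(1-\mu)$ is bounded, after extracting the factor $\frac{1}{\sqrt M}$ this is $\le \frac{1}{\sqrt M} C(r^\star_{cl},\hat R^\star,\sigma)\, t_{j+1}\,\frac{B^2}{r^3}\, dVol$. Integrating over the slab and applying Proposition \ref{B2bound} (with $[t_1,t_2]$ replaced by $[t_j,t_{j+1}]$, $R^\star$ replaced by $\hat R^\star$) yields the first inequality of (\ref{Kconxi}). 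The second inequality is then immediate from positivity of the $\bar I^X_B$-integrand, exactly as in the proof of Proposition \ref{hozestc}: the region $\mathcal B^{[r^\star_{cl},\hat R^\star]}_{[t_j,t_{j+1}]}$ is contained in ${}^{u=\tilde T-r^\star_{cl}}\mathcal D^{r^\star_{cl},u=\frac1{11}t_{j+1}}_{[t_j,t_{j+1}]}$, so enlarging the domain of integration only increases the (manifestly non-negative) bulk term.

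The main obstacle I expect is the bookkeeping needed to confirm that the coefficient of $\frac{B^2}{r^2}$ in $I^K_{B,main}$ really is $O(1)$ with a constant depending only on $r^\star_{cl}$, $\hat R^\star$, $\sigma$ (and $c$) and not on $r^\star_K$ or the size of the bootstrap region — in particular controlling the $t\,(\tfrac{r^\star-a}{r})$-terms, whose naive size is $\sim t\cdot t$, and checking that the dangerous $\tfrac{r^\star-a}{r}\,[24\mu r\frac{\Omega_{,v}}{\Omega}-64r\frac{\Omega_{,u}}{\Omega}+(1-\mu)(\cdots)]$ combination does not secretly reintroduce a bad sign or a large constant on the middle slab. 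This is handled by noting that on $r^\star_{cl}\le r^\star\le \hat R^\star$ the quantity $r^\star-a$ is itself $O(\sqrt M)$ (both $r^\star$ and $a$ are comparable to $\sqrt M$ there), so $t\,(\tfrac{r^\star-a}{r})$ is genuinely $\sim t$, not $\sim t^2$; the remaining factor is then $O(1)$ by the $\Omega$-derivative bounds of Propositions \ref{omvcent}--\ref{omucent}. Once this is in place the rest is routine, and no integration by parts or Green's identity is needed beyond what was already used to define $I^K_B$.
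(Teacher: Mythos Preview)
Your proof is correct and follows essentially the same approach as the paper: use Lemma \ref{goodsignK} to discard the regions $r^\star\le r^\star_{cl}$ and $r^\star\ge \hat R^\star$ where the integrand has a good sign, then on the remaining compact-$r^\star$ slab bound the $O(t)$ coefficient of $\tfrac{B^2}{r^2}$ pointwise and apply Proposition \ref{B2bound}, with the second inequality following from positivity of the $\bar I^X_B$-integrand and domain inclusion. Your additional discussion of why the bracketed coefficient is genuinely $O(t)$ (not $O(t^2)$) on the middle slab is a helpful elaboration of a step the paper leaves implicit; note only that $r^\star-a$ and $r$ need not be $\sim\sqrt M$ but are merely bounded by constants depending on $\hat R^\star$, which is exactly why the final constant is allowed to depend on $\hat R^\star$.
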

\begin{proof}
By the previous Lemma it suffices to show (\ref{Kconxi}) 
with $\hat{\mathcal{D}}^{K}_{[t_1,t_2]}$ replaced by
$\mathcal{B}_{[t_1,t_2]}^{[r^\star_{cl}, \hat{R}^\star]}$
because the integrand of $I_{B,main}^K$ admits a good sign to the left 
of $r^\star_{cl}$ and to the right of $\hat{R}^\star$. 
For the compact $r^\star$-interval the first part of inequality
(\ref{Kconxi}) follows from Proposition \ref{B2bound}, the second
from $\hat{R}^\star \leq \frac{9}{10}t_0$ and the positivity of $\bar{I}^X_B$. 
\end{proof}

\subsubsection{Estimating $I^K_{B,error}$}
In this subsection we are going to show that the contribution of the 
integral $I^K_{B,error}$ can be made as small as we may wish for late times 
by choosing $r^\star_K$ sufficiently close to the horizon and the initial data sufficiently small. 
To achieve this we shall split the integration into different
regions $\mathcal{U}$,$\mathcal{V}$, $\mathcal{W}$ and
$\mathcal{Z}$ defined as follows 
\begin{equation}
\mathcal{U} = \mathcal{D}^{K}_{[t_0,\tilde{T}]} \cap \{ r^\star \leq
 r^\star_K \} \cap \{ u \geq 2v -4r^\star_K \} \, ,
\end{equation}
\begin{equation}
\mathcal{V} = \mathcal{D}^{K}_{[t_0,\tilde{T}]} \cap \{ r^\star \leq
 r^\star_K \} \cap \{ u \leq 2v -4r^\star_K \} \, ,
\end{equation}
\begin{equation}
\mathcal{W} = \mathcal{D}^{K}_{[t_0,\tilde{T}]} \cap \{ r^\star_K \leq
  r^\star \leq \frac{9}{10}t \} \, ,
\end{equation}
\begin{equation}
\mathcal{Z} = \mathcal{D}^{K}_{[t_0,\tilde{T}]} \cap \{ r^\star \geq
\frac{9}{10}t \} \, .
\end{equation}
\begin{figure} 
\[
\input{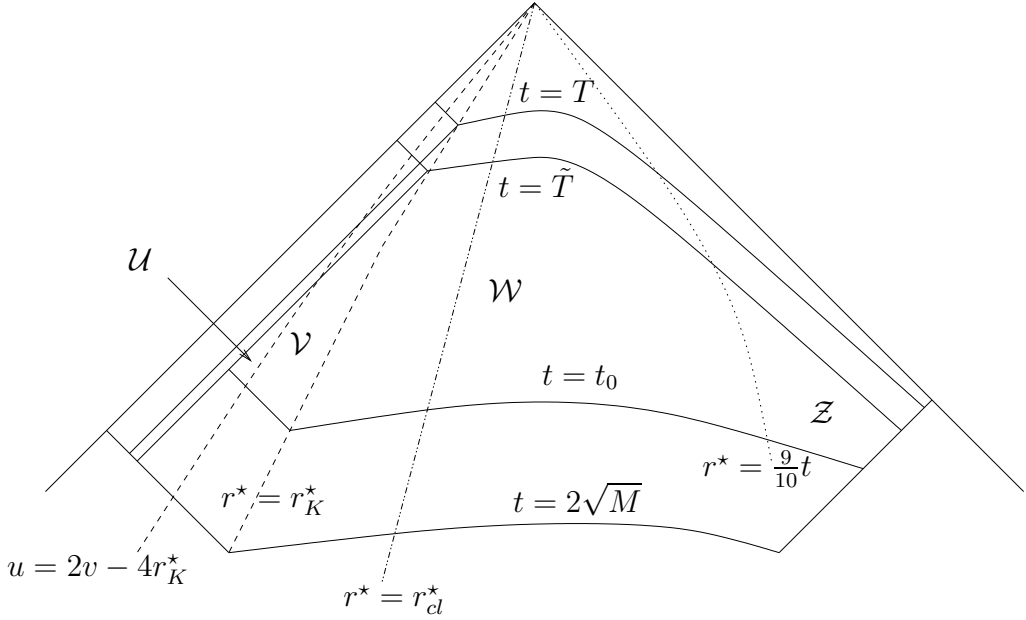} 
\]
\caption{Different regions to control the error-terms of $K$.} \label{figkerror}
\end{figure}
An immediate observation is
\begin{lemma} \label{kapgam}
In all regions we have
\begin{equation}
\kappa + \gamma \geq 0 \, .
\end{equation}
\end{lemma}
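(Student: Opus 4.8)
The plan is to read this off directly from the sign conditions on the first derivatives of $r$ recalled just after the Hawking mass (\ref{Hawkmass}). First I would note that all four regions $\mathcal{U}$, $\mathcal{V}$, $\mathcal{W}$, $\mathcal{Z}$ are subsets of $\mathcal{D}^K_{[t_0,\tilde{T}]}$, hence of the black hole exterior $\mathcal{D}$, on which it was shown in \cite{DafHol} that $\nu = r_{,u} < 0$ and $\lambda = r_{,v} \geq 0$, and on which moreover $1-\mu \geq 0$. Since $\Omega^2 > 0$ everywhere (it is a metric coefficient of the Lorentzian metric (\ref{metansatz})), the definitions (\ref{kapgamdef}) in the form $\kappa = \frac{\Omega^2}{-4\nu}$ and $\gamma = \frac{\Omega^2}{4\lambda}$ show at once that $\kappa > 0$ and $\gamma \geq 0$, whence $\kappa + \gamma > 0 \geq 0$. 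This is all that is needed.

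As an equivalent route I would record $\kappa + \gamma = \frac{\lambda - \nu}{1-\mu}$, observe that $\lambda - \nu = r_{,v} - r_{,u} > 0$ (again by \cite{DafHol}) and $1-\mu \geq 0$, and conclude the same way. I would mention this second form mainly for transparency, but in the writeup I would keep the first form as the primary one, since it makes the positivity of $\Omega^2$ and the non-negativity of $-\nu$ and $\lambda$ manifest without ever dividing by a quantity that could degenerate.

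The only point deserving a word of care is precisely that potential degeneracy: inside $\mathcal{U}$ and $\mathcal{V}$ one approaches the horizon, where $1-\mu \to 0$, so the representation $\kappa + \gamma = (\lambda-\nu)/(1-\mu)$ is singular there. This is not an obstacle — it is circumvented simply by using the $\Omega^2/(-4\nu)$ and $\Omega^2/(4\lambda)$ representations — but it is the one subtlety I would flag explicitly. The lemma will then be used later when the integrand of $I^K_{B,error}$ is split over the regions $\mathcal{U},\mathcal{V},\mathcal{W},\mathcal{Z}$ and one needs $\kappa+\gamma$ to carry a definite sign.
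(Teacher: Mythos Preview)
Your argument is correct and more direct than the paper's. You obtain $\kappa>0$ and $\gamma>0$ straight from the algebraic formulas $\kappa=\Omega^2/(-4\nu)$, $\gamma=\Omega^2/(4\lambda)$ together with the sign information $\nu<0$, $\lambda>0$, $\Omega^2>0$ on the exterior; this uses nothing about the gauge. The paper instead appeals to the coordinate normalisation (where $\kappa=\gamma=\tfrac12$ on the defining $\nabla r$-slice and on the null ray through $B$) and to the sign-preserving evolution equations (\ref{kapevol})--(\ref{gammaevol}) to propagate positivity of $\kappa$ and $\gamma$ from there. Both routes yield $\kappa>0$, $\gamma>0$ and hence the lemma; yours is gauge-independent, while the paper's makes explicit that the conclusion is tied to the bootstrap coordinate system $\mathcal{C}_{\tilde\tau}$.

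One small inaccuracy in your side remark: the representation $\gamma=\Omega^2/(4\lambda)$ is not actually safer at the horizon than $(\lambda-\nu)/(1-\mu)$, since $1-\mu=-4\nu\lambda/\Omega^2$ vanishes precisely when $\lambda$ does (and in these Eddington--Finkelstein coordinates $\nu=-\tfrac12(1-\mu)$ on $\overline{BC}$, so $\nu\to0$ there as well). This does not matter for the lemma, because the four regions $\mathcal{U},\mathcal{V},\mathcal{W},\mathcal{Z}$ lie in $\{u\le\tilde T-r^\star_K\}$, strictly to the past of the horizon, where $\lambda>0$ and $1-\mu>0$ hold throughout and every representation is perfectly well-defined.
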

\begin{proof}
This is a consequence of the choice of coordinates and the monotonicity of
$\kappa$ in $u$ and of $\gamma$ in $v$ manifest in equations (\ref{kapevol})
and (\ref{gammaevol}).
\end{proof}
The next lemma establishes appropriate bounds to control the
error-terms of $I^K_{B,error}$ in equation (\ref{IKBerror}).
\begin{lemma} \label{bootcons}
Recall that by Propositions \ref{pointBcent}, \ref{pointthcent} and
\ref{pointBthhoz} the bound
\begin{equation} \label{aregW}
|B| + M^{-\frac{1}{4}} |\theta| \leq C\left(r^\star_{cl}\right) \frac{\sqrt{M}}{t} 
\end{equation}
holds in $\mathcal{W}$ and 
\begin{equation} \label{aregUV}
|B| + M^{-\frac{1}{4}} |\theta| \leq C\left(r^\star_{cl}\right) \frac{\sqrt{M}}{v} 
\end{equation}
holds in $\mathcal{U} \cup \mathcal{V}$. Assume also
\begin{equation} \label{zetnuas}
\Big|\frac{\zeta}{\nu}\Big| \leq C\left(r^\star_{cl}\right)\frac{M^\frac{3}{4}}{t}
\textrm{\ \ in $\mathcal{W}$ \ \ \ \ \ \ and \ \ \ \ \ \ } \Big|\frac{\zeta}{\nu}\Big|
\leq C\left(r^\star_{cl}\right)\frac{M^\frac{3}{4}}{v} \textrm{\ \ \ in \ $\mathcal{U} \cup \mathcal{V}$.}
\end{equation}
Then we have the following estimates
\begin{itemize} 
\item In region $\mathcal{W}$
\begin{equation} \label{Wbound2}
|Q\left(B\right)| + \Big|6 \frac{\lambda \zeta^2}{r^2 \Omega^2} \Big| +  \Big|6 \frac{\nu \theta^2}{r^2 \Omega^2} \Big| \leq C\left(r^\star_K,c\right) \frac{M^\frac{7}{4}}{r^\frac{3}{2} t^2} \, ,
\end{equation}
\begin{equation} \label{Wbound1}
\Big|\frac{\Omega_{,v}}{\Omega} + \frac{\Omega_{,u}}{\Omega}\Big| \leq C\left(r^\star_K,c\right) \frac{\sqrt{M}}{t^2} \, .
\end{equation}
\item In region $\mathcal{V}$
\begin{equation} \label{Vbound1}
|Q\left(B\right)| +  \Big|6 \frac{\lambda \zeta^2}{r^2 \Omega^2} \Big|
 +  \Big|6 \frac{\nu \theta^2}{r^2 \Omega^2} \Big| \leq \frac{C\left(r^\star_{cl},c\right)}{\sqrt{M} u^2}
 \, ,
\end{equation} 
\begin{equation} \label{Vbound2}
\frac{\Omega_{,v}}{\Omega} + \frac{\Omega_{,u}}{\Omega} +
\frac{C\left(r^\star_{cl},c\right)\sqrt{M}}{u^2} \geq 0 \, .
\end{equation}
\item In region $\mathcal{U}$
\begin{equation} \label{expdec}
-\nu \leq d_1 \exp \left(-\frac{d_2}{2\sqrt{M}}u\right)
\end{equation}
for positive constants $d_1>0$, $d_2>0$.
\item{In region $\mathcal{Z}$}
\begin{equation} \label{Zbound2}
\Big|\frac{\Omega_{,v}}{\Omega} + \frac{\Omega_{,u}}{\Omega}\Big| \leq C\left(r^\star_{cl},c\right)\frac{\sqrt{M}}{r^2} \, .
\end{equation}
\end{itemize}
\end{lemma}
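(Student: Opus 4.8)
\textbf{Plan of proof for Lemma \ref{bootcons}.}
The strategy is uniform across the four regions: express every quantity appearing on the left-hand sides through the equations of motion and the already-established pointwise bounds of Section \ref{pointwiseE}, then read off the claimed decay. In regions $\mathcal{W}$, $\mathcal{U}\cup\mathcal{V}$ the inputs are exactly the pointwise bounds \eqref{aregW}, \eqref{aregUV} on $B$, $\theta$ together with the hypothesised bounds \eqref{zetnuas} on $\zeta/\nu$ (which in the actual application will have been verified by the interplay of $X$ and $Y$, cf.~the forward reference to Proposition \ref{omuomvcent}); in $\mathcal{Z}$ one uses instead the $r$-decay of $B$ and $\theta$ from Propositions \ref{decayinr2} and \ref{thetardece}, and the relation $t\sim r^\star\sim r$ valid there.

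For the $Q(B)$- and $\zeta^2$-, $\theta^2$-estimates \eqref{Wbound2} and \eqref{Vbound1}, I would substitute the definition of $Q(B)$ and use $|e^{-2B}-e^{-8B}|\leq C|B|$ together with $\zeta=\nu\cdot(\zeta/\nu)$, $\theta=r^{3/2}B_{,v}$; since $\lambda/\Omega^2=1/(4\gamma)\sim 1/2$ and $-\nu/\Omega^2=1/(4\kappa)\sim 1/2$ are bounded (Proposition \ref{simpsmall}), each term is controlled by $r^{-2}$ times a product of two decaying factors, giving the stated $t^{-2}$ (resp.\ $u^{-2}$) rate with the appropriate power of $r$ and $M$. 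For the $\frac{\Omega_{,v}}{\Omega}+\frac{\Omega_{,u}}{\Omega}$ estimates \eqref{Wbound1}, \eqref{Vbound2}, \eqref{Zbound2}, the key point is that the $m/r^3$-parts in Propositions \ref{omvcent}, \ref{omucent}, \ref{omuomvcent} (and Corollary \ref{omuomvdecr}) cancel when one adds $\frac{\Omega_{,v}}{\Omega}-\frac{m}{r^3}$ to $\frac{\Omega_{,u}}{\Omega}+\frac{m}{r^3}$, so only the error-terms, which already carry the weights $\sqrt{M}/t^2$, $\sqrt{M}/v^2$, $\sqrt{M}/r^2$ respectively, survive; in region $\mathcal{V}$ (where $u\leq 2v-4r^\star_K$, so $v\sim u$ near the horizon) one uses the one-sided lower bound on $\frac{\Omega_{,u}}{\Omega}$ from \eqref{omudec2} to get the signed inequality \eqref{Vbound2}.

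The exponential-decay estimate \eqref{expdec} in $\mathcal{U}$ is the genuinely different ingredient: here one exploits the redshift, i.e.\ the fact that $\nu$ satisfies, via \eqref{eom1} rewritten using $\kappa$, an equation of the schematic form $\partial_v(\log(-\nu))\sim -\frac{2m\kappa}{r^3}$, and close to the horizon $\frac{2m\kappa}{r^3}\geq d_2/(2\sqrt M)>0$ uniformly; integrating in $v$ along the ingoing direction inside the wedge $\{u\geq 2v-4r^\star_K\}$ (which is precisely the region where $v-v(\text{horizon intersection})$ is comparable to $u$, up to a fixed shift by $r^\star_K$) converts this into the claimed $\exp(-d_2 u/2\sqrt M)$ bound, with $d_1$ absorbing the bounded initial value of $-\nu$ on $r^\star=r^\star_K$. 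I expect this redshift step to be the main obstacle, since one must be careful that the lower bound on $2m\kappa/r^3$ really is uniform (this needs $r_K$ close enough to the horizon and the a-priori mass bound \eqref{inisma}, giving $m\geq m_{min}>0$, together with $\kappa$ bounded below by Proposition \ref{simpsmall}), and that the geometry of the wedge $\mathcal{U}$ indeed makes the affine $v$-length of the ingoing segments proportional to $u$; the remaining estimates are routine substitutions once the pointwise bounds of Section \ref{pointwiseE} and the $\Omega$-derivative bounds of Propositions \ref{omvcent}--\ref{omuomvcent} are in hand.
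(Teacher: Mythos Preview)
Your plan is correct and matches the paper's proof essentially step for step: the $Q(B)$-type bounds are direct substitution of \eqref{aregW}, \eqref{aregUV}, \eqref{zetnuas}; the $\frac{\Omega_{,v}}{\Omega}+\frac{\Omega_{,u}}{\Omega}$ bounds come from cancelling the $\pm m/r^3$ pieces in Propositions \ref{omvcent}--\ref{omuomvcent} and Corollary \ref{omuomvdecr}; and the $\mathcal{U}$ estimate is the redshift integration in $v$ combined with the wedge geometry $v_{r^\star_K}-v\geq \tfrac{1}{2}u$.

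Two small slips to fix. First, your schematic $\partial_v\log(-\nu)\sim -\tfrac{2m\kappa}{r^3}$ has the wrong sign and coefficient: from $r_{,uv}=-\tfrac{\Omega^2}{r^3}m-\ldots$ and $\Omega^2=-4\kappa\nu$ one gets $\partial_v\log(-\nu)=+\tfrac{4\kappa m}{r^3}+\ldots>0$, so $-\nu$ \emph{grows} in $v$; the decay at $(u,v)\in\mathcal{U}$ then comes from integrating \emph{backwards} from the future boundary (the paper starts at $t=T$, your choice of $r^\star=r^\star_K$ works equally well), where $-\nu$ is bounded, down to $v$. Second, for \eqref{Vbound2} you need the \emph{refined} lower bound on $\frac{\Omega_{,u}}{\Omega}$ from Proposition \ref{omuomvcent} (with error $\sqrt{M}/v_+^2$, which becomes $\sqrt{M}/u^2$ since $u\sim v$ in $\mathcal{V}$), not the weaker \eqref{omudec2} which only carries $1/v_+$; you did cite Proposition \ref{omuomvcent} earlier, so this is just a matter of pointing to the right line.
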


\begin{proof}
$\phantom{XXXX}$
\paragraph{The region $\mathcal{W}$} The bound (\ref{Wbound1}) is the
statement of Proposition \ref{omuomvcent}. 
The bound (\ref{Wbound2}) follows directly from the
decay properties (\ref{aregW}) and (\ref{aregUV}).
\paragraph{The region $\mathcal{V}$.} 
From Proposition \ref{omuomvcent} we derive the bound
\begin{equation} \label{omv}
\Big| \frac{\Omega_{,v}}{\Omega} \left(u,v\right) - \frac{m}{r^3}
\left(u,v\right)\Big| \leq  C\left(r^\star_{cl},c\right) \frac{\sqrt{M}}{u^2}
\end{equation}
by observing that $u$ is like $v$ in the region $\mathcal{V}$. \\ \\
The quantity $\frac{\Omega_{,u}}{\Omega}$ is obtained by integrating
(\ref{omegaevol}) written as
\begin{equation}
\partial_v \left(\frac{\Omega_{,u}}{\Omega}\right) = \gamma \left(6
m \frac{\lambda}{r^4} + \frac{2\lambda}{r^2}
\left(\rho-\frac{3}{2}\right) + 3 \frac{\theta}{\kappa}
\frac{\zeta}{\nu} \frac{\lambda}{r^3} \right)
\end{equation}
from the set $L=\{ \{t=T\} \cap \{r^\star_K \leq r^\star \leq
r^\star_{cl} \} \} \cup \{r^\star=r^\star_{cl}\}$
downwards. On $L$ itself we have by Proposition \ref{omuomvcent}
\begin{equation}
\Big| \frac{\Omega_{,u}}{\Omega} \left(u,v\right) - \frac{m}{r^3}
\left(u,v\right)\Big| \leq  C\left(r^\star_{cl},c\right) \frac{\sqrt{M}}{u^2} \, .
\end{equation}
Since $\gamma \leq \frac{1}{2}$ in $\mathcal{V}$ by monotonicity and
moreover $\frac{\Omega_{,u}}{\Omega}$ will always be negative, we can derive
the bound  
\begin{eqnarray} \label{omu}
\frac{\Omega_{,u}}{\Omega} \left(u,v\right) &=&
\frac{\Omega_{,u}}{\Omega}\left(u,v_R\right) - \int_v^{v_R} \gamma  \left(6
m \frac{\lambda}{r^4} + \frac{2\lambda}{r^2}
\left(\rho-\frac{3}{2}\right) + 3 \frac{\theta}{\kappa}
\frac{\zeta}{\nu} \frac{\lambda}{r^3} \right) \left(u,\bar{v}\right)
d\bar{v} \nonumber \\ 
&\geq& -\frac{m}{r^3}
\left(u,v_R\right) -  C\left(r^\star_{cl},c\right) \frac{\sqrt{M}}{u^2} + m\left(u,v_R\right)
\left(\frac{1}{r^3\left(u,v_R\right)} -
\frac{1}{r^3\left(u,v\right)}\right) \nonumber \\ 
&\geq& -\frac{m}{r^3}
\left(u,v\right) -  C\left(r^\star_{cl},c\right) \frac{\sqrt{M}}{u^2}
\end{eqnarray}
in $\mathcal{V}$, where in the last step we used that in $\mathcal{V}$ 
the Hawking mass decays like $\frac{1}{u^2}$. 
Putting the bounds (\ref{omu}) and (\ref{omv}) 
together yields (\ref{Vbound2}). The bound (\ref{Vbound1}) on
$Q\left(B\right)$ follows directly from the pointwise bound (\ref{aregUV}).
\paragraph{The region $\mathcal{U}$}
Integrating the quantity $\nu=\gamma \left(1-\mu\right)$ from the spacelike
$t=T$ curve downwards to any point in the region $\mathcal{U}$ we obtain
\begin{eqnarray}
-\nu\left(u,v\right) = \frac{1}{2} \left(1-\mu\right) \left(u_T,v_T\right) \exp \left(-\int_v^{v_T}
\tilde{f}\left(u,v\right) \left(u_T,\bar{v}\right) d\bar{v} \right) \nonumber \\
= \frac{1}{2} \left(1-\mu\right) \left(u_T,v_T\right) \exp \left(-\int_{v_{r^\star_K}}^{v_T} \tilde{f}\left(u,v\right)
\left(u,\bar{v}\right) d\bar{v} - \int_{v}^{v_{r^\star_K}}
\tilde{f}\left(u,v\right) \left(u,\bar{v}\right) d\bar{v} \right) 
\end{eqnarray}
with
\begin{equation}
\tilde{f}\left(u,v\right) = \frac{4\kappa}{r^3} m + \frac{4}{3}
\frac{\kappa}{r} \left(\rho-\frac{3}{2}\right) \, .
\end{equation}
In both regions $\mathcal{V}$ and $\mathcal{U}$ the quantity $\tilde{f}$ is
clearly positive, bounded below by some $d_2>0$. 
We can estimate, for a point $(u,v)$ in region $U$
\begin{eqnarray}
-\nu\left(u,v\right) \leq \frac{1}{2} \left(1-\mu\right) \left(u_T,v_T\right) \exp \left( - \int_{v}^{v_{r^\star_K}}
\tilde{f}\left(u,v\right) \left(u,\bar{v}\right) d\bar{v} \right) \nonumber \\ \leq \frac{1}{2} \left(1-\mu\right) \left(u_T,v_T\right) \exp\left(-d_2
 \left(v_{r^\star_K}-v\right)\right) \leq \frac{1}{2}
 \left(1-\mu\right) \left(u_T,v_T\right) \exp\left(-\frac{d_2}{2\sqrt{M}} u
 \right) \, .
\end{eqnarray}
Here we have used that $v \leq v_{r^\star_K} - \frac{1}{2}u$ by definition of
the region $\mathcal{U}$. 
\paragraph{The region $\mathcal{Z}$} 
The estimate is the statement of Corollary (\ref{omuomvdecr}). 
\end{proof}
With the necessary bounds in place we can prove the following
\begin{proposition} \label{IKBercontrol}
Assume (\ref{zetnuas}) holds. Then the error-term $I_{B,error}^K$ satisfies
\begin{eqnarray}
I_{B,error}^K \left(\mathcal{D}^{r^\star_K}_{[t_0,\tilde{T}]} \right) &\leq&
 \frac{1}{\sqrt{M}} C\left(r_{cl}^\star,\sigma\right) \sum_{j=0}^{N-1} t_{j+1} \left[ \bar{I}^X_B
\left({}^{u=t_{j+1}-r^\star_{cl}}\mathcal{D}^{r^\star_{cl},u=\frac{1}{11}t_{j+1}}_{[t_j,t_{j+1}]}\right)\right]
 \nonumber \\  &+& M\tilde{\epsilon}\left(r^\star_K\right) 
+ M \tilde{\delta} \left(t_0\right)
\end{eqnarray}
with
\begin{equation}
\lim_{r^\star_K \rightarrow -\infty}
 \tilde{\epsilon}\left(r^\star_K\right) = 0
\textrm{ \ \ \  as well as \ \ \ } \lim_{t_0
  \rightarrow \infty} \tilde{\delta} \left(t_0\right) = 0 \, .
\end{equation}
\end{proposition}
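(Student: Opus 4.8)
\textbf{Proof plan for Proposition \ref{IKBercontrol}.}

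The plan is to split the integral $I^K_{B,error}\left(\mathcal{D}^{r^\star_K}_{[t_0,\tilde{T}]}\right)$ into the four regions $\mathcal{U}$, $\mathcal{V}$, $\mathcal{W}$, $\mathcal{Z}$ introduced above and to estimate each contribution separately, using the pointwise bounds of Lemma \ref{bootcons}. In each region the integrand of (\ref{IKBerror}) has the schematic form $\frac{B^2}{r^2}\left[\left(u+a\right)^2 \mathcal{E}_u + \left(v-a\right)^2 \mathcal{E}_v\right]$ where $\mathcal{E}_u,\mathcal{E}_v$ are the error coefficients ($Q\left(B\right)$, the $\frac{\Omega_{,v}}{\Omega}+\frac{\Omega_{,u}}{\Omega}$-terms, the $\frac{\lambda+\nu}{r}$-terms, and the $\zeta^2$, $\theta^2$-terms), which have already been estimated in Lemma \ref{bootcons}. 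The key point in each region is that the weights $\left(u+a\right)^2$ and $\left(v-a\right)^2$ are controlled by the right powers of the variable in which the coefficient decays, so that the resulting integral either converges absolutely to something small, or is bounded by the $X$-bulk term with a loss of one power of $t$.

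First I would treat $\mathcal{W}$: there $v \sim t$, so $\left(u+a\right)^2 + \left(v-a\right)^2 \lesssim t^2$, and the coefficients in (\ref{Wbound2}), (\ref{Wbound1}) decay like $\frac{\sqrt{M}}{t^2}$ (times extra powers of $r$ which combine with $B^2/r^2 \lesssim M/r^4$ from Proposition \ref{pointBcent}). Hence the $\mathcal{W}$-integrand is pointwise bounded by $C\left(r^\star_K,c\right)\frac{M^{?}}{r^{?} t^{?}}$ and, after the dyadic decomposition $\hat{\mathcal{D}}^K_{[t_j,t_{j+1}]}$, each piece is dominated by $t_{j+1}$ times an $r^\star$-weighted $B^2$-integral; an application of Proposition \ref{B2bound} (as in Proposition \ref{IKBmaincontrol}) converts this into the stated $\frac{1}{\sqrt{M}}C\left(r^\star_{cl},\sigma\right) t_{j+1}\bar{I}^X_B$ sum. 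For $\mathcal{V}$, $u\sim v$ and the coefficients decay like $\frac{\sqrt{M}}{u^2}$ by (\ref{Vbound1}), (\ref{Vbound2}); since $\left(u+a\right)^2 \sim u^2 \sim v^2$, the $\frac{1}{u^2}$ beats the $u^2$-weight and, using that $r$ stays bounded near the horizon, the $\mathcal{V}$-integral is bounded by $C\left(r^\star_{cl}\right)$ times $\int \frac{B^2}{r} dVol$ over $r^\star\le r^\star_K$; invoking the pointwise bound $|B|\le C\frac{\sqrt{M}}{v}$ and the exponential smallness obtained from the $r=r_K$ curve being close to the horizon, this is $\le M\tilde{\epsilon}\left(r^\star_K\right)$ with $\tilde{\epsilon}\to 0$ as $r^\star_K\to-\infty$. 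For $\mathcal{U}$ the decisive input is the redshift exponential decay (\ref{expdec}): $-\nu \le d_1 e^{-\frac{d_2}{2\sqrt{M}}u}$, so integrating $\frac{B^2}{r^2}\left(u+a\right)^2 \left(-\nu\right)$ (all error coefficients carry at least one factor of $\nu$ or $\lambda$, or are controlled by the pointwise bounds) produces an absolutely convergent integral whose size is $O\left(e^{-c\,|r^\star_K|/\sqrt{M}}\right)$, again absorbed into $M\tilde{\epsilon}\left(r^\star_K\right)$. Finally $\mathcal{Z}$ is handled with Corollary \ref{omuomvdecr} and the strong $r$-decay of $B$ from Propositions \ref{decayinr}, \ref{decayinr2}: there $r\sim t\sim v$, the coefficient is $\le C\frac{\sqrt{M}}{r^2}$, and $\left(v-a\right)^2 \frac{B^2}{r^2} \lesssim \frac{M^{3/2}}{r^{3}}$, so the integral over $\mathcal{Z}$ converges and, since it is supported at $r^\star\ge\frac{9}{10}t_0$, is $\le M\tilde{\delta}\left(t_0\right)$ with $\tilde{\delta}\to 0$ as $t_0\to\infty$ (this is where the Cauchy-stability source of smallness in $t_0$ enters). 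Adding the four contributions gives the claimed inequality.

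The main obstacle I expect is the bookkeeping near the horizon, i.e.\ in $\mathcal{U}\cup\mathcal{V}$: one must verify that \emph{every} term in (\ref{IKBerror}) and in $P\left(B\right)$, $Q\left(B\right)$ genuinely carries either an explicit redshift factor ($\nu$, $\lambda$, $1-\mu$, $\Omega^2$) or is covered by the pointwise decay bounds (\ref{aregUV}), (\ref{zetnuas}) and the $\kappa+\gamma\ge 0$ sign from Lemma \ref{kapgam}, so that the large weight $\left(u+a\right)^2$ is truly beaten. The $\zeta^2$- and $\theta^2$-terms, and the $\left(e^{-2B}-e^{-8B}\right)$-type cross terms, need particular care because the naive pointwise bound on $\frac{\zeta}{\nu}$ is only the one assumed in (\ref{zetnuas}); this is exactly why the hypothesis (\ref{zetnuas}) is imposed, and the proof must use it in the form $\zeta^2 = \left(\frac{\zeta}{\nu}\right)^2 \nu^2$ so that the surviving $\nu$-factor supplies the redshift decay. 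Once this term-by-term verification is carried out, the region-by-region estimates are routine applications of Cauchy-Schwarz together with the already-established bulk bound, Proposition \ref{B2bound}, and the two sources of smallness ($r^\star_K\to-\infty$ and $t_0\to\infty$).
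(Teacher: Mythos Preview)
Your overall plan (split into $\mathcal{U},\mathcal{V},\mathcal{W},\mathcal{Z}$ and invoke Lemma \ref{bootcons}) is the paper's strategy, and your treatments of $\mathcal{U}$ and $\mathcal{Z}$ are essentially correct, as is the $\bar I^X_B$-argument in $\mathcal{W}\cap\{r^\star\ge r^\star_{cl}\}$. There are, however, two genuine gaps.

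\textbf{The region $\mathcal{V}$.} You write that ``the coefficients decay like $\frac{\sqrt{M}}{u^2}$ by (\ref{Vbound1}), (\ref{Vbound2})'' and that the $u^{-2}$ ``beats'' the $(u+a)^2$-weight. But (\ref{Vbound2}) is a \emph{one-sided} bound: it only gives
\[
\frac{\Omega_{,v}}{\Omega}+\frac{\Omega_{,u}}{\Omega}\ \ge\ -\frac{C\left(r^\star_{cl},c\right)\sqrt{M}}{u^2}\,.
\]
The available upper bound in $r^\star\le r^\star_K$ is $\frac{\Omega_{,v}}{\Omega}+\frac{\Omega_{,u}}{\Omega}\le \frac{m}{r^3}+O\left(u^{-2}\right)$, and $\frac{m}{r^3}\sim M^{-1/2}$ near the horizon is \emph{not} small. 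Since in (\ref{IKBerror}) this quantity is multiplied by the positive factor $\frac{B^2}{r^2}\Omega^2\big[(u+a)^2(12\mu+\ldots)+(v-a)^2(16+\ldots)\big]$, you cannot simply cancel the weight against coefficient decay. The paper's argument is different: one writes $\Omega^2\big(\frac{\Omega_{,v}}{\Omega}+\frac{\Omega_{,u}}{\Omega}\big)=\tfrac12\partial_t\Omega^2$, uses (\ref{Vbound2}) to add a small nonnegative correction so that the integrand becomes $\ge 0$, pulls out the now-bounded factor $B^2u^2\lesssim M$ (from (\ref{aregUV})), integrates $\partial_t\Omega^2$ in $t$ by the fundamental theorem, and then uses $\Omega^2\le 4\,\partial_{r^\star}r$ so that the remaining $r^\star$-integral collapses to the $r$-difference in $\{r^\star\le r^\star_K\}$. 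It is this $r$-difference (not any $u^{-2}$ decay) that furnishes the $\tilde\epsilon(r^\star_K)$ smallness; without this integration-by-parts trick the $\mathcal{V}$-contribution is not controlled.

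\textbf{The strip $\mathcal{W}\cap\{r^\star_K\le r^\star\le r^\star_{cl}\}$.} Proposition \ref{B2bound} is stated only for the region $\mathcal{D}^{r^\star_{cl},u_J}_{[t_1,t_2]}$, i.e.\ for $r^\star\ge r^\star_{cl}$, so it does not cover this strip and you cannot feed it into $\bar I^X_B$. The paper treats this piece separately: estimate $B^2\le C(r^\star_K,c)M/t^2$ pointwise, bound $\int\Omega^2\,dr^\star$ by an $r$-difference, and integrate $\int_{t_0}^{\tilde T} t^{-2}\,dt$, producing a $C(r^\star_K,c)\frac{M^2}{t_0}$ contribution absorbed into $\tilde\delta(t_0)$.
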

\begin{proof}
By Lemma \ref{kapgam} we have $\lambda +\nu \geq 0$. Hence the term
multiplying $\left(\lambda+\nu\right)$ in $I^K_{B,error}$ has a good 
(negative) sign in all regions and can be ignored. 
For the other two terms we look at the different regions:
\paragraph{Region $\mathcal{W}$:}
Note that $u$ and $v$ are controlled by $t$ in this region. 
We insert (\ref{Wbound1}) and (\ref{Wbound2}) into 
the integral $I^K_{B,error}$. The resulting term, which has to be
controlled is
\begin{equation} \label{borp}
\sqrt{M} C \int_{\mathcal{W}} du \ dv \ \Omega^2 r^3 \frac{B^2}{r^2} \, .
\end{equation}
We split the region of integration into $\mathcal{W}^1= \mathcal{W} \cap \{
r^\star \geq r^\star_{cl} \}$ and $\mathcal{W}^2= \mathcal{W} \cap \{
r^\star \leq r^\star_{cl} \}$. 
The region $\mathcal{W}^1$ is partitioned 
into dyadic slices as was the bulk term:
\begin{equation}
I^K_{B,error} \left(\mathcal{W}^1\right) = \sum_{j=0}^{N-1}
I^K_{B,error} \left(\mathcal{W}^1_{[t_j,t_{j+1}]}\right) \, .
\end{equation} 
We can control each dyadic tube by $\bar{I}^X_B$ losing a power of $t$
(arising from a missing power of $r$ in (\ref{borp}))
\begin{eqnarray}
I^K_{B,error} \left(\mathcal{W}^1_{[t_j,t_{j+1}]}\right)  &\leq&
 \frac{1}{\sqrt{M}} C\left(r_{cl}^\star,\sigma\right) \sum_{j=0}^{N-1} t_{j+1} \left[ \bar{I}^X_B
\left(\mathcal{W}^1_{[t_j,t_{j+1}]}\right) \right] \nonumber \\
&\leq& \frac{1}{\sqrt{M}} C\left(r_{cl}^\star,\sigma\right) \sum_{j=0}^{N-1} t_{j+1} \left[ \bar{I}^X_B
\left({}^{u=t_{j+1}-r^\star_{cl}}\mathcal{D}^{r^\star_{cl},u=\frac{1}{11}t_{j+1}}_{[t_j,t_{j+1}]}\right) \right] \nonumber
\end{eqnarray}
In the region $\mathcal{W}^2$ we can ignore the factors of $r$. It
suffices to estimate $B^2 \leq \frac{C\left(r^\star_K,c\right)}{t^2}$ from 
Proposition \ref{pointBcent} and hence
\begin{equation}
\int_{\mathcal{W}^2} dt \ dr^\star \ \Omega^2 r^3 \frac{B^2}{r^2} \leq
\int_{t_0}^{\tilde{T}} dt \frac{C\left(r^\star_K,c\right)}{t^2}
\int_{r^\star_{cl}}^{r^\star_K} dr^\star \frac{4 \kappa \gamma}{\kappa
  + \gamma} \frac{\partial r}{\partial r^\star} \leq C\left(r^\star_K,c\right) \frac{M^2}{t_0} \, .
\end{equation}
\paragraph{Region $\mathcal{V}$:}
In this region $u$ is like $v$ at late
times. We insert the bound (\ref{Vbound1}) into $I^K_{B,error}$ and
estimate the resulting term ($v_0 = t_0 + r^\star_{cl}$, $\tilde{V} =
\tilde{T} + r^\star_{cl}$)
\begin{equation}
\frac{1}{\sqrt{M}} C \int_{\mathcal{V}} dv \ du \Omega^2 B^2 \leq \int_{v_0}^{\tilde{V}}
\frac{1}{\sqrt{M}} \frac{C}{v^2} \int_{t_0-r^\star_K}^{2v-4r^\star_K} du \left(-4 \kappa
\nu\right) \leq C \frac{M}{v_0} \, .
\end{equation}
For the remaining terms, i.e.~those containing a
$\left(\frac{\Omega_{,v}}{\Omega}+\frac{\Omega_{,v}}{\Omega}\right)$-factor, 
we insert the one-sided bound (\ref{Vbound2}) to control the error
term 
\begin{equation} \label{critV}
 \int_{\mathcal{V}} dt dr^\star r^3 \frac{B^2}{r^2} u^2 \left(\partial_t \Omega^2\right)
\end{equation}
for large times (in particular $t_0 + r_K^\star \geq 1$) as follows ($u_H =
\tilde{T}-r^\star_K$). First define
\begin{equation}
r^\star_+ = -\frac{t_0}{2}+\frac{3}{2}r^\star_K \textrm{ \ \ \ \ and \
  \ \ \ } r^\star_- = r^\star_K - \frac{\tilde{T}-r^\star_K}{4}  
\end{equation}
and
\begin{equation}
\bar{t}\left(r^\star\right) = \left\{ \begin{array}{ll}
t_0 + r^\star_K - r^\star & \textrm{ for $r^\star \geq r^\star_+$} \\
-3r^\star + 4r^\star_K & \textrm{ for $r^\star \leq r^\star_+$} \, .
\end{array} \right.
\end{equation}
Then the term (\ref{critV}) can be estimated using Proposition \ref{pointBthhoz}
\begin{eqnarray} 
\frac{1}{M} \int_{r^\star_-}^{r^\star_K} dr^\star
  \int_{\bar{t}\left(r^\star\right)}^{\tilde{T}-r_K^\star+r^\star} dt
  \ r^3
\frac{B^2}{r^2} u^2 \left(\partial_t \Omega^2 + C \Omega^2\frac{\sqrt{M}}{u^2}\right) 
\nonumber \\
\leq C_L C\left(c\right) \sqrt{M}  \int_{r^\star_-}^{r^\star_K} dr^\star
  \int_{\bar{t}\left(r^\star\right)}^{\tilde{T}-r_K^\star+r^\star} dt
  \ \left(\partial_t \Omega^2 + C \Omega^2\frac{\sqrt{M}}{u^2}\right)
\nonumber \\
\leq C_L C\left(c\right) \sqrt{M} \int_{r^\star_-}^{r^\star_K} dr^\star
 \Omega^2 \left(\tilde{T}-r_K^\star+r^\star,r^\star\right)  +
  C_L C\left(c\right) M
  \int_{t_0+r^\star_K}^{\tilde{T}+r^\star_K} dv
  \int_{v-2r^\star_K}^{\min \left(u_H, 2v-4r^\star_K\right)} du
\frac{\Omega^2}{u^2}
 \nonumber \\
\leq C_L C\left(c\right) \sqrt{M} \Big[r\left(\tilde{T},r_K^\star\right) -
  r \left(\tilde{T}-r^\star_K+r^\star_-, r^\star_-\right)\Big]  +
  C_L C\left(c\right) M
  \int_{t_0+r^\star_K}^{\tilde{T}+r^\star_K} dv \frac{1}{v^2}
  \int_{v-2r^\star_K}^{u_H} du \Omega^2 
\nonumber \\
\leq C_L C\left(c\right) M \cdot \epsilon\left(r^\star_K\right) +
   C_L C\left(c\right) M\frac{\sqrt{M}}{t_0}
  \epsilon\left(r^\star_K\right) \, . \nonumber
\end{eqnarray}
where in the first step we have used that the round bracket in the first line is positive. 
The constant $C\left(r^\star_{cl}\right)$ may have different values
in each line. We also used that
\begin{equation}
\partial_{r^\star} r = \lambda - \nu = \frac{1}{4} \Omega^2
\left(\frac{1}{\gamma} + \frac{1}{\kappa}\right)
\end{equation}
and therefore
\begin{equation}
\Omega^2 = 4 \partial_{r^\star}r \frac{\gamma \kappa}{\gamma+\kappa}
\leq 4 \partial_{r^\star}r \, 
\end{equation}
holds. In summary, smallness for this error-term arises from the smallness
of the $r$-difference between any two points in the 
region $r^\star \leq r^\star_K$. The crucial point is that 
only $C\left(r^\star_{cl}\right)$ enters the above estimate, such that
the $r$-difference can ``beat'' the constant.
\paragraph{Region $\mathcal{U}$:}
To control the error-terms in region $\mathcal{U}$
estimate the curly bracket of $I^K_{B,error}$ by some 
constant times $u^2$ and $B^2$ by something small (cf.~Corollary \ref{Bcor}). 
The resulting integral can be controlled via (\ref{expdec})as follows:
\begin{eqnarray}
|I^K_{B,error} \left(\mathcal{U}\right)| \leq \frac{1}{M}C\left(\epsilon\right)
\int_{t_0+r^\star_{K}}^{\frac{\tilde{T}}{2} + \frac{3}{2}r_K^\star} dv
\int_{2v-4r^\star_{K}}^{\tilde{T}-r_K^\star} du \left(-\nu\right) u^2
\left(u,v\right) \nonumber
 \\ \leq \frac{1}{M}C\left(\epsilon\right)\int_{t_0+r^\star_{K}}^{\frac{\tilde{T}}{2} + \frac{3}{2}r_K^\star} dv
\int_{2v-4r^\star_{K}}^{\tilde{T}-r_K^\star} du
\exp\left(-\frac{d}{2\sqrt{M}}u\right) u^2 
\leq M C\left(\epsilon\right) C e^{-\frac{d}{2\sqrt{M}} t_0} \, . \nonumber
\end{eqnarray}
\paragraph{The region $\mathcal{Z}$}
On the one hand, we have to establish smallness for
\begin{eqnarray} \label{regZc}
\frac{1}{M} \int_\mathcal{Z} du dv r^3  \Omega^2 \frac{B^2}{r^2} \left[\left(u+a\right)^2 +
\left(v-a\right)^2 \right] \left[Q\left(B\right) + 6\frac{\lambda
    \zeta^2}{r^2 \Omega^2} + \frac{3 \theta^2}{2 \kappa r^2} \right]
\nonumber \\ \leq C \sqrt{M} \int_\mathcal{Z} du dv \left[Q\left(B\right) + 6\frac{\lambda
    \zeta^2}{r^2 \Omega^2} + \frac{3 \theta^2}{2 \kappa r^2} \right]
\, ,
\end{eqnarray}
where we used that $r$ controls $v$ and $u$ in the region under consideration
and Proposition \ref{decayinr2}. From Proposition \ref{simpsmall} it
is apparent that the critical term to control is
\begin{equation}
 \int_\mathcal{Z} \frac{1}{2} \frac{1}{r^2} \zeta^2 du dv \leq
C \int dv \frac{1}{v^2} \left(\int \zeta^2 du\right) \leq
C\left(\epsilon\right) \frac{M}{t_0} \, .
\end{equation}
Namely, the remaining terms in the square bracket 
of (\ref{regZc}) all decay like $\frac{\epsilon}{r^3}$ 
by Proposition \ref{simpsmall} such that direct
integration will already lead to a smallness factor. 

The other critical term to control is
\begin{equation}
\frac{1}{M} \int du dv \frac{1}{2} r^3 \Omega^2 \frac{B^2}{r^2}
\left(\left(u+a\right)^2 + \left(v-a\right)^2\right) \left[\frac{\Omega_{,v}}{\Omega} +\frac{\Omega_{,u}}{\Omega} \right]
\end{equation}
which upon inserting (\ref{Zbound2}) and using the fact that $r$ controls $u$ and $v$ in the region under consideration reduces to controlling the term
\begin{equation} \label{linehep}
C\left(r^\star_{cl},c\right) \frac{1}{\sqrt{M}} \int_{t_0}^{\tilde{T}} dt \int_{\frac{9}{10}t}^{\tilde{T}-u_0} dr^\star B^2 r \left(-\nu\right) \, .
\end{equation}
Using that $r \sim t$ in region $\mathcal{Z}$ 
we can estimate (\ref{linehep}) by
\begin{eqnarray}
\leq C\left(r^\star_{cl},c\right) \frac{1}{\sqrt{M}} \int_{t_0}^{\tilde{T}} dt \frac{1}{t^2} \int_{\frac{9}{10}t}^{\tilde{T}-u_0} dr^\star B^2 r^3 \left(-\nu\right) \nonumber \\ \leq C\left(r^\star_{cl},c\right) \sqrt{M} \int_{t_0}^{\tilde{T}} dt \frac{1}{t^2} E^K_B \left(t\right) \leq C\left(r^\star_{cl},c\right) \frac{M^\frac{3}{2}}{t_0} 
\end{eqnarray}
where we have used bootstrap assumption (\ref{Kass}).
%
%
%
%
%
%
%
%
\end{proof}
\subsection{The Boundary Terms}
We write the boundary-terms (\ref{Kbing}) as
\begin{equation} \label{rewr}
F^K_B \left(t\right) = F^K_{B,main} \left(t\right) 
+ F^K_{B,errorarc} \left(t\right) + F^K_{B,errorline} \left(t\right)
\end{equation}
where
\begin{eqnarray} \label{FKBmain}
\frac{F^K_{B,main} \left(t\right)}{2\pi^2} = \frac{1}{M}\int_{r_{K}^\star}^{t-u_0} \left( -12 t \frac{r^\star-a}{r} \nu  B \partial_t
B + 6B^2 \nu \frac{r^\star-a}{r} \right) r^3 dr^\star \nonumber \\
+\frac{1}{M}\int_{r_{K}^\star}^{t-u_0}
 \Bigg(\left(\partial_u B\right)^2 2\left(u+a\right)^2 + \left(\partial_v B\right)^2
 2\left(v-a\right)^2  \nonumber \\ + \left(\left(u+a\right)^2+\left(v-a\right)^2\right) \frac{\Omega^2}{2r^2}
 \left(1-\frac{2}{3}\rho\right) \Bigg) r^3 dr^\star  \nonumber \\
+ \frac{1}{M}\int_{t-r_{K}^\star}^{\tilde{T}-r_K^\star}\left[2\left(u+a\right)^2 r^3
   \left(\partial_{u}B\right)^2 + \frac{r \Omega^2}{2}
   \left(v-a\right)^2 \left(1-\frac{2}{3}\rho \right)
   \right]\left(u,t+r^{\star}_{cl}\right)du \, ,
\end{eqnarray}
\begin{eqnarray} \label{FKBerarc}
\frac{F^K_{B,errorarc} \left(t\right)}{2\pi^2} &=& - \frac{1}{M}\int_{r_{K}^\star}^{t-u_0} \frac{3}{2}B^2 \left(\frac{\left(u+a\right)^2}{r} \left(r_{,uu}
+ r_{,uv} \right) + \frac{\left(v-a\right)^2}{r} \left(r_{,vv}+r_{,uv} \right)
\right) r^3 dr^\star  \nonumber \\
&+& \frac{1}{M}\int_{r_{K}^\star}^{t-u_0} \left(+3B \partial_t B \frac{\left(v-a\right)^2}{r}
\left(\lambda+\nu\right) - 3 \frac{v-a}{r}B^2 \left(\lambda+\nu\right)
\right) r^3 dr^\star  \nonumber \\
&+& \frac{1}{M}\int_{r_{K}^\star}^{t-u_0} \frac{3}{2}B^2\left(\frac{\lambda+\nu}{r^2}
\left(\nu \left(u+a\right)^2 + \lambda \left(v-a\right)^2\right)
\right) r^3 dr^\star 
\end{eqnarray}
and
\begin{eqnarray}
\frac{F^K_{B,errorline} \left(t\right)}{2\pi^2} &=& \frac{1}{M}\int_{t-r_{K}^\star}^{\tilde{T}-r_K^\star} 2B\left(\partial_u B\right) \left(\frac{3}{2r} \left(\nu \left(u+a\right)^2 + \lambda \left(v-a\right)^2 \right)\right) r^3 \left(u,t+r^\star\right) du \nonumber \\
&-& \frac{1}{M}\int_{t-r_{K}^\star}^{\tilde{T}-r_K^\star} B^2 \partial_u
 \left(\frac{3}{2r} \left(\nu \left(u+a\right)^2 + \lambda \left(v-a\right)^2 \right)\right)r^3
 \left(u,t+r^\star\right)du 
\end{eqnarray}
and 
\begin{eqnarray} \label{fhoze}
\frac{H^K_{\tilde{T}-r^\star_K}}{2\pi^2} =
  \frac{1}{M}\int_{t_0+r_{K}^{\star}}^{\tilde{T}+r_{K}^{\star}}\left[2\left(v-a\right)^2
  r^3 \left(\partial_{v}B\right)^2 + \frac{r \Omega^2}{2}
  \left(u+a\right)^2 \left(1-\frac{2}{3}\rho \right)
  \right]\left(T-r^\star_K,v \right)dv \nonumber \\ 
 - \frac{1}{M}\int_{t_0+r^\star_K}^{\tilde{T}+r^\star_K} B^2 \partial_v \left(\frac{3}{2r}
\left(\nu \left(u+a\right)^2 + \lambda \left(v-a\right)^2
\right)\right)r^3 \left(T-r^\star_K,v\right) dv \nonumber \\ + \frac{1}{M}\int_{t_0+r^\star_K}^{\tilde{T}+r^\star_K} 2B\left(\partial_v B\right) \left(\frac{3}{2r} \left(\nu \left(u+a\right)^2 + \lambda \left(v-a\right)^2 \right)\right) r^3 \left(T-r^\star_K,v\right) dv 
\end{eqnarray}
Note that $F^K_{B,errorline}\left(\tilde{T}\right)=0$ and that the last term
of $F^K_{B,main}$ also vanishes for $t=\tilde{T}$.

\subsubsection{Estimating $F^K_{B,main} \left(t\right)$}
We are going to show that the boundary term $F^K_{B,main}
\left(t\right)$ comes with a sign. This is obviously the case 
for the integral in $u$, so it remains to establish non-negativity of
the spacelike integrals. Define
\begin{equation}
S = \left(v-a\right) \partial_v + \left(u+a\right) \partial_u \, ,
\end{equation}
\begin{equation}
\underline{S} = \left(v-a\right) \partial_v -  \left(u+a\right)
\partial_u \, .
\end{equation}
Note that
\begin{equation}
\left(SB\right)^2 + \left(\underline{S}B\right)^2 = 2 \left(u+a\right)^2
\left(\partial_u B\right)^2 + 2\left(v-a\right)^2 \left(\partial_v B\right)^2
\end{equation}
and
\begin{equation}
S = t\partial_t + \left(r^\star-a\right) \partial_{r^\star} \textrm{ \ \ \ \ \ \ } 
\underline{S} = t\partial_{r^\star} + \left(r^\star-a\right) \partial_t
\end{equation}
respectively
\begin{equation}
t\partial_t = S - \left(r^\star-a\right) \partial_{r^\star} \textrm{ \ \ \ \ \ \ } 
t\partial_t = \frac{t}{\left(r^\star-a\right)} \underline{S} - \frac{t^2}{\left(r^\star-a\right)}
\partial_{r^\star} \, .
\end{equation}
We can insert these expressions into the boundary term
(\ref{FKBmain}) and integrate the second term by parts using $S$:
\begin{eqnarray}
 \int_{r_{K}^\star}^{\tilde{T}-u_0} B t \partial_t B \frac{1}{r} \left(-2\nu\right) \left(r^\star-a\right)
r^3 dr^\star  \nonumber \\ =  \int_{r_{K}^\star}^{\tilde{T}-u_0} \frac{-2\nu}{r} \left(r^\star-a\right) B
\Big(\left(SB\right) - \left(r^\star-a\right) \partial_{r^\star} B \Big) r^3
dr^\star \nonumber \\  = \nu r^2 \left(r^\star-a\right)^2 B^2 \Big|^{r^\star=\tilde{T}-u_0}_{r^\star=r^\star_{K}}  
 + \int_{r_{K}^\star}^{\tilde{T}-u_0} \left(-2\nu\right) r^3
\Bigg[\frac{r^\star-a}{r} B \left(SB\right) \nonumber \\ + B^2 
 \left(\frac{r^\star-a}{r}  +
 \frac{\left(r^\star-a\right)^2}{r^2}\left[\kappa - \left(\nu + r
 \frac{\Omega_{,u}}{\Omega}\right) + P_1\left(B\right) \right] \right)\Bigg]dr^\star 
\end{eqnarray}
where 
\begin{equation}
P_1 \left(B\right) = \frac{\zeta^2}{r \nu} +
\frac{2}{3} \kappa \left(\rho-\frac{3}{2}\right) \approx C\left(\epsilon\right)
\end{equation}
is very small by Proposition \ref{simpsmall}. Note that the 
boundary term near infinity vanishes in view of the
assumption of compact support on the initial data and the domain of
dependence. The term at $r^\star = r^\star_{K}$ is manifestly
positive since $\nu < 0$ in the integration region. 

Alternatively, using $\underline{S}$, we can obtain
\begin{eqnarray}
\int_{r^\star_{K}}^{\tilde{T}-u_0} B t \partial_t B \frac{1}{r} \left(-2\nu\right) \left(r^\star-a\right)
r^3 dr^\star \nonumber \\ = \int_{r^\star_{K}}^{\tilde{T}-u_0} \frac{-2\nu}{r} \left(r^\star-a\right) B
\left(\frac{t}{\left(r^\star-a\right)} \underline{S}B - \frac{t^2}{\left(r^\star-a\right)}\partial_{r^\star}B\right) r^3
dr^\star =  
\nu r^2 t^2 B^2 \Big|^{r^\star=\tilde{T}-u_0}_{r^\star=r^\star_{K}} \nonumber \\+ \int_{r^\star_{K}}^{\tilde{T}-u_0} \left(-2\nu \right) r^3 \left(\frac{t}{r}
\left(\underline{S}B\right) B + \frac{t^2}{r^2}B^2 \left[\kappa - \left(\nu + r
 \frac{\Omega_{,u}}{\Omega}\right) + P_1\left(B\right) \right]
\right)dr^\star \, . \nonumber
\end{eqnarray}
Again the boundary term has a positive sign at $r^\star = r^\star_{K}$.

If we split the relevant term in (\ref{FKBmain}) into two equal
pieces and collect terms we can write 
\begin{eqnarray} 
\frac{1}{2\pi^2} F^K_{B,main}\left(\tilde{T}\right) = \frac{1}{M} \int_{r^\star_{K}}^{\tilde{T}-u_0}
 \Bigg[\left(\partial_u B\right)^2 2\left(u+a\right)^2 + \left(\partial_v B\right)^2
 2\left(v-a\right)^2 \nonumber \\ + \left(\left(u+a\right)^2+\left(v-a\right)^2\right) \frac{\Omega^2}{2r^2}
 \left(1-\frac{2}{3}\rho\right) -12 t \frac{r^\star-a}{r} \nu  B \partial_t
B + 6B^2 \nu \frac{r^\star-a}{r} \Bigg] r^3 dr^\star \nonumber \\
  \geq
 \frac{1}{M}\int_{r^\star_{K}}^{\tilde{T}-u_0} dr^\star r^3 \Bigg[
  \left(1+2\nu\right) \left(\left(SB\right)^2 + \left(\underline{S}B \right)^2 \right)
  \nonumber \\
  + \left(-2\nu \right) \Bigg(\left(SB + \frac{3}{2} \frac{r^\star-a}{r} B
  \right)^2 + \frac{\left(r^\star-a\right)^2}{r^2}B^2 \left(\frac{23}{4} +
   \delta + 3\left[\kappa - \left(\nu + r
 \frac{\Omega_{,u}}{\Omega}\right) \right] \right)
  \Bigg) \nonumber \\
  + \left(-2\nu \right) \Bigg(
 \left(\underline{S}B + \frac{3}{2} \frac{t}{r} B
  \right)^2 +  \frac{t^2}{r^2}B^2 \left(\frac{23}{4} +
  \delta + 3\left[\kappa - \left(\nu + r
 \frac{\Omega_{,u}}{\Omega}\right) \right] \right) \Bigg) \Bigg] r^3 \nonumber
\end{eqnarray}
which is manifestly positive. Furthermore
\begin{equation} \label{borrowEKB}
\frac{1}{2\pi^2} F^K_{B,main}\left(\tilde{T}\right) \geq
  E^K_B\left(\tilde{T}\right) + \frac{4}{M}\int_{r^\star_{K}}^{\tilde{T}-u_0} \left(-\nu\right)
 \frac{B^2}{r^2} \left(t^2 + \left(r^\star-a\right)^2 \right)r^3 dr^\star
\end{equation}
with $E^K_B$ being the quantity appearing in bootstrap assumption \ref{boot2}.
\subsubsection{Estimating $F^K_{B,errorarc} \left(\tilde{T}\right)$}
\begin{lemma} \label{FKBerarccontrol}
Under the assumption (\ref{zetnuas}) we have that
\begin{equation} \label{sumuparc}
\frac{1}{2\pi^2} F^K_{B,main}\left(\tilde{T}\right) +  F^K_{B,errorarc} \left(\tilde{T}\right) \geq E^K_B\left(\tilde{T}\right)
  + M \hat{\epsilon} 
\end{equation}
with the $\hat{\epsilon}$ arising from the fact that $\tilde{T}$ is large.
\end{lemma}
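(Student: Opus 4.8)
\textbf{Proof proposal for Lemma \ref{FKBerarccontrol}.}

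The plan is to show that every term making up $F^K_{B,errorarc}(\tilde{T})$ (cf.~(\ref{FKBerarc})) is, in absolute value, bounded by $M\hat\epsilon$ plus a small multiple of the manifestly positive ``good'' terms that were extracted in the previous subsection from $F^K_{B,main}(\tilde{T})$ --- in particular the term $\frac{4}{M}\int_{r^\star_K}^{\tilde{T}-u_0}(-\nu)\frac{B^2}{r^2}(t^2+(r^\star-a)^2)r^3\,dr^\star$ appearing on the right of (\ref{borrowEKB}). Since that term was an \emph{extra} positive contribution, any piece of $F^K_{B,errorarc}$ absorbed into it still leaves $E^K_B(\tilde T)$ intact on the right-hand side, which is exactly what (\ref{sumuparc}) asserts. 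Concretely, I would subdivide the domain of the spacelike integral $\{r^\star_K \le r^\star \le \tilde{T}-u_0\}$ into the three regions $\{r^\star_K \le r^\star \le r^\star_{cl}\}$ (close to the horizon), $\{r^\star_{cl}\le r^\star \le \frac{9}{10}\tilde T\}$ (central), and $\{r^\star \ge \frac{9}{10}\tilde T\}$ (asymptotic), and use in each the appropriate pointwise decay bounds of Section \ref{pointwiseE}: namely $|B|\le C\sqrt M/v_+$ and $|\theta|\le C M^{3/4}/v_+$ near the horizon (Proposition \ref{pointBthhoz}), $|B|\le C\sqrt M/t$ and $|\theta|\le C M^{3/4}/t$ in the central region (Propositions \ref{pointBcent}, \ref{pointthcent}), and $|B|\le C M^{3/4}/r^{3/2}$, $|\theta|\le C M^{3/4}/r$ in the asymptotic region (Proposition \ref{decayinr2}, Corollary \ref{thetardece}), together with the hypothesis (\ref{zetnuas}) on $\zeta/\nu$.

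The term-by-term analysis proceeds as follows. For the first line of (\ref{FKBerarc}), I would substitute the Raychaudhuri relations $r_{,uu}=2\frac{\Omega_{,u}}{\Omega}\nu - \frac{2}{r^2}\zeta^2$, $r_{,vv}=2\frac{\Omega_{,v}}{\Omega}\lambda - \frac{2}{r^2}\theta^2$ (equation (\ref{rvvruu})) and the evolution equation (\ref{revol}) for $r_{,uv}$; the $\frac{\Omega_{,u}}{\Omega}$, $\frac{\Omega_{,v}}{\Omega}$ and $m/r^3$ contributions produce, after multiplication by $(u+a)^2$ or $(v-a)^2$, a term of the schematic shape $\frac{B^2}{r^2}(t^2+(r^\star-a)^2)(-\nu)\cdot O(1/r)$, which is $C(\epsilon)$ times the good term in (\ref{borrowEKB}) since $1/r$ is small for $r_K$ chosen close to the horizon and the data small (using Corollary \ref{rcurvcol}); the $\zeta^2$ and $\theta^2$ pieces from $r_{,uu}$, $r_{,vv}$ are estimated using $\int(-\nu)\zeta^2/\Omega^2\,du$-type energies controlled by the mass fluctuation and hence small. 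For the second and third lines, the key observation is that $\lambda+\nu = \partial_t r$ is small --- in the central region by Proposition \ref{kgom} one has $|\lambda+\nu|\le C(r_K,c)M/t^2$, and everywhere $|\lambda+\nu|$ is bounded with the product $(\lambda+\nu)\cdot((u+a)^2\nu + (v-a)^2\lambda)/r^2$ again absorbed into the good term after a Cauchy--Schwarz split $2B\partial_t B \le \frac{B^2}{\sqrt M} + \sqrt M(\partial_t B)^2$, the $(\partial_t B)^2$-part being controlled by $F^K_{B,main}$ itself (the $(SB)^2+(\underline S B)^2$ terms). The horizon region contributes a factor $\epsilon(r^\star_K)\to 0$ as $r^\star_K\to-\infty$ because the $r$-variation there is $\epsilon^{1/3}$-small (cf.~(\ref{Rdiff})) and one integrates $\partial_{r^\star}r \le \frac14\Omega^2(1/\gamma+1/\kappa)$ over a tiny interval; the asymptotic region contributes a factor $\sqrt M/t_0 \to 0$ because of the extra $r$-decay of $B$, exactly as in the treatment of region $\mathcal Z$ in the proof of Proposition \ref{IKBercontrol}.

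The main obstacle I anticipate is the bookkeeping of the weights $(u+a)^2$ and $(v-a)^2$: these grow like $t^2$ (or $r^{\star 2}$ in the far region), so naive pointwise decay $|B|\le C\sqrt M/t$ only gives $B^2(u+a)^2 \sim M$, i.e.~no smallness at all --- one genuinely needs either (i) an extra $1/r$ or $(\lambda+\nu)$ gain from the structure of each coefficient, routed into the good term $\frac{4}{M}\int(-\nu)\frac{B^2}{r^2}(t^2+(r^\star-a)^2)r^3$, or (ii) the exponential redshift decay of $-\nu$ near the horizon, or (iii) the improved $r$-decay $|B|\le CM^{3/4}/r^{3/2}$ of Proposition \ref{decayinr2} in the asymptotic region. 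Verifying that \emph{every} coefficient in (\ref{FKBerarc}) admits one of these three gains --- and in particular that the near-horizon terms, where neither $t$-decay nor $r$-decay helps much, are tamed purely by the smallness of the $r$-fluctuation and by choosing $r^\star_K$ sufficiently negative and $t_0$ sufficiently large --- is the delicate part; it is the same mechanism already used for $I^K_{B,error}$ in Proposition \ref{IKBercontrol}, and I would explicitly invoke the constant algebra $C(r^\star_{cl})\eta = \tilde\delta$, $C(r^\star_K)/t_0=\tilde\delta$ justified by Proposition \ref{Cauchystab} to conclude that the sum of all error contributions is $\le M\hat\epsilon$ with $\hat\epsilon\to 0$.
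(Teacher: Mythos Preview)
Your overall strategy and your treatment of the second and third lines of (\ref{FKBerarc}) via the smallness of $\lambda+\nu$ are correct. The gap is in the first line. You propose to expand $r_{,uu}$, $r_{,vv}$, $r_{,uv}$ separately and claim the resulting terms are ``$O(1/r)$ times the good term'', hence $C(\epsilon)$-small ``since $1/r$ is small for $r_K$ chosen close to the horizon''. This is backwards: choosing $r_K$ close to the horizon makes $r\approx\sqrt{2M}$ there, so $\sqrt M/r=O(1)$; Corollary~\ref{rcurvcol} controls the $r$-fluctuation in $r\le r_K$, which is irrelevant on the arc where $r\ge r_K$. Estimated separately, $|r_{,uu}|,|r_{,uv}|\sim m/r^3$ near $r_K$, and the error works out to roughly $\tfrac{3\mu}{4(1-\mu)}$ times the good term --- this is large when $\mu$ is close to $1$ and cannot be absorbed. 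Your redshift mechanism (ii) is also inapplicable: on the arc $t=\tilde T$, $r^\star\ge r^\star_K$, and $-\nu=\gamma(1-\mu)$ is bounded \emph{below}, not exponentially small.

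The missing ingredient is the cancellation in the \emph{combinations} $r_{,uu}+r_{,uv}$ and $r_{,vv}+r_{,uv}$. Writing $r_{,vv}+r_{,uv}=2\lambda\tfrac{\Omega_{,v}}{\Omega}+2\kappa\tfrac{\mu}{r}\nu+\text{(lower)}$ and invoking Proposition~\ref{omuomvcent} (this is where assumption (\ref{zetnuas}) is actually used) yields $|r_{,vv}+r_{,uv}|\le\tfrac{\mu}{r}|\lambda+\nu|+C\sqrt{M}/\tilde T^2\le C(r^\star_K,c)\sqrt{M}/\tilde T^2$ in the region $r^\star_K\le r^\star\le\tfrac{9}{10}\tilde T$, and $\le C/r^2$ for $r^\star\ge\tfrac{9}{10}\tilde T$. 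This $1/\tilde T^2$ decay --- not merely $O(1/r)$ --- reduces the first line to controlling $\int\sqrt M\,B^2 r^2\,dr^\star$ as in (\ref{remKco}), which is then handled with just a two-region split: for $r^\star\le\tfrac{9}{10}\tilde T$ use $r\le C\tilde T$ and the energy decay $\int B^2 r\,dr^\star\le CM^2/\tilde T^2$; for $r^\star\ge\tfrac{9}{10}\tilde T$ borrow a single factor of $1/\tilde T$ from the good term in (\ref{borrowEKB}). No separate near-horizon region is needed.
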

\begin{proof}
We have
\begin{equation}
|\lambda + \nu| \leq C \left(r^\star_K,c\right) \frac{M}{\tilde{T}^2} \textrm{ \ \ \ \ \ for \ \
 $r^\star_K \leq r^\star
 \leq \frac{9}{10}\tilde{T}$}
\end{equation}
by Proposition \ref{kgom} and
\begin{equation}
|\lambda + \nu| \leq C \left(r^\star_K,c\right)\frac{M}{r^2} \textrm{ \ \ \ \ \ for \ \ $r^\star
 \geq \frac{9}{10}\tilde{T}$ }
\end{equation}
by Corollary \ref{kapgamdecr}. Recalling the bound (\ref{omuomvcent}) we can
estimate the expression
\begin{equation}
r_{,vv} + r_{,uv} = 2\lambda \frac{\Omega_{,v}}{\Omega} -
\frac{2}{r^2}\theta^2 + 2 \kappa \frac{\mu}{r} \nu + \frac{4\kappa \nu}{3r}\left(\rho-\frac{3}{2}\right) 
\end{equation}
by
\begin{equation}
| r_{,vv} + r_{,uv} | \leq \frac{\mu}{r}\left(\lambda+ \nu\right) + C \left(r^\star_K,c\right)
  \frac{\sqrt{M}}{\tilde{T}^2} \leq  C \left(r^\star_K,c\right)\frac{\sqrt{M}}{\tilde{T}^2} \,
\end{equation}
and similarly
\begin{equation}
| r_{,uu} + r_{,uv} | \leq C \left(r^\star_K,c\right)\frac{\sqrt{M}}{\tilde{T}^2} \, ,
\end{equation}
both in the region $r^\star_K \leq r^\star \leq \frac{9}{10}\tilde{T}$. 
Analogously, we obtain
\begin{equation}
| r_{,vv} + r_{,uv} | + | r_{,uu} + r_{,uv} | \leq \frac{C}{r^2}
  \textrm{ \ \ \ in the region $r^\star \geq \frac{9}{10}\tilde{T}$}.
\end{equation}
Inserting these estimates into (\ref{FKBerarc}) it becomes clear that
we have to establish smallness for the terms
\begin{equation} \label{remKco}
\int_{r^\star_K}^{\tilde{T}-u_J} \left[\sqrt{M} B^2 r^2  + M^\frac{3}{2} \frac{\theta^2}{r} +
M^\frac{3}{2} \frac{\zeta^2}{r} \right] \left(\tilde{T},r^\star\right) dr^\star \, .
\end{equation}
We split the integral into the region $r^\star_K \leq r^\star \leq
\frac{9}{10}\tilde{T}$ and the region $r^\star \geq
\frac{9}{10}\tilde{T}$. In the first region the derivative terms of 
(\ref{remKco})are manifestly controlled by the energy, decaying like
$\frac{1}{\tilde{T}^2}$ by Proposition \ref{decayfromK}. 
The $B^2$ term can be estimated as
\begin{equation}
\int B^2 r^2 dr^\star \leq C \tilde{T} \int B^2 r dr^\star \leq C
\tilde{T}
\left[m\left(\tilde{T},\frac{9}{10}\tilde{T}\right)-m\left(\tilde{T},r^\star_K\right)
  \right] \leq C \frac{M^2}{\tilde{T}} \,.
\end{equation}
In both cases smallness is obtained from the fact that $t_0$ is chosen
very large. In the region $r^\star \geq \frac{9}{10}t$ on the other
hand, the derivative terms in (\ref{remKco}) can be controlled by
pulling out the $\frac{1}{r}$ as a smallness factor and use the energy
estimate for the rest. For the $B^2$ term we have to borrow from 
the good last term of (\ref{borrowEKB}):
\begin{equation}
\int_{\frac{9}{10}\tilde{T}}^{\tilde{T}-u_J} B^2 r^2 dr^\star \leq
\frac{10}{9\tilde{T}} \int_{\frac{9}{10}\tilde{T}}^{\tilde{T}-u_0} B^2
r^3 dr^\star \, .
\end{equation}
Hence a tiny contribution from the last term of (\ref{borrowEKB})
will control this term and we finally arrive at (\ref{sumuparc}).
\end{proof}

\subsubsection{Estimating $H^K_{\tilde{T}-r^\star_K}$}
\begin{lemma} \label{HKBercontrol}
Under the assumption (\ref{zetnuas}) we have
\begin{equation} \label{Khozb}
-H^K_{u=T-r^\star_K} \leq M \tilde{\epsilon}\left(r^\star_K\right) 
\end{equation}
where $\tilde{\epsilon}\left(r^\star_K\right) \rightarrow 0$ for $r^\star_K
\rightarrow \infty$.
\end{lemma}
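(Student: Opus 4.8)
\textbf{Proof proposal for Lemma \ref{HKBercontrol}.}

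The plan is to estimate the horizon term $H^K_{\tilde{T}-r^\star_K}$ given in (\ref{fhoze}) directly, exploiting that the whole horizon piece in question lies in the region $r^\star \geq r^\star_K$, which can be pushed arbitrarily close to the true horizon, combined with the decay of the energy flux through the horizon provided by bootstrap assumption \ref{boot4}. First I would separate the three integrals in (\ref{fhoze}). The first integral is manifestly a multiple of the $T$-energy flux through the null line $u=\tilde{T}-r^\star_K$ measured with the weight $(v-a)^2 \sim v^2$; since that null line lies in $r^\star \geq r^\star_K$ where $r \sim r^\star \sim t \sim v$ by the results of section \ref{analboot}, the weight $(v-a)^2$ and $(u+a)^2$ are both bounded by $C v^2$. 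Hence this first term is controlled by
\begin{equation}
\frac{C}{M}\int_{t_0+r^\star_K}^{\tilde{T}+r^\star_K} v^2 \left[ r^3 \left(\partial_v B\right)^2 + \frac{r\Omega^2}{4}\left(1-\tfrac{2}{3}\rho\right)\right]\left(\tilde{T}-r^\star_K,v\right) dv \, ,
\end{equation}
which is a weighted energy flux through that null ray. By Proposition \ref{decayfromK} (applicable since the ray lies in $r^\star_K \leq r^\star \leq \frac{9}{10}t$, or can be split into that part plus the asymptotic part handled by the decay in $r$ of the fields) the unweighted flux decays like $\frac{1}{v_-^2}$, and the weight $v^2$ is precisely compensated; more carefully one uses that bootstrap assumption \ref{boot4} gives $\frac{1}{v^2}$-decay of the Hawking-mass flux through a line of constant $v$ near the horizon, and the energy through the null ray is controlled by these horizon fluxes plus the flux through $r^\star=r^\star_{cl}$, both decaying in $v_+^2$. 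The key point is that the constant produced here depends only on $r^\star_{cl}$ and on the bootstrap constant $c$, \emph{not} on $r^\star_K$; therefore choosing $r^\star_K$ close enough to the horizon (equivalently, the constant $\epsilon(\delta)$ on the mass fluctuation small) makes the overall contribution $\leq M\tilde\epsilon(r^\star_K)$.

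Next I would treat the second and third integrals in (\ref{fhoze}), which involve the quantity $D = \frac{3}{2r}\left(\nu(u+a)^2 + \lambda(v-a)^2\right)$ and its $v$-derivative. For the term $2B(\partial_v B)D$ one applies Cauchy--Schwarz: $2|B||\partial_v B||D| \leq \sqrt{M}(\partial_v B)^2 r^3 \cdot \frac{|D|}{r^2\sqrt{M}} + \frac{1}{\sqrt{M}}B^2 r \cdot \frac{|D|\sqrt{M}}{r^2}$, where $|D| \leq C v^2/r$ near the horizon so that $|D|/r^2 \leq C v^2/r^3$; the first piece is a $v^2$-weighted $T$-flux through the null ray, handled exactly as before, and the second piece is controlled by $v^2 B^2 r$, which is bounded by the weighted energy through that ray via the estimate $\int B^2 r dv \leq m(\tilde T-r^\star_K, v_2) - m(\tilde T - r^\star_K, v_1)$ and bootstrap assumption \ref{boot4}. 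For the $-B^2 \partial_v D$ term one computes $\partial_v D$ using (\ref{eom2}), (\ref{revol}) and the pointwise bounds on $B$, $\theta$, $\zeta/\nu$ (Propositions \ref{pointBthhoz}, \ref{omuomvcent} and the assumed (\ref{zetnuas})); the dominant contribution is $\frac{3}{2r}(\nu+\lambda)\partial_v(\text{weights}) + \ldots \sim \frac{v}{r}$ times bounded quantities, so $|\partial_v D| \leq C v / r$ near the horizon (using $\lambda+\nu$ decays and $\lambda$ is small close to the horizon), hence $B^2|\partial_v D| r^3 \leq C v B^2 r^2$, again controlled by the weighted energy flux through the ray after pulling out one factor of $\frac{1}{v}$ as smallness or borrowing from bootstrap assumption \ref{boot4} (which gives an extra power of $v$ to spare). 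As before all constants depend only on $r^\star_{cl}$ and $c$.

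The main obstacle I anticipate is bookkeeping the weights carefully enough to see that \emph{two} powers of $v$ are genuinely absorbed in every term: the $K$-vectorfield weights $(u+a)^2,(v-a)^2$ grow like $v^2$ on the horizon null ray, and one must show the energy-type fluxes decay like $v^{-2}$ so the product stays bounded, and moreover that the resulting bound carries the overall smallness factor $\tilde\epsilon(r^\star_K)$. This works because near the horizon $\lambda$ (and hence $r_{,v}$ and the ``outgoing'' energy weight that multiplies $(v-a)^2$ in the volume form) is small — it is $\leq \frac{2m_{max}}{r^2}\epsilon$ on the horizon by the Lemma following (\ref{gohoz}) — so one does not merely get boundedness but genuine smallness controlled by $\epsilon(\delta)$, which can be tied to $r^\star_K$ via (\ref{etadef})-type choices; the one-sided monotonicity of the Hawking mass (\ref{vum}) is what makes the flux integrals nonnegative and hence estimable by a mass difference. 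Assembling the three pieces, each bounded by $M$ times a constant depending only on $(r^\star_{cl},c)$ times a quantity going to zero as $r^\star_K \to \infty$, gives (\ref{Khozb}).
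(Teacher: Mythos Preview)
There is a genuine geometric error that invalidates the approach. The null line on which $H^K_{\tilde{T}-r^\star_K}$ is computed is $u=\tilde{T}-r^\star_K$ with $v\in[t_0+r^\star_K,\tilde{T}+r^\star_K]$. Since $r^\star_K$ is a large \emph{negative} number, along this line $r^\star=\tfrac{v-u}{2}\leq\tfrac{(\tilde{T}+r^\star_K)-(\tilde{T}-r^\star_K)}{2}=r^\star_K$, so the entire line lies in $\{r^\star\leq r^\star_K\}$, i.e.\ in the near-horizon strip, not in $\{r^\star\geq r^\star_K\}$ as you assert. Consequently the relations $r\sim r^\star\sim t\sim v$ are false there (indeed $r$ is essentially pinned near the horizon radius), Proposition~\ref{decayfromK} cannot be applied as you describe, and the claim that $(u+a)^2\leq Cv^2$ fails: $u=\tilde{T}-r^\star_K$ is a fixed constant exceeding every $v$ in the range of integration.

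The paper proceeds quite differently. It first observes that the \emph{first} integral in (\ref{fhoze}) is nonnegative and therefore contributes with a favourable sign to $-H^K$; only the remaining two integrals need to be bounded. For these it splits the $v$-interval according to whether the point $(\tilde{T}-r^\star_K,v)$ lies in the region $\mathcal{U}$ (where $u\geq 2v-4r^\star_K$) or $\mathcal{V}$ (where $u\leq 2v-4r^\star_K$), introduced earlier for the bulk error. On $\mathcal{U}$ the redshift-driven exponential bound $-\nu\leq d_1\exp(-\tfrac{d_2}{2\sqrt{M}}u)$ from (\ref{expdec}) kills the polynomial weight $(u+a)^2$ outright. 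On $\mathcal{V}$ one has $u\sim v$, so the pointwise decay $|B|\leq C\sqrt{M}/v$ from (\ref{aregUV}) neutralises both weights, and the residual $\int\lambda\,dv$ is controlled by the $r$-difference in $\{r^\star\leq r^\star_K\}$, which is precisely where the smallness factor $\tilde{\epsilon}(r^\star_K)$ originates. Your final paragraph gestures at the $\lambda$-smallness near the horizon, but without the $\mathcal{U}/\mathcal{V}$ decomposition and the exponential estimate in $\mathcal{U}$ you have no mechanism to control the $(u+a)^2$ weight where $u$ is large and $v$ is not.
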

\begin{proof}
The first term of (\ref{fhoze}) is clearly positive and can be neglected. For
the other terms split the integration
$\mathcal{I}=\left[v_1,v_2\right] = \left[t_0+r^\star_K,
  \tilde{T}+r_K^\star\right]$ into the part which lies in
$\mathcal{U}$ (where we can use the estimate 
(\ref{expdec})) and the part in $\mathcal{V}$ (where we are going to
exploit the fact that the $r$-difference is small). See Figure
\ref{figkerror}. Following this line of thought we estimate 
the (negative of the) second term of (\ref{fhoze}) 
\begin{eqnarray} 
\int_{\mathcal{I}} B^2
\partial_v \left(\frac{3}{2r}
\left(\nu \left(u+a\right)^2 + \lambda \left(v-a\right)^2
\right)\right)r^3 \left(\tilde{T}-r^\star_K,v\right) dv \nonumber \\
\leq \int_{\mathcal{I}} \frac{3B^2}{2} \left(r\left(-\lambda\right) 
\nu \left(u+a\right)^2 + r^2 r_{,vv} \left(v-a\right)^2 + 2 r^2 \lambda
\left(v-a\right)\right) \left(\tilde{T}-r^\star_K,v\right) dv \nonumber \\
\leq \left[\int_{\mathcal{I} \cap \mathcal{U}} + \int_{\mathcal{I} \cap \mathcal{V}}\right]\frac{3B^2}{2} \left(r\left(-\lambda\right) 
\nu \left(u+a\right)^2 \right)\left(\tilde{T}-r^\star_K,v\right) dv \nonumber
\\ + M^2 C\left(r^\star_{cl},c\right) \int_{\mathcal{I}} \left[\frac{\lambda}{v_0} + r_{,vv}\right]
\left(\tilde{T}-r^\star_K,v\right) dv \nonumber \\
\leq C e^{-\frac{d}{2}u}\left(u+a\right)^2 \int_{\mathcal{I} \cap
  \mathcal{U}} B^2 r \lambda\left(\tilde{T}-r^\star_K,v\right) dv + C M^\frac{3}{2}
\int_{\mathcal{I} \cap  \mathcal{V}} \lambda \left(-\nu\right)
\left(\tilde{T}-r^{\star}_{K},v\right) dv \nonumber \\
+M^2 C\left(r^\star_{cl},c\right) \epsilon \left(r^\star_K\right) \leq
M^2 \tilde{\epsilon} \left(r^\star_K\right) \, ,
\end{eqnarray}
where we used that $r_{,uv} \leq 0$, that $u$ is like $v$ in
region $\mathcal{V}$, and the assumptions (\ref{aregUV}). 
For the (negative of the) third term we obtain ($C$ just depends on $r^\star_{cl}$)
\begin{eqnarray}
-\int_{t_0+r^\star_K}^{\tilde{T}+r^\star_K} 2B\left(\partial_v B\right)
\left(\frac{3}{2r} \left(\nu \left(u+a\right)^2 + \lambda
\left(v-a\right)^2 \right)\right) r^3 \left(\tilde{T}-r^\star_K,v\right) dv \nonumber \\
\leq \left[\int_{\mathcal{I} \cap \mathcal{U}} + \int_{\mathcal{I}
    \cap \mathcal{V}}\right]\frac{3}{2}r^2 \left(\frac{B^2}{r} +
r \left(B_{,v}\right)^2\right)  \left(-\nu\right)\left(u+a\right)^2 \left(\tilde{T}-r^\star_K,v\right) dv
 \nonumber \\ + \int_\mathcal{I}
\frac{3}{2}r^2 \left(\frac{B^2}{r} + r \left(B_{,v}\right)^2\right) \lambda
\left(v-a\right)^2 \left(\tilde{T}-r^\star_K,v\right) dv
\nonumber \\ \leq C e^{-\frac{d}{2}u}\left(u+a\right)^2 \frac{M^\frac{3}{2}}{v_0} +
C M^\frac{3}{2}\int_{\mathcal{I} \cap \mathcal{V}} \lambda
\left(\tilde{T}-r^\star_K,v\right) dv + C M^\frac{3}{2} \int_{\mathcal{I}} \lambda
\left(\tilde{T}-r^\star_K,v\right) dv \nonumber \\ \leq \tilde{\epsilon}\left(r^\star_K\right)
\end{eqnarray}
where we again used that $u$ is like $v$ in
region $\mathcal{V}$, and the inequality (\ref{aregUV}), as well as
the fact that $\left(-\nu\right) \leq \lambda$ (cf.~Lemma \ref{kapgam}).
These estimates together yield (\ref{Khozb}).
\end{proof}

\subsubsection{Estimating $F^K_{B,errorline} \left(t\right)$}
Clearly $F^K_{B,errorline} \left(\tilde{T}\right)=0$
since there is no upper null-boundary for the region in which we
apply $K$. Hence we only have to estimate 
$F^K_{B,errorline} \left(t_0\right)$. This is done in the same manner
as for the horizon term: Splitting the integral into a part lying in 
$\mathcal{V}$ and a part in $\mathcal{U}$, using the estimate (\ref{zetnuas})
in the former and applying (\ref{expdec}) in the latter region. 

\subsection{Summary}
We have shown the following
\begin{proposition} \label{Kbndtermcontrol}
Assume (\ref{zetnuas}) holds. It follows that
\begin{equation}
E^K_B \left(\tilde{T} \right) \leq  \frac{1}{\sqrt{M}} C\left(r^\star_{cl},\sigma\right) \sum_{j=0}^{N-1} t_{j+1}
\bar{I}^X_B\left({}^{u=t_{j+1}-r^\star_{cl}}\mathcal{D}_{[t_j,t_{j+1}]}^{r^\star_{cl},\frac{1}{11}t_{j+1}}\right) 
+ F^K_{B}\left(t_0\right) + M \hat{\epsilon}\left(r^\star_K, t_0\right) \nonumber
\end{equation}
and $\hat{\epsilon}$ can be made arbitrarily small by
choosing both $-r^\star_K$ and then $t_0$ sufficiently large.
\end{proposition}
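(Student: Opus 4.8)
The plan is to assemble the $K$-vectorfield identity (\ref{Kid}) and estimate each of the three groups of terms on the right-hand side (bulk $I^K_B$, future/past arc boundary terms $F^K_B$, horizon term $H^K$) by the quantities appearing in the statement. First I would invoke the decomposition (\ref{finform}), $I^K_B = I^K_{B,main} + I^K_{B,error}$. For the main part, Proposition \ref{IKBmaincontrol} already gives the dyadic control $I^K_{B,main}\left(\hat{\mathcal{D}}^K_{[t_j,t_{j+1}]}\right) \leq \frac{1}{\sqrt{M}} C\left(r^\star_{cl},\hat{R}^\star,\sigma\right) t_{j+1} \bar{I}^X_B\left({}^{u=\tilde{T}-r^\star_{cl}}\mathcal{D}_{[t_j,t_{j+1}]}^{r^\star_{cl},\frac{1}{11}t_{j+1}}\right)$, and summing over $j$ produces exactly the first term in the claimed inequality. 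For the error part, Proposition \ref{IKBercontrol} (whose hypothesis is precisely (\ref{zetnuas})) gives the same dyadic sum plus $M\tilde{\epsilon}(r^\star_K) + M\tilde{\delta}(t_0)$ with both auxiliary constants tending to zero as $-r^\star_K \to \infty$ and $t_0 \to \infty$ respectively. So the bulk term is handled by combining these two propositions.

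Next I would turn to the boundary terms. Rewrite $F^K_B\left(\tilde{T}\right)$ via (\ref{rewr}); since $F^K_{B,errorline}\left(\tilde{T}\right)=0$ and the last term of $F^K_{B,main}$ also vanishes at $t=\tilde{T}$, only $F^K_{B,main}\left(\tilde{T}\right)$ and $F^K_{B,errorarc}\left(\tilde{T}\right)$ survive on the future slice. Lemma \ref{FKBerarccontrol} (again under (\ref{zetnuas})) gives $\frac{1}{2\pi^2} F^K_{B,main}\left(\tilde{T}\right) + F^K_{B,errorarc}\left(\tilde{T}\right) \geq E^K_B\left(\tilde{T}\right) + M\hat{\epsilon}$, where $\hat{\epsilon}$ is small because $\tilde{T} \geq t_0$ is large; this is the step that exposes the genuinely positive quantity $E^K_B\left(\tilde{T}\right)$ (together with the extra positive term in (\ref{borrowEKB}) that was borrowed from in the proof of Lemma \ref{FKBerarccontrol}). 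The horizon term is controlled by Lemma \ref{HKBercontrol}, namely $-H^K_{u=\tilde{T}-r^\star_K} \leq M\tilde{\epsilon}(r^\star_K)$ with $\tilde{\epsilon}(r^\star_K) \to 0$ as $-r^\star_K \to \infty$. Finally, $F^K_B\left(t_0\right)$ is left as is — it appears verbatim in the statement — noting that the $F^K_{B,errorline}\left(t_0\right)$ contribution inside it is estimated in the subsection "Estimating $F^K_{B,errorline}$" by splitting $[v_1,v_2]$ into its parts in $\mathcal{U}$ and $\mathcal{V}$ and using (\ref{expdec}) and (\ref{zetnuas}), which can be absorbed into $F^K_B\left(t_0\right)$ or into $\hat{\epsilon}$.

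Putting it together: rearrange (\ref{Kid}) to read $E^K_B\left(\tilde{T}\right) \leq \left[\text{main bulk}\right] + \left[\text{error bulk}\right] + F^K_B\left(t_0\right) + \left(-H^K\right) + \left(\text{small arc error}\right)$, insert the four estimates above, combine the two dyadic sums into a single sum $\frac{1}{\sqrt{M}} C\left(r^\star_{cl},\sigma\right)\sum_{j=0}^{N-1} t_{j+1}\, \bar{I}^X_B\left({}^{u=t_{j+1}-r^\star_{cl}}\mathcal{D}_{[t_j,t_{j+1}]}^{r^\star_{cl},\frac{1}{11}t_{j+1}}\right)$, and collect all the small contributions $M\tilde{\epsilon}(r^\star_K) + M\tilde{\delta}(t_0) + M\hat{\epsilon}$ into a single $M\hat{\epsilon}(r^\star_K,t_0)$ that vanishes as $-r^\star_K$ and then $t_0$ are taken large. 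I expect the main obstacle to be purely bookkeeping rather than conceptual: one must be careful that the dyadic regions arising from $I^K_{B,main}$ (indexed with upper endpoint $u=\tilde{T}-r^\star_{cl}$) and from $I^K_{B,error}$ (indexed with $u=t_{j+1}-r^\star_{cl}$) are consistently enlarged, by positivity of the $\bar{I}^X_B$ integrand, to a common region so the two sums genuinely add; and that the constant in front of the $\bar{I}^X_B$-sum depends only on $r^\star_{cl}$ and $\sigma$ (not on $r^\star_K$), since otherwise the later "beating the constants" argument via Cauchy stability would fail. The positivity of $E^K_B\left(\tilde{T}\right)$ from $F^K_{B,main}\left(\tilde{T}\right)$ — completing squares in $SB$ and $\underline{S}B$ as in the derivation preceding (\ref{borrowEKB}) — is the one place where something more than routine estimation happens, but that has already been carried out above the statement.
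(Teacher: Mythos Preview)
Your proposal is correct and follows essentially the same approach as the paper: rearrange the identity (\ref{Kid}), decompose the bulk via (\ref{finform}), and feed in Propositions \ref{IKBmaincontrol}, \ref{IKBercontrol} and Lemmata \ref{FKBerarccontrol}, \ref{HKBercontrol}. Your observation about reconciling the upper $u$-limits ($\tilde{T}-r^\star_{cl}$ versus $t_{j+1}-r^\star_{cl}$) via positivity of $\bar{I}^X_B$ is a valid piece of bookkeeping that the paper leaves implicit, and your remark that the constant must depend only on $r^\star_{cl}$ and $\sigma$ is exactly the point needed downstream.
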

\begin{proof}
Write (\ref{Kid}) as 
\begin{eqnarray}
F^K_{B,main} \left(\tilde{T}\right) + F^K_{B,errorarc}
\left(\tilde{T}\right) = I^K_{B,main} 
\left(\mathcal{D}_{[t_0,\tilde{T}]}^{r^\star_K,u_0}\right) + I^K_{B,error}
\left(\mathcal{D}_{[t_1,t_2]}^{r^\star_K,u_0}\right)\nonumber \\ + F^K_B
\left(t_0\right) - H^K_{u_H=\tilde{T}-r^\star_K}   
\end{eqnarray}
and apply the estimates of Lemmata  \ref{FKBerarccontrol} and
\ref{HKBercontrol},  as well as Propositions \ref{IKBmaincontrol} and \ref{IKBercontrol}. 
\end{proof}
\section{Closing the bootstrap} \label{closboot}
With the required estimates now in place we are in a position to prove the 
closed-part of Theorem \ref{setS}, i.e.~to improve the 
remaining bootstrap assumptions.\footnote{Recall that the first two have been improved already in Corollaries \ref{boot2impr} and \ref{boot1impr}.}

%
%
%
%
%
%
%
We start with the observation that
the $X$-bulk-term decays.
\begin{proposition} 
We have
\begin{equation} \label{Xbulkdecay}
\bar{I}^X_{B}
  \left(\mathcal{D}_{t_i,t_{i+1}}^{r^\star_{cl},t_i-R^\star} \right)
  \leq \bar{I}^X_{B}
  \left(\mathcal{D}_{t_i,t_{i+1}}^{r^\star_{cl},\frac{1}{11}t_{i+1}}
  \right) \leq  M^2 \frac{C\left(r^\star_{cl}\right)}{t_i^2} \, .
\end{equation}
\end{proposition}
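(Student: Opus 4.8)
The first inequality in (\ref{Xbulkdecay}) is purely monotonicity of the renormalized bulk term under enlargement of the region: since the integrand of $\bar{I}^X_B$ in (\ref{IXBstsh}) has been shown in Section \ref{Xsection} (Proposition \ref{B2bound} and the estimate (\ref{posFfg})) to be pointwise non-negative in the region $r^\star_{cl} \leq r^\star \leq \frac{9}{10}t$ where the dyadic regions $\mathcal{D}^{r^\star_{cl},u_J}_{[t_i,t_{i+1}]}$ live, and since $\mathcal{D}_{[t_i,t_{i+1}]}^{r^\star_{cl},t_i-R^\star} \subset \mathcal{D}_{[t_i,t_{i+1}]}^{r^\star_{cl},\frac{1}{11}t_{i+1}}$ (the inner null boundary $u=\frac{1}{11}t_{i+1}$ lies to the past of $u=t_i-R^\star$ for late $t_i$, since $R^\star=-\frac{1}{3}\sqrt{M}$ and $t_i$ is large), monotonicity of the integral over nested sets gives the first $\leq$. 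So the real content is the second inequality, the $\frac{1}{t_i^2}$-decay of the full bulk term over the dyadic tube.

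For that I would use the $X$-vectorfield identity (\ref{summary}), namely
\[
\bar{I}^X_B\left(\mathcal{D}_{[t_i,t_{i+1}]}^{r^\star_{cl},u_J}\right) = I^X_B\left(\mathcal{D}_{[t_i,t_{i+1}]}^{r^\star_{cl},u_J}\right) - J^{X,u_J}_{error} - H^{X,v=t_i+r^\star_{cl}}_{error} - H^{X,u=t_{i+1}-r^\star_{cl}}_{error},
\]
together with Proposition \ref{Xenergycor}, which bounds the right-hand side by
\[
[C(r^\star_{cl})+C_f q]\big(E_F(t_i)+E_F(t_{i+1})\big) + C(\epsilon)\frac{M^2}{t_i^2} + C\,E_H(v_1,v_2) + C\,E_J(u_J) + \frac{2C_f}{q}\tilde{F}^Y_B(v_1).
\]
Here $q\in\mathbb{R}^+$ is free; I fix it to be a constant (say $q=1$). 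Now I would estimate each term for $u_J = \frac{1}{11}t_{i+1}$ (resp.\ $u_J=t_i-R^\star$): the energy fluxes $E_F(t_i)$, $E_F(t_{i+1})$ through the $t=t_i$, $t=t_{i+1}$ arcs restricted to $r^\star\geq r^\star_K$ with $u\geq\frac{1}{11}t_{i+1}$ are controlled by $\frac{M^2 C(c)}{t_i^2}$ by Proposition \ref{decayfromK} (the slices in question lie in $\{r^\star_K\leq r^\star\leq\frac{10}{11}t\}$, as depicted in Figure \ref{tlk}); the horizon-difference $E_H(v_1,v_2)=m(u_{hoz},v_2)-m(u_{hoz},v_1)$ is $\leq \frac{cM^2}{v_{1+}^2}\sim \frac{M^2}{t_i^2}$ by bootstrap assumption (\ref{hozass}); the $J$-term $E_J(u_J)=m(u_J,2t_{i+1}-u_J)-m(u_J,2t_i-u_J)$ is again a mass difference along a constant-$u$ ray lying in $\{r^\star\geq r^\star_{cl}\}$, hence controlled by the energy flux through that ray and so by $\frac{M^2 C(c)}{t_i^2}$ via Proposition \ref{decfromk}/\ref{decayfromK}; and $\tilde{F}^Y_B(v_1)$ is $\leq C_L M\frac{M}{v_{1+}^2}\sim\frac{M^2}{t_i^2}$ by bootstrap assumption (\ref{intebound2}), noting $v_1=t_i+r^\star_{cl}\sim t_i$. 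Summing, all contributions carry the weight $\frac{M^2}{t_i^2}$, which yields the claimed bound with a constant $C(r^\star_{cl})$ depending only on $r^\star_{cl}$ (and $c$, $\sigma$, $\epsilon$, which are all fixed).

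The main obstacle is bookkeeping rather than a genuine new difficulty: one must check that \emph{every} boundary piece entering Proposition \ref{Xenergycor} for these two particular choices of $u_J$ actually sits inside the region $\{r^\star_K\leq r^\star\leq\frac{10}{11}t\}\cap\{v\geq v_0\}\cap\{t\geq t_0\}$ where Proposition \ref{decayfromK} applies (this is where $u_J\geq\frac{1}{11}t_{i+1}$, i.e.\ the geometry of Figure \ref{tlk}, is used — it keeps the $t=t_{i+1}$ arc away from null infinity so that $v\sim t$), and that $v_{1+}=\max(1,t_i+r^\star_{cl})\sim t_i$ for $t_i\geq t_0\gg\sqrt{M}$ so the various $\frac{1}{v_{1+}^2}$, $\frac{1}{v_+^2}$ and $\frac{1}{t_i^2}$ weights are mutually comparable. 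Once this is verified the estimate is immediate; I would also remark that the positivity of $\bar{I}^X_B$ (Proposition \ref{B2bound}) is what makes the one-sided bound meaningful, i.e.\ the left side is genuinely a non-negative quantity being shown to decay.
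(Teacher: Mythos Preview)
Your proposal is correct and follows precisely the route the paper takes: the paper's own proof is a one-liner invoking Proposition \ref{Xenergycor} (which you expand term by term), Proposition \ref{decayfromK} for the energy-flux pieces, and bootstrap assumption (\ref{intebound2}) for the $\tilde{F}^Y_B$ term. One small clarification: in the region $\mathcal{D}^{r^\star_{cl},u_J}_{[t_i,t_{i+1}]}$ the ``horizon'' boundary is at $u_H=t_{i+1}-r^\star_{cl}$, not the actual event horizon, so the term $E_H(v_1,v_2)$ is a mass difference along that null line (which sits in $\{r^\star\leq r^\star_{cl}\}\cap\{r^\star\leq\frac{10}{11}t\}$) and is therefore controlled directly by Proposition \ref{decayfromK} rather than by the horizon bootstrap assumption (\ref{hozass}); the bound and the conclusion are unchanged.
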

\begin{proof}
Apply Proposition \ref{Xenergycor} in combination with
 Proposition \ref{decayfromK} and the bootstrap assumption (\ref{intebound2}).
\end{proof}
With the help of the Propositions proven in section \ref{XconY} 
we can derive the pointwise bound (\ref{zetnuas}),
which was assumed for most of the Propositions established in 
section \ref{vecKsec}.\footnote{The reader is assured that none 
of the results of section \ref{vecKsec} will be used in the following 
subsection. The argument has been placed in this section 
because it is also used to improve the integral bound (\ref{intebound2}).}

\subsection{A pointwise estimate for $\frac{\zeta}{\nu}$ using $Y$} \label{pwEY}

\begin{proposition} \label{Step1}
In the region $\mathcal{A}\left(T\right) \cap \{ r^\star \leq
r^\star_{cl}  \} \cap \{v \geq t_0+r^\star_{cl}\}$ we have
\begin{equation} \label{hozBY}
\Big|\frac{\zeta}{\nu}\Big| \leq C\left(r^\star_{cl},c\right)
 \frac{M^\frac{3}{4}}{v} 
\end{equation}
and in $\mathcal{A}\left(T\right) \cap \{ r^\star \geq
r^\star_{cl}  \} \cap \{r^\star \leq \frac{9}{10}t \}$ the estimate
\begin{equation} \label{centBY}
\Big|\frac{\zeta}{\nu}\Big| \leq
 C\left(r^\star_{cl},c\right)
 \frac{M^\frac{3}{4}}{t}  \, .
\end{equation}
\end{proposition}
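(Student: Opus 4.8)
\textbf{Proof proposal for Proposition \ref{Step1}.}

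The plan is to run the $Y$-vectorfield estimates of Section \ref{XconY} through a pigeonhole-and-iterate scheme, in exactly the dyadic structure already set up, to first get decay of the boundary terms $F^Y_B$ and then convert this into the pointwise bound on $r^{3/2}|B_{,u}/\nu|$. First I would fix a dyadic sequence $t_{i+1}=1.1 t_i$ starting at some large $t_0$ (the one supplied by Proposition \ref{Cauchystab} and Section \ref{stabcorsec}), and for each $i$ consider the characteristic rectangle $\mathcal{R}_i=[u_1,u_{hoz}]\times[v_1,v_2]$ with $u_1=t_i-r^\star_{cl}$, $v_1=t_i+r^\star_{cl}$, $v_2=t_{i+1}+r^\star_{cl}$, together with the companion region $\mathcal{T}_i=\{r^\star\ge r^\star_Y\}\cap\{v\le v_2\}\cap\{u\ge u_1\}$, as in Proposition \ref{hozest}. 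The key inputs are: (i) Proposition \ref{hozestc}, which bounds $F^Y_B(\{u_{hoz}\}\times[v_1,v_2]) + F^Y_B([u_1,u_{hoz}]\times\{v_2\}) + \tfrac12\tilde{I}^Y_B(\mathcal{R}_i\setminus\mathcal{T}_i)$ by a mass difference (which decays like $M^2/v_i^2$ by bootstrap assumptions \ref{boot3}, \ref{boot4}, Proposition \ref{decayfromK} and the $X$-bulk decay (\ref{Xbulkdecay})) plus $\tfrac{11}{10}F^Y_B([u_1,u_{hoz}]\times\{v_1\})$; and (ii) Proposition \ref{goodslice}, which lets me extract inside each rectangle a ``good $F^Y$-slice'' $\hat v_i\in[v_1,v_2]$ on which $F^Y_B([u_1,u_{hoz}]\times\{\hat v_i\})\le C(v_2-v_1)^{-1}\tilde{I}^Y_B(\mathcal{R}_i\setminus\mathcal{T}_i) + C(r^\star_{cl})\,(m(u_1,v_2)-m(u_1,v_1))$.

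The iteration then runs as follows. Starting from the bottom slice $v_0=t_0+r^\star_{cl}$, where $F^Y_B$ is small with constant $\tilde\delta$ by Proposition \ref{Cauchystab} (statement 3) and bootstrap assumption \ref{boot5} (equivalently (\ref{intebound2})), apply Proposition \ref{hozestc} in the first rectangle to get uniform boundedness of $F^Y_B(\{u_{hoz}\}\times[\cdot])$, $F^Y_B([\cdot]\times\{v_2\})$ and the good spacetime term $\tilde{I}^Y_B(\mathcal{R}_0\setminus\mathcal{T}_0)$. Feed the good spacetime term into Proposition \ref{goodslice} to produce a slice $\hat v_0$ on which $F^Y_B\le C/v_0$ times (good spacetime term) plus an $r^\star\ge r^\star_{cl}$ contribution that decays like $M^2/v_0^2$ by Proposition \ref{decayfromK}. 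Now restart the $Y$-identity in the region bounded below by this good slice: the $\tfrac{11}{10}F^Y_B(\cdot\times\{v_1\})$-feedback term now carries a $\tfrac{1}{v_i}$ weight, and iterating through all dyadic rectangles (the geometric factor $\tfrac{11}{10}$ is harmless because the $1/v$ weights form a convergent-enough telescoping sum against the $1/v^2$ source) upgrades everything to $F^Y_B\le C(r^\star_{cl},c)\,M^2/v_i^2$ for all the ingoing boundary terms, outgoing boundary terms, and $\tilde{I}^Y_B$. In particular, along any constant-$v$ null segment near the horizon, $\int r^3 (B_{,u})^2/\Omega^2\,du \le C(r^\star_{cl},c)\,M^2/v^2$, which is the content (and improvement) of (\ref{intebound2}).

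Finally I would convert the $L^2$-decay to the pointwise estimate. Using Proposition \ref{YBcontrol} (the inequality $\int r B^2\,dv \le C\,F^Y_B(\{u\}\times[\cdot])$ along incoming rays) together with the just-obtained $1/v^2$-decay of $F^Y_B$, and integrating the wave equation (\ref{zetanuueq}) for $\zeta/\nu$ in $v$ along a ray near the horizon from $r^\star=r^\star_{cl}$ (where $|\zeta/\nu|\le C M^{3/4}/v$ follows from the decay of the $Y$ boundary term $F^Y_B([u_1,u_{hoz}]\times\{v\})$, which directly controls $\int (\partial_u B)^2\alpha/\Omega^2\,r^3\,du$, and the pointwise bound (\ref{pointwiseBh}) on $B$ and $\theta$ established in Proposition \ref{pointBthhoz}), I obtain (\ref{hozBY}) in $r^\star\le r^\star_{cl}$ via the redshift structure of (\ref{zetanuueq}) — the exponential damping factor $e^{-\int 4\kappa m/r^3}$ absorbs the error terms and the $\theta/r$-term is controlled by Cauchy--Schwarz against the decaying energy. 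For (\ref{centBY}) in the region $r^\star_{cl}\le r^\star\le\tfrac{9}{10}t$, I export the bound by integrating (\ref{zetanuueq}) outward from $r^\star=r^\star_{cl}$ along constant-$v$ rays, using the energy decay of Proposition \ref{decayfromK} and the pointwise bounds on $B$, $\theta$ from Propositions \ref{pointBcent} and \ref{pointthcent}, with $v\sim t$ in that region. The main obstacle is the bookkeeping in the iteration: ensuring that the $\tfrac{11}{10}$-feedback constant in Proposition \ref{hozestc} does not destroy the pigeonhole gain, i.e.\ that after $N$ dyadic steps the accumulated constant stays bounded by $C(r^\star_{cl},c)$ independent of the size of the bootstrap region — this is exactly where one must be careful that the constants depend only on $r^\star_{cl}$ and not on $r^\star_K$ or on $T$, and it is the delicate point the statement is implicitly relying on.
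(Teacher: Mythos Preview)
Your approach is essentially the paper's: the dyadic pigeonhole iteration using Propositions \ref{hozestc} and \ref{goodslice} to upgrade $F^Y_B$ first to uniform boundedness, then to $1/v$, then to $1/v^2$, followed by the redshift integration of (\ref{zetanuueq}) for the pointwise bound and the outward extension via energy decay, is exactly what the paper does.

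One imprecision is worth flagging. You cannot extract a pointwise bound $|\zeta/\nu|\le CM^{3/4}/v$ at $r^\star=r^\star_{cl}$ from the integral bound $F^Y_B([u_1,u_{hoz}]\times\{v\})\le CM^2/v^2$ alone; an $L^2_u$ bound does not localize. The paper does not use such an input. Instead it integrates (\ref{zetanuueq}) in $v$ over each dyadic interval $[v_i,v_{i+1}]$ at fixed $u$ (for any $u$ with $r^\star\le r^\star_{cl}$), obtaining
\[
\Big|\frac{\zeta}{\nu}(u,v_{i+1})\Big|\ \le\ \Big|\frac{\zeta}{\nu}(u,v_i)\Big|\,e^{-c\,v_i}\ +\ \frac{C(r^\star_{cl},c)M^{3/4}}{v_i},
\]
where the damping comes from $\int_{v_i}^{v_{i+1}} 4\kappa m/r^3\,dv \gtrsim v_{i+1}-v_i\sim v_i$, and the inhomogeneous term is bounded by Cauchy--Schwarz against the outgoing flux $F^Y_B(\{u\}\times[v_i,v_{i+1}])$ (controlling $\int \alpha r B^2\,dv$) and the energy flux (controlling $\int \theta^2/\kappa\,dv$). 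Telescoping from $v_0$ (Cauchy stability) gives (\ref{hozBY}) directly in all of $r^\star\le r^\star_{cl}$; only then does one integrate outward for (\ref{centBY}). Your worry about the $\tfrac{11}{10}$ feedback is handled precisely because the exponential damping in this recurrence overwhelms any fixed multiplicative constant.
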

\begin{proof}
Starting from the slice $\Sigma_{t_0}$ (cf.~definition \ref{slicedef}) 
erect the characteristic rectangle to any
$\Sigma_{t}$, $t_0 \leq t \leq T$. By Cauchy stability (Proposition \ref{Cauchystab}), 
we have that
\begin{equation}
\frac{1}{M} F^Y_B \left(\left[u_1,u_{hoz}\right] \times \{ v_0 = t_0-r^\star_{cl}
\} \right) \leq \tilde{\delta}
\end{equation}
and hence an application of Proposition \ref{hozestc} together with
(\ref{Xbulkdecay}) immediately yields
\begin{equation}
\frac{1}{M} F^Y_B \left(\left[u_1,u_{hoz}\right] \times \{ v = t-r^\star_{cl} \}
\right) \leq \frac{11}{10}\tilde{\delta} + \tilde{\epsilon}
\end{equation}
for any $t \leq T$. This estimate and Proposition \ref{hozest} 
immediately imply the uniform estimate
\begin{equation}
\frac{1}{M} \tilde{I}^Y_B \left(\mathcal{R}_i \setminus \mathcal{T}_i\right) \leq \frac{11}{10}\tilde{\delta} + \tilde{\epsilon}
\end{equation}
for the region $\mathcal{R}_i \setminus \mathcal{T}_i$ of any
characteristic dyadic rectangle. Next we apply 
Proposition \ref{goodslice} to each dyadic rectangle to
find a slice $\hat{v}$ satisfying
\begin{equation}
\frac{1}{M} F^Y_B \left(\left[u_i,u_{hoz}\right] \times \{ \hat{v} \} \right) \leq
C \frac{\tilde{\epsilon}\sqrt{M}}{v_{i+1}-v_i} + C \frac{\sqrt{M}}{\left(v_i\right)^2}
\leq \frac{C\sqrt{M}}{t_i} \, .
\end{equation}
Proposition \ref{hozest} applied to the rectangle enclosed by the 
good slice in $[v_i,v_{i+1}]$ and $v=v_{i+1}$ yields, again using
(\ref{Xbulkdecay}) 
\begin{equation}
\frac{1}{M} F^Y_B \left(\left[u_i,u_{hoz}\right] \times \{ v_{i+1} \} \right) \leq
C \frac{\sqrt{M}}{v_{i+1}} \, .
\end{equation}
Having exported the better decay to all late slices in this fashion, 
we can erect the characteristic rectangle again and apply Proposition
 \ref{hozest}, which produces the uniform decay estimate
\begin{equation}
\frac{1}{M}I^Y_B \left(\mathcal{R}_i \setminus \mathcal{T}_i \right) \leq
C\frac{\sqrt{M}}{v_i} \, .
\end{equation}
One may repeat the procedure, i.e.~apply Proposition
\ref{goodslice} again, which now provides one with 
a good slice (with the $Y$-flux decaying like
$\frac{1}{\left(v_{i+1}\right)^2}$). After application 
of Proposition \ref{hozest} this leads to the decay
\begin{equation}
\frac{1}{M}F^Y_B \left(\left[u_i,u_{hoz}\right] \times \{ v_{i+1} \} \right) \leq
C\left(r^\star_{cl},c\right) \frac{M}{\left(v_{i+1}\right)^2}
\end{equation}
on any late slice $v_i$. Finally one may 
export the decay to any $v$-slice by choosing
appropriate regions:
\begin{equation} \label{Yubound}
\frac{1}{M}F^Y_B \left(\left[u\left(r^\star_{cl}\right),u_{hoz}\right] 
\times \{ v \} \right) \leq C\left(r^\star_{cl},c\right) \frac{M}{v^2} \, ,
\end{equation}
\begin{equation} \label{Yvbound}
\frac{1}{M}F^Y_B \left(u \times \left[v,\hat{v} \right] \right) \leq
C\left(r^\star_{cl},c\right) \frac{M}{v^2} \, ,
\end{equation}
\begin{equation}
\frac{1}{M}\tilde{I}^Y_B\left(\mathcal{R} \setminus \mathcal{T}\right) \leq C\left(r^\star_{cl},c\right) \frac{M}{v^2}
\end{equation}
everywhere. Note that we have assumed a better bound than
(\ref{Yubound}) in the bootstrap assumption (\ref{intebound2}),
however the bound (\ref{Yvbound}) is new and essential to derive the
pointwise bound for $\frac{\zeta}{\nu}$.
Namely, integrating (\ref{zetanuueq}) upwards in a characteristic
rectangle yields
\begin{eqnarray}
\frac{\zeta}{\nu} \left(u,v_{i+1}\right) = \frac{\zeta}{\nu}\left(u,v_i\right) e^{-\int_{v_i}^{v_{i+1}} \left[\frac{4\kappa}{r^3} m +
\frac{4\kappa}{3r}\left(\rho-\frac{3}{2}\right) \right]
  \left(u,\bar{v}\right) d\bar{v}} \nonumber \\ +\int_{v_i}^{v_{i+1}} e^{-\int_{\bar{v}}^v \left[\frac{4\kappa}{r^3} m +
\frac{4\kappa}{3r}\left(\rho-\frac{3}{2}\right) \right]
  \left(u,\hat{v}\right) d\hat{v}} \left[-\frac{3}{2} \frac{\theta}{r} -
\frac{4}{3} \frac{\kappa}{\sqrt{r}} \left(e^{-8B}-e^{-2B}\right)
\right]\left(u,\bar{v}\right) d\bar{v} 
\end{eqnarray}
and hence
\begin{eqnarray}
\Big|\frac{\zeta}{\nu} \left(u,v_{i+1}\right)\Big| \leq \Big|\frac{\zeta}{\nu} \left(u,v_{i}\right) \Big|
  e^{-.1\cdot d \cdot v_{i}}  \nonumber \\ +\frac{3}{2}\frac{1}{r_{min}} \sqrt{\int_{v_i}^{v_{i+1}} e^{-\int_{\bar{v}}^v  \left[\frac{3}{2} \frac{4\kappa}{r^3} m \right]
  \left(u,\hat{v}\right) d\hat{v}} \kappa \left(u,\bar{v}\right) d\bar{v}} \sqrt{\int_{v_i}^{v_{i+1}}
    \frac{\theta^2}{\kappa}\left(u,\hat{v}\right) d\hat{v}} \nonumber \\
+C \left(\sup_{r\leq r^\star_{cl}} \frac{1}{\sqrt{\alpha}} \right)\sqrt{\int_{v_i}^{v_{i+1}} e^{-\int_{\bar{v}}^v
    \left[\frac{3}{2} \frac{4\kappa}{r^3} m \right]
    \left(u,\hat{v}\right) d\hat{v}} \kappa\left(u,\bar{v}\right) d\bar{v}} \sqrt{\int_{v_i}^{v_{i+1}}
  \alpha r B^2\left(u,\hat{v}\right) d\hat{v}} \nonumber \\
\leq C\frac{M^\frac{3}{4}}{v_i} +  \Big|\frac{\zeta}{\nu} \left(u,v_{i}\right) \Big|
  e^{-.1\cdot d \cdot v_{i}} \nonumber \, .
\end{eqnarray}
Reiterating from the first to any chosen late rectangle we find for
any $(u,v_i)$ in the region $r^\star \leq r^\star_{cl}$
\begin{equation}
\Big|\frac{\zeta}{\nu} \left(u,v_{i}\right)\Big| \leq C\left(r^\star_{cl},c\right) \frac{M^\frac{3}{4}}{v_i}
\textrm { \ \ \ and hence \ \ \ } \Big|\frac{\zeta}{\nu} \left(u,v\right)\Big| \leq C\left(r^\star_{cl},c\right) \frac{M^\frac{3}{4}}{v}
\end{equation}
which is (\ref{hozBY}). Integrating (\ref{zetanuueq}) from the set
$L=\{r^\star = r^\star_{cl} \} \cup \{ \{t=t_0 \} \cap \{r^\star \geq
r^\star_{cl}\} \}$, where the bound (\ref{centBY}) holds by Cauchy
stability and the estimate just established, we obtain (\ref{centBY})
in the complete region using the energy estimate and the fact that $u
\sim t$ in the region where $r^\star_{cl} \leq r^\star
\leq \frac{9}{10}t$.
\end{proof}
\subsection{Improving assumption (\ref{Kass})}
With the pointwise bound on $\frac{\zeta}{\nu}$ established we can
improve assumption (\ref{Kass}) 
for any late boundary term $E_B^K\left(\tilde{T}\right)$ via 
Proposition \ref{Kbndtermcontrol}. One applies the $K$-estimate 
in the region 
${}^{u=t_N-r^\star_K}\mathcal{D}_{[t_0,t_N]}^{r^\star_K,u_0}$ 
some large $-r^\star_K$, late $t_0$ and $t_N=\tilde{T}$.
Using (\ref{Xbulkdecay}) we have 
\begin{equation}
\bar{I}^X_B \left({}^{u=t_{j+1}-r^\star_{cl}}\mathcal{D}_{[t_j,t_{j+1}]}^{r^\star_{cl},u=\frac{1}{11}t_{j+1}}\right) \leq C
\left(r^\star_{cl},c\right) \frac{M^2}{\left(t_{j+1}\right)^2} \leq \epsilon\left(t_0\right) \frac{M^\frac{7}{4}}{\left(t_{j+1}\right)^\frac{3}{2}}
\end{equation}
with the $\epsilon$ arising from the fact that $t_0$ can be chosen
as large as we may wish (at the cost of making the data smaller). Consequently
\begin{equation} \label{tIXB}
\sum_{j=0}^{N-1} t_{j+1} \bar{I}^X_B
\left({}^{u=t_{j+1}-r^\star_{cl}}\mathcal{D}_{[t_j,t_{j+1}]}^{r^\star_{cl},u=\frac{1}{11}t_{j+1}}\right)
\leq \epsilon\left(t_0\right) M^\frac{3}{2} \sum_{j=0}^{N-1} \frac{1}{\sqrt{1.1}^{j}} \leq \epsilon\left(t_0\right) M^\frac{3}{2}
\end{equation}
in view of the finiteness of the geometric series
\begin{equation}
\sum_{n=0}^{\infty} \frac{1}{\sqrt{1.1}^{n}} \leq K \, .
\end{equation}
Combining (\ref{tIXB}) with the fact that
$F^K_B\left(t_0\right)$ is small by Cauchy stability, 
Proposition \ref{Kbndtermcontrol} yields
\begin{equation} \label{Kimprove}
E^K_B \left(t_N\right) \leq M \epsilon \left(r^\star_K, t_0, \tilde{\delta}\right)
\end{equation}
for any late $t_N$, which improves assumption (\ref{Kass}).
\subsection{Improving assumptions (\ref{hozass}), (\ref{ceilass}) and (\ref{intebound2})}
We apply Proposition \ref{decayfromK} again, inserting the
better bound for the $K$-boundary terms (\ref{Kimprove}) at late times
to improve the decay of the energy on any arc-part of late
slices, $\Sigma_{t} \cap \{\frac{10}{11}t \geq
r^\star \geq r^\star_{cl} \}$. From Proposition \ref{pointBcent} we
also obtain improved decay for the field $B$ in the 
region $r^\star \geq r^\star_{cl}$. \\
Proposition \ref{goodslice2} applied in \emph{each} 
characteristic dyadic rectangle produces after 
inserting the better energy decay
in the region $r^\star \geq r^\star_{cl}$   
a slice $\hat{v}_i=\tau_i + r^\star_{cl}$ 
\begin{figure}[h!]
\[
\input{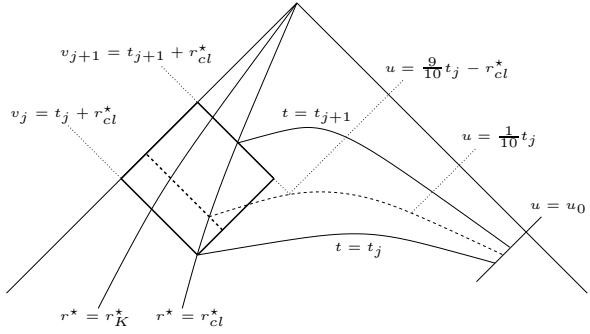}
\]
\caption{Closing the bootstrap.} \label{clobofin}
\end{figure}
with improved energy-flux decay, $\frac{c}{\left(t_i\right)^3} +
\frac{\epsilon}{\left(t_i\right)^2}$. By the domain of dependence property
 the decay of energy flux is improved on the horizon piece $v
\in \left[\hat{v}_i, v_{i+1}\right]$ and on the
ceiling-part of the characteristic region, as indicated in Figure
\ref{clobofin}. This retrieves in particular assumptions (\ref{hozass}) 
and (\ref{ceilass}) with a better constant. 
In view of the energy flux decaying now like $\frac{\epsilon}{v^2}$ 
on all achronal slices in the region $\mathcal{A}\left(T\right) \cap
\{ r^\star \leq \frac{9}{10}t \}$, we can apply Proposition 
\ref{goodslice} to find a good $F^Y_B$-slice
in each characteristic rectangle. Proposition \ref{hozestc} exports
this good decay of the $F^Y_B$-term to all constant $v$-slices and hence
the last outstanding bootstrap assumption (\ref{intebound2}) is finally
retrieved with a better constant.

What we have shown is that $\overline{A} = A$, so $A$ is closed. 
This completes the proof of Proposition \ref{setS}. The set $A$ 
must therefore constitute the entire $[0,\infty]$ and hence the decay rates 
of Theorem \ref{asymptoticstab} are 
proven in the entire $\mathcal{D}$, albeit in a different coordinate system than the 
one stated in the Theorem. The final subsection shows 
that the coordinate systems used in the bootstrap are indeed 
close to the null-coordinate system defined in Theorem \ref{asymptoticstab}. 
\subsection{Convergence of Coordinate Systems}
What we have already shown in section \ref{StabCord} is that the 
coordinates of a region 
$\mathcal{A}\left(\vartheta\left(\tilde{\tau}_\bullet\right)\right)
\cap \{ r^\star \geq r^\star_K \}$ (a-priori
defined in the coordinate system $\mathcal{C}_{\tilde{\tau}_\bullet}$), are
uniformly bounded in any coordinate system
$\mathcal{C}_{\tilde{\tau}}$ for $\tilde{\tau} \geq
\tilde{\tau}_\bullet$.\footnote{Note again 
that $r=r^\star_K$ may change its location in the different 
coordinate systems but remains always close to the geometrically 
defined curve $r=r_K$ of constant area radius.} It is important to
observe that the $u$-coordinate in the region $r^\star \leq r^\star_K$
is not uniformly close between the different coordinate systems. 
Indeed, in the coordinate system of Theorem $\ref{asymptoticstab}$ 
the horizon is located at $u=\infty$, whereas in any coordinate 
system $C_{\tilde{\tau}}$ it generically resides at a 
finite $u$ value (eventually converging 
to $u \rightarrow \infty$ for $\tilde{\tau} \rightarrow \infty$).

We finally establish the relation of the $C_{\tilde{\tau}}$ to the
coordinate system defined in Theorem \ref{StabCord}. First recall that
we have already shown that the geometrically defined point $R$ of Theorem
\ref{asymptoticstab} (which features as an "origin" of the coordinate
system) has coordinates uniformly close to $(\sqrt{M},\sqrt{M})$ in
any coordinate system $C_{\tilde{\tau}}$, cf.~section \ref{stabcorsec}. 
In the second step we
compare the scaling of the coordinates between the coordinate 
systems  $C_{\tilde{\tau}}$ and  the one asserted by 
Theorem \ref{asymptoticstab}. For this pick a point $P$ on
null-infinity. The value of $\gamma$ at this point in the 
coordinate system $\mathcal{C}_{\tilde{\tau}}$ can (for large enough
$\tilde{\tau}$) be estimated by integrating (\ref{gammaevol}) 
from $t=T\left(\tilde{\tau}\right)$ along a line of constant $u$:
\begin{equation}
\frac{1}{2} \leq \gamma \left(P\right) \leq \frac{1}{2} + \frac{C\left(\epsilon\right)}{r_N^2}
\end{equation}
where $r_N$ is the area radius at the intersection of the $\nabla r$
integral curve defining the coordinate system
$\mathcal{C}_{\tilde{\tau}}$ and the null line
$u\left(P\right)$.\footnote{For this estimate only the smallness of
  $\theta$ of Proposition \ref{simpsmall} is used.} 
In the limit $\tilde{\tau} \rightarrow \infty$ we have $r_N
\rightarrow \infty$ and hence 
$\gamma\left(P\right) \rightarrow \frac{1}{2}$. It follows that 
the scaling of the $u$-coordinate of 
$\mathcal{C}_{\tilde{\tau}}$ indeed converges to the 
one defined in Theorem \ref{asymptoticstab}. 

The function $\kappa$ on the other hand satisfies 
$|\kappa - \frac{1}{2}| \leq C\left(\epsilon\right)$
on $\mathcal{D}$ in both the coordinate systems
$\mathcal{C}_{\tilde{\tau}}$ (cf.~Proposition \ref{simpsmall}) 
and the one of Theorem \ref{asymptoticstab}. It is easy to show that
with this bound holding on the null curve $u=\sqrt{M}$, the $v$
coordinate of any two coordinate systems always satifies $v \sim
\bar{v}$ for $v \geq \sqrt{M}$, which is all what is needed to generalize
decay statements in $v$ to all coordinate systems. Namely integrating
from the point $W$ where the initial data intersect the null-line
$u=\sqrt{M}$ ($v \approx \sqrt{M}$ there by previous remarks) to a 
point $Q$ we have
\begin{equation}
v_Q = v_W + \int_{v_W}^{v_Q} dv \leq v_W +
\left(2+C\left(\epsilon\right)\right) \sup_{u=\sqrt{M}}
\frac{1}{1-\mu} \int_{v_W}^{v_Q} r_{,v} dv \leq  \frac{3}{2}\sqrt{M} +
4\left(r_Q - r_W\right) \nonumber
\end{equation}
and
\begin{equation}
v_Q = v_W + \int_{v_W}^{v_Q} dv \geq v_W + \left(2-C\left(\epsilon\right)\right)
\inf_{u=\sqrt{M}} \frac{1}{1-\mu} \int_{v_W}^{v_Q} r_{,v} dv \geq
\frac{\sqrt{M}}{2} + \frac{3}{2}\left(r_Q - r_W\right) \, . \nonumber
\end{equation}
Hence $v \sim \bar{v}$ for any two coordinate systems. 

We have shown that the limit of the coordinate 
systems $C_{\tilde{\tau}}$ is a coordinate system in which the origin 
is slightly shifted compared to the one of 
Theorem \ref{asymptoticstab} and whose $v$ scaling 
may be stretched or squeezed. It is now apparent that 
the decay rates stated also hold in the coordinate system 
of Theorem \ref{asymptoticstab}.
\section{Final Comments and Open Questions}
Theorem \ref{asymptoticstab} leaves room for generalizations. An
obvious one is the treatment of the \emph{triaxial case}, which at least
conceptually is not expected to pose any difficulty. In fact the same
vectorfields are expected to produce the required 
estimates for the fields $B$ and $C$ when contracted with an 
appropriate tensor $T_{\mu \nu}$ -- with the only
additional catch coming from the coupling of $B$ and $C$. A much more 
challenging problem is the derivation of better decay 
rates than the ones established here. As mentioned previously, 
in the context of compatible currents, the maximal 
decay rate is limited by the weights appearing in the 
$K$-vectorfield. It is an interesting question whether an additional
vectorfield (or an entirely different idea) can extract 
stronger decay, which might be expected from the
four-dimensional case \cite{DafRod}. An even more ambitious problem
concerns the large data regime of the five-dimensional Bianchi IX
model. The numerical studies of \cite{Bizon} suggest that a 
similar result to the one proven here should hold. In fact 
it may be possible to find an elaborate refinement of the ideas 
in \cite{DafRod}, which will allow an analysis of the 
large data regime within the symmetry class.

Finally, there should exist various applications of the techniques 
to four-dimensional problems. As already mentioned in the
introduction, the present paper may serve as a blueprint to obtain
a small-data version of \cite{DafRod} for the self-gravitating 
scalar field. For genuinely novel results, the case of a conformally 
coupled scalar field could be investigated. 
\section{Acknowledgements}
I would like to thank Mihalis Dafermos, who introduced me to the
problem and provided continuous support and encouragement along the
way. I am also grateful to EPSRC and Studienstiftung des deutschen 
Volkes for financial support.

\appendix
\section{Regularity and Green's identity} \label{reggree}
It was remarked in section \ref{Coordinates} that the coordinate systems
$\mathcal{C}_{\tilde{\tau}}$ are $C^1$. More precisely it was shown
  that they are piecewise $C^2$ with a discontinuity 
in $\frac{\Omega_{,v}}{\Omega}$ spreading along the 
null-line $v\left(B\right)$ and a discontinuity in 
$\frac{\Omega_{.u}}{\Omega}$ along $u\left(B\right)$. This
discontinuity could be avoided by the introduction of a smooth
interpolating function in the region around the cusp at the point
$B$ (cf.~Figure \ref{codiag}). However, as this would burden the 
notation even further, we will 
show here that the regularity is sufficient to carry out the
calculations involving the vectorfields.

Observing that the quantity $P^\alpha$ defined in (\ref{palpha}) 
is continuous and $\nabla_\alpha
P^\alpha$ at least piecewise continuous (cf.~(\ref{basicintegrand})), 
the basic identity (\ref{bvfi}) is valid for the vectorfields 
$X$,$Y$,$K$ in the coordinate systems $\mathcal{C}_{\tilde{\tau}}$. 

For the vectorfields $X$ and $K$ we also make use of Green's identity
(\ref{basicgreen}) in a region $\phantom{}^{u_H}
\mathcal{D}^{r^\star_g, u_J}_{[t_1,t_2]}$. 
\[
\input{green.pstex_t}
\]
As depicted, the region may contain part of the null-line
$v\left(B\right)$ along which $\frac{\Omega_{,v}}{\Omega}$ could be
discontinuous and part of the null-line $u\left(B\right)$ along which
$\frac{\Omega_{,u}}{\Omega}$ could be discontinuous. The functions $D$
for which (\ref{basicgreen}) is applied are given by (\ref{cgfd}) and
(\ref{DKreg}). In both cases, $D\left(u,v\right)$ is seen to be
piecewise differentiable and such that $\Box D$ is 
piecewise continuous. To derive the identity
(\ref{basicgreen}) for these cases in our coordinate system, one should split 
the integration region $\phantom{}^{u_H}
\mathcal{D}^{r^\star_g, u_J}_{[t_1,t_2]}$ into three pieces, along the
null lines $u\left(B\right)$ and $v\left(B\right)$, introducing 
additional boundary terms from the bold lines.  
Green's identity is then clearly valid in each
subregion because all functions admit appropriate regularity
there, i.e.~in particular $D$ is differentiable and $\Box D$ 
is continuous in the interior. 
The integrand of the additional boundary 
term along the null-line $v\left(B\right)$ however
\begin{equation}
\int \left[B^2 \partial_u D - D \partial_u B^2\right]r^3 du
\end{equation}
is continuous because the $u$ derivative of $D$, which involves only
the term $\frac{\Omega_{,u}}{\Omega}$ (but not its $v$-analogue!), 
is continuous there. It is also bounded and the above integral will 
appear with a different sign for the two subregions. Analogously, 
the integrand of the other boundary term
\begin{equation}
\int \left[B^2 \partial_v D - D \partial_v B^2\right]r^3 dv
\end{equation}
is continuous because the $v$ derivative of $D$ is continuous there.
Hence adding up the three subregions the additional boundary terms
cancel and the identity (\ref{basicgreen}) indeed holds as stated.

\section{Different curves of constant $r^\star$}
\begin{table}[h!]
\begin{tabular}{c l}
$r^\star_K$ & very large and negative (close to the horizon), \\
& features as a source of smallness in the bootstrap \\
$r^\star_{cl}$ & $r^\star_{cl} = r^\star_Y - 2\sqrt{M}$ \\
$r^\star_{Y}$ & negative, chosen in section \ref{rstarcldef}
  to make a certain bulk-term \\
& of the $Y$ vectorfield positive in the region $r^\star \leq r^\star_{Y}$ \\
$-\frac{1}{2}\sqrt{M}$ & functions $\alpha$ and $\beta$ are supported 
in $r^\star \leq -\frac{1}{2}\sqrt{M}$ only \\
$R^\star$ & $R^\star = -\frac{1}{3}\sqrt{M}$ defined in Proposition
\ref{B2cor} \\
$r^\star_{zero}$ & defined in section \ref{rstarzero}, 
$-\frac{1}{6} \sqrt{M} \leq r^\star_{zero} \leq -\frac{1}{10} \sqrt{M}$ \\
$0$ & $r^2 \approx 4M$ (photon sphere for $5$dim.~Schwarzschild) \\
$\tilde{R}^\star$ & squashing field on initial data is not supported
for $r^\star \geq \tilde{R}^\star$ \\
$\hat{R}^\star$ & defined in Lemma \ref{goodsignK},equips a certain
integrand with a sign  \\
&  in a particular region
\end{tabular}
\end{table}

\newpage

\section{Glossary}
\begin{table}[h!]
\begin{tabular}{c l}
$\alpha$ & function depending on $r^\star$, used in the definition of
the vectorfield $Y$ \\
$\beta$ & function depending on $r^\star$, used in the definition of
the vectorfield $Y$ \\
$B$ & squashing field \\
$\gamma$ & defined in (\ref{kapgamdef}) \\
$\mathcal{D}$ & defined in (\ref{calD}) \\
$\delta$, $\tilde{\delta}$ & smallness parameters \\ 
$\epsilon$, $\tilde{\epsilon}$ & smallness parameters \\
$\zeta$ & $\zeta = r^\frac{3}{2} B_{,u}$ \\
$\eta$ & smallness parameter (cf.~Corollary \ref{rcurvcol} and 
Proposition \ref{Cauchystab}) \\
$\theta$ & $\theta = r^\frac{3}{2} B_{,v}$ \\
$\vartheta$ & function used for the definition of the 
coordinate systems $C_{\tilde{\tau}}$, cf.~(\ref{thetamap}) \\
$\kappa$ & defined in (\ref{kapgamdef}) \\
$\lambda$ & $\lambda = r_{,v}$ \\
$\mu$ & $\mu = \frac{2m}{r^2}$ \\
$\nu$ & $\nu = r_{,u}$ \\
$\xi$ & function depending on $r^\star$ defined in (\ref{xidef}) \\
$m$ & Hawking mass (\ref{Hawkmass}) \\
$M_f$ & final Bondi mass \\
$M_A$ & Hawking mass at the point $A$, cf.~section \ref{Coordinates} \\
$r$ & $r\left(u,v\right)$ area radius \\
$\rho$ & defined in (\ref{scalcurv}) \\
$\tilde{S}_{r_K}$ & defined in (\ref{sigrK}) \\
$\Sigma_t$ & defined in (\ref{slicedef}) \\
$\sigma$ & parameter, chosen in the section on the vectorfield $X$,
cf.~(\ref{xshift}) \\
$\tau$ & affine parameter along $r^2 = 4M_A$, section
\ref{Coordinates} \\
$\tilde{\tau}$ & affine parameter along $r^2 = 4M_f$, section
\ref{Coordinates} \\ 
$\varphi_1, \varphi_2$ & defined in (\ref{deltaB}) \\
$\chi, \tilde{\chi}$ & smooth interpolating functions,
cf.~(\ref{alphacond}) and Proposition (\ref{B2cor}) \\
$\psi$ & smallness parameter, section \ref{rstarcldef} \\
$\Omega^2$ & metric function \\
\end{tabular} 
\end{table}

\end{document}